\newtheorem{theorem}{Theorem}[section]
\newtheorem{definition}[theorem]{Definition}
\newtheorem{lemma}[theorem]{Lemma}
\newtheorem{corollary}[theorem]{Corollary}
\newtheorem{proposition}[theorem]{Proposition}
\newtheorem*{lemma*}{Lemma}
\theoremstyle{remark}
\newtheorem{remark}[theorem]{Remark}
\newtheorem{example}[theorem]{Example}
\newcommand{\st}[1]{\ensuremath{^{\scriptstyle \textrm{#1}}}}
\newcommand\bigcheck[1]{#1 \raise1ex\hbox{$\hspace{-1ex}{}^\vee$}}
\newcommand\sucheck[1]{#1 \raise0.5ex\hbox{$\hspace{-1ex}{}^\vee$}}
\newcommand{\mc}[1]{{\mathcal #1}}
\newcommand{\mf}[1]{{\mathfrak #1}}
\newcommand{\mb}[1]{{\mathbb #1}}
\newcommand\tint{{\textstyle\int}}
\newcommand{\id}{{1 \mskip -5mu {\rm I}}}
\renewcommand{\tilde}{\widetilde}
\newcommand{\End}{\mathop{\rm End }}
\newcommand{\Hom}{\mathop{\rm Hom }}
\newcommand{\ad}{\mathop{\rm ad }}
\newcommand{\sign}{\mathop{\rm sign }}
\newcommand{\Der}{\mathop{\rm Der }}
\newcommand{\Mat}{\mathop{\rm Mat }}
\newcommand{\RCend}{\mathop{\rm RCend }}
\newcommand{\RCder}{\mathop{\rm RCder }}
\newcommand{\Cder}{\mathop{\rm Cder }}
\renewcommand{\ker}{\mathop{\rm Ker }}
\newcommand{\im}{\mathop{\rm Im }}
\newcommand{\Vect}{\mathop{\rm Vect }}
\newcommand{\CVect}{\mathop{\rm CVect }}
\newcommand{\sym}{\mathop{\rm sym }}
\newcommand{\as}{\mathop{\rm as }}
\newcommand{\op}{\mathop{\rm op }}
\newcommand{\var}{\mathop{\rm var }}
\newcommand{\ord}{\mathop{\rm ord }}
\newcommand{\codim}{\mathop{\rm codim }}
\definecolor{light}{gray}{.9}
\begin{document}


\title{The variational Poisson cohomology}

\author{
Alberto De Sole$^1$
\and
Victor G. Kac$^2$
}

\thanks{\!\!\!\!\!\!\!1. Dipartimento di Matematica, Universit\`a di Roma ``La Sapienza'',
00185 Roma, Italy\,
desole@mat.uniroma1.it
}

\thanks{\noindent 2. Department of Mathematics, M.I.T.,
Cambridge, MA 02139, USA\,
kac@math.mit.edu
}

\thanks{
$\phantom{.}$\\
A.D.S. was partially supported by PRIN and AST grants.
V.K. was partially supported by an NSF grant, and an ERC advanced grant.
Parts of this work were done while
A.D.S. was visiting the Department of Mathematics of M.I.T.,
while V.K. was visiting the newly created Center for Mathematics and Theoretical Physics in Rome,
and A.D.S and V.K. were visiting the MSC and the Department of Mathematics of Tsinghua University in Beijing.
}

\maketitle

\vspace{4pt}

\begin{center}
\emph{To the memory of Boris Kupershmidt (11/27/1946 -- 12/12/2010)}
\end{center}

\vspace{2pt}


\begin{abstract}
\noindent
It is well known that the validity of the so called Lenard-Magri scheme of integrability
of a bi-Hamiltonian PDE can be established
if one has some precise information on the corresponding 1st variational Poisson cohomology
for one of the two Hamiltonian operators.
In the first part of the paper we explain how to introduce various cohomology complexes,
including Lie superalgebra and Poisson cohomology complexes,
and basic and reduced Lie conformal algebra and Poisson vertex algebra cohomology complexes,
by making use of the corresponding universal Lie superalgebra or Lie conformal superalgebra.
The most relevant are certain subcomplexes of the basic and reduced Poisson vertex algebra
cohomology complexes, which we identify (non-canonically)
with the generalized de Rham complex and the generalized variational complex.
In the second part of the paper we compute the cohomology of the generalized de Rham complex,
and, via a detailed study of the long exact sequence, we compute the cohomology of the generalized variational complex
for any quasiconstant coefficient Hamiltonian operator with invertible leading coefficient.
For the latter we use some differential linear algebra developed in the Appendix.
\end{abstract}

\vfill\eject

\tableofcontents

\vfill\eject


\section{Introduction}
\label{sec:intro}

\noindent
The theory of Poisson vertex algebras is a very convenient
framework for the theory of Hamiltonian partial differential
equations \cite{BDSK}.

First, let us introduce some key notions.  Let $V$ be a unital
commutative associative  algebra with a derivation $\partial$.
The space $\mf g_{-1}=V/\partial V$ is called the space of {\em
  Hamiltonian functions}, the image of $h \in V$ in $\mf g_{-1}$
being denoted by $\tint h$.  The Lie algebra $\mf g_0$ of all
derivations of $V$, commuting with $\partial$, is called the Lie
algebra of {\em evolutionary vector fields}.  Its action on $V$
descends to $\mf g_{-1}$.

A $\lambda$-{\em bracket} on $V$ is a linear map $V \otimes V \to
\mb F [\lambda] \otimes V$, $a \otimes b \mapsto \{ a_\lambda b
\}$, satisfying the following three properties $(a,b \in V)$:\\

\begin{list}{}{}
\item (sesquilinearity)  \quad  $\{ \partial a_\lambda b \}
  =-\lambda \{ a_\lambda b \}$, $\{ a_\lambda \partial b \} =
  (\partial +\lambda) \{ a_\lambda b \}$,\\
\item (skewcommutativity) \quad $\{ b_\lambda a \} =-\{
  a_{-\partial-\lambda}b \}$,\\
 where $\partial$ is moved to the left,\\
\item (Leibniz rule)\quad $\{ a_\lambda bc \} = \{ a_\lambda b\}c+
  b \{a_\lambda c\}$.
\end{list}

\noindent
Denote by $\mf g_1$ the space of all $\lambda$-brackets on $V$.

One of the basic constructions of the present paper is the
$\mb Z$-graded Lie superalgebra
\begin{equation}
  \label{eq:0.1}
  W^{\partial ,\as} \left(\Pi V\right) = \left(\Pi
    \mf g_{-1}\right) \oplus \mf g_0
     \oplus \left(\Pi \mf g_1\right) \oplus  \dots \, ,
\end{equation}
where $\mf g_{-1}$, $\mf g_0$ and $\mf g_1$ are as above and $\Pi\mf g_i$
stands for the space $\mf g_i$ with reversed parity.  For
$\tint f, \tint g \in \Pi \mf g_{-1}$, $X,Y \in \mf g_0$ and $H \in
\Pi \mf g_1$ the commutators are defined as follows:
\begin{eqnarray}
  \label{eq:0.2}
\hspace*{15ex} \left[\tint f, \tint g \right] &=& 0 \, ,  \\
  \label{eq:0.3}
   \left[X, \tint f\right]  &=& \tint X(f)\, , \\
  \label{eq:0.4}
    [X,Y] &=& XY-YX\, , \\
  \label{eq:0.5}
    \left[\{ \, .\, {}_\lambda \,. \, \}_H, \tint f\right] (g)
       &=& \{ f_\lambda g \}_H\big|_{\lambda=0}\, , \\
  \label{eq:0.6}
    \{ f_\lambda g \}_{[X,H]} &=& X (\{ f_\lambda g \}_H)
        - \{ X (f)_\lambda g \}_H - \{ f_\lambda , X(g)\}_H\, .
\end{eqnarray}
In Section 5 we construct explicitly the whole Lie superalgebra
$W^{\partial,\as} (\Pi V)$, but for applications to Hamiltonian
PDE one needs only the condition $[H,K]=0$ for $H,K \in \Pi
\mf g_1$, which is as follows $(f,g,h \in V)$:
\begin{equation}
  \label{eq:0.7}
  \{\{ f_\lambda g \}_{K \, \lambda +\mu} h \}_H
    \!-\! \{ f_\lambda \{ g_\mu h \}_K \}_H \!+\!
     \{ g_\mu \{ f_\lambda h \}_K\}_H \!+\! (H \leftrightarrow K)=0\,.
\end{equation}

A $\lambda$-bracket $\{ \, .\, {}_\lambda \,. \, \} = \{ \, .\,
{}_\lambda \,. \, \}_H$ is called a {\em Poisson}
$\lambda$-bracket if $[H,H]=0$, i.e.,~one has\\

(Jacobi identity) \quad $\{ f_\lambda \{ g_\mu h \}\}-
  \{ g_\mu \{ f_\lambda h \}\} = \{\{ f_\lambda g \}_{\lambda
    +\mu}h \}$.\\

The differential algebra $V$, endowed with a Poisson $\lambda$-bracket
, is called a {\em Poisson vertex algebra} (PVA) \cite{DSK1}.  
Two Poisson
$\lambda$-brackets $\{ \, .\, {}_\lambda \,. \, \}_H$ and $\{ \,
.\, {}_\lambda \,. \, \}_K$ on $V$ are called {\em compatible} if
(\ref{eq:0.7}) holds, which means that their sum is a Poisson $\lambda$-bracket as well.

One of the key properties of a PVA $V$
 is that the vector space $V/\partial V$ carries a well-defined
 Lie algebra structure, given by
 \begin{equation}
   \label{eq:0.8}
   \left\{ \tint f , \tint g \right\} = \tint \left\{ f_\lambda g \right\}
      \big|_{\lambda     =0}\, , \quad f,g \in V \, .
 \end{equation}
Moreover, $V$ is a left module over the Lie algebra $V/\partial
V$ with the well-defined action
\begin{equation}
  \label{eq:0.9}
 \{ \tint f ,g \} = \{ f_\lambda g \} \big|_{\lambda =0}\, ,
  \quad f,g \in V \,,
\end{equation}
by derivations, commuting with $\partial$, of the associative
product in $V$ and of the $\lambda$-bracket.  In particular,  all
the derivations $X^f = \{ \tint f \, , \, . \, \}$ of
$V$ are evolutionary; they are called {\em Hamiltonian vector
  fields}.

Two Hamiltonian functions $\tint f$ and $\tint g$ are said to be
{\em in involution} if
\begin{equation}
  \label{eq:0.10}
  \left\{ \tint f , \tint g \right\} =0 \, .
\end{equation}Given a Hamiltonian function $\tint h \in V
/\partial V$ and a Poisson $\lambda$-bracket on $V$, the
corresponding Hamiltonian equation is defined by the Hamiltonian
vector field $X^h$:
\begin{equation}
  \label{eq:0.11}
  \frac{du}{dt}=\left\{\tint h,u\right\}\,,\quad u\in V\,.
\end{equation}

The equation (\ref{eq:0.11}) is called integrable if $\tint h$ is
contained in an infinite-dimensional abelian subalgebra of the
Lie algebra $V /\partial V$ with bracket (\ref{eq:0.8}).  Picking
a basis $\tint h_0 = \tint h \,, \tint h_1\, \tint h_2, \dots$ of
this abelian subalgebra, we obtain a hierarchy of integrable
Hamiltonian equations
\begin{equation}
  \label{eq:0.12}
  \frac{du}{dt_n} = \left\{ \tint h_n ,u \right\} \, , \quad n \in \mb Z_+\, ,
\end{equation}
which are compatible since the corresponding Hamiltonian vector
fields $X^{h_n}$ commute.

The basic device for proving integrability of a Hamiltonian
equation is the so-called Lenard-Magri scheme, which is the
following simple observation,
first mentioned in \cite{GGKM} and \cite{Lax};
a survey of related results up to the early 90's can be found in \cite{Dor}.
Suppose that the differential
algebra $V$ is endowed with two $\lambda$-brackets $\{ \, .\,
{}_\lambda \,. \, \}_H$ and $\{ \, .\, {}_\lambda \,. \, \}_K$
and assume that:
\begin{equation}
  \label{eq:0.13}
  \left\{ \tint h_n , u \right\}_H = \left\{ \tint h_{n+1},u \right\}_K \, , \quad n \in
  \mb Z_+\, , \, u \in V \, ,
\end{equation}
for some Hamiltonian functions $\tint h_n \in V /\partial V$.
Then all these Hamiltonian functions are in involution with
respect to both brackets $\{ \, .\, {}_\lambda \,. \, \}_H$ and
$\{ \, .\, {}_\lambda \,. \, \}_K$ on $V/\partial V$.

Note that we do not need to assume that the
$\lambda$-brackets are Poisson nor that they are compatible.
These assumptions enter when we try to prove the existence of the
sequence, $\tint h_n$, satisfying (\ref{eq:0.13}), as we explain below.

Indeed, let $H,K \in \Pi \mf g_1$ be two compatible Poisson
$\lambda$-brackets on $V$.  Since $[K,K] =0$, it follows that $(\ad K)^2=0$,
hence we may consider the {\em variational cohomology complex}
$(W^{\partial,\as} (\Pi V) = \bigoplus^\infty_{j=-1} W_j, \ad K )$,
where $W_{-1}=\Pi \mf g_{-1}$, $W_0 = \mf g_0$, $W_1 = \Pi \mf g_1,
\dots$.  By definition, $X \in W_0$ is \emph{closed}
if, in view of \eqref{eq:0.6}, it is a derivation of
the $\lambda$-bracket $\{ \, .\, {}_\lambda \,. \, \}_K$,
and it is exact if, in view of \eqref{eq:0.5}, $X = \{ h_\lambda \cdot \}_K
\big|_{\lambda =0}$ for some $h \in W_{-1}$.  Now we can find a
solution to (\ref{eq:0.13}), by induction on $n$ as follows.  By
Jacobi identity in $W^{\partial,\as}(\Pi V)$ we have $ [K,[H,h_n]] =
-[H, [K,h_n]]$, which, by inductive assumption,
equals $-[H,[H,h_{n-1}]]=0$,
since $[H,H]=0$ and $H\in W_1$ is odd.
Thus, the element $[H,h_{n-1}] \in W_0$ is closed.  If this
closed element is exact, i.e.,~it equals $[K,h_n]$ for some $h_n
\in W_{-1}$, we complete the $n$\st{th} step of induction.
In general we have
\begin{equation}\label{110320:eq1}
[H,h_{n-1}]=[K,h_n]+a_n\,,
\end{equation}
where $a_n$ is a representative of the corresponding cohomology class.
Looking at \eqref{110320:eq1} more carefully,
one often can prove that one can take $a_n=0$,
so the Lenard-Magri scheme still works.

The cohomological approach to the Lenard-Magri scheme was
proposed long ago in \cite{Kra} and \cite{Ol1}.  
However, no machinery has been developed in order to compute
this cohomology.  
In the present paper we develop such a machinery by introducing a
``covering'' complex $(\tilde{W}^{\partial,\as}, \ad K)$ of the
complex $(W^{\partial,\as}, \ad K)$, whose cohomology is much
easier to compute, and then study in detail the corresponding long
exact sequence.

What does this have to do with the classical Hamiltonian PDE, like
the KdV equation?  In order to explain this, consider the algebra
of differential polynomials $R_\ell = \mb F [u^{(n)}_i \big| i=1,
\dots ,\ell \, ; \, n \in \mb Z_+]$ with the derivation
$\partial$, defined on generators by $\partial
(u^{(n)}_i)=u^{(n+1)}_i$.  Here one should think of the $u_i$ as
functions, depending on a parameter~$t$ (time), in one independent
variable~$x$, which is a coordinate on a 1-dimensional
manifold~$M$, and of $\partial$ as the derivative by~$x$, so that
$u^{(n)}_i$ is the $n$\st{th} derivative of $u_i$.  Furthermore, one
should think of $\tint h \in R_\ell/\partial R_\ell$ as $\tint_M h
\, dx$ since $R_\ell /\partial R_\ell$ provides the universal
space in which integration by parts holds.

It is straightforward to check that  equation
(\ref{eq:0.11}) can be written in the following equivalent,
but more familiar, form:
\begin{equation}
  \label{eq:0.14}
  \frac{du}{dt} = H (\partial) \frac{\delta h}{\delta u}\, ,
\end{equation}
where $\frac{\delta h}{\delta u}$ is the vector of variational
derivatives
\begin{equation}
  \label{eq:0.15}
  \frac{\delta h}{\delta u_i} = \sum_{n \in \mb Z_+} (-\partial)^n
     \frac{\partial h}{\partial u^{(n)}_i}\, ,
\end{equation}
and $H (\partial) = (H_{ij}(\partial))^\ell_{i,j=1}$ is the $\ell
\times \ell$ matrix differential operator with entries $H_{ij}
(\partial) = \{ u_j {}_{\partial} u_i \}_\to$.  Here the arrow
means that $\partial$ should be  moved to the right.  It is not
difficult to show that the skewcommutativity of the
$\lambda$-bracket is equivalent to skewadjointness of the
differential operator $H(\partial)$, and, in addition, the
validity of the Jacobi identity of the $\lambda$-bracket is, by definition,
equivalent to $H(\partial)$ being a Hamiltonian operator.
Furthermore, the bracket (\ref{eq:0.8}) on $R_\ell /\partial
R_\ell$ takes the familiar form
\begin{equation}
  \label{eq:0.16}
  \{ \tint f , \tint g \} = \int \frac{\delta g}{\delta u} \cdot
     \left(H (\partial )\frac{\delta f}{\delta u}\right)\, ,
\end{equation}
and one can show that this is a Lie algebra bracket if and only
if $H(\partial)$ is a Hamiltonian operator \cite{BDSK}.

Given $\lambda$-brackets
$\{{u_{i}}_\lambda u_j \} =- \{ u_{j_{-\partial-\lambda}} u_i \} \in R_\ell[\lambda]$
of any pair of generators $u_i,u_j$,
one can extend them uniquely to a $\lambda$-bracket on $R_\ell$,
which is given by the following explicit formula \cite{DSK1}
\begin{equation}
  \label{eq:0.19}
  \{ f_\lambda g \} = \sum_{\substack{1 \leq i,j \leq \ell\\
      m,n \in \mb Z_+}} \frac{\partial g}{\partial u^{(n)}_j}
  (\partial  + \lambda)^n \{ u_i {}_{\partial +\lambda} u_j
  \}_\to (-\partial -\lambda)^m \frac{\partial f}{\partial
    u^{(m)}_i}\, .
\end{equation}
This $\lambda$-bracket defines a PVA structure on $R_\ell$ if and
only if the Jacobi identity holds for any triple of generators
$u_i$, $u_j$, $u_k$ \cite{BDSK}.

The simplest example of a Hamiltonian operator is the
Gardner-Faddeev-Zakharov (GFZ) operator $K (\partial) = \partial$.
It is the observation in \cite{Gar} that the KdV equation
\begin{equation}
  \label{eq:0.17}
  \frac{du}{dt}= 3uu' + cu''' \, , \quad  c \in \mb F \, ,
\end{equation}
can be written in a Hamiltonian form
\begin{equation}
  \label{eq:0.18}
  \frac{du}{dt} = D \frac{\delta h_1}{\delta u}\, , \,
\hbox{\,\,  where \,\,} h_1 = \frac12 (u^3 + cuu'')\, ,
\end{equation}
and it is the subsequent proof in \cite{FZ} that KdV is a completely integrable
Hamiltonian equation,
that triggered the theory of Hamiltonian PDE.  The corresponding
$\lambda$-bracket on $R_1$ is, of course, given by the formula
$\{u_\lambda u \} =\lambda$, extended to
$R_1 \otimes R_1 \to \mb F[\lambda] \otimes R_1$
by \eqref{eq:0.19}.

In a subsequent paper \cite{Mag}, Magri showed that the operator
$H(\partial) = u'+2u\partial +c\partial^3$ is Hamiltonian for all $c\in\mb F$,
that it is compatible with the GFZ operator,
and that the KdV equation can be written in a different Hamiltonian form
\begin{equation}
  \label{eq:0.20}
  \frac{du}{dt} = (u'+2u\partial + c\partial^3)
     \frac{\delta h_1}{\delta u}, \hbox{\,\, where\,\,} h_1 = \frac12 u^2.
\end{equation}
Moreover, he explained how to use this
to prove the validity of the Lenard-Magri scheme
\footnote{Since
in the literature the names Lenard and Magri scheme are
alternatively used, we decided to call it the Lenard-Magri scheme. The history
of Lenard's contribution is colorfully described in \cite{PS},
where one can also find an extensive list of subsequent publications on the subject.},
which gave a new proof of integrability of KdV and some other
equations.

Of course, the $\lambda$-bracket corresponding to the Magri operator
is given by
\begin{equation}
  \label{eq:0.21}
  \{ u_\lambda u \} = (\partial +2\lambda) u+c\lambda^3\,,
\end{equation}
which defines (via (\ref{eq:0.19})) a PVA structure on $R_1$ for all
values of $c \in \mb F$.

The reader can find a detailed exposition of the applications of
PVA to Hamiltonian PDE in the paper \cite{BDSK}, where, in
particular, some sufficient conditions for the validity of the
Lenard-Magri scheme and its generalizations are found and applied
to the proof of integrability of many important equations.
However many Hamiltonian equations remain out of reach of the
methods of \cite{BDSK}, but we think that the cohomological
approach is more powerful (though less elementary) and we are
planning to demonstrate this in a subsequent paper.

In order to make our ideas clearer (or, perhaps, more confusing)
we begin the paper with a long digression,
which goes from Section \ref{sec:2} through Section \ref{sec:10},
to a general approach to various cohomology theories
(in fact, the reader, interested only in applications to the theory
of integrable Hamiltonian PDE, can, without much difficulty,
jump to Section \ref{sec:11}).

In Section \ref{sec:2}, given a vector superspace $V$, we consider the universal
$\mb Z$-graded Lie superalgebra $W (V) = \oplus_{j \geq -1}
W_j(V)$ with $W_{-1}(V)=V$.  Universality here is understood in
the sense that, given any other $\mb Z$-graded Lie superalgebra
$\mf g = \oplus_{j \geq -1} \mf g_j$ with $\mf g_{-1}=V$, there exists
a unique, grading preserving homomorphism $\mf g \to W(V)$,
identical on $V$.  It is easy to show that $W_j (V) = \Hom
(S^{j+1} (V),V)$ for all $j \geq -1$, and one can write down explicitly
the Lie superalgebra bracket.  In particular, $W_0 (V) = \End V$
and $W_1 (V) =  \Hom (S^2 V,V)$, so that any even element of the
vector superspace $W_1 (V)$ defines a commutative superalgebra
structure on $V$ (and this correspondence is bijective).

On the other hand, as observed in \cite{CK}, any odd element $X$
of the vector superspace $W_1 (\Pi V)$ defines an
skewcommutative superalgebra structure on $V$ by the formula
\begin{equation}
  \label{eq:0.22}
  [a,b] = (-1)^{p(a)} X (a \otimes b)\, , \quad a,b \in V\, ,
\end{equation}
where $p$ is the parity on $V$.  Moreover, this is a Lie
superalgebra structure if and only if $[X,X]=0$ in $W(\Pi V)$.
Thus, given a Lie superalgebra structure on $V$, considering the
corresponding element $X \in W_1 (\Pi V)$, we obtain a
cohomology complex $(C^\bullet= \oplus_{j \in \mb Z} C^j , \ad X)$, where
$C^j=W_{j+1}(\Pi V)$, and it turns out that $C^\bullet = C^\bullet (V,V)$
coincides with
the cohomology complex of the Lie superalgebra $V$ with
coefficients in the adjoint representation.  More generally,
given a module $M$ over the Lie superalgebra $V$, one considers,
instead of $V$, the Lie superalgebra $V\ltimes M$ with $M$ an
abelian ideal, and by a simple reduction procedure constructs the
cohomology of the Lie superalgebra $V$ with coefficients in $M$.
This construction for $V$ purely even goes back to the paper
\cite{NR} on deformation theory.

In Section \ref{sec:3}, assuming that $V$ carries a structure of a commutative
associative superalgebra, we let $W^{\as}_{-1} (\Pi V) = \Pi
V$, $W^{\as}_0 (\Pi V) = \Der V$, the subalgebra of all
derivations of the superalgebra $V$ 
in the superalgebra $\End \Pi V= \End V$
(the superscript ``as'' stands for ``associative'').
Let $W^{\as} (\Pi V) = \oplus_{j \geq -1} W^{\as}_j
(\Pi V)$ be the  full prolongation in the Lie superalgebra $W(\Pi V)$,
defined inductively for $j\geq1$ by
\begin{displaymath}
  W^{\as}_j \left(\Pi V\right) = \left\{ a \in W_j \left(\Pi
      V\right)\!
\Big|\!  \left[a,W_{-1} \left(\Pi V\right)\right]\! \subset\! W^{\as}_{j-1} \left(\Pi V\right)\right\}.
\end{displaymath}
Then odd elements $X$ in $W^{\as}_1 (\Pi V)$, such that
$[X,X]=0$, bijectively correspond to Poisson algebra structures
on $V$ (with the given commutative associative superalgebra
structure).  In this case the complex $(W^{\as} (\Pi V), \ad X)$
is the Poisson cohomology complex of the Poisson superalgebra $V$
(introduced in \cite{Lic}).

Incidentally, one can introduce a commutative associative
product on $W^{\as} (\Pi V)$, making it (along with the Lie
superalgebra bracket) an odd Poisson (= Gerstenhaber)
superalgebra.  Here we observe a remarkable duality when passing
from $\Pi V$ to $V$:  $W^{\as} (V)$ is an (even) Poisson
superalgebra, whereas the odd elements of $W^{\as}_1 (V)$
correspond to odd Poisson superalgebra structures on $V$.

Next, in Section \ref{sec:4} we consider the case when $V$ carries a structure of an $\mb F
[\partial]$ -module.  Here and throughout the paper $\mb F
[\partial]$, as usual, denotes the algebra of polynomials in an
(even) indeterminate $\partial$.  Motivated by the construction
of the universal Lie superalgebra $W (\Pi V)$, we construct a
$\mb Z$-graded Lie superalgebra $W^\partial (\Pi V) =
\oplus^\infty_{k=-1} W^\partial_k (\Pi V)$, which, to some
extent, plays the same role in the theory of Lie conformal
algebra as $W (\Pi V)$ plays in the theory of Lie algebras
(explained above).

Recall that a Lie conformal algebra is an $\mb F
[\partial]$-module, endowed with the $\lambda$-bracket,
satisfying sesquilinearity, skewcommutativity and Jacobi identity
(introduced above).  In other words, a Lie conformal algebra is
an analogue of a Lie algebra in the same way as a Poisson vertex
algebra is an analogue of a Poisson algebra.

We let $W^\partial_{-1} (\Pi V) = \Pi (V/\partial V)$ and
$W^\partial_0 (\Pi V) = \End_{\mb F [\partial]}V$, and construct
$W^\partial (\Pi V)$ as a prolongation in $W (\Pi
(V/\partial V))$ (not necessarily full), so that odd elements $X
\in W^\partial_1 (\Pi V)$  parameterize sesquilinear
skewcommutative $\lambda$-brackets on $V$, and the
$\lambda$-bracket satisfies the Jacobi identity  (i.e.,~defines
on $V$ a Lie conformal algebra structure) if and only if $[X,X]=0$.

In the same way as in the Lie algebra case, we obtain a
cohomology complex $(W^\partial (\Pi V), \ad X)$, provided that
$[X,X]=0$ for an odd element $X \in W^\partial_1 (\Pi V)$,  and
this complex (after the shift by~$1$), is the Lie conformal
algebra cohomology complex with coefficients in the adjoint
representation.  In the same way, by a reduction, we recover the
Lie conformal algebra cohomology complex with coefficients in any
representation, studied in \cite{BKV}, \cite{BDAK}, \cite{DSK2}.

Next, in Section \ref{sec:5} we consider the case when $V$ carries both, a structure
of an $\mb F [\partial]$-module, and a compatible with it commutative
algebra structure, in other words, $V$ is a differential algebra.  Then in
the same way as above, we construct the Lie superalgebra
$W^{\partial,\as} (\Pi V)$ (cf. (\ref{eq:0.1})) as a
$\mb Z$-graded subalgebra of $W^\partial (\Pi V)$, for which
$W^{\partial,\as}_{-1} (\Pi V) =W^\partial_{-1} (\Pi V)=\Pi(V/\partial V)$,
$W^{\partial,\as}_0 (\Pi V) = \Der_{\mb F [\partial]} V \subset
W^\partial_0 (\Pi V)= \End_{\mb F [\partial]}V$,
and $W^{\partial,\as}_1 (\Pi V)$ is such that its odd elements $X$
parameterize all $\lambda$-brackets on the differential algebra
$V$, so that those satisfying $[X,X]=0$ correspond to PVA
structures on $V$.  This explains the strange notation
(\ref{eq:0.1}) of this Lie superalgebra.

In Section \ref{sec:6} we construct the universal Lie conformal superalgebra 
$\tilde{W}^\partial(V)$ for a finitely generated $\mb F[\partial]$-supermodule $V$,
and in Section \ref{sec:7} we construct the universal odd Poisson vertex algebra 
$\tilde{W}^{\partial,\as}(\Pi\mc V)$ for a finitely generated differential superalgebra $\mc V$.
These constructions are very similar in spirit to the constructions
of the universal Lie superalgebras $W^\partial(V)$ and $W^{\partial,\as}(\mc V)$,
from Sections \ref{sec:4} and \ref{sec:5} respectively.
The finitely generated assumption is needed in order for the corresponding 
$\lambda$-brackets to be polynomial in $\lambda$.

Note that in the definition of the Lie algebra bracket on
$V/\partial V$, and its representation on $V$, as well in the
discussion of the Lenard-Magri scheme, we needed only that $V$ is a Lie
conformal algebra.  However, for practical applications one usually uses
PVA's, and, in fact some special kind of PVA's, which are
differential algebra extensions of $R_\ell$ with the
$\lambda$-bracket given by formula (\ref{eq:0.19}).  For such a
PVA $V$ we construct, in Sections \ref{sec:9} a subalgebra
of the Lie algebra $W^{\partial,\as} (\Pi V)$
\begin{displaymath}
  W^{\var} (\Pi V) = \oplus_{j \geq -1} W^{\var}_j\, ,
\end{displaymath}
where $W^{\var}_{-1}= W^{\partial,\as}_{-1}$, but $W^{\var}_j$
for $j \geq 0$ may be smaller.  For example $W^{\var}_0$ consists
of derivations of the form $\sum_{\substack{1 \leq j \leq \ell
    \\ n \in \mb Z_+}} P_{j,n} \frac{\partial}{\partial u^{(n)}_j}$, commuting with $\partial$, and it is these
derivations that are called in variational calculus evolutionary
vector fields.  Next, $W^{\var}_1$ consists of all
$\lambda$-brackets of the form (\ref{eq:0.19}), etc.  We call
elements of $W^{\var}_k$ the {\em variational $k$-vector fields}.

There has been an extensive discussion of variational poly-vector
fields in the literature.  The earliest  reference we know of is
\cite{Kup}, see also the book \cite{Ol2}.
One of the later references is \cite{IVV};
the idea to use Cartan's prolongation comes from this paper.


In order to solve the Lenard-Magri scheme (\ref{eq:0.13}) over a
differential function extension $V$ of $R_\ell$ with
the $\lambda$-brackets $\{ \, . \, , \, . \, \}_H$
and  $\{ \, . \, ,\, . \, \}_K$ of the form as in (\ref{eq:0.19}), one has to compute the
cohomology of the complex $(W^{\var} (\Pi V), \ad K)$, where $K
\in W^{\var}_1$ is such that $[K,K]=0$, as we explained above.

In order to compute this {\em variational Poisson cohomology}, we
construct, in Section \ref{sec:10} a $\mb Z_+$-graded Lie conformal superalgebra (which is
actually a subalgebra of the odd PVA $\tilde{W}^{\partial,\as}(\Pi\mc V)$) 
$\tilde{W}^{\var} (\Pi V) = \oplus_{j \geq-1} \tilde{W}^{\var}_j$ with $\tilde{W}^{\var}_{-1}=V$, 
for which the associated Lie superalgebra is $W^{\var} (\Pi V)$.  Since
the Lie superalgebra $W^{\var}(\Pi V)$ acts on the Lie
conformal superalgebra $\tilde{W}^{\var} (\Pi V)$, in
particular, $K$ acts, providing it with a differential $d_K$,
commuting with the action of $\partial$.  We thus have an exact
sequence of complexes:
\begin{equation}
  \label{eq:0.23}
  0 \to (\partial \tilde{W}^{\var} (V), d_K)  \to
   ( \tilde{W}^{\var} (V), d_K) \to (W^{\var} (V), \ad K)\to 0\, ,
\end{equation}
so that we can study the corresponding cohomology long exact sequence.

To actually perform calculations, we identify (non-canonically)
the space $ \tilde{W}^{\var} (\Pi V)$ with the space
$\tilde{\Omega}^\bullet (V)$ of the de Rham complex, and the space
$W^{\var} (\Pi V)$ with the space of the reduced de Rham
complex = variational complex $\Omega^\bullet (V)$.

We thus get the ``generalized'' de Rham complex $(\tilde{\Omega}^\bullet (V), d_K)$
and the ``generalized'' variational complex
$(\Omega^\bullet (V), \ad K)$. The ordinary de Rham and variational
complexes are not, strictly speaking, special cases, since they
correspond to $K=I$, which is not a skewadjoint operator.
However, in the case when the differential operator $K$ is
quasiconstant, i.e.,~$\frac{\partial}{\partial u^{(n)}_i} (K) =0$
for all $i,n$, the construction of these complexes is still
valid.

In Section \ref{sec:11} we completely solve the problem of computation of
cohomology of the generalized de Rham complex 
$(\tilde{\Omega}^\bullet(V), d_K)$ in the case when $V$ is a normal algebra of
differential functions and $K$ is a quasiconstant matrix differential
operator with invertible leading coefficient.  For that we use
``local'' homotopy operators, similar to those introduced in
\cite{BDSK} for the de Rham complex.

After that, as in \cite{BDSK}, we study the cohomology long exact
sequence corresponding to the short exact sequence (\ref{eq:0.23}).
As a result
we get a complete description of the cohomology
of the generalized variational complex
for an arbitrary quasiconstant $\ell\times\ell$ matrix differential
operator $K$ of order $N$ with invertible leading coefficient.
In fact, we find simple explicit formulas for representatives of cohomology classes,
and we prove that
$$
\dim H^k(\Omega^\bullet(\mc V),\ad K)=\binom{N\ell}{k+1}\,,
$$
provided that quasiconstants form a linearly closed differential field.\
These results
lead to further progress in the application of the Lenard-Magri scheme (work in progress).

In the special case when $K$ is a constant coefficient order 1 skewadjoint matrix differential operator,
it is proved in \cite{Get} that the variational Poisson cohomology complex is formal.

Our explicit description of the long exact sequence in terms
of polydifferential operators leads to some problems on systems
of linear differential equations of arbitrary order
in the same number of unknowns.
In the Appendix we develop some differential linear algebra
in order to solve these problems.

All vector spaces are considered over a field $\mb F$ of characteristic zero.
Tensor products, direct sums, and Hom's are considered over $\mb F$,
unless otherwise specified.

We wish to thank A. Kiselev for drawing our attention to the cohomological approach
to the Lenard-Magri scheme, I. Krasilshchik for correspondence,
and A. Maffei for useful discussions.


\section{The universal Lie superalgebra $W(V)$ for a vector superspace $V$,
and Lie superalgebra cohomology}
\label{sec:2}

Recall that a vector superspace is a $\mb Z/2\mb Z$-graded vector space $U=U_{\bar 0}\oplus U_{\bar 1}$.
If $a\in U_\alpha$, where $\alpha\in\mb Z/2\mb Z=\{\bar0,\bar1\}$, one says that $a$ has parity $p(a)=\alpha$.
In this case we say that the superspace $U$ has parity $p$.
By a superalgebra structure on $U$ we always mean a parity preserving product
$U\otimes U\to U,\,a\otimes b\mapsto ab$,
which is called commutative (resp. skewcommutative) if
$ba=(-1)^{p(a)p(b)}ab$ (resp. $ba=-(-1)^{p(a)p(b)}ab$).

An endomorphism of $U$ is called even (resp. odd) if it preserves (resp. reverses) the parity.
The superspace $\End(U)$ of all endomorphisms of $U$ is endowed
with a Lie superalgebra structure by the formula:
$[A,B]=A\circ B-(-1)^{p(A)p(B)}B\circ A$.

One denotes by $\Pi U$ the superspace obtained from $U$ by reversing the parity,
namely $\Pi U=U$ as a vector space, with parity $\bar p(a)=p(a)+\bar1$.
One defines a structure of a vector superspace on the tensor algebra $\mc T(U)$ over $U$
by additivity.
The symmetric, (respectively exterior) superalgebra $S(U)$ (resp. $\bigwedge(U)$)
is defined as the quotient of the tensor superalgebra $\mc T(U)$ by the relations
$u\otimes v-(-1)^{p(u)p(v)}v\otimes u$ (resp. $u\otimes v+(-1)^{p(u)p(v)}v\otimes u$).
Note that $S(\Pi U)$ is the same as $\bigwedge U$ as a vector space, but not as a vector superspace.

\subsection{The universal Lie superalgebra $W(V)$}\label{sec:2.2}

Let $V$ be a vector superspace with parity $\bar p$
(the reason for this notation will be clear later).
We recall the construction of the universal Lie superalgebra $W(V)$
associated to $V$.
Let $W_k(V)=\Hom(S^{k+1}(V),V)$, the superspace of $(k+1)$-linear supersymmetric
functions on $V$ with values in $V$,
and let $W(V)=\bigoplus_{k=-1}^\infty W_k(V)$.
Again, we denote its parity by $\bar p$.
We endow this vector superspace  with a structure of a $\mb Z$-graded Lie superalgebra as follows.
If $X\in W_h(V)$, $Y\in W_{k-h}(V)$, with $h\geq-1,\,k\geq h-1$, we define $X\Box Y$ to be
the following element in $W_{k}(V)$:
\begin{equation}\label{100418:eq1}
\begin{array}{l}
X\Box Y(v_0, \dots, v_k) \\
\displaystyle{
= \sum_{\substack{
i_0<\dots <i_{k-h}\\
i_{k-h+1}<\dots< i_k}}
\epsilon_v(i_0,\dots,i_k)
X(Y(v_{i_0},\dots, v_{i_{k-h}}), v_{i_{k-h+1}},\dots, v_{i_k})\,.
}
\end{array}
\end{equation}
Here $\epsilon_v(i_0,\dots,i_k)=0$ if two indexes are equal,
and, for $i_0,\dots,i_k$ distinct,
$\epsilon_v(i_0,\dots,i_k)=(-1)^N$,  where $N$ is the number of interchanges of indexes
of odd $v_i$'s in the permutation.
For example, if $V$ is purely even, then $\epsilon_v(i_0,\dots,i_k)=1$ for every permutation,
while, if $V$ is purely odd, then $\epsilon_v(i_0,\dots,i_k)$ is the sign of the permutation
$\sigma\in S_{k+1}$ given by $\sigma(\ell)=i_\ell$.
The above formula, for $h=-1$ gives zero, while for $k=h-1$ gives $X(Y,v_0,\dots,v_k)$.
Clearly, $X\Box Y$ is a supersymmetric map if both $X$ and $Y$ are,
hence $X\Box Y$ is a well-defined element of $W_k(V)$.
We then define the bracket $[\cdot\,,\,\cdot]:\,W_h(V)\times W_{k-h}(V)\to W_k(V)$
by the following formula:
\begin{equation}\label{box}
[X,Y]=X\Box Y-(-1)^{\bar p(X)\bar p(Y)}Y\Box X\,.
\end{equation}
\begin{proposition}
The bracket \eqref{box} defines a Lie superalgebra structure on $W(V)$.
\end{proposition}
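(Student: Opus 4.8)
The plan is to verify the three defining properties of a $\mb Z$-graded Lie superalgebra: that $[\cdot\,,\,\cdot]$ respects the grading and parity, that it is skew-supersymmetric, and that it satisfies the super Jacobi identity. The first is built into the construction, since $X\Box Y\in W_k(V)$ is well defined and supersymmetric whenever $X\in W_h(V)$, $Y\in W_{k-h}(V)$, and it has parity $\bar p(X)+\bar p(Y)$; thus $[X,Y]\in W_k(V)$ has the correct degree and parity. Skew-supersymmetry is immediate from \eqref{box}: writing $x=\bar p(X)$ and $y=\bar p(Y)$, one computes
\[
[Y,X]=Y\Box X-(-1)^{xy}X\Box Y=-(-1)^{xy}\bigl(X\Box Y-(-1)^{xy}Y\Box X\bigr)=-(-1)^{xy}[X,Y].
\]
Hence the only real content is the Jacobi identity.

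The cleanest route to Jacobi is to show that $\Box$ is a graded \emph{right-symmetric} (right pre-Lie) product and then deduce that its commutator is a Lie superbracket. Concretely, I would prove that the associator
\[
a(X,Y,Z):=(X\Box Y)\Box Z-X\Box(Y\Box Z)
\]
is graded-symmetric in its last two arguments, i.e. $a(X,Y,Z)=(-1)^{\bar p(Y)\bar p(Z)}a(X,Z,Y)$.

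The key computation, and the main obstacle, is the combinatorial analysis of this associator. Expanding $(X\Box Y)\Box Z$ by \eqref{100418:eq1}, the insertion of the $Z$-output lands either inside the block of arguments fed to $Y$, or among the remaining arguments of $X$. The first family of terms reproduces exactly the nested composition $X\Box(Y\Box Z)$ and therefore cancels in $a(X,Y,Z)$; the second family consists of the ``parallel'' terms in which the outputs of $Y$ and of $Z$ are inserted independently into distinct slots of $X$. This parallel-insertion expression is manifestly symmetric under interchanging the roles of $Y$ and $Z$, up to the Koszul sign $(-1)^{\bar p(Y)\bar p(Z)}$ coming from the sign function $\epsilon_v$, which yields the claimed right-symmetry. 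I expect the delicate part to be the careful bookkeeping of the signs $\epsilon_v(i_0,\dots,i_k)$ under the reshufflings of arguments needed to match the nested terms and to exhibit the symmetry of the parallel ones; this is precisely where supersymmetry of $X$, $Y$, $Z$ (which lets one reorder arguments freely at the cost of a factor $\epsilon_v$) is used repeatedly.

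Finally, granting right-symmetry of the associator, the super Jacobi identity follows by a purely formal manipulation. Writing the Jacobiator in Leibniz form, $[X,[Y,Z]]-[[X,Y],Z]-(-1)^{xy}[Y,[X,Z]]$, and expanding every bracket via \eqref{box}, each resulting term assembles into an associator $a(\cdot\,,\cdot\,,\cdot)$ in some ordering of $X,Y,Z$; the identity $a(X,Y,Z)=(-1)^{\bar p(Y)\bar p(Z)}a(X,Z,Y)$ together with the skew-supersymmetry established above forces these to cancel in pairs. This completes the verification of the Lie superalgebra axioms.
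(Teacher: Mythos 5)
Your proposal is correct and follows essentially the same route as the paper: the paper's proof also notes skewcommutativity is built in and reduces the Jacobi identity to the right-symmetry of the associator $(X,Y,Z)=(X\Box Y)\Box Z-X\Box(Y\Box Z)$ under exchange of $Y$ and $Z$ (up to the Koszul sign). Your decomposition into nested terms (which cancel against $X\Box(Y\Box Z)$) and parallel-insertion terms (manifestly supersymmetric in $Y$ and $Z$) is exactly the computation the paper leaves as ``easy to see'' here and carries out explicitly in the analogous conformal setting.
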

\begin{proof}
The bracket (\ref{box}) is skewcommutative by construction.
Moreover, it is easy to see that the
operation $\Box$ is right symmetric, i.e.,  $(X,Y,Z)=(-1)^{\bar p(Y)\bar p(Z)}(X,Z,Y)$, where
$(X,Y,Z)=(X\Box Y)\Box Z-X\Box(Y\Box Z)$. The right symmetry
of $\Box$ implies the Jacobi identity for  bracket (\ref{box}).
\end{proof}

According to the above definitions, $W_{-1}(V)=V$, $W_0(V)=\End(V)$,
and the bracket between $W_0(V)$ and $W_{-1}(V)$ is given by the action of $\End(V)$ on $V$.
Moreover, for $X\in W_k(V)$ and $Y\in W_{-1}(V)=V$, we have
\begin{equation}\label{100411:eq2}
[X,Y](v_1,\dots,v_k)=X(Y,v_1,\dots,v_k)
\,\,,\,\,\,\,
v_1,\dots,v_k\in V\,,
\end{equation}
while for $X\in W_0(V)$ and $Y\in W_k(V),\,k\geq-1$, we have
\begin{equation}\label{100411:eq3}
\begin{array}{l}
[X,Y](v_0,\dots,v_k)=
X\big(Y(v_0,\dots,v_k)\big) \\
\displaystyle{
-(-1)^{\bar p(X)\bar p(Y)} \sum_{i=0}^k (-1)^{\bar p(X)\bar s_{0,i-1}}
Y(v_0,\dots X(v_i)\dots,v_k)\,.
}
\end{array}
\end{equation}
Here and further we let, for $i\leq j$,
\begin{equation}\label{100412:eq5}
\bar s_{i,j}=\bar p(v_i)+\dots+\bar p(v_j)\,.
\end{equation}
Finally, if $X\in W_1(V)$ and $Y\in W_{k-1}(V),\,k\geq0$, we have
$[X,Y]=X\Box Y-(-1)^{\bar p(X)\bar p(Y)}Y\Box X$, where
\begin{equation}\label{100412:eq4}
\begin{array}{l}
\displaystyle{
X\Box Y(v_0,\dots,v_k)=
\sum_{i=0}^k (-1)^{\bar p(v_i)(\bar p(Y)+\bar s_{0,i-1})}
X\big(v_i,Y(v_0,\stackrel{i}{\check{\dots}},v_k)\big)\,,
} \\
\displaystyle{
Y\Box X(v_0,\dots,v_k)
} \\
\displaystyle{
=\!\!
\sum_{0\leq i<j\leq k} \!\!
(-1)^{\bar p(v_i)\bar s_{0,i-1}+\bar p(v_j)(\bar s_{0,j-1}+\bar p(v_i))}
Y\big(X(v_i,v_j),v_0,\stackrel{i}{\check{\dots}}\,\stackrel{j}{\check{\dots}},v_k\big)\,.
}
\end{array}
\end{equation}
In particular, if both $X$ and $Y$ are in $W_1(V)$,
we get
\begin{equation}\label{100411:eq4}
\begin{array}{c}
X\Box Y(v_0,v_1,v_2) =
X\big(Y(v_0,v_1),v_2\big)
+(-1)^{\bar p(Y)\bar p(v_0)} X\big(v_0,Y(v_1,v_2)\big) \\
\vphantom{\Big(}
+(-1)^{(\bar p(v_0)+\bar p(Y))\bar p(v_1)} X\big(v_1,Y(v_0,v_2)\big) \,.
\end{array}
\end{equation}

\begin{remark}\label{100412:rem1}
It follows from \eqref{100411:eq2} that we have the following universality property of the Lie superalgebra $W(V)$:
for any $\mb Z$-graded Lie superalgebra $\mf g=\bigoplus_{k=-1}^\infty\mf g_k$
with $\mf g_{-1}=V$ there is a canonical homomorphism of $\mb Z$-graded Lie superalgebras
$\phi:\,\mf g\to W(V)$, extending the identity map on $V$, given by
$$
\phi(a)(v_0,\dots,v_k)=[\dots[[a,v_0],v_1],\dots,v_k]
\,\,,\,\,\,\,
\text{ if } k\geq0\,.
$$
This map is an embedding if and only if $\mf g$ has no ideals in $\bigoplus_{k\geq0}\mf g_k$.
\end{remark}
\begin{remark}\label{100412:rem2}
If $V$ is a finite dimensional vector superspace,
then $W(V)$ coincides with the Lie superalgebra of all polynomial
vector fields on $V$ (this explains the letter $W$, for Witt).
\end{remark}
\begin{remark}\label{100412:rem3}
If $V$ is purely odd, then $W(V)$ coincides with the so called Nijenhuis-Richardson algebra,
which plays an important role in deformation theory \cite{NR}.
\end{remark}


\subsection{The space $W(V,U)$ as a reduction of $W(V\oplus U)$}\label{sec:2.3}

Let $V$ and $U$ be vector superspaces with parity $\bar p$.
We define the $\mb Z_+$-graded vector superspace (with parity still denoted by $\bar p$)
$W(V,U)=\bigoplus_{k\in\mb Z_+}W_k(V,U)$,
where $W_k(V,U)=\Hom(S^{k+1}(V),U)$.

It can be obtained as a reduction of the universal Lie superalgebra $W(V\oplus U)$
as follows.
We consider the subspace
\begin{equation}\label{100414:eq1}
\Hom(S^{k+1}(V\oplus U),U)\subset W_k(V\oplus U)\,,
\end{equation}
defined by the canonical direct sum decomposition
$$
W_k(V\oplus U)=\Hom(S^{k+1}(V\oplus U),U)\oplus \Hom(S^{k+1}(V\oplus U),V)\,.
$$
The kernel of the restriction map $X\mapsto X\big|_{S^{k+1}(V)}$
is the subspace
\begin{equation}\label{100414:eq2}
\big\{X\in\Hom(S^{k+1}(V\oplus U),U) \,\big|\, X(S^{k+1}(V))=0\big\}\subset W_k(V\oplus U)\,.
\end{equation}
Hence we get an induced isomorphism of superspaces
\begin{equation}\label{100414:eq3}
\Hom(S^{k+1}(V\oplus U),U)\big/\big\{X \,\big|\, X(S^{k+1}(V))=0\big\}
\stackrel{\sim}{\longrightarrow} W_k(V,U)\,.
\end{equation}

\begin{proposition}\label{100414:prop}
Let $X\in W_h(V\oplus U)$. Then the adjoint action of $X$ on $W(V\oplus U)$
leaves the subspaces \eqref{100414:eq1} and \eqref{100414:eq2} invariant
provided that
\begin{enumerate}[(i)]
\item $X(w_0,\dots,w_h)\in U$ if at least one of the arguments $w_i$ lies in $U$,
\item $X(v_0,\dots,v_h)\in V$ if all the arguments $v_i$ lie in $V$.
\end{enumerate}
In this case $\ad X$ induces a well-defined map on the reduction $W(V,U)$,
via the isomorphism \eqref{100414:eq3}.
\end{proposition}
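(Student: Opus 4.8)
The plan is to compute $\ad X(Y)=[X,Y]$ directly from the explicit formula \eqref{box}, expanding it as $X\Box Y-(-1)^{\bar p(X)\bar p(Y)}Y\Box X$ via \eqref{100418:eq1}, and to verify the two invariance statements summand by summand. The organizing observation is that conditions (i) and (ii) fix the ``block structure'' of $X$: by multilinearity in each slot it suffices to evaluate on arguments each lying in $V$ or in $U$, and then (i)--(ii) say precisely that the $V$-component of the output of $X$ can be nonzero only when every argument lies in $V$. Because both claims will be checked termwise, the combinatorial signs $\epsilon_v$ in \eqref{100418:eq1} and the overall sign in \eqref{box} never intervene, which is what keeps the argument short.

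First I would treat the subspace \eqref{100414:eq1}, the maps into $U$. Take $Y\in\Hom(S^{k+1-h}(V\oplus U),U)$. In the term $X\Box Y$ the inner map is $Y$, whose value lies in $U$ and becomes the first argument fed to $X$; since then at least one argument of $X$ lies in $U$, condition (i) forces every summand into $U$. In the term $Y\Box X$ the outer map is $Y$, whose image lies in $U$ by hypothesis, so every summand lies in $U$ automatically. Hence $[X,Y]$ maps into $U$, establishing invariance of \eqref{100414:eq1}.

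Next I would treat the subspace \eqref{100414:eq2}, the maps into $U$ that in addition vanish on $S^{k+1}(V)$. Invariance into $U$ is already handled, so it remains to show that $[X,Y]$ vanishes when evaluated on arguments $v_0,\dots,v_k$ all lying in $V$. In $X\Box Y$ the inner map $Y$ then receives only arguments from $V$, so it vanishes by the defining property of \eqref{100414:eq2}, killing every summand. In $Y\Box X$ the inner map $X$ receives only arguments from $V$, so by condition (ii) its output again lies in $V$; consequently $Y$ receives only $V$-arguments and vanishes, again killing every summand. Thus $[X,Y]$ vanishes on $S^{k+1}(V)$, establishing invariance of \eqref{100414:eq2}.

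Finally, since $\ad X$ preserves both \eqref{100414:eq1} and \eqref{100414:eq2}, it descends to the quotient \eqref{100414:eq1}$/$\eqref{100414:eq2}, which is identified with $W(V,U)$ through the isomorphism \eqref{100414:eq3}, yielding the asserted well-defined induced map. I do not anticipate a serious obstacle: the only mildly delicate point is justifying the reduction to homogeneous (pure $V$ or pure $U$) arguments, which follows from multilinearity of all maps involved together with the fact that $V$ and $U$ are subspaces. Once that is in place, conditions (i) and (ii) dispatch each summand independently, so the whole proposition reduces to inspecting the two terms of \eqref{box}.
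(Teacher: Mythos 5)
Your proof is correct and follows the same route the paper takes: the paper's own proof simply states that the claim is immediate from the definition \eqref{box} of the bracket, which is exactly the termwise check of $X\Box Y$ and $Y\Box X$ you carry out. Your write-up just makes explicit the details (condition (i) handling $X\Box Y$ for \eqref{100414:eq1}, condition (ii) handling $Y\Box X$ for \eqref{100414:eq2}) that the paper leaves to the reader.
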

\begin{proof}
The proof is immediate from the definition of the Lie bracket \eqref{box} on $W(V\oplus U)$.
\end{proof}

\begin{remark}\label{100414:rem}
An element $X\in W_1(V\oplus U)$ defines a commutative 
(not necessarily associatve)
product $\cdot$ on the superspace $V\oplus U$.
In this case, conditions (i) and (ii) in Proposition \ref{100414:prop}
exactly mean that $V\cdot V\subset V$ and that $(V\oplus U)\cdot U\subset U$.
Moreover, the induced action of $\ad X$ on $W(V,U)$ is independent of the product on $U$.
\end{remark}

\subsection{Prolongations}\label{sec:2.4}

Let $V$ be a vector superspace, and let $\mf g_0$ be a subalgebra of the Lie superalgebra $\End(V)$.
A \emph{prolongation} of $\mf g_0$ is a $\mb Z$-graded subalgebra $\mf g=\bigoplus_{k=-1}^\infty\mf g_k$
of the $\mb Z$-graded Lie superalgebra $W(V)=\bigoplus_{k=-1}^\infty W_k(V)$,
such that $\mf g_{-1}=W_{-1}(V)=V$ and $\mf g_0$ coincides with the given Lie superalgebra.

The \emph{full prolongation} $W^{\mf g_0}(V)=\bigoplus_{k=-1}^\infty W^{\mf g_0}_k(V)$ of $\mf g_0$ is
defined by letting $W^{\mf g_0}_{-1}(V)=V,\,W^{\mf g_0}_0(V)=\mf g_0$ and,
inductively, for $k\geq1$,
$$
W^{\mf g_0}_k(V)=\big\{X\in W_k(V)\,\big|\,[X,W_{-1}(V)]\subset W^{\mf g_0}_{k-1}(V)\big\}\,.
$$
It is immediate to check, by the Jacobi identity, that the above formula defines
a maximal prolongation of the Lie superalgebra $W(V)$.

\subsection{Lie superalgebra structures}\label{sec:2.5}

By definition, the even elements $X\in W_1(V)$
are exactly the commutative (not necessariy associative) superalgebra structures on $V$:
for $X\in W_1(V)_{\bar0}$
we get a commutative product on $V$ by letting $uv=X(u,v)$.
Similarly, the skewcommutative superalgebra structures on a superspace $L$ with parity $p$
are in bijective correspondence with the odd elements of $W_1(\Pi L)$:
for $X\in W_1(\Pi L)_{\bar1}$, we get a skewcommutative product on $L$ by letting
\begin{equation}\label{100412:eq2}
[a,b]=(-1)^{p(a)}X(a,b)\,\,,\,\,\,\,a,b\in L\,,
\end{equation}
and vice-versa \cite{CK}.

Furthermore, let $X\in W_1(\Pi L)_{\bar1}$,
and consider the corresponding skewcommutative product \eqref{100412:eq2} on $L$.
The Lie bracket of $X$ with itself then becomes, by \eqref{100411:eq4},
$$
\begin{array}{l}
[X,X](a,b,c) = 2 X\Box X(a,b,c) \\
= -(-1)^{p(b)} 2 \Big\{
[a,[b,c]]-(-1)^{p(a)p(b)} [b,[a,c]]
- [[a,b],c]
\Big\}\,.
\end{array}
$$
Hence, the Lie superalgebra structures on $L$
are in bijective correspondence, via \eqref{100412:eq2},
with the set
\begin{equation}\label{100412:eq6}
\big\{X\in W_1(\Pi L)_{\bar1}\,\big|\,[X,X]=0\big\}\,.
\end{equation}
Therefore, for any Lie superalgebra $L$, we have a differential $d_X=\ad X$,
where $X$ in \eqref{100412:eq6} is associated to the Lie superalgebra structure on $L$,
on the superspace $W(\Pi L)$.
This coincides, up to a sign in the formula of the differential,
with the usual Lie superalgebra cohomology complex $(C^\bullet(L,L),d)$ for the Lie superalgebra $L$
with coefficients in its adjoint representation
(in equation \eqref{100414:eq4} below we give an explicit formula for the differential $d_X$
for an arbitrary representation $M$).
Thus the complex $C^\bullet(L,L)$ has a canonical Lie superalgebra structure
for which the differential $d_X$ (but not $d$) is a derivation.

In the next section we construct, by reduction, the Lie superalgebra cohomology complex
$(C^\bullet(L,M),d)$ for the Lie superalgebra $L$ with coefficients in an arbitrary $L$-module $M$.

\subsection{Lie superalgebra modules and cohomology complexes}\label{sec:2.6}

Let $L$ and $M$ be vector superspaces with parity $p$,
and consider the reduced superspace
$W(\Pi L,\Pi M)=\bigoplus_{k=-1}^\infty\Hom(S^{k+1}(\Pi L),\Pi M)$
introduced in Section \ref{sec:2.3},
with parity denoted by $\bar p$.

Suppose now that $L$ is a Lie superalgebra and $M$ is an $L$-module.
This is equivalent to say that
we have a Lie superalgebra structure on the superspace $L\oplus M$
extending the Lie bracket on $L$,
such that $M$ is an abelian ideal, the bracket between $a\in L$ and $m\in M$
being $a(m)$.
According to the above observations,
such a structure corresponds, bijectively,
to an element $X$ of the following set:
\begin{equation}\label{100412:eq7}
\left\{X\in W_1(\Pi L\oplus\Pi M)_{\bar1}\,\left|\,
\begin{array}{l}
[X,X]=0\,,X(L,L)\subset L,\\
X(L,M)\subset M,\,X(M,M)=0
\end{array}
\right.\right\}\,.
\end{equation}
Explicitly, to $X$ in \eqref{100412:eq7}
we associate the corresponding Lie superalgebra bracket on $L$ given by \eqref{100412:eq2},
and the corresponding $L$-module structure on $M$ given by
\begin{equation}\label{100412:eq3}
a(m) = (-1)^{p(a)}X(a,m)
\,\,,\,\,\,\,
a\in L,\,m\in M\,.
\end{equation}

Note that every element $X$ in the set \eqref{100412:eq7}
satisfies conditions (i) and (ii) in Proposition \ref{100414:prop}.
Hence $\ad X$ induces a well-defined endomorphism $d_X$ of $W(\Pi L,\Pi M)$ such that $d_X^2=0$,
thus making $(W(\Pi L,\Pi M),d_X)$ a complex.
The explicit formula for the differential $d_X$ follows from equations \eqref{100412:eq4}
and from the identifications \eqref{100412:eq2} and \eqref{100412:eq3}.
For $Y\in W_{k-1}(\Pi L,\Pi M)$, we have
\begin{equation}\label{100414:eq4}
\begin{array}{c}
\displaystyle{
(d_X Y)(a_0,\dots,a_k)
=
\sum_{i=0}^k(-1)^{\alpha_i}
a_i\big(Y(a_0,\stackrel{i}{\check{\dots}},a_k)\big)
}\\
\displaystyle{
+ \sum_{0\leq i<j\leq k}(-1)^{\alpha_{ij}}
Y([a_i,a_j],a_0,\stackrel{i}{\check{\dots}}\,\stackrel{j}{\check{\dots}},a_k)\,,
}
\end{array}
\end{equation}
where,
\begin{equation}\label{100421:eq10}
\begin{array}{rcl}
\alpha_i &=& 1+(p(a_i)+1)(\bar p(Y)+s_{0,i-1}+i+1) \,,\\
\alpha_{i,j} &=& \bar p(Y)+ (p(a_i)+1)(s_{0,i-1}+i+1) \\
&& +(p(a_j)+1)(s_{0,j-1}+j+p(a_i)+1)\,,
\end{array}
\end{equation}
and, recalling \eqref{100412:eq5}, we let, for $i\leq j$,
\begin{equation}\label{100414:eq5}
s_{i,j}=p(a_i)+\dots+p(a_j)\,.
\end{equation}
Note that, in the special case when $M=L$ is the adjoint representation,
the complex $(W(\Pi L,\Pi M),d_X)$ coincides with the complex $(W(\Pi L),d_X)$
discussed in Section \ref{sec:2.5}.
In the special case when $L$ is a (purely even) Lie algebra and $M$ is a purely even $L$-module,
we have $\bar p(Y)\equiv k \mod 2$, and the above formula reduces to
\begin{equation}\label{100414:eq6}
\begin{array}{c}
\displaystyle{
(d_X Y)(a_0,\dots,a_k)
=(-1)^k\Big(
\sum_{i=0}^k(-1)^{i}
a_i\big(Y(a_0,\stackrel{i}{\check{\dots}},a_k)\big)
}\\
\displaystyle{
+ \sum_{0\leq i<j\leq k}(-1)^{i+j}
Y([a_i,a_j],a_0,\stackrel{i}{\check{\dots}}\,\stackrel{j}{\check{\dots}},a_k)
\Big)\,,
}
\end{array}
\end{equation}
which, up to the overall sign factor $(-1)^{k}$,
is the usual formula for the Lie algebra cohomology differential (see e.g. \cite{Bou}).

In conclusion, the cohomology complex $(C^\bullet(L,M)=\bigoplus_{k\in\mb Z_+}C^k(L,M),d)$
of a Lie superalgebra $L$ with coefficients in an $L$-module $M$
can be defined by letting $C^k(L,M)=W_{k-1}(\Pi L,\Pi M)$ and $d=d_X$.

\begin{remark}
We have a canonical representation of a Lie superalgebra $L$
on each $W_k(\Pi L,\Pi M)=\Hom(S^{k+1}(\Pi L),\Pi M)$,
that we denote by $a\mapsto L_a,\,a\in L$ (the Lie derivative).
It is easy to check that $L_a=\ad[a,X]$.
Hence, defining the contraction operators $\iota_a=\ad a$,
we have Cartan's formula $L_a=[\iota_a,d_X]$.
This, together with the observation that, when $M=L$ is the adjoint representation,
$d_X$ is a derivation of the Lie bracket in $W(\Pi L)$,
leads us to believe that our choice of signs for the differential, contractions and Lie derivatives
in the Lie superalgebra cohomology complex is the most natural one.
In fact, one checks that the most general choice of signs which
keeps Cartan's formula valid up to a sign is the following:
$$
d_X(Y) = \epsilon(\bar p(Y)) [X,Y]
\,\,,\,\,\,\,
\iota_a(Y) = \delta(p(a))\epsilon(\bar p(Y)+\bar1) [a,Y]\,,
$$
where $\epsilon$ and $\delta$ are arbitrary functions: $\mb Z/2\mb Z\to\{\pm1\}$.
In this case Cartan's formula has the form
$L_a=\delta(p(a))[\iota_a,d_X]$.
Our choice of signs is $\epsilon=\delta=+1$.
The usual choice, see e.g. \cite{Bou}, in the case when $L$ and $M$ are purely even,
is $\delta=+1$ and $\epsilon(\bar k)=(-1)^k$, which corresponds to
$(\iota_{a_0}(Y))(a_1,\dots,a_k)=Y(a_0,a_1,\dots,a_k)$.
But this choice cannot be extended to the super case, if we require Cartan's formula to hold
(up to a sign).
\end{remark}


\section{The universal (odd) Poisson superalgebra
for a commutative associative superalgebra $A$
and Poisson superalgebra cohomology}
\label{sec:3}

Recall that a Poisson (resp. odd Poisson (=Gerstenhaber)) superalgebra $\mc P$, with parity $p$,
is a commutative associative superalgebra endowed with a bracket $[\cdot\,,\,\cdot]$
which makes $\mc P$ (resp. $\Pi\mc P$) a Lie superalgebra,
satisfying the following Leibniz rule:
$$
[a,bc] - [a,b]c = (-1)^{p(a)p(b)}b[a,c]
\,\,\Big(\text{ resp. } = (-1)^{(p(a)+\bar1)p(b)}b[a,c]\Big)\,.
$$
Usually the commutative associative product on $\mc P$ is denoted
by $\cdot$ in the Poisson superalgebra case,
and by $\wedge$ in the odd Poisson superalgebra case.
For example, if $(A,\mf g)$ is a Lie superalgebroid
over a commutative associative algebra $A$,
then $S_A(\mf g)$  has a natural structure of a Poisson superalgebra,
and $S_A(\Pi\mf g)$ has a natural structure of an odd Poisson superalgebra,
with the bracket on $\mf g$ extended by the Leibniz rule.
\begin{remark}\label{100424:rem1}
If $\mc P$ is a Poisson (resp. odd Poisson) algebra,
we can consider the \emph{opposite} Poisson (resp. odd Poisson)
algebra $\mc P^{\op}$, with the reversed product, 
and the opposite bracket, $[a,b]^{\op}=-[b,a]$.
\end{remark}

\subsection{The universal odd Poisson superalgebra $\Pi W^{\as}(\Pi A)$}\label{sec:3.1}

Throughout this section, we let $A$ be a commutative associative superalgebra with parity $p$,
and let $\Der(A)$ be the Lie superalgebra of derivations of $A$,
i.e., linear maps $X:\,A\to A$ satisfying the Leibniz rule
$X(ab)=X(a)b+(-1)^{p(a)p(b)}X(b)a$.

Consider the universal Lie superalgebra $W(\Pi A)=\bigoplus_{k=-1}^\infty W_k(\Pi A)$
associated to the vector superspace $\Pi A$, with parity denoted by $\bar p$.
The Lie superalgebra $\Der(A)$ is a subalgebra of $W_0(\Pi A)=\End(\Pi A)$,
so we can consider its full prolongation, as defined in Section \ref{sec:2.4},
which we denote by
$$
W^{\as}(\Pi A)=\bigoplus_{k=-1}^\infty W^{\as}_k(\Pi A)\subset W(\Pi A)\,.
$$
\begin{proposition}\label{100422:prop1}
For $k\geq-1$, the superspace $W^{\as}_k(\Pi A)$ consists of linear maps
$X:\,S^{k+1}(\Pi A)\to\Pi A$ satisfying the following Leibniz rule 
(for $a_0,\dots,a_{k-1},b,c\in A$, $k\geq0$):
\begin{equation}\label{100422:eq1}
X(a_0,\dots,a_{k-1},bc)=X(a_0,\dots,a_{k-1},b)c+(-1)^{p(b)p(c)}X(a_0,\dots,a_{k-1},c)b\,.
\end{equation}
\end{proposition}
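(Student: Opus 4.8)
The plan is to prove the identity $W^{\as}_k(\Pi A) = L_k$ by induction on $k$, where I write $L_k \subset W_k(\Pi A)$ for the subspace of supersymmetric maps $X\colon S^{k+1}(\Pi A)\to\Pi A$ satisfying the Leibniz rule \eqref{100422:eq1}. The whole argument rests on the inductive definition of the full prolongation from Section \ref{sec:2.4}, namely $X\in W^{\as}_k(\Pi A)$ iff $[X,W_{-1}(\Pi A)]\subset W^{\as}_{k-1}(\Pi A)$, combined with the fact that the bracket of $X$ with an element of $W_{-1}(\Pi A)=\Pi A$ carries no sign and is simply an insertion, as recorded in \eqref{100411:eq2}.

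For the base cases, I would first note that $L_{-1}=W_{-1}(\Pi A)=\Pi A$ because \eqref{100422:eq1} is vacuous for $k=-1$, matching $W^{\as}_{-1}(\Pi A)=\Pi A$. For $k=0$ a map $X\colon\Pi A\to\Pi A$ lies in $L_0$ exactly when $X(bc)=X(b)c+(-1)^{p(b)p(c)}X(c)b$, which is precisely the defining Leibniz rule of $\Der(A)$; hence $L_0=\Der(A)=W^{\as}_0(\Pi A)$. Here I would record the small verification that the single uniform sign $(-1)^{p(b)p(c)}$ encodes the correct super-Leibniz rule for a derivation of either parity.

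For the inductive step I fix $k\geq1$ and assume $W^{\as}_{k-1}(\Pi A)=L_{k-1}$. Then $X\in W^{\as}_k(\Pi A)$ iff $[X,a_0]\in L_{k-1}$ for all $a_0\in W_{-1}(\Pi A)=\Pi A$. By \eqref{100411:eq2} the element $[X,a_0]\in W_{k-1}(\Pi A)$ is the map $(v_1,\dots,v_k)\mapsto X(a_0,v_1,\dots,v_k)$, so applying the inductive characterization of $L_{k-1}$ to its last argument gives, for all $a_1,\dots,a_{k-1},b,c$,
\[
X(a_0,a_1,\dots,a_{k-1},bc)=X(a_0,\dots,a_{k-1},b)c+(-1)^{p(b)p(c)}X(a_0,\dots,a_{k-1},c)b.
\]
Letting $a_0$ range over $A$, this is exactly \eqref{100422:eq1} for $X$, i.e. $X\in L_k$; conversely \eqref{100422:eq1} for $X$ yields the Leibniz rule for every $[X,a_0]$. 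This closes the induction.

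The only place a sign could go wrong, and hence the point I would treat most carefully, is the bookkeeping forced by the parity reversal $\Pi$: the arguments of $X$ are supersymmetric with respect to $\bar p$ rather than $p$, so I must confirm that \eqref{100422:eq1}, which imposes the Leibniz condition only in the last slot, is consistent with this supersymmetry. The factor $(-1)^{p(b)p(c)}$ is precisely what makes the right-hand side transform under $b\otimes c\mapsto(-1)^{\bar p(b)\bar p(c)}c\otimes b$ in the same way as the left-hand side, so that Leibniz in the last slot automatically propagates to every slot. Since all maps already live in the supersymmetric space $W_k(\Pi A)$ and the relevant bracket \eqref{100411:eq2} is sign-free, these checks are routine and leave the logical skeleton of the induction intact.
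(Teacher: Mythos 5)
Your induction on $k$, using the defining property of the full prolongation together with the sign-free insertion formula \eqref{100411:eq2}, is exactly the argument the paper intends --- its own proof is just ``It follows by an easy induction on $k\geq0$'' --- and your base cases and inductive step are correct. (One immaterial slip in your closing aside: the sign under which both sides of \eqref{100422:eq1} transform when swapping $b$ and $c$ is $(-1)^{p(b)p(c)}$, coming from commutativity of the product in $A$, not $(-1)^{\bar p(b)\bar p(c)}$; this consistency check is in any case not needed for the induction.)
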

\begin{proof}
It follows by an easy induction on $k\geq0$.
\end{proof}
\begin{remark}\label{100422:rem1}
Equation \eqref{100422:eq1} and the symmetry relations imply the following more general
formula, for every $i=0,\dots,k$:
\begin{equation}\label{100422:eq2}
\begin{array}{c}
\vphantom{\Big(}
X(a_0,\dots ,b_ic_i,\dots,a_k)=(-1)^{p(c_i)(s_{i+1,k}+k-i)}X(a_0,\dots,b_i,\dots,a_k)c_i \\
\vphantom{\Big(}
+(-1)^{p(b_i)(\bar p(X)+s_{0,i-1}+i)}b_iX(a_0,\dots,c_i,\dots,a_k)\,,
\end{array}
\end{equation}
where $s_{ij}$ is defined in \eqref{100414:eq5}.
\end{remark}

We next define a structure of commutative associative superalgebra on the superspace $\Pi W^{\as}(\Pi A)$,
making it an odd Poisson superalgebra.
Let $X\in \Pi W^{\as}_{h-1}(\Pi A)$ and $Y\in\Pi W^{\as}_{k-h-1}(\Pi A)$, for $h\geq0,\,k-h\geq0$,
and denote by $p(X)$ and $p(Y)$ their parities in these spaces.
We define their \emph{concatenation product} $X\wedge Y\in \Pi W^{\as}_{k-1}(\Pi A)$
as the following map:
\begin{equation}\label{100422:eq4}
\begin{array}{c}
\displaystyle{
(X\wedge Y)(a_1,\dots,a_k)
=
\sum_{\substack{
i_1<\dots <i_{h}\\
i_{h+1}<\dots< i_k}}
\epsilon_{a}(i_1,\dots,i_k) (-1)^{p(Y)(\bar p(a_{i_1})+\dots+\bar p(a_{i_h}))}
} \\
\displaystyle{
\times
X(a_{i_1},\dots,a_{i_h})Y(a_{i_{h+1}},\dots,a_{i_k})\,,
}
\end{array}
\end{equation}
where $\epsilon_{a}(i_1,\dots,i_k)$ is as in \eqref{100418:eq1}
for the elements $a_1,\dots,a_k\in\Pi A$.
\begin{proposition}\label{100422:prop3}
The $\mb Z_+$-graded superspace $\mc G(A)=\bigoplus_{k=0}^\infty\mc G_k(A)$,
where $\mc G_k(A)=\Pi W_{k-1}^{\as}(\Pi A)$,
with parity denoted by $p$,
together with the concatenation product
$\wedge:\,\mc G_h(A)\times\mc G_{k-h}(A)\to\mc G_k(A)$ given by \eqref{100422:eq4},
and with the Lie superalgebra bracket on $\Pi\mc G(A)=W^{\as}(\Pi A)$,
is a $\mb Z_+$-graded odd Poisson superalgebra.
\end{proposition}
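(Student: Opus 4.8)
The plan is to verify the three defining structures in turn: that $(\mathcal G(A),\wedge)$ is a commutative associative superalgebra, that $(\Pi\mathcal G(A),[\cdot\,,\cdot])=W^{\as}(\Pi A)$ is already known to be a Lie superalgebra (by the full prolongation construction of Section~\ref{sec:2.4}, since $W^{\as}(\Pi A)$ is a subalgebra of $W(\Pi A)$), and finally that the odd Leibniz rule interlocks the two. First I would check that the concatenation product $\wedge$ of \eqref{100422:eq4} lands in the correct component, i.e.\ that $X\wedge Y$ again satisfies the derivation property \eqref{100422:eq1} characterizing $W^{\as}_{k-1}(\Pi A)$ by Proposition~\ref{100422:prop1}; this is the reason for the ``associative'' decoration, and follows by applying the Leibniz rule for $X$ or for $Y$ in the last slot and collecting the sign $\epsilon_a$ and the transposition sign $(-1)^{p(Y)(\cdots)}$. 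Supercommutativity, $X\wedge Y=(-1)^{p(X)p(Y)}Y\wedge X$, and superassociativity are then combinatorial identities about the shuffle sums: both reduce to comparing the two ways of partitioning $\{a_1,\dots,a_k\}$ into ordered blocks and tracking the Koszul signs carried by $\epsilon_a$ together with the explicit parity prefactor in \eqref{100422:eq4}.

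The key remaining step is the odd Leibniz compatibility
\[
[X,Y\wedge Z]=[X,Y]\wedge Z+(-1)^{(p(X)+\bar1)p(Y)}Y\wedge[X,Z]\,,
\]
which is the content that genuinely couples the bracket \eqref{box} and the product \eqref{100422:eq4}. I would prove this by induction on the total degree, using the generating relation \eqref{100411:eq2}, namely that bracketing with an element of $W_{-1}(\Pi A)=\Pi A$ lowers degree and, by the derivation property \eqref{100422:eq1}, behaves as a derivation of $\wedge$ in each argument. Concretely, it suffices to check the Leibniz rule after repeatedly bracketing both sides with arbitrary elements $v_0,\dots,v_m\in\Pi A$, since two maps in $W^{\as}(\Pi A)$ that agree after all such iterated brackets coincide; this reduces the identity for general $Y,Z$ to the base case where $Z\in\Pi A$ has degree $-1$, and there it becomes exactly the derivation property \eqref{100422:eq1} of $X$, decorated with signs.

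The main obstacle I anticipate is purely bookkeeping: matching the Koszul sign conventions so that the parity shift $\Pi$ (appearing as $\mathcal G_k(A)=\Pi W^{\as}_{k-1}(\Pi A)$, i.e.\ the grading index differs from the prolongation index by one) produces precisely the \emph{odd} Leibniz sign $(-1)^{(p(X)+\bar1)p(Y)}$ rather than the even one. The reversal of parity is exactly what converts the ordinary Poisson Leibniz rule into the Gerstenhaber (odd) one, matching the general definition recalled at the start of Section~\ref{sec:3}; the careful part is confirming that the $+\bar1$ in the exponent arises from the shift and not from an error in the transposition signs in \eqref{100422:eq4}. Once the degree $-1$ base case is verified and the signs are pinned down, the inductive step propagates the identity to all degrees with no further essential difficulty.
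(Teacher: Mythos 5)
Your treatment of the first two parts (that $X\wedge Y$ satisfies the Leibniz rule \eqref{100422:eq1} and is supersymmetric, and that $\wedge$ is supercommutative and associative) coincides with the paper's proof, which does exactly these shuffle-sum verifications. Where you genuinely diverge is the odd Leibniz rule \eqref{100423:eq2}: the paper proves it by establishing Lemma \ref{100423:lem}, the two explicit box-product identities
$X\Box(Y\wedge Z)=(X\Box Y)\wedge Z+(-1)^{\bar p(X)p(Y)}Y\wedge(Z\Box X)$ and
$(X\wedge Y)\Box Z=X\wedge(Y\Box Z)+(-1)^{p(Y)\bar p(Z)}(X\Box Z)\wedge Y$,
by a direct computation with the definitions \eqref{100418:eq1} and \eqref{100422:eq4}, and then just adds them up. You instead propose a softer argument: the separation principle (elements of $W_k(\Pi A)$, $k\geq 0$, are determined by iterated brackets with $W_{-1}(\Pi A)=\Pi A$, by \eqref{100411:eq2}), the fact that each contraction $\iota_v=\ad v$, $v\in\Pi A$, is an odd derivation of $\wedge$, and the Jacobi identity to propagate the Leibniz rule downward in degree, terminating at \eqref{100422:eq1}. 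This does work: bracketing the putative identity with $v$ and expanding via Jacobi and the contraction-derivation property produces only instances of the same identity with the total degree of $(X,Y,Z)$ lowered by one, so the induction closes. Two caveats you should make explicit. First, the contraction-derivation property is \emph{not} a consequence of the Leibniz rule \eqref{100422:eq1}, as you assert; it follows from the shuffle structure of the definition \eqref{100422:eq4} (split the sum according to which block the first argument $v$ lands in), and it cannot be folded into your induction (trying to derive it by further bracketing is circular, since $[v,w]=0$ for $v,w\in\Pi A$ kills the term that would lower degree), so it must be proved as a separate, though easy, direct computation used at every inductive step, not just as a base case. Second, your sign $(-1)^{(p(X)+\bar1)p(Y)}$ agrees with the paper's $(-1)^{\bar p(X)p(Y)}$ in \eqref{100423:eq2}, since $\bar p(X)=p(X)+\bar 1$. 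As for what each route buys: the paper's Lemma \ref{100423:lem} is finer information (the two one-sided identities are reused verbatim for the Poisson superalgebra $W^{\as}(A)$ in Lemma \ref{100423b:lem} and adapted in the conformal setting in \eqref{100423z:eq3bis}), whereas your argument minimizes sign-laden shuffle computations at the cost of the bookkeeping in the Jacobi-induction and of one unavoidable direct check.
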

\begin{proof}
First, we prove that $X\wedge Y$ in \eqref{100422:eq4} is an element of $\Pi W^{\as}_k(\Pi A)$,
namely, it is a map $S^k(\Pi A)\to\Pi A$ and it satisfies the Leibniz rule \eqref{100422:eq1}.
For the first assertion,
it is convenient to rewrite equation \eqref{100422:eq4} as a sum
over all permutations:
$$
\begin{array}{l}
\displaystyle{
(X\wedge Y)(a_1,\dots,a_k)
=
\frac{1}{h!(k-h)!}\sum_{\sigma\in S_k}
\epsilon_{a}(\sigma(1),\dots,\sigma(k))
} \\
\displaystyle{
\times (-1)^{p(Y)(\bar p(a_{\sigma(1)})+\dots+\bar p(a_{\sigma(h)}))}
X(a_{\sigma(1)},\dots,a_{\sigma(h)})Y(a_{\sigma(h+1)},\dots,a_{\sigma(k)})\,.
}
\end{array}
$$
Here we used the symmetry relations for $X$ and $Y$.
Using this, it is then immediate to check that
$$
(X\wedge Y)(a_{\tau(1)},\dots,a_{\tau(k)})=\epsilon_a(\tau(1),\dots,\tau(k))(X\wedge Y)(a_1,\dots,a_k)\,,
$$
for all permutations $\tau\in S_k$,
namely, $(X\wedge Y)(a_1,\dots,a_k)$
is supersymmetric in the variables $a_1,\dots,a_k\in\Pi A$, as we wanted.
Next, we show that $X\wedge Y$ satisfies the Leibniz rule \eqref{100422:eq1}.
We have
\begin{equation}\label{100423:eq1}
\begin{array}{l}
\displaystyle{
(X\wedge Y)(a_1,\dots,a_{k-1},bc)
=
\sum_{\substack{
i_1<\dots <i_{h}<k\\
i_{h+1}<\dots< i_k=k}}
\epsilon_{a_1,\dots,a_{k-1},bc}(i_1,\dots,i_k)
} \\
\displaystyle{
\vphantom{\Big(}
\,\,\,\,\,\,
\times
(-1)^{p(Y)(\bar p(a_{i_1})+\dots+\bar p(a_{i_h}))}
X(a_{i_1},\dots,a_{i_h})Y(a_{i_{h+1}},\dots,a_{i_{k-1}},bc)
} \\
\displaystyle{
+
\sum_{\substack{
i_1<\dots <i_{h}=k\\
i_{h+1}<\dots< i_k<k}}
\epsilon_{a_1,\dots,a_{k-1},bc}(i_1,\dots,i_k) (-1)^{p(Y)(\bar p(a_{i_1})+\dots+\bar p(a_{i_{h-1}})+\bar p(bc))}
} \\
\displaystyle{
\,\,\,\,\,\,\,\,\,\,\,\,\,\,\,\,\,\,\,\,\,\,\,\,\,\,\,\,\,\,\,\,\,\,\,\,\,\,\,\,\,\,\,\,\,
\times
X(a_{i_1},\dots,a_{i_{h-1}},bc)Y(a_{i_{h+1}},\dots,a_{i_k})\,.
}
\end{array}
\end{equation}
In the first term of the RHS of \eqref{100423:eq1}, we have, since $i_k=k$,
\begin{equation}\label{101109:eq1}
\epsilon_{a_1,\dots,a_{k-1},bc}(i_1,\dots,i_k)
=\epsilon_{a_1,\dots,a_{k-1},b}(i_1,\dots,i_k)
=\epsilon_{a_1,\dots,a_{k-1},c}(i_1,\dots,i_k)\,,
\end{equation}
and we can use the Leibniz rule for $Y$ to get:
$$
\begin{array}{c}
Y(a_{i_{h+1}},\dots,a_{i_{k-1}},bc)
=Y(a_{i_{h+1}},\dots,a_{i_{k-1}},b)c \\
+(-1)^{p(b)p(c)}Y(a_{i_{h+1}},\dots,a_{i_{k-1}},c)b\,.
\end{array}
$$
On the other hand,
in the second term of the RHS of \eqref{100423:eq1}, we have, since $i_h=k$,
\begin{equation}\label{101109:eq2}
\begin{array}{c}
\epsilon_{a_1,\dots,a_{k-1},bc}(i_1,\dots,i_k)
= \epsilon_{a_1,\dots,a_{k-1},b}(i_1,\dots,i_k)(-1)^{p(c)(\bar p(a_{i_{h+1}})+\dots+\bar p(a_{i_k}))} \\
= \epsilon_{a_1,\dots,a_{k-1},c}(i_1,\dots,i_k)(-1)^{p(b)(\bar p(a_{i_{h+1}})+\dots+\bar p(a_{i_k}))}\,,
\end{array}
\end{equation}
and we can use the Leibniz rule for $X$ and the commutativity of the product on $A$, to get
$$
\begin{array}{l}
X(a_{i_1},\dots,a_{i_{h-1}},bc)Y(a_{i_{h+1}},\dots,a_{i_k}) \\
\vphantom{\bigg(}
=
(-1)^{p(c)(p(Y)+\bar p(a_{i_{h+1}})+\dots+\bar p(a_{i_k}))}
X(a_{i_1},\dots,a_{i_{h-1}},b) Y(a_{i_{h+1}},\dots,a_{i_k}) c \\
+(-1)^{p(b)(p(c)+p(Y)+\bar p(a_{i_{h+1}})+\dots+\bar p(a_{i_k}))}
X(a_{i_1},\dots,a_{i_{h-1}},c) Y(a_{i_{h+1}},\dots,a_{i_k}) b .
\end{array}
$$
Furthermore, we have
\begin{equation}\label{101109:eq3}
\bar p(bc)=\bar p(b)+p(c)=p(b)+\bar p(c)\,\,\text{ in } \mb Z/2\mb Z\,.
\end{equation}
Putting the above formulas in equation \eqref{100423:eq1},
we get the desired Leibniz rule for $X\wedge Y$.

We now prove that formula \eqref{100422:eq4} for $X\wedge Y$ defines a product on the superspace
$\Pi W^{\as}(\Pi A)$ which is both commutative and associative.
Recall that we are denoting the parity on $W^{\as}(\Pi A)$ by $\bar p$
and on $\Pi W^{\as}(\Pi A)$ by $p$.
Hence, given $a_1,\dots,a_h\in\Pi A$ and $X\in W^{\as}_h(\Pi A)\subset\Hom(S^h\Pi A,\Pi A)$,
the element $X(a_1,\dots,a_h)\in\Pi A$ has parity $\bar p(X)+\bar p(a_1)+\dots+\bar p(a_h)$.
It follows that
$X(a_{i_1},\dots,a_{i_h})Y(a_{i_{h+1}},\dots,a_{i_k})$
has parity $\bar 1+p(X)+p(Y)+\bar p(a_1)+\dots+\bar p(a_k)$ as an element of $\Pi A$.
Hence, $X\wedge Y$ has parity $\bar 1+p(X)+p(Y)$ as an element of $W^{\as}(\Pi A)$,
or, equivalently, it has parity $p(X)+p(Y)$ as an element of $\Pi W^{\as}(\Pi A)$.
This shows that $\Pi W^{\as}(\Pi A)$, endowed with the wedge product \eqref{100422:eq4},
is a superalgebra.
Since $A$ is a commutative superalgebra, we have
$X(a_{i_1},\dots,a_{i_h})Y(a_{i_{h+1}},\dots,a_{i_k})=\pm Y(a_{i_{h+1}},\dots,a_{i_k})X(a_{i_1},\dots,a_{i_h})$,
where
$\pm = (-1)^{(p(X)+\bar p(a_{i_1})+\dots+\bar p(a_{i_h}))(p(Y)+\bar p(a_{i_{h+1}})+\dots+\bar p(a_{i_k}))}$.
This immediately implies the commutativity of the wedge product \eqref{100422:eq4}.
Moreover, given $X\in\Pi W^{\as}_{h-1}(\Pi A),\,
Y\in\Pi W^{\as}_{k-h-1}(\Pi A),\,Z\in\Pi W^{\as}_{\ell-k-1}(\Pi A)$,
and $a_1,\dots,a_\ell\in\Pi A$, we have, using associativity of $A$,
that both $(X\wedge(Y\wedge Z))(a_1,\dots,a_\ell)$ and $((X\wedge Y)\wedge Z)(a_1,\dots,a_\ell)$ are equal to
$$
\begin{array}{c}
\displaystyle{
\sum_{\substack{
i_1<\dots <i_{h}\\
i_{h+1}<\dots< i_k\\
i_{k+1}<\dots< i_\ell
}}
\epsilon_{a}(i_1,\dots,i_\ell)
(-1)^{p(Y)(\bar p(a_{i_1})+\dots+\bar p(a_{i_h})) + p(Z)(\bar p(a_{i_1})+\dots+\bar p(a_{i_k}))}
} \\
\displaystyle{
\times
X(a_{i_1},\dots,a_{i_h})Y(a_{i_{h+1}},\dots,a_{i_k})Z(a_{i_{k+1}},\dots,a_{i_\ell})
\,,
}
\end{array}
$$
proving associativity of the wedge product.

To complete the proof of the proposition,
we are left to prove that the Lie bracket on $W^{\as}(\Pi A)$ satisfies the odd Leibniz rule,
\begin{equation}\label{100423:eq2}
[X,Y\wedge Z]=[X,Y]\wedge Z+(-1)^{\bar p(X) p(Y)}Y\wedge[X,Z]\,,
\end{equation}
thus making $\Pi W^{\as}(\Pi A)$ an odd Poisson superalgebra.
This follows immediately from the following lemma.
\begin{lemma}\label{100423:lem}
The left and right Leibniz formulas for the box product \eqref{100418:eq1} of $W^{\as}(\Pi A)$ hold:
\begin{equation}\label{100423:eq3}
\begin{array}{rcl}
X\Box(Y\wedge Z) &=& (X\Box Y)\wedge Z+(-1)^{\bar p(X)p(Y)}Y\wedge(Z\Box X)\,,\\
(X\wedge Y)\Box Z &=& X\wedge (Y\Box Z)+(-1)^{p(Y)\bar p(Z)}(X\Box Z)\wedge Y\,.
\end{array}
\end{equation}
\end{lemma}
\begin{proof}
The first formula in \eqref{100423:eq3} is obtained, by a straightforward computation,
using the definitions \eqref{100418:eq1} and \eqref{100422:eq4} of the box product and of the wedge product,
and the Leibniz rule \eqref{100422:eq2} for $X:\,S^{h+1}(\Pi A)\to\Pi A$.
The second formula in \eqref{100423:eq3} is also obtained by a straightforward computation,
using \eqref{100418:eq1} and \eqref{100422:eq4}.
\end{proof}
\end{proof}

\begin{remark}\label{100424:rem2}
Assuming that $A$ is a purely even commutative associative algebra,
we may consider the Lie algebroid $(A,\Der(A))$,
the associated odd Poisson superalgebra of polyvector fields $S_A(\Pi\Der(A))$,
and also the opposite odd Poisson superalgebra $S_A(\Pi\Der(A))^{\op}$,
defined in Remark \ref{100424:rem1}.
Then, we have a homomorphism of $\mb Z_+$-graded odd Poisson superalgebras
$\phi:\,S_A(\Pi\Der(A))^{\op}\to\mc G(A)=\Pi W^{\as}(\Pi A)$, given by
$$
\phi(X_1\wedge\dots\wedge X_k)(a_1,\dots,a_k)
=
\det(X_i(a_j))_{i,j=1}^k\,.
$$
Indeed, it is easy to check that the map $\phi$ is a homomorphism of associative superalgebras.
Moreover, since it is the identity on $A\oplus\Pi\Der(A)$, it is automatically
a Lie superalgebra homomorphism, due to the Leibniz rule.
In fact, the map $\phi$ is an isomorphism provided that $\Der(A)$ is a free module over $A$
of finite rank, for example when $A$ is the algebra of polynomials in finitely many variables.
In general, though, this map is neither injective nor surjective.
\end{remark}

\subsection{Poisson superalgebra structures and Poisson superalgebra cohomology complexes}
\label{sec:3.1.5}

\begin{proposition}\label{100422:prop2}
The Poisson superalgebra structures on
a commutative associative superalgebra $A$
are in bijective correspondence, via \eqref{100412:eq2},
with the set
\begin{equation}\label{100422:eq3}
\big\{X\in W^{\as}_1(\Pi A)_{\bar1}\,\big|\,[X,X]=0\big\}\,.
\end{equation}
\end{proposition}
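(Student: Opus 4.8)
The plan is to combine three facts already established in the excerpt: the characterization of skewcommutative superalgebra structures via odd elements of $W_1(\Pi A)$, the characterization of Lie superalgebra structures via the equation $[X,X]=0$, and the Leibniz-rule characterization of $W^{\as}_1(\Pi A)$ coming from Proposition \ref{100422:prop1}. Recall that a Poisson superalgebra structure on the given commutative associative superalgebra $A$ consists of a skewcommutative bracket $[\cdot\,,\,\cdot]$ which makes $\Pi A$ a Lie superalgebra and which satisfies the Leibniz rule $[a,bc]=[a,b]c+(-1)^{p(a)p(b)}b[a,c]$. So I would show that each of these three requirements matches exactly one condition on the corresponding element $X\in W_1(\Pi A)_{\bar1}$.

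First, by the discussion in Section \ref{sec:2.5}, the odd elements $X\in W_1(\Pi A)_{\bar1}$ are in bijection with skewcommutative superalgebra structures on $A$ via the formula \eqref{100412:eq2}, namely $[a,b]=(-1)^{p(a)}X(a,b)$. This gives the underlying skewcommutative bracket and accounts for the skewcommutativity requirement for free. Second, by the same section, the condition $[X,X]=0$ in $W(\Pi A)$ is equivalent to the Jacobi identity, i.e.\ to this skewcommutative bracket being a genuine Lie superalgebra structure on $A$ (equivalently on $\Pi A$); indeed the computation of $[X,X](a,b,c)$ displayed there is precisely twice the Jacobi expression. Third, I would observe that $X\in W^{\as}_1(\Pi A)$ (as opposed to merely $X\in W_1(\Pi A)$) is, by Proposition \ref{100422:prop1} with $k=1$, exactly the statement that $X$ satisfies the Leibniz rule $X(a,bc)=X(a,b)c+(-1)^{p(b)p(c)}X(a,c)b$.

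The remaining step is to verify that this Leibniz rule on $X$, after the sign change \eqref{100412:eq2}, is precisely the Poisson Leibniz rule on the bracket $[\cdot\,,\,\cdot]$. This is a short sign bookkeeping: substituting $[a,b]=(-1)^{p(a)}X(a,b)$ into the two sides and using that $X$ has odd parity (so $p(X(a,b))=p(a)+p(b)+\bar1$), one converts the condition $X(a,bc)=X(a,b)c+(-1)^{p(b)p(c)}X(a,c)b$ into $[a,bc]=[a,b]c+(-1)^{p(a)p(b)}b[a,c]$, matching the definition of a Poisson superalgebra in Section \ref{sec:3}. I would therefore conclude that $X\in W^{\as}_1(\Pi A)_{\bar1}$ with $[X,X]=0$ corresponds bijectively, via \eqref{100412:eq2}, to a skewcommutative product on $A$ that is simultaneously a Lie superalgebra bracket and a biderivation of the associative product, i.e.\ to a Poisson superalgebra structure.

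The only genuinely delicate point, and the one I would carry out most carefully, is the sign reconciliation in the final step: one must track the parity of $X(a,c)$ when moving the factor $b$ past it to rewrite $(-1)^{p(b)p(c)}X(a,c)b$ in the form $(-1)^{p(a)p(b)}b[a,c]$, using commutativity of the associative product on $A$ together with the odd parity of $X$. Everything else is a direct invocation of results proved earlier in the excerpt, so I expect the proof to be very short, essentially reducing to citing Section \ref{sec:2.5} and Proposition \ref{100422:prop1} and checking the compatibility of the Leibniz rules under the parity-reversing identification \eqref{100412:eq2}.
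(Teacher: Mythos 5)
Your proposal is correct and follows essentially the same route as the paper's own proof: the paper likewise cites Section \ref{sec:2.5} for the bijection between Lie superalgebra structures on $A$ and odd elements $X$ with $[X,X]=0$, and then identifies membership in $W^{\as}_1(\Pi A)$ (via Proposition \ref{100422:prop1}) with the Leibniz rule for the bracket. The only difference is that you spell out the sign bookkeeping under the identification \eqref{100412:eq2}, which the paper leaves implicit; your computation of it is right.
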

\begin{proof}
By the results in Section \ref{sec:2.5}
the elements $X\in W_1(\Pi A)_{\bar1}$ such that $[X,X]=0$
correspond, via \eqref{100412:eq2}, to the Lie superalgebra structures on $A$.
Moreover, to say that $X$ lies in $W_1^{\as}(\Pi A)$ means that the corresponding Lie bracket
satisfies the Leibniz rule, hence $A$ is a Poisson superalgebra.
\end{proof}

It follows from the above Proposition that, for any Poisson superalgebra structure on $A$,
we have a differential $d_X=\ad X$ on the superspace $W^{\as}(\Pi A)$,
where $X$ in \eqref{100422:eq3}
is associated to the Lie superalgebra structure on $A$.
This differential is obviously  an odd derivation of the Lie bracket on $W^{\as}(\Pi A)$,
and an odd derivation of the concatenation product on $\Pi W^{\as}(\Pi A)$.
We thus get a cohomology complex $(\mc G(A),d_X)$.
\begin{remark}\label{100424:rem3}
If $A$ is a purely even Poisson algebra,
then the odd Poisson superalgebra $S_A(\Pi\Der(A))$
of polyvector fields on $A$ is the usual Poisson cohomology complex,
with the differential $d=\ad X$, where $X$ is the bivector field defining
the Poisson algebra structure on $A$ \cite{Lic}.
In this case, the odd Poisson superalgebra homomorphism $\phi$ defined in Remark \ref{100424:rem2}
is a homomorphism of cohomology complexes.
\end{remark}


\subsection{The universal Poisson superalgebra $W^{\as}(A)$
and odd Poisson superalgebra structures on $A$}\label{sec:3.2}

As in the previous section, let $A$ be a commutative associative superalgebra, with parity $p$
and let $\Der(A)$ be the Lie superalgebra of derivations of $A$.
Instead of $W(\Pi A)$, we may consider the universal Lie superalgebra
$W(A)=\bigoplus_{k=-1}^\infty W_k(A)$, with parity still denoted by $p$.
As we shall see below, we arrive at a ``dual'' picture:
$W^{\as}(A)\subset W(A)$ (defined below)
has a natural structure of a Poisson superalgebra,
while elements $X\in W_1(A)_{\bar 1}$ such that $[X,X]=0$
correspond to the odd Poisson superalgebra structures on $A$.

The Lie superalgebra $\Der(A)$ is a subalgebra of $W_0(A)=\End(A)$,
so we can consider its full prolongation,
which we denote by
$$
W^{\as}(A)=\bigoplus_{k=-1}^\infty W^{\as}_k(A)\subset W(A)\,.
$$
\begin{proposition}\label{100422b:prop1}
For $k\geq-1$, the superspace $W^{\as}_k(A)$ consists of linear maps
$X:\,S^{k+1}(A)\to A$ satisfying the following Leibniz rule (for $a_0,\dots,a_{k-1},b,c\in A$):
\begin{equation}\label{100422b:eq1}
X(a_0,\dots,a_{k-1},bc)=X(a_0,\dots,a_{k-1},b)c+(-1)^{p(b)p(c)}X(a_0,\dots,a_{k-1},c)b\,.
\end{equation}
\end{proposition}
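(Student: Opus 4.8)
The plan is to prove this by induction on $k\geq-1$, following essentially verbatim the argument for the parity-reversed version, Proposition \ref{100422:prop1}, since the two statements are formally identical with $\Pi A$ replaced by $A$. Write $L_k$ for the space of linear maps $X:\,S^{k+1}(A)\to A$ satisfying the last-argument Leibniz rule \eqref{100422b:eq1}; the goal is to show $W^{\as}_k(A)=L_k$ for all $k$.

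First I would dispose of the base cases. For $k=-1$ we have $S^0(A)=\mb F$, so $W^{\as}_{-1}(A)=W_{-1}(A)=A=L_{-1}$, the Leibniz condition being vacuous. For $k=0$, by definition of the full prolongation $W^{\as}_0(A)=\Der(A)$, and a linear map $X:\,A\to A$ is a derivation precisely when $X(bc)=X(b)c+(-1)^{p(b)p(c)}X(c)b$, which is exactly \eqref{100422b:eq1} at $k=0$; hence $W^{\as}_0(A)=L_0$.

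For the inductive step, assume $W^{\as}_{k-1}(A)=L_{k-1}$. By the definition of the full prolongation in Section \ref{sec:2.4}, an element $X\in W_k(A)$ lies in $W^{\as}_k(A)$ if and only if $[X,a]\in W^{\as}_{k-1}(A)$ for every $a\in A=W_{-1}(A)$. By \eqref{100411:eq2}, the bracket with a degree $-1$ element carries no sign, namely $[X,a](v_1,\dots,v_k)=X(a,v_1,\dots,v_k)$, so $[X,a]$ is simply $X$ with its first slot frozen at $a$. By the inductive hypothesis, $[X,a]\in W^{\as}_{k-1}(A)=L_{k-1}$ means that $[X,a]$ obeys the last-slot Leibniz rule, which unwinds to $X(a,v_1,\dots,v_{k-1},bc)=X(a,v_1,\dots,v_{k-1},b)c+(-1)^{p(b)p(c)}X(a,v_1,\dots,v_{k-1},c)b$. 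As $a,v_1,\dots,v_{k-1}$ range over all of $A$, this is precisely the assertion that $X\in L_k$. This yields both inclusions simultaneously and completes the induction.

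The only delicate point, and the step I would check most carefully, is the bookkeeping of signs: the parity of $X$ as an element of $W(A)$ differs from its parity in $W(\Pi A)$, so I would verify that no extra sign intrudes in the passage from $[X,a]$ to $X(a,\cdot)$, which \eqref{100411:eq2} guarantees, and that the last-slot Leibniz rule propagates to every slot by supersymmetry. For the latter I would invoke the computation behind Remark \ref{100422:rem1}: the Leibniz rule together with the supersymmetry of $X$ yields the general formula \eqref{100422:eq2} in any argument, so freezing the first rather than the last slot is immaterial. Since \eqref{100411:eq2} is sign-free and \eqref{100422b:eq1} involves only the parities $p(b),p(c)$ of the two factored arguments, the signs agree with the derivation condition at the base and are preserved along the induction; this obstacle is therefore a pure verification and poses no real difficulty.
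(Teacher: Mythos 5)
Your proof is correct and is exactly the argument the paper intends: its proof of Proposition \ref{100422b:prop1} is the one-line ``It follows by an easy induction on $k\geq0$,'' and your induction via the full prolongation condition $[X,W_{-1}(A)]\subset W^{\as}_{k-1}(A)$ together with the sign-free bracket formula \eqref{100411:eq2} is precisely that easy induction, spelled out. The sign check you flag is indeed the only point of care, and you resolve it correctly.
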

\begin{proof}
It follows by an easy induction on $k\geq0$.
\end{proof}
%

We next define a structure of commutative associative superalgebra on the superspace $W^{\as}(A)$,
making it a Poisson superalgebra.
Given $X\in W^{\as}_{h-1}(A)$ and $Y\in W^{\as}_{k-h-1}(A)$, for $h\geq0,\,k-h\geq0$,
we let their \emph{concatenation product} $X\cdot Y\in W^{\as}_{k-1}(A)$
be the following map:
\begin{equation}\label{100422b:eq4}
\begin{array}{c}
\displaystyle{
(X\cdot Y)(a_1,\dots,a_k)
=
\sum_{\substack{
i_1<\dots <i_{h}\\
i_{h+1}<\dots< i_k}}
\epsilon_a(i_1,\dots,i_k) (-1)^{p(Y)(p(a_{i_1})+\dots+p(a_{i_h}))}
} \\
\displaystyle{
\times
X(a_{i_1},\dots,a_{i_h})Y(a_{i_{h+1}},\dots,a_{i_k})\,.
}
\end{array}
\end{equation}
Note that $\epsilon_a(i_1,\dots,i_k)$ in \eqref{100422b:eq4} is not the same as
in \eqref{100422:eq4}, since here we consider $a_1,\dots,a_k$ as elements of $A$,
not of $\Pi A$.
\begin{proposition}\label{100422b:prop3}
The $\mb Z_+$-graded superspace $\mc P(A)=\bigoplus_{k=0}^\infty\mc P_k(A)$,
where $\mc P_k(A)=W_{k-1}^{\as}(A)$,
together with the concatenation product
$\cdot:\,\mc P_h(A)\times\mc P_{k-h}\to\mc P_k(A)$ given by \eqref{100422b:eq4},
and with the Lie superalgebra bracket on $\mc P(A)=W^{\as}(A)$,
is a Poisson superalgebra.
\end{proposition}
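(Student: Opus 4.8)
The plan is to establish Proposition \ref{100422b:prop3} by closely mirroring the proof of Proposition \ref{100422:prop3}, since the statement is its (even) Poisson analogue obtained by replacing $\Pi A$ with $A$ and the sign $\bar p$ with $p$ throughout. First I would verify that the concatenation product \eqref{100422b:eq4} lands in $W^{\as}_{k-1}(A)$, i.e. that $X\cdot Y$ is supersymmetric in its $k$ arguments and satisfies the Leibniz rule \eqref{100422b:eq1}. For supersymmetry, the cleanest route is to rewrite \eqref{100422b:eq4} as a normalized sum over all of $S_k$, namely with the prefactor $\frac{1}{h!(k-h)!}\sum_{\sigma\in S_k}$, using the symmetry of $X$ and $Y$; then a direct check shows that permuting the arguments by $\tau\in S_k$ multiplies the value by $\epsilon_a(\tau(1),\dots,\tau(k))$. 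For the Leibniz rule, I would split $(X\cdot Y)(a_1,\dots,a_{k-1},bc)$ according to whether the index $k$ (corresponding to $bc$) lands in the first block $i_1<\dots<i_h$ or the second block, exactly as in \eqref{100423:eq1}, track the sign identities analogous to \eqref{101109:eq1} and \eqref{101109:eq2}, and apply the Leibniz rule for $Y$ and for $X$ respectively, together with commutativity of $A$.

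Next I would verify that $\cdot$ defines a commutative, associative superalgebra structure. The parity bookkeeping is simpler than in the odd case: since arguments are now in $A$ with parity $p$ (no shift), the element $X(a_{i_1},\dots,a_{i_h})Y(a_{i_{h+1}},\dots,a_{i_k})$ has parity $p(X)+p(Y)+p(a_1)+\dots+p(a_k)$, so $X\cdot Y$ has parity $p(X)+p(Y)$ and the product is parity-preserving. Commutativity follows from the commutativity of $A$ by the same sign computation as in the odd case, and associativity is checked by expanding both $(X\cdot(Y\cdot Z))$ and $((X\cdot Y)\cdot Z)$ into the common triple sum over three blocks $i_1<\dots<i_h$, $i_{h+1}<\dots<i_k$, $i_{k+1}<\dots<i_\ell$, with the sign $(-1)^{p(Y)(p(a_{i_1})+\dots+p(a_{i_h}))+p(Z)(p(a_{i_1})+\dots+p(a_{i_k}))}$, using associativity of $A$.

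Finally, to complete the Poisson superalgebra structure I must establish the (even) Leibniz rule relating the Lie bracket on $W^{\as}(A)$ to the concatenation product, namely the even analogue of \eqref{100423:eq2},
\begin{equation*}
[X,Y\cdot Z]=[X,Y]\cdot Z+(-1)^{p(X)p(Y)}Y\cdot[X,Z]\,.
\end{equation*}
As in the odd case, this reduces to left and right Leibniz formulas for the box product \eqref{100418:eq1}, which I would state as a lemma paralleling Lemma \ref{100423:lem} (with $\bar p$ replaced by $p$ and $\wedge$ by $\cdot$) and prove by the same direct computation from the definitions \eqref{100418:eq1} and \eqref{100422b:eq4}, invoking the Leibniz rule \eqref{100422b:eq1} for $X$. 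I expect the main obstacle to be purely the sign discipline: the parities of the arguments now carry weight $p$ rather than $\bar p$, so every Koszul sign in the blocks, in the permutation factor $\epsilon_a$, and in the Leibniz expansions must be recomputed carefully rather than copied verbatim from Section \ref{sec:3.1}. Since all the structural combinatorics (block decomposition, normalization over $S_k$, the triple-sum associativity identity, and the reduction to box-product Leibniz formulas) are identical, the argument is genuinely parallel and I would present it as such, spelling out only the sign changes.
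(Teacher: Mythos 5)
Your proposal is correct and follows essentially the same route as the paper, which proves this proposition by repeating the argument of Proposition \ref{100422:prop3} verbatim with $\bar p$ replaced by $p$: membership in $W^{\as}_k(A)$ and the commutative associative structure are checked as in the odd case, and the Poisson--Leibniz rule \eqref{100423b:eq2} is reduced to the box-product identities of Lemma \ref{100423b:lem}, whose proof is identical to that of Lemma \ref{100423:lem}. Your emphasis on recomputing the Koszul signs is exactly the only genuine difference between the two cases.
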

\begin{proof}
First, we prove, in the same way as in Proposition \ref{100422:prop3},
that $X\cdot Y$ in \eqref{100422b:eq4} is an element of $W^{\as}_k(A)$,
namely, it is a map $S^k(A)\to A$ and it satisfies the Leibniz rule \eqref{100422b:eq1}.
Moreover, it is immediate to check that \eqref{100422b:eq4} makes $W^{\as}(A)$
a commutative associative superalgebra, using commutativity and associativity
of the product on $A$.
To complete the proof of the proposition,
we are left to prove that the Lie bracket on $W^{\as}(A)$ satisfies the usual Leibniz rule,
\begin{equation}\label{100423b:eq2}
[X,Y\cdot Z]=[X,Y]\cdot Z+(-1)^{p(X) p(Y)}Y\cdot[X,Z]\,,
\end{equation}
thus making $W^{\as}(A)$ a Poisson superalgebra.
This follows immediately from the following lemma.
\begin{lemma}\label{100423b:lem}
The left and right Leibniz formulas for the box product \eqref{100418:eq1} of $W^{\as}(A)$ hold:
\begin{equation}\label{100423b:eq3}
\begin{array}{rcl}
X\Box(Y\cdot Z) &=& (X\Box Y)\cdot Z+(-1)^{p(X)p(Y)}Y\cdot(Z\Box X)\,,\\
(X\cdot Y)\Box Z &=& X\cdot (Y\Box Z)+(-1)^{p(Y)p(Z)}(X\Box Z)\cdot Y\,.
\end{array}
\end{equation}
\end{lemma}
\begin{proof}
The proof is the same as that of Lemma \ref{100423:lem}.
\end{proof}
\end{proof}

\begin{remark}\label{100424b:rem2}
Assuming that $A$ is a purely even commutative associative algebra,
we may consider the associated Poisson algebra $S_A(\Der(A))$.
Then, we have a homomorphism of $\mb Z_+$-graded Poisson algebras
$\phi:\,S_A(\Der(A))\to\mc P(A)=W^{\as}(A)$, given by
$$
\phi(X_1,\dots,X_k)(a_1,\dots,a_k)
=
\sum_{\sigma\in S_k}X_1(a_{\sigma(1)})\dots X_k(a_{\sigma(k)})\,.
$$
Indeed, it is easy to check that the map $\phi$ is a homomorphism of associative algebras.
Moreover, since it is the identity on $A\oplus\Der(A)$, it is automatically
a Lie algebra homomorphism, due to the Leibniz rule.
\end{remark}

\begin{proposition}\label{100422b:prop2}
The odd Poisson superalgebra structures on $A$
are in bijective correspondence, via \eqref{100412:eq2},
with the set
\begin{equation}\label{100422b:eq3}
\big\{X\in W^{\as}_1(A)_{\bar1}\,\big|\,[X,X]=0\big\}\,.
\end{equation}
\end{proposition}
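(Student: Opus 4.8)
The plan is to follow the exact pattern established by Proposition \ref{100422:prop2} (the even Poisson case treated via $\Pi A$), now working in the ``dual'' setting of $W(A)$ rather than $W(\Pi A)$. The statement asserts a bijection between odd Poisson superalgebra structures on $A$ and elements $X \in W^{\as}_1(A)_{\bar1}$ with $[X,X]=0$, via the correspondence \eqref{100412:eq2}. So the strategy decomposes into two independent checks: (a) that the square-zero condition $[X,X]=0$ for $X \in W_1(A)_{\bar1}$ encodes precisely the odd Jacobi identity, and (b) that membership in the prolongation $W^{\as}_1(A)$ encodes precisely the odd Leibniz rule relating the bracket to the commutative associative product on $A$.

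For step (a), I would invoke the general machinery of Section \ref{sec:2.5} verbatim. There it is shown that odd elements $X \in W_1(\Pi L)_{\bar1}$ with $[X,X]=0$ correspond bijectively to Lie superalgebra structures on $L$ via \eqref{100412:eq2}. The key point is that this part of the argument is insensitive to whether we use $W(\Pi L)$ or $W(L)$: the formula \eqref{100411:eq4} computing $X\Box X(a,b,c)$ and the resulting identification of $[X,X]=0$ with the (super)Jacobi identity is a purely formal consequence of the bracket structure on the universal Lie superalgebra $W(-)$ and the definition \eqref{100412:eq2}. Hence for $X \in W_1(A)_{\bar1}$ the condition $[X,X]=0$ is equivalent to \eqref{100412:eq2} defining a Lie superalgebra bracket on $\Pi A$, i.e.\ on $A$ with the parity shift appropriate to the \emph{odd} Poisson setting.

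For step (b), I would use Proposition \ref{100422b:prop1}, which characterizes $W^{\as}_1(A)$ as the linear maps $X : S^2(A) \to A$ satisfying the Leibniz rule \eqref{100422b:eq1}. Translating \eqref{100422b:eq1} into the language of the bracket $[a,b]=(-1)^{p(a)}X(a,b)$ of \eqref{100412:eq2}, and tracking the parity-shift sign factors carefully, should reproduce exactly the \emph{odd} Leibniz rule $[a,bc]-[a,b]c = (-1)^{(p(a)+\bar1)p(b)}b[a,c]$ from the definition of an odd Poisson superalgebra in Section \ref{sec:3}. Combining (a) and (b): an element $X$ in the set \eqref{100422b:eq3} gives a bracket that is simultaneously a Lie superalgebra bracket on $\Pi A$ and satisfies the odd Leibniz rule with respect to the given commutative associative product, which is precisely an odd Poisson superalgebra structure; conversely every such structure arises this way.

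The main obstacle I anticipate is purely bookkeeping rather than conceptual: the careful matching of sign conventions in step (b). Because the bracket \eqref{100412:eq2} carries the twist $(-1)^{p(a)}$, and because the relevant parity here is $p$ on $A$ (not the shifted $\bar p$ used in the $\Pi A$ picture), one must verify that the sign $(-1)^{p(b)p(c)}$ appearing in \eqref{100422b:eq1} transmutes into exactly the $(-1)^{(p(a)+\bar1)p(b)}$ demanded by the odd Leibniz rule, with no spurious extra factor. This is the delicate point that distinguishes the odd case from the even Poisson case of Proposition \ref{100422:prop2}, and it is where the ``remarkable duality'' between $W^{\as}(A)$ and $W^{\as}(\Pi A)$ mentioned in the introduction is actually cashed out. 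I expect the verification to be short once the sign conventions are pinned down, so the proof itself can likely be stated in a few lines referring back to Section \ref{sec:2.5} and Proposition \ref{100422b:prop1}, exactly as the analogous Proposition \ref{100422:prop2} was dispatched.
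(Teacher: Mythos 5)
Your proposal is correct and takes essentially the same route as the paper: the paper's own proof merely says it is ``similar to that of Proposition \ref{100422:prop2}'', which, unwrapped, is exactly your two steps --- the Section \ref{sec:2.5} machinery identifying $[X,X]=0$ with the Jacobi identity for the bracket \eqref{100412:eq2} (here producing a Lie superalgebra structure on $\Pi A$), and Proposition \ref{100422b:prop1} identifying membership in $W^{\as}_1(A)$ with the Leibniz rule, which in the bracket language becomes the odd Leibniz rule. The sign check you flag in step (b) is the only content the paper leaves implicit, and it does work out (the twist $(-1)^{p(b)p(c)}$ in \eqref{100422b:eq1} combined with the parity of $X(a,c)$, which is $p(a)+p(c)+\bar1$, yields precisely $(-1)^{(p(a)+\bar1)p(b)}$).
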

\begin{proof}
The proof is similar to that of Proposition \ref{100422:prop2}.
\end{proof}

It follows from the above Proposition that, for any odd Poisson superalgebra $A$,
we have a differential $d_X=\ad X$ on the superspace $W^{\as}(A)$,
where $X$ in \eqref{100422b:eq3}
is associated to the Lie superalgebra structure on $\Pi A$.
This differential is obviously  an odd derivation of the Lie bracket on $W^{\as}(A)$,
and an odd derivation of the concatenation product on $W^{\as}(A)$.
We thus get a cohomology complex $(\mc P(A),d_X)$.



\section{The Lie superalgebra $W^{\partial}(V)$ for an $\mb F[\partial]$-module $V$,
and Lie conformal superalgebra cohomology}
\label{sec:4}

\noindent
In this section we repeat the discussion of Section \ref{sec:2}
in the case of conformal superalgebras.
Recall that a conformal superalgebra is a vector superspace $R$
endowed with a structure of an $\mb F[\partial]$-module
compatible with parity, and with a
$\lambda$-product, i.e., a linear map
$R\otimes R\to\mb F[\lambda]\otimes R,\,a\otimes b\mapsto a_\lambda b$,
satisfying the sesquilinearity relations:
\begin{equation}\label{100421:eq2}
(\partial a)_\lambda b=-\lambda a_\lambda b
\,\,,\,\,\,\,
a_\lambda(\partial b)=(\lambda+\partial) (a_\lambda b)\,.
\end{equation}
A conformal superalgebra is called commutative (resp. skewcommutative) if
\begin{equation}\label{100421:eq3}
b_\lambda a=(-1)^{p(a)p(b)} a_{-\lambda-\partial} b
\,\,\,\,
\Big( \text{ resp. }
= - (-1)^{p(a)p(b)} a_{-\lambda-\partial} b
\Big)\,,
\end{equation}
where $\partial$ is moved to the left.
In the skewcommutative case the $\lambda$-product is usually called $\lambda$-bracket
and it is denoted by $[a_\lambda b]$.
If in addition to skewcommutativity the $\lambda$-bracket satisfies the Jacobi identity,
\begin{equation}\label{100421:eq4}
[a_\lambda[b_\mu c]]-(-1)^{p(a)p(b)}[b_\mu[a_\lambda c]]=[[a_\lambda b]_{\lambda+\mu}c]\,,
\end{equation}
$R$ is called a Lie conformal superalgebra \cite{K}.

For an $\mb F[\partial]$-module $R$,
one usually denotes by $\tint$ the canonical map $R\to R/\partial R$.

Recall that, if $R$ is a Lie conformal superalgebra,
then the vector superspace $R/\partial R$ has a well-defined
structure of a Lie superalgebra,
given by $[\tint a,\tint b]=\tint [a_\lambda b]\,\big|_{\lambda=0},\,a,b\in R$.
Moreover, $R$ is a left module over the Lie superalgebra $R/\partial R$,
with the well-defined action $\big(\tint a\big)(b)=[a_\lambda b]\,\big|_{\lambda=0},\,a,b\in R$,
by derivations of the Lie conformal superalgebra $R$.

\subsection{The Lie superalgebra $W^\partial(V)$}\label{sec:4.1}

Let $V$ be a vector superspace with parity $\bar p$,
endowed with a structure of an $\mb F[\partial]$-module, compatible with the parity.
Motivated by the construction of $W(V)$ introduced in Section \ref{sec:2.2},
we construct in this section the $\mb Z$-graded Lie superalgebra
$W^\partial(V)=\bigoplus_{k=-1}^\infty W^\partial_k(V)$,
which, to some extent, plays the same role in the theory of Lie conformal algebras
as $W(V)$ plays in the theory of Lie algebras.

For $k\geq-1$, we let $W^\partial_k(V)$ be the superspace of $(k+1)$-$\lambda$-brackets on $V$,
as defined in \cite{DSK2}, namely,
\begin{equation}\label{100516:eq1}
W^\partial_k(V)
=
\Hom{}^{\sym}_{\mb F[\partial]^{\otimes (k+1)}}(V^{\otimes(k+1)},
\mb F_-[\lambda_0,\dots,\lambda_k]\otimes_{\mb F[\partial]}V)\,,
\end{equation}
with the parity $\bar p$ induced by the parity on $V$.
Here and further $\mb F_{-}[\lambda_0,\dots,\lambda_k]$
denotes the space of polynomials in the $k+1$ variables $\lambda_0,\dots,\lambda_k$,
endowed with a structure of $\mb F[\partial]^{\otimes (k+1)}$-module,
by letting $P_0(\partial)\otimes\dots\otimes P_k(\partial)$
act as $P_0(-\lambda_0)\dots P_k(-\lambda_k)$.
Using the embedding $\mb F[\partial]\subset\mb F[\partial]^{\otimes (k+1)}$
given by the standard comultiplication, we get the corresponding $\mb F[\partial]$-module
structure on $\mb F_{-}[\lambda_0,\dots,\lambda_k]$,
namely $\partial$ acts by multiplication
by $-(\lambda_0+\dots+\lambda_k)$.
Thus, $W^\partial_k(V)$ consists of all linear maps
$$
\begin{array}{rcl}
X\,&:&\,\,
V^{\otimes(k+1)}
\,\,\,\,\,\,\,\,\,\,\,\,
\longrightarrow
\,\,\,\,\,\,
\mb F_-[\lambda_0,\dots,\lambda_k]\otimes_{\mb F[\partial]} V
\,,\\
&& v_0\otimes\dots\otimes v_k
\,\,\,
\mapsto
\,\,\,\,\,\,
X_{\lambda_0,\dots,\lambda_k}(v_0,\dots,v_k)\,,
\end{array}
$$
satisfying the following conditions:
\begin{enumerate}[]
\item
(sesquilinearity)
$X_{\lambda_0,\dots,\lambda_k}(v_0,\dots,\partial v_i,\dots,v_k)
=-\lambda_iX_{\lambda_0,\dots,\lambda_k}(v_0,\dots,v_k)$,
\item
(symmetry)\,\,
$X_{\lambda_0,\dots,\lambda_k}(v_0,\dots,v_k)
=
\epsilon_v(i_0,\dots,i_k)
X_{\lambda_{i_0},\dots,\lambda_{i_k}}(v_{i_0},\dots,v_{i_k})$,
for all permutations $(i_0,\dots,i_k)$ of $(0,\dots,k)$.
\end{enumerate}
Here $\epsilon_v(i_0,\dots,i_k)$ is the same as in \eqref{100418:eq1}.

For example, $W^\partial_{-1}(V)=V/\partial V$.
Furthermore, $W^\partial_0(V)=\End_{\mb F[\partial]}(V)$,
namely the map $X_\lambda:\,V\to\mb F_-[\lambda]\otimes_{\mb F[\partial]}V$
is identified with the $\mb F[\partial]$-module endomorphism $X:\,V\to V$
given by $X(v)=X_{-\partial}(v)$, with $\partial$ moved to the left.

For $k\geq0$, we can identify, in a non-canonical way,
$\mb F_-[\lambda_0,\dots,\lambda_k]\otimes_{\mb F[\partial]} V
=
\mb F_-[\lambda_0,\dots,\lambda_{k-1}]\otimes V$,
by letting $\lambda_k=-\lambda_0-\dots-\lambda_{k-1}-\partial$.
For example, $W^\partial_1(V)$ is identified with the space of
$\lambda$-brackets
$$
\{\cdot\,_\lambda\,\cdot\}\,:\,\,V\otimes V\to V[\lambda]
\,\,,\,\,\,\,
u\otimes v\mapsto\{u_\lambda v\}\,,
$$
satisfying:
\begin{enumerate}[]
\item
(sesquilinearity)
$\{\partial u_\lambda v\}=-\lambda\{u_\lambda v\}
\,,\,\,
\{u_\lambda\partial v\}=(\lambda+\partial)\{u_\lambda v\}$,
\item
(commutativity)
$\{v_\lambda u\}=(-1)^{\bar p(u)\bar p(v)}\{u_{-\lambda-\partial}v\}$.
\end{enumerate}

Keeping in mind the construction in Section \ref{sec:2.2},
we next endow $W^\partial(V)$ with a structure of a $\mb Z$-graded Lie superalgebra as follows.
If $X\in W^\partial_h(V)$, $Y\in W^\partial_{k-h}(V)$, with $h\geq-1,\,k\geq h-1$,
we define $X\Box Y$ to be the following element of $W^\partial_{k}(V)$:
\begin{equation}\label{100418:eq2}
\begin{array}{l}
\displaystyle{
\big(X\Box Y\big)_{\lambda_0,\dots,\lambda_k}(v_0, \dots, v_k)
= \sum_{\substack{
i_0<\dots <i_{k-h}\\
i_{k-h+1}<\dots< i_k}}
\epsilon_v(i_0,\dots,i_k)
} \\
\displaystyle{
\times X_{\lambda_{i_0}+\dots+\lambda_{i_{k-h}},\lambda_{i_{k-h+1}},\dots,\lambda_{i_k}}
(Y_{\lambda_{i_0},\dots,\lambda_{i_{k-h}}}(v_{i_0},\dots, v_{i_{k-h}}), v_{i_{k-h+1}},\dots, v_{i_k})\,.
}
\end{array}
\end{equation}
The above formula, for $h=-1$ gives zero, while for $k=h-1$ gives
$X_{0,\lambda_0,\dots,\lambda_k}(Y,v_0,\dots,v_k)$,
which is well defined for $Y\in V/\partial V$ by the sesquilinearity condition on $X$.
We observe
that the box product \eqref{100418:eq2} is well defined,
namely it preserves the defining relations
of $\mb F_-[\lambda_0,\dots,\lambda_k]\otimes_{\mb F[\partial]} V$.
Indeed, if
$X_{\lambda_0,\dots,\lambda_h}(v_0,\dots,v_h)
=(\lambda_0+\dots+\lambda_h+\partial)F(\lambda_0,\dots,\lambda_h;v_0,\dots,v_h)$,
for some polynomial $F$ in $\lambda_0,\dots,\lambda_h$ with coefficients in $V$, then
$$
\big(X\Box Y\big)_{\lambda_0,\dots,\lambda_k}(v_0, \dots, v_k)
= (\lambda_0+\dots+\lambda_k+\partial)\sum \pm F(\dots)\,,
$$
which is zero in $\mb F_-[\lambda_0,\dots,\lambda_k]\otimes_{\mb F[\partial]} V$.
Similarly, if
$Y_{\lambda_0,\dots,\lambda_{k-h}}(v_0,\dots,v_{k-h})
=(\lambda_0+\dots+\lambda_{k-h}+\partial)G(\lambda_0,\dots,\lambda_{k-h};v_0,\dots,v_{k-h})$,
for some polynomial $G$ in $\lambda_0,\dots,\lambda_{k-h}$ with coefficients in $V$,
then $\big(X\Box Y\big)_{\lambda_0,\dots,\lambda_k}(v_0, \dots, v_k) = 0$,
by the sesquilinearity condition for $X$.
Moreover, it is straightforward to check that $X\Box Y$
satisfies both the sesquilinearity and the symmetry conditions, if $X$ and $Y$ do.
Hence, $\Box$ is a well-defined map $W^\partial_h(V)\times W^\partial_{k-h}(V)\to W^\partial_k(V)$.

We then define the bracket $[\cdot\,,\,\cdot]:\,W^\partial_h(V)\times W^\partial_{k-h}(V)\to W^\partial_k(V)$
by the same formula as before:
\begin{equation}\label{box2}
[X,Y]=X\Box Y-(-1)^{\bar p(X)\bar p(Y)}Y\Box X\,.
\end{equation}
\begin{proposition}
\begin{enumerate}[(a)]
\item
The bracket \eqref{box2} defines a Lie superalgebra structure on $W^\partial(V)$.
\item
We have the following canonical homomorphism of $\mb Z$-graded Lie superalgebras:
\begin{equation}\label{100419:eq3}
W^\partial(V)\to W(V/\partial V)
\,\,,\,\,\,\,
X_{\lambda_0,\dots,\lambda_k}\mapsto \tint X_{0,\dots,0}\,,
\end{equation}
where, as usual, for $v\in V$, $\tint v$ denotes its image in $V/\partial V$.
\end{enumerate}
\end{proposition}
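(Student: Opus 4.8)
The plan is to follow closely the strategy used for the Proposition in Section~\ref{sec:2.2}, which established the Lie superalgebra structure on $W(V)$, and then to treat the homomorphism~\eqref{100419:eq3} by specializing all the formal variables to zero. For part~(a), since the bracket~\eqref{box2} is manifestly skewcommutative, only the Jacobi identity remains. As in Section~\ref{sec:2.2}, I would deduce it from the right-symmetry of the box product, namely
$$(X,Y,Z)=(-1)^{\bar p(Y)\bar p(Z)}(X,Z,Y), \qquad (X,Y,Z):=(X\Box Y)\Box Z-X\Box(Y\Box Z),$$
since the same elementary computation as in the even case shows that right-symmetry of $\Box$ forces the graded Jacobi identity for the commutator~\eqref{box2}. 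Thus the whole problem reduces to verifying right-symmetry of~\eqref{100418:eq2}. The mechanism is the usual one: expanding $(X\Box Y)\Box Z$ by~\eqref{100418:eq2} splits into two families of terms, those in which $Z$ is substituted into a slot of $X$ not occupied by the output of $Y$, and those in which $Z$ is substituted into the argument of $X$ that already holds the output of $Y$. The second family reproduces exactly $X\Box(Y\Box Z)$ and cancels in the associator, while the first family is the sum over substituting $Y$ and $Z$ into two distinct slots of $X$, which is visibly invariant under exchanging $Y\leftrightarrow Z$ up to the Koszul sign $(-1)^{\bar p(Y)\bar p(Z)}$.

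The step requiring care, and the one I expect to be the main obstacle, is the bookkeeping of the formal variables $\lambda_i$ and of the signs $\epsilon_v$ throughout this cancellation. In~\eqref{100418:eq2} the argument fed into a given slot of $X$ carries the variable equal to the sum of the $\lambda_i$ attached to the original vectors flowing into that slot; this assignment is associative, i.e.\ it is the same whether $Z$ is nested inside $Y$ or substituted directly into $X$, so the cancellation of the ``nested'' terms is exact, variables included. Likewise the symmetry factors combine correctly because $\epsilon_v$ behaves multiplicatively under refinement of the partitions of $\{0,\dots,k\}$. I would also recall, as already checked in the text preceding the Proposition, that $X\Box Y$ is a well-defined element of $W^\partial_k(V)$, respecting the defining relations of the tensor product over $\mb F[\partial]$ and satisfying sesquilinearity and symmetry; granting this, the associator identity is a finite, if tedious, verification parallel to the even case.

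For part~(b), I would first check that $\phi(X)(\tint v_0,\dots,\tint v_k):=\tint X_{0,\dots,0}(v_0,\dots,v_k)$ is well defined. By the sesquilinearity condition in~\eqref{100516:eq1}, setting $\lambda_i=0$ gives $X_{0,\dots,0}(\dots,\partial v_i,\dots)=0$, so the right-hand side depends only on the classes $\tint v_i$ and, after applying $\tint$, lands in $V/\partial V$; the symmetry property of $X$ at $\lambda=0$ shows $\phi(X)$ is supersymmetric with the sign $\epsilon_v$, so $\phi(X)\in W_k(V/\partial V)$. It then suffices to prove $\phi(X\Box Y)=\phi(X)\Box\phi(Y)$, the box product on the right being~\eqref{100418:eq1}, after which subtracting the opposite terms yields $\phi([X,Y])=[\phi(X),\phi(Y)]$. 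Specializing all $\lambda_i=0$ in~\eqref{100418:eq2} collapses every combined variable $\lambda_{i_0}+\dots+\lambda_{i_{k-h}}$ to $0$, so the summand becomes $\tint X_{0,\dots,0}\big(Y_{0,\dots,0}(v_{i_0},\dots,v_{i_{k-h}}),v_{i_{k-h+1}},\dots\big)$; by the well-definedness of $\phi(X)$ this depends only on the class of $Y_{0,\dots,0}(\dots)$ modulo $\partial V$, hence equals $\phi(X)\big(\phi(Y)(\tint v_{i_0},\dots),\tint v_{i_{k-h+1}},\dots\big)$, and summing reproduces~\eqref{100418:eq1}. This last part presents no real difficulty beyond confirming that specialization $\lambda\mapsto0$, being a ring homomorphism, commutes with the substitution in~\eqref{100418:eq2}.
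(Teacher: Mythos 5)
Your proposal is correct and takes essentially the same route as the paper: for part (a) the paper also reduces the Jacobi identity to right-symmetry of $\Box$ and verifies it by computing the associator $(X\Box Y)\Box Z - X\Box(Y\Box Z)$ explicitly, observing that the nested terms (your ``second family'') cancel and that the remaining sum, where $Y$ and $Z$ occupy distinct slots of $X$, is supersymmetric under $Y\leftrightarrow Z$ up to the Koszul sign, with the $\lambda$-variable and $\epsilon_v$ bookkeeping working exactly as you describe. For part (b) the paper likewise just checks that \eqref{100419:eq3} is a well-defined linear map compatible with the box products (which your specialization-at-$\lambda=0$ argument carries out in more detail), and concludes it is a Lie superalgebra homomorphism.
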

\begin{proof}
The bracket (\ref{box2}) is skewcommutative by construction.
To prove Jacobi identity, it suffices to check that
the box-product \eqref{100418:eq2} is right symmetric, i.e.,
$(X,Y,Z)=(-1)^{\bar p(Y)\bar p(Z)}(X,Z,Y)$, where
$(X,Y,Z)=(X\Box Y)\Box Z-X\Box(Y\Box Z)$ is the associator of $X,Y,Z$.
Let then $X\in W^\partial_h(V),\,Y\in W^\partial_{k-h}(V)$ and $Z\in W^\partial_{\ell-k}(V)$.
We have, by the definition \eqref{100418:eq2} of the box-product,
\begin{equation}\label{100419:eq1}
\begin{array}{l}
\displaystyle{
\big(X\Box (Y\Box Z)\big)_{\lambda_0,\dots,\lambda_\ell}(v_0, \dots, v_\ell)
= \sum_{\substack{
i_0<\dots <i_{\ell-k}\\
i_{\ell-k+1}<\dots< i_{\ell-h}\\
i_{\ell-h+1}<\dots< i_\ell
}}
\epsilon_v(i_0,\dots,i_\ell)
} \\
\displaystyle{
\times X_{\lambda_{i_0}+\dots+\lambda_{i_{\ell-h}},\lambda_{i_{\ell-h+1}},\dots,\lambda_{i_\ell}}
\Big(
Y_{\lambda_{i_0}+\dots+\lambda_{i_{\ell-k}},\lambda_{i_{\ell-k+1}},\dots,\lambda_{i_{\ell-h}}}
} \\
\displaystyle{
\,\,\,\,\,\,\,\,\,\,\,\,
\big(
Z_{\lambda_{i_0},\dots,\lambda_{i_{\ell-k}}}(v_{i_0},\dots, v_{i_{\ell-k}}),
v_{i_{\ell-k+1}},\dots, v_{i_{\ell-h}}
\big),
v_{i_{\ell-h+1}},\dots, v_{i_\ell}
\Big)\,,
}
\end{array}
\end{equation}
and
\begin{equation}\label{100419:eq2}
\begin{array}{l}
\vphantom{\bigg(}
\displaystyle{
\big((X\Box Y)\Box Z\big)_{\lambda_0,\dots,\lambda_\ell}(v_0, \dots, v_\ell)
- \big(X\Box (Y\Box Z)\big)_{\lambda_0,\dots,\lambda_\ell}(v_0, \dots, v_\ell)
} \\
\displaystyle{
=\sum_{\substack{
i_0<\dots <i_{k-h}\\
i_{k-h+1}<\dots< i_{\ell-h+1}\\
i_{\ell-h+2}<\dots< i_\ell
}}
\epsilon_v(i_0,\dots,i_\ell)
(-1)^{\bar p(Z)(\bar p(v_{i_0})+\dots+\bar p(v_{i_{k-h}}))}
} \\
\displaystyle{
\times X_{\lambda_{i_0}+\dots+\lambda_{i_{k-h}},
\lambda_{i_{k-h+1}}+\dots+\lambda_{i_{\ell-h+1}},
\lambda_{i_{\ell-h+2}},\dots,\lambda_{i_\ell}}
\Big(
Y_{\lambda_{i_0},\dots,\lambda_{i_{k-h}}}(v_{i_0},\dots, v_{i_{k-h}}),
} \\
\displaystyle{
\,\,\,\,\,\,\,\,\,\,\,\,\,\,\,\,\,\,\,\,\,\,\,\,\,\,\,\,\,\,\,\,\,\,\,\,\,\,\,\,\,\,\,\,\,\,\,
Z_{\lambda_{i_{k-h+1}},\dots,\lambda_{i_{\ell-h+1}}}(v_{i_{k-h+1}},\dots, v_{i_{\ell-h+1}}),
v_{i_{\ell-h+2}},\dots, v_{i_\ell}
\Big)\,.
}
\end{array}
\end{equation}
We then observe that the RHS above is supersymmetric under the exchange of $Y$ and $Z$.
This concludes the proof of part (a).
For part (b), one easily checks that the map \eqref{100419:eq3} is a well-defined
linear map of $\mb Z$-graded superspaces, and it is a homomorphism for the box-products
in $W^\partial(V)$ and in $W(V/\partial V)$.
Hence it is a $\mb Z$-graded Lie superalgebra homomorphism.
\end{proof}

For $X\in W^\partial_{k}(V)$ and $Y\in V/\partial V=W^\partial_{-1}(V)$, we have
\begin{equation}\label{100419:eq4}
[X,Y]_{\lambda_1,\dots,\lambda_k}(v_1,\dots,v_k)=X_{0,\lambda_1,\dots,\lambda_k}(Y,v_1,\dots,v_k)
\,\,,\,\,\,\,
v_1,\dots,v_k\in V\,.
\end{equation}
For $X\in W^\partial_0(V)=\End_{\mb F[\partial]}(V)$ and $Y\in W^\partial_k(V),\,k\geq-1$, we have
\begin{equation}\label{100419:eq5}
\begin{array}{l}
[X,Y]_{\lambda_0,\dots,\lambda_k}(v_0,\dots,v_k)=
X\big(Y_{\lambda_0,\dots,\lambda_k}(v_0,\dots,v_k)\big) \\
\displaystyle{
-(-1)^{\bar p(X)\bar p(Y)} \sum_{i=0}^k (-1)^{\bar p(X)\bar s_{0,i-1}}
Y_{\lambda_0,\dots,\lambda_k}(v_0,\dots X(v_i)\dots,v_k)\,,
}
\end{array}
\end{equation}
where $\bar s_{ij}$ is as in \eqref{100412:eq5}.
Finally, if $X\in W^\partial_1(V)$ and $Y\in W^\partial_{k-1}(V),\,k\geq0$, we have
$[X,Y]=X\Box Y-(-1)^{\bar p(X)\bar p(Y)}Y\Box X$, where
\begin{equation}\label{100419:eq6}
\begin{array}{l}
\displaystyle{
(X\Box Y)_{\lambda_0,\dots,\lambda_k}(v_0,\dots,v_k)
} \\
\displaystyle{
\,\,\,\,\,\,\,\,\,
= \sum_{i=0}^k (-1)^{\bar p(v_i)\bar s_{i+1,k}}
X_{\lambda_0+\stackrel{i}{\check{\dots}}+\lambda_k,\lambda_i}
\big(Y_{\lambda_0,\stackrel{i}{\check{\dots}},\lambda_k}(v_0,\stackrel{i}{\check{\dots}},v_k),v_i\big)\,,
} \\
\displaystyle{
(Y\Box X)_{\lambda_0,\dots,\lambda_k}(v_0,\dots,v_k)
=\!\!
\sum_{0\leq i<j\leq k} \!\!
(-1)^{\bar p(v_i)\bar s_{0,i-1}+\bar p(v_j)(\bar s_{0,i-1}+\bar s_{i+1,j-1})}
} \\
\displaystyle{
\,\,\,\,\,\,\,\,\,\,\,\,\,\,\,\,\,\,\,\,\,\,\,\,\,\,\,\,\,\,\,\,\,\,\,\,
\times Y_{\lambda_i+\lambda_j,\lambda_0,\stackrel{i}{\check{\dots}}\stackrel{j}{\check{\dots}},\lambda_k}
\big(X_{\lambda_i,\lambda_j}(v_i,v_j),v_0,\stackrel{i}{\check{\dots}}\,\stackrel{j}{\check{\dots}},v_k\big)\,.
}
\end{array}
\end{equation}
In particular, if both $X$ and $Y$ are in $W^\partial_1(V)$,
we get
\begin{equation}\label{100419:eq7}
\begin{array}{c}
(X\Box Y)_{\lambda_0,\lambda_1,\lambda_2}(v_0,v_1,v_2) =
X_{\lambda_0+\lambda_1,\lambda_2}\big(Y_{\lambda_0,\lambda_1}(v_0,v_1),v_2\big) \\
\vphantom{\Bigg(}
+(-1)^{\bar p(v_1)(\bar p(Y)+\bar p(v_0))}
X_{\lambda_1,\lambda_0+\lambda_2}\big(v_1,Y_{\lambda_0,\lambda_2}(v_0,v_2)\big) \\
+(-1)^{\bar p(v_0)\bar p(Y)}
X_{\lambda_0,\lambda_1+\lambda_2}\big(v_0,Y_{\lambda_1,\lambda_2}(v_1,v_2)\big) \,.
\end{array}
\end{equation}

\subsection{The space $W^\partial(V,U)$ as a reduction of $W^\partial(V\oplus U)$}
\label{sec:4.2}

Let $V$ and $U$ be vector superspaces with parity $\bar p$, endowed with a structure of $\mb F[\partial]$-modules.
We define the $\mb Z$-graded vector superspace (with parity still denoted by $\bar p$)
$W^\partial(V,U)=\bigoplus_{k\geq-1}W^\partial_k(V,U)$,
where
$$
W^\partial_k(V,U)
=\Hom{}^{\sym}_{\mb F[\partial]^{\otimes(k+1)}}(V^{\otimes(k+1)},
\mb F_-[\lambda_0,\dots,\lambda_k]\otimes_{\mb F[\partial]}U)\,.
$$

In the same way as in Section \ref{sec:2.3},
$W^\partial(V,U)$ is obtained as a subquotient of the universal
Lie superalgebra $W^\partial(V\oplus U)$,
via the canonical isomorphism of superspaces
\begin{equation}\label{100421:eq1}
\begin{array}{c}
\mc U/\mc K
\stackrel{\sim}{\longrightarrow} W^\partial_k(V, U)\,,
\end{array}
\end{equation}
where $\mc U$ and $\mc K$ are the following subspaces of $W^\partial_k(V\oplus U)$:
$$
\begin{array}{rcl}
\mc U
&=&
\Hom{}^{\sym}_{\mb F[\partial]^{\otimes(k+1)}}((V\oplus U)^{\otimes(k+1)},
\mb F_-[\lambda_0,\dots,\lambda_k]\otimes_{\mb F[\partial]}U) \,, \\
\mc K &=& \big\{Y \,\big|\, Y(V^{\otimes(k+1)})=0\big\} \,.
\end{array}
$$
The following analogue of Proposition \ref{100414:prop} holds:
\begin{proposition}\label{100421:prop}
Let $X\in W^\partial_h(V\oplus U)$. Then the adjoint action of $X$ on $W^\partial(V\oplus U)$
leaves the subspaces $\mc U$ and $\mc K$ invariant
provided that
\begin{enumerate}[(i)]
\item
$X_{\lambda_0,\dots,\lambda_h}(w_0,\dots,w_h)\in \mb F_-[\lambda_0,\dots,\lambda_h]\otimes_{\mb F[\partial]} U$
if one of the arguments $w_i$ lies in $U$,
\item
$X_{\lambda_0,\dots,\lambda_h}(v_0,\dots,v_h)\in \mb F_-[\lambda_0,\dots,\lambda_h]\otimes_{\mb F[\partial]} V$
if all the arguments $v_i$ lie in $V$.
\end{enumerate}
In this case, $\ad X$ induces a well-defined linear map on the reduced space $W^\partial(V,U)$,
via the isomorphism \eqref{100421:eq1}.
\end{proposition}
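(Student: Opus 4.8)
The plan is to follow the proof of Proposition \ref{100414:prop} almost verbatim: the $\lambda$-variable bookkeeping carried by the box product \eqref{100418:eq2} is orthogonal to the $V$-versus-$U$ membership that conditions (i) and (ii) control. Thus I would split $\ad X$ through $[X,Y]=X\Box Y-(-1)^{\bar p(X)\bar p(Y)}Y\Box X$ and examine the two box products separately, once for $Y\in\mc U$ and once for $Y\in\mc K$. The only structural fact I need is that in a box product the \emph{inner} factor is evaluated on a subset of the arguments, producing an element of $\mb F_-[\dots]\otimes_{\mb F[\partial]}(V\oplus U)$, which then fills the first slot of the \emph{outer} factor; whether that inner output is $V$-valued or $U$-valued is precisely what (i) and (ii) regulate.

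For invariance of $\mc U$, take $Y\in\mc U$. In $X\Box Y$ the inner factor $Y$ is $U$-valued by hypothesis, so the first argument of the outer $X$ lies in $U$ and condition (i) forces the output into $U$. In $Y\Box X$ the outer factor is $Y$, which is $U$-valued on \emph{any} inputs, so the output is $U$-valued regardless of where the inner $X$ lands. Hence both box products lie in $\mc U$, so $[X,Y]\in\mc U$.

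For invariance of $\mc K$, take $Y\in\mc K$ (in particular $Y\in\mc U$, so $[X,Y]\in\mc U$ by the previous step) and evaluate $[X,Y]$ on arguments all lying in $V$. In $X\Box Y$ the inner factor $Y$ then receives only $V$-arguments and vanishes, so $X\Box Y$ vanishes on $V^{\otimes\bullet}$. In $Y\Box X$ the inner factor $X$ receives only $V$-arguments, hence is $V$-valued by condition (ii); the outer $Y$ therefore again sees only $V$-arguments and vanishes. Thus $[X,Y]$ vanishes on all-$V$ inputs, i.e.\ $[X,Y]\in\mc K$. Combining the two steps, $\ad X$ preserves $\mc U$ and $\mc K$ and so descends to a well-defined map on $\mc U/\mc K\cong W^\partial(V,U)$ via \eqref{100421:eq1}.

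The computation is routine; the one genuinely new point compared with the algebraic Proposition \ref{100414:prop} is that substituting the $\lambda$-polynomial output of the inner factor into the outer factor must respect the relations of $\mb F_-[\lambda_0,\dots,\lambda_k]\otimes_{\mb F[\partial]}(V\oplus U)$ and must not mix the $V$ and $U$ summands. Both were already secured when \eqref{100418:eq2} was shown to be well defined over $\mb F[\partial]$: the sesquilinearity used there acts on the $\lambda$'s alone and leaves the direct-sum decomposition of the target untouched. Hence I anticipate no real obstacle beyond keeping the two orders of composition and the summation indices straight.
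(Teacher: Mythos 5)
Your proof is correct and is exactly the argument the paper intends: the paper omits an explicit proof of Proposition \ref{100421:prop}, stating only that it is the analogue of Proposition \ref{100414:prop}, whose own proof is declared ``immediate from the definition of the Lie bracket.'' Your case analysis of the two box products for $\mc U$- and $\mc K$-invariance, together with the remark that substituting polynomial-valued inner outputs is already covered by the well-definedness of \eqref{100418:eq2} over $\mb F[\partial]$, is precisely the spelled-out version of that analogy.
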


%
%

\subsection{Lie conformal superalgebra structures}\label{sec:4.4}

By definition, the even elements $X\in W^\partial_1(V)$
are exactly the commutative conformal superalgebra structures on $V$:
for $X\in W^\partial_1(V)_{\bar0}$
we get a commutative $\lambda$-product on $V$ by letting $u_\lambda v=X_{\lambda,-\lambda-\partial}(u,v)$.
Similarly, the skewcommutative conformal superalgebra structures on an $\mb F[\partial]$-module $R$
with parity $p$ are in bijective correspondence with the odd elements of $W^\partial_1(\Pi R)$:
for $X\in W^\partial_1(\Pi R)_{\bar1}$, we get a skewcommutative $\lambda$-bracket on $R$ by letting
\begin{equation}\label{100421:eq5}
[a_\lambda b]=(-1)^{p(a)}X_{\lambda,-\lambda-\partial}(a,b)\,\,,\,\,\,\,a,b\in R\,,
\end{equation}
and vice-versa.

Furthermore, let $X\in W^\partial_1(\Pi R)_{\bar1}$,
and consider the corresponding skewcommutative $\lambda$-product \eqref{100421:eq5} on $R$.
The Lie bracket of $X$ with itself then becomes, by \eqref{100419:eq7},
$$
\begin{array}{l}
[X,X]_{\lambda,\mu,-\lambda-\mu-\partial}(a,b,c)
= 2(X\Box X)_{\lambda,\mu,-\lambda-\mu-\partial}(a,b,c) \\
= -(-1)^{p(b)} 2 \Big\{
[a_\lambda[b_\mu c]]-(-1)^{p(a)p(b)} [b_\mu[a_\lambda c]]
- [[a_\lambda b]_{\lambda+\mu}c]
\Big\}\,.
\end{array}
$$
Hence, the Lie conformal superalgebra structures on $R$
are in bijective correspondence, via \eqref{100421:eq5},
with the set
\begin{equation}\label{100421:eq6}
\big\{X\in W^\partial_1(\Pi R)_{\bar1}\,\big|\,[X,X]=0\big\}\,.
\end{equation}
Therefore, for any Lie conformal superalgebra $R$, we have a differential $d_X=\ad X$,
where $X$ in \eqref{100421:eq6} is associated to the Lie conformal superalgebra structure on $R$,
on the superspace $W^\partial(\Pi R)$, which makes it a cohomology complex
so that the differential $d_X$ is a derivation of the Lie bracket.

\subsection{Lie conformal superalgebra modules and cohomology complexes}\label{sec:4.5}

Let $R$ and $M$ be vector superspaces with parity $p$,
endowed with $\mb F[\partial]$-module structures.
Consider the reduced superspace
$W^\partial(\Pi R,\Pi M)$
introduced in Section \ref{sec:4.2}, with parity denoted by $\bar p$.

Suppose now that $R$ is a Lie conformal superalgebra and $M$ is an $R$-module.
This is equivalent to say that
we have a Lie conformal superalgebra structures on the $\mb F[\partial]$-module $R\oplus M$
extending the $\lambda$-bracket on $R$,
and such that $M$ is an abelian ideal, the bracket between $a\in R$ and $m\in M$
being $a_\lambda(m)$, the $\lambda$-action of $R$ in $M$.
According to the above observations,
such a structure corresponds, bijectively,
to an element $X$ of the following set:
\begin{equation}\label{100421:eq7}
\begin{array}{r}
\big\{X\in W^\partial_1(\Pi R\oplus\Pi M)_{\bar1}\,\big|\,
[X,X]=0\,,X_{\lambda,\mu}(R,R)\subset \mb F_-[\lambda,\mu]\otimes_{\mb F[\partial]}R,\\
X_{\lambda,\mu}(R,M)\subset \mb F_-[\lambda,\mu]\otimes_{\mb F[\partial]}M,\,X_{\lambda,\mu}(M,M)=0
\big\}
\end{array}
\,.
\end{equation}
Explicitly, to $X$ in \eqref{100421:eq7}
we associate the corresponding $\lambda$-bracket on $R$ given by \eqref{100421:eq5},
and the corresponding $R$-module structure on $M$ given by
\begin{equation}\label{100421:eq8}
a_\lambda(m) = (-1)^{p(a)}X_{\lambda,-\lambda-\partial}(a,m)
\,\,,\,\,\,\,
a\in R,\,m\in M\,.
\end{equation}

Note that every element $X$ in the set \eqref{100421:eq7}
satisfies conditions (i) and (ii) in Proposition \ref{100421:prop}.
Hence $\ad X$ induces a well-defined endomorphism $d_X$ of $W^\partial(\Pi R,\Pi M)$
such that $d_X^2=0$,
thus making $(W^\partial(\Pi R,\Pi M),d_X)$ a cohomology complex.
The explicit formula for the differential $d_X$ follows from equations \eqref{100419:eq6}
and from the identifications \eqref{100421:eq5} and \eqref{100421:eq8}.
For $Y\in W^\partial_{k-1}(\Pi R,\Pi M)$, we have
\begin{equation}\label{100421:eq9}
\begin{array}{c}
\displaystyle{
(d_X Y)_{\lambda_0,\dots,\lambda_k}(a_0,\dots,a_k)
=
\sum_{i=0}^k(-1)^{\alpha_i}
{a_i}_{\lambda_i}\big(Y_{\lambda_0,\stackrel{i}{\check{\dots}},\lambda_k}(a_0,\stackrel{i}{\check{\dots}},a_k)\big)
}\\
\displaystyle{
+ \sum_{0\leq i<j\leq k}(-1)^{\alpha_{ij}}
Y_{\lambda_i+\lambda_j,\lambda_0,\stackrel{i}{\check{\dots}}\stackrel{j}{\check{\dots}},\lambda_k}
([{a_i}_{\lambda_i}a_j],a_0,\stackrel{i}{\check{\dots}}\,\stackrel{j}{\check{\dots}},a_k)\,,
}
\end{array}
\end{equation}
where $\alpha_i$ and $\alpha_{ij}$ are defined in \eqref{100421:eq10}.
Note that, in the special case when $M=R$ is the adjoint representation,
the complex $(W^\partial(\Pi R,\Pi M),d_X)$ coincides with the complex $(W^\partial(\Pi R),d_X)$
discussed in Section \ref{sec:4.4}.

In the special case when $R$ is a (purely even) Lie conformal algebra and $M$ is a purely even $R$-module, we have
$\bar p(Y)\equiv k \mod 2$, and the above formula reduces to
\begin{equation}\label{100421:eq11}
\begin{array}{c}
\displaystyle{
(d_X Y)_{\lambda_0,\dots,\lambda_k}(a_0,\dots,a_k)
=(-1)^k\Big(
\sum_{i=0}^k(-1)^i
{a_i}_{\lambda_i}\big(Y_{\lambda_0,\stackrel{i}{\check{\dots}},\lambda_k}(a_0,\stackrel{i}{\check{\dots}},a_k)\big)
}\\
\displaystyle{
+ \sum_{0\leq i<j\leq k}(-1)^{i+j}
Y_{\lambda_i+\lambda_j,\lambda_0,\stackrel{i}{\check{\dots}}\stackrel{j}{\check{\dots}},\lambda_k}
([{a_i}_{\lambda_i}a_j],a_0,\stackrel{i}{\check{\dots}}\,\stackrel{j}{\check{\dots}},a_k)
\Big)\,,}
\end{array}
\end{equation}
which, up to the overall sign factor $(-1)^{k}$,
is the usual formula for the Lie conformal algebra cohomology differential
(see \cite{BKV}, \cite{BDAK} and \cite{DSK2}).
In conclusion, the cohomology complex $(C^\bullet(R,M)=\bigoplus_{k\in\mb Z_+}C^k(R,M),d)$
of a Lie conformal superalgebra $R$ with coefficients in an $R$-module $M$
can be defined by letting $C^k(R,M)=W^\partial_{k-1}(\Pi R,\Pi M)$ and $d=d_X$.

\begin{remark}\label{100516:rem}
One can replace $\mb F[\partial]$ by $U(\mf d)$, where $\mf d$ is a Lie algebra.
Then, following the same reasoning as above,
for any $\mf d$-module $R$ one constructs the $\mb Z$-graded Lie superalgebra $W^{\mf d}(\Pi R)$,
so that an odd element $X\in W^{\mf d}_1(\Pi R)$ such that $[X,X]=0$
defines a pseudoalgebra structure on $R$ and its cohomology complex, cf. \cite{BDAK}.
This, and its relation to the variational bicomplex,
will be discussed in a forthcoming publication.
\end{remark}


\section{The Lie superalgebra $W^{\partial,\as}(\mc V)$
for a commutative associative differential superalgebra $\mc V$
and PVA cohomology}
\label{sec:5}

Recall that a \emph{Poisson vertex superalgebra} (abbreviated PVA)
$\mc V$, with parity $p$,
is a unital commutative associative differential
superalgebra endowed with a $\lambda$-bracket $[\cdot\,_\lambda\,\cdot]$
which makes $\mc V$ a Lie conformal superalgebra,
satisfying the following Leibniz rule:
\begin{equation}\label{100426:eq1}
[a_\lambda bc] = [a_\lambda b]c + (-1)^{p(a)p(b)}b[a_\lambda c]\,.
\end{equation}
For example, if $R$ is a Lie conformal superalgebra,
then the symmetric algebra $S(R)$  has a natural structure of a Poisson vertex superalgebra,
with the $\lambda$-bracket on $R$ extended to $S(R)$ by the Leibniz rule \eqref{100426:eq1}.
In analogy with the notion of an odd Poisson superalgebra from Section \ref{sec:3},
we also introduce the notion of an \emph{odd Poisson vertex superalgebra}.
This is a unital commutative associative differential superalgebra $\mc V$, with parity $p$,
endowed with a $\lambda$-bracket
$[\cdot\,_\lambda\,\cdot]$ which makes $\Pi \mc V$ a Lie conformal superalgebra,
satisfying the following odd Leibniz rule:
\begin{equation}\label{100426:eq2}
[a_\lambda bc] = [a_\lambda b]c + (-1)^{(p(a)+\bar1)p(b)}b[a_\lambda c]\,.
\end{equation}
For example, if $R$ is a Lie conformal superalgebra,
then the symmetric algebra $S(\Pi R)$  has a natural structure of an odd Poisson vertex superalgebra,
with the $\lambda$-bracket on $R$ extended to $S(\Pi R)$ by the Leibniz rule \eqref{100426:eq2}.

\subsection{Poisson vertex superalgebra structures}\label{sec:5.1}

Throughout this section, we let $\mc V$ be a unital commutative associative differential superalgebra,
with a given even derivation $\partial$, and with parity denoted by $p$.
We let $\Der^\partial(\mc V)$ be the Lie superalgebra of derivations of $\mc V$ commuting with
$\partial$.

Consider the universal Lie superalgebra $W^\partial(\Pi \mc V)=\bigoplus_{k=-1}^\infty W^\partial_k(\Pi \mc V)$
associated to the $\mb F[\partial]$-module $\Pi \mc V$,
defined in Section \ref{sec:4.1}, with parity denoted by $\bar p$.
\begin{proposition}\label{100422c:prop1}
Let, for $k\geq-1$, $W^{\partial,\as}_k(\Pi \mc V)$ be the subspace of $W^{\partial}_k(\Pi \mc V)$
consisting of maps
$X:\,(\Pi \mc V)^{\otimes(k+1)}\to\mb F_-[\lambda_0,\dots,\lambda_k]\otimes_{\mb F[\partial]}\Pi \mc V$,
denoted by $a_0\otimes\dots\otimes a_k\mapsto X_{\lambda_0,\dots,\lambda_k}(a_0,\dots,a_k)$,
satisfying the following Leibniz rule (for $a_0,\dots,a_{k-1},b_i,c_i\in \mc V,\,i=0,\dots,k$):
\begin{equation}\label{100422c:eq1}
\begin{array}{l}
\vphantom{\Big(}
X_{\lambda_0,\dots,\lambda_k}(a_0,\dots ,b_ic_i,\dots,a_k) \\
=(-1)^{p(c_i)(s_{i+1,k}+k-i))}
X_{\lambda_0,\dots,\lambda_i+\partial,\dots,\lambda_k}
(a_0,\dots,b_i,\dots,a_k)_\to c_i \\
\vphantom{\Big(}
+(-1)^{p(b_i)(p(c_i)+s_{i+1,k}+k-i))}
X_{\lambda_0,\dots,\lambda_i+\partial,\dots,\lambda_k}
(a_0,\dots,c_i,\dots,a_k)_\to b_i\,,
\end{array}
\end{equation}
where $s_{ij}$ are defined in \eqref{100414:eq5}.
Then $W^{\partial,\as}(\Pi \mc V)=\bigoplus_{k=-1}^\infty W^{\partial,\as}_k(\Pi \mc V)$
is a subalgebra of the $\mb Z$-graded Lie superalgebra $W^\partial(\Pi \mc V)$
such that $W_{-1}^{\partial,\as}(\Pi \mc V)=\Pi \mc V/\partial\Pi \mc V$,
and $W_0^{\partial,\as}(\Pi \mc V)=\Der^\partial(\mc V)$ is the Lie superalgebra
of derivations of $\mc V$ commuting with $\partial$.
\end{proposition}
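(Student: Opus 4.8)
The statement has three parts: that $W^{\partial,\as}(\Pi\mc V)$ is closed under the bracket \eqref{box2}, and the identifications of its components in degrees $-1$ and $0$. I would dispose of the two low-degree identifications first, since they only require unwinding \eqref{100422c:eq1}. For $k=-1$ the Leibniz rule is vacuous (there are no arguments of the form $bc$), so $W^{\partial,\as}_{-1}(\Pi\mc V)=W^\partial_{-1}(\Pi\mc V)=\Pi\mc V/\partial\Pi\mc V$. For $k=0$ the rule \eqref{100422c:eq1} has only the slot $i=0$, with $s_{1,0}=0$ and $k-i=0$, hence reads $X_{\lambda_0}(bc)=X_{\lambda_0+\partial}(b)_{\to}c+(-1)^{p(b)p(c)}X_{\lambda_0+\partial}(c)_{\to}b$. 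Under the identification $W^\partial_0(\Pi\mc V)=\End_{\mb F[\partial]}(\mc V)$ via $X(v)=X_{-\partial}(v)$ recalled in Section \ref{sec:4.1}, and using the commutativity of $\mc V$ to rewrite the second term, this is exactly the condition that $X$ be a superderivation of the product of $\mc V$; as every element of $W^\partial_0$ already commutes with $\partial$, we get $W^{\partial,\as}_0(\Pi\mc V)=\Der^\partial(\mc V)$.

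The substance is the subalgebra property. By \eqref{box2} it suffices to show that $[X,Y]$ satisfies \eqref{100422c:eq1} whenever $X\in W^{\partial,\as}_h$ and $Y\in W^{\partial,\as}_{k-h}$ do; by the symmetry relation it is enough to verify the Leibniz rule in the last argument, i.e.\ to compute $(X\Box Y)_{\lambda_0,\dots,\lambda_k}(v_0,\dots,v_{k-1},bc)$ and its $Y\Box X$ counterpart. In the defining sum \eqref{100418:eq2} I would split the shuffles according to whether the last index $k$ (carrying $bc$) lands in the inner block, feeding $Y$, or among the outer arguments of $X$. The warning here --- and the reason the proof is not completely formal --- is that $X\Box Y$ does \emph{not} by itself satisfy the Leibniz rule. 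When $bc$ stays outer, the derivation property of $X$ in its last slot splits it cleanly and contributes the expected terms $(X\Box Y)_{\dots,\lambda_k+\partial}(v_0,\dots,b)_{\to}c$ and its $b\leftrightarrow c$ mirror. But when $bc$ feeds $Y$, applying the Leibniz rule for $Y$ produces a factor (say $c$) pulled out of $Y$, so that $X$ now receives the \emph{product} $Y(\dots,b)\cdot c$ in its first slot; the derivation property of $X$ in that slot then splits this product again, yielding, besides the expected main term, an \emph{anomalous} term of the factored form $X(\dots,c)\cdot Y(\dots,b)$ (and the mirror with $b,c$ exchanged).

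The resolution is that these anomalies cancel in the bracket. Carrying out the same expansion for $Y\Box X$ produces the mirror anomalies $Y(\dots,c)\cdot X(\dots,b)$, and in the combination $[X,Y]=X\Box Y-(-1)^{\bar p(X)\bar p(Y)}Y\Box X$ the two families match up pairwise, using the commutativity of $\mc V$ together with the symmetry relations to identify the indices and reconcile the signs; meanwhile the main terms from both summands reassemble into the full Leibniz expansion $[X,Y]_{\dots,\lambda_k+\partial}(v_0,\dots,b)_{\to}c+(-1)^{p(b)p(c)}[X,Y]_{\dots,\lambda_k+\partial}(v_0,\dots,c)_{\to}b$. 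This establishes \eqref{100422c:eq1} for $[X,Y]$ and hence the subalgebra property. Conceptually this is the conformal analogue of the classical fact that multiderivations form a subalgebra under the Nijenhuis--Richardson bracket; note that, in contrast to the purely associative case of Proposition \ref{100422:prop1}, one cannot simply invoke the full prolongation, since passing to $[X,\tint f]$ only probes $X$ at $\lambda_0=0$.

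The step I expect to be the main obstacle is precisely this cancellation of anomalous terms, i.e.\ the sign and spectral-variable bookkeeping: one must track the shifts $\lambda_i\mapsto\lambda_i+\partial$ together with the convention of moving $\partial$ to the right, and the $\epsilon_v$ signs that change when the index $k$ is split off $bc$ into the two shuffle blocks, as in the identities of the type \eqref{101109:eq1}--\eqref{101109:eq2} used in Section \ref{sec:3}. The well-definedness of $\Box$ with respect to the defining $\mb F[\partial]$-module relations, already checked in Section \ref{sec:4.1}, guarantees that these $\partial$-shifts remain consistent throughout, so that the cancellation is a matter of careful but routine bookkeeping rather than a genuine new difficulty.
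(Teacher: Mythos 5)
Your proposal is correct and follows essentially the same route as the paper's proof: reduce via the symmetry relation to checking the Leibniz rule in a single slot (the paper uses $i=0$, you use $i=k$), expand the box product splitting the shuffles by where the product $bc$ lands, identify the expected Leibniz terms plus the anomalous cross terms $X(\dots,b)Y(\dots,c)$ and $Y(\dots,b)X(\dots,c)$, and observe that these anomalies are exchanged (up to the factor $(-1)^{\bar p(X)\bar p(Y)}$) under $X\leftrightarrow Y$, hence cancel in the bracket. The low-degree identifications and the cancellation mechanism you describe are exactly those of the paper, which likewise leaves the sign bookkeeping as a "straightforward computation."
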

\begin{proof}
Clearly, by definition, $W_{-1}^{\partial,\as}(\Pi \mc V)=\Pi \mc V/\partial\Pi \mc V$.
Recall that, for $k=0$, we identify $\mb F[\lambda_0]\otimes_{\mb F[\partial]}\Pi \mc V=\Pi \mc V$,
and, via this identification,
$W^\partial_0(\Pi \mc V)=\End_{\mb F[\partial]}(\Pi \mc V)=\End_{\mb F[\partial]}(\mc V)$.
One easily checks that an element $X\in W^\partial_0(\Pi \mc V)$
satisfies \eqref{100422c:eq1} if and only if
the corresponding element in $\End_{\mb F[\partial]}(\mc V)$ is a derivation of $\mc V$
of parity $\bar p(X)$.
Hence, $W^{\partial,\as}_0(\Pi \mc V)=\Der^\partial(\mc V)$.
We are left to prove that for $X\in W^{\partial,\as}_h(\Pi \mc V)$
and $Y\in W^{\partial,\as}_{k-h}(\Pi \mc V)$,
their bracket $[X,Y]=X\Box Y-(-1)^{\bar p(X)\bar p(Y)} Y\Box X$ lies in $W^{\partial,\as}_k(\Pi \mc V)$,
namely it satisfies the Leibniz rule \eqref{100422c:eq1}.
Since $(X\Box Y)_{\lambda_0,\dots,\lambda_k}(a_0,\dots,a_k)$,
hence $[X,Y]_{\lambda_0,\dots,\lambda_k}(a_0,\dots,a_k)$,
is symmetric with respect to simultaneous permutations of the elements $a_0,\dots,a_k$
and the variables $\lambda_0,\dots,\lambda_k$,
it suffices to prove that $[X,Y]$ satisfies the Leibniz rule \eqref{100422c:eq1} for $i=0$.
In this case we have, by a straightforward computation using the definition \eqref{100418:eq2} of the box product,
$$
\begin{array}{l}
\displaystyle{
\vphantom{\Big(}
\big(X\Box Y\big)_{\lambda_0,\dots,\lambda_k}(bc,a_1, \dots, a_k)
} \\
\displaystyle{
\vphantom{\Big(}
= (-1)^{p(c)(\bar p(a_1)+\dots+\bar p(a_k))}
\big(X\Box Y\big)_{\lambda_0+\partial,\lambda_1,\dots,\lambda_k}(b,a_1, \dots, a_k)_\to c
} \\
\displaystyle{
\vphantom{\Big(}
+ (-1)^{p(b)p(c)+p(b)(\bar p(a_1)+\dots+\bar p(a_k))}
\big(X\Box Y\big)_{\lambda_0+\partial,\lambda_1,\dots,\lambda_k}(c,a_1, \dots, a_k)_\to b
}
\end{array}
$$
$$
\begin{array}{l}
\displaystyle{
+ \sum_{\substack{
i_1<\dots <i_h\\
i_{h+1}<\dots< i_k}}
\epsilon_a(i_1,\dots,i_k) (-1)^{p(b)p(c)+(p(c)+\bar p(Y))(p(b)+\bar p(a_{i_1})+\dots+\bar p(a_{i_h}))}
} \\
\displaystyle{
\vphantom{\Big(}
\,\,\,\,\,\,\,\,\,\,\,\,\,\,\,\,\,\,
\times X_{-\lambda_{i_1}-\dots-\lambda_{i_h}-\partial,\lambda_{i_1},\dots,\lambda_{i_h}}(b,a_{i_1},\dots,a_{i_h})
} \\
\displaystyle{
\vphantom{\Big(}
\,\,\,\,\,\,\,\,\,\,\,\,\,\,\,\,\,\,
\times Y_{-\lambda_{i_{h+1}}-\dots-\lambda_{i_k}-\partial,\lambda_{i_{h+1}},\dots,\lambda_{i_k}}
 (c,a_{i_{h+1}},\dots,a_{i_k})
} \\
\displaystyle{
+
\!\!\!\!\!\!\!\!\!
\sum_{\substack{
i_1<\dots< i_{k-h} \\
i_{k-h+1}<\dots <i_k
}}
\!\!\!\!\!\!\!\!\!
\epsilon_a(i_1,\dots,i_k)
(-1)^{\bar p(Y)\bar p(X)
+ p(b)p(c)
+ (p(c)+\bar p(X))(p(b)+\bar p(a_{i_1})+\dots+\bar p(a_{i_{k-h}}))}
} \\
\displaystyle{
\vphantom{\Big(}
\,\,\,\,\,\,\,\,\,\,\,\,\,\,\,\,\,\,
\times Y_{-\lambda_{i_1}-\dots-\lambda_{i_{k-h}}-\partial,\lambda_{i_1},\dots,\lambda_{i_{k-h}}}
 (b,a_{i_1},\dots,a_{i_{k-h}})
} \\
\displaystyle{
\vphantom{\Big(}
\,\,\,\,\,\,\,\,\,\,\,\,\,\,\,\,\,\,
\times X_{-\lambda_{i_{k-h+1}}-\dots-\lambda_{i_k}-\partial,\lambda_{i_{k-h+1}},\dots,\lambda_{i_k}}
 (c,a_{i_{k-h+1}},\dots,a_{i_k})
\,.
}
\end{array}
$$
To complete the proof, we just observe that, if we exchange $X$ and $Y$,
the two sums in the RHS get multiplied by $(-1)^{\bar p(X)\bar p(Y)}$,
hence they do not contribute to $[X,Y]$.
\end{proof}

%

\begin{proposition}\label{100422c:prop2}
The Poisson vertex superalgebra structures on $\mc V$
are in bijective correspondence, via \eqref{100421:eq5},
with the set
\begin{equation}\label{100422c:eq3}
\big\{X\in W^{\partial,\as}_1(\Pi \mc V)_{\bar1}\,\big|\,[X,X]=0\big\}\,.
\end{equation}
\end{proposition}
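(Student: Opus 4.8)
The plan is to follow the proof of Proposition~\ref{100422:prop2} verbatim, with Section~\ref{sec:4.4} playing the role that Section~\ref{sec:2.5} played there. By the results of Section~\ref{sec:4.4}, the odd elements $X\in W^\partial_1(\Pi\mc V)_{\bar1}$ with $[X,X]=0$ (the set \eqref{100421:eq6}) are in bijective correspondence, via \eqref{100421:eq5}, with the Lie conformal superalgebra structures on $\mc V$, i.e.\ with the skewcommutative $\lambda$-brackets satisfying the Jacobi identity \eqref{100421:eq4}. Since, by definition, a Poisson vertex superalgebra structure on $\mc V$ is exactly a Lie conformal superalgebra structure whose $\lambda$-bracket in addition obeys the Leibniz rule \eqref{100426:eq1}, the whole statement reduces to showing that, under the correspondence \eqref{100421:eq5}, the condition $X\in W^{\partial,\as}_1(\Pi\mc V)$ is equivalent to the validity of \eqref{100426:eq1} for the associated $\lambda$-bracket.

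For this last equivalence I would invoke Proposition~\ref{100422c:prop1}, which characterizes $W^{\partial,\as}_1(\Pi\mc V)$ as the subspace of those $X\in W^\partial_1(\Pi\mc V)$ satisfying the Leibniz rule \eqref{100422c:eq1}. Specializing \eqref{100422c:eq1} to $k=1$ and to the last slot $i=1$, applying \eqref{100421:eq5}, and substituting $\lambda_1=-\lambda_0-\partial$ to pass to the $\lambda$-bracket picture, one should recover precisely \eqref{100426:eq1}; the Leibniz rule in the remaining slot $i=0$ carries no extra information, being forced by the symmetry of $X$ (on the bracket side it is the right Leibniz rule, which follows from \eqref{100426:eq1} and skewcommutativity). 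Conversely, a skewcommutative $\lambda$-bracket satisfying \eqref{100426:eq1} produces, through \eqref{100421:eq5}, an element $X$ obeying \eqref{100422c:eq1}, hence lying in $W^{\partial,\as}_1(\Pi\mc V)$.

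The step I expect to be the main obstacle is exactly this translation of \eqref{100422c:eq1} into \eqref{100426:eq1}, and in particular the $\partial$- and sign-bookkeeping it requires. One must track the convention by which $\partial$ is moved to the right (the subscript shift $\lambda_i\mapsto\lambda_i+\partial$, with $\partial$ then acting on the spectator factor $c_i$ in \eqref{100422c:eq1}) through the substitution $\lambda_1=-\lambda_0-\partial$: the point is that the extra $\partial$ placed on the spectator factor cancels the corresponding part of the module relation, so that in each term the surviving $\lambda$-bracket acts only on the factor that is not split off. Likewise, one checks that the prefactor $(-1)^{p(b)p(c)}$ appearing in \eqref{100422c:eq1} for $k=1,\,i=1$, combined with the super-commutativity of the product on $\mc V$ used to reorder the two factors, reproduces exactly the sign $(-1)^{p(a)p(b)}$ of \eqref{100426:eq1}, the two occurrences of $(-1)^{p(a)}$ coming from \eqref{100421:eq5} cancelling. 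This is delicate but entirely routine, and runs parallel to the Leibniz verification already carried out in the proof of Proposition~\ref{100422c:prop1}.

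Combining the two steps, an odd $X\in W^{\partial,\as}_1(\Pi\mc V)_{\bar1}$ with $[X,X]=0$ corresponds, via \eqref{100421:eq5}, to a skewcommutative $\lambda$-bracket satisfying both the Jacobi identity and the Leibniz rule \eqref{100426:eq1}, that is, to a Poisson vertex superalgebra structure on $\mc V$; and this correspondence is bijective, as required.
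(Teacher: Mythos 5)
Your proposal is correct and takes essentially the same route as the paper's own (very terse) proof: the results of Section \ref{sec:4.4} identify the odd square-zero elements of $W^{\partial}_1(\Pi \mc V)$ with Lie conformal superalgebra structures via \eqref{100421:eq5}, and then membership in $W^{\partial,\as}_1(\Pi \mc V)$, as characterized by Proposition \ref{100422c:prop1}, is exactly the statement that the corresponding $\lambda$-bracket satisfies the Leibniz rule \eqref{100426:eq1}. The $\partial$- and sign-bookkeeping you spell out (the shift $\lambda_1\mapsto\lambda_1+\partial$ on the spectator factor combining with the substitution $\lambda_1=-\lambda_0-\partial$, and the reordering sign $(-1)^{p(b)(p(a)+p(c))}$ turning $(-1)^{p(b)p(c)}$ into $(-1)^{p(a)p(b)}$) is left implicit in the paper but is accurate.
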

\begin{proof}
By the results in Section \ref{sec:4.4}
the elements $X\in W_1(\Pi \mc V)_{\bar1}$ such that $[X,X]=0$
correspond, via \eqref{100421:eq5}, to the Lie conformal superalgebra structures on $\mc V$.
Moreover, to say that $X$ lies in $W_1^{\partial,\as}(\Pi \mc V)$ means that the corresponding
$\lambda$-bracket satisfies the Leibniz rule, hence $\mc V$ is a Poisson vertex superalgebra.
\end{proof}

It follows from the above Proposition that, for any Poisson vertex superalgebra $\mc V$,
we have a differential $d_X=\ad X$ on the superspace $W^{\partial,\as}(\Pi \mc V)$,
where $X$ in \eqref{100422c:eq3}
is associated to the Lie conformal superalgebra structure on $\mc V$.
This differential is obviously  an odd derivation of the Lie bracket on $W^{\partial,\as}(\Pi \mc V)$.
We thus get a cohomology complex $(W^{\partial,\as}(\Pi \mc V),d_X)$.

%


\subsection{Odd Poisson vertex superalgebra structures}\label{sec:5.2}

As in the previous Section, let $\mc V$ be a commutative associative differential superalgebra,
with derivation $\partial$ and parity $p$,
and let $\Der^\partial(\mc V)$ be the Lie superalgebra of derivations of $\mc V$ commuting
with $\partial$.
As in Section \ref{sec:3.2}, we consider here the picture ``dual'' to that discussed
in the previous section.
\begin{proposition}\label{100422d:prop1}
Let, for $k\geq-1$, $W^{\partial,\as}_k(\mc V)$ be the superspace of
elements $X\in W^{\partial}_k(\mc V)$
satisfying the Leibniz rule (for $a_0,\dots,a_{k-1},b_i,c_i\in \mc V,\,i=0,\dots,k$):
\begin{equation}\label{100422d:eq1}
\begin{array}{l}
\vphantom{\Big(}
X_{\lambda_0,\dots,\lambda_k}(a_0,\dots ,b_ic_i,\dots,a_k) \\
=(-1)^{p(c_i)(s_{i+1,k}))}
X_{\lambda_0,\dots,\lambda_i+\partial,\dots,\lambda_k}
(a_0,\dots,b_i,\dots,a_k)_\to c_i \\
\vphantom{\Big(}
+(-1)^{p(b_i)(p(c_i)+s_{i+1,k}))}
X_{\lambda_0,\dots,\lambda_i+\partial,\dots,\lambda_k}
(a_0,\dots,c_i,\dots,a_k)_\to b_i\,,
\end{array}
\end{equation}
where $s_{ij}$ is defined in \eqref{100414:eq5}.
Then $W^{\partial,\as}(\mc V)=\bigoplus_{k=-1}^\infty W^{\partial,\as}_k(\mc V)$
is a subalgebra of the $\mb Z$-graded Lie superalgebra $W^\partial(\mc V)$
such that $W_{-1}^{\partial,\as}(\mc V)=\mc V/\partial \mc V$,
and $W_0^{\partial,\as}(\mc V)=\Der^\partial(\mc V)$.
\end{proposition}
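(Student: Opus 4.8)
The plan is to follow, almost verbatim, the proof of Proposition~\ref{100422c:prop1} for the parity-reversed module $\Pi\mc V$, the only systematic change being that every occurrence of the reversed parity $\bar p$ on $\Pi\mc V$ is replaced by the genuine parity $p$ on $\mc V$. Concretely, for $a_{i+1},\dots,a_k\in\Pi\mc V$ one has $\bar s_{i+1,k}=s_{i+1,k}+(k-i)$, so dropping the parity shift removes exactly the $(k-i)$ summands appearing in the exponents of \eqref{100422c:eq1} and produces the exponents in \eqref{100422d:eq1}. I would first dispose of the two low-degree cases. For $k=-1$ there is no Leibniz condition, so $W^{\partial,\as}_{-1}(\mc V)=W^\partial_{-1}(\mc V)=\mc V/\partial\mc V$ by the description of $W^\partial_{-1}$ in Section~\ref{sec:4.1}.

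For $k=0$ I would use the identification $W^\partial_0(\mc V)=\End_{\mb F[\partial]}(\mc V)$ from Section~\ref{sec:4.1}, under which $X_{\lambda_0}$ corresponds to the endomorphism $X(v)=X_{-\partial}(v)$ with $\partial$ moved to the left. Specializing \eqref{100422d:eq1} to $k=0$ (where $s_{1,0}=0$) gives $X_{\lambda_0}(bc)=X_{\lambda_0+\partial}(b)_\to c+(-1)^{p(b)p(c)}X_{\lambda_0+\partial}(c)_\to b$, which under this identification is precisely the condition that the corresponding $\mb F[\partial]$-module endomorphism be a superderivation of $\mc V$; hence $W^{\partial,\as}_0(\mc V)=\Der^\partial(\mc V)$.

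The substantive step is closure under the bracket: for $X\in W^{\partial,\as}_h(\mc V)$ and $Y\in W^{\partial,\as}_{k-h}(\mc V)$ one must show that $[X,Y]=X\Box Y-(-1)^{\bar p(X)\bar p(Y)}Y\Box X$ again satisfies \eqref{100422d:eq1}. Since $[X,Y]\in W^\partial_k(\mc V)$ is supersymmetric under simultaneous permutation of the arguments $a_0,\dots,a_k$ and the variables $\lambda_0,\dots,\lambda_k$, it suffices to verify the Leibniz rule in the slot $i=0$. I would then expand $(X\Box Y)_{\lambda_0,\dots,\lambda_k}(bc,a_1,\dots,a_k)$ directly from the box-product definition \eqref{100418:eq2}, distinguishing whether the slot holding $bc$ feeds the inner map $Y$ or is a direct argument of $X$, and applying the Leibniz rules for $Y$ and for $X$. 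The terms in which both $b$ and $c$ pass through the same composite reassemble into the two expected Leibniz terms $(X\Box Y)_{\dots,\lambda_0+\partial,\dots}(b,\dots)_\to c$ and $(X\Box Y)_{\dots,\lambda_0+\partial,\dots}(c,\dots)_\to b$; there remain two ``cross'' sums, one of the shape $X(b,\dots)Y(c,\dots)$ and one of the shape $Y(b,\dots)X(c,\dots)$.

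The crux, and the step I expect to be the main obstacle, is the careful sign accounting in this expansion, in particular tracking the Koszul signs incurred when $c$ (or $b$) is pulled out past $Y$ and then re-separated from the value of $Y$ by a second use of the Leibniz rule for $X$. Once the signs are in hand, the two cross sums are seen to be interchanged, up to the factor $(-1)^{\bar p(X)\bar p(Y)}$, under swapping $X\leftrightarrow Y$; hence they are killed in $[X,Y]=X\Box Y-(-1)^{\bar p(X)\bar p(Y)}Y\Box X$, leaving only the Leibniz terms. This shows $[X,Y]\in W^{\partial,\as}_k(\mc V)$ and completes the proof that $W^{\partial,\as}(\mc V)$ is a $\mb Z$-graded subalgebra of $W^\partial(\mc V)$ with the stated degrees $-1$ and $0$ pieces.
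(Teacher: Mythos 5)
Your proposal is correct and takes essentially the same route as the paper: the paper's entire proof of this proposition is the single sentence ``It is the same as for Proposition \ref{100422c:prop1},'' and your argument spells out precisely that adaptation --- replacing $\bar p$ by $p$ (your observation that $\bar s_{i+1,k}=s_{i+1,k}+(k-i)$ accounts exactly for the discrepancy between \eqref{100422c:eq1} and \eqref{100422d:eq1}), identifying the $k=-1,0$ pieces via Section \ref{sec:4.1}, reducing to the slot $i=0$ by symmetry, and killing the cross terms in $[X,Y]$ because they are interchanged, up to the factor $(-1)^{\bar p(X)\bar p(Y)}$, under swapping $X\leftrightarrow Y$. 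Nothing further is needed.
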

\begin{proof}
It is the same as for Proposition \ref{100422c:prop1}.
\end{proof}

\begin{proposition}\label{100422d:prop2}
The odd Poisson vertex superalgebra structures on $\mc V$
are in bijective correspondence, via \eqref{100421:eq5},
with the set
\begin{equation}\label{100422d:eq3}
\big\{X\in W^{\partial,\as}_1(\mc V)_{\bar1}\,\big|\,[X,X]=0\big\}\,.
\end{equation}
\end{proposition}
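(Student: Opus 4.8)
The plan is to argue exactly as in the proof of Proposition \ref{100422c:prop2}, but in the parity-reversed (``dual'') setting, precisely mirroring how Proposition \ref{100422b:prop2} is obtained from Proposition \ref{100422:prop2}. Since Proposition \ref{100422d:prop1} already identifies $W^{\partial,\as}(\mc V)$ as the subalgebra of $W^\partial(\mc V)$ cut out by the Leibniz rule \eqref{100422d:eq1}, it remains only to combine the Lie conformal superalgebra correspondence of Section \ref{sec:4.4} with the translation of \eqref{100422d:eq1} into the odd Leibniz rule \eqref{100426:eq2}.

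First I would apply the results of Section \ref{sec:4.4} to the $\mb F[\partial]$-module $R=\Pi\mc V$, for which $\Pi R=\mc V$ and $W^\partial_1(\Pi R)=W^\partial_1(\mc V)$. This identifies, via \eqref{100421:eq5}, the odd elements $X\in W^\partial_1(\mc V)_{\bar1}$ with the skewcommutative $\lambda$-brackets on $\Pi\mc V$, and singles out by \eqref{100421:eq6} those with $[X,X]=0$ as exactly the ones making $\Pi\mc V$ a Lie conformal superalgebra. This is the ``$\Pi\mc V$ is a Lie conformal superalgebra'' half of the definition of an odd Poisson vertex superalgebra.

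It then remains to match the constraint $X\in W^{\partial,\as}_1(\mc V)$ with the odd Leibniz rule. By Proposition \ref{100422d:prop1} this constraint is the Leibniz rule \eqref{100422d:eq1} at $k=1$, and I would substitute \eqref{100421:eq5} into it. At $k=1$ the rule \eqref{100422d:eq1} in fact coincides with its counterpart \eqref{100422c:eq1}, and the $\partial$-shifts $\lambda_i+\partial$ moved to the right are absorbed by sesquilinearity exactly as in the proof of Proposition \ref{100422d:prop1}; so the only thing to verify is the sign. The decisive point is that here $X$ is an odd map on $\mc V$ itself, whence the $\lambda$-bracket values carry the odd-shifted parity $p([a_\lambda c])=p(a)+p(c)+\bar1$. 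Commuting $[a_\lambda c]$ past $b$ using commutativity of the product on $\mc V$ therefore produces an extra factor $(-1)^{p(b)}$ relative to the even computation underlying Proposition \ref{100422c:prop2}, and this is exactly the factor that turns the even Leibniz rule \eqref{100426:eq1} into the odd one \eqref{100426:eq2}.

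The only real content is this sign bookkeeping, and it is the sole (though routine) obstacle; it is modeled verbatim on the corresponding computation in the Poisson algebra case of Section \ref{sec:3.2}, with the $\lambda$-bracket and sesquilinearity playing the role of the ordinary bracket there. Once the equivalence of \eqref{100422d:eq1} at $k=1$ with \eqref{100426:eq2} is established, combining it with the Lie conformal step above yields the desired bijection between the set \eqref{100422d:eq3} and the odd Poisson vertex superalgebra structures on $\mc V$.
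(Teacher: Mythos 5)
Your proposal is correct and follows essentially the same route as the paper: the paper proves Proposition \ref{100422d:prop2} by noting it is identical to the proof of Proposition \ref{100422c:prop2}, namely applying the correspondence of Section \ref{sec:4.4} (here with $R=\Pi\mc V$, so that $W^\partial_1(\Pi R)=W^\partial_1(\mc V)$) and then identifying the Leibniz constraint \eqref{100422d:eq1} on $X$ with the appropriate Leibniz rule for the $\lambda$-bracket. Your sign analysis — that the odd parity of $X$ on $\mc V$ forces $p([a_\lambda c])=p(a)+p(c)+\bar1$, so that commuting $[a_\lambda c]$ past $b$ yields the extra $(-1)^{p(b)}$ converting \eqref{100426:eq1} into the odd Leibniz rule \eqref{100426:eq2} — is exactly the bookkeeping the paper leaves implicit, and it is correct.
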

\begin{proof}
It is the same as for Proposition \ref{100422c:prop2}.
\end{proof}

It follows from the above Proposition that, for any odd Poisson vertex superalgebra $\mc V$,
we have a differential $d_X=\ad X$ on the superspace $W^{\partial,\as}(\mc V)$,
where $X$ in \eqref{100422d:eq3}
is associated to the Lie conformal superalgebra structure on $\Pi\mc V$.
This differential is obviously  an odd derivation of the Lie bracket on $W^{\partial,\as}(\mc V)$.
We thus get a cohomology complex $(W^{\partial,\as}(\mc V),d_X)$.



\section{The universal Lie conformal superalgebra $\tilde W^{\partial}(V)$
for an $\mb F[\partial]$-module $V$,
and the basic Lie conformal superalgebra cohomology complexes}
\label{sec:6}

In this section we study the universal Lie conformal superalgebra
$\tilde W^{\partial}(V)$ associated to a finitely generated $\mb F[\partial]$-module $V$.
%

\subsection{The universal Lie conformal superalgebra $\tilde W^{\partial}(V)$}
\label{sec:6.1}

As in Section \ref{sec:4}, let $V$ be a vector superspace with parity $\bar p$,
endowed with a structure of an $\mb F[\partial]$-module, compatible with the parity.
Assume, moreover, that $V$ is finitely generated over $\mb F[\partial]$.

The Lie superalgebra $W^\partial(V)$ does not have the universality property
similar to that of $W(V)$, described in Remark \ref{100412:rem1}.
In this section we construct the \emph{universal} $\mb Z$-\emph{graded Lie conformal superalgebra}
$\tilde W^{\partial}(V)=\bigoplus_{k=-1}^\infty \tilde W^{\partial}_k(V)$
associated to the finitely generated $\mb F[\partial]$-module $V$ as follows.

For $k\geq-1$, we let, cf. \eqref{100516:eq1},
$$
\tilde W^{\partial}_k(V)
=
\Hom{}^{\sym}_{\mb F[\partial]^{\otimes (k+1)}}(V^{\otimes(k+1)},
\mb F_-[\lambda_0,\dots,\lambda_k]\otimes V)\,,
$$
with its natural superspace structure, denoted by $\bar p$.
For example, we have $\tilde W^{\partial}_{-1}(V)=V$,
$\tilde W^{\partial}_0(V)=\RCend(V)$, the space of right conformal endomorphisms of $V$,
namely the linear maps $X_\lambda:\, V\to \mb F[\lambda]\otimes V$,
such that
\begin{equation}\label{100528:eq1}
X_\lambda(\partial v)=-\lambda X_\lambda(v)\,,
\end{equation}
and, for $k\geq1$, $\tilde W^\partial_k(V)$ consists of linear maps
$X:\,V^{\otimes{k+1}}\to\mb F_-[\lambda_0,\dots,\lambda_k]\otimes V$
satisfying sesquilinearity in each argument and the skewsymmetry condition.

The superspace $\tilde W^{\partial}_k(V)$ has a natural $\mb F[\partial]$-module structure given by
\begin{equation}\label{100517:eq3}
(\partial X)_{\lambda_0,\dots,\lambda_k}(v_0,\dots,v_k)
=
(\lambda_0+\dots+\lambda_k+\partial) X_{\lambda_0,\dots,\lambda_k}(v_0,\dots,v_k)\,.
\end{equation}
Next we endow $\tilde W^{\partial}(V)$ with a structure
of a $\mb Z$-graded Lie conformal superalgebra as follows.
If $X\in \tilde W^{\partial}_h(V)$, $Y\in\tilde W^{\partial}_{k-h}(V)$, with $h\geq-1,\,k\geq h-1$,
we define $X\Box_\lambda Y$ to be the following element of $\mb F[\lambda]\otimes\tilde W^{\partial}_{k}(V)$:
\begin{equation}\label{100420:eq1}
\begin{array}{l}
\displaystyle{
\big(X\Box_\lambda Y\big)_{\lambda_0,\dots,\lambda_k}(v_0, \dots, v_k)
= \sum_{\substack{
i_0<\dots <i_{k-h}\\
i_{k-h+1}<\dots< i_k}}
\epsilon_v(i_0,\dots,i_k)
} \\
\displaystyle{
\times X_{-\lambda-\lambda_{i_{k-h+1}}-\dots-\lambda_{i_k}-\partial,
\lambda_{i_{k-h+1}},\dots,\lambda_{i_k}}
(Y_{\lambda_{i_0},\dots,\lambda_{i_{k-h}}}(v_{i_0},\dots, v_{i_{k-h}}),
}\\
\displaystyle{
\,\,\,\,\,\,\,\,\,\,\,\,\,\,\,\,\,\,\,\,\,\,\,\,\,\,\,\,\,\,\,\,\,\,\,\,\,\,\,\,\,\,\,\,\,
\,\,\,\,\,\,\,\,\,\,\,\,\,\,\,\,\,\,\,\,\,\,\,\,\,\,\,\,\,\,\,\,\,\,\,\,\,\,\,\,\,\,\,\,\,
\,\,\,\,\,\,\,\,\,\,\,\,\,\,\,\,\,\,\,\,\,\,\,\,\,\,\,\,\,\,\,\,\,\,\,
v_{i_{k-h+1}},\dots, v_{i_k})\,,
}
\end{array}
\end{equation}
where $\partial$ is moved to the left.
Since, by assumption, $V$ is finitely generated over $\mb F[\partial]$,
and since elements of $\tilde W^{\partial}(V)$ are determined by theirs values 
on a set of generators of $V$,
$X\Box_\lambda Y$ is indeed a polynomial in $\lambda$ with coefficients 
in $\tilde W^{\partial}(V)$.

\begin{lemma}\label{100516:lem}
\begin{enumerate}[(a)]
\item
The $\lambda$-product
$\Box_\lambda$ given by \eqref{100420:eq1} gives a well-defined map
$\tilde W^{\partial}_h(V)\times\tilde W^{\partial}_{k-h}(V)\to \mb F[\lambda]\otimes\tilde W^{\partial}_k(V)$,
and it makes $\tilde W^{\partial}(V)$ a conformal superalgebra.
\item
The $\lambda$-product $\Box_\lambda$ is right symmetric, i.e.,
defining the associator of $X,Y,Z\in\tilde W^{\partial}(V)$ as
$$
(X_\lambda Y_\mu Z)=(X\Box_\lambda Y)\Box_{\lambda+\mu} Z-X\Box_\lambda(Y\Box_\mu Z)\,,
$$
we have the following symmetry relation:
\begin{equation}\label{eq:rightassoc}
(X_\lambda Y_\mu Z)=(-1)^{\bar p(Y)\bar p(Z)}(X_\lambda Z_{-\lambda-\mu-\partial}Y)\,,
\end{equation}
where, as usual, $\partial$ is moved to the left.
\end{enumerate}
\end{lemma}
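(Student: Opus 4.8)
The plan is to follow closely the template of the purely algebraic case in Section~\ref{sec:2.2} and of the Lie superalgebra $W^\partial(V)$ in Section~\ref{sec:4.1}, now carrying the extra variable $\lambda$ through every step. For part (a) I would first dispose of well-definedness. Polynomiality of $X\Box_\lambda Y$ in $\lambda$ has already been argued just before the statement, using that $V$ is finitely generated over $\mb F[\partial]$, so what remains is to verify that the right-hand side of \eqref{100420:eq1} is again an element of $\tilde W^\partial_k(V)$, i.e.\ that it is sesquilinear in each $v_j$ and supersymmetric. Sesquilinearity is local: replacing $v_j$ by $\partial v_j$, if $v_j$ sits in the block fed into $Y$ then sesquilinearity of $Y$ produces the factor $-\lambda_j$, while if $v_j$ is a direct argument of $X$ then sesquilinearity of $X$ does the same, and in either case the variable $\lambda_j$ is untouched, which is exactly the required relation. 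Supersymmetry is the familiar shuffle argument from Section~\ref{sec:2.2}: the sum runs over ordered shuffles $i_0<\dots<i_{k-h}$, $i_{k-h+1}<\dots<i_k$, and one checks that simultaneously permuting the pairs $(v_j,\lambda_j)$ reproduces the sign $\epsilon_v$ by combining it with the internal symmetries of $X$ and $Y$.

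To finish part (a) I would then check the conformal sesquilinearity of the $\lambda$-product itself, namely $(\partial X)\Box_\lambda Y=-\lambda\,(X\Box_\lambda Y)$ and $X\Box_\lambda(\partial Y)=(\lambda+\partial)(X\Box_\lambda Y)$. Both follow by direct substitution from the $\mb F[\partial]$-module structure \eqref{100517:eq3} and the sesquilinearity of $X$ and $Y$, upon noting that the arguments-variables of $X$ in \eqref{100420:eq1} sum to $-\lambda-\partial$. These two identities are precisely the axioms defining a conformal superalgebra, so establishing them settles the claim in (a).

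For part (b) the strategy is identical to the proof of right symmetry of $\Box$ in Section~\ref{sec:4.1}, where the crucial point was that the associator, after subtraction, retains only the ``parallel'' terms of the form $X(\dots,Y(\dots),\dots,Z(\dots),\dots)$ and loses the ``nested'' terms $X(Y(\dots,Z(\dots),\dots),\dots)$. Concretely, I would expand $(X\Box_\lambda Y)\Box_{\lambda+\mu}Z$ by applying \eqref{100420:eq1} twice, which splits it into the terms where $Z$ is inserted into the slot previously occupied by $Y$ (nested) and the terms where $Z$ occupies a separate argument of $X$ (parallel); expanding $X\Box_\lambda(Y\Box_\mu Z)$ likewise produces only nested terms. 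The nested contributions coincide and cancel in the associator $(X_\lambda Y_\mu Z)$, in precise analogy with \eqref{100419:eq2}, leaving only the parallel terms, on which $Y$ and $Z$ enter symmetrically as arguments of $X$. Exchanging them produces the sign $(-1)^{\bar p(Y)\bar p(Z)}$ together with the interchange of the corresponding top $X$-variables; tracking the total variable sum, which equals $-\lambda-\mu-\partial$ on the slot complementary to $Y$, converts this interchange into the substitution $\mu\mapsto-\lambda-\mu-\partial$ applied to $Z$, which is exactly \eqref{eq:rightassoc}.

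I expect the genuine difficulty to lie entirely in the bookkeeping for part (b): keeping track of how the $\lambda$-variables recombine under the two iterated box products, and how the ``$\partial$ moved to the left'' in the top argument of $X$ interacts with sesquilinearity so that the surviving slot of $X$ carries exactly the variable $-\lambda-\mu-\partial$. The factors $\epsilon_v$ and the parity exponents must be matched carefully against those in \eqref{100419:eq2}. Once the nested terms are seen to cancel, the symmetry of the remaining parallel terms under $(Y,\mu)\leftrightarrow(Z,-\lambda-\mu-\partial)$ is essentially forced by construction, so the heart of the argument is the cancellation of the nested terms rather than the final symmetrization.
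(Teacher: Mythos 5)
Your proposal is correct and follows essentially the same route as the paper: for (a), well-definedness (sesquilinearity and supersymmetry of $X\Box_\lambda Y$) is immediate from the definition, and the two conformal sesquilinearity axioms are checked using \eqref{100517:eq3} together with the observation that the variables fed into $X$ in \eqref{100420:eq1} sum to $-\lambda-\partial$; for (b), the paper likewise expands $(X\Box_\lambda Y)\Box_{\lambda+\mu}Z$ into nested plus parallel terms, identifies $X\Box_\lambda(Y\Box_\mu Z)$ with the nested part (equations \eqref{101011:eq1}--\eqref{101011:eq2}), and obtains \eqref{eq:rightassoc} from the manifest $(Y,\mu)\leftrightarrow(Z,-\lambda-\mu-\partial)$ symmetry of the surviving parallel terms \eqref{101011:eq3}. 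You also correctly locate the only real work in the cancellation of nested terms and the variable bookkeeping, which is exactly where the paper's explicit computation is carried out.
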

\begin{proof}
First, we need to prove that, for
$X\in\tilde W^{\partial}_h(V)$ and $Y\in\tilde W^{\partial}_{k-h}(V)$,
the $\lambda$-product $X\Box_\lambda Y$ defined by \eqref{100420:eq1}
is a polynomial in $\lambda$ with coefficients in $\tilde W^{\partial}_k(V)$,
i.e. it satisfies the sesquilinearity and symmetry conditions:
$$
\begin{array}{l}
(X\Box_\lambda Y)_{\lambda_0,\dots,\lambda_k}(v_0,\dots,\partial v_i,\dots,v_k)
=-\lambda_i (X\Box_\lambda Y)_{\lambda_0,\dots,\lambda_k}(v_0,\dots,v_k)\,, \\
(X\Box_\lambda Y)_{\lambda_0,\dots,\lambda_k}(v_0,\dots,v_k)
=
\epsilon_v(i_0,\dots,i_k)
X_{\lambda_{i_0},\dots,\lambda_{i_k}}(v_{i_0},\dots,v_{i_k})\,,
\end{array}
$$
for all permutations $(i_0,\dots,i_k)$ of $(0,\dots,k)$.
Both these relations
follow immediately from the definition \eqref{100420:eq1} of the $\lambda$-product $\Box_\lambda$
and by the sesquilinearity and symmetry conditions  on $X$ and $Y$.
To complete the proof of part (a) we need to check that the $\lambda$-product $\Box_\lambda$ is sesquilinear,
making $\tilde W^{\partial}(V)$ a conformal superalgebra.
The first sesquilinearity condition is straightforward,
using the definition \eqref{100517:eq3}
of the $\mb F[\partial]$-module structure of $\tilde W^{\partial}(V)$.
For the second sesquilinearity condition, we have
$$
\begin{array}{l}
\displaystyle{
\big(X\Box_\lambda(\partial Y)\big)_{\lambda_0,\dots,\lambda_k}(v_0, \dots, v_k)
= \sum_{\substack{
i_0<\dots <i_{k-h}\\
i_{k-h+1}<\dots< i_k}}
\epsilon_v(i_0,\dots,i_k)
} \\
\displaystyle{
\,\,\,\,\,\,\,\,\,
\times X_{-\lambda-\lambda_{i_{k-h+1}}-\dots-\lambda_{i_k}-\partial,
\lambda_{i_{k-h+1}},\dots,\lambda_{i_k}}
((\partial Y)_{\lambda_{i_0},\dots,\lambda_{i_{k-h}}}(v_{i_0},\dots, v_{i_{k-h}}),
}\\
\,\,\,\,\,\,\,\,\,\,\,\,\,\,\,\,\,\,\,\,\,\,\,\,\,\,\,\,\,\,\,\,\,\,\,\,\,\,\,\,\,\,\,\,\,
\,\,\,\,\,\,\,\,\,\,\,\,\,\,\,\,\,\,\,\,\,\,\,\,\,\,\,\,\,\,\,\,\,\,\,\,\,\,\,\,\,\,\,\,\,
\,\,\,\,\,\,\,\,\,\,\,\,\,\,\,\,\,\,\,\,\,\,\,\,\,\,\,\,\,\,\,\,\,\,\,
v_{i_{k-h+1}},\dots, v_{i_k}) \\
\displaystyle{
= \sum_{\substack{
i_0<\dots <i_{k-h}\\
i_{k-h+1}<\dots< i_k}}
\epsilon_v(i_0,\dots,i_k)
X_{-\lambda-\lambda_{i_{k-h+1}}-\dots-\lambda_{i_k}-\partial,
\lambda_{i_{k-h+1}},\dots,\lambda_{i_k}}
} \\
\,\,\,\,\,\,\,\,\,\,\,\,\,\,\,\,\,\,\,
\big(
(\lambda_{i_0}+\dots+\lambda_{i_{k-h}}+\partial) Y_{\lambda_{i_0},\dots,\lambda_{i_{k-h}}}
(v_{i_0},\dots, v_{i_{k-h}}),
v_{i_{k-h+1}},\dots, v_{i_k}
\big) \\
\displaystyle{
=
(\lambda+\lambda_0+\dots+\lambda_k+\partial)
\sum_{\substack{
i_0<\dots <i_{k-h}\\
i_{k-h+1}<\dots< i_k}}
\epsilon_v(i_0,\dots,i_k)
}\\
\,\,\,\,\,\,\,\,\,
\times X_{-\lambda-\lambda_{i_{k-h+1}}-\dots-\lambda_{i_k}-\partial,
\lambda_{i_{k-h+1}},\dots,\lambda_{i_k}}
\big(
Y_{\lambda_{i_0},\dots,\lambda_{i_{k-h}}}
(v_{i_0},\dots, v_{i_{k-h}}),
\\
\,\,\,\,\,\,\,\,\,\,\,\,\,\,\,\,\,\,\,\,\,\,\,\,\,\,\,\,\,\,\,\,\,\,\,\,\,\,\,\,\,\,\,\,\,
\,\,\,\,\,\,\,\,\,\,\,\,\,\,\,\,\,\,\,\,\,\,\,\,\,\,\,\,\,\,\,\,\,\,\,\,\,\,\,\,\,\,\,\,\,
\,\,\,\,\,\,\,\,\,\,\,\,\,\,\,\,\,\,\,\,\,\,\,\,\,\,\,\,\,\,\,\,\,\,\,
v_{i_{k-h+1}},\dots, v_{i_k} \big) \\
=
\big((\lambda+\partial)(X\Box_\lambda Y)\big)_{\lambda_0,\dots,\lambda_k}(v_0, \dots, v_k) \,.
\end{array}
$$

For part (b), let
$X\in\tilde W^{\partial}_\alpha(V)$, $Y\in\tilde W^{\partial}_\beta(V)$
and $Z\in\tilde W^{\partial}_\gamma(V)$.
We have, with a straightforward computation using the definition \eqref{100420:eq1}
of the $\lambda$-product $\Box_\lambda$,
\begin{equation}\label{101011:eq1}
\begin{array}{l}
\big((X\Box_\lambda Y)\Box_{\lambda+\mu}Z\big)_{\lambda_0,\dots,\lambda_{\alpha+\beta+\gamma}}
(v_0, \dots, v_{\alpha+\beta+\gamma})
\\
\displaystyle{
= \sum_{\substack{
i_0<\dots <i_\gamma \\
i_{\gamma+1}<\dots< i_{\beta+\gamma} \\
i_{\beta+\gamma+1}<\dots< i_{\alpha+\beta+\gamma}
}}
\epsilon_v(i_0,\dots,i_{\alpha+\beta+\gamma})
} \\
\displaystyle{
\,\,\,\,\,\,\,\,\,
\times X_{-\lambda-\lambda_{i_{\beta+\gamma+1}}-\dots-\lambda_{i_{\alpha+\beta+\gamma}}-\partial,
\lambda_{i_{\beta+\gamma+1}},\dots,\lambda_{i_{\alpha+\beta+\gamma}}}
}\\
\,\,\,\,\,\,\,\,\,\,\,\,\,\,
\Big(Y_{-\mu-\lambda_{i_{\gamma+1}}-\dots-\lambda_{i_{\beta+\gamma}}-\partial,
\lambda_{i_{\gamma+1}},\dots,\lambda_{i_{\beta+\gamma}}}
\\
\,\,\,\,\,\,\,\,\,\,\,\,\,\,\,\,\,\,\,\,\,\,\,\,\,\,
\big(Z_{\lambda_{i_0},\dots,\lambda_{i_\gamma}}
(v_{i_0},\dots, v_{i_\gamma}),
v_{i_{\gamma+1}},\dots, v_{i_{\beta+\gamma}} \big),
v_{i_{\beta+\gamma+1}},\dots, v_{i_{\alpha+\beta+\gamma}} \Big)
\\
\displaystyle{
+ \sum_{\substack{
i_0<\dots <i_\gamma \\
i_{\gamma+1}<\dots< i_{\beta+\gamma+1} \\
i_{\beta+\gamma+2}<\dots< i_{\alpha+\beta+\gamma}
}}
(-1)^{\bar p(Y)(\bar p(Z)+\bar p(v_{i_0})+\dots+\bar p(v_{i_\gamma})}
\epsilon_v(i_0,\dots,i_{\alpha+\beta+\gamma})
} \\
\displaystyle{
\,\,\,\,\,\,\,\,\,
\times X_{
-\lambda-\mu-\lambda_{i_{\gamma+1}}-\dots-\lambda_{i_{\alpha+\beta+\gamma}}-\partial,
\mu+\lambda_{i_{\gamma+1}}+\dots+\lambda_{i_{\beta+\gamma+1}},
\lambda_{i_{\beta+\gamma+2}},\dots,\lambda_{i_{\alpha+\beta+\gamma}}}
}\\
\,\,\,\,\,\,\,\,\,\,\,\,\,\,\,\,\,\,\,\,\,\,\,
\Big(
Z_{\lambda_{i_0},\dots,\lambda_{i_\gamma}}(v_{i_0},\dots, v_{i_\gamma}),
Y_{\lambda_{i_{\gamma+1}},\dots,\lambda_{i_{\beta+\gamma+1}}}
(v_{i_{\gamma+1}},\dots, v_{i_{\beta+\gamma+1}}),
\\
\,\,\,\,\,\,\,\,\,\,\,\,\,\,\,\,\,\,\,\,\,\,\,\,\,\,\,\,\,\,\,\,\,\,\,\,\,\,\,\,\,\,\,\,\,
\,\,\,\,\,\,\,\,\,\,\,\,\,\,\,\,\,\,\,\,\,\,\,\,\,\,\,\,\,\,\,\,\,\,\,\,\,\,\,\,\,\,\,\,\,
\,\,\,\,\,\,\,\,\,\,\,\,\,\,\,\,\,\,
v_{i_{\beta+\gamma+2}},\dots, v_{i_{\alpha+\beta+\gamma}} \Big)\,.
\end{array}
\end{equation}
Similarly, we have
\begin{equation}\label{101011:eq2}
\begin{array}{l}
\big(X\Box_\lambda (Y\Box_\mu Z)\big)_{\lambda_0,\dots,\lambda_{\alpha+\beta+\gamma}}
(v_0, \dots, v_{\alpha+\beta+\gamma})
\\
\displaystyle{
= \sum_{\substack{
i_0<\dots <i_\gamma \\
i_{\gamma+1}<\dots< i_{\beta+\gamma} \\
i_{\beta+\gamma+1}<\dots< i_{\alpha+\beta+\gamma}
}}
\epsilon_v(i_0,\dots,i_{\alpha+\beta+\gamma})
} \\
\displaystyle{
\,\,\,\,\,\,\,\,\,
\times X_{-\lambda-\lambda_{i_{\beta+\gamma+1}}-\dots-\lambda_{i_{\alpha+\beta+\gamma}}-\partial,
\lambda_{i_{\beta+\gamma+1}},\dots,\lambda_{i_{\alpha+\beta+\gamma}}}
}\\
\,\,\,\,\,\,\,\,\,\,\,\,\,\,
\Big(Y_{-\mu-\lambda_{i_{\gamma+1}}-\dots-\lambda_{i_{\beta+\gamma}}-\partial,
\lambda_{i_{\gamma+1}},\dots,\lambda_{i_{\beta+\gamma}}}
\\
\,\,\,\,\,\,\,\,\,\,\,\,\,\,\,\,\,\,\,\,\,\,
\big(Z_{\lambda_{i_0},\dots,\lambda_{i_\gamma}}
(v_{i_0},\dots, v_{i_\gamma}),
v_{i_{\gamma+1}},\dots, v_{i_{\beta+\gamma}} \big),
v_{i_{\beta+\gamma+1}},\dots, v_{i_{\alpha+\beta+\gamma}} \Big)\,.
\end{array}
\end{equation}
We then observe that the RHS in \eqref{101011:eq2}
is equal to the first term in the LHS of \eqref{101011:eq1}.
Hence, combining the above equations, we get
\begin{equation}\label{101011:eq3}
\begin{array}{l}
\big(X_\lambda Y_\mu Z\big)_{\lambda_0,\dots,\lambda_{\alpha+\beta+\gamma}}
(v_0, \dots, v_{\alpha+\beta+\gamma})
\\
\displaystyle{
= \sum_{\substack{
i_0<\dots <i_\gamma \\
i_{\gamma+1}<\dots< i_{\beta+\gamma+1} \\
i_{\beta+\gamma+2}<\dots< i_{\alpha+\beta+\gamma}
}}
(-1)^{\bar p(Y)(\bar p(Z)+\bar p(v_{i_0})+\dots+\bar p(v_{i_\gamma})}
\epsilon_v(i_0,\dots,i_{\alpha+\beta+\gamma})
} \\
\displaystyle{
\,\,\,\,\,\,\,\,\,
\times X_{
-\lambda-\mu-\lambda_{i_{\gamma+1}}-\dots-\lambda_{i_{\alpha+\beta+\gamma}}-\partial,
\mu+\lambda_{i_{\gamma+1}}+\dots+\lambda_{i_{\beta+\gamma+1}},
\lambda_{i_{\beta+\gamma+2}},\dots,\lambda_{i_{\alpha+\beta+\gamma}}}
}\\
\,\,\,\,\,\,\,\,\,\,\,\,\,\,\,\,\,\,\,\,\,\,\,
\Big(
Z_{\lambda_{i_0},\dots,\lambda_{i_\gamma}}(v_{i_0},\dots, v_{i_\gamma}),
Y_{\lambda_{i_{\gamma+1}},\dots,\lambda_{i_{\beta+\gamma+1}}}
(v_{i_{\gamma+1}},\dots, v_{i_{\beta+\gamma+1}}),
\\
\,\,\,\,\,\,\,\,\,\,\,\,\,\,\,\,\,\,\,\,\,\,\,\,\,\,\,\,\,\,\,\,\,\,\,\,\,\,\,\,\,\,\,\,\,
\,\,\,\,\,\,\,\,\,\,\,\,\,\,\,\,\,\,\,\,\,\,\,\,\,\,\,\,\,\,\,\,\,\,\,\,\,\,\,\,\,\,\,\,\,
\,\,\,\,\,\,\,\,\,\,\,\,\,\,\,\,\,\,
v_{i_{\beta+\gamma+2}},\dots, v_{i_{\alpha+\beta+\gamma}} \Big)\,.
\end{array}
\end{equation}
To conclude, we observe that,
if we exchange $Y$ and $Z$  (and $\beta$ and $\gamma$),
and we replace $\mu$
by $-\lambda-\mu-\lambda_0-\dots-\lambda_{\alpha+\beta+\gamma}-\partial$,
the RHS of \eqref{101011:eq3}
gets multiplied by the factor $(-1)^{\bar p(Y)\bar p(Z)}$.
\end{proof}
\begin{lemma}\label{100516:lemb}
If $R$ is a conformal superalgebra with right symmetric $\lambda$-product
$a_\lambda b,\,a,b\in R$, and parity $p$,
then the $\lambda$-bracket
\begin{equation}\label{box4}
[a_\lambda b]=a_\lambda b-(-1)^{p(a)p(b)}b_{-\lambda-\partial} a\,,
\end{equation}
defines on $R$ a structure of a Lie conformal superalgebra.
\end{lemma}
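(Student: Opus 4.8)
The plan is to verify directly the three defining properties of a Lie conformal superalgebra for the bracket \eqref{box4}: the sesquilinearity relations \eqref{100421:eq2}, the skewcommutativity \eqref{100421:eq3}, and the Jacobi identity \eqref{100421:eq4}. Only the last of these uses right symmetry in an essential way; the first two are formal consequences of the definition \eqref{box4} together with the sesquilinearity of the $\lambda$-product of $R$.

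Skewcommutativity is immediate: substituting $\mu=-\lambda-\partial$ into $[a_\mu b]=a_\mu b-(-1)^{p(a)p(b)}b_{-\mu-\partial}a$ and using the standard conformal-calculus identity $b_{-(-\lambda-\partial)-\partial}a=b_\lambda a$, one finds $-(-1)^{p(a)p(b)}[a_{-\lambda-\partial}b]=b_\lambda a-(-1)^{p(a)p(b)}a_{-\lambda-\partial}b=[b_\lambda a]$, which is exactly \eqref{100421:eq3}. For sesquilinearity, I would compute $[\partial a_\lambda b]$ and $[a_\lambda\partial b]$ from \eqref{box4}: the product terms $(\partial a)_\lambda b=-\lambda(a_\lambda b)$ and $a_\lambda(\partial b)=(\lambda+\partial)(a_\lambda b)$ are handled by the sesquilinearity of the product, while the reversed terms reduce, after substituting $-\lambda-\partial$ for the formal variable, to the short polynomial identities $b_{-\lambda-\partial}(\partial a)=-\lambda\,b_{-\lambda-\partial}a$ and $(\partial b)_{-\lambda-\partial}a=(\lambda+\partial)\,b_{-\lambda-\partial}a$. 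This yields the two relations in \eqref{100421:eq2}.

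The substantial part is the Jacobi identity \eqref{100421:eq4}. My approach mirrors the classical proof that a right-symmetric (pre-Lie) product yields a Lie bracket under the commutator. Writing out the Jacobiator $[a_\lambda[b_\mu c]]-(-1)^{p(a)p(b)}[b_\mu[a_\lambda c]]-[[a_\lambda b]_{\lambda+\mu}c]$ by expanding every bracket via \eqref{box4}, I expect each of the three terms to split into products-of-products, and after collecting, the whole expression should reorganize into a sum of six conformal associators $(a_\lambda b_\mu c)=(a_\lambda b)_{\lambda+\mu}c-a_\lambda(b_\mu c)$ and their permutations, precisely as the classical Jacobiator equals $-(a,b,c)+(b,a,c)+(a,c,b)-(b,c,a)-(c,a,b)+(c,b,a)$. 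These six associators then cancel in pairs by the right-symmetry relation \eqref{eq:rightassoc}, namely $(a_\lambda b_\mu c)=(-1)^{p(b)p(c)}(a_\lambda c_{-\lambda-\mu-\partial}b)$, which identifies the two associators in each pair up to the Koszul sign.

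The main obstacle will be purely bookkeeping: keeping track of the $\lambda$-arguments as brackets are skewsymmetrized — in particular the substitutions $-\lambda-\partial$, $-\mu-\partial$, and $-\lambda-\mu-\partial$ that arise from \eqref{box4} and from applying \eqref{eq:rightassoc} — together with the Koszul signs $(-1)^{p(a)p(b)}$ and so on. The combinatorial skeleton is exactly that of the purely even, non-conformal pre-Lie computation, so the verification amounts to checking that the formal-variable arguments and parities agree slot by slot after applying right symmetry; once the six-associator form is reached, the cancellation is automatic.
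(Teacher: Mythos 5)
Your proposal is correct and follows essentially the same route as the paper: the paper's proof consists precisely of the identity expressing the Jacobiator of the bracket \eqref{box4} as three grouped differences of conformal associators (six associators in total, with exactly the sign pattern you describe), each difference vanishing by the right symmetry relation \eqref{eq:rightassoc}. The sesquilinearity and skewcommutativity verifications you spell out are left implicit in the paper, but they are the same routine substitutions you indicate.
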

\begin{proof}
Recall that right symmetry means the following identity
$(a_\lambda b_\mu c)=(-1)^{p(b)p(c)}(a_\lambda c_{-\lambda-\mu-\partial} b)$,
where $(a_\lambda b_\mu c)=(a_\lambda b)_{\lambda+\mu}c-a_\lambda(b_\mu c)$
is the associator.
The statement follows by the following identity, which is easily derived
from \eqref{box4}:
$$
\begin{array}{l}
[a_\lambda[b_\mu c]]-(-1)^{p(a)p(b)}[b_\mu [a_\lambda c]]
-[[a_\lambda b]_{\lambda+\mu}c] \\
=
-\Big(
(a_\lambda b_\mu c)-(-1)^{p(b)p(c)}(a_\lambda c_{-\lambda-\mu-\partial}b)
\Big) \\
+(-1)^{p(a)p(b)}\Big(
(b_\mu a_\lambda c)-(-1)^{p(a)p(c)}(b_\mu c_{-\lambda-\mu-\partial}a)
\Big) \\
-(-1)^{p(c)(p(a)+p(b))}\Big(
(c_{-\lambda-\mu-\partial} a_\lambda b)
-(-1)^{p(a)p(b)} (c_{-\lambda-\mu-\partial} b_\mu a)
\Big)\,.
\end{array}
$$
\end{proof}
\begin{corollary}\label{100516:prop}
The $\lambda$-bracket
\begin{equation}\label{box3}
[X_\lambda Y]=X\Box_\lambda Y-(-1)^{\bar p(X)\bar p(Y)}Y\Box_{-\lambda-\partial} X\,,
\end{equation}
defines a structure of a Lie conformal superalgebra on $\tilde W^{\partial}(V)$.
\end{corollary}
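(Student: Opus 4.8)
The plan is to deduce this directly by combining the two preceding lemmas, so that all the substantive work has already been carried out. Indeed, the statement is precisely the conjunction of two facts about the pair $(\tilde W^\partial(V),\Box_\lambda)$: that $\Box_\lambda$ is a well-defined $\lambda$-product making $\tilde W^\partial(V)$ a conformal superalgebra, and that this $\lambda$-product is right symmetric in the sense needed to antisymmetrize it into a Lie conformal superalgebra bracket.

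First I would invoke Lemma \ref{100516:lem}(a), which guarantees that $\Box_\lambda$ is a well-defined map $\tilde W^\partial_h(V)\times\tilde W^\partial_{k-h}(V)\to\mb F[\lambda]\otimes\tilde W^\partial_k(V)$ and that it satisfies the sesquilinearity relations \eqref{100421:eq2}; thus $(\tilde W^\partial(V),\Box_\lambda)$ is a conformal superalgebra with parity $\bar p$. Then I would invoke Lemma \ref{100516:lem}(b), which supplies the right-symmetry identity \eqref{eq:rightassoc}, namely $(X_\lambda Y_\mu Z)=(-1)^{\bar p(Y)\bar p(Z)}(X_\lambda Z_{-\lambda-\mu-\partial}Y)$ for the associator of the $\lambda$-product.

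These two properties are exactly the hypotheses of Lemma \ref{100516:lemb}, applied with $R=\tilde W^\partial(V)$ and parity $p$ replaced by $\bar p$. That lemma then shows that the $\lambda$-bracket obtained from the right-symmetric $\lambda$-product by antisymmetrization, which is precisely \eqref{box3}, satisfies skewcommutativity and the Jacobi identity \eqref{100421:eq4}; together with the sesquilinearity already noted, this endows $\tilde W^\partial(V)$ with the structure of a Lie conformal superalgebra. The only bookkeeping point to verify for the citation to be literal is compatibility with the $\mb Z$-grading, which is immediate since $\Box_\lambda$ sends $\tilde W^\partial_h\times\tilde W^\partial_{k-h}$ into $\mb F[\lambda]\otimes\tilde W^\partial_k$ and antisymmetrization preserves this bidegree.

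Since everything reduces to the two lemmas, no genuine obstacle remains at this stage; the real difficulty — the lengthy associator computation establishing right symmetry in \eqref{eq:rightassoc} — has already been dispatched in the proof of Lemma \ref{100516:lem}(b), which I regard as the heart of the matter.
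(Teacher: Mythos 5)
Your proof is correct and is exactly the paper's argument: the paper's proof of this corollary reads, in full, that it follows immediately from Lemmas \ref{100516:lem} and \ref{100516:lemb}, which is precisely the combination you spell out (part (a) for the conformal superalgebra structure, part (b) for right symmetry, and Lemma \ref{100516:lemb} with $R=\tilde W^{\partial}(V)$ to antisymmetrize into a Lie conformal superalgebra bracket). Your added remark on compatibility with the $\mb Z$-grading is a harmless, correct bookkeeping observation.
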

\begin{proof}
It follows immediately from Lemmas \ref{100516:lem} and \ref{100516:lemb}.
\end{proof}

For $X\in\tilde W^{\partial}_{k}(V)$ and $Y\in V=\tilde W^{\partial}_{-1}(V)$, we have,
for $v_1,\dots,v_k\in V$,
$$
[X_{\lambda_0} Y]_{\lambda_1,\dots,\lambda_k}(v_1,\dots,v_k)
=X_{-\lambda_0-\dots-\lambda_k-\partial,\lambda_1,\dots,\lambda_k}(Y,v_1,\dots,v_k)
\,,
$$
or, equivalently,
\begin{equation}\label{101125:eq1}
\begin{array}{c}
X_{\lambda_0,\lambda_1,\dots,\lambda_k}(Y,v_1,\dots,v_k)
= [X_{-\lambda_0-\partial} Y]_{\lambda_1,\dots,\lambda_k}(v_1,\dots,v_k) \\
= (-1)^{1+\bar p(X)\bar p(Y)}[Y_{\lambda_0} X]_{\lambda_1,\dots,\lambda_k}(v_1,\dots,v_k)
\,.
\end{array}
\end{equation}
It follows that we have the following universality property
of the Lie conformal superalgebra $\tilde W^{\partial}(V)$:
for any $\mb Z$-graded Lie conformal superalgebra $R=\bigoplus_{k=-1}^\infty R_k$
with $R_{-1}=V$, there is a canonical homomorphism of $\mb Z$-graded Lie conformal superalgebras
$\phi:\,R\to\tilde W^{\partial}(V)$, extending the identity map on $V$ by
$$
\phi(a)_{\lambda_0,\dots,\lambda_k}(v_0,\dots,v_k)
=\pm [{v_k}_{\lambda_k}\dots[{v_1}_{\lambda_1}[{v_0}_{\lambda_0}a]]\dots]
\,\,,\,\,\,\,
\text{ if } k\geq0\,,
$$
where $\pm=(-1)^{k+1+\bar p(a)(\bar p(v_0)+\dots+\bar p(v_k))}\epsilon_v(k,k-1,\dots,0)$.

For a right conformal endomorphism
$X\in \tilde W^\partial_0(V)$
and for $Y\in\tilde W^\partial_k(V)$, where $k\geq-1$,
we have
\begin{equation}\label{100419x:eq5}
\begin{array}{l}
[X_\lambda Y]_{\lambda_0,\dots,\lambda_k}(v_0,\dots,v_k)=
X_{-\lambda-\partial}\big(Y_{\lambda_0,\dots,\lambda_k}(v_0,\dots,v_k)\big)
\\
\displaystyle{
-(-1)^{\bar p(X)\bar p(Y)} \sum_{i=0}^k (-1)^{\bar p(X)\bar s_{0,i-1}}
Y_{\lambda_0,\dots,\lambda+\lambda_i,\dots,\lambda_k}(v_0,\dots,X_{\lambda_i}(v_i),\dots,v_k)\,,
}
\end{array}
\end{equation}
where $\bar s_{ij}$ is as in \eqref{100412:eq5}.
In particular, for $k=0$, the above formula gives the following Lie conformal superalgebra structure
on the $\mb F[\partial]$-module $\tilde W^\partial_0(V)=\RCend(V)$
of all right conformal endomorphisms of $V$:
\begin{equation}\label{100517:eq1}
[X_\lambda Y]_{\mu}(v)=
X_{-\lambda-\partial}(Y_{\mu}(v))
-(-1)^{\bar p(X)\bar p(Y)} Y_{\lambda+\mu}(X_{\mu}(v))\,.
\end{equation}
\begin{remark}\label{20110528:rem}
This Lie conformal superalgebra, which is natural to denote $Rgc(V)$,
is isomorphic to the Lie conformal superalgebra $gc(V)$
of all (left) conformal endomorphisms \cite{K}, via the map
\begin{equation}\label{100528:eq3}
*:\,Rgc(V)\to gc(V)\,,
\quad\text{ where }\,\,
X^*_\lambda(v)= X_{-\lambda-\partial}(v)\,.
\end{equation}
\end{remark}
Furthermore, if $X\in\tilde W^\partial_1(V)$ and $Y\in\tilde W^\partial_{k-1}(V),\,k\geq0$, we have
\begin{equation}\label{100419x:eq6}
\begin{array}{l}
\displaystyle{
(X\Box_\lambda Y)_{\lambda_0,\dots,\lambda_k}(v_0,\dots,v_k)
} \\
\displaystyle{
\,\,\,\,\,\,\,\,\,
= \sum_{i=0}^k (-1)^{\bar p(v_i)\bar s_{i+1,k}}
X_{-\lambda-\lambda_i-\partial,\lambda_i}
\big(Y_{\lambda_0,\stackrel{i}{\check{\dots}},\lambda_k}(v_0,\stackrel{i}{\check{\dots}},v_k),v_i\big)\,,
} \\
\displaystyle{
(Y\Box_{-\lambda-\partial} X)_{\lambda_0,\dots,\lambda_k}(v_0,\dots,v_k)
=\!\!
\sum_{0\leq i<j\leq k} \!\!
(-1)^{\bar p(v_i)\bar s_{0,i-1}+\bar p(v_j)(\bar s_{0,i-1}+\bar s_{i+1,j-1})}
} \\
\displaystyle{
\,\,\,\,\,\,\,\,\,\,\,\,\,\,\,\,\,\,\,\,\,\,\,\,\,\,\,\,\,\,\,\,\,\,\,\,
\times Y_{\lambda+\lambda_i+\lambda_j,\lambda_0,\stackrel{i}{\check{\dots}}\stackrel{j}{\check{\dots}},\lambda_k}
\big(X_{\lambda_i,\lambda_j}(v_i,v_j),v_0,\stackrel{i}{\check{\dots}}\,\stackrel{j}{\check{\dots}},v_k\big)\,.
}
\end{array}
\end{equation}
In particular, if both $X$ and $Y$ are in $\tilde W^\partial_1(V)$,
we get
$$
\begin{array}{c}
(X\Box_\lambda Y)_{\lambda_0,\lambda_1,\lambda_2}(v_0,v_1,v_2) =
X_{-\lambda-\lambda_2-\partial,\lambda_2}\big(Y_{\lambda_0,\lambda_1}(v_0,v_1),v_2\big) \\
\vphantom{\Bigg(}
+(-1)^{\bar p(v_1)(\bar p(Y)+\bar p(v_0))}
X_{\lambda_1,-\lambda-\lambda_1-\partial}\big(v_1,Y_{\lambda_0,\lambda_2}(v_0,v_2)\big) \\
+(-1)^{\bar p(v_0)\bar p(Y)}
X_{\lambda_0,-\lambda-\lambda_0-\partial}\big(v_0,Y_{\lambda_1,\lambda_2}(v_1,v_2)\big) \,,
\end{array}
$$
and
$$
\begin{array}{c}
(Y\Box_{-\lambda-\partial} X)_{\lambda_0,\lambda_1,\lambda_2}(v_0,v_1,v_2) =
Y_{\lambda+\lambda_0+\lambda_1,\lambda_2}\big(X_{\lambda_0,\lambda_1}(v_0,v_1),v_2\big) \\
\vphantom{\Bigg(}
+(-1)^{\bar p(v_1)(\bar p(Y)+\bar p(v_0))}
Y_{\lambda_1,\lambda+\lambda_0+\lambda_2}\big(v_1,X_{\lambda_0,\lambda_2}(v_0,v_2)\big) \\
+(-1)^{\bar p(v_0)\bar p(Y)}
Y_{\lambda_0,\lambda+\lambda_1+\lambda_2}\big(v_0,X_{\lambda_1,\lambda_2}(v_1,v_2)\big) \,.
\end{array}
$$

There is a close relation between the universal Lie conformal superalgebra $\tilde W^\partial(V)$
and the Lie superalgebra $W^\partial(V)$ associated to the finitely generated 
$\mb F[\partial]$-module $V$.
In order to describe this connection, we consider the quotient map
$\tint:\,\mb F_-[\lambda_0,\dots,\lambda_k]\otimes V
\to \mb F_-[\lambda_0,\dots,\lambda_k]\otimes_{\mb F[\partial]} V$.
For $k=-1$, this coincides with the usual map $V\to V/\partial V,\,v\mapsto\tint v$.
\begin{proposition}\label{100517:prop}
\begin{enumerate}[(a)]
\item
We have a linear map
$\tint:\, \tilde W^{\partial}(V)\to W^\partial(V)$,
induced by the quotient map
$\tint:\,\mb F_-[\lambda_0,\dots,\lambda_k]\otimes
V\to \mb F_-[\lambda_0,\dots,\lambda_k]\otimes_{\mb F[\partial]} V$,
which induces an injective homomorphism of Lie superalgebras
$\tint:\, \tilde W^{\partial}(V)/\partial\tilde W^{\partial}(V) \to W^\partial(V)$.
\item
We have a representation of the Lie superalgebra $W^\partial(V)$ on $\tilde W^{\partial}(V)$,
with the action of $X\in W^\partial_h(V)$ on $\tilde Y\in\tilde W^\partial_{k-h}(V)$
denoted by $[X,\tilde Y]\in\tilde W^\partial_k(V)$, given by the following formula:
\begin{equation}\label{100517:eq2}
\begin{array}{l}
\displaystyle{
[X, \tilde Y]_{\lambda_0,\dots,\lambda_k}(v_0, \dots, v_k)
= \sum_{\substack{
i_0<\dots <i_{k-h}\\
i_{k-h+1}<\dots< i_k}}
\epsilon_v(i_0,\dots,i_k)
} \\
\displaystyle{
\vphantom{\Bigg(}
\times X_{-\lambda_{i_{k-h+1}}-\dots-\lambda_{i_{k}}-\partial,\lambda_{i_{k-h+1}},\dots,\lambda_{i_k}}
(\tilde Y_{\lambda_{i_0},\dots,\lambda_{i_{k-h}}}(v_{i_0},\dots, v_{i_{k-h}}),
} \\
\displaystyle{
v_{i_{k-h+1}},\dots, v_{i_k})
- (-1)^{\bar p(X)\bar p(\tilde Y)}
\sum_{\substack{
i_0<\dots <i_{h}\\
i_{h+1}<\dots< i_k}}
\epsilon_v(i_0,\dots,i_k)
} \\
\displaystyle{
\times \tilde Y_{\lambda_{i_0}+\dots+\lambda_{i_h},\lambda_{i_{h+1}},\dots,\lambda_{i_k}}
(X_{\lambda_{i_0},\dots,\lambda_{i_h}}(v_{i_0},\dots, v_{i_h}), v_{i_{h+1}},\dots, v_{i_k})\,.
}
\end{array}
\end{equation}
This action of the Lie superalgebra $W^\partial(V)$
is by derivations of the $\lambda$-bracket on $\tilde W^\partial(V)$
and it commutes with the action of $\partial$.
\item
The canonical map $\tint:\, \tilde W^{\partial}(V)\to W^\partial(V)$
from part (a) is a homomorphism of representations of the Lie superalgebra $W^\partial(V)$.
Moreover,
the representation of $W^\partial(V)$ on $\tilde W^{\partial}(V)$ is compatible,
via the map $\tint$ in (a), with the representation of the Lie superalgebra
$\tilde W^{\partial}(V)/\partial\tilde W^{\partial}(V)$ on $\tilde W^{\partial}(V)$.
\end{enumerate}
\end{proposition}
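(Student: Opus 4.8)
The plan is to prove the three parts of Proposition \ref{100517:prop} in order, leaning heavily on the structures already established, especially the universality property, the right-symmetry of $\Box_\lambda$ (Lemma \ref{100516:lem}), and the sesquilinearity computations from Section \ref{sec:6.1}.

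For part (a), I would first verify that the quotient map $\tint$ is well defined on the level of the spaces $\tilde W^\partial_k(V)\to W^\partial_k(V)$. This amounts to checking that composing an element $X\in\Hom^{\sym}_{\mb F[\partial]^{\otimes(k+1)}}(V^{\otimes(k+1)},\mb F_-[\lambda_0,\dots,\lambda_k]\otimes V)$ with the projection $\mb F_-[\lambda_0,\dots,\lambda_k]\otimes V\to\mb F_-[\lambda_0,\dots,\lambda_k]\otimes_{\mb F[\partial]}V$ still satisfies the sesquilinearity and symmetry conditions defining $W^\partial_k(V)$; this is immediate since the projection is $\mb F[\partial]^{\otimes(k+1)}$-linear and equivariant for the symmetric group action. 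The kernel of $\tint$ is exactly $\partial\tilde W^\partial(V)$, by \eqref{100517:eq3}: an element $X$ maps to zero under $\tint$ precisely when its values lie in $(\lambda_0+\dots+\lambda_k+\partial)(\mb F_-[\lambda_0,\dots,\lambda_k]\otimes V)$, which by \eqref{100517:eq3} says $X=\partial X'$ for some $X'$. Thus $\tint$ factors through an injection $\tilde W^\partial(V)/\partial\tilde W^\partial(V)\to W^\partial(V)$. To see this is a Lie superalgebra homomorphism, I would compare the $\lambda$-bracket formula \eqref{box3} with the bracket \eqref{box2}: applying $\tint$ to $[X_\lambda Y]$ and setting $\lambda=0$ (equivalently, passing to the quotient by $\partial$) transforms the box product \eqref{100420:eq1} into \eqref{100418:eq2}, since $-\lambda-\lambda_{i_{k-h+1}}-\dots-\lambda_{i_k}-\partial$ reduces modulo $\partial$ to $\lambda_{i_0}+\dots+\lambda_{i_{k-h}}$, matching the first slot in \eqref{100418:eq2}.

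For part (b), the representation property is the technical heart. I would define the action by \eqref{100517:eq2} and need to check two things: that $[X,\tilde Y]$ lands in $\tilde W^\partial_k(V)$ (sesquilinearity and symmetry), which parallels the verification in Lemma \ref{100516:lem}(a); and that $X\mapsto(\tilde Y\mapsto[X,\tilde Y])$ is a Lie superalgebra action, i.e. $[[X_1,X_2],\tilde Y]=[X_1,[X_2,\tilde Y]]-(-1)^{\bar p(X_1)\bar p(X_2)}[X_2,[X_1,\tilde Y]]$. The cleanest route is to recognize \eqref{100517:eq2} as the specialization of the bracket \eqref{box3} in which $X$ is evaluated with $\lambda=0$ in its ``outer'' argument but $\tilde Y$ is kept without the $\lambda$-shift — in other words, \eqref{100517:eq2} is $\tint$-compatible with \eqref{box3} in the first argument. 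I would then derive the representation axiom from the Jacobi/right-symmetry identity for $\Box_\lambda$ established in Lemma \ref{100516:lem}(b), specializing the $\lambda$ attached to $X_1$ and $X_2$ to zero. That the action is by derivations of the $\lambda$-bracket and commutes with $\partial$ follows by the same specialization of the derivation property of $\ad X$ in the Lie conformal superalgebra $\tilde W^\partial(V)$; concretely, commuting with $\partial$ is checked directly against \eqref{100517:eq3}, since no free $\lambda$ appears in \eqref{100517:eq2}.

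For part (c), I would show $\tint[X,\tilde Y]=[\tint X,\tint Y]$, which reduces to observing that applying the quotient map to \eqref{100517:eq2} and comparing with the bracket \eqref{box2} in $W^\partial(V)$ gives matching summations once $\partial$ is absorbed into the equivalence relation, exactly as in part (a). The compatibility with the representation of $\tilde W^\partial(V)/\partial\tilde W^\partial(V)$ on itself then follows because that representation is (by construction) $\ad$ composed with $\tint$, and \eqref{101125:eq1} identifies the two adjoint actions up to the quotient by $\partial$. I expect the main obstacle to be the bookkeeping in part (b): matching the sign factors $\bar p(v_i)\bar s_{i+1,k}$ and the various $\lambda$-shifts between \eqref{100517:eq2} and the specialization of \eqref{box3}, and then extracting the representation axiom from the single right-symmetry identity \eqref{eq:rightassoc} without simply re-deriving it from scratch. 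The conceptual point to keep in view is that \eqref{100517:eq2} is precisely the ``$\lambda=0$ on $X$'' reduction of the inner adjoint action, so all three parts are really facets of the compatibility of $\Box_\lambda$ with the quotient by $\partial$.
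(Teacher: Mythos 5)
Your strategy for part (b) has a genuine gap. You propose to obtain the representation axiom for $X_1,X_2\in W^\partial(V)$ by regarding \eqref{100517:eq2} as the ``$\lambda=0$'' specialization of the $\lambda$-bracket \eqref{box3} and then specializing the right-symmetry/Jacobi identity of Lemma \ref{100516:lem}(b). This only makes sense for elements of $W^\partial(V)$ that lie in the image of $\tint:\,\tilde W^\partial(V)\to W^\partial(V)$, and that map is \emph{not} surjective in general: as Remark \ref{100517:rem} points out, if $V$ has $\mb F[\partial]$-torsion then $\tilde W^\partial_0(V)=0$ by sesquilinearity while $W^\partial_0(V)=\End_{\mb F[\partial]}(V)$ can be nonzero, so for instance the identity endomorphism of a torsion module is not a $\lambda=0$ specialization of anything in $\tilde W^\partial(V)$. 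Since the proposition asserts a representation of \emph{all} of $W^\partial(V)$, the Jacobi identity of the Lie conformal superalgebra $\tilde W^\partial(V)$ cannot, by itself, deliver the axiom; you would at least have to prove surjectivity of $\tint$ in positive degrees (itself a nontrivial averaging argument, given only later in Remark \ref{100517:rem}) and treat $W^\partial_0(V)$ separately, and even then the degree-zero torsion case remains. This is exactly why the paper does what you hoped to avoid: it introduces the mixed box products $\Box^L:\,W^\partial(V)\times\tilde W^\partial(V)\to\tilde W^\partial(V)$ and $\Box^R:\,\tilde W^\partial(V)\times W^\partial(V)\to\tilde W^\partial(V)$ of \eqref{101024:eq1} and verifies the two right-symmetry identities \eqref{101024:eq2}, \eqref{101024:eq3} by direct expansion, valid for arbitrary elements of $W^\partial(V)$, and then combines them to get the representation property.

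There is also a smaller gap in your part (a): from $\tint X=0$ you conclude ``$X=\partial X'$ by \eqref{100517:eq3},'' but what you get pointwise is only that each value $X_{\lambda_0,\dots,\lambda_k}(v_0,\dots,v_k)$ equals $(\lambda_0+\dots+\lambda_k+\partial)Y_{\lambda_0,\dots,\lambda_k}(v_0,\dots,v_k)$ for some polynomial $Y$; one must still check that this pointwise-defined $Y$ is itself an element of $\tilde W^\partial_k(V)$, i.e.\ satisfies sesquilinearity and symmetry. The paper settles this using the injectivity of $\lambda_0+\dots+\lambda_k+\partial$ on $\mb F[\lambda_0,\dots,\lambda_k]\otimes V$, which transfers the defining relations from $X$ to $Y$; without this observation the identification $\ker\tint=\partial\tilde W^\partial(V)$, hence the injectivity claim in (a), is not justified. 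Your treatment of part (c) (comparison of \eqref{100517:eq2} with \eqref{box2} and \eqref{box3}) matches the paper's and is fine once (b) is established correctly.
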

\begin{proof}
For $X\in \tilde W^\partial_k(V)$,
let $\tint X$ be the map
$V^{\otimes(k+1)}\to\mb F_-[\lambda_0,\dots,\lambda_k]\otimes_{\mb F[\partial]} V$,
given by
$(\tint X)_{\lambda_0,\dots,\lambda_k}(v_0,\dots,v_k)=\tint X_{\lambda_0,\dots,\lambda_k}(v_0,\dots,v_k)$,
where, in the RHS, $\tint$ denotes the map
$\mb F_-[\lambda_0,\dots,\lambda_k]\otimes V
\to\mb F_-[\lambda_0,\dots,\lambda_k]\otimes_{\mb F[\partial]} V$.
It is immediate to check that $\tint X$ lies in $W^\partial_k(V)$,
i.e., it satisfies the sesquilinearity and symmetry conditions.
Hence, we get a well-defined linear map
$\tint:\,\tilde W^\partial_k(V)\to W^\partial_k(V)$.
Next, we prove that
$\ker\big(\tint\,\big|_{\tilde W^{\partial}_k(V)}\big)=\partial\big(\tilde W^{\partial}_k(V)\big)$,
so that
$\tint$ factors through an injective linear map
$\tint:\,\tilde W^{\partial}_k(V)/\partial\tilde W^{\partial}_k(V) \to W^\partial_k(V)$.
For $k=-1$, $\tint$ coincides with the quotient map $V\to V/\partial\ V$,
so there is nothing to prove.
Let then $k\geq0$.
The inclusion
$\partial\big(\tilde W^{\partial}_k(V)\big)\subset \ker\big(\tint\,\big|_{\tilde W^{\partial}_k(V)}\big)$
is immediate by the definition \eqref{100517:eq3} of the $\mb F[\partial]$-module
structure on $\tilde W^\partial(V)$.
Conversely,
suppose $X\in\tilde W^\partial_k(V)$ lies in $\ker(\tint)$,
namely, for every $v_0,\dots,v_k\in V$, we have
$X_{\lambda_0,\dots,\lambda_k}(v_0,\dots,v_k)
=(\partial+\lambda_0+\dots+\lambda_k)Y_{\lambda_0,\dots,\lambda_k}(v_0,\dots,v_k)$,
for some polynomial
$Y_{\lambda_0,\dots,\lambda_k}(v_0,\dots,v_k)$ in $\lambda_0,\dots,\lambda_k$
with coefficients in $V$.
Since $\partial+\lambda_0+\dots+\lambda_k$ is injective on $\mb F[\lambda_0,\dots,\lambda_k]\otimes V$
for every $k\geq0$,
the sesquilinearity and symmetry relations for $X$ imply those for $Y$.
Hence, $X=\partial Y\in\partial\tilde W^\partial_k(V)$, proving the claim.
The fact that the induced map
$\tint:\,\tilde W^{\partial}(V)/\partial\tilde W^{\partial}(V) \to W^\partial(V)$
is a Lie algebra homomorphism follows by comparing the explicit expressions \eqref{box2} and \eqref{box3}
for the Lie bracket on $W^\partial(V)$
and the $\lambda$-bracket on $\tilde W^{\partial}(V)$ respectively.
This proves part (a).

It is immediate to check that formula \eqref{100517:eq2} does not depend on the choice
of representative of $X_{\lambda_{i_0},\dots,\lambda_{i_k}}(v_{i_0},\dots,v_{i_k})
\in \mb F[\lambda_{i_0},\dots,\lambda_{i_k}]\otimes_{\mb F[\partial]} V$
in $\mb F[\lambda_{i_0},\dots,\lambda_{i_k}]\otimes V$.
Moreover, if $X$ and $\tilde Y$ satisfy the sesquilinearity and symmetry relations, so does $[X,\tilde Y]$.
Hence, we get a well-defined map
$W^\partial_h(V)\times\tilde W^\partial_{k-h}(V)\to\tilde W^\partial_k(V)$.
We next prove that \eqref{100517:eq2} defines a representation
of the Lie superalgebra $W^\partial(V)$ on $\tilde W^\partial(V)$.
Introduce the left and right box products
$\Box^L:\,W^\partial(V)\times\tilde W^\partial(V)\to\tilde W^\partial(V)$
and
$\Box^R:\,\tilde W^\partial(V)\times W^\partial(V)\to\tilde W^\partial(V)$,
given, respectively, by the first and (without the sign in front) the second term
in the RHS of \eqref{100517:eq2}, i.e.
\begin{equation}\label{101024:eq1}
\begin{array}{c}
(X\Box^L \tilde Y)_{\!\lambda_0,\dots,\lambda_k}\!\!(v_0, \dots, v_k)\!
=
\!\!\!\!\!\!\!\!\!\!\!
\displaystyle{
\sum_{\substack{
i_0<\dots <i_{k-h}\\
i_{k-h+1}<\dots< i_k}}
\!\!\!\!\!\!\!\!\!\!\!
\epsilon_v(i_0,\dots,i_k)
X_{-\lambda_{i_{k-h+1}}\!\!\!\dots-\lambda_{i_{k}}-\partial,\lambda_{i_{k-h+1}},\dots,\lambda_{i_k}\!\!\!}
} \\
\displaystyle{
\vphantom{\Bigg(}
\Big(\tilde Y_{\lambda_{i_0},\dots,\lambda_{i_{k-h}}}(v_{i_0},\dots, v_{i_{k-h}}),
v_{i_{k-h+1}},\dots, v_{i_k}
\Big)
} \\
(\tilde Y\Box^R X)_{\lambda_0,\dots,\lambda_k}(v_0, \dots, v_k)
=
\!\!\!\displaystyle{
\sum_{\substack{
i_0<\dots <i_{h}\\
i_{h+1}<\dots< i_k}}
\epsilon_v(i_0,\dots,i_k)
\tilde Y_{\lambda_{i_0}+\dots+\lambda_{i_h},\lambda_{i_{h+1}},\dots,\lambda_{i_k}}
} \\
\displaystyle{
\Big(X_{\lambda_{i_0},\dots,\lambda_{i_h}}(v_{i_0},\dots, v_{i_h}), v_{i_{h+1}},\dots, v_{i_k}\Big)\,.
}
\end{array}
\end{equation}
We claim that they satisfy the following right symmetry identities:
\begin{equation}\label{101024:eq2}
(X\Box Y)\Box^L\tilde Z - X\Box^L(Y\Box^L\tilde Z)
=
(-1)^{\bar p(Y)\bar p(\tilde Z)}
\Big((X\Box^L\tilde Z)\Box^R Y - X\Box^L(\tilde Z\Box^R Y)\Big),\!
\end{equation}
for $X\in W^\partial_\alpha(V),\, Y\in W^\partial_\beta(V),\,\tilde Z\in\tilde W^\partial_\gamma(V)$,
and
\begin{equation}\label{101024:eq3}
(\tilde X\Box^R Y)\Box^R Z - \tilde X\Box^R(Y\Box Z)
=
(-1)^{\bar p(Y)\bar p(Z)}
\Big((\tilde X\Box^R Z)\Box^R Y - \tilde X\Box^R(Z\Box Y)\Big)\,,
\end{equation}
for $\tilde X\in\tilde W^\partial_\alpha(V),\, Y\in W^\partial_\beta(V),\,Z\in W^\partial_\gamma(V)$.
For \eqref{101024:eq2} we have
\begin{equation}\label{101024:eq4}
\begin{array}{l}
\big((X\Box Y)\Box^L \tilde Z\big)_{\lambda_0,\dots,\lambda_{\alpha+\beta+\gamma}}
(v_0, \dots, v_{\alpha+\beta+\gamma}) =
\!\!\!\!\!\!\!
\displaystyle{
\sum_{\substack{
i_0<\dots <i_\gamma \\
i_{\gamma+1}<\dots< i_{\beta+\gamma} \\
i_{\beta+\gamma+1}<\dots< i_{\alpha+\beta+\gamma}
}}
\!\!\!\!\!\!
\epsilon_v(i_0,\dots,i_{\alpha+\beta+\gamma})
} \\
\times X_{-\lambda_{i_{\beta+\gamma+1}}-\dots-\lambda_{i_{\alpha+\beta+\gamma}}-\partial,
\lambda_{i_{\beta+\gamma+1}},\dots,\lambda_{i_{\alpha+\beta+\gamma}}}
\Big(Y_{-\lambda_{i_{\gamma+1}}-\dots-\lambda_{i_{\beta+\gamma}}-\partial,
\lambda_{i_{\gamma+1}},\dots,\lambda_{i_{\beta+\gamma}}}
\\
\,\,\,\,\,\,\,\,\,\,\,\,\,\,\,\,\,\,\,\,\,\,\,\,\,\,\,\,\,\,\,\,
\vphantom{\Bigg(}
\big(
\tilde Z_{\lambda_{i_0},\dots,\lambda_{i_\gamma}}
(v_{i_0},\dots, v_{i_\gamma}),
v_{i_{\gamma+1}},\dots, v_{i_{\beta+\gamma}} \big),
v_{i_{\beta+\gamma+1}},\dots, v_{i_{\alpha+\beta+\gamma}} \Big)
\end{array}%
\end{equation}%
$$%
\begin{array}{l}%
\displaystyle{
+ \sum_{\substack{
i_0<\dots <i_\beta \\
i_{\beta+1}<\dots< i_{\beta+\gamma+1} \\
i_{\beta+\gamma+2}<\dots< i_{\alpha+\beta+\gamma}
}}
(-1)^{\bar p(\tilde Z)(\bar p(v_{i_0})+\dots+\bar p(v_{i_\beta})}
\epsilon_v(i_0,\dots,i_{\alpha+\beta+\gamma})
} \\
\displaystyle{
\times X_{
\lambda_{i_0}+\dots+\lambda_{i_\beta},
-\lambda_{i_0}-\dots-\lambda_{i_\beta}
-\lambda_{i_{\beta+\gamma+1}}-\dots-\lambda_{i_{\alpha+\beta+\gamma}}-\partial,
\lambda_{i_{\beta+\gamma+2}},\dots,\lambda_{i_{\alpha+\beta+\gamma}}}
\Big(
}\\
Y_{\lambda_{i_0},\dots,\lambda_{i_\beta}}
\!\!\!(\!v_{i_0},\dots, v_{i_\beta}\!),
\tilde Z_{\lambda_{i_{\beta+1}},\dots,\lambda_{i_{\beta+\gamma+1}}}
\!\!\!(\!v_{i_{\beta+1}},\dots, v_{i_{\beta+\gamma+1}}\!),
\!v_{i_{\beta+\gamma+2}},\dots,\! v_{i_{\alpha+\beta+\gamma}}\!\! \Big).\!\!\!\!
\end{array}
$$%
Similarly, we have
\begin{equation}\label{101024:eq5}
\begin{array}{l}
\big((X\Box \tilde Z)\Box^R Y\big)_{\lambda_0,\dots,\lambda_{\alpha+\beta+\gamma}}
(v_0, \dots, v_{\alpha+\beta+\gamma}) =
\!\!\!\!\!\!\!
\displaystyle{
\sum_{\substack{
i_0<\dots <i_\beta \\
i_{\beta+1}<\dots< i_{\beta+\gamma} \\
i_{\beta+\gamma+1}<\dots< i_{\alpha+\beta+\gamma}
}}
\!\!\!\!\!\!
\epsilon_v(i_0,\dots,i_{\alpha+\beta+\gamma})
} \\
\times X_{-\lambda_{i_{\beta+\gamma+1}}-\dots-\lambda_{i_{\alpha+\beta+\gamma}}-\partial,
\lambda_{i_{\beta+\gamma+1}},\dots,\lambda_{i_{\alpha+\beta+\gamma}}}
\Big(
\tilde Z_{\lambda_{i_0}+\dots+\lambda_{i_\beta},
\lambda_{i_{\beta+1}},\dots,\lambda_{i_{\beta+\gamma}}}
\\
\,\,\,\,\,\,\,\,\,\,\,\,\,\,\,\,\,\,\,\,\,\,\,\,\,\,\,\,\,\,\,\,
\vphantom{\Bigg(}
\big(
Y_{\lambda_{i_0},\dots,\lambda_{i_\beta}}
(v_{i_0},\dots, v_{i_\beta}),
v_{i_{\beta+1}},\dots, v_{i_{\beta+\gamma}} \big),
v_{i_{\beta+\gamma+1}},\dots, v_{i_{\alpha+\beta+\gamma}} \Big) \\
\displaystyle{
+ (-1)^{\bar p(Y)\bar p(\tilde Z)}
\sum_{\substack{
i_0<\dots <i_\beta \\
i_{\beta+1}<\dots< i_{\beta+\gamma+1} \\
i_{\beta+\gamma+2}<\dots< i_{\alpha+\beta+\gamma}
}}
(-1)^{\bar p(\tilde Z)(\bar p(v_{i_0})+\dots+\bar p(v_{i_\beta})}
\epsilon_v(i_0,\dots,i_{\alpha+\beta+\gamma})
} \\
\displaystyle{
\times X_{
\lambda_{i_0}+\dots+\lambda_{i_\beta},
-\lambda_{i_0}-\dots-\lambda_{i_\beta}
-\lambda_{i_{\beta+\gamma+1}}-\dots-\lambda_{i_{\alpha+\beta+\gamma}}-\partial,
\lambda_{i_{\beta+\gamma+2}},\dots,\lambda_{i_{\alpha+\beta+\gamma}}}
\Big(
}\\
Y_{\lambda_{i_0},\dots,\lambda_{i_\beta}}
\!\!\!(\!v_{i_0},\dots, v_{i_\beta}\!),
\tilde Z_{\lambda_{i_{\beta+1}},\dots,\lambda_{i_{\beta+\gamma+1}}}
\!\!\!(\!v_{i_{\beta+1}},\dots, v_{i_{\beta+\gamma+1}}\!),
\!v_{i_{\beta+\gamma+2}},\dots,\! v_{i_{\alpha+\beta+\gamma}}\!\! \Big).\!\!\!\!
\end{array}
\end{equation}
It is easy to check that
the first term in the RHS of \eqref{101024:eq4}
is equal to
$\big(X\Box^L(Y\Box^L \tilde Z)\big)_{\lambda_0,\dots,\lambda_{\alpha+\beta+\gamma}}
(v_0, \dots, v_{\alpha+\beta+\gamma})$,
while the first term in the RHS of \eqref{101024:eq5}
is equal to
$\big(X\Box^L(\tilde Z\Box^R Y)\big)_{\lambda_0,\dots,\lambda_{\alpha+\beta+\gamma}}
(v_0, \dots, v_{\alpha+\beta+\gamma})$.
Equation \eqref{101024:eq2} then follows from the
observation that the second terms in the RHS of \eqref{101024:eq4} and \eqref{101024:eq5}
differ by a factor  $(-1)^{\bar p(Y)\bar p(\tilde Z)}$.
Next, let us prove equation \eqref{101024:eq3}.
We have
\begin{equation}\label{101024:eq6}
\begin{array}{l}
\big((\tilde X\Box^R Y)\Box^R Z\big)_{\lambda_0,\dots,\lambda_{\alpha+\beta+\gamma}}
(v_0, \dots, v_{\alpha+\beta+\gamma}) =
\!\!\!\!\!\!\!
\displaystyle{
\sum_{\substack{
i_0<\dots <i_\gamma \\
i_{\gamma+1}<\dots< i_{\beta+\gamma} \\
i_{\beta+\gamma+1}<\dots< i_{\alpha+\beta+\gamma}
}}
\!\!\!\!\!\!\!\!\!\!
\epsilon_v(i_0,\dots,i_{\alpha+\beta+\gamma})
} \\
\times \tilde X_{\lambda_{i_0}+\dots+\lambda_{i_{\beta+\gamma}},
\lambda_{i_{\beta+\gamma+1}},\dots,\lambda_{i_{\alpha+\beta+\gamma}}}
\Big(Y_{\lambda_{i_0}+\dots+\lambda_{i_{\gamma}},
\lambda_{i_{\gamma+1}},\dots,\lambda_{i_{\beta+\gamma}}}
\\
\,\,\,\,\,\,\,\,\,\,\,\,\,\,\,\,\,\,\,\,\,\,\,\,\,\,\,\,\,\,\,\,
\vphantom{\Bigg(}
\big(
Z_{\lambda_{i_0},\dots,\lambda_{i_\gamma}}
(v_{i_0},\dots, v_{i_\gamma}),
v_{i_{\gamma+1}},\dots, v_{i_{\beta+\gamma}} \big),
v_{i_{\beta+\gamma+1}},\dots, v_{i_{\alpha+\beta+\gamma}} \Big)
\end{array}
\end{equation}
$$
\begin{array}{l}
\displaystyle{
+ \sum_{\substack{
i_0<\dots <i_\beta \\
i_{\beta+1}<\dots< i_{\beta+\gamma+1} \\
i_{\beta+\gamma+2}<\dots< i_{\alpha+\beta+\gamma}
}}
(-1)^{\bar p(\tilde Z)(\bar p(v_{i_0})+\dots+\bar p(v_{i_\beta})}
\epsilon_v(i_0,\dots,i_{\alpha+\beta+\gamma})
} \\
\displaystyle{
\times \tilde X_{
\lambda_{i_0}+\dots+\lambda_{i_\beta},
\lambda_{i_{\beta+1}}+\dots+\lambda_{i_{\beta+\gamma+1}},
\lambda_{i_{\beta+\gamma+2}},\dots,\lambda_{i_{\alpha+\beta+\gamma}}}
\Big(
Y_{\lambda_{i_0},\dots,\lambda_{i_\beta}}
(v_{i_0},\dots, v_{i_\beta}),
}\\
\,\,\,\,\,\,\,\,\,\,\,\,\,\,\,\,\,\,\,\,\,\,\,\,\,\,\,\,\,\,\,\,\,\,\,\,
Z_{\lambda_{i_{\beta+1}},\dots,\lambda_{i_{\beta+\gamma+1}}}
(v_{i_{\beta+1}},\dots, v_{i_{\beta+\gamma+1}}),
v_{i_{\beta+\gamma+2}},\dots, v_{i_{\alpha+\beta+\gamma}} \Big)\,.
\end{array}
$$
It is then easy to check that
the first term in the RHS of \eqref{101024:eq6}
is equal to
$\big(\tilde X\Box^R(Y\Box Z)\big)_{\lambda_0,\dots,\lambda_{\alpha+\beta+\gamma}}
(v_0, \dots, v_{\alpha+\beta+\gamma})$,
while the second term, if we exchange $Y$ and $Z$,
stays unchanged up to a factor $(-1)^{\bar p(Y)\bar p(\tilde Z)}$.
This proves \eqref{101024:eq3}.
We then have, by \eqref{101024:eq2} and \eqref{101024:eq3},
$$
\begin{array}{l}
[X,[Y,\tilde Z]]-(-1)^{\bar p(X)\bar p(Y)}[Y,[X,\tilde Z]]-
[[X,Y],\tilde Z]
=
-\Bigg(
(X\Box Y)\Box^L\tilde Z + \\
- X\Box^L(Y\Box^L\tilde Z)
-(-1)^{\bar p(Y)\bar p(\tilde Z)}
\Big((X\Box^L\tilde Z)\Box^R Y - X\Box^L(\tilde Z\Box^R Y)\Big)
\Bigg) + \\
+(\!-1\!)^{\bar p(X)\bar p(Y)}
\!\Bigg(\!\!
(Y\Box X)\Box^L\tilde Z - Y\Box^L(X\Box^L\tilde Z)
-(\!-1\!)^{\bar p(X)\bar p(\tilde Z)}
\!\Big(\!(Y\Box^L\tilde Z)\Box^R X + \\
- Y\Box^L(\tilde Z\Box^R X)\Big)
\Bigg)
-(-1)^{\bar p(\tilde Z)(\bar p(X)+\bar p(Y))}
\Bigg(
(\tilde Z\Box^R X)\Box^R Y - \tilde Z\Box^R(X\Box Y) +\\
-(-1)^{\bar p(X)\bar p(Y)}
\Big((\tilde Z\Box^R Y)\Box^R X - \tilde Z\Box^R(Y\Box X)\Big)
\Bigg)
=0\,.
\end{array}
$$
This proves that \eqref{100517:eq2} defines a representation of the Lie superalgebra $W^\partial(V)$
on $\tilde W^\partial(V)$.
With similar computations one can show that the Lie action of $W^\partial(V)$
is by derivations of the $\lambda$-brackets in $\tilde W^\partial(V)$.
Moreover, it is immediate to check that the Lie action of $W^\partial(V)$
commutes with the action $\partial$ on $\tilde W^\partial(V)$,
proving part (b).

Comparing equations \eqref{box2} and \eqref{100517:eq2}, we immediately get that
$\tint[X,\tilde Y]=[X,\tint \tilde Y]$
for every $X\in W^\partial(V)$ and $\tilde Y\in\tilde W^\partial(V)$,
proving that the map $\tint:\,\tilde W^\partial(V)\to W^\partial(V)$ is a homomorphism
of representations of the Lie superalgebra $W^\partial(V)$.
Moreover,
comparing \eqref{box3} and \eqref{100517:eq2},
we immediately get that
$[\tint\tilde X,\tilde Y]=[\tilde X_\lambda\tilde Y]\,|_{\lambda=0}$,
for every $\tilde X,\tilde Y\in\tilde W^\partial(V)$,
i.e., the Lie superalgebra action of $W^\partial(V)$
on $\tilde W^\partial_{k-h}(V)\to\tilde W^\partial_k(V)$
is compatible,
via the map $\tint:\,\tilde W^\partial(V)\to W^\partial(V)$,
with the Lie superalgebra action of
$\tilde W^{\partial}(V)/\partial\tilde W^{\partial}(V)$ on $\tilde W^{\partial}(V)$,
proving part (c).
\end{proof}
\begin{remark}\label{100517:rem}
The Lie superalgebra homomorphism $\tint$ defined in Proposition \ref{100517:prop}
in general is not surjective.
For example, if $V$ is a torsion module over $\mb F[\partial]$,
then $\tilde W^\partial_0(V)=0$ due to sesquilinearity,
while $W^\partial_0(V)=\End_{\mb F[\partial]}(V)$ needs not be zero.
However, if the $\mb F[\partial]$-module $V$ decomposes as
$V=\mc T\oplus(\mb F[\partial]\otimes U)$,
where $\mc T$ is the torsion submodule
and $\mb F[\partial]\otimes U$ is a finitely generated free submodule,
then $\tint:\,\tilde W^{\partial}_k(V)/\partial\tilde W^{\partial}_k(V) \to W^\partial_k(V)$
is a bijection for each $k\neq0$,
and, if $\mc T=0$, then
$\tint:\,\tilde W^{\partial}_0(V)/\partial\tilde W^{\partial}_0(V) \to W^\partial_0(V)$ is bijective as well.
Indeed, due to sesquilinearity,
if $\tilde X\in\tilde W^\partial_k(V)$,
then $\tilde X_{\lambda_0,\dots,\lambda_k}(v_0,\dots,v_k)$ vanishes
if one of the arguments $v_i$ lies in the torsion $\mc T\subset V$,
and $\tilde X$ is uniquely determined by its values on $U^{\otimes(k+1)}$.
Hence, we can identify $\tilde W^\partial_k(V)$ with the space of linear maps
$\tilde X:\,U^{\otimes(k+1)}\to\mb F_-[\lambda_0,\dots,\lambda_k] V$
satisfying the symmetry condition
\begin{equation}\label{101030:eq2}
\tilde X_{\lambda_0,\dots,\lambda_k}(u_0,\dots,u_k)
=\epsilon_u(i_0,\dots,i_k) \tilde X_{\lambda_{i_0},\dots,\lambda_{i_k}}(u_{i_0},\dots,u_{i_k})\,,
\end{equation}
for all permutations $(i_0,\dots,i_k)$ of $(0,\dots,k)$.
Similarly, if $X\in W^\partial_k(V)$ with $k\neq0$,
then $X_{\lambda_0,\dots,\lambda_k}(v_0,\dots,v_k)$ vanishes
if one of the arguments $v_i$ lies in the torsion $\mc T\subset V$,
and we can identify $W^\partial_k(V)$ with the space of linear maps
$X:\,U^{\otimes(k+1)}\to
\mb F[\lambda_0,\dots,\lambda_{k-1}]\otimes V$
(by identifying $\mb F_-[\lambda_0,\dots,\lambda_k]\otimes_{\mb F[\partial]}V$
and $\mb F_-[\lambda_0,\dots,\lambda_{k-1}]\otimes V$,
replacing $\lambda_k$
by $\lambda_k^\dagger=-\lambda_0-\dots-\lambda_{k-1}-\partial$)
satisfying the symmetry condition
\begin{equation}\label{101030:eq1}
X_{\lambda_0,\dots,\lambda_{k-1}}(u_0,\dots,u_k)
=\epsilon_u(i_0,\dots,i_k) X_{\lambda_{i_0},\dots,\lambda_{i_{k-1}}}(u_{i_0},\dots,u_{i_k})
\,\big|_{\lambda_k\mapsto\lambda_k^\dagger}\,,
\end{equation}
for all permutations $(i_0,\dots,i_k)$ of $(0,\dots,k)$.
Given $X\in W^\partial_k(V),\, k\neq0$,
a preimage $\tilde X\in\tilde W^\partial_k(V)$ of $X$ is
obtained by letting
$$
\tilde X_{\lambda_0,\dots,\lambda_k}(u_0,\dots,u_k)
=\frac1{k+1}\sum_{i=0}^k
X_{\lambda_0,\dots,\lambda_{k-1}}(u_0,\dots,u_k)
\,\big|_{\lambda_i\mapsto\lambda_i^\dagger}\,,
$$
where $\lambda_i^\dagger=-\lambda_0-\stackrel{i}{\check{\dots}}-\lambda_k-\partial$.
Indeed, it is immediate to check that, if $X$ satisfies the symmetry condition \eqref{101030:eq1},
then $\tilde X$ satisfies the symmetry condition \eqref{101030:eq2}.
\end{remark}

Let $V$ and $U$ be vector superspaces with parity $\bar p$, endowed with a structure
of finitely generated $\mb F[\partial]$-modules.
In analogy with the reduction introduced in Section \ref{sec:4.2},
we define the $\mb Z_+$-graded vector superspace (with parity still denoted by $\bar p$)
$\tilde W^\partial(V,U)=\bigoplus_{k\in\mb Z_+}W^\partial_k(V,U)$,
where
$$
\tilde W^\partial_k(V,U)
=\Hom{}^{\sym}_{\mb F[\partial]^{\otimes(k+1)}}(V^{\otimes(k+1)},
\mb F_-[\lambda_0,\dots,\lambda_k]\otimes U)\,.
$$
One checks that the analogue of Proposition \ref{100421:prop} holds,
if we replace tensor products over $\mb F[\partial]$ by tensor products
over the field $\mb F$.

The reduced space $\tilde W^\partial(V,U)$ is obtained as a subquotient of the universal
Lie conformal superalgebra $\tilde W^\partial(V\oplus U)$,
via the canonical isomorphism of superspaces
\begin{equation}\label{100421z:eq1}
\begin{array}{c}
\tilde{\mc U}/\tilde{\mc K}
\stackrel{\sim}{\longrightarrow} \tilde W^\partial_k(V, U)\,,
\end{array}
\end{equation}
where $\tilde{\mc U}$ and $\tilde{\mc K}$ are the following subspaces
of $\tilde W^\partial_k(V\oplus U)$:
$$
\begin{array}{rcl}
\tilde{\mc U}
&=&
\Hom{}^{\sym}_{\mb F[\partial]^{\otimes(k+1)}}((V\oplus U)^{\otimes(k+1)},
\mb F_-[\lambda_0,\dots,\lambda_k]\otimes U) \,, \\
\tilde{\mc K} &=& \big\{\tilde Y \,\big|\, \tilde Y(V^{\otimes(k+1)})=0\big\} \,.
\end{array}
$$
The following analogue of Proposition \ref{100421:prop} holds (the proof is similar):
\begin{proposition}\label{100421z:prop}
Let $X\in W^\partial_h(V\oplus U)$. Then the action of $X$ on $\tilde W^\partial(V\oplus U)$
given by Proposition \ref{100517:prop}(b)
leaves the subspaces $\tilde{\mc U}$ and $\tilde{\mc K}$ invariant
provided that
\begin{enumerate}[(i)]
\item
$X_{\lambda_0,\dots,\lambda_h}(w_0,\dots,w_h)\in \mb F_-[\lambda_0,\dots,\lambda_h]\otimes U$
if one of the arguments $w_i$ lies in $U$,
\item
$X_{\lambda_0,\dots,\lambda_h}(v_0,\dots,v_h)\in \mb F_-[\lambda_0,\dots,\lambda_h]\otimes_{\mb F[\partial]} V$
if all the arguments $v_i$ lie in $V$.
\end{enumerate}
In this case, the action of $X$ induces a well-defined linear map on the reduced space $\tilde W^\partial(V,U)$,
via the isomorphism \eqref{100421z:eq1}.
\end{proposition}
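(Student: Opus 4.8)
The plan is to mimic the proof of Proposition \ref{100414:prop}, now reading the action of $X$ through the explicit formula \eqref{100517:eq2}, which expresses $[X,\tilde Y]$ as the sum of the two box products $X\Box^L\tilde Y$ and $-(-1)^{\bar p(X)\bar p(\tilde Y)}\,\tilde Y\Box^R X$ introduced in \eqref{101024:eq1}. The key structural observation is that in $X\Box^L\tilde Y$ the map $\tilde Y$ is evaluated first and its output is fed into the first slot of $X$, whereas in $\tilde Y\Box^R X$ the map $X$ is evaluated first and its output is fed into the first slot of $\tilde Y$. Conditions (i) and (ii) control precisely how the $V$/$U$ decomposition propagates through these two nestings, exactly as in the non-conformal case; the sign in front of the second term is an invertible scalar and plays no role in the argument.

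First I would check that $\tilde{\mc U}$ is invariant. Let $\tilde Y\in\tilde{\mc U}$, so that $\tilde Y$ is valued in $\mb F_-[\dots]\otimes U$. In the term $X\Box^L\tilde Y$ the inner evaluation $\tilde Y_{\dots}(v_{i_0},\dots,v_{i_{k-h}})$ lands in $U$, hence the first argument of $X$ lies in $U$; by condition (i), $X$ then takes values in $\mb F_-[\dots]\otimes_{\mb F[\partial]}U$. In the term $\tilde Y\Box^R X$ it is $\tilde Y$ that is applied last, so the whole expression is valued in $U$ regardless of what $X$ produces. Therefore $[X,\tilde Y]\in\tilde{\mc U}$.

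Next I would check that $\tilde{\mc K}\subset\tilde{\mc U}$ is invariant. Let $\tilde Y\in\tilde{\mc K}$, so that $\tilde Y$ vanishes on $V^{\otimes(k-h+1)}$, and evaluate $[X,\tilde Y]$ on arguments $v_0,\dots,v_k\in V$. In the term $X\Box^L\tilde Y$ the inner evaluation $\tilde Y_{\dots}(v_{i_0},\dots,v_{i_{k-h}})$ has all its arguments in $V$ and thus vanishes. In the term $\tilde Y\Box^R X$ condition (ii) guarantees that $X_{\dots}(v_{i_0},\dots,v_{i_h})$ lies in $\mb F_-[\dots]\otimes_{\mb F[\partial]}V$, so that $\tilde Y$ is applied to arguments all lying in $V$; by the sesquilinearity of $\tilde Y$ the polynomial dressing coming from the $\otimes_{\mb F[\partial]}$ class is pulled out as a factor in the $\lambda_i$'s, reducing the evaluation to $\tilde Y$ on undressed elements of $V$, which again vanishes. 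Hence $[X,\tilde Y]\in\tilde{\mc K}$.

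Finally, since $\ad X$ preserves both $\tilde{\mc U}$ and $\tilde{\mc K}$ and $\tilde{\mc K}\subset\tilde{\mc U}$, it descends to a well-defined linear endomorphism of the quotient $\tilde{\mc U}/\tilde{\mc K}$, identified with $\tilde W^\partial(V,U)$ via \eqref{100421z:eq1}; this is the asserted induced map. The only genuinely delicate point is the $\tilde Y\Box^R X$ term in the $\tilde{\mc K}$-invariance step: one must make sure that feeding an element of the quotient space $\mb F_-[\dots]\otimes_{\mb F[\partial]}V$ into the first slot of $\tilde Y$ is well-defined and compatible with the vanishing of $\tilde Y$ on $V^{\otimes}$. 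This is exactly the independence-of-representative fact already established while proving Proposition \ref{100517:prop}(b), combined with sesquilinearity, so beyond the bookkeeping of the $\lambda$-variables no new argument is needed.
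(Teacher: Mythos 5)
Your proof is correct and follows exactly the route the paper intends: the paper omits the argument (declaring it ``similar'' to Propositions \ref{100421:prop} and \ref{100414:prop}, i.e.\ immediate from the definition of the action), and your verification simply unwinds formula \eqref{100517:eq2} into its $\Box^L$ and $\Box^R$ pieces and tracks the $V$/$U$ decomposition through conditions (i) and (ii), which is precisely what that omitted check amounts to. The points you single out as delicate --- choosing a representative of $X$'s output in $\mb F_-[\lambda_{i_0},\dots,\lambda_{i_h}]\otimes V$ and invoking sesquilinearity of $\tilde Y$ for independence of that choice, as established in Proposition \ref{100517:prop}(b) --- are handled correctly.
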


Furthermore, as in Section \ref{sec:2.4},
given a subalgebra $R_0$ of the Lie conformal superalgebra
$\tilde W^\partial_0(V)=\RCend(V)$,
we define a \emph{prolongation} of $R_0$ in $\tilde W^\partial(V)$
as a $\mb Z$-graded subalgebra $R=\bigoplus_{k=-1}^\infty R_k$
of the $\mb Z$-graded Lie conformal superalgebra
$\tilde W^\partial(V)=\bigoplus_{k=-1}^\infty \tilde W^\partial_k(V)$,
such that $R_{-1}=V$
and $R_0$ coincides with the given Lie conformal superalgebra.
The \emph{full prolongation}
$\tilde W^{\partial,R_0}(V)=\bigoplus_{k=-1}^\infty \tilde W^{\partial,R_0}_k(V)$
of $R_0$ is defined by letting
$\tilde W^{\partial,R_0}_{-1}(V)=V,\,\tilde W^{\partial,R_0}_0(V)=R_0$ and,
inductively, for $k\geq1$,
\begin{equation}\label{100528:eq4}
\tilde W^{\partial,R_0}_k(V)=\big\{X\in\tilde W^\partial_k(V)\,\big|\,[X_\lambda V]
\subset \mb F[\lambda]\otimes\tilde W^{\partial,R_0}_{k-1}(V)\big\}\,.
\end{equation}
It is immediate to check, by the Jacobi identity, that the above formula defines
a maximal prolongation of the Lie conformal superalgebra $\tilde W^\partial(V)$.

\subsection{The basic Lie conformal superalgebra cohomology complex}
\label{sec:6.2}

Suppose that $R$ is a Lie conformal superalgebra and $M$ is an $R$-module,
with parity $p$,
assume that $R$ and $M$ are finitely generated as $\mb F[\partial]$-modules,
and consider the corresponding element $X$ in the subset \eqref{100421:eq7} of
$W^\partial_1(\Pi R\oplus\Pi M)_{\bar1}$.
Consider the reduced superspace
$\tilde W^\partial(\Pi R,\Pi M)$
introduced in Section \ref{sec:6.1}, with parity denoted by $\bar p$.

Note that the element $X$
satisfies conditions (i) and (ii) in Proposition \ref{100421z:prop}.
Hence the action of $X$ on $\tilde W^\partial(\Pi R\oplus\Pi M)$,
induces a well-defined endomorphism $d_X$ of the reduced space $\tilde W^\partial(\Pi R,\Pi M)$
such that $d_X^2=0$,
thus making $(\tilde W^\partial(\Pi R,\Pi M),d_X)$ a cohomology complex.
The explicit formula for the differential $d_X$
is the same as \eqref{100421:eq9},
except that we view both sides as elements of
$\mb F[\lambda_0,\dots,\lambda_k]\otimes M$.
If $R$ is a (purely even) Lie conformal algebra and $M$ is a purely even $R$-module,
we recover, up to an overall sign,
the basic Lie conformal algebra cohomology complex
as defined in \cite{BKV}, \cite{BDAK} and \cite{DSK2}.

Note that, in the special case when $M=R$ is the adjoint representation,
the complex $(\tilde W^\partial(\Pi R,\Pi M),d_X)$ coincides
with the complex $(\tilde W^\partial(\Pi R),d_X)$,
where $d_X$ here is the differential given by the Lie superalgebra action of $W^\partial(\Pi R)$
on $\tilde W^\partial(\Pi R)$ given by Proposition \ref{100517:prop}(b).
By Proposition \ref{100517:prop}(c),
the canonical map $\tint:\,\tilde W^\partial(\Pi R)\to W^\partial(\Pi R)$
defined in Proposition \ref{100517:prop}(a) is a homomorphism of cohomology complexes.
The same holds for the map $\tint:\,\tilde W^\partial(\Pi R,\Pi M)\to W^\partial(\Pi R,\Pi M)$.

\subsection{Extension to infinitely generated $\mb F[\partial]$-modules}
\label{sec:6.3}

If $V$ is not necessarily finitely generated as $\mb F[\partial]$-module,
we may still consider the $\mb F[\partial]$-module $\tilde W^\partial(V)$
endowed with the $\lambda$-product $X\Box_\lambda Y$
defined by the same formula \eqref{100420:eq1}.
The problem here is that in general $X\Box_\lambda Y$ will be a formal power series in $\lambda$ 
(not anymore a polynomial) with coefficients in $\tilde W^\partial(V)$:
\begin{equation}\label{box-infinite}
\Box_\lambda:\,
\tilde W^\partial(V)\times\tilde W^\partial(V)\to\mb F[[\lambda]]\otimes\tilde W^\partial(V)\,.
\end{equation}
Note that when dealing with formal power series in $\lambda$,
the corresponding $\lambda$-bracket 
$[X_\lambda Y]=X\Box_\lambda Y-Y\Box_{-\lambda-\partial}X$
would seem ill-defined (since the coefficient of a given power of $\lambda$
will be an infinite sum).
However,
for every fixed collection of vectors $v_0,\dots,v_k\in V$,
the element $(X\Box_\lambda Y)_{\lambda_0,\dots,\lambda_k}(v_0,\dots,v_k)$
defined by \eqref{100420:eq1} is polynomial in $\lambda$
(and all the other variables $\lambda_0,\dots,\lambda_k$).
Hence,
$(Y\Box_{-\lambda-\partial} X)_{\lambda_0,\dots,\lambda_k}(v_0,\dots,v_k)$
makes perfect sense, 
by replacing $\mu$ by $-\lambda-\lambda_0-\dots-\lambda_k-\partial$ 
($\partial$ acting from the left)
in the polynomial $(Y\Box_\mu X)_{\lambda_0,\dots,\lambda_k}(v_0,\dots,v_k)$.
We can then define their $\lambda$-bracket
$$
[X_\lambda Y]
\,:\,\,
V^{\otimes k+1}\to\mb F[\lambda,\lambda_0,\dots,\lambda_k]\otimes V\,,
$$
which is well defined on a every given collection of vectors $v_0,\dots,v_k\in V$:
\begin{equation}\label{20120801:eq1}
\begin{array}{l}
\vphantom{\Big(}
\displaystyle{
[X_\lambda Y]_{\lambda_0,\dots,\lambda_k}(v_0,\dots,v_k)
=(X\Box_\lambda Y)_{\lambda_0,\dots,\lambda_k}(v_0,\dots,v_k)
} \\
\vphantom{\Big(}
\displaystyle{
-(-1)^{\bar p(X)\bar p(Y)} 
(Y\Box_{-\lambda-\lambda_0-\dots-\lambda_k-\partial}X)_{\lambda_0,\dots,\lambda_k}(v_0,\dots,v_k)\,,
}
\end{array}
\end{equation}
(where $\partial$ in the second term is moved to the left),
and, on any such collection of vectors $v_0,\dots,v_k$, 
it is clearly polynomial in all the variables, including $\lambda$.
Alternatively, $[X_\lambda Y]$ can be described as the formal power series in $\lambda$
with coefficients in $\tilde{W}^\partial(V)$ such that \eqref{20120801:eq1} holds.

Note that all the identities in Lemma \ref{100516:lem} and Corollary \ref{100516:prop}
are proved on given collection of vectors from $V$.
Hence, the same computations show that the $\lambda$-bracket
$[X_\lambda Y]$ on $\tilde W^\partial(V)$,
even for non finitely generated $\mb F[\partial]$-module $V$,
satisfies all the Lie conformal algebra axioms,
sesquilinearity, skewsymmetry and Jacobi identity,
in the sense that  each axiom holds (polynomially in the variables $\lambda,\lambda_0,
\dots,\lambda_k$) on every fixed collections of vectors from $V$.
We thus have the following
\begin{lemma}\label{infinite:lem1}
For $X\in\tilde W^{\partial}_h(V),\,Y\in\times\tilde W^{\partial}_{k-h}(V)$, their $\lambda$-bracket
$[X_\lambda Y]:\,V^{\otimes k+1}\to\mb F[\lambda,\lambda_0,\dots,\lambda_k]\otimes V$
satisfies all the Lie conformal algebra axioms on every given set of vectors from $V$:\\
sesquilinearity
$$
\begin{array}{l}
\vphantom{\Big(}
[\partial X_\lambda Y]_{\lambda_0,\dots,\lambda_k}(v_0,\dots,v_k)
=-\lambda
[X_\lambda Y]_{\lambda_0,\dots,\lambda_k}(v_0,\dots,v_k)\,,\\
\vphantom{\Big(}
[X_\lambda \partial Y]_{\lambda_0,\dots,\lambda_k}\!\!(v_0,\dots,v_k)
\!=\!
(\lambda+\lambda_0+\dots+\lambda_k+\partial)
[X_\lambda Y]_{\lambda_0,\dots,\lambda_k}\!\!(v_0,\dots,v_k),
\end{array}
$$
skewsymmetry
$$
\vphantom{\Big(}
[X_\lambda Y]_{\lambda_0,\dots,\lambda_k}\!\!(v_0,\dots,v_k)
\!=\!-(\!-1\!)^{\bar p(X)\bar p(Y)}
[Y_{-\lambda-\lambda_0-\dots-\lambda_k-\partial} Y]_{\lambda_0,\dots,\lambda_k}\!\!(v_0,\dots,v_k),
$$
where in the RHS $\partial$ is moved to the left,
and Jacobi identity (for 
$X\in\tilde W^{\partial}_h(V),\,Y\in\times\tilde W^{\partial}_{k-h}(V)$ 
and $Z\in\tilde W^{\partial}_{\ell-k}(V)$)
$$
\begin{array}{l}
\vphantom{\Big(}
[X_\lambda [Y_\mu Z]]_{\lambda_0,\dots,\lambda_\ell}(v_0,\dots,v_\ell)\!
-(-1)^{\bar p(X)\bar p(Y)}
[Y_\mu [X_\lambda Z]]_{\lambda_0,\dots,\lambda_\ell}(v_0,\dots,v_\ell)
\\
\vphantom{\Big(}
=
[[X_\lambda Y]_{\lambda+\mu} Z]_{\lambda_0,\dots,\lambda_\ell}(v_0,\dots,v_\ell)
\,.
\end{array}
$$
\end{lemma}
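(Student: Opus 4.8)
The plan is to reduce every axiom to an identity of \emph{polynomials} obtained by evaluating both sides on an arbitrary but fixed collection of vectors $v_0,\dots,v_k\in V$ (and, for the Jacobi identity, $v_0,\dots,v_\ell$), and then to observe that on such a fixed collection the verification is literally the same computation as in the finitely generated case. The crucial preliminary remark, already made before the statement, is that $(X\Box_\lambda Y)_{\lambda_0,\dots,\lambda_k}(v_0,\dots,v_k)$ is polynomial in $\lambda,\lambda_0,\dots,\lambda_k$ for fixed $v_i$; more generally, any iterated box product is polynomial on a fixed collection, since applying the inner map (say $Y$ or $Z$) to fixed vectors yields a polynomial whose finitely many coefficients lie in $V$, and feeding these finitely many vectors into the outer map again produces a polynomial. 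Consequently $[X_\lambda Y]$ defined by \eqref{20120801:eq1}, as well as the composites $[X_\lambda[Y_\mu Z]]$ and $[[X_\lambda Y]_{\lambda+\mu}Z]$, are well defined as polynomials once evaluated on any fixed tuple of vectors, so each axiom makes sense as a polynomial identity.

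First I would dispose of sesquilinearity and skewsymmetry. The skewsymmetry relation is immediate from the very definition \eqref{20120801:eq1}, in which $\partial$ in the second summand is moved to the left by the substitution $\mu\mapsto-\lambda-\lambda_0-\dots-\lambda_k-\partial$. The first sesquilinearity relation follows at once from \eqref{100420:eq1} together with the $\mb F[\partial]$-module structure and skewsymmetry, and the second one is the computation already performed in the proof of Lemma~\ref{100516:lem}(a) for $X\Box_\lambda(\partial Y)$; since that computation is carried out termwise on a fixed collection of vectors and never invokes the global polynomiality of $\Box_\lambda$, it transfers verbatim to the present setting.

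The heart of the matter is the Jacobi identity, which I would deduce, exactly as in the finitely generated case, from the right symmetry of the associator of $\Box_\lambda$ (Lemma~\ref{100516:lem}(b)) together with the purely formal identity of Lemma~\ref{100516:lemb}. The point is that the derivation of \eqref{eq:rightassoc} through the explicit expansions \eqref{101011:eq1}--\eqref{101011:eq3} consists entirely of reindexing the summation, together with uses of the sesquilinearity and symmetry of $X,Y,Z$; all of these are performed after evaluation on a fixed tuple $v_0,\dots,v_{\alpha+\beta+\gamma}$, where every quantity is a polynomial, so the hypothesis of finite generation is never used. Likewise, the identity of Lemma~\ref{100516:lemb}, which expresses $[a_\lambda[b_\mu c]]-(-1)^{p(a)p(b)}[b_\mu[a_\lambda c]]-[[a_\lambda b]_{\lambda+\mu}c]$ as an alternating sum of associators, is a formal consequence of \eqref{box4}; I would apply it with $a=X$, $b=Y$, $c=Z$ and interpret each bracket and associator as a polynomial in the evaluated variables, so that right symmetry forces the alternating sum to vanish.

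The step I expect to require the most care is precisely the interchange of ``forming a composite bracket'' with ``evaluating on a fixed collection of vectors''. A priori $[Y_\mu Z]$ is only a formal power series in $\mu$ with coefficients in $\tilde W^\partial(V)$, cf.~\eqref{box-infinite}, so $[X_\lambda[Y_\mu Z]]$ is a double power series in $\lambda,\mu$; the obstacle is to justify that, evaluated on any fixed $v_0,\dots,v_\ell$, it collapses to a genuine polynomial and that the termwise manipulations used in the associator computation commute with this evaluation. This is settled by the polynomiality remark of the first paragraph: on a fixed tuple the inner value $[Y_\mu Z]_{\lambda_{i_0},\dots}(v_{i_0},\dots)$ is a polynomial with finitely many coefficient vectors, so only finitely many terms of the formal power series contribute, and substituting $\mu\mapsto-\lambda-\lambda_0-\dots-\lambda_\ell-\partial$ is a legitimate operation on that polynomial. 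Once this is secured, each axiom reduces word for word to the computation already established for finitely generated $V$, completing the proof.
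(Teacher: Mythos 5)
Your proof is correct and takes essentially the same route as the paper: the paper's argument is precisely the observation that the identities of Lemma~\ref{100516:lem}, Lemma~\ref{100516:lemb} and Corollary~\ref{100516:prop} are all established by computations performed on a fixed collection of vectors, where every (iterated) box product is polynomial, so they transfer verbatim to the infinitely generated case. Your closing paragraph, justifying that composite brackets such as $[X_\lambda[Y_\mu Z]]$ collapse from formal power series to polynomials upon evaluation on a fixed tuple, makes explicit exactly the point the paper relies on implicitly.
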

\begin{remark}\label{generalized-lca}
It follows from the above lemma that,
for $X,Y,Z\in \tilde{W}^\partial(V)$,
the sesquilinearity conditions $[\partial X_\lambda Y]=-\lambda[X_\lambda Y]$
and $[X_\lambda \partial Y]=(\lambda+\partial)[X_\lambda Y]$
hold in the ring of formal power series $\mb F[[\lambda]]\otimes\tilde{W}^\partial(V)$,
and similarly the Jacobi identity
$[X_\lambda [Y_\mu Z]]
-(-1)^{\bar p(X)\bar p(Y)}[Y_\mu [X_\lambda Z]]
=[[X_\lambda Y]_{\lambda+\mu} Z]$
holds in the ring $\mb F[[\lambda,\mu]]\otimes\tilde{W}^\partial(V)$.
As for the skewsymmetry relation 
$[X_\lambda Y]=-(-1)^{\bar p(X)\bar p(Y)}[Y_{-\lambda-\partial}X]$,
we can only say that it holds, for every $N\geq 0$,
in the quotient space 
$\mb F[[\lambda]]\otimes\big(\tilde{W}^\partial(V)/\partial^N\tilde{W}^\partial(V)\big)$
(we need to do this to avoid diverging series).
One may talk, in this sense, of a ``generalized'' Lie conformal superalgebra.
\end{remark}
\begin{corollary}\label{infinite:cor1}
If $R\subset\tilde W^\partial(V)$ is an $\mb F[\partial]$-submodule
with the property that the $\lambda$-bracket of every two elements $X,Y\in R$
is actually polynomial in $\lambda$ and with coefficients in $R$,
then $R$ is a (honest) Lie conformal superalgebra.
\end{corollary}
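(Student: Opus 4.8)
The plan is to check directly that $R$, equipped with the restriction of the $\lambda$-bracket $[X_\lambda Y]$ from \eqref{20120801:eq1}, satisfies each of the Lie conformal superalgebra axioms as a genuine identity, rather than merely ``on every collection of vectors.'' The hypothesis does two things for us at once: it guarantees that $R$ is closed under the bracket, and that the bracket restricts to a map $R\otimes R\to\mb F[\lambda]\otimes R$ which is \emph{polynomial}, not a formal power series. Since $R$ is by assumption an $\mb F[\partial]$-submodule, it is also closed under $\partial$. Thus the only remaining task is to promote the vector-wise identities of Lemma \ref{infinite:lem1} to honest identities in the appropriate polynomial rings $\mb F[\lambda]\otimes R$ and $\mb F[\lambda,\mu]\otimes R$.

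The bridge I would use is the defining feature of $\tilde W^\partial(V)$: an element of $\tilde W^\partial_k(V)$ \emph{is} a map $V^{\otimes(k+1)}\to\mb F_-[\lambda_0,\dots,\lambda_k]\otimes V$, so two elements of $\mb F[\lambda]\otimes\tilde W^\partial_k(V)$ (respectively $\mb F[\lambda,\mu]\otimes\tilde W^\partial_k(V)$) coincide precisely when they take equal values on every tuple $(v_0,\dots,v_k)$. For sesquilinearity and for the Jacobi identity this immediately closes the argument: by the closedness hypothesis all expressions entering these axioms, including the iterated brackets $[X_\lambda[Y_\mu Z]]$ and their counterparts, are polynomial with coefficients in $R$, and Lemma \ref{infinite:lem1} asserts the required equalities after evaluation on any $(v_0,\dots,v_k)$; hence they hold as identities in $\mb F[\lambda]\otimes R$ and $\mb F[\lambda,\mu]\otimes R$ respectively.

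The delicate step, and the one where the polynomiality hypothesis is genuinely needed, is skewsymmetry, $[X_\lambda Y]=-(-1)^{\bar p(X)\bar p(Y)}[Y_{-\lambda-\partial}X]$. As flagged in Remark \ref{generalized-lca}, in the general infinitely generated setting the substitution $\mu\mapsto-\lambda-\partial$ into $[Y_\mu X]$ need not converge, and the relation survives only modulo $\partial^N$. Here, however, $[Y_\mu X]$ is by assumption polynomial in $\mu$ with coefficients in $R$, so the substitution $\mu\mapsto-\lambda-\partial$ (with $\partial$ moved to the left) produces a bona fide element of $\mb F[\lambda]\otimes R$ with no divergence. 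Once this well-definedness is in place, the same vector-wise principle applies: Lemma \ref{infinite:lem1} gives skewsymmetry on every collection of vectors, whence it holds as an identity in $\mb F[\lambda]\otimes R$.

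I expect this last point to be the main, indeed essentially the only, obstacle worth spelling out: every other axiom is a formal transfer from Lemma \ref{infinite:lem1} via the determination of elements by their values on vectors, but skewsymmetry is the axiom whose very well-definedness in the infinite setting hinges on the polynomiality assumption built into the statement of the corollary.
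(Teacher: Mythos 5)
Your proposal is correct, and it is exactly the argument the paper leaves implicit: the paper's own proof reads ``Obvious,'' meaning precisely that the vector-wise identities of Lemma \ref{infinite:lem1} (and Remark \ref{generalized-lca}) transfer to honest identities in $\mb F[\lambda]\otimes R$ once the polynomiality hypothesis makes all expressions, in particular the substitution $\mu\mapsto-\lambda-\partial$ in the skewsymmetry axiom, well defined. Your write-up correctly identifies skewsymmetry as the only axiom where polynomiality is genuinely needed rather than merely convenient, which is the heart of why the corollary is stated with that hypothesis.
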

\begin{proof}
Obvious.
\end{proof}
The above result will be applied in the next sections,
when studying the universal odd PVAs $\tilde W^{\partial,\as}(\Pi\mc V)$ 
and $\tilde W^{\var}(\Pi\mc V)$.

We can also extend, to the case of infinitely generated $\mb F[\partial]$-modules $V$,
the notions of prolongation and full prolongation.
\begin{definition}\label{infinite:def1}
Let $R_0\subset\tilde W^\partial_0(V)=\RCend(V)$
be an $\mb F[\partial]$-submodule with the property that, for every $X,Y\in R_0$,
the formal power series $[X_\lambda Y]$ has coefficients in $R_0$.
\begin{enumerate}[(a)]
\item
A \emph{prolongation} of $R_0$ in $\tilde W^\partial(V)$
is a $\mb Z$-graded $\mb F[\partial]$-submodule 
$R=\bigoplus_{k=-1}^\infty R_k\subset\tilde W^\partial(V)$, with $R_k\subset\tilde W^\partial_k(V)$,
such that $R_{-1}=V$, $R_0$ coincides with the given $\mb F[\partial]$-module,
and, for every $X\in R_h,Y\in R_{k-h}$,
the formal power series $[X_\lambda Y]$ has coefficients in $R_k$.
\item
The \emph{full prolongation}
$\tilde W^{\partial,R_0}(V)=\bigoplus_{k=-1}^\infty \tilde W^{\partial,R_0}_k(V)$
of $R_0$ is defined by letting
$\tilde W^{\partial,R_0}_{-1}(V)=V,\,\tilde W^{\partial,R_0}_0(V)=R_0$ and,
inductively, for $k\geq1$,
\begin{equation}\label{100528:eq4-infinite}
\tilde W^{\partial,R_0}_k(V)=\big\{X\in\tilde W^\partial_k(V)\,\big|\,[X_\lambda V]
\subset \mb F[[\lambda]]\otimes\tilde W^{\partial,R_0}_{k-1}(V)\big\}\,.
\end{equation}
\end{enumerate}
\end{definition}
One easily checks, by the Jacobi identity, that \eqref{100528:eq4-infinite}
defines indeed a prolongation of $\tilde W^\partial(V)$.


\section{The universal (odd) Poisson vertex superalgebra
for a differential superalgebra $\mc V$
and basic PVA cohomology}
\label{sec:7}

\subsection{The universal odd PVA $\tilde W^{\partial,\as}(\Pi\mc V)$}
\label{sec:7.1}

Throughout this section, we let $\mc V$ be a commutative associative differential superalgebra,
with a given even derivation $\partial$, and with parity denoted by $p$.
We assume moreover that $\mc V$ is finitely generated as a differential algebra,
i.e. there are finitely many elements which, along with all their derivatives, generate $\mc V$.

We let $\RCder(\mc V)$ be the Lie conformal superalgebra of right conformal derivations
of $\mc V$, namely the linear maps
$X_\lambda:\, \mc V\to \mb F[\lambda]\otimes\mc V$,
satisfying \eqref{100528:eq1} and
\begin{equation}\label{100528:eq2}
X_\lambda(uv)=X_{\lambda+\partial}(u)_\to v+(-1)^{p(u)p(v)}X_{\lambda+\partial}(v)_\to u\,.
\end{equation}
This is a subalgebra of the space $\RCend(\mc V)$ of right conformal endomorphisms,
$X_\mu:\,\mc V\to\mb F[\mu]\otimes\mc V$,
satisfying \eqref{100528:eq1}
with $\lambda$-bracket given by
$[X_\lambda Y]_\mu(v)
=X_{-\lambda-\partial}(Y_\mu(v))-(-1)^{\bar p(X)\bar p(Y)} Y_\lambda(X_\mu(v))$
(cf. \eqref{100420:eq1}).
Though the $\lambda$-bracket on $\RCend(\mc V)$ has values in formal power series
(see Section \ref{sec:6.3}),
when restricted to $\RCder(\mc V)$ it is polynomial in $\lambda$,
due to the assumption that $\mc V$ is a finitely generated differential algebra.
Indeed, due to the sesquilinearity assumption \eqref{100528:eq1} 
and the Leibniz rule \eqref{100528:eq2},
an element of $\RCder(\mc V)$ is determined by its values on a set 
of differential generators of $\mc V$.
\begin{remark}\label{20110528:rem2}
Isomorphism \eqref{100528:eq3} restricts to an isomorphism
of $\RCder(\mc V)$ to the Lie conformal superalgebra $\Cder(\mc V)$
of all conformal derivations of $\mc V$,
namely the conformal endomorphisms of $\mc V$, satisfying
$X_\lambda(uv)=X_\lambda(u)v+(-1)^{p(u)p(v)}X_\lambda(v)u$.
\end{remark}

Recall from Section \ref{sec:6.3} the definition of
the $\mb F[\partial]$-module
$\tilde W^\partial(\Pi \mc V)=\bigoplus_{k=-1}^\infty \tilde W^\partial_k(\Pi \mc V)$
together with the $\lambda$-bracket
which makes it a ``generalized'' Lie conformal superalgebra
(in the sense of Remark \ref{generalized-lca}).
We denote its parity by $\bar p$.
Consider the full prolongation (cf. Definition \ref{infinite:def1}),
associated to the Lie conformal superalgebra 
$\RCder(\mc V)\subset\RCend(\mc V)=\tilde W^\partial_0(\Pi\mc V)$,
which we denote by
$$
\tilde W^{\partial,\as}(\Pi\mc V)=\bigoplus_{k=-1}^\infty\tilde W^{\partial,\as}_k(\Pi\mc V)\subset
\tilde W^\partial(\Pi\mc V)\,.
$$
\begin{proposition}\label{100422z:prop1}
\begin{enumerate}[(a)]
\item
For every $k\geq-1$, the superspace $\tilde W^{\partial,\as}_k(\Pi\mc V)$
is the subspace of $\tilde W^{\partial}_k(\Pi\mc V)$,
consisting of linear maps
$X:\,(\Pi\mc V)^{\otimes(k+1)}\to\mb F[\lambda_0,\dots,\lambda_k]\otimes\Pi \mc V$
satisfying the symmetry and sesquilinearity conditions,
and the Leibniz rule \eqref{100422c:eq1}
(where both sides are interpreted as elements 
of $\mb F[\lambda_0,\dots,\lambda_k]\otimes\Pi \mc V$).
\item
$\tilde W^{\partial,\as}(\Pi\mc V)$ is a Lie conformal superalgebra.
\end{enumerate}
\end{proposition}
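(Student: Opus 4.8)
The plan is to establish the explicit Leibniz description in (a) first, and then feed it into Corollary \ref{infinite:cor1} to obtain the honest Lie conformal superalgebra structure in (b). For part (a) I would argue by induction on $k$, closely paralleling the proof of Proposition \ref{100422c:prop1} for the non-tilde algebra $W^{\partial,\as}(\Pi\mc V)$, but carried out inside the full prolongation of Definition \ref{infinite:def1}(b). The base cases are immediate: for $k=-1$ the Leibniz condition \eqref{100422c:eq1} is vacuous and $\tilde W^{\partial,\as}_{-1}(\Pi\mc V)=\Pi\mc V$ by definition, while for $k=0$ the rule \eqref{100422c:eq1} is exactly the conformal-derivation Leibniz rule \eqref{100528:eq2}, so that $\tilde W^{\partial,\as}_0(\Pi\mc V)=\RCder(\mc V)$ as required.

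For the inductive step, recall that by \eqref{100528:eq4-infinite} an element $X\in\tilde W^\partial_k(\Pi\mc V)$ lies in $\tilde W^{\partial,\as}_k(\Pi\mc V)$ if and only if $[X_\lambda v]$ has all its coefficients in $\tilde W^{\partial,\as}_{k-1}(\Pi\mc V)$ for every $v\in\mc V$. Using \eqref{101125:eq1} to rewrite $[X_{-\lambda_0-\partial}v]_{\lambda_1,\dots,\lambda_k}(v_1,\dots,v_k)$ as $X_{\lambda_0,\lambda_1,\dots,\lambda_k}(v,v_1,\dots,v_k)$, the inductive hypothesis (the Leibniz characterization of $\tilde W^{\partial,\as}_{k-1}$) translates this membership into the statement that $X$ satisfies \eqref{100422c:eq1} in the arguments $1,\dots,k$ for every fixed $v$ placed in slot $0$. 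Finally, the symmetry condition on $X$ converts the Leibniz rule in any single slot into the Leibniz rule in slot $0$ as well, so the full rule \eqref{100422c:eq1} holds; this proves (a).

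For part (b) I would invoke Corollary \ref{infinite:cor1}: since $\tilde W^{\partial,\as}(\Pi\mc V)$ is an $\mb F[\partial]$-submodule of $\tilde W^\partial(\Pi\mc V)$ (being a full prolongation), it suffices to check that for $X\in\tilde W^{\partial,\as}_h(\Pi\mc V)$ and $Y\in\tilde W^{\partial,\as}_{k-h}(\Pi\mc V)$ the $\lambda$-bracket $[X_\lambda Y]$ is polynomial in $\lambda$ with coefficients in $\tilde W^{\partial,\as}_k(\Pi\mc V)$. That the coefficients $Z_n$ (where a priori $[X_\lambda Y]=\sum_n\lambda^n Z_n$ is an element of $\mb F[[\lambda]]\otimes\tilde W^\partial_k(\Pi\mc V)$) already lie in $\tilde W^{\partial,\as}_k(\Pi\mc V)$ is exactly the prolongation property recorded after Definition \ref{infinite:def1}. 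By part (a) each such $Z_n$ then obeys the Leibniz rule \eqref{100422c:eq1}, hence so does the whole power series $[X_\lambda Y]$; together with the sesquilinearity supplied by Lemma \ref{infinite:lem1}, this shows that $[X_\lambda Y]$ is completely determined by its values on tuples formed from the finitely many differential generators of $\mc V$.

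The crux, and the step I expect to be the main obstacle, is to upgrade this ``determined by generators'' statement into an honest bound on the degree in the bracket variable $\lambda$. The key observation is that the two reduction moves never touch $\lambda$: sesquilinearity \eqref{100528:eq1} applied to an argument of the form $\partial^n(\text{generator})$ only extracts a factor $(-\lambda_i)^n$ in the \emph{argument} spectral variable $\lambda_i$, and the Leibniz rule \eqref{100422c:eq1} only shifts an argument spectral variable by $\partial$ while moving $\partial$ onto a coefficient in $\mc V$; neither operation raises the degree in $\lambda$. Consequently $\deg_\lambda[X_\lambda Y]_{\lambda_0,\dots,\lambda_k}(v_0,\dots,v_k)$ is bounded, uniformly in $(v_0,\dots,v_k)$, by the maximum of $\deg_\lambda$ taken over the finitely many generator-tuples, which is finite since $[X_\lambda Y]$ evaluated on any fixed tuple is polynomial in $\lambda$ by Lemma \ref{infinite:lem1}. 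Thus $[X_\lambda Y]$ is polynomial in $\lambda$, and Corollary \ref{infinite:cor1} yields that $\tilde W^{\partial,\as}(\Pi\mc V)$ is a genuine Lie conformal superalgebra.
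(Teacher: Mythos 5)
Your proposal is correct and takes essentially the same approach as the paper: part (a) is the ``easy induction on $k\geq0$'' the paper invokes, carried out via the prolongation condition \eqref{100528:eq4-infinite}, identity \eqref{101125:eq1} and the symmetry of $X$, and part (b) is exactly the paper's argument that the coefficients of $[X_\lambda Y]$ lie in $\tilde W^{\partial,\as}_k(\Pi\mc V)$ by the prolongation property, are determined by their values on the finitely many tuples of differential generators, and each such value is polynomial in $\lambda$, so Corollary \ref{infinite:cor1} applies. The only cosmetic slip is attributing the argument-wise sesquilinearity used in the reduction to generator-tuples to Lemma \ref{infinite:lem1}; it is in fact part of the definition of $\tilde W^{\partial}_k(\Pi\mc V)$, but this does not affect the argument.
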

\begin{proof}
Part (a) follows by an easy induction on $k\geq0$.
Let us next prove part (b).
Due to the sesquilinearity and the Leibniz rule \eqref{100422c:eq1},
an element $X\in\tilde W^{\partial,\as}_k(\Pi\mc V)$
is determined by its values on a set of differential generators of $\mc V$.
On the other hand, for $X\in\tilde W^{\partial,\as}_h(\Pi\mc V)$
and $Y\in\tilde W^{\partial,\as}_{k-h}(\Pi\mc V)$,
we have that
$[X_\lambda Y]_{\lambda_0,\dots,\lambda_k}(v_{i_0},\dots,v_{i_k})$
is polynomial in $\lambda$ (and all the other variables $\lambda_0,\dots,\lambda_k$)
for every $k$-tuple $(v_{i_0},\dots,v_{i_k})$ consisting of differential generators of $V$.
Since such $k$-tuples are finitely many,
we deduce that the $\lambda$-bracket $[X_\lambda Y]$
is polynomial in $\lambda$ 
(and with coefficients in $\tilde W^{\partial,\as}_k(\Pi\mc V)$, by definition
of prolongation).
Hence, the statement follows from Corollary \ref{infinite:cor1}.
\end{proof}

We next define a structure of a commutative associative superalgebra
on the superspace $\Pi \tilde W^{\partial,\as}(\Pi\mc V)$,
making it an odd Poisson vertex superalgebra.
Let $X\in \Pi \tilde W^{\partial,\as}_{h-1}(\Pi \mc V)$
and $Y\in \Pi \tilde W^{\partial,\as}_{k-h-1}(\Pi \mc V)$, for $h\geq0,\,k-h\geq0$,
and denote by $p(X)$ and $p(Y)$ their parities in these spaces.
We define their \emph{concatenation product} $X\wedge Y\in \Pi \tilde W^{\partial,\as}_{k-1}(\Pi \mc V)$
as the following map:
\begin{equation}\label{100422z:eq4}
\begin{array}{c}
\displaystyle{
(X\wedge Y)_{\lambda_1,\dots,\lambda_k}(a_1,\dots,a_k)
=
\!\!\!\!
\sum_{\substack{
i_1<\dots <i_{h}\\
i_{h+1}<\dots< i_k}}
\!\!\!\!
\epsilon_{a}(i_1,\dots,i_k) (-1)^{p(Y)(\bar p(a_{i_1})+\dots+\bar p(a_{i_h}))}
} \\
\displaystyle{
\times
X_{\lambda_{i_1},\dots,\lambda_{i_h}}(a_{i_1},\dots,a_{i_h})
Y_{\lambda_{i_{h+1}},\dots,\lambda_{i_k}}(a_{i_{h+1}},\dots,a_{i_k})\,,
}
\end{array}
\end{equation}
where $\epsilon_{a}(i_1,\dots,i_k)$ is as in \eqref{100418:eq1}
for the elements $a_1,\dots,a_k\in\Pi \mc V$.
\begin{proposition}\label{100422z:prop3}
\begin{enumerate}[(a)]
\item
The $\mb Z_+$-graded superspace, with parity $p$,
$\tilde{\mc G}(\mc V)=\bigoplus_{k=0}^\infty\tilde{\mc G}_k(\mc V)$,
where $\tilde{\mc G}_k(\mc V)=\Pi\tilde W_{k-1}^{\partial,\as}(\Pi \mc V)$,
together with the concatenation product
$\wedge:\,\tilde{\mc G}_h(\mc V)\times\tilde{\mc G}_{k-h}(\mc V)\to\tilde{\mc G}_k(\mc V)$
given by \eqref{100422z:eq4},
and with the Lie conformal superalgebra $\lambda$-bracket
on $\Pi\tilde{\mc G}(\mc V)=\tilde W^{\partial,\as}(\Pi \mc V)$,
is a $\mb Z_+$-graded odd Poisson vertex superalgebra.
\item
The representation of the Lie superalgebra $W^{\partial}(\Pi\mc V)$
on $\tilde W^{\partial}(\Pi\mc V)$
defined by Proposition \ref{100517:prop}(b)
restricts to a representation of its subalgebra $W^{\partial,\as}(\Pi\mc V)$
on the odd Poisson vertex superalgebra
$\tilde W^{\partial,\as}(\Pi\mc V)\subset \tilde W^{\partial}(\Pi\mc V)$,
commuting with $\partial$ and
acting by derivations of both
the concatenation product and the $\lambda$-bracket.
\item
The canonical map $\tint:\,\tilde W^\partial(\Pi\mc V)\to W^\partial(\Pi\mc V)$
defined in Proposition \ref{100517:prop}(a)
restricts to a map $\tint:\,\tilde W^{\partial,\as}(\Pi\mc V)\to W^{\partial,\as}(\Pi\mc V)$,
which is a homomorphism of representations of the Lie superalgebra $W^{\partial,\as}(\Pi\mc V)$.
Moreover, this map induces an injective Lie algebra homomorphism
$\tint:\,\tilde W^{\partial,\as}(\Pi\mc V)/\partial\tilde W^{\partial,\as}(\Pi\mc V)
\to W^{\partial,\as}(\Pi\mc V)$.
\end{enumerate}
\end{proposition}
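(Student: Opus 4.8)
The plan is to adapt, with the addition of the formal variables $\lambda_0,\dots,\lambda_k$, the arguments already carried out in the non-conformal case in Proposition \ref{100422:prop3} (the odd Poisson superalgebra structure on $\Pi W^{\as}(\Pi A)$) and in Proposition \ref{100517:prop} (the relation between $\tilde W^\partial(V)$ and $W^\partial(V)$). Throughout I would rely on Proposition \ref{100422z:prop1}, which already guarantees that $\tilde W^{\partial,\as}(\Pi\mc V)$ is an honest Lie conformal superalgebra, so that only the interaction of the $\lambda$-bracket with the concatenation product $\wedge$, and the behaviour of $\tint$ on the subalgebra, remain to be checked.

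For part (a), I would first verify that $X\wedge Y$ defined by \eqref{100422z:eq4} again lies in $\tilde W^{\partial,\as}_{k-1}(\Pi\mc V)$: rewriting \eqref{100422z:eq4} as a sum over all of $S_k$ gives supersymmetry exactly as in Proposition \ref{100422:prop3}, while the Leibniz rule \eqref{100422c:eq1} for $X\wedge Y$ follows from that of $X$ and $Y$ by separating the terms in which the factored argument falls into the $X$-block from those in which it falls into the $Y$-block, keeping track of the shifts $\lambda_i\mapsto\lambda_i+\partial$. Commutativity and associativity of $\wedge$ are then immediate from those of the product on $\mc V$, as before. The remaining point, the odd Leibniz rule $[X_\lambda Y\wedge Z]=[X_\lambda Y]\wedge Z+(-1)^{\bar p(X)p(Y)}Y\wedge[X_\lambda Z]$ (in the generalized sense of Remark \ref{generalized-lca}), I would deduce from the conformal analogues of the two box-product Leibniz formulas \eqref{100423:eq3} of Lemma \ref{100423:lem}, which take the form
\begin{equation*}
\begin{array}{rcl}
X\Box_\lambda(Y\wedge Z) &=& (X\Box_\lambda Y)\wedge Z+(-1)^{\bar p(X)p(Y)}Y\wedge(Z\Box_{-\lambda-\partial} X)\,,\\
(X\wedge Y)\Box_\lambda Z &=& X\wedge (Y\Box_\lambda Z)+(-1)^{p(Y)\bar p(Z)}(X\Box_\lambda Z)\wedge Y\,,
\end{array}
\end{equation*}
the $-\lambda-\partial$ appearing (as in \eqref{box3}) precisely in the term where $X$ is moved to the right past $Y$; each is proved by a direct computation from \eqref{100420:eq1}, \eqref{100422z:eq4} and the Leibniz rule \eqref{100422c:eq1}, parallel to the proof of Lemma \ref{100423:lem}.

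For part (b), I would check that the action \eqref{100517:eq2} of $X\in W^{\partial,\as}_h(\Pi\mc V)$ sends $\tilde Y\in\tilde W^{\partial,\as}_{k-h}(\Pi\mc V)$ into $\tilde W^{\partial,\as}_k(\Pi\mc V)$; since $[X,\tilde Y]$ is already symmetric and sesquilinear by Proposition \ref{100517:prop}(b), it suffices to verify the Leibniz rule \eqref{100422c:eq1} for $i=0$, a computation of the same shape as in Proposition \ref{100422c:prop1}. That the action commutes with $\partial$ and is by derivations of the $\lambda$-bracket is then inherited directly from Proposition \ref{100517:prop}(b). The only genuinely new point is that it is a derivation of $\wedge$, i.e. $[X,\tilde Y\wedge\tilde Z]=[X,\tilde Y]\wedge\tilde Z+(-1)^{\bar p(X)p(\tilde Y)}\tilde Y\wedge[X,\tilde Z]$, which I would obtain from left and right Leibniz formulas for the two box products $\Box^L,\Box^R$ of \eqref{101024:eq1} relative to $\wedge$, proved by the same bookkeeping as in part (a).

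For part (c), the restriction of $\tint$ lands in $W^{\partial,\as}(\Pi\mc V)$ because applying the quotient map to both sides of \eqref{100422c:eq1} turns it into the Leibniz rule defining $W^{\partial,\as}(\Pi\mc V)$; that the restricted map is a homomorphism of $W^{\partial,\as}(\Pi\mc V)$-representations is then immediate from Proposition \ref{100517:prop}(c). For the induced injective homomorphism on the quotient, Proposition \ref{100517:prop}(a) gives $\ker\tint=\partial\tilde W^\partial(\Pi\mc V)$, so it remains to show $\partial\tilde W^\partial(\Pi\mc V)\cap\tilde W^{\partial,\as}(\Pi\mc V)=\partial\tilde W^{\partial,\as}(\Pi\mc V)$; the inclusion $\supseteq$ uses that $\tilde W^{\partial,\as}(\Pi\mc V)$ is an $\mb F[\partial]$-submodule (being a full prolongation, Definition \ref{infinite:def1}), while $\subseteq$ follows from the injectivity of $\partial$ on $\tilde W^\partial(\Pi\mc V)$ together with the prolongation characterization: if $\partial Y$ lies in $\tilde W^{\partial,\as}$ then $-\lambda[Y_\lambda V]=[\partial Y_\lambda V]\subset\mb F[[\lambda]]\otimes\tilde W^{\partial,\as}$, and dividing by $\lambda$ gives $Y\in\tilde W^{\partial,\as}$. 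The main obstacle throughout is the bookkeeping in the box-product Leibniz identities of part (a) and their $\Box^L/\Box^R$ analogues in part (b), where one must correctly propagate the shifts $\lambda_i\mapsto\lambda_i+\partial$ and the Koszul signs through the sums over shuffles; once these are in place, everything else is a formal consequence of Propositions \ref{100422z:prop1} and \ref{100517:prop}.
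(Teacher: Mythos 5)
Your overall route is the same as the paper's: part (a) by showing $\wedge$ preserves $\tilde W^{\partial,\as}(\Pi\mc V)$ and then deriving the odd Leibniz rule from two box-product identities; part (b) by checking that the action \eqref{100517:eq2} preserves the subspace and is a derivation of $\wedge$ via $\Box^L,\Box^R$ identities; part (c) by restriction of Proposition \ref{100517:prop} (where your intersection argument $\partial\tilde W^\partial(\Pi\mc V)\cap\tilde W^{\partial,\as}(\Pi\mc V)=\partial\tilde W^{\partial,\as}(\Pi\mc V)$ is in fact more explicit than the paper's ``immediate''). The genuine gap is in part (a): both of your displayed box-product identities are false. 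In $X\Box_\lambda(Y\wedge Z)$ one applies the Leibniz rule \eqref{100422c:eq1} to the \emph{first argument} of $X$, which is the product $Y(\dots)Z(\dots)$; both resulting terms therefore still have $X$ on the outside, receiving the output of $Y$, respectively of $Z$. So the second term must be $(-1)^{\bar p(X)p(Y)}\,Y\wedge(X\Box_\lambda Z)$, not $Y\wedge(Z\Box_{-\lambda-\partial}X)$: the operation $Z\Box X$ (output of $X$ fed into $Z$) never arises in this expansion, and the shift $-\lambda-\partial$ has nothing to do here — it belongs to the skewsymmetrization in \eqref{box3}, i.e.\ to forming the bracket, not to expanding a single box product. (The paper's display \eqref{100423z:eq3bis} reads $Z\Box_\lambda X$ at this spot, but that is itself a typo for $X\Box_\lambda Z$; your modification does not repair it.) Symmetrically, your second identity cannot hold without shifts: in $(X\wedge Y)\Box_\lambda Z$ the first variable of $X\wedge Y$ is replaced by $-\lambda-\lambda_{i_{k-h+1}}-\dots-\lambda_{i_k}-\partial$ with $\partial$ moved to the left, and this $\partial$ acts on the coefficients of \emph{both} wedge factors; when the $Z$-output lands in the $Y$-slot, $\partial$ still hits the $X$-factor, and vice versa. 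The correct identity is
\begin{equation*}
(X\wedge Y)\Box_\lambda Z \,=\, \big(e^{\partial\partial_\lambda}X\big)\wedge (Y\Box_\lambda Z)
+(-1)^{p(Y)\bar p(Z)}(X\Box_{\lambda+\partial} Z)_\to\wedge Y\,,
\end{equation*}
where $e^{\partial\partial_\lambda}$ (cf.\ \eqref{100624:eq1}) and the arrow record precisely this cross-factor action of $\partial$. This is exactly the point where the conformal case is \emph{not} ``parallel to the proof of Lemma \ref{100423:lem}'': in the non-conformal setting there is no substitution variable through which $\partial$ can act.

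The error is fatal to your derivation, not cosmetic. Write $[X_\lambda( Y\wedge Z)]=X\Box_\lambda(Y\wedge Z)-(-1)^{\bar p(X)\bar p(Y\wedge Z)}(Y\wedge Z)\Box_{-\lambda-\partial}X$. The target $[X_\lambda Y]\wedge Z+(-1)^{\bar p(X)p(Y)}Y\wedge[X_\lambda Z]$ contains the term $(-1)^{\bar p(X)p(Y)}\,Y\wedge(X\Box_\lambda Z)$. Since $(Y\wedge Z)\Box_{-\lambda-\partial}X$ can only produce terms in which the output of $X$ is plugged into $Y$ or into $Z$, this term must come from the expansion of $X\Box_\lambda(Y\wedge Z)$; but your first identity yields $Y\wedge(Z\Box_{-\lambda-\partial}X)$ instead, and $X\Box_\lambda Z$ differs from $(-1)^{\bar p(X)\bar p(Z)}Z\Box_{-\lambda-\partial}X$ exactly by $[X_\lambda Z]$, which is nonzero in general. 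Hence your two identities are inconsistent with the odd Leibniz rule, and part (a) does not go through as proposed. Once the identities are corrected — the first with $Y\wedge(X\Box_\lambda Z)$ and no shift, the second with the shifts displayed above — the rest of your outline (including parts (b) and (c), where the $\Box^L,\Box^R$ identities are indeed shift-free, since the $\Box^R$ substitution involves no $\partial$ at all and in $\Box^L$ the Leibniz-$\partial$ cancels the substitution-$\partial$) coincides with the paper's proof.
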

\begin{proof}
First, we prove that $X\wedge Y$ in \eqref{100422z:eq4} is an element
of $\Pi \tilde W^{\partial,\as}_k(\Pi\mc V)$,
namely, it is a map $(\Pi\mc V)^{\otimes k}\to\mb F_-[\lambda_1,\dots,\lambda_k]\otimes\Pi\mc V$
satisfying the symmetry and sesquilinearity conditions, and
the Leibniz rule \eqref{100422c:eq1}.
The symmetry condition can be checked directly with the usual argument
(see e.g. the proof of Proposition \ref{100422:prop3}),
while the sesquilinearity condition is immediate by the definition \eqref{100422z:eq4}
of the concatenation product.
As for the Leibniz rule, by the symmetry condition it is enough to prove
\eqref{100422c:eq1} for $i=k$, i.e.
\begin{equation}\label{100422c:eq1bis}
\begin{array}{c}
\vphantom{\Big(}
(X\wedge Y)_{\lambda_1,\dots,\lambda_k}(a_1,\dots ,a_{k-1},bc)
\!=\!(X\wedge Y)_{\lambda_1,\dots,\lambda_{k-1},\lambda_k+\partial}
(a_1,\dots,a_{k-1},b)_\to\! c \\
\vphantom{\Big(}
+(-1)^{p(b)p(c)}
(X\wedge Y)_{\lambda_1,\dots,\lambda_{k-1},\lambda_k+\partial}
(a_1,\dots,a_{k-1},c)_\to b\,.
\end{array}
\end{equation}
We have, by the definition of the concatenation product,
\begin{equation}\label{100423z:eq1bis}
\begin{array}{l}
(X\wedge Y)_{\lambda_1,\dots,\lambda_k}(a_1,\dots,a_{k-1},bc)
= \\
\displaystyle{
\sum_{\substack{
i_1<\dots <i_{h}<k\\
i_{h+1}<\dots< i_k=k}}
\epsilon_{a_1,\dots,a_{k-1},bc}(i_1,\dots,i_k)
(-1)^{p(Y)(\bar p(a_{i_1})+\dots+\bar p(a_{i_h}))}
} \\
\displaystyle{
\vphantom{\Big(}
\,\,\,\,\,\,
\times
X_{\lambda_{i_1},\dots,\lambda_{i_h}}(a_{i_1},\dots,a_{i_h})
Y_{\lambda_{i_{h+1}},\dots,\lambda_{i_{k-1}},\lambda_k}(a_{i_{h+1}},\dots,a_{i_{k-1}},bc)
} \\
\displaystyle{
+
\sum_{\substack{
i_1<\dots <i_{h}=k\\
i_{h+1}<\dots< i_k<k}}
\epsilon_{a_1,\dots,a_{k-1},bc}(i_1,\dots,i_k)
(-1)^{p(Y)(\bar p(a_{i_1})+\dots+\bar p(a_{i_{h-1}})+\bar p(bc))}
} \\
\displaystyle{
\,\,\,\,\,\,
\times
X_{\lambda_{i_1},\dots,\lambda_{i_{h-1}},\lambda_k}(a_{i_1},\dots,a_{i_{h-1}},bc)
Y_{\lambda_{i_{h+1}},\dots,\lambda_{i_k}}(a_{i_{h+1}},\dots,a_{i_k})\,.
}
\end{array}
\end{equation}
Equation \eqref{100422c:eq1bis} can be derived from \eqref{100423z:eq1bis}
using the Leibniz rules \eqref{100422c:eq1} for $X$ and $Y$,
together with the sign identities \eqref{101109:eq1}
(valid in the first term of the RHS of \eqref{100423z:eq1bis}),
\eqref{101109:eq2}
(valid in the second term of the RHS of \eqref{100423z:eq1bis}),
and \eqref{101109:eq3}.

We now prove that the concatenation product \eqref{100422z:eq4}
makes $\Pi \tilde W^{\partial,\as}(\Pi\mc V)$ into
a commutative, associative, differential superalgebra.
First, one easily checks that $X\wedge Y$ has parity
$p(X)+p(Y)$ as an element of $\Pi\tilde W^{\partial,\as}(\Pi\mc V)$,
so that $\Pi\tilde W^{\partial,\as}(\Pi\mc V)$,
endowed with the concatenation product \eqref{100422z:eq4},
is a superalgebra.
Recalling the definition \eqref{100517:eq3} of the $\mb F[\partial]$-module structure
of $\Pi\tilde W^{\partial,\as}(\Pi\mc V)$,
it is immediate to check that $\partial$ is an even derivation
of the concatenation product \eqref{100422z:eq4},
making $\Pi\tilde W^{\partial,\as}(\Pi\mc V)$ a differential superalgebra.
Moreover,
since $\mc V$ is a commutative superalgebra, we have
$$
\begin{array}{l}
X_{\lambda_{i_{1}},\dots,\lambda_{i_h}}(a_{i_1},\dots,a_{i_h})
Y_{\lambda_{i_{h+1}},\dots,\lambda_{i_k}}(a_{i_{h+1}},\dots,a_{i_k}) \\
=\pm Y_{\lambda_{i_{h+1}},\dots,\lambda_{i_k}}(a_{i_{h+1}},\dots,a_{i_k})
X_{\lambda_{i_{1}},\dots,\lambda_{i_h}}(a_{i_1},\dots,a_{i_h})\,,
\end{array}
$$
where
$\pm = (-1)^{(p(X)+\bar p(a_{i_1})+\dots+\bar p(a_{i_h}))(p(Y)+\bar p(a_{i_{h+1}})
+\dots+\bar p(a_{i_k}))}$.
This immediately implies commutativity of the concatenation product \eqref{100422z:eq4}.
Finally,
given $X\in\Pi\tilde W^{\partial,\as}_{h-1}(\Pi\mc V),\,
Y\in\Pi\tilde W^{\partial,\as}_{k-h-1}(\Pi\mc V),\,Z\in\Pi\tilde W^{\partial,\as}_{\ell-k-1}(\Pi\mc V)$,
and $a_1,\dots,a_\ell\in\Pi\mc V$, we have, using associativity of $\mc V$,
that both $(X\wedge(Y\wedge Z))_{\lambda_1,\dots,\lambda_\ell}(a_1,\dots,a_\ell)$
and $((X\wedge Y)\wedge Z)_{\lambda_1,\dots,\lambda_\ell}(a_1,\dots,a_\ell)$ are equal to
$$
\begin{array}{l}
\displaystyle{
\sum_{\substack{
i_1<\dots <i_{h}\\
i_{h+1}<\dots< i_k\\
i_{k+1}<\dots< i_\ell
}}
\epsilon_{a}(i_1,\dots,i_\ell)
(-1)^{p(Y)(\bar p(a_{i_1})+\dots+\bar p(a_{i_h})) + p(Z)(\bar p(a_{i_1})+\dots+\bar p(a_{i_k}))}
} \\
\displaystyle{
X_{\lambda_{i_{1}},\dots,\lambda_{i_h}}(a_{i_1},\dots,a_{i_h}\!)
Y_{\lambda_{i_{h+1}},\dots,\lambda_{i_k}}(a_{i_{h+1}},\dots,a_{i_k}\!)
Z_{\lambda_{i_{k+1}},\dots,\lambda_{i_\ell}}(a_{i_{k+1}},\dots,a_{i_\ell}\!),
}
\end{array}
$$
proving associativity of the concatenation product.

To complete the proof of part (a),
we need to prove that the $\lambda$-bracket on $\tilde W^{\partial,\as}(\Pi\mc V)$
satisfies the odd Leibniz rule,
\begin{equation}\label{100423z:eq2bis}
[X_\lambda Y\wedge Z]=[X_\lambda Y]\wedge Z+(-1)^{\bar p(X) p(Y)}Y\wedge[X_\lambda Z]\,,
\end{equation}
thus making $\tilde{\mc G}(\mc V)=\Pi\tilde W^{\partial,\as}(\Pi\mc V)$
into an odd Poisson vertex superalgebra.
This follows by the following two identities,
which can be checked directly:
\begin{equation}\label{100423z:eq3bis}
\begin{array}{rcl}
X\Box_\lambda(Y\wedge Z) &=& (X\Box_\lambda Y)\wedge Z
+(-1)^{\bar p(X)p(Y)}Y\wedge(Z\Box_\lambda X)\,,\\
(X\wedge Y)\Box_\lambda Z &=& \big(e^{\partial\partial_\lambda}X\big)\wedge (Y\Box_\lambda Z)
+(-1)^{p(Y)\bar p(Z)}(X\Box_{\lambda+\partial} Z)_\to\wedge Y\,.
\end{array}
\end{equation}

Let us next prove part (b). Given $X\in W_h^{\partial,\as}(\Pi\mc V)$
and $\tilde Y\in\tilde W_{k-h}^{\partial,\as}(\Pi\mc V)$, we want to prove that
$[X,\tilde Y]$, defined by \eqref{100517:eq2}, belongs to $\tilde W_{k}^{\partial,\as}(\Pi\mc V)$,
i.e. it satisfies
\begin{equation}\label{101110:eq1}
\begin{array}{c}
\vphantom{\Big(}
[X,\tilde Y]_{\lambda_0,\dots,\lambda_k}(a_0,\dots ,a_{k-1},bc)
=[X,\tilde Y]_{\lambda_0,\dots,\lambda_{k-1},\lambda_k+\partial}
(a_0,\dots,a_{k-1},b)_\to c \\
\vphantom{\Big(}
+(-1)^{p(b)p(c)}
[X,\tilde Y]_{\lambda_0,\dots,\lambda_{k-1},\lambda_k+\partial}
(a_0,\dots,a_{k-1},c)_\to b\,.
\end{array}
\end{equation}
Recall, from Section \ref{sec:6.1}, that the left
and right box products defined in \eqref{101024:eq1} are such that
$[X,\tilde Y]=X\Box^L\tilde Y-(-1)^{\bar p(X)\bar p(Y)}\tilde Y\Box^R X$.
Since, by assumption, $X$ and $\tilde Y$
satisfy the Leibniz rule \eqref{100422c:eq1},
and using the sign identities \eqref{101109:eq1} and \eqref{101109:eq2},
we get, after a straightforward computation, that
\begin{equation}\label{101110:eq2}
\begin{array}{l}
\vphantom{\bigg(}
\displaystyle{
(X\Box^L\tilde Y)_{\lambda_0,\dots,\lambda_k}(a_0,\dots ,a_{k-1},bc)
= (X\Box^L\tilde Y)_{\lambda_0,\dots,\lambda_k+\partial}(a_0,\dots ,a_{k-1},b)_\to c
} \\
\vphantom{\bigg(}
\displaystyle{
+(-1)^{p(b)p(c)} (X\Box^L\tilde Y)_{\lambda_0,\dots,\lambda_k+\partial}(a_0,\dots ,a_{k-1},c)_\to b
} \\
\displaystyle{
+ \sum_{\substack{
i_0<\dots <i_{k-h}=k\\
i_{k-h+1}<\dots< i_k<k}}
\epsilon_{a_0,\dots,a_{k-1},bc}(i_0,\dots,i_k)
} \\
\displaystyle{
\Bigg\{
(-1)^{\bar p(X)(\bar p(\tilde Y)+\bar p(a_{i_0})+\dots+\bar p(a_{i_{k-h-1}})+p(b))}
\tilde Y_{\lambda_{i_0},\dots,\lambda_{i_{k-h-1}},\lambda_{i_{k-h}}+\dots+\lambda_{i_k}+\partial}
} \\
\vphantom{\bigg(}
\displaystyle{
\,\,\,(a_{i_0},\dots,a_{i_{k-h-1}},b)_\to
X_{-\lambda_{i_{k-h+1}}-\dots-\lambda_{i_k}-\partial,\lambda_{i_{k-h+1}},\dots,\lambda_{i_k}}
(c,a_{i_{k-h+1}},\dots,a_{i_k})
} \\
\vphantom{\bigg(}
\displaystyle{
+(\!-1\!)^{p(b)p(c)+\bar p(X)(\bar p(\tilde Y)+\bar p(a_{i_0})+\dots+\bar p(a_{i_{k-h-1}})+p(c))}
\tilde Y_{\lambda_{i_0},\dots,\lambda_{i_{k-h-1}},\lambda_{i_{k-h}}\!+\dots+\lambda_{i_k}\!+\partial}
} \\
\displaystyle{
\,\, (a_{i_0},\dots,a_{i_{k-h-1}},c)_\to
X_{-\lambda_{i_{k-h+1}}-\dots-\lambda_{i_k}-\partial,\lambda_{i_{k-h+1}},\dots,\lambda_{i_k}}
(b,a_{i_{k-h+1}},\dots,a_{i_k})
\!\Bigg\}.\!\!
}
\end{array}
\end{equation}
Similarly, for the right box product we have
\begin{equation}\label{101110:eq3}
\begin{array}{l}
\vphantom{\bigg(}
\displaystyle{
(\tilde Y\Box^RX)_{\lambda_0,\dots,\lambda_k}(a_0,\dots ,a_{k-1},bc)
= (\tilde Y\Box^RX)_{\lambda_0,\dots,\lambda_k+\partial}(a_0,\dots ,a_{k-1},b)_\to c
} \\
\vphantom{\bigg(}
\displaystyle{
+(-1)^{p(b)p(c)} (\tilde Y\Box^RX)_{\lambda_0,\dots,\lambda_k+\partial}(a_0,\dots ,a_{k-1},c)_\to b
} 
\end{array}%
\end{equation}%
\begin{equation*}%
\begin{array}{l}
\displaystyle{
+ \sum_{\substack{
i_0<\dots <i_{h}=k\\
i_{h+1}<\dots< i_k<k}}
\epsilon_{a_0,\dots,a_{k-1},bc}(i_0,\dots,i_k)
} \\
\displaystyle{
\Bigg\{
(-1)^{(\bar p(X)+\bar p(a_{i_0})+\dots+\bar p(a_{i_{h-1}})+p(b))
(p(c)+\bar p(a_{i_{h+1}})+\dots+\bar p(a_{i_{k}}))}
} \\
\vphantom{\bigg(}
\displaystyle{
\,\,\,\,\,\,\,\,\,
\times\tilde Y_{\lambda_{i_0}+\dots+\lambda_{i_{h}}+\partial,\lambda_{i_{h+1}},\dots,\lambda_{i_k}}
\,\,\,(c,a_{i_{h+1}},\dots,a_{i_{k}})_\to
} \\
\vphantom{\bigg(}
\displaystyle{
\,\,\,\,\,\,\,\,\,
\times X_{\lambda_{i_0},\dots,\lambda_{i_{h-1}},-\lambda_{i_0}-\dots-\lambda_{i_{h-1}}-\partial}
(a_{i_0},\dots,a_{i_{h-1}},b)
} \\
\vphantom{\bigg(}
\displaystyle{
+(-1)^{p(b)p(c)+(\bar p(X)+\bar p(a_{i_0})+\dots+\bar p(a_{i_{h-1}})+p(c))
(p(b)+\bar p(a_{i_{h+1}})+\dots+\bar p(a_{i_{k}}))}
} \\
\vphantom{\bigg(}
\displaystyle{
\,\,\,\,\,\,\,\,\,
\times\tilde Y_{\lambda_{i_0}+\dots+\lambda_{i_{h}}+\partial,\lambda_{i_{h+1}},\dots,\lambda_{i_k}}
\,\,\,(b,a_{i_{h+1}},\dots,a_{i_{k}})_\to
} \\
\vphantom{\bigg(}
\displaystyle{
\,\,\,\,\,\,\,\,\,
\times X_{\lambda_{i_0},\dots,\lambda_{i_{h-1}},-\lambda_{i_0}-\dots-\lambda_{i_{h-1}}-\partial}
(a_{i_0},\dots,a_{i_{h-1}},c)
\!\Bigg\}.\!\!
}
\end{array}
\end{equation*}
Note that, by the symmetry conditions on $X$ and $\tilde Y$,
the sum in the RHS of \eqref{101110:eq2} and the sum in the RHS of \eqref{101110:eq3}
differ by the factor $(-1)^{\bar p(X)\bar p(Y)}$.
Hence, combining \eqref{101110:eq2} and \eqref{101110:eq3}, we get \eqref{101110:eq1}.
This shows that we have a well-defined representation
of the Lie superalgebra $W^{\partial,\as}(\Pi\mc V)$
on $\tilde W^{\partial,\as}(\Pi\mc V)$.

Thanks to Proposition \ref{100517:prop}(b),
the action of $W^{\partial,\as}(\Pi\mc V)$ on $\tilde W^{\partial,\as}(\Pi\mc V)$
commutes with $\partial$ and
it is given by derivations of the $\lambda$-bracket.
To complete the proof of part (b) we only have to check that the Lie superalgebra action
of $W^{\partial,\as}(\Pi\mc V)$ on $\tilde W^{\partial,\as}(\Pi\mc V)$
is by derivations of the concatenation product, i.e.
\begin{equation}\label{20101115:eq1}
[X,\tilde Y\wedge\tilde Z] = [X,\tilde Y]\wedge\tilde Z
+(-1)^{\bar p(X) p(\tilde Y)}\tilde Y\wedge[X,\tilde Z]\,.
\end{equation}
This follows by the following two identities (similar to \eqref{100423z:eq3bis})
which can be checked directly:
\begin{equation}\label{100423z:eq3tris}
\begin{array}{rcl}
X\Box^L(\tilde Y\wedge\tilde Z) &=& (X\Box^L\tilde Y)\wedge\tilde Z
+(-1)^{\bar p(X)p(\tilde Y)}\tilde Y\wedge(\tilde Z\Box^L X)\,,\\
(\tilde X\wedge\tilde Y)\Box^R Z &=& \tilde X\wedge (\tilde Z\Box^R Z)
+(-1)^{p(\tilde Y)\bar p(Z)}(\tilde X\Box^R Z)\wedge \tilde Y\,.
\end{array}
\end{equation}
Finally, part (c) is immediate from Proposition \ref{100517:prop}.
\end{proof}

It follows from Propositions \ref{100422c:prop2}
and \ref{100422z:prop3} that, for any Poisson vertex superalgebra $\mc V$,
which is finitely generated as differential algebra,
we have a differential $d_X$ on the superspace $\tilde W^{\partial,\as}(\Pi \mc V)$,
given by the action of $X$ on this space,
where $X$ is the element in the set \eqref{100422c:eq3}
associated to the Lie conformal superalgebra structure on $\mc V$.
Moreover, by Proposition \ref{100422z:prop3}(b),
the differential $d_X$ is an odd derivation of the odd PVA structure
of $\tilde W^{\partial,\as}(\Pi \mc V)$.
We thus get the \emph{basic PVA cohomology complex} $(\tilde W^{\partial,\as}(\Pi \mc V),d_X)$.
By Proposition \ref{100422z:prop3}(c),
the canonical map $\tint$
is a homomorphism of cohomology complexes
$(\tilde W^{\partial,\as}(\Pi \mc V),d_X)\to (W^{\partial,\as}(\Pi \mc V),\ad X)$.


\subsection{The universal PVA $\tilde W^{\partial,\as}(\mc V)$}
\label{sec:7.3}

As in Section \ref{sec:3.2},
instead of $\tilde W^{\partial}(\Pi\mc V)$,
we may consider the universal ``generalized'' 
Lie conformal superalgebra $\tilde W^{\partial}(\mc V)$,
with parity $p$, and, inside it,
the full prolongation of $\RCder(\mc V)\subset \tilde W^{\partial}_0(\mc V)=\RCend(\mc V)$,
which we denote by
$$
\tilde W^{\partial,\as}(\mc V)=\bigoplus_{k=-1}^\infty \tilde W^{\partial,\as}_k(\mc V)\,.
$$

As in Proposition \ref{100422z:prop1}, one proves that
$\tilde W^{\partial,\as}_k(\mc V)$ consists of linear maps
$X:\,\mc V^{\otimes(k+1)}\to\mb F[\lambda_0,\dots,\lambda_k]\otimes \mc V$
satisfying the symmetry and sesquilinearity conditions,
and the Leibniz rule \eqref{100422d:eq1}.

As in Proposition \ref{100422z:prop3},
assuming that $\mc V$ is finitely generated as differential algebra,
one can define on $\tilde W^{\partial,\as}(\mc V)$ a structure of a Poisson vertex algebra,
where the $\lambda$-bracket comes 
from the  $\lambda$-bracket \eqref{20120801:eq1} on $\tilde W^{\partial}(\mc V)$,
and the commutative associative product is given by a concatenation product
as in \eqref{100422z:eq4}, with $\bar p$ replaced by $p$.
Moreover,
the representation \eqref{100517:eq2} of Lie superalgebra $W^{\partial}(\mc V)$
on $\tilde W^{\partial}(\mc V)$
induces a representation of its subalgebra $W^{\partial,\as}(\mc V)$
on the Poisson vertex superalgebra
$\tilde W^{\partial,\as}(\mc V)\subset \tilde W^{\partial}(\mc V)$,
acting by derivations of both
the concatenation product and the $\lambda$-bracket.

It follows from Proposition \ref{100422d:prop2} that,
for any odd Poisson vertex superalgebra $\mc V$,
which is finitely generated as differential algebra,
we have a differential $d_X$ on the superspace $\tilde W^{\partial,\as}(\mc V)$,
given by the action of $X$ via the representation of $W^{\partial,\as}(\mc V)$,
where $X$ in \eqref{100422d:eq3}
is associated to the Lie conformal superalgebra structure on $\Pi\mc V$.
We thus get a cohomology complex $(\tilde W^{\partial,\as}(\mc V),d_X)$.


\section{Algebras of differential functions and the variational complex}
\label{sec:8}

\subsection{Algebras of differential functions}
\label{sec:8.1}

An \emph{algebra of differential functions} $\mc V$
in one independent variable $x$ and $\ell$ dependent variables $u_i$,
indexed by the set $I=\{1,\dots,\ell\}$ ($\ell$ may be infinite),
is, by definition, a differential algebra
(i.e. a unital commutative associative algebra with a derivation $\partial$),
endowed with commuting derivations
$\frac{\partial}{\partial u_i^{(n)}}\,:\,\,\mc V\to\mc V$, for all $i\in I$ and $n\in\mb Z_+$,
such that, given $f\in\mc V$,
$\frac{\partial}{\partial u_i^{(n)}}f=0$ for all but finitely many $i\in I$ and $n\in\mb Z_+$,
and the following commutation rules with $\partial$ hold:
\begin{equation}\label{eq:comm_frac}
\Big[\frac{\partial}{\partial u_i^{(n)}} , \partial\Big] = \frac{\partial}{\partial u_i^{(n-1)}}\,,
\end{equation}
where the RHS is considered to be zero if $n=0$.
An equivalent way to write the identities \eqref{eq:comm_frac} is in terms of generating series:
\begin{equation}\label{eq:comm_frac_b}
\sum_{n\in\mb Z_+}z^n\frac{\partial}{\partial u_i^{(n)}}\circ \partial =
(z+\partial)\circ\sum_{n\in\mb Z_+}z^n\frac{\partial}{\partial u_i^{(n)}}\,.
\end{equation}
\begin{remark}\label{100503:rem}
It would be natural in this paper to consider a commutative differential superalgebra $\mc V$,
with an even derivation $\partial$.
However, we restricted ourselves to the purely even case for the sake of simplicity.
The generalization to the superalgebra case is straightforward.
\end{remark}

We call $\mc C=\ker(\partial)\subset\mc V$ the subalgebra of \emph{constant functions},
and we denote by $\mc F\subset\mc V$ the subalgebra of \emph{quasiconstant functions},
defined by
\begin{equation}\label{eq:4.2}
\mc F
=
\big\{f\in\mc V\,\big|\,\frac{\partial f}{\partial u_i^{(n)}}=0\,\,\forall i\in I,\,n\in\mb Z_+\big\}\,.
\end{equation}
It follows from \eqref{eq:comm_frac}
by downward induction that a constant function is quasiconstant: $\mc C\subset\mc F$.
Also, clearly, $\partial\mc F\subset\mc F$.
One says that $f\in\mc V$ has \emph{differential order} $n$ in the variable $u_i$
if $\frac{\partial f}{\partial u_i^{(n)}}\neq0$
and $\frac{\partial f}{\partial u_i^{(m)}}=0$ for all $m>n$.

Typical examples of algebras of differential functions are:
the ring of translation invariant differential polynomials,
$R_\ell\,=\,\mb F[u_i^{(n)}\,|\,i\in I,n\in\mb Z_+]$,
where $\partial(u_i^{(n)})=u_i^{(n+1)}$,
and the ring of differential polynomials,
$R_\ell[x]=\mb F[x,u_i^{(n)}\,|\,i\in I,n\in\mb Z_+]$,
where $\partial x=1$ and $\partial u_i^{(n)}=u_i^{(n+1)}$.
Other examples can be constructed starting from $R_\ell$ or $R_\ell[x]$
by taking a localization by some multiplicative subset $S$,
or an algebraic extension obtained by adding solutions of some polynomial equations,
or a differential extension obtained by adding solutions of some differential equations.
In all these examples,
and more generally in any algebra of differential functions extension of $R_\ell$,
the action of $\partial:\, \mc V\to\mc V$
is given by
$\displaystyle{
\partial=\frac\partial{\partial x}+\sum_{i\in I,n\in\mb Z_+} u_i^{(n+1)}\frac{\partial}{\partial u_i^{(n)}}
}$,
which implies that
\begin{equation}\label{101024:eqv1}
\mc F\cap\partial\mc V=\partial\mc F\,.
\end{equation}
Indeed, if $f\in\mc V$ has, in some variable $u_i$, differential order $n\geq0$,
then $\partial f$ has differential order $n+1$,
hence it does not lie in $\mc F$.

The \emph{variational derivative}
$\frac\delta{\delta u}:\,\mc V\to\mc V^{\oplus \ell}$
is defined by
\begin{equation}\label{eq:varder}
\frac{\delta f}{\delta u_i}\,:=\,\sum_{n\in\mb Z_+}(-\partial)^n\frac{\partial f}{\partial u_i^{(n)}}\,.
\end{equation}
It follows immediately from \eqref{eq:comm_frac_b} that
\begin{equation}\label{eq:mar11_2}
\frac{\delta}{\delta u_i}(\partial f) = 0\,,
\end{equation}
for every $i\in I$ and $f\in\mc V$,
namely, $\partial\mc V\subset\ker \frac{\delta}{\delta u}$.

A \emph{vector field} is, by definition, a derivation of $\mc V$ of the form
\begin{equation}\label{2006_X}
X=\sum_{i\in I,n\in\mb{Z}_+} P_{i,n} \frac{\partial}{\partial u_i^{(n)}}\,\,,  \quad P_{i,n} \in \mc V\,.
\end{equation}
We denote by $\Vect(\mc V)$ the space of all vector fields,
which is clearly a subalgebra of the Lie algebra $\Der(\mc V)$ of all derivations of $\mc V$.
A vector field $X$ is called \emph{evolutionary} if $[\partial,X]=0$,
and we denote by $\Vect^\partial(\mc V)$ the Lie subalgebra of all evolutionary vector fields.
Namely, $\Vect^\partial(\mc V)=\Vect(\mc V)\cap\Der^\partial(\mc V)$.
By \eqref{eq:comm_frac}, a vector field $X$ is evolutionary
if and only if it has the form
\begin{equation}\label{2006_X2}
X_P=\sum_{i\in I,n\in\mb{Z}_+} (\partial^n P_i) \frac{\partial}{\partial u_i^{(n)}}\,,
\end{equation}
where $P=(P_i)_{i\in I}\in\mc V^\ell$, is called the \emph{characteristic} of $X_P$.
As in \cite{BDSK}, we denote by $\mc V^\ell$ the space of $\ell\times1$ column vectors with entries in $\mc V$,
and by $\mc V^{\oplus\ell}$ the subspace of $\ell\times1$ column vectors with only finitely many
non-zero entries.

\subsection{de Rham complex $\tilde\Omega^\bullet(\mc V)$ and variational complex $\Omega^\bullet(\mc V)$}
\label{sec:8.2}

Here we describe the explicit construction of the complex of variational calculus
following \cite{DSK2}.

Recall that the \emph{de Rham complex} $\tilde\Omega^\bullet(\mc V)$
is defined as the
free commutative superalgebra over $\mc V$
with odd generators $\delta u_i^{(n)},\,i\in I,n\in\mb{Z}_+$
and the differential $\delta$ defined further.
The algebra $\tilde\Omega^\bullet(\mc V)$ consists of finite sums of the form
\begin{equation}\label{eq:apr24_1}
\tilde\omega=\sum_{
\substack{i_1,\dots,i_k\in I \\ m_1,\dots,m_k\in\mb{Z}_+}
}
P^{m_1\dots m_k}_{i_1\dots i_k}\,
\delta u_{i_1}^{(m_1)}\wedge\dots\wedge\delta u_{i_k}^{(m_k)}
\,\,,  \quad P^{m_1\dots m_k}_{i_1\dots i_k} \in \mc V\,.
\end{equation}
We have a natural $\mb Z_+$-grading
$\tilde\Omega^\bullet(\mc V)=\bigoplus_{k\in\mb Z_+}\tilde\Omega^k(\mc V)$
defined by letting elements in $\mc V$ have degree 0,
while the generators $\delta u_i^{(n)}$ have degree 1.
The space $\tilde\Omega^k(\mc V)$ is a free module over $\mc V$ with
a basis consisting of the elements
$\delta u_{i_1}^{(m_1)}\wedge\dots\wedge\delta u_{i_k}^{(m_k)}$,
with $(m_1,i_1)>\dots>(m_k,i_k)$ (with respect to the lexicographic order).
In particular $\tilde\Omega^0(\mc V)=\mc V$
and $\tilde\Omega^1(\mc V)=\bigoplus_{i\in I,n\in\mb Z_+}\mc V\delta u_i^{(n)}$.

We let $\delta$ be an odd derivation of degree 1 of $\tilde\Omega^\bullet(\mc V)$,
such that $\delta f=\sum_{i\in I,\,n\in\mb Z_+}\frac{\partial f}{\partial u_i^{(n)}}\delta u_i^{(n)}$
for $f\in\mc V$, and $\delta(\delta u_i^{(n)})=0$.
It is immediate to check that $\delta^2=0$ and that,
for $\tilde\omega\in\tilde\Omega^k$ as in \eqref{eq:apr24_1}, we have
\begin{equation}\label{eq:apr24_2}
\delta(\tilde\omega)
\,=\,
\sum_{j\in I,n\in\mb Z_+}\sum_{
\substack{i_1,\dots,i_k\in I \\ m_1,\dots,m_k\in\mb{Z}_+}
}
\frac{\partial P^{m_1\dots m_k}_{i_1\dots i_k}}{\partial u_j^{(n)}}\,
\delta u_j^{(n)}\wedge \delta u_{i_1}^{(m_1)}\wedge\dots\wedge\delta u_{i_k}^{(m_k)}\,.
\end{equation}

The superspace $\tilde\Omega^\bullet(\mc V)$ has a structure of an $\mb F[\partial]$-module,
where $\partial$ acts as an even derivation of the wedge product,
which extends the action on $\mc V=\tilde\Omega^0(\mc V)$, and commutes with $\delta$.
Since $\partial$ commutes with $\delta$,
we may consider the corresponding reduced complex
$\Omega^\bullet(\mc V)=\tilde\Omega^\bullet(\mc V)/\partial\tilde\Omega^\bullet(\mc V)
=\bigoplus_{k\in\mb Z_+}\Omega^k(\mc V)$,
known as the \emph{variational complex}.
By an abuse of notation, we denote by $\delta$ the corresponding differential
on $\Omega^\bullet(\mc V)$.


We identify the space $\tilde\Omega^k(\mc V)$
with the space of \emph{skewsymmetric arrays}, i.e. arrays of polynomials
\begin{equation}\label{100518:eq2}
P=\big(P_{i_1,\dots,i_k}(\lambda_1,\dots,\lambda_k)\big)_{i_1,\dots,i_k\in I}\,,
\end{equation}
where $P_{i_1,\dots,i_k}(\lambda_1,\dots,\lambda_k)\in\mb F[\lambda_1,\dots,\lambda_k]\otimes\mc V$
are zero for all but finitely many choices of indexes,
and are skewsymmetric with respect to simultaneous permutations of the variables $\lambda_1,\dots,\lambda_k$
and the indexes $i_1,\dots,i_k$.
The identification is obtained by associating $P$ in \eqref{100518:eq2}
to $\tilde\omega$ in \eqref{eq:apr24_1}, where $P^{m_1,\dots,m_k}_{i_1,\dots,i_k}$
is the coefficient of $\lambda_1^{m_1}\dots\lambda_k^{m_k}$ in $P_{i_1,\dots,i_k}(\lambda_1,\dots,\lambda_k)$.
The formula for the differential $\delta:\,\tilde\Omega^k(\mc V)\to\tilde\Omega^{k+1}(\mc V)$
gets translated as follows:
\begin{equation}\label{100518:eq3}
(\delta P)_{i_0,\dots,i_k}(\lambda_0,\dots,\lambda_k)
=
\sum_{\alpha=0}^k(-1)^\alpha \sum_{n\in\mb Z_+}
\frac{\partial
P_{i_0,\stackrel{\alpha}{\check{\dots}},i_k}
(\lambda_0,\stackrel{\alpha}{\check{\dots}},\lambda_k)}
{\partial u_{i_\alpha}^{(n)}}
\lambda_\alpha^n \,.
\end{equation}

In this language the $\mb F[\partial]$-module structure of $\tilde\Omega^\bullet(\mc V)$
is given by
\begin{equation}\label{100518:eq4}
(\partial P)_{i_1,\dots,i_k}(\lambda_1,\dots,\lambda_k)
=
(\partial+\lambda_1+\dots+\lambda_k)
P_{i_1,\dots,i_k}(\lambda_1,\dots,\lambda_k)\,,
\end{equation}
so that the reduced space $\Omega^k(\mc V)=\tilde\Omega^k(\mc V)/\partial\tilde\Omega^k(\mc V)$
gets naturally identified with the space of arrays \eqref{100518:eq2},
where $P_{i_1,\dots,i_k}(\lambda_1,\dots,\lambda_k)$
are considered as elements of $\mb F_-[\lambda_1,\dots,\lambda_k]\otimes_{\mb F[\partial]}\mc V$.
The differential $\delta$ on $\Omega^\bullet(\mc V)$ is given by the same formula \eqref{100518:eq3}.

For example, $\Omega^0(\mc V)=\mc V/\partial\mc V$,
and $\Omega^1(\mc V)$
is naturally identified with $\mc V^{\oplus\ell}$,
thanks to the canonical isomorphism $\mb F[\lambda]\otimes_{\mb F[\partial]}\mc V\simeq\mc V$.
Under these identifications,
the map $\delta:\,\mc V/\partial\mc V\to\mc V^{\oplus\ell}$ coincides with the variational
derivative \eqref{eq:varder}:
$$
\delta(\tint f)=\frac{\delta f}{\delta u}\,,
$$
where, as in the previous sections,
we denote by $f\mapsto\tint f$ the canonical quotient map $\mc V\to\mc V/\partial\mc V$.
Furthermore, $\Omega^2(\mc V)$ is naturally identified with the space
of skewadjoint $\ell\times\ell$-matrix differential operators
$S(\partial)=\big(S_{ij}(\partial)\big)_{i,j\in I}$.
This identification is obtained by mapping
$P=\big(\sum_{m,n\in\mb Z_+}P^{m,n}_{i,j}\lambda^m\mu^n\big)_{i,j\in I}$
to the operator $S(\partial)$
given by $S_{ij}(\partial)=\sum_{m,n\in\mb Z_+}(-\partial)^n\circ P_{i,j}^{m,n}\partial^m$.
With these identifications,
formula \eqref{100518:eq3} for the differential
of $F\in\mc V^{\oplus\ell}=\Omega^1(\mc V)$ becomes
$$
\delta F = - D_F(\partial) + D_F^*(\partial)\,,
$$
where
\begin{equation}\label{frechet}
D_F(\partial)=\Big(\sum_{n\in\mb Z_+}\frac{\partial F_i}{\partial u_j}\partial^n\Big)_{i,j\in I}
\end{equation}
is the \emph{Frechet derivative} of $F$,
and $D_F^*(\partial)$ is the adjoint differential operator.

\subsection{Exactness of the variational complex}
\label{sec:8.3}

Recall from \cite{BDSK} that an algebra of differential functions $\mc V$
is called \emph{normal} if we have
$\frac\partial{\partial u_i^{(m)}}\big(\mc V_{m,i}\big)=\mc V_{m,i}$
for all $i\in I,m\in\mb Z_+$,
where we let
\begin{equation}\label{eq:july21_1}
\mc V_{m,i}\,:=\,\Big\{ f\in\mc V\,\Big|\,
\frac{\partial f}{\partial u_j^{(n)}}=0\,\,
\text{ if } (n,j)>(m,i) \text{ in lexicographic order }\Big\}\,.
\end{equation}
We also denote $\mc V_{m,0}=\mc V_{m-1,\ell}$, and $\mc V_{0,0}=\mc F$.

The algebras $R_\ell$ and $R_\ell[x]$ are obviously normal.
Moreover, any their extension $\mc V$ can be further extended to a normal algebra.
Conversely, in \cite{DSK2} it is proved that any normal algebra of differential functions $\mc V$
is automatically a differential algebra extension of $R_\ell$.

In \cite{BDSK} we proved the following result (see also \eqref{101024:eqv1}):
\begin{theorem}\label{100430:th}
If $\mc V$ is a normal algebra of differential functions, then
\begin{enumerate}[(a)]
\item
$H^k(\tilde\Omega^\bullet(\mc V),\delta)=0$ for $k\geq1$,
and $H^0(\tilde\Omega^\bullet(\mc V),\delta)=\mc F$,
\item
$H^k(\Omega^\bullet(\mc V),\delta)=0$ for $k\geq1$,
and $H^0(\Omega^\bullet(\mc V),\delta)=\mc F/\partial\mc F$.

\noindent In particular, $\frac{\delta f}{\delta u}=0$ if and only if $f\in\partial\mc V+\mc F$,
and $F\in\mc V^{\oplus\ell}$ is in the image of $\frac\delta{\delta u}$
if and only if
its Frechet derivative $D_F(\partial)$ is selfadjoint.
\end{enumerate}
\end{theorem}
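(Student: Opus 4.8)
The plan is to establish part (a) by an explicit contracting homotopy built from normality, and then to derive part (b) from (a) purely formally, via two long exact sequences.

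For part (a), the equality $H^0(\tilde\Omega^\bullet(\mc V),\delta)=\mc F$ is immediate: by the formula $\delta f=\sum_{i,n}\frac{\partial f}{\partial u_i^{(n)}}\delta u_i^{(n)}$, a function is closed exactly when all its partial derivatives vanish, i.e. $f\in\mc F$ by \eqref{eq:4.2}. For $k\geq1$ I would construct a homotopy. The key input from normality is that for every pair $(m,i)$ the map $\frac{\partial}{\partial u_i^{(m)}}\colon\mc V_{m,i}\to\mc V_{m,i}$ is onto (recall \eqref{eq:july21_1}), so it admits a right inverse ("antiderivative") $J_{m,i}\colon\mc V_{m,i}\to\mc V_{m,i}$. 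Since every form in $\tilde\Omega^k(\mc V)$ involves only finitely many variables $u_i^{(n)}$ and generators $\delta u_i^{(n)}$, I would induct on the largest pair $(m,i)$, in the lexicographic order (a well-order), occurring in a given closed form.

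For the inductive step, write $v=u_i^{(m)}$ and $\theta=\delta u_i^{(m)}$ for the top variable, and split a closed $k$-form as $\tilde\omega=\theta\wedge\alpha+\beta$, where $\alpha,\beta$ do not contain $\theta$. With the local operator $h=J_{m,i}\circ\iota_\theta$ (contract $\theta$, then integrate in $v$), a direct computation gives Cartan's relation $\delta_v h+h\delta_v=\mathrm{id}-\pi$ for the single-variable piece $\delta_v=\theta\frac{\partial}{\partial v}$ of $\delta$, where $\pi=\mathrm{id}-J_{m,i}\frac{\partial}{\partial v}$ on the $\theta$-free part projects onto forms with no $\theta$ and with coefficients independent of $v$ (hence in $\mc V_{m,i-1}$). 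Replacing $\tilde\omega$ by $\tilde\omega-\delta(h\tilde\omega)$ then produces a cohomologous closed form of strictly smaller level, provided the cross terms $\{\delta-\delta_v,h\}$ do not raise the level; this is arranged by choosing the antiderivatives $J_{m,i}$ compatibly, so that each commutes with the lower partials $\frac{\partial}{\partial u_j^{(n)}}$, $(n,j)<(m,i)$. Iterating down to the bottom level, where $\tilde\Omega^k=0$ for $k\geq1$, shows $\tilde\omega$ is exact. I expect the main obstacle to be exactly this bookkeeping: guaranteeing that the subtraction strictly lowers the level and that the induction closes, which is precisely where normality (rather than surjectivity of a single $\frac{\partial}{\partial u_i^{(m)}}$ on all of $\mc V$) is indispensable.

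For part (b), I would deduce everything from (a). First, by the $\mb F[\partial]$-action \eqref{100518:eq4}, which is multiplication by $\partial+\lambda_1+\dots+\lambda_k$ on arrays, $\partial$ is injective on $\tilde\Omega^k(\mc V)$ for $k\geq1$ (its top $\lambda$-degree part is multiplication by the non-zero-divisor $\lambda_1+\dots+\lambda_k$ in $\mc V[\lambda_1,\dots,\lambda_k]$) and has kernel $\mc C$ on $\tilde\Omega^0=\mc V$. Hence $0\to\mc C\to\tilde\Omega^\bullet\xrightarrow{\partial}\partial\tilde\Omega^\bullet\to0$ is a short exact sequence of complexes, with $\mc C$ in degree $0$ carrying the zero differential; its long exact sequence together with (a) yields $H^k(\partial\tilde\Omega^\bullet,\delta)=0$ for $k\geq1$ and $H^0(\partial\tilde\Omega^\bullet,\delta)=\partial\mc F$ (using $\mc F\cap\partial\mc V=\partial\mc F$, \eqref{101024:eqv1}). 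Second, since $\partial$ commutes with $\delta$, the sequence $0\to\partial\tilde\Omega^\bullet\to\tilde\Omega^\bullet\to\Omega^\bullet\to0$ is exact; feeding $H^\bullet(\tilde\Omega^\bullet)$ from (a) and the $H^\bullet(\partial\tilde\Omega^\bullet)$ just computed into its long exact sequence gives $H^k(\Omega^\bullet,\delta)=0$ for $k\geq1$ and $H^0(\Omega^\bullet,\delta)=\mc F/\partial\mc F$, the degree-$0$ boundary map being the inclusion $\partial\mc F\hookrightarrow\mc F$. The two final assertions are just translations under the identifications of Section~\ref{sec:8.2}: $H^0(\Omega^\bullet)=\mc F/\partial\mc F$ says that $\frac{\delta f}{\delta u}=0$ iff $f\in\partial\mc V+\mc F$, while $H^1(\Omega^\bullet)=0$ says that $F\in\mc V^{\oplus\ell}=\Omega^1(\mc V)$ is closed (equivalently, by \eqref{frechet}, has self-adjoint Frechet derivative $D_F(\partial)$) if and only if it lies in the image of $\frac{\delta}{\delta u}$.
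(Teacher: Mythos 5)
Your strategy coincides with the one the paper itself follows: the theorem is quoted from \cite{BDSK}, but the same proof is reproduced in generalized form in Section \ref{sec:11}, and your part (a) is precisely the case $K=\id$ (so $N=0$, hence $\tilde\Omega^k_{0,0}=0$ for $k\geq1$ and $\tilde\Omega^0_{0,0}=\mc F$) of Theorem \ref{100520:th}(a), your operator $h=J_{m,i}\circ\iota_\theta$ being the paper's local homotopy \eqref{eq:july24_4}. Your part (b) --- injectivity of $\partial$ on $\tilde\Omega^k(\mc V)$ for $k\geq1$, hence $H^k(\partial\tilde\Omega^\bullet,\delta)\simeq H^k(\tilde\Omega^\bullet,\delta)$ for $k\geq1$ and $H^0(\partial\tilde\Omega^\bullet,\delta)=\partial\mc F\simeq\mc F/\mc C$, fed into the long exact sequence of $0\to\partial\tilde\Omega^\bullet\to\tilde\Omega^\bullet\to\Omega^\bullet\to0$ --- is exactly Lemma \ref{100602:lem1} together with the sequence \eqref{100521:eq2}, specialized to $K=\id$; your extra short exact sequence $0\to\mc C\to\tilde\Omega^\bullet\stackrel{\partial}{\to}\partial\tilde\Omega^\bullet\to0$ is a harmless repackaging of the paper's direct computation, and this half of your argument is complete. (One remark in your favor: Theorem \ref{100520:th} assumes $\mc F$ is a field, because there the homotopy must be $\mc F$-linear to cope with the quasiconstant coefficients of $K$; for $K=\id$ an $\mb F$-linear antiderivative suffices, so your argument covers the full generality of the present statement, as does \cite{BDSK}.)

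There is, however, one genuine gap in the inductive step of part (a) as you wrote it: you control the cross terms ``by choosing the antiderivatives $J_{m,i}$ compatibly, so that each commutes with the lower partials $\frac{\partial}{\partial u_j^{(n)}}$, $(n,j)<(m,i)$.'' Normality gives only surjectivity of $\frac{\partial}{\partial u_i^{(m)}}$ on $\mc V_{m,i}$; the antiderivative is then defined through an arbitrary choice of complement to its kernel, and nothing guarantees that a system of complements invariant under all lower partials exists in a general normal algebra. (It does for $R_\ell$, where one integrates polynomials in $u_i^{(m)}$ explicitly, but for an abstract normal $\mc V$ this is an unproved claim, and your descent as written hinges on it.)

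Fortunately the hypothesis is unnecessary, and deleting it repairs the proof. What the induction needs is not that the cross terms vanish, but only that they lie in the lower filtration piece, i.e.\ that $(h\delta+\delta h)\tilde\omega-\tilde\omega\in\tilde\Omega^k_{m,i-1}(\mc V)$ for $\tilde\omega\in\tilde\Omega^k_{m,i}(\mc V)$; and this holds for \emph{any} choice of $J_{m,i}$. Indeed, writing $\delta=\delta_v+\delta'$ on $\tilde\Omega^\bullet_{m,i}(\mc V)$ with $\delta'=\sum_{(n,j)<(m,i)}\delta u_j^{(n)}\wedge\frac{\partial}{\partial u_j^{(n)}}$, one computes
$h\delta'+\delta' h=\sum_{(n,j)<(m,i)}\delta u_j^{(n)}\wedge\big[\frac{\partial}{\partial u_j^{(n)}},J_{m,i}\big]\circ\iota_\theta$:
this form involves only generators strictly below $(m,i)$, and its coefficients are annihilated by $\frac{\partial}{\partial u_i^{(m)}}$, because $\frac{\partial}{\partial u_i^{(m)}}\circ J_{m,i}=\mathrm{id}$ on $\mc V_{m,i}$ and $\frac{\partial}{\partial u_i^{(m)}}$ commutes with the lower partials; hence the coefficients lie in $\ker\big(\frac{\partial}{\partial u_i^{(m)}}\big)\cap\mc V_{m,i}=\mc V_{m,i-1}$. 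Combined with your computation of $h\delta_v+\delta_v h$, this gives $\tilde\omega-\delta(h\tilde\omega)\in\tilde\Omega^k_{m,i-1}(\mc V)$ for every closed $\tilde\omega$, and the descent closes. This ``homotopy identity modulo $\tilde\Omega^k_{m,i-1}$'' formulation, with an arbitrary antiderivative, is exactly how the paper's Lemma \ref{100519:lem}(c) (conditions \eqref{100520:eq6}, read with $K=\id$, $N=0$) handles the same point.
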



\section{The Lie superalgebra of variational polyvector fields and PVA cohomology}
\label{sec:9}

Let $\mc V$ be an algebra of differential functions
extension of the algebra of differential polynomials
$R_\ell=\mb F[u_i^{(n)}\,|\,i\in I,n\in\mb Z_+]$.
Recall from Section \ref{sec:5.1} the $\mb Z$-graded Lie superalgebra $W^{\partial,\as}(\Pi\mc V)$,
obtained as a prolongation of the Lie algebra $\Der^\partial(\mc V)$
of derivations of $\mc V$, commuting with $\partial$,
in the universal Lie superalgebra $W^\partial(\Pi\mc V)$.
In Section \ref{sec:9.1}
we introduce a smaller $\mb Z$-graded subalgebra of $W^\partial(\Pi\mc V)$,
which we call the Lie superalgebra of variational polyvector fields,
denoted by $W^{\var}(\Pi \mc V)=\bigoplus_{k=-1}^\infty W^{\var}_k$.
It is obtained as a prolongation of the Lie subalgebra of evolutionary vector fields
$\Vect^\partial(\mc V)\subset\Der^\partial(\mc V)$, introduced in Section \ref{sec:8.1}.
We then identify in Section \ref{sec:9.3}
the space $W^{\var}(\Pi \mc V)$ with the space $\Omega^\bullet(\mc V)$
introduced in Section \ref{sec:8.2},
and we relate the corresponding cohomology complexes.

\subsection{The Lie superalgebra of variational polyvector fields $W^{\var}(\Pi\mc V)$}
\label{sec:9.1}

Recall that the superspace $W^\partial_k(\Pi\mc V)$, of parity $k$ mod $2$, consists of maps
$X:\,\mc V^{\otimes(k+1)}\to\mb F_-[\lambda_0,\dots,\lambda_k]\otimes_{\mb F[\partial]}\mc V$,
satisfying sesquilinearity:
\begin{equation}\label{100530:eq1}
X_{\lambda_0,\dots,\lambda_k}(f_0,\dots\partial f_i\dots,f_k)
=-\lambda_i X_{\lambda_0,\dots,\lambda_k}(f_0,\dots,f_k)
\,\,,\,\,\,\,
i=0,\dots,k\,,
\end{equation}
and skewsymmetry:
\begin{equation}\label{100530:eq2}
X_{\lambda_{\sigma(0)},\dots,\lambda_{\sigma(k)}}(f_{\sigma(0)},\dots,f_{\sigma(k)})
=\sign(\sigma)X_{\lambda_0,\dots,\lambda_k}(f_0,\dots,f_k)
\,\,,\,\,\,\,
\sigma\in S_{k+1}\,.
\end{equation}

In the special case when $\mc V=R_\ell=\mb F[u_i^{(n)}\,|\,i\in I,n\in\mb Z_+]$,
the Leibniz rule \eqref{100422c:eq1} implies the following \emph{master equation}
for an element $X\in W^{\partial,\as}_k(\Pi\mc V)$,
which expresses the action of $X$ on $\mc V^{\otimes(k+1)}$
in terms of its action on $(k+1)$-tuples of generators:
\begin{equation}\label{100430:eq2}
\begin{array}{l}
\displaystyle{
X_{\lambda_0,\dots,\lambda_k}(f_0,\dots,f_k)
=
\sum_{\substack{i_0,\dots,i_k\in I \\ m_0,\dots,m_k\in\mb Z_+}}
\bigg(e^{\partial\partial_{\lambda_0}}\frac{\partial f_0}{\partial u_{i_0}^{(m_0)}}\bigg)
\dots
}\\
\displaystyle{
\,\,\,\,\,\,\,\,\,\dots
\bigg(e^{\partial\partial_{\lambda_k}}\frac{\partial f_k}{\partial u_{i_k}^{(m_k)}}\bigg)
(-\lambda_0)^{m_0}\dots(-\lambda_k)^{m_k}
X_{\lambda_0,\dots,\lambda_k}(u_{i_0},\dots,u_{i_k})\,.
}
\end{array}
\end{equation}
Here we are using the following formula:
\begin{equation}\label{100624:eq1}
\big(e^{\partial\partial_\lambda}f\big)P(\lambda)
:= \sum_{n\in\mb Z_+}\frac1{n!} (\partial^n f)\frac{\partial^n P(\lambda)}{\partial\lambda^n}
= P(\lambda+\partial)_\to f\,.
\end{equation}
In general, for an arbitrary algebra of differential functions $\mc V$ containing $R_\ell$,
we define the space $W^{\var}_k$ of \emph{variational} $k$-\emph{vector fields}
as the subspace of $W^\partial_k(\Pi\mc V)$ consisting of elements $X$
satisfying the master equation \eqref{100430:eq2}.
By \eqref{100624:eq1},
another form of equation \eqref{100430:eq2} is the following (for each $s=0,\dots,k$):
\begin{equation}\label{100502:eq1}
\begin{array}{l}
\displaystyle{
X_{\lambda_0,\dots,\lambda_k}(f_0,\dots,f_k)
} \\
\displaystyle{
=
\sum_{i\in I,\,m\in\mb Z_+}
X_{\lambda_0,\dots,\lambda_s+\partial,\dots,\lambda_k}(f_0,\dots,\stackrel{s}{\check{u_i}},\dots,f_k)_\to
(-\lambda_s-\partial)^{m} \frac{\partial f_s}{\partial u_i^{(m)}}
\,,
}
\end{array}
\end{equation}
where $\stackrel{s}{\check{u_i}}$ means that $u_i$ is put in place of $f_s$.

Note that the master equation implies that $X$ satisfies the Leibniz rule \eqref{100422c:eq1}.
Thus, $W^{\var}_k$ is a subspace of $W^{\partial,\as}_k(\Pi\mc V)$.

\begin{proposition}\label{100430:prop}
The superspace $W^{\var}(\Pi\mc V)$ is a $\mb Z$-graded subalgebra
of the Lie superalgebra $W^{\partial,\as}(\Pi\mc V)$.
\end{proposition}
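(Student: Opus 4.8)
The plan is to characterize the variational polyvector fields intrinsically by a vanishing condition, which then trivializes closure under the bracket. Concretely, I would first prove the following key lemma: \emph{for $X\in W^{\partial,\as}_k(\Pi\mc V)$, one has $X\in W^{\var}_k$ if and only if $X_{\lambda_0,\dots,\lambda_k}(f_0,\dots,f_k)=0$ whenever at least one argument $f_s$ is quasiconstant, i.e. lies in $\mc F$ (see \eqref{eq:4.2}).} Granting this, the proposition is immediate. Since $W^{\partial,\as}(\Pi\mc V)$ is already a subalgebra of $W^\partial(\Pi\mc V)$ by Proposition \ref{100422c:prop1}, the bracket $[X,Y]=X\Box Y-(-1)^{\bar p(X)\bar p(Y)}Y\Box X$ lies in $W^{\partial,\as}_k(\Pi\mc V)$, so it suffices to check that it satisfies the vanishing condition when $X\in W^{\var}_h$ and $Y\in W^{\var}_{k-h}$ do. Inspecting the box product \eqref{100418:eq2}, each summand partitions the arguments $v_0,\dots,v_k$ into those fed into $Y$ and those placed directly into the remaining slots of $X$; hence any quasiconstant argument $f_s\in\mc F$ is sent either into $Y$ or into an outer slot of $X$, and in either case that factor vanishes (recall that $X,Y$ are supersymmetric, so vanishing in one slot means vanishing in any slot). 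Thus every summand of $X\Box Y$ and of $Y\Box X$ vanishes, so by the lemma $[X,Y]\in W^{\var}_k$.

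The forward direction of the lemma is trivial: the right-hand side of the master equation \eqref{100430:eq2} carries a factor $\frac{\partial f_s}{\partial u_{i_s}^{(m_s)}}$ in each slot, which vanishes identically when $f_s\in\mc F$. For the reverse direction I would establish the single-slot form \eqref{100502:eq1} of the master equation (say for $s=0$); iterating it over all slots and invoking skewsymmetry \eqref{100530:eq2} then yields \eqref{100430:eq2}. Fixing $f_1,\dots,f_k$ and viewing both sides of \eqref{100502:eq1} as functions of $f_0$, one checks that they obey the same Leibniz rule \eqref{100422c:eq1} in $f_0$ (the left side by the hypothesis $X\in W^{\partial,\as}_k$, the right side because each $\frac{\partial}{\partial u_i^{(m)}}$ is a derivation and the shift operators $(-\lambda_0-\partial)^m$ combine accordingly). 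Using sesquilinearity \eqref{100530:eq1}, both sides agree on the generators $f_0=u_j^{(n)}$, and both vanish for $f_0\in\mc F$. As a derivation is determined by its values on a generating set, the two sides agree on $R_\ell$ and then, by uniqueness of extension of derivations to the localizations and differential extensions comprising $\mc V$, on all of $\mc V$.

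The main obstacle is precisely this last step of the reverse implication: showing that a polyderivation in $W^{\partial,\as}_k(\Pi\mc V)$ annihilating quasiconstants is reconstructed from its values on $(k+1)$-tuples of the generators $u_i$ through the chain rule. This rests on the structural fact, inherent to algebras of differential functions extending $R_\ell$, that a derivation of $\mc V$ killing $\mc F$ has the form $\sum_{i,n}P_{i,n}\frac{\partial}{\partial u_i^{(n)}}$; this is exactly the identification $W^{\var}_0=\Vect^\partial(\mc V)$ at the degree-zero level (cf. Section \ref{sec:8.1}), and in higher degree the sesquilinearity relations \eqref{100530:eq1} reduce the higher jets $u_i^{(n)}$ to the generators $u_i$, bending the chain rule into the exact shape of \eqref{100430:eq2}. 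Once this reconstruction is secured, the vanishing characterization is equivalent to the master equation and the argument closes.
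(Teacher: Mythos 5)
Your proof hinges entirely on the reverse direction of your key lemma: that an element of $W^{\partial,\as}_k(\Pi\mc V)$ which vanishes whenever one argument is quasiconstant must satisfy the master equation \eqref{100430:eq2}. This is false for a general algebra of differential functions $\mc V$ extending $R_\ell$, and the paper itself contains the counterexample: by Remark \ref{100502:rem}, for $\mc V=R_1[e^u]$ (with $\partial e^u=e^uu'$) there is a derivation $Z\in\Der^\partial(\mc V)=W^{\partial,\as}_0(\Pi\mc V)$ defined by $Z(Pe^{mu})=mPe^{mu}$, $P\in R_1$, which is \emph{not} an evolutionary vector field, i.e. $Z\notin W^{\var}_0$. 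Here $\mc F=\mb F$, so $Z$ kills all quasiconstants; thus already at $k=0$ the vanishing condition does not imply the master equation. The structural fact you invoke in your last paragraph --- that a derivation of $\mc V$ commuting with $\partial$ and killing $\mc F$ has the form $\sum_{i,n}P_{i,n}\frac{\partial}{\partial u_i^{(n)}}$ --- is precisely what fails: derivations of $R_\ell$ do \emph{not} extend uniquely along differential extensions (the zero derivation of $R_1$ extends both to $0$ and to $Z$ on $R_1[e^u]$), so ``both sides agree on $R_\ell$'' does not give agreement on $\mc V$. Your lemma's forward direction, and the deduction of the Proposition from the lemma, are both fine; but since $W^{\var}\subsetneq W^{\partial,\as}$ can happen exactly when your lemma fails, the argument has an unfixable hole as stated --- vanishing on quasiconstants is a necessary but strictly weaker condition than membership in $W^{\var}$.

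For comparison, the paper does not attempt any intrinsic characterization of $W^{\var}_k$ inside $W^{\partial,\as}_k$. It verifies the master equation for $[X,Y]$ directly: by skewsymmetry it suffices to check the single-slot form \eqref{100502:eq1} for $s=0$; one then expands $(X\Box Y)_{\lambda_0,\dots,\lambda_k}(f_0,f_1,\dots,f_k)$ using the master equations of $X$ and $Y$, obtaining the desired expression plus an error term involving second derivatives $\frac{\partial^2 f_0}{\partial u_i^{(m)}\partial u_j^{(n)}}$, and observes that this error term is symmetric under exchanging $X$ and $Y$ up to the factor $(-1)^{h(k-h)}$, hence cancels in the bracket $[X,Y]=X\Box Y-(-1)^{h(k-h)}Y\Box X$. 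If you want to rescue a ``characterization'' strategy, you would need a condition genuinely equivalent to \eqref{100430:eq2} (not merely implied by it), and the counterexample above shows that no condition formulated purely in terms of vanishing on $\mc F$ can work.
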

\begin{proof}
The proof of this statement is similar to that of Proposition \ref{100422c:prop1}.
We need to prove that, if $X\in W^{\var}_h$ and $Y\in W^{\var}_{k-h}$,
then $[X,Y]=X\Box Y-(-1)^{h(k-h)}Y\Box X$ lies in $W^{\var}_k$,
namely, it satisfies the master equation \eqref{100430:eq2},
or, equivalently, equation \eqref{100502:eq1} for $s=0,\dots,k$.
We observe that, since $[X,Y]_{\lambda_0,\dots,\lambda_k}(f_0,\dots,f_k)$
is skewsymmetric with respect to simultaneous permutations of the variables $\lambda_i$
and the elements $f_i$,
it suffices to prove that $[X,Y]$ satisfies equation \eqref{100502:eq1} for $s=0$.
We have, by a straightforward computation,
$$
\begin{array}{l}
\displaystyle{
\vphantom{\Big(}
\big(X\Box Y\big)_{\lambda_0,\dots,\lambda_k}(f_0,f_1, \dots, f_k)
} \\
\displaystyle{
\vphantom{\Big(}
= \sum_{j\in I,n\in\mb Z_+}
\big(X\Box Y\big)_{\lambda_0+\partial,\lambda_1,\dots,\lambda_k}(u_j,f_1, \dots, f_k)_\to
(-\lambda_0-\partial)^n\frac{\partial f_0}{\partial u_j^{(n)}}
} \\
\displaystyle{
+ \sum_{i,j\in I,\,m,n\in\mb Z_+}
\sum_{\substack{
i_1<\dots <i_{k-h}\\
i_{k-h+1}<\dots< i_k}}
\pm
\frac{\partial^2 f}{\partial u_i^{(m)}\partial u_j^{(n)}}
} \\
\displaystyle{
\,\,\,\,\,\,\,\,\,\,\,\,\,\,\,\,\,\,\,\,\,\,\,\,\,\,\,
\times\Big(
(\nu+\partial)^n
X_{-\nu-\partial,\lambda_{i_{k-h+1}},\dots,\lambda_{i_k}}
(u_j,f_{i_{k-h+1}},\dots,f_{i_k})
\Big)
} \\
\displaystyle{
\,\,\,\,\,\,\,\,\,\,\,\,\,\,\,\,\,\,\,\,\,\,\,\,\,\,\,\,\,\,\,\,\,\,\,\,\,\,\,
\times
\Big(
(\mu+\partial)^m
Y_{-\mu-\partial,\lambda_{i_1},\dots,\lambda_{i_{k-h}}}
(u_i,f_{i_1},\dots,f_{i_{k-h}})
\Big)\,,
}
\end{array}
$$
where $\pm$ is the sign of the permutation $(i_1,\dots,i_k)$ of the set $\{1,\dots,k\}$,
$\mu=\lambda_{i_1}+\dots+\lambda_{i_{k-h}}$ and $\nu=\lambda_{i_{k-h+1}}+\dots+\lambda_{i_k}$.
To complete the proof we just observe that
the second sum in the RHS is supersymmetric in $X$ and $Y$: if we exchange them it gets multiplied by
$(-1)^{h(k-h)}$.
Hence it does not contribute to $[X,Y]$.
\end{proof}

We can describe explicitly the spaces $W^{\var}_k$ for $k=-1,0,1$.
Clearly, $W^{\var}_{-1}=\mc V/\partial\mc V$.
Identifying $\mb F_-[\lambda]\otimes_{\mb F[\partial]}\mc V$ with $\mc V$,
the master equation \eqref{100430:eq2} for $X\in W^{\var}_0$ reads
$X(f)=\sum_{i\in I,n\in\mb Z_+}(\partial^nX(u_i))\frac{\partial f}{\partial u_i^{(n)}}$,
i.e., $X$ is an evolutionary vector field, see \eqref{2006_X2}.
Hence $W^{\var}_0=\Vect^\partial(\mc V)$.
Next, recall from Section \ref{sec:4.1} that $W^\partial_1(\Pi\mc V)$
is identified with the space of sesquilinear skewcommutative $\lambda$-brackets
$\{\cdot\,_\lambda\,\cdot\}:\,\mc V\otimes\mc V\to\mb F[\lambda]\otimes\mc V$.
For $X\in W^\partial_1(\Pi\mc V)$, the corresponding $\lambda$-bracket
is $\{f_\lambda g\}=X_{\lambda,-\lambda-\partial}(f,g)$.
Under this identification, the master equation \eqref{100430:eq2} translates into the usual
formula for $\lambda$-brackets (cf. \cite{DSK1}):
\begin{equation}\label{100502:eq2}
\{f_\lambda g\}=
\sum_{i,j\in I,m,n\in\mb Z_+}
\frac{\partial g}{\partial u_j^{(n)}}(\lambda+\partial)^n
\{{u_i}_{\lambda+\partial}u_j\}_\to
(-\lambda-\partial)^m\frac{\partial f}{\partial u_i^{(m)}}\,.
\end{equation}
Hence, $W^{\var}_1$ is identified with the space of skewcommutative $\lambda$-brackets on $\mc V$
satisfying equation \eqref{100502:eq2}.

We can also write some explicit formulas for the Lie brackets in $W^{\var}(\Pi\mc V)$.
Recall that $W^{\var}_{-1}$ is an abelian subalgebra.
If $X\in W^{\var}_k,\,k\geq0$, and $\tint f_0\in\mc V/\partial\mc V= W^{\var}_{-1}$,
recalling equation \eqref{100419:eq4} and applying the master equation \eqref{100502:eq1}, we have,
\begin{equation}\label{100503:eq1}
[X,\tint f]_{\lambda_1,\dots,\lambda_k}(f_1,\dots,f_k)
=
\sum_{i\in I}X_{\partial,\lambda_1,\dots,\lambda_k}(u_i,f_1,\dots,f_k)_\to\frac{\delta f}{\delta u_i}\,.
\end{equation}
In particular, for $k=0$, recalling that $W^{\var}_0=\Vect^\partial(\mc V)$
is identified with $\mc V^\ell$ via $P\mapsto X_P$ (cf. \eqref{2006_X2}), we have
\begin{equation}\label{100503:eq2}
[X_P,\tint f]
=
\tint X_P(f)
=
\sum_{i\in I}\tint P_i\frac{\delta f}{\delta u_i}\,.
\end{equation}
Moreover, for $k=1$, recalling that $W^{\var}_1$ is identified with the space
of skewcommutative $\lambda$-brackets, we have
\begin{equation}\label{100503:eq3}
[\{\cdot\,_\lambda\,\cdot\},\tint f](g)
=
\{f_\lambda g\}\,\big|_{\lambda=0}\,.
\end{equation}
We next identify skewcommutative $\lambda$-brackets with skewadjoint
differential operators
by associating to a given $\lambda$-bracket $\{\cdot\,_\lambda\,\cdot\}$
the differential operator
$H(\partial)=\big(H_{ij}(\partial)\big)_{i,j\in I}:\,\mc V^{\oplus\ell}\to\mc V^\ell$,
given by
\begin{equation}\label{100430:eq1b}
H_{ij}(\partial)=\{{u_j}_\partial u_i\}_\to\,.
\end{equation}
Furthermore, identifying the space of evolutionary vector fields
with $\mc V^\ell$, we can rewrite \eqref{100503:eq3} as follows
\begin{equation}\label{100503:eq4}
[H(\partial),\tint f]
=
H(\partial)\frac{\delta f}{\delta u}
\,\,\,\Big(=
X_{H(\partial)\frac{\delta f}{\delta u}}
\Big)\,.
\end{equation}
Next, let $X\in\Vect^\partial(\mc V)=W^{\var}_0$ and $Y\in W^{\var}_k,\,k\geq0$.
We have, by \eqref{100419:eq5},
\begin{equation}\label{100503:eq5}
\begin{array}{c}
[X,Y]_{\lambda_0,\dots,\lambda_k}(f_0,\dots,f_k)=
X\big(Y_{\lambda_0,\dots,\lambda_k}(f_0,\dots,f_k)\big)
\\
\displaystyle{
-\sum_{i=0}^k Y_{\lambda_0,\dots,\lambda_k}(f_0,\dots X(f_i)\dots,f_k)\,.
}
\end{array}
\end{equation}
%
%
For $k=0$ this reduces to the usual commutator of evolutionary vector fields:
\begin{equation}\label{100503:eq7}
[P,Q]_i=[X_P,X_Q](u_i)
=X_P(Q_i)-X_Q(P_i)\,,
\end{equation}
while for $k=1$ it gives, identifying $H\in W^{\var}_1$ with the corresponding
skewcommutative $\lambda$-bracket $\{\cdot\,_\lambda\,\cdot\}_H$,
\begin{equation}\label{100503:eq8}
\{f_\lambda g\}_{[X,H]}
=
X\big(\{f_\lambda g\}_H\big)
-\{X(f)_\lambda g\}_H-\{f_\lambda X(g)\}_H\,,
\end{equation}
or equivalently, restricting to generators and using the notation in \eqref{100430:eq1b},
\begin{equation}\label{100503:eq9}
\begin{array}{l}
[X_P,H]_{ij}(\lambda)
=
\displaystyle{
X_P\big(H_{ij}(\lambda)\big)
} \\
\displaystyle{
-\sum_{k\in I,n\in\mb Z_+}
H_{ik}(\lambda+\partial)(-\lambda-\partial)^n\frac{\partial P_j}{\partial u_k^{(n)}}
-\sum_{k\in I,n\in\mb Z_+}
\frac{\partial P_i}{\partial u_k^{(n)}}(\lambda+\partial)^n H_{kj}(\lambda)
} \\
\displaystyle{
=
X_P\big(H_{ij}(\lambda)\big)
-\sum_{k\in I}
H_{ik}(\lambda+\partial)\big(D_P^*(\lambda)\big)_{kj}
-\sum_{k\in I}
\big(D_P(\lambda+\partial)\big)_{ik} H_{kj}(\lambda) \,.
}
\end{array}
\end{equation}
In the last identity we used the definition \eqref{frechet} of the Frechet derivative $D_P(\partial)$.
Equivalently, in terms of differential operators, we have
\begin{equation}\label{100503:eq10}
[X_P,H](\partial)
=
X_P(H)(\partial)
- H(\partial)\circ D_P^*(\partial)
- D_P(\partial)\circ H(\partial) \,,
\end{equation}
where $X_P(H)$ means applying the derivation $X_P$ to the coefficients
of the differential operator $H(\partial)$.
Finally, let $H\in W^{\var}_1$ be associated to the skewcommutative
$\lambda$-bracket $\{\cdot\,_\lambda\,\cdot\}_H$,
and let $Y\in W^{\var}_{k-1},\,k\geq1$.
We have, by \eqref{100419:eq6},
\begin{equation}\label{100503:eq11}
\begin{array}{c}
\displaystyle{
[H,Y]_{\lambda_0,\dots,\lambda_k}(f_0,\dots,f_k)
= (-1)^{k+1}\bigg(
\sum_{i=0}^k (-1)^{i}
\big\{{f_i}_{\lambda_i}
 Y_{\lambda_0,\stackrel{i}{\check{\dots}},\lambda_k}(f_0,\stackrel{i}{\check{\dots}},f_k)\big\}_H
} \\
\displaystyle{
+ \sum_{0\leq i<j\leq k}
(-1)^{i+j}
Y_{\lambda_i+\lambda_j,\lambda_0,\stackrel{i}{\check{\dots}}\stackrel{j}{\check{\dots}},\lambda_k}
\big(\{{f_i}_{\lambda_i}f_j\}_H,f_0,\stackrel{i}{\check{\dots}}\,\stackrel{j}{\check{\dots}},f_k\big)
\bigg)\,.
}
\end{array}
\end{equation}
In particular, for $k=2$,
\begin{equation}\label{100503:eq12}
\begin{array}{l}
[H,K]_{\lambda,\mu,\nu}(f,g,h) \\
\displaystyle{
\vphantom{\Big(}
= \{{\{f_{\lambda}g\}_K}_{\lambda+\mu} h\}_H
-\{f_{\lambda}\{g_{\mu}h\}_K\}_H
+\{g_{\mu}\{f_{\lambda}h\}_K\}_H
}\\
\vphantom{\Big(}
\displaystyle{
+ \{{\{f_{\lambda}g\}_H}_{\lambda+\mu} h\}_K
-\{f_{\lambda}\{g_{\mu}h\}_H\}_K
+\{g_{\mu}\{f_{\lambda}h\}_H\}_K
\,.
}
\end{array}
\end{equation}

\begin{remark}\label{100502:rem}
As we pointed out above, $W^{\var}=W^{\partial,\as}$ if $\mc V=R_\ell$.
This is not always the case for an extension of $R_\ell$.
For example, consider the algebra of differential functions in one variable
$\mc V=R_1[e^u]$, with $\partial e^u=e^uu^\prime$.
In this case, $W^{\var}_0=\Vect^\partial(\mc V)$,
while $W^{\partial,\as}_0(\Pi\mc V)=\Vect^\partial(\mc V)+\mb F Z$,
where $Z$ is the derivation of $\mc V$ commuting with $\partial$
given by: $Z(Pe^{mu})=mPe^{mu},\,P\in R_1,\,m\in\mb Z_+$.
\end{remark}

\subsection{PVA structures on an algebra of differential functions and cohomology complexes}
\label{sec:9.2}

Let $K\in W^{\var}_1$ be such that $[K,K]=0$,
and denote by $\{\cdot\,_\lambda\,\cdot\}_K$
the corresponding Poisson $\lambda$-bracket on $\mc V$,
and by $K(\partial)$ the corresponding Hamiltonian operator given by \eqref{100430:eq1b}.
Then $(\ad K)^2=0$, hence we can consider
the associated \emph{Poisson cohomology complex} $(W^{\var}(\Pi\mc V),\ad K)$.
Let $\mc Z_K^\bullet(\mc V)=\bigoplus_{k=-1}^\infty\mc Z_K^k$,
where $\mc Z_K^k=\ker\big(\ad K\big|_{W^{\var}_k}\big)$,
and $\mc B_K^\bullet(\mc V)=\bigoplus_{k=-1}^\infty\mc B_K^k$,
where $\mc B_K^k=(\ad K)\big(W^{\var}_{k-1}\big)$.
Then $\mc Z_K^\bullet(\mc V)$ is a $\mb Z$-graded subalgebra
of the Lie superalgebra $W^{\var}(\Pi\mc V)$,
and $\mc B_K^\bullet(\mc V)\subset\mc Z_K^\bullet(\mc V)$ is an ideal.
Hence, the corresponding \emph{Poisson cohomology}
$$
\mc H_K^\bullet(\mc V)=\bigoplus_{k=-1}^\infty \mc H_K^k
\,\,,\,\,\,\,
\mc H_K^k=\mc Z_K^k\big/\mc B_K^k\,,
$$
is a $\mb Z$-graded Lie superalgebra.

By equation \eqref{100503:eq4}, We have
$$
\begin{array}{c}
\vphantom{\Big(}
\mc H_K^{-1}=\mc Z_K^{-1}
= \Big\{\tint f\in\mc V/\partial\mc V\,\big|\, K(\partial)\frac{\delta f}{\delta u}=0\Big\} \\
\vphantom{\Big(}
= \Big\{\tint f\in\mc V/\partial\mc V\,\big|\,\,\,\,\,\, \{f _\lambda \mc V\}_K\big|_{\lambda=0}=0\Big\} \,.
\end{array}
$$
%

Next, recalling that $W^{\var}_0=\Vect^\partial(\mc V)$,
we have
$$
\mc B_K^0
=\Big\{X_{K(\partial)\frac{\delta f}{\delta u}}
\,\,\big|\,\,
\tint f\in\mc V/\partial\mc V\Big\}
=\Big\{\{f_\lambda\,\cdot\}_K\,\big|_{\lambda=0}
\,\,\big|\,\,
\tint f\in\mc V/\partial\mc V\Big\}\,,
$$
where $X_P\in\Vect^\partial(\mc V)$ was defined in \eqref{2006_X2}.
Moreover, recalling equation \eqref{100503:eq8}, we get
$$
\mc Z_K^0=
\Big\{
X\in\Vect{}^\partial(\mc V)
\,\,\big|\,\,
X(\{f_\lambda g\}_K)
=\{X(f)_\lambda g\}_K+\{f_\lambda X(g)\}_K,\,f,g\in\mc V
\Big\}
$$
or, identifying $\Vect^\partial(\mc V)=\mc V^\ell$ via \eqref{2006_X2}, we have by \eqref{100503:eq10}
$$
\mc Z_K^0=
\Big\{
P\in\mc V^\ell
\,\,\big|\,\,
X_P(K(\partial)) = K(\partial)\circ D_P^*(\partial) + D_P(\partial)\circ K(\partial)
\Big\}\,.
$$
Finally, recalling \eqref{100503:eq11} we get that, for $k\geq1$, $\mc Z^k_K$ consists of elements
$X\in W^{\var}_{k}$ satisfying the following equation:
$$
\begin{array}{l}
\displaystyle{
\sum_{i=0}^{k+1} (-1)^{i}
\big\{{f_i}_{\lambda_i}
 X_{\lambda_0,\stackrel{i}{\check{\dots}},\lambda_{k+1}}(f_0,\stackrel{i}{\check{\dots}},f_{k+1})\big\}_K
} \\
\displaystyle{
+ \sum_{0\leq i<j\leq k+1}
(-1)^{i+j}
X_{\lambda_i+\lambda_j,\lambda_0,\stackrel{i}{\check{\dots}}\stackrel{j}{\check{\dots}},\lambda_{k+1}}
\big(\{{f_i}_{\lambda_i}f_j\}_K,f_0,\stackrel{i}{\check{\dots}}\,\stackrel{j}{\check{\dots}},f_{k+1}\big)
=0\,,
}
\end{array}
$$
and $\mc B_K^k$ consists of elements $X\in W^{\var}_k$ of the form
$$
\begin{array}{c}
\displaystyle{
X_{\lambda_0,\dots,\lambda_k}(f_0,\dots,f_k)
=
\sum_{i=0}^k (-1)^{i}
\big\{{f_i}_{\lambda_i}
 Y_{\lambda_0,\stackrel{i}{\check{\dots}},\lambda_k}(f_0,\stackrel{i}{\check{\dots}},f_k)\big\}_K
} \\
\displaystyle{
+ \sum_{0\leq i<j\leq k}
(-1)^{i+j}
Y_{\lambda_i+\lambda_j,\lambda_0,\stackrel{i}{\check{\dots}}\stackrel{j}{\check{\dots}},\lambda_k}
\big(\{{f_i}_{\lambda_i}f_j\}_K,f_0,\stackrel{i}{\check{\dots}}\,\stackrel{j}{\check{\dots}},f_k\big)
\,.
}
\end{array}
$$
for some $Y\in W^{\var}_{k-1}$.

\begin{remark}\label{100505:th}
One can show that
the identity map on $\mc V/\partial\mc V=\Omega^0(\mc V)=W^{\var}_{-1}$
extends to a homomorphism of cohomology complexes
$\Phi_K:\,(\Omega^\bullet(\mc V),\delta)\to(W^{\var}(\Pi\mc V),\ad K)\,,\,\,
P\in \Omega^{k+1}(\mc V)\mapsto\Phi_K^k P\in W^{\var}_k$,
defined by the following formula:
\begin{equation}\label{100505:eq1}
\begin{array}{l}
\displaystyle{
(\Phi_K^k P)_{\lambda_0,\dots,\lambda_k}(f_0,\dots,f_k)
=
(-1)^{k+1}
\sum_{i_0,\dots,i_k\in I}
}\\
\displaystyle{
P_{i_0,\dots,i_k}(\lambda_0+\partial_0,\dots,\lambda_k+\partial_k)
\{{f_0}_{\lambda_0}u_{i_0}\}_K\dots\{{f_k}_{\lambda_k}u_{i_k}\}_K\,,
}
\end{array}
\end{equation}
where $\partial_\alpha$ is $\partial$ applied to $\{{f_\alpha}_{\lambda_\alpha}u_{i_\alpha}\}_K$.
For $k=0,1$ the map $\Phi_K^k$ reduces to
$\Phi_K^0(F)=K(\partial)F,\,F\in\mc V^{\oplus\ell}$,
where $K(\partial)$ is the differential operator associated to $K$ via \eqref{100430:eq1b},
and, identifying elements in $\Omega^2(\mc V)$ and $W^{\var}_1$ with
the corresponding skewadjoint differential operators, we have
\begin{equation}\label{100510:eq2}
(\Phi_K^1S)(\partial)=-K(\partial)\circ S(\partial)\circ K(\partial)\,.
\end{equation}
One can also show that, for $k\geq1$, the map $\Phi_K^k:\,\Omega^{k+1}(\mc V)\to W^{\var}_k$ is injective
provided that the map $K(\partial+\lambda):\,\mc V[\lambda]^{\oplus\ell}\to\mc V[\lambda]^\ell$ is injective.
\end{remark}

\subsection{Identification of $W^{\var}(\Pi\mc V)$ with $\Omega^\bullet(\mc V)$}
\label{sec:9.3}

In this section we will construct an explicit identification of the superspaces
$W^{\var}_k$ and $\Omega^{k+1}(\mc V)$ for finite $\ell$.
However, this identification is not covariant, i.e. it depends on the choice of generators
$u_i,\,i\in I$, of the algebra of differential functions $\mc V$.
Indeed these two spaces play completely different roles.
In a forthcoming publication
we will give a covariant description of the variational complex $\Omega^\bullet(\mc V)$,
and clarify the relation between these two complexes, as well as the relation
with the calculus structure discussed in \cite{DSHK}.

Note that if $X:\,\mc V^{\otimes(k+1)}
\to\mb F_-[\lambda_0,\dots,\lambda_k]\otimes_{\mb F[\partial]}\mc V$
satisfies the master equation \eqref{100430:eq2},
then automatically it satisfies sesquilinearity.
Moreover, it satisfies skewsymmetry provided that it is skewsymmetric on the generators $u_i,\,i\in I$.
Conversely, any map $X:\,\big(\bigoplus_{i\in I}\mb F u_i\big)^{\otimes(k+1)}
\to\mb F_-[\lambda_0,\dots,\lambda_k]\otimes_{\mb F[\partial]}\mc V$
satisfying skewsymmetry
can be extended uniquely to an element of $W^{\var}_k$ by the master equation.
Hence, $X\in W^{\var}_k$ is completely determined by the collection of polynomials
$X_{\lambda_0,\dots,\lambda_k}(u_{i_0},\dots,u_{i_k})
\in\mb F_-[\lambda_0,\dots,\lambda_k]\otimes_{\mb F[\partial]}\mc V$,
with $i_1,\dots,i_k\in I$.

Thanks to the above observations, we construct an injective linear map
$$
\Phi:\,\Omega^\bullet(\mc V)\to W^{\var}(\Pi\mc V)\,,
$$
sending $P\in\Omega^{k+1}(\mc V)$ to $\Phi^kP\in W^{\var}_k$,
defined on generators by
\begin{equation}\label{100518:eq5}
(\Phi^kP)_{\lambda_0,\dots,\lambda_k}(u_{i_0},\dots,u_{i_k})
=
P_{i_0,\dots,i_k}(\lambda_0,\dots,\lambda_k)\,,
\end{equation}
and extended to a map
$\Phi^kP:\,\mc V^{\otimes(k+1)}\to\mb F_-[\lambda_0,\dots,\lambda_k]\otimes_{\mb F[\partial]}\mc V$
by the master equation \eqref{100430:eq2}.
Clearly, $\Phi$ is surjective for finite $\ell$, and it is injective in general.
Note that $\Phi^k$ formally coincides, up to a sign, with $\Phi^k_K$ in \eqref{100505:eq1}
if we let $K=\id$.

Let $\ell$ be finite. Let $K\in W^{\var}_1$ be such that $[K,K]=0$,
and consider the Poisson cohomology complex $(W^{\var}(\Pi\mc V),\ad K)$.
Using the bijection $\Phi:\,\Omega^\bullet(\mc V)\to W^{\var}(\Pi\mc V)$,
we get a differential $d_K:\,\Omega^k(\mc V)\to\Omega^{k+1}(\mc V)$
induced by the action of $\ad K$ on $W^{\var}(\Pi\mc V)$.
Explicitly, recalling equation \eqref{100503:eq11}, we have
\begin{equation}\label{100518:eq1}
\begin{array}{l}
\displaystyle{
(d_KP)_{i_0,\dots,i_k}(\lambda_0,\dots,\lambda_k)
} \\
\displaystyle{
= (-1)^{k+1} \sum_{j\in I,n\in\mb Z_+} \bigg(
\sum_{\alpha=0}^k (-1)^{\alpha}
\frac{\partial
P_{i_0,\stackrel{\alpha}{\check{\dots}},i_k}(\lambda_0,\stackrel{\alpha}{\check{\dots}},\lambda_k)
}{\partial u_j^{(n)}}
(\lambda_\alpha+\partial)^n
K_{j,i_\alpha}(\lambda_\alpha)
} \\
\displaystyle{
+ \sum_{0\leq \alpha<\beta\leq k}
(-1)^{\alpha+\beta}
P_{j,i_0,\stackrel{\alpha}{\check{\dots}}\,\stackrel{\beta}{\check{\dots}},i_k}
(\lambda_\alpha+\lambda_\beta+\partial,
\lambda_0,\stackrel{\alpha}{\check{\dots}}\,\stackrel{\beta}{\check{\dots}},\lambda_k)_\to
} \\
\displaystyle{
\,\,\,\,\,\,\,\,\,\,\,\,\,\,\,\,\,\,\,\,\,\,\,\,\,\,\,\,\,\,\,\,\,\,\,\,\,\,\,\,\,\,
\,\,\,\,\,\,\,\,\,\,\,\,\,\,\,\,\,\,\,\,\,\,\,\,\,\,\,\,\,\,\,\,\,\,\,\,\,\,\,\,\,\,
(-\lambda_\alpha-\lambda_\beta-\partial)^n
\frac{\partial K_{i_\beta,i_\alpha}(\lambda_\alpha)}{\partial u_j^{(n)}}
\bigg)\,.
}
\end{array}
\end{equation}

\subsection{The variational and Poisson cohomology complexes in terms of local polydifferential operators}
\label{sec:9.4}

Let $\mc V$ be an algebra of differentiable functions extension
of $R_\ell=\mb F[u_i^{(n)}\,|\,i\in I,n\in\mb Z_+]$.
Recall \cite{DSK2} that a $k$-\emph{local differential operator of rank} $r$
is an $\mb F$-linear map $S:\,(\mc V^{\oplus r})^{k}\to\mc V/\partial\mc V$, of the form
\begin{equation}\label{eq:dic15_1}
S(P^1,\dots,P^k)
\,=\,
\tint \sum_{\substack{m_1,\dots,m_{k}\in\mb Z_+ \\ i_1,\dots,i_{k}\in\{1,\dots,r\}}}
f_{i_1,\dots,i_k}^{m_1,\dots,m_{k}}
(\partial^{m_1}P^1_{i_1})\dots(\partial^{m_{k}}P^{k}_{i_{k}})\,,
\end{equation}
where the coefficients $f_{i_1,\dots,i_k}^{m_1,\dots,m_{k}}$ lie in $\mc V$,
and, for each $k$-tuple $(i_1,\dots,i_k)$ they are zero for all but finitely many
choices of $(m_1,\dots,m_k)\in\mb Z_+^k$.
When $r$ is infinite, $S$ is called of \emph{finite type} if
all but finitely many of the coefficients $f_{i_1,\dots,i_k}^{m_1,\dots,m_{k}}$ are zero.
In this case, $S$ extends to a map $S:\,(\mc V^r)^k\to\mc V/\partial\mc V$.
The operator $S$ is called skewsymmetric if
$$
S(P^1,\dots,P^k)=\text{sign}(\sigma) S(P^{\sigma(1)},\dots,P^{\sigma(k)})\,,
$$
for every $P^1,\dots,P^k\in\mc V^\ell$ and every permutation $\sigma\in S_k$.
In this section we describe both the variational complex $\Omega^\bullet(\mc V)$
and the space of variational polyvector fields $W^{\var}(\Pi\mc V)$
in terms of local polydifferential operators.

\begin{lemma}\label{100506:lem}
We have a canonical isomorphism of
$\mb F_-[\lambda_1,\dots,\lambda_k]\otimes_{\mb F[\partial]}\mc V$
to the space of local $k$-differential operators of rank 1,
which associates to the polynomial
$f(\lambda_1,\dots,\lambda_k)
=\displaystyle{\sum_{m_1,\dots,m_k\in\mb Z_+}}f^{m_1,\dots,m_k}\lambda_1^{m_1}\dots\lambda_k^{m_k}$,
the map $f:\,\mc V^k\to\mc V/\partial\mc V$, given by
\begin{equation}\label{100506:eq1}
f(g_1,\dots,g_k)
=
\sum_{m_1,\dots,m_k\in\mb Z_+}
\tint f^{m_1,\dots,m_k}
(\partial^{m_1}g_1)\dots(\partial^{m_k}g_k)\,.
\end{equation}
\end{lemma}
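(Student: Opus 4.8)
The plan is to show that the assignment $f\mapsto S_f$ defined by \eqref{100506:eq1}, which a priori is a linear map from the space of symbols $\mc V[\lambda_1,\dots,\lambda_k]$ (polynomials in the $\lambda_i$ with coefficients in $\mc V$) to the space of $k$-local differential operators of rank $1$, descends to the claimed isomorphism on the quotient $\mb F_-[\lambda_1,\dots,\lambda_k]\otimes_{\mb F[\partial]}\mc V$. First I would check well-definedness, i.e. that $S_f=0$ whenever $f$ is one of the defining relations of the tensor product over $\mb F[\partial]$. Recall that $\partial$ acts on $\mb F_-[\lambda_1,\dots,\lambda_k]$ as multiplication by $-(\lambda_1+\dots+\lambda_k)$, so these relations are spanned by the differences between $-(\lambda_1+\dots+\lambda_k)P(\lambda)\,g$ and $P(\lambda)\,\partial g$, for $P\in\mb F[\lambda_1,\dots,\lambda_k]$ and $g\in\mc V$. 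Applying \eqref{100506:eq1} together with the Leibniz rule $\sum_j g_1\cdots(\partial g_j)\cdots g_k=\partial(g_1\cdots g_k)$ turns the image of the first expression (for a monomial $P$) into $-\tint g\,\partial\big((\partial^{m_1}g_1)\cdots(\partial^{m_k}g_k)\big)$; integration by parts, i.e. $\tint\partial(\,\cdot\,)=0$, identifies it with the image of the second expression, so $S_f$ is well defined on the quotient.

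Surjectivity is immediate, since by \eqref{eq:dic15_1} every $k$-local differential operator of rank $1$ is $S_f$ for the symbol $f(\lambda_1,\dots,\lambda_k)=\sum f^{m_1,\dots,m_k}\lambda_1^{m_1}\cdots\lambda_k^{m_k}$. The real content is injectivity. To handle it I would first put an element of the quotient in the non-canonical normal form of Section \ref{sec:4.1}, using the isomorphism $\mb F_-[\lambda_1,\dots,\lambda_k]\otimes_{\mb F[\partial]}\mc V\cong\mc V[\lambda_1,\dots,\lambda_{k-1}]$ obtained by substituting $\lambda_k=-\lambda_1-\dots-\lambda_{k-1}-\partial$, with $\partial$ moved onto the $\mc V$-coefficient. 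By the same integration by parts used above, moving all derivatives off $g_k$, the operator acquires the form
\[
S_f(g_1,\dots,g_k)=\tint\Big(\sum_{m_1,\dots,m_{k-1}}\tilde f^{\,m_1,\dots,m_{k-1}}(\partial^{m_1}g_1)\cdots(\partial^{m_{k-1}}g_{k-1})\Big)g_k,
\]
where $\tilde f(\lambda_1,\dots,\lambda_{k-1})=\sum\tilde f^{\,m}\lambda_1^{m_1}\cdots\lambda_{k-1}^{m_{k-1}}\in\mc V[\lambda_1,\dots,\lambda_{k-1}]$ is precisely the normal form of $f$. Thus it suffices to prove that $S_f=0$ forces $\tilde f=0$.

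This last step splits into two parts. First, the non-degeneracy of the pairing $\mc V\times\mc V\to\mc V/\partial\mc V$, $(B,g)\mapsto\tint Bg$: from $S_f=0$, varying only $g_k$, one gets $\tint B\,g_k=0$ for all $g_k$, where $B=\sum_m\tilde f^{\,m}(\partial^{m_1}g_1)\cdots(\partial^{m_{k-1}}g_{k-1})$, and non-degeneracy yields $B=0$ for all $g_1,\dots,g_{k-1}$. Second, specializing each $g_j$ to generators $u_{i_j}$ and using that distinct monomials $u_{i_1}^{(m_1)}\cdots u_{i_{k-1}}^{(m_{k-1})}$ are linearly independent over $\mc V$ (after separating the indices $i_j$ to avoid collisions) forces every coefficient $\tilde f^{\,m}$ to vanish. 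I expect the main obstacle to be the non-degeneracy statement $\tint Bg=0\ (\forall g)\Rightarrow B=0$: proving it requires an argument on the differential order of $B$, choosing $g=u_j^{(N)}$ with $N$ large and of suitable parity so that the top-order term of $\frac{\delta(Bg)}{\delta u_j}$ does not cancel, together with a separate treatment of quasiconstant $B$ via $g=u_j$ (where $\frac{\delta(Bu_j)}{\delta u_j}=B$). Here one passes between the conditions $\tint Bg=0$, $Bg\in\partial\mc V$, and $\frac{\delta}{\delta u}(Bg)=0$ by invoking Theorem \ref{100430:th} and the identity \eqref{101024:eqv1}.
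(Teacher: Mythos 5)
Your proposal is correct and follows essentially the same route as the paper's proof: well-definedness by integration by parts (the image of $\partial+\lambda_1+\dots+\lambda_k$ maps to zero), surjectivity by inspection, and injectivity by passing to the non-canonical normal form $\lambda_k\mapsto-\lambda_1-\dots-\lambda_{k-1}-\partial$ and invoking the non-degeneracy of the pairing $(f,g)\mapsto\tint fg$ on $\mc V$. The only difference is one of bookkeeping: the paper cites this non-degeneracy from Lemma 10(c) of \cite{DSK2} instead of reproving it (your sketch of it, including the parity choice of $N$, is the standard argument), and it simply asserts the final step that a polydifferential expression vanishing on all $g_1,\dots,g_{k-1}$ has vanishing coefficients, which you justify by specializing to generators.
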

\begin{proof}
Note first that, if the polynomial
$f(\lambda_1,\dots,\lambda_k)$ lies in the image of $(\partial+\lambda_1+\dots+\lambda_k)$,
then the corresponding local $k$-differential operator $f:\,\mc V^k\to\mc V/\partial\mc V$ is zero.
Hence, we have a well-defined linear map from
$\mb F_-[\lambda_1,\dots,\lambda_k]\otimes_{\mb F[\partial]}\mc V$
to the space of local $k$-differential operators of rank 1, given by \eqref{100506:eq1}.
Clearly any local $k$-differential operator of rank 1 is in the image of this map.
We are left to prove injectivity.
Let $f\in\mb F_-[\lambda_1,\dots,\lambda_k]\otimes_{\mb F[\partial]}\mc V$
be such that $f(g_1,\dots,g_k)=0$ for all $g_1,\dots,g_k$.
Recall that we have a (non-canonical) isomorphism
$\mb F_-[\lambda_1,\dots,\lambda_k]\otimes_{\mb F[\partial]}\mc V\simeq
\mb F_-[\lambda_1,\dots,\lambda_{k-1}]\otimes\mc V$,
obtained by replacing $\lambda_k$ with $-\lambda_1-\dots-\lambda_{k-1}-\partial$.
Hence, we can write the polynomial $f$ in the form
$$
f(\lambda_1,\dots,\lambda_{k-1})
=\sum_{m_1,\dots,m_{k-1}\in\mb Z_+}f^{m_1,\dots,m_{k-1}}\lambda_1^{m_1}\dots\lambda_{k-1}^{m_{k-1}}\,.
$$
Since, by assumption, the corresponding map $f:\,\mc V^k\to\mc V/\partial\mc V$ is zero, we get
$$
\sum_{m_1,\dots,m_{k-1}\in\mb Z_+}
\tint f^{m_1,\dots,m_{k-1}}
(\partial^{m_1}g_1)\dots(\partial^{m_{k-1}}g_{k-1}) g_k=0\,,
$$
for every $g_1,\dots,g_k\in\mc V$.
By the non-degeneracy of the pairing $\mc V\times\mc V\to\mc V/\partial\mc V,\,(f,g)=\tint fg$
(cf. Lemma 10(c) from Section 5.2 of \cite{DSK2}),
it follows that
$$
\sum_{m_1,\dots,m_{k-1}\in\mb Z_+}
f^{m_1,\dots,m_{k-1}}
(\partial^{m_1}g_1)\dots(\partial^{m_{k-1}}g_{k-1})=0\,,
$$
for all $g_1,\dots,g_{k-1}\in\mc V$.
This is equivalent to $f(\lambda_1,\dots,\lambda_{k-1})=0$.
\end{proof}
\begin{proposition}\label{100506:prop}
\begin{enumerate}[(a)]
\item
We have a canonical isomorphism of $\Omega^k(\mc V)$
to the space of skewsymmetric local $k$-differential operators of rank $\ell$ of finite type,
which associates to the element
$P\in \Omega^k(\mc V)$
the local $k$-differential operator
$S:\,(\mc V^\ell)^k\to\mc V/\partial\mc V$, given by
\begin{equation}\label{100506:eq2}
S(P^1,\dots,P^k)
=
\sum_{i_1,\dots,i_k\in I}
\tint
P_{i_1,\dots,i_k}(\partial_1,\dots,\partial_k)
P^1_{i_1}\dots P^{k}_{i_{k}}\,,
\end{equation}
where $\partial_\alpha$ means $\partial$ acting on $P^\alpha_{i_\alpha}$.
\item
We have a canonical isomorphism
of the space of variational $k$-vector fields $W^{\var}_{k-1}$
to the space of skewsymmetric local $k$-differential operators of rank $\ell$,
which associates to the element
$X\in W^{\var}_{k-1}$
the local $k$-differential operator
$X:\,(\mc V^{\oplus\ell})^k\to\mc V/\partial\mc V$, given by
\begin{equation}\label{100506:eq3}
X(F^1,\dots,F^k)
=
\sum_{i_1,\dots,i_k\in I}
\tint X_{\partial_1,\dots,\partial_k}(u_{i_1},\dots,u_{i_k})
F^1_{i_1}\dots F^{k}_{i_{k}}\,.
\end{equation}
\end{enumerate}
\end{proposition}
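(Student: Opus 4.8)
The plan is to reduce both parts to the rank-one isomorphism of Lemma \ref{100506:lem} by decomposing a local $k$-differential operator of rank $\ell$ into its rank-one components. First I would observe that, since such an operator $S$ is $\mb F$-multilinear, writing each input column vector as $P^\alpha=\sum_{i_\alpha\in I}P^\alpha_{i_\alpha}e_{i_\alpha}$ yields
\[
S(P^1,\dots,P^k)=\sum_{i_1,\dots,i_k\in I}S_{i_1,\dots,i_k}(P^1_{i_1},\dots,P^k_{i_k}),
\]
where $S_{i_1,\dots,i_k}(g_1,\dots,g_k):=S(g_1e_{i_1},\dots,g_ke_{i_k})$ is a local $k$-differential operator of rank $1$. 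By Lemma \ref{100506:lem}, each $S_{i_1,\dots,i_k}$ corresponds canonically to a polynomial $P_{i_1,\dots,i_k}(\lambda_1,\dots,\lambda_k)\in\mb F_-[\lambda_1,\dots,\lambda_k]\otimes_{\mb F[\partial]}\mc V$, and this sets up a bijection between arrays $(P_{i_1,\dots,i_k})_{i_1,\dots,i_k\in I}$ and local $k$-differential operators of rank $\ell$, given exactly by formula \eqref{100506:eq2} (respectively \eqref{100506:eq3}).

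For part (a), I would recall from Section \ref{sec:8.2} that $\Omega^k(\mc V)$ is identified with the space of skewsymmetric arrays \eqref{100518:eq2} having only finitely many nonzero entries. The finite-support condition on the array matches precisely the finite-type condition on $S$, which is exactly what allows $S$ to be extended from $(\mc V^{\oplus\ell})^k$ to $(\mc V^\ell)^k$. It then remains to check that the skewsymmetry of the array, i.e.\ invariance up to sign under simultaneous permutation of the $\lambda_\alpha$ and the indices $i_\alpha$, is equivalent to skewsymmetry of $S$ under permutation of its arguments. This I would verify by specializing the inputs to $P^\alpha=g_\alpha e_{i_\alpha}$ and invoking the uniqueness (injectivity) part of Lemma \ref{100506:lem}: the relation $S(P^{\sigma(1)},\dots,P^{\sigma(k)})=\sign(\sigma)S(P^1,\dots,P^k)$ then reads off, entry by entry, as the skewsymmetry of the array.

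For part (b), the discussion preceding the proposition (Section \ref{sec:9.3}) shows that an element $X\in W^{\var}_{k-1}$ is determined by, and freely recovered from, the skewsymmetric collection $X_{\lambda_0,\dots,\lambda_{k-1}}(u_{i_0},\dots,u_{i_{k-1}})$ via the master equation \eqref{100430:eq2}; thus $W^{\var}_{k-1}$ is identified with the space of all skewsymmetric arrays, now with no finiteness restriction. Applying the same rank-one decomposition and Lemma \ref{100506:lem}, this time without the finite-type hypothesis so that the operators are only defined on $(\mc V^{\oplus\ell})^k$, gives the isomorphism with skewsymmetric local $k$-differential operators of rank $\ell$, realized by \eqref{100506:eq3}. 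The compatibility of \eqref{100506:eq3} with the master equation is automatic, since \eqref{100506:eq3} uses only the values of $X$ on tuples of generators.

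The routine but delicate point, and the only real obstacle, is the sign bookkeeping in matching the two notions of skewsymmetry: one must track how the derivation variable $\partial_\alpha$ is attached to the $\alpha$-th argument when the arguments are permuted, and use the well-definedness of entries in $\mb F_-[\lambda_1,\dots,\lambda_k]\otimes_{\mb F[\partial]}\mc V$ (equivalently, the non-degeneracy of the pairing $(f,g)\mapsto\tint fg$ underlying Lemma \ref{100506:lem}) to pass from an identity of operators to an identity of array entries. Everything else is a direct consequence of Lemma \ref{100506:lem}.
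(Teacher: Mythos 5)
Your proposal is correct and follows essentially the same route as the paper: the paper's proof also reduces to the rank-one isomorphism of Lemma \ref{100506:lem} applied entry by entry, observes that skewsymmetry of the array corresponds to skewsymmetry of the operator and finite support to finite type, and handles part (b) via the master-equation identification of $W^{\var}_{k-1}$ with skewsymmetric arrays from Section \ref{sec:9.3}. Your explicit rank-one decomposition $S_{i_1,\dots,i_k}(g_1,\dots,g_k)=S(g_1e_{i_1},\dots,g_ke_{i_k})$ simply makes visible what the paper leaves implicit.
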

\begin{proof}
Recalling the definition of $\Omega^k(\mc V)$ in Section \ref{sec:8.2},
an element $P\in\Omega^k(\mc V)$ is an array of polynomials
$\big(P_{i_1,\dots,i_k}(\lambda_1,\dots,\lambda_k)\big)_{i_1,\dots,i_k\in I}$,\
with finitely many non-zero entries
$P_{i_1,\dots,i_k}(\lambda_1,\dots,\lambda_k)
\in\mb F_-[\lambda_1,\dots,\lambda_k]\otimes_{\mb F[\partial]}\mc V$,
skewsymmetric with respect to simultaneous permutations
of the variables $\lambda_0,\dots,\lambda_k$
and the indexes $i_0,\dots,i_k$.
Using Lemma \ref{100506:lem} this
corresponds, bijectively,
to the local $k$-differential operator of rank $\ell$ \eqref{100506:eq2}.
The skewsymmetry condition on $P\in\Omega^k(\mc V)$
translates to the skewsymmetry of the corresponding local $k$-differential operator,
and the finiteness condition on $P$ translates into saying that
the corresponding local $k$-differential operator is of finite type.
This proves part (a).
The proof of part (b) follows by arguments similar to those in Section \ref{sec:9.3}.
\end{proof}
\begin{remark}\label{100507:rem1}
The following identity is immediate from \eqref{100506:eq3} and the master equation \eqref{100502:eq1}:
$$
\tint X_{0,\dots,0}(f_0,\dots,f_k)
=X\Big(\frac{\delta f_0}{\delta u},\dots,\frac{\delta f_k}{\delta u}\Big)\,.
$$
\end{remark}

We can write down the expression of the differential $\delta:\,\Omega^k(\mc V)\to \Omega^{k+1}(\mc V)$
in terms of local polydifferential operators of rank $\ell$.
Recalling \eqref{100518:eq3}, we have,
for a local finite type $(k+1)$-differential operator $S$,
$$
(\delta S)(P^0,\dots,P^k)
=
\sum_{i=0}^k(-1)^{k+i}
(X_{P^i}S)\big(P^0,\stackrel{i}{\check{\dots}},P^k)
\big)\,,
$$
where $X_P$ denotes the evolutionary vector field of characteristic $P\in\mc V^\ell$,
defined in \eqref{2006_X2}, and, if $S$ is the local $k$-differential operator \eqref{eq:dic15_1},
$X_PS$ denotes the local $k$-differential operator obtained from $S$
by replacing the coefficients $f_{i_1,\dots,i_k}^{n_1,\dots,n_{k}}$
by $X_P(f_{i_1,\dots,i_k}^{n_1,\dots,n_{k}})$ \cite{DSK2}.

Next, in view of Proposition \ref{100506:prop}(b), we can write the Lie superalgebra structure
of $W^{\var}(\Pi\mc V)$ in terms of local polydifferential operators.
Given $X\in W^{\var}_h$ and $Y\in W^{\var}_{k-h}$,
we have $[X,Y]=X\Box Y-(-1)^{h(k-h)}Y\Box X$ and,
recalling \eqref{100418:eq2},
the local $k$-differential operator corresponding to $X\Box Y\in W^{\var}_k$ is given by:
$$
\begin{array}{l}
\displaystyle{
(X\Box Y)\big(F^0,\dots,F^k\big)
} \\
\displaystyle{
= \sum_{\substack{
i_0<\dots <i_{k-h}\\
i_{k-h+1}<\dots< i_k}}
\pm X\Big(
\frac\delta{\delta u} Y \big(F^{\sigma(0)},\dots,F^{\sigma(k-h)}\big),
F^{\sigma(k-h+1)},\dots,F^{\sigma(k)}
\Big)\,,
}
\end{array}
$$
where $\pm$ is the sign of the permutation $(i_0,\dots,i_k)$ of the set $\{0,\dots,k\}$.

Let $K\in W^{\var}_1$ be such that $[K,K]=0$,
and consider the corresponding Poisson vertex algebra structure on $\mc V$,
or, equivalently, the corresponding Hamiltonian map $K(\partial)$ defined by \eqref{100430:eq1b}.
We can write the formula of the differential $\ad K$ for the Poisson cohomology complex $W^{\var}$,
identified with the space of local polydifferential operators.
Given $X\in W^{\var}_{k-1}$, the local $k$-differential operator
corresponding to $(\ad K)(X)\in W^{\var}_k$ is given by the following formula,
equivalent to \eqref{100503:eq11}:
$$
\begin{array}{c}
\displaystyle{
((\ad K)X)\big(F^0,\dots,F^k\big)
=
\sum_{i=0}^k (-1)^{k+i}
\tint F^i\cdot K(\partial)\frac{\delta}{\delta u} X(F^0,\stackrel{i}{\check{\dots}},F^k)
} \\
\displaystyle{
+ \sum_{0\leq i<j\leq k} (-1)^{k+i+j}
X\Big(
\frac\delta{\delta u} \tint \big(F^i\cdot K(\partial)F^j\big),
F^0,\stackrel{i}{\check{\dots}}\,\stackrel{j}{\check{\dots}},F^k
\Big)\,.
}
\end{array}
$$
Here $\cdot$ denotes the usual pairing $\mc V^{\oplus\ell}\times\mc V^\ell\to\mc V$.

\begin{remark}
We can translate the homomorphism $\Phi_K$ defined in Remark \ref{100505:th}
into the language of local polydifferential operators.
Given a finite type local $(k+1)$-differential operator $S$,
$\Phi_K^k(S)$ is the following local $(k+1)$-differential operator:
\begin{equation}\label{100507:eq8}
(\Phi_K^k S)(F^0,\dots,F^k)
= (-1)^{k+1} S(K(\partial)F^0,\dots,K(\partial)F^k)\,.
\end{equation}
Theorem \ref{100430:th}, Remark \ref{100505:th} and Proposition \ref{100506:prop}
imply that,
if $\mc V$ is a normal algebra of differential functions,
$K\in W^{\var}_1$ is such that $[K,K]=0$, and
the map $K(\lambda+\partial):\,\mc V[\lambda]^{\oplus\ell}\to\mc V[\lambda]^{\ell}$
is injective,
then, $\mc H_K^{k}(\mc V)=0$ for $k\geq0$
provided that the following condition holds:
\begin{equation}\label{100510:eq4}
\ker\big(\ad K:\,W^{\var}_k\to W^{\var}_{k+1}\big)
\,\subset\,
\Phi_K^k\big(\Omega^{k+1}(\mc V)\big)\,.
\end{equation}
Indeed,
let $\bar X\in\mc H_K^{k}(\mc V)$, and let $X\in \ker\big(\ad K:\,W^{\var}_k\to W^{\var}_{k+1}\big)$
be a representative of it.
By assumption \eqref{100510:eq4}, there exists $P\in\Omega^{k+1}(\mc V)$
such that $X=\Phi_K^k(P)$, and,
by Remark \ref{100505:th}, we have that $\Phi_K^{k+1}(\delta P)=[K,X]=0$.
Since $\Phi_K$ is injective, we have that $\delta P=0$,
hence, by Theorem \ref{100430:th}, $P=\delta Q$ for some $Q\in\Omega^k(\mc V)$.
In conclusion, $X=\Phi_K^k(\delta Q)=(\ad K)\big(\Phi_K^{k-1}(Q)\big)$, completing the proof.
\end{remark}

\begin{remark}\label{100507:rem2}
It is useful to see what the analogue of the homomorphism $\Phi_K$ 
is in the finite dimensional setup.
Let $A=\mb F[u_i\,|\,i\in I]$.
Recall from Section \ref{sec:3.1} that $W^{\as}(\Pi A)=\bigoplus_{k=-1}^\infty W^{\as}_k(\Pi A)$,
where $W^{\as}_k(\Pi A)$ consists of linear maps
$X:\,\bigwedge^{k+1}A\to A$ satisfying the Leibniz rule in all arguments,
or, equivalently, the following analogue of the master equation \eqref{100430:eq2}:
$$
X(f_0,\dots,f_k)=\sum_{i_0,\dots,i_k\in I}X(u_{i_0},\dots,u_{i_k})
\frac{\partial f_0}{\partial u_{i_0}}\dots \frac{\partial f_k}{\partial u_{i_k}}\,.
$$
To the map $X$ we associate the map $\tilde X:\,\bigwedge^{k+1}(A^{\oplus\ell})\to A$,
given by
$$
\tilde X(F^0,\dots,F^k)=\sum_{i_0,\dots,i_k\in I}X(u_{i_0},\dots,u_{i_k})
F^0_{i_0}\dots F^k_{i_k}\,,
$$
and we have the following analogue of the identity in Remark \ref{100507:rem1}:
$$
X(f_0,\dots,f_k)=
\tilde X(\nabla_u f_0,\dots,\nabla_u f_k)\,,
$$
where $\nabla_u f$ denotes the vector of partial derivatives of $f$.
Next, let $\Omega^\bullet(A)$ be the algebra of differential forms over $A$,
and let $d$ be the de Rham differential on $\Omega^\bullet(A)$.
We can associate to $\omega=\sum f_{i_0,\dots,i_k}du_{i_0}\wedge du_{i_k}\in\Omega^{k+1}(A)$
a map $\omega:\,(A^\ell)^{\otimes(k+1)}\to A$, given by
$\omega(P^0,\dots,P^k)=\sum f_{i_0,\dots,i_k} P^0_{i_0}\dots P^k_{i_k}$.
Then, for $K\in W^{\as}_1(\Pi A)$, such that $[K,K]=0$,
we have a homomorphism of complexes $\Phi_K:\,(\Omega^\bullet(A),d)\to(W^{\as}(\Pi A),\ad K)$,
given by (cf. \eqref{100507:eq8}):
$$
(\Phi_K^k\omega)\big(F^0,\dots,F^k\big)
=
\pm
\omega\big(K F^0,\dots,K F^k\big)\,,
$$
where $K$ is the $\ell\times\ell$ skewsymmetric matrix $K_{ij}=K(u_i,u_j)$.
It follows, in particular, that if $K$ is surjective, then the Poisson cohomology is trivial.
\end{remark}

\subsection{Generalized variational complexes}
\label{sec:9.5}

Let $\ell$ be finite, and let $K(\partial)=\big(K_{ij}(\partial)\big)_{i,j\in I}$
be an $\ell\times\ell$ matrix differential operator with quasiconstant coefficients.
Then, we define $\delta_K:\,\Omega^k(\mc V)\to\Omega^{k+1}(\mc V)$
by the following formula:
\begin{equation}\label{100519:eq1}
\begin{array}{l}
\displaystyle{
(\delta_KP)_{i_0,\dots,i_k}(\lambda_0,\dots,\lambda_k)
} \\
\displaystyle{
=
\sum_{\alpha=0}^k (-1)^{\alpha} \sum_{j\in I,n\in\mb Z_+}
\frac{\partial
P_{i_0,\stackrel{\alpha}{\check{\dots}},i_k}(\lambda_0,\stackrel{\alpha}{\check{\dots}},\lambda_k)
}{\partial u_j^{(n)}}
(\lambda_\alpha+\partial)^n
K_{j,i_\alpha}(\lambda_\alpha)\,.
}
\end{array}
\end{equation}
Note that, when $K=\id$ this coincides with the differential $\delta$ of the variational complex
defined in equation \eqref{100518:eq3},
and when $K(\partial)$ is a skewadjoint operator
it coincides with $(-1)^{k+1}d_K$, where $d_K$ is given by equation \eqref{100518:eq1}.
\begin{proposition}\label{100519:prop}
If $K(\partial)$ is an $\ell\times\ell$ matrix differential operator with quasiconstant coefficients,
then $\delta_K$ in \eqref{100519:eq1} is a well-defined map $\Omega^k(\mc V)\to\Omega^{k+1}(\mc V)$,
and it makes $(\Omega^\bullet(\mc V),\delta_K)$ a cohomology complex.
\end{proposition}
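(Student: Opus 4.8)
The statement to prove is Proposition \ref{100519:prop}: for a quasiconstant matrix differential operator $K(\partial)$, the map $\delta_K$ defined in \eqref{100519:eq1} is well-defined $\Omega^k(\mc V)\to\Omega^{k+1}(\mc V)$ and satisfies $\delta_K^2=0$.

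The plan is to reduce both claims to facts already established in the excerpt rather than verifying them by brute force. The remark preceding the proposition already notes two cases where $\delta_K$ is known to be a differential: when $K=\id$ it is the variational differential $\delta$ (equation \eqref{100518:eq3}), for which $\delta^2=0$ is part of the construction of $\Omega^\bullet(\mc V)$; and when $K(\partial)$ is skewadjoint it equals $(-1)^{k+1}d_K$, where $d_K=\ad K$ acts on $W^{\var}(\Pi\mc V)\cong\Omega^\bullet(\mc V)$ via the bijection $\Phi$ of Section \ref{sec:9.3}, so that $d_K^2=0$ follows from $(\ad K)^2=\frac12\ad[K,K]$ whenever $[K,K]=0$. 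The difficulty is that the formula \eqref{100519:eq1} makes sense for an arbitrary quasiconstant $K(\partial)$, with no skewadjointness or Jacobi-type hypothesis; so the proof cannot simply invoke $[K,K]=0$.

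First I would verify well-definedness, i.e. that $\delta_K$ descends to the reduced spaces $\Omega^k(\mc V)=\tilde\Omega^k(\mc V)/\partial\tilde\Omega^k(\mc V)$. This means checking that the right-hand side of \eqref{100519:eq1}, viewed as a polynomial array, respects the $\mb F[\partial]$-module relations \eqref{100518:eq4}, namely that it is well-defined modulo $(\partial+\lambda_0+\dots+\lambda_k)$. The key structural input is that $K$ has \emph{quasiconstant} coefficients, so $\frac{\partial}{\partial u_j^{(n)}}K_{i,i'}(\lambda)=0$; this lets the derivation $\frac{\partial}{\partial u_j^{(n)}}$ pass through the $K$-factor untouched and guarantees that $\delta_K$ produces a skewsymmetric array of the required finiteness. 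I would check that each summand in \eqref{100519:eq1} carries the right factor of $(\lambda_\alpha+\partial)$ (arising from the sesquilinearity/integration-by-parts bookkeeping already encoded in the passage from \eqref{100518:eq3} to \eqref{100519:eq1}) so that the total is a genuine element of $\Omega^{k+1}(\mc V)$.

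The main step, and the main obstacle, is proving $\delta_K^2=0$ for arbitrary quasiconstant $K$. The cleanest route is to establish a factorization $\delta_K=\Psi_K\circ\delta$ (up to sign) where $\delta$ is the ordinary variational differential and $\Psi_K$ is an operation that ``inserts $K$'' on the last slot; because $K$ is quasiconstant, $\Psi_K$ commutes with the derivations $\frac{\partial}{\partial u_j^{(n)}}$ that appear in $\delta$, and then $\delta_K^2=\Psi_K\delta\Psi_K\delta$ collapses using $\delta^2=0$ and the commutation. Concretely, comparing \eqref{100519:eq1} with \eqref{100518:eq3}, the operator $\delta_K$ differs from $\delta$ only by replacing the bare variable $\lambda_\alpha^n$ acting on the differentiated slot by $(\lambda_\alpha+\partial)^n K_{j,i_\alpha}(\lambda_\alpha)$; quasiconstancy ensures this replacement factors through an $\mc V$-linear, $\partial$-compatible map that does not interfere with the derivatives $\frac{\partial}{\partial u_j^{(n)}}$ in a subsequent application of $\delta$. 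I would make this factorization precise and then compute $\delta_K^2$ on an array $P$, organizing the double sum over pairs of contracted indices $(\alpha,\beta)$ and showing the antisymmetry in $\alpha,\beta$ (inherited from the signs $(-1)^\alpha(-1)^\beta$) cancels the diagonal terms exactly as in the proof of $\delta^2=0$; the quasiconstant hypothesis is what prevents extra terms where a derivative $\frac{\partial}{\partial u_j^{(n)}}$ would hit a coefficient of $K$. The hard part is purely combinatorial: tracking the insertion of the two $K$-factors and confirming that no skewadjointness of $K$ is needed for the cancellation, only that $K$'s coefficients are annihilated by all $\frac{\partial}{\partial u_j^{(n)}}$.
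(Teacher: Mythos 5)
Your ``concrete'' fallback computation is correct, but the route you present as cleanest --- the factorization $\delta_K=\Psi_K\circ\delta$ --- does not exist in degrees $k\geq 1$, and no repair of it can work. First, ``insert $K$ on the differentiated slot'' is not an operation on skewsymmetric arrays: the entries of $\delta P$ are the signed sums over $\alpha$ in \eqref{100518:eq3}, and from such a sum one cannot recover which slot carries the new index, so $\Psi_K$ has no well-defined domain beyond degree $0$ (where indeed $\delta_K\tint f=K^*(\partial)\frac{\delta f}{\delta u}$, i.e. $\delta_K=K^*(\partial)\circ\delta$ on $\Omega^0(\mc V)$). Second, and decisively, any linear map $\Psi_K$ on $\Omega^{k+1}(\mc V)$ satisfying $\Psi_K\circ\delta=\delta_K$ would force $\delta_K$ to vanish on $\ker\delta$, in particular $\delta_K\circ\delta=0$; this is false. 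Take $\ell=1$, $K(\partial)=\partial$, $f=\frac{1}{2}(u')^2$, so that $\delta f=u'\lambda$. Then \eqref{100519:eq1} gives
$$
(\delta_K\delta f)(\lambda_0,\lambda_1)
=\lambda_1(\lambda_0+\partial)\lambda_0-\lambda_0(\lambda_1+\partial)\lambda_1
=\lambda_0\lambda_1(\lambda_0-\lambda_1)\neq 0
$$
(under the identification of $\Omega^2(\mc V)$ with skewadjoint operators this is $-2\partial^3$), while $\delta(\delta f)=0$. The same example shows that the commutation $[\Psi_K,\delta]=0$ you invoke is incompatible with the factorization, since together they would again give $\delta_K\circ\delta=\Psi_K\circ\delta^2=0$. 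The moral is that the cancellation in $\delta_K^2$ pairs the two insertions of the \emph{same} operator $K$ against each other; mixed compositions such as $\delta_K\circ\delta$ do not vanish, so $\delta_K^2=0$ cannot be inherited formally from $\delta^2=0$.

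What does work is precisely your direct computation, and it is the paper's own proof: Proposition \ref{100519:prop} is deduced there from Proposition \ref{100519b:prop} in Section \ref{sec:10.4}, which argues on the unreduced complex $\tilde\Omega^\bullet(\mc V)$ in three steps: (i) $\delta_KP$ is again a skewsymmetric array, so $\delta_K:\tilde\Omega^k(\mc V)\to\tilde\Omega^{k+1}(\mc V)$ is well defined; (ii) $\delta_K^2P$ splits into a double sum over $0\leq\alpha<\beta\leq k$ and one over $0\leq\beta<\alpha\leq k$, carrying signs $(-1)^{\alpha+\beta}$ and $(-1)^{\alpha+\beta+1}$, which cancel because the second partials $\frac{\partial}{\partial u_j^{(n)}}\frac{\partial}{\partial u_i^{(m)}}$ commute and, by quasiconstancy, no derivative ever hits a coefficient of $K$ --- exactly the mechanism you describe, and indeed no skewadjointness of $K$ is used; (iii) $\delta_K$ commutes with the action of $\partial$ in \eqref{100518:eq4}, which follows from \eqref{eq:comm_frac_b} and quasiconstancy, and this is what makes $\delta_K$ descend to $\Omega^\bullet(\mc V)=\tilde\Omega^\bullet(\mc V)/\partial\tilde\Omega^\bullet(\mc V)$ (your check ``well defined modulo $(\lambda_0+\dots+\lambda_k+\partial)$'' is equivalent to this). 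So delete the factorization and run steps (i)--(iii) directly; that completes the proof.
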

\begin{proof}
It follows from Proposition \ref{100519b:prop} in Section \ref{sec:10.4}.
\end{proof}
\begin{remark}\label{100531:rem}
One can show that $d_K$ given by formula \eqref{100518:eq1} is well defined
only if $K(\partial)$ is a sum of a skewadjoint operator and a quasiconstant operator.
\end{remark}

%


\section{The universal odd PVA $\tilde W^{\var}(\Pi\mc V)$
for an algebra of differential functions, and basic PVA cohomology}
\label{sec:10}

\subsection{The Lie conformal algebra $\CVect(\mc V)$ of conformal vector fields}
\label{sec:10.05}

As in the previous section,
let $\mc V$ be an algebra of differential functions,
extension of the algebra of differential polynomials
$R_\ell=\mb F[u_i^{(n)}\,|\,i\in I,n\in\mb Z_+]$.
We assume moreover, in this section, that $\ell$ is finite.

For $i\in I$, introduce the linear map $E^i_\lambda:\,\mc V\to\mb F[\lambda]\otimes\mc V$,
given by
\begin{equation}\label{100529:eq1a}
E^i_\lambda\,=\,\sum_{n\in\mb Z_+}(-\lambda-\partial)^n\frac{\partial}{\partial u_i^{(n)}}\,.
\end{equation}
Note that $E^i_\lambda=\sum_{n\in\mb Z_+}\lambda^n E^i_{(n)}$,
is the generating series of the \emph{higher Euler operators} (see \cite{Ol2}),
$$
E^i_{(n)}=\sum_{m=n}^\infty \binom{m}{n}(-1)^m \partial^{m-n} \frac{\partial}{\partial u_i^{(m)}}\,.
$$
In particular, $E^i_0=\frac\delta{\delta u_i}$ is the variational derivative \eqref{eq:varder}.
\begin{lemma}\label{100529:prop1}
$E^i_\lambda$ is a right conformal derivation of $\mc V$ (see Section \ref{sec:7.1}).
\end{lemma}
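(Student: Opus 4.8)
The goal is to check that $E^i_\lambda$ satisfies the two conditions defining $\RCder(\mc V)$ in Section \ref{sec:7.1}: the sesquilinearity \eqref{100528:eq1}, $E^i_\lambda(\partial f)=-\lambda E^i_\lambda(f)$, and the right Leibniz rule \eqref{100528:eq2}. First I would record that $E^i_\lambda$ is well defined as a map $\mc V\to\mb F[\lambda]\otimes\mc V$: since every $f\in\mc V$ has finite differential order, only finitely many $\frac{\partial f}{\partial u_i^{(n)}}$ are nonzero, so the series in \eqref{100529:eq1a} is a genuine polynomial in $\lambda$ with coefficients in $\mc V$.

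Rather than verify the two axioms by hand, the plan is to transport the problem to the Lie conformal algebra $gc(\mc V)$ of (left) conformal endomorphisms via the isomorphism $*\colon Rgc(\mc V)\stackrel{\sim}{\to}gc(\mc V)$ of \eqref{100528:eq3}, which by Remark \ref{20110528:rem2} restricts to an isomorphism $\RCder(\mc V)\stackrel{\sim}{\to}\Cder(\mc V)$. The element I would use on the $gc$ side is the simpler generating series $D^i_\lambda:=\sum_{n\in\mb Z_+}\lambda^n\frac{\partial}{\partial u_i^{(n)}}$ (the operator appearing on the left-hand side of \eqref{eq:comm_frac_b}). For $D^i$ both conformal-derivation axioms are transparent: its sesquilinearity $D^i_\lambda(\partial f)=(\lambda+\partial)D^i_\lambda(f)$ is precisely the generating-series commutation relation \eqref{eq:comm_frac_b} read at $z=\lambda$, so $D^i\in gc(\mc V)$, and the Leibniz rule $D^i_\lambda(fg)=D^i_\lambda(f)g+D^i_\lambda(g)f$ holds because each $\frac{\partial}{\partial u_i^{(n)}}$ is a derivation of $\mc V$ and $\lambda$ is central; hence $D^i\in\Cder(\mc V)$.

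It then remains to identify $E^i$ with the $*$-preimage of $D^i$. Since the assignment $X_\lambda(f)\mapsto X_{-\lambda-\partial}(f)$ defining $*$ is an involution (because $\lambda\mapsto-\lambda-\partial$ is one at the operator level), the inverse $*^{-1}$ is given by the same formula, so $(*^{-1}D^i)_\lambda(f)=D^i_{-\lambda-\partial}(f)$. Here lies the one delicate point, and also why I run the argument in this direction: the coefficients $\frac{\partial f}{\partial u_i^{(n)}}$ of the polynomial $D^i_\mu(f)=\sum_n\mu^n\frac{\partial f}{\partial u_i^{(n)}}$ contain neither $\mu$ nor any pending $\partial$, so the substitution $\mu\mapsto-\lambda-\partial$ is completely unambiguous and yields $\sum_n(-\lambda-\partial)^n\frac{\partial f}{\partial u_i^{(n)}}=E^i_\lambda(f)$. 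Thus $E^i=*^{-1}(D^i)\in\RCder(\mc V)$, which is the assertion. The analogous substitution performed starting from $E^i$ itself would instead entangle the two occurrences of $\partial$ and force a binomial bookkeeping, so reversing the direction is exactly what keeps the computation clean.

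For completeness I would note the direct route, in case one prefers not to invoke $*$. Sesquilinearity telescopes: writing $\frac{\partial}{\partial u_i^{(n)}}\circ\partial=\partial\circ\frac{\partial}{\partial u_i^{(n)}}+\frac{\partial}{\partial u_i^{(n-1)}}$ from \eqref{eq:comm_frac} and using the operator identity $\partial=-(-\lambda-\partial)-\lambda$, the two resulting sums cancel after reindexing, leaving $-\lambda E^i_\lambda(f)$. The Leibniz rule \eqref{100528:eq2} follows from the derivation property of each $\frac{\partial}{\partial u_i^{(n)}}$, the product expansion $P(\partial)(ab)=\sum_{k\geq0}\frac1{k!}(\partial^k a)\,P^{(k)}(\partial)b$ applied with $P(\partial)=(-\lambda-\partial)^n$, and commutativity of $\mc V$ used to reorder $u\,\frac{\partial v}{\partial u_i^{(n)}}$; the main nuisance there is matching the binomial coefficients against the arrow convention $(\,\cdot\,)_{\lambda+\partial}(\,\cdot\,)_\to$ in \eqref{100528:eq2}, which is precisely the bookkeeping the $*$-argument sidesteps.
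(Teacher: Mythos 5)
Your proof is correct, and it takes a genuinely different route from the paper's. The paper's own proof is exactly the ``direct route'' you sketch in your last paragraph: it checks the sesquilinearity $E^i_\lambda(\partial f)=-\lambda E^i_\lambda(f)$ straight from the generating-series identity \eqref{eq:comm_frac_b} (substituted at $z\mapsto -\lambda-\partial$), and dismisses the Leibniz rule \eqref{100528:eq2} as a straightforward computation. Your main argument instead works on the $gc(\mc V)$ side: you verify both axioms for $D^i_\lambda={}^iE_\lambda=\sum_{n}\lambda^n\frac{\partial}{\partial u_i^{(n)}}$, where they are indeed transparent (sesquilinearity is \eqref{eq:comm_frac_b} read at $z=\lambda$, and the Leibniz rule is the derivation property of each $\frac{\partial}{\partial u_i^{(n)}}$ plus commutativity of $\mc V$), and then pull back along $*^{-1}$, using two correct observations: the substitution $\lambda\mapsto-\lambda-\partial$ is an involution on $\mb F[\lambda]\otimes\mc V$, and the coefficients $\frac{\partial f}{\partial u_i^{(n)}}$ of $D^i_\mu(f)$ carry no $\mu$ and no pending $\partial$, so that $*^{-1}(D^i)=E^i$ literally by \eqref{100529:eq1a}. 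What your route buys is that all the binomial bookkeeping with $(-\lambda-\partial)^n$ acting on products is absorbed into one structural fact, namely that $*$ restricts to an isomorphism $\RCder(\mc V)\simeq\Cder(\mc V)$; what it costs is the reliance on that fact, i.e.\ Remark \ref{20110528:rem2}, which the paper states without proof (legitimately citable here, since it precedes the lemma, but its verification is essentially the same computation you are sidestepping). It is also worth noting that the duality you exploit is the one the paper itself records only \emph{after} the lemma, in Remark \ref{100529:rem}, where ${}^iE_\lambda$ is introduced and $(\tilde X^P)^*={}^{P^*}\tilde X$ is observed; so your argument can be read as running that remark backwards to prove the lemma, which is a clean and valid reorganization.
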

\begin{proof}
We need to check that, for $f,g\in\mc V$,
we have $E^i_\lambda(\partial f)=-\lambda E^i_\lambda(f)$
and $E^i_\lambda(fg)=(E^i_{\lambda+\partial}f)_\to g+(E^i_{\lambda+\partial}g)_\to f$.
The first identity follows immediately from \eqref{eq:comm_frac_b},
and the second one is straightforward.
\end{proof}
Given an $\ell$-tuple of polynomials $P=(P_i(\lambda))_{i\in I}\in(\mb F[\lambda]\otimes\mc V)^\ell$,
we define the \emph{conformal vector field} of characteristic $P$
as the map
$\tilde X^P:\,\mc V\to\mb F[\lambda]\otimes\mc V$, given by
\begin{equation}\label{100529:eq1b}
\tilde X^P_\lambda(f)
=
\sum_{i\in I}P_i(\lambda+\partial)E^i_\lambda(f)
\end{equation}
We denote by $\CVect(\mc V)$ the space of all conformal vector fields.
It follows immediately from Lemma \ref{100529:prop1} that $\CVect(\mc V)$
is a subspace of the space $\RCder(\mc V)$ of right conformal derivations of $\mc V$.
\begin{proposition}\label{100529:prop2}
\begin{enumerate}[(a)]
\item
The $\mb F[\partial]$-module structure of $\RCder(\mc V)$ restricts
to the following $\mb F[\partial]$-module structure on $\CVect(\mc V)$:
for $P\in(\mb F[\lambda]\otimes\mc V)^\ell$,
$\partial \tilde X^P$ is the conformal vector field with characteristics
$$
\big((\partial+\lambda)P_i(\lambda)\big)_{i\in I}\,.
$$
\item
the  formal power series valued $\lambda$-bracket 
on $\RCend(\mc V)$ (cf. Section \ref{sec:6.3})
restricts to the following polynomial valued 
$\lambda$-bracket on $\CVect(\mc V)$:
$[\tilde X^P{}_\lambda\tilde X^Q]=\tilde X^{[P_\lambda Q]}$
for $P,Q\in(\mb F[\lambda]\otimes\mc V)^\ell$,
where
\begin{equation}\label{100529:eq2}
\begin{array}{rcl}
[P_\lambda Q]_i(\mu)
&=&
\displaystyle{
\sum_{j\in I,n\in\mb Z_+}
\big((\lambda+\partial)^nP^*_j(\lambda)\big)\frac{\partial Q_i(\mu)}{\partial u_j^{(n)}}
} \\
&& \displaystyle{
-\sum_{j\in I,n\in\mb Z_+}
Q_j(\lambda+\mu+\partial)(-\lambda-\mu-\partial)^n\frac{\partial P_i(\mu)}{\partial u_j^{(n)}}
\,.
}
\end{array}
\end{equation}
Here, for $\displaystyle{P_i(\lambda)=\sum_{n\in\mb Z_+}\lambda^nP_i^n}$,
we denote $\displaystyle{P^*_i(\lambda)=\sum_{n\in\mb Z_+}(-\lambda-\partial)^nP_i^n}$.
\item
$\CVect(\mc V)$ is a Lie conformal algebra $\RCder(\mc V)$.
\end{enumerate}
\end{proposition}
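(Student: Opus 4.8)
The plan is to dispatch part (a) by a direct verification, to reduce part (c) to parts (a) and (b) via Corollary \ref{infinite:cor1}, and to concentrate the genuine work in the bracket computation of part (b). For part (a) I would start from the $\mb F[\partial]$-module structure on $\tilde W^\partial_0(\mc V)=\RCend(\mc V)$ given by \eqref{100517:eq3} with $k=0$, namely $(\partial\tilde X)_\lambda(f)=(\lambda+\partial)\tilde X_\lambda(f)$. Applying this to $\tilde X=\tilde X^P$ and using \eqref{100529:eq1b}, it suffices to check the operator identity $(\lambda+\partial)\,P_i(\lambda+\partial)=\big((\partial+\lambda)P_i\big)(\lambda+\partial)$ acting on $E^i_\lambda(f)$, i.e.\ that the substitution $\lambda\mapsto\lambda+\partial$ is compatible with left multiplication by $(\lambda+\partial)$. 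One must keep track of the fact that in $(\partial+\lambda)P_i$ the symbol $\partial$ differentiates the coefficients of $P_i$, whereas in $P_i(\lambda+\partial)$ it acts on the argument to the right; granting this, one concludes $\partial\tilde X^P=\tilde X^{P'}$ with $P'_i(\lambda)=(\partial+\lambda)P_i(\lambda)$, so that $\CVect(\mc V)$ is an $\mb F[\partial]$-submodule of $\RCder(\mc V)$.

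For part (b), the heart of the matter, I would use the explicit $\lambda$-bracket on $\RCend(\mc V)$ recalled in Section \ref{sec:7.1},
\[
[\tilde X^P{}_\lambda \tilde X^Q]_\mu(f)
= \tilde X^P_{-\lambda-\partial}\big(\tilde X^Q_\mu(f)\big)
- \tilde X^Q_\lambda\big(\tilde X^P_\mu(f)\big)\,,
\]
the sign being $+1$ since $\mc V$ is purely even. I would then substitute $\tilde X^Q_\mu(f)=\sum_j Q_j(\mu+\partial)E^j_\mu(f)$ into the first term and $\tilde X^P_\mu(f)=\sum_i P_i(\mu+\partial)E^i_\mu(f)$ into the second. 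The computation rests on two ingredients: first, that $\tilde X^P$ and each $E^i$ are right conformal derivations (Lemma \ref{100529:prop1}), so they obey the Leibniz rule \eqref{100528:eq2}; and second, the commutation identities between the higher Euler operators $E^i_\lambda$, the derivations $\frac{\partial}{\partial u_j^{(n)}}$, and $\partial$, which follow from \eqref{eq:comm_frac_b} and \eqref{100529:eq1a}. Expanding $\tilde X^P_{-\lambda-\partial}$ by the Leibniz rule over the product $Q_j(\mu+\partial)E^j_\mu(f)$ produces two kinds of terms: a \emph{first-order} term in which $\tilde X^P$ hits the characteristic $Q_j$, which will assemble into the first sum of \eqref{100529:eq2}, and a \emph{second-order} term in which both $P$ and $Q$ ultimately differentiate $f$; the second term of the bracket decomposes symmetrically, its first-order part differentiating $P_i$.

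The decisive step, which I expect to be the main obstacle, is to show that the two second-order contributions — one from $\tilde X^P_{-\lambda-\partial}(\tilde X^Q_\mu(f))$ and one from $\tilde X^Q_\lambda(\tilde X^P_\mu(f))$ — coincide and hence cancel in the commutator, exactly as the ``supersymmetric'' double sum drops out in the proof of Proposition \ref{100430:prop}. After this cancellation, what survives from the first term is the first sum in \eqref{100529:eq2} (with $P^*_j$ arising from the $-\lambda-\partial$ substitution together with the sesquilinearity of $E^j$), and what survives from the second term is the second sum; careful matching of indices and powers of $\partial$ then gives $[\tilde X^P{}_\lambda\tilde X^Q]=\tilde X^{[P_\lambda Q]}$ with $[P_\lambda Q]$ as stated, and this expression is manifestly polynomial in $\lambda$ with characteristic in $(\mb F[\lambda]\otimes\mc V)^\ell$. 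Finally, part (c) is then immediate: by part (a) the space $\CVect(\mc V)$ is an $\mb F[\partial]$-submodule of $\RCder(\mc V)\subset\tilde W^\partial_0(\mc V)$, and by part (b) the $\lambda$-bracket of any two conformal vector fields is again a polynomial-valued conformal vector field, so Corollary \ref{infinite:cor1} applies and exhibits $\CVect(\mc V)$, with the bracket \eqref{100529:eq2}, as an honest Lie conformal algebra.
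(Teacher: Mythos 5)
Your strategy coincides with the paper's: part (a) is read off from the $\mb F[\partial]$-module structure \eqref{100517:eq3}, part (b) is a direct Leibniz-rule computation with the $\lambda$-bracket of right conformal endomorphisms and the commutation identity \eqref{eq:comm_frac_b}, and part (c) follows from polynomiality via Corollary \ref{infinite:cor1} (the paper cites Lemma \ref{infinite:lem1}, which is the same point). Parts (a) and (c) of your proposal are fine. But part (b), as you set it up, would fail at your ``careful matching of indices'' step, because the bracket formula you start from is not the right one. You copied the formula displayed in Section \ref{sec:7.1}, whose second term is $Y_\lambda(X_\mu(v))$; that display contains a typo, inconsistent with \eqref{100517:eq1} (which is what the paper's proof actually invokes) and with the $k=0$ specialization of \eqref{100419x:eq5}. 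The correct second term is $Y_{\lambda+\mu}(X_\mu(v))$, so the computation must begin from
\[
[\tilde X^P{}_\lambda\tilde X^Q]_\mu(f)=\tilde X^P_{-\lambda-\partial}\bigl(\tilde X^Q_\mu(f)\bigr)-\tilde X^Q_{\lambda+\mu}\bigl(\tilde X^P_\mu(f)\bigr)\,.
\]

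Two independent checks show your version cannot be used. First, it violates sesquilinearity: using $(\partial X)_\mu(v)=(\mu+\partial)X_\mu(v)$, your bracket gives $[\partial X_\lambda Y]_\mu(v)=-\lambda X_{-\lambda-\partial}(Y_\mu(v))-(\mu-\lambda)Y_\lambda(X_\mu(v))$, which is not $-\lambda[X_\lambda Y]_\mu(v)$; with the subscript $\lambda+\mu$ both terms acquire the required factor $-\lambda$. Second, the target formula \eqref{100529:eq2} contains $Q_j(\lambda+\mu+\partial)$ and $(-\lambda-\mu-\partial)^n$ in its second sum, and these can only be produced by the first-order part of $\tilde X^Q_{\lambda+\mu}(\tilde X^P_\mu(f))$; your term $\tilde X^Q_\lambda(\tilde X^P_\mu(f))$ would instead yield contributions of the form $Q_j(\lambda+\partial)(-\lambda-\partial)^n\frac{\partial P_i(\mu)}{\partial u_j^{(n)}}$, which do not match. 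The same shift is needed for the step you rightly single out as decisive: the cancellation of the second-order terms rests on the identity ${}^iE_\mu\bigl(E^j_\lambda(f)\bigr)=E^j_{\lambda+\mu}\bigl(E^i_\lambda(f)\bigr)$ (used in the proof of Proposition \ref{100422x:prop3}), whose very shape requires the shifted subscript; with $\lambda$ in place of $\lambda+\mu$ the two double-derivative sums do not line up and do not cancel. The gap is repairable: rerun your part (b) computation starting from \eqref{100517:eq1}; with that correction your outline (Leibniz expansion, first-order terms assembling into the two sums of \eqref{100529:eq2}, cancellation of second-order terms) is exactly the ``straightforward computation'' the paper has in mind.
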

\begin{proof}
Part (a) follows from the definition \eqref{100517:eq3} of the $\mb F[\partial]$-module structure
of the space $\RCend(\mc V)=\tilde W^\partial_0(\Pi\mc V)$ of right conformal endomorphisms of $\mc V$.
Part (b) is obtained, by a straightforward computation,
using the definition \eqref{100517:eq1} of the $\lambda$-bracket in $\RCend(\mc V)$
and identity \eqref{eq:comm_frac_b}.
Finally, the $\lambda$-bracket \eqref{100529:eq2}
is clearly polynomial valued,
and it satisfies all the Lie conformal algebra axioms 
by Lemma \ref{infinite:lem1}.
\end{proof}
\begin{remark}\label{100529:rem}
The image of $\CVect(\mc V)$ via the isomorphism \eqref{100528:eq3}
is a subalgebra of $\Cder(\mc V)$, the space 
of (left) conformal derivations of $\mc V$.
Its elements, which is natural to call \emph{left conformal vector fields},
are of the form
$$
{}^P\tilde X_\lambda=\sum_{i\in I,n\in\mb Z_+}
\big((\partial+\lambda)^nP_i(\lambda)\big) \frac{\partial}{\partial u_i^{(n)}}\,.
$$
Equivalently, letting ${}^iE_\lambda=\sum_{n\in\mb Z_+}\lambda^n\frac{\partial}{\partial u_i^{(n)}}$,
we have
$$
{}^P\tilde X_\lambda(f)=\sum_{i\in I}\big({}^iE_{\lambda+\partial}f\big)_\to P_i(\lambda)\,.
$$
In fact, we have $(\tilde X^P)^*={}^{P^*}\tilde X$,
where $P^*$ is defined in Proposition \ref{100529:prop2}(b).
The induced Lie conformal algebra structure on the space of left conformal vector fields
is as follows:
$\partial({}^P\tilde X)$ is the left conformal vector field of characteristics
$\big(-\lambda P_i(\lambda)\big)_{i\in I}$,
and
$[{}^P\tilde X_\lambda{}^Q\tilde X]={}^{[P_\lambda Q]^L}\tilde X$,
where
$$
[P_\lambda Q]^L_i(\mu)={}^P\tilde X_\lambda(Q_i(\mu-\lambda))
-{}^Q\tilde X_{\mu-\lambda}(P_i(\lambda))\,.
$$
In fact, formula \eqref{100529:eq2} for the $\lambda$-bracket in $\CVect(\mc V)$
can be written, using this notation, in a similar form:
$$
[P_\lambda Q]_i(\mu)={}^{P^*}\tilde X_\lambda(Q_i(\mu))
-\tilde X^Q_{\lambda+\mu}(P_i(\mu))\,.
$$
\end{remark}

\subsection{The universal odd PVA $\tilde W^{\var}(\Pi\mc V)$}
\label{sec:10.1}

Recall the definition of the $\mb F[\partial]$-module
$\tilde W^\partial(\Pi \mc V)=\bigoplus_{k=-1}^\infty \tilde W^\partial_k(\Pi \mc V)$,
with parity denoted by $\bar p$ (see Section \ref{sec:6.1}),
together with its formal power series values $\lambda$-bracket,
defined in Section \ref{sec:6.3}.
Consider the full prolongation (cf. Definition \ref{infinite:def1})
of the Lie conformal superalgebra
$\CVect(\mc V)\subset\RCend(\mc V)=\tilde W^\partial_0(\Pi\mc V)$,
which we denote
$$
\tilde W^{\var}(\Pi\mc V)=\bigoplus_{k=-1}^\infty\tilde W^{\var}_k\subset
\tilde W^\partial(\Pi\mc V)\,.
$$
Its elements are called conformal polyvector fields.
We will show that the restriction of the $\lambda$-bracket on 
$\Pi\tilde W^{\var}(\Pi \mc V)$ is polynomial valued,
and this makes $\Pi\tilde W^{\var}(\Pi \mc V)$ an odd PVA.
\begin{proposition}\label{100422x:prop1}
For $k\geq-1$, the superspace $\tilde W^{\var}_k$
is the subspace of $\tilde W^{\partial}_k(\Pi\mc V)$,
consisting of linear maps
$X:\,\mc V^{\otimes(k+1)}\to\mb F[\lambda_0,\dots,\lambda_k]\otimes \mc V$
satisfying the sesquilinearity and skewsymmetry conditions \eqref{100530:eq1} and \eqref{100530:eq2},
and the master equation \eqref{100430:eq2}, where both sides are interpreted as elements
of $\mb F[\lambda_0,\dots,\lambda_k]\otimes \mc V$
(not, as in Section \ref{sec:9.1}, of $\mb F[\lambda_0,\dots,\lambda_k]\otimes_{\mb F[\partial]} \mc V$).
\end{proposition}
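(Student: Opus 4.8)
The plan is to prove Proposition \ref{100422x:prop1} by induction on $k\geq-1$, in exact parallel to the proofs of Proposition \ref{100422c:prop1} and Proposition \ref{100422z:prop1}; the only difference is that the prolongation now starts from $\CVect(\mc V)$ instead of $\RCder(\mc V)$, so the characterizing condition is the master equation \eqref{100430:eq2} rather than merely the Leibniz rule \eqref{100422c:eq1}. For $k=-1$ there is nothing to prove, since $\tilde W^{\var}_{-1}=\mc V=\tilde W^\partial_{-1}(\Pi\mc V)$ and the master equation is vacuous. For $k=0$ one has $\tilde W^{\var}_0=\CVect(\mc V)$ by definition, and I would check directly that $X\in\RCend(\mc V)=\tilde W^\partial_0(\Pi\mc V)$ lies in $\CVect(\mc V)$ iff it satisfies \eqref{100430:eq2} for $k=0$: setting $P_i(\lambda)=X_\lambda(u_i)$ and using the operator identity \eqref{100624:eq1}, the $s=0$ form \eqref{100502:eq1} of the master equation at $k=0$ is literally the defining formula \eqref{100529:eq1b} of the conformal vector field $\tilde X^P$.

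For the inductive step, assume the claim for $k-1\geq0$. By the definition \eqref{100528:eq4-infinite} of the full prolongation, $X\in\tilde W^\partial_k(\Pi\mc V)$ lies in $\tilde W^{\var}_k$ iff $[X_\lambda f]\in\mb F[[\lambda]]\otimes\tilde W^{\var}_{k-1}$ for every $f\in\mc V=\tilde W^\partial_{-1}(\Pi\mc V)$. The identity \eqref{101125:eq1} expresses the contraction $[X_\lambda f]$, up to a sign and the reparametrization $\lambda_0\mapsto-\lambda-\partial$, as $X$ with its first argument frozen equal to $f$. Hence, by the inductive hypothesis, $[X_\lambda f]\in\tilde W^{\var}_{k-1}$ for all $f$ is equivalent to $X$ obeying the reduction \eqref{100502:eq1} in the arguments $1,\dots,k$ for arbitrary frozen first argument. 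Since every element of $\tilde W^\partial_k(\Pi\mc V)$ is skewsymmetric \eqref{100530:eq2}, validity of \eqref{100502:eq1} for one value of $s$ (here $s=1$, with all other arguments arbitrary) forces its validity for all $s$; and, as noted in Section \ref{sec:9.1}, the system \eqref{100502:eq1} for all $s$ is just a reformulation of the master equation \eqref{100430:eq2}. Running this equivalence in both directions identifies $\tilde W^{\var}_k$ with the set of those $X\in\tilde W^\partial_k(\Pi\mc V)$ satisfying the master equation, as desired.

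Two routine points remain. First, an $X$ satisfying \eqref{100430:eq2} automatically satisfies the sesquilinearity \eqref{100530:eq1} (as already observed in Section \ref{sec:9.3}), so that beyond membership in $\tilde W^\partial_k(\Pi\mc V)$ the only conditions cut out are the listed ones. Second, one must track the $\partial$-shifts in the variables $\lambda_0,\dots,\lambda_k$ produced by \eqref{101125:eq1} and \eqref{100419x:eq6}, keeping in mind that here both sides of \eqref{100430:eq2} are read in $\mb F[\lambda_0,\dots,\lambda_k]\otimes\mc V$, not in the $\mb F[\partial]$-quotient of Section \ref{sec:9.1}; since $\ell$ is finite, the master equation determines $X$ from its finitely many values on tuples of generators, so all these expressions are genuine polynomials in the $\lambda$'s. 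The main point requiring care, and the only genuinely non-formal step, is precisely this translation between the power-series prolongation condition $[X_\lambda f]\in\mb F[[\lambda]]\otimes\tilde W^{\var}_{k-1}$ and the polynomial master equation \eqref{100430:eq2}: one has to verify that the $\lambda$-reparametrizations match so that reducing the arguments of the contraction $[X_\lambda f]$ coincides exactly with reducing the corresponding arguments of $X$.
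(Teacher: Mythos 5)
Your proposal is correct and takes essentially the same route as the paper's own proof: both reduce the conjunction of sesquilinearity, skewsymmetry and the master equation \eqref{100430:eq2} to skewsymmetry plus the single-slot reduction \eqref{100502:eq1} (the paper takes $s=k$, rewritten via the operators $E^i_\lambda$ as \eqref{100502:eq1bis}), identify the base case $k=0$ with $\CVect(\mc V)$, and run the induction on $k$ through the contraction identity \eqref{101125:eq1} and the definition \eqref{100528:eq4-infinite} of the full prolongation. The only cosmetic difference is that you freeze the first argument and reduce in the first remaining slot, whereas the paper reduces in the last slot; by skewsymmetry these are interchangeable.
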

\begin{proof}
First, we observe that the master equation \eqref{100430:eq2} implies
the sesquilinearity condition \eqref{100530:eq1}.
Recall also that the master equation \eqref{100430:eq2} is equivalent to
the equations \eqref{100502:eq1},
and if, moreover, $X$ satisfies the skewsymmetry condition \eqref{100530:eq1},
it is enough to have equation \eqref{100502:eq1} for $s=k$.
Hence, an element $X\in\tilde W^{\partial}_k(\Pi\mc V)$ satisfies all
conditions \eqref{100530:eq1}, \eqref{100530:eq2} and \eqref{100430:eq2}
if and only if it satisfies condition \eqref{100530:eq2} and the equation
\begin{equation}\label{100502:eq1bis}
\begin{array}{l}
\displaystyle{
X_{\lambda_0,\dots,\lambda_k}(f_0,\dots,f_k)
} \\
\displaystyle{
=
\sum_{i\in I}
X_{\lambda_0,\dots,\lambda_{k-1},\lambda_k+\partial}(f_0,\dots,f_{k-1},u_i)_\to
E^i_{\lambda_k}(f_k)\,.
}
\end{array}
\end{equation}

Clearly, for $k=0$ the maps $X:\,\mc V\to\mb F[\lambda]\otimes\mc V$
satisfying \eqref{100502:eq1bis} are exactly the conformal vector fields of $\mc V$,
the characteristics being $\big(X_\lambda(u_i)\big)_{i\in I}$.
Let $X\in\tilde W^{\var}_k$
and let us prove by induction on $k$ that it satisfies equation \eqref{100502:eq1bis}.
By \eqref{101125:eq1} we have
$$
\begin{array}{l}
\vphantom{\Big(}
X_{\lambda_0,\dots,\lambda_k}(f_0,\dots,f_k)
=
(-1)^{1+k}[{f_0}_{\lambda_0}X]_{\lambda_1,\dots,\lambda_k}(f_1,\dots,f_k) \\
\vphantom{\Big(}
\displaystyle{
=
\sum_{i\in I}(-1)^{1+k}[{f_0}_{\lambda_0}X]_{\lambda_1,\dots,\lambda_{k-1},\lambda_k+\partial}
(f_1,\dots,f_{k-1},u_i)_\to E^i_{\lambda_k}(f_k)
} \\
\vphantom{\Big(}
\displaystyle{
=
\sum_{i\in I}X_{\lambda_0,\dots,\lambda_{k-1},\lambda_k+\partial}
(f_0,\dots,f_{k-1},u_i)_\to E^i_{\lambda_k}(f_k)\,,
}
\end{array}
$$
proving \eqref{100502:eq1bis}.
\end{proof}
\begin{remark}\label{100502x:rem}
For $\mc V=R_\ell$, we have
$\tilde W^{\var}(\Pi\mc V)=\tilde W^{\partial,\as}(\Pi\mc V)$,
but for arbitrary $\mc V$ this is not always the case (see Remark \ref{100502:rem}).
\end{remark}
Note that the master equation implies that $X$ satisfies the Leibniz rule \eqref{100422c:eq1}.
Consequently, $\tilde W^{\var}_k$ is a subspace of $\tilde W^{\partial,\as}_k(\Pi\mc V)$.
\begin{proposition}\label{100422x:prop3}
\begin{enumerate}[(a)]
\item
For $X,Y\in\tilde W^{\var}(\Pi\mc V)$, $[X_\lambda Y]$
is a polynomial in $\lambda$ with coefficients in $\tilde W^{\var}(\Pi\mc V)$.
Moreover, the subspace $\tilde W^{\var}(\Pi\mc V)\subset\tilde W^{\partial,\as}(\Pi\mc V)$
is closed under the concatenation product \eqref{100422z:eq4}.
\item
$\Pi\tilde W^{\var}(\Pi\mc V)$, together with the $\lambda$-bracket and the concatenation product,
is a $\mb Z_+$-graded odd PVA.
\item
The representation of the Lie superalgebra $W^{\partial}(\Pi\mc V)$
on $\tilde W^{\partial}(\Pi\mc V)$
restricts to a representation of its subalgebra $W^{\var}(\Pi\mc V)$
on the odd PVA
$\tilde W^{\var}(\Pi\mc V)\subset \tilde W^{\partial}(\Pi\mc V)$,
commuting with $\partial$ and
acting by derivations of both
the concatenation product and the $\lambda$-bracket.
\item
The canonical map $\tint:\,\tilde W^\partial(\Pi\mc V)\to W^\partial(\Pi\mc V)$,
defined in Proposition \ref{100517:prop}(a),
restricts to a map $\tint:\,\tilde W^{\var}(\Pi\mc V)\to W^{\var}(\Pi\mc V)$,
which is a homomorphism of representations of the Lie superalgebra $W^{\var}(\Pi\mc V)$.
Moreover, this map induces a Lie algebra isomorphism
$$
\tint\,:\,\,\tilde W^{\var}(\Pi\mc V)/\partial\tilde W^{\var}(\Pi\mc V)
\stackrel{\sim}{\longrightarrow} W^{\var}(\Pi\mc V)\,.
$$
\end{enumerate}
\end{proposition}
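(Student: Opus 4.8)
The four parts are the variational analogues of Propositions \ref{100422z:prop1}, \ref{100422z:prop3} and of the map $\tint$ of Proposition \ref{100517:prop}, and my plan is to deduce them from those results together with the master-equation characterization of $\tilde W^{\var}$ established in Proposition \ref{100422x:prop1}. The guiding principle is that, although the ambient space $\tilde W^\partial(\Pi\mc V)$ carries only formal-power-series $\lambda$-brackets, an element $X\in\tilde W^{\var}_k$ is \emph{finitely determined}: by the master equation \eqref{100430:eq2} (equivalently \eqref{100502:eq1bis}) it is completely fixed by the finite collection of polynomials $X_{\lambda_0,\dots,\lambda_k}(u_{i_0},\dots,u_{i_k})\in\mb F[\lambda_0,\dots,\lambda_k]\otimes\mc V$ with $i_0,\dots,i_k\in I$, since $I$ is finite. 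This plays, for $\tilde W^{\var}$, the role that the finite-generation hypothesis played for $\tilde W^{\partial,\as}$ in Section \ref{sec:7.1}.

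For part (a) I would first recall, as noted after Proposition \ref{100422x:prop1}, the inclusion $\tilde W^{\var}_k\subset\tilde W^{\partial,\as}_k(\Pi\mc V)$ (the master equation implies the Leibniz rule \eqref{100422c:eq1}). Since $\tilde W^{\var}$ is a full prolongation in the sense of Definition \ref{infinite:def1}, the Jacobi identity already gives $[X_\lambda Y]\in\mb F[[\lambda]]\otimes\tilde W^{\var}_k$; what remains is \emph{polynomiality}. By finite determination it suffices to check that $[X_\lambda Y]_{\lambda_0,\dots,\lambda_k}(u_{i_0},\dots,u_{i_k})$ is polynomial in $\lambda$, and this follows from the explicit box-product formulas \eqref{100419x:eq6}, whose right-hand sides, evaluated on generators, reduce by the master equation to finite sums of polynomials, exactly as in the proof of Proposition \ref{100422z:prop1}(b) but with the differential generators replaced by the $u_i$. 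Closedness under the concatenation product \eqref{100422z:eq4} I would obtain by verifying that $X\wedge Y$ again satisfies \eqref{100502:eq1bis} for $s=k$, splitting the sum in \eqref{100422z:eq4} according to whether the last slot is fed to $X$ or to $Y$ and invoking the master equations for the two factors, in parallel with the Leibniz-rule computation \eqref{100423z:eq1bis}. Corollary \ref{infinite:cor1} then upgrades $\tilde W^{\var}$ to an honest Lie conformal superalgebra.

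Parts (b) and (c) follow largely by restriction. For (b), once (a) exhibits $\Pi\tilde W^{\var}(\Pi\mc V)$ as a subspace of the odd PVA $\Pi\tilde W^{\partial,\as}(\Pi\mc V)$ closed under both the $\lambda$-bracket and the concatenation product, all the odd-PVA axioms (commutativity and associativity of $\wedge$, the fact that $\partial$ is an even derivation, the odd Leibniz rule \eqref{100423z:eq2bis}, and the Lie conformal axioms) are inherited from Proposition \ref{100422z:prop3}(a). For (c), using the left and right box products \eqref{101024:eq1} as in Proposition \ref{100422z:prop3}(b), I would check that for $X\in W^{\var}_h$ and $\tilde Y\in\tilde W^{\var}_{k-h}$ the element $[X,\tilde Y]$ of \eqref{100517:eq2} again satisfies the master equation, upgrading the Leibniz-rule verification \eqref{101110:eq2}--\eqref{101110:eq3} to \eqref{100502:eq1bis}; that this action is by derivations of both products and commutes with $\partial$ is then immediate from Propositions \ref{100517:prop}(b) and \ref{100422z:prop3}(b) restricted to $\tilde W^{\var}$.

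Part (d) is where the genuinely new content lies. That $\tint$ maps $\tilde W^{\var}$ into $W^{\var}$, is a homomorphism of $W^{\var}$-representations, and induces an \emph{injective} map on $\tilde W^{\var}/\partial\tilde W^{\var}$ all follow from Propositions \ref{100517:prop} and \ref{100422z:prop3}(c) by restriction, noting that the master equation \eqref{100430:eq2} for $\tilde W^{\var}$ reduces under $\tint$ to the master equation defining $W^{\var}$ in Section \ref{sec:9.1}. The hard point, and the reason the statement asserts an isomorphism rather than, as for $\tilde W^{\partial,\as}$, mere injectivity, is \emph{surjectivity}. Here I would again use finite determination on both sides: an element of $W^{\var}_k$ is a skewsymmetric array of values $Z_{\lambda_0,\dots,\lambda_k}(u_{i_0},\dots,u_{i_k})\in\mb F_-[\lambda_0,\dots,\lambda_k]\otimes_{\mb F[\partial]}\mc V$, and a preimage $\tilde Z\in\tilde W^{\var}_k$ is produced, for $k\geq1$, by the symmetric lift of Remark \ref{100517:rem} (replacing $\lambda_i$ by $\lambda_i^\dagger=-\lambda_0-\stackrel{i}{\check{\dots}}-\lambda_k-\partial$ and averaging over $i$), which preserves skewsymmetry, and extending by the master equation; the case $k=0$, identifying $\CVect(\mc V)/\partial\CVect(\mc V)$ with $\Vect^\partial(\mc V)=\mc V^\ell$ through characteristics, is checked directly. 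The main obstacle is thus to verify that this lift is well defined and skewsymmetric and that $\partial\tilde W^{\var}_k$ is exactly the kernel of $\tint$ on $\tilde W^{\var}_k$; both reduce, via finite determination, to the corresponding statements for the free finite-rank module spanned by the generators $u_i$, which is precisely the situation analyzed in Remark \ref{100517:rem}.
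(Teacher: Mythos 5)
Your proposal follows essentially the same route as the paper's proof: finite determination of elements of $\tilde W^{\var}$ by their values on the generators $u_i$ (via the master equation) gives polynomiality of the $\lambda$-bracket and closedness under the concatenation product, part (c) reduces to checking the master equation \eqref{100502:eq1bis} for $[X,\tilde Y]$ via the left and right box products, and part (d)'s surjectivity is obtained by the skewsymmetrized lift on generators extended by the master formula --- exactly the paper's construction. The only caveat concerns your phrasing in part (b): since $\mc V$ is not assumed finitely generated as a differential algebra in this section, $\tilde W^{\partial,\as}(\Pi\mc V)$ is only a ``generalized'' odd PVA (formal-power-series valued $\lambda$-bracket), so the axioms are not literally inherited by restriction but rather hold because they are identities valid on every fixed collection of vectors, becoming honest identities once polynomiality on $\tilde W^{\var}$ is established --- which is precisely what the paper means when it says part (b) is proved ``in the same way as'' Proposition \ref{100422z:prop3}(a).
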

\begin{proof}
First note that, by Proposition \ref{100422x:prop1} and formula \eqref{100430:eq2},
an element $X\in\tilde W^{\var}_k$ is determined by its values on the generators
$u_1,\dots,u_\ell$.
Since, by assumption, $\ell$ is finite, it follows that $[X_\lambda Y]$ 
is a polynomial in $\lambda$.
Let $X\in\tilde W^{\var}_{h-1}$ and $Y\in\tilde W^{\var}_{k-h-1}$,
with $k\geq h\geq 0$.
For part (a) it remains to prove that $X\wedge Y$, defined by \eqref{100422z:eq4},
lies in $\tilde W^{\var}_{k-1}$,
namely, it satisfies the master equation \eqref{100430:eq2}:
$$
\begin{array}{l}
\displaystyle{
(X\wedge Y)_{\lambda_1,\dots,\lambda_k}(f_1,\dots,f_k)
=
\sum_{\substack{i_1,\dots,i_k\in I \\ m_1,\dots,m_k\in\mb Z_+}}
\bigg(e^{\partial\partial_{\lambda_1}}\frac{\partial f_1}{\partial u_{i_1}^{(m_1)}}\bigg)
\dots
}\\
\displaystyle{
\,\,\,\,\,\,\,\,\,\dots
\bigg(e^{\partial\partial_{\lambda_k}}\frac{\partial f_k}{\partial u_{i_k}^{(m_k)}}\bigg)
(-\lambda_1)^{m_1}\dots(-\lambda_k)^{m_k}
(X\wedge Y)_{\lambda_1,\dots,\lambda_k}(u_{i_1},\dots,u_{i_k})\,.
}
\end{array}
$$
This is easily checked by a straightforward computation.
Part (b) can be proved in the same way as Proposition \ref{100422z:prop3}(a).
Recall from Proposition \ref{100422z:prop3}(b)
that we have a representation of the Lie superalgebra $W^{\partial,\as}(\Pi\mc V)$
on $\tilde W^{\partial,\as}(\Pi\mc V)$ commuting with $\partial$,
and acting by derivations of both the concatenation product and the $\lambda$-bracket.
Recall also, from Proposition \ref{100430:prop}, that
$W^{\var}(\Pi\mc V)\subset W^{\partial,\as}(\Pi\mc V)$ is a Lie subalgebra,
and, from part (b), that
$\tilde W^{\var}(\Pi\mc V)\subset \tilde W^{\partial,\as}(\Pi\mc V)$ is an odd Poisson vertex algebra.
Hence, in order to prove part (c), we only need to check that,
if $X\in W^{\var}_{h}$ and $\tilde Y\in\tilde W^{\var}_{k-h}$,
then $[X,\tilde Y]$, defined by \eqref{100517:eq2}, lies in $\tilde W^{\var}_k$.
By the observations in the proof of Proposition \ref{100422x:prop1},
it is therefore enough to prove that $[X,\tilde Y]$
satisfies condition \eqref{100502:eq1bis}:
\begin{equation}\label{101206:eq4}
\begin{array}{l}
\displaystyle{
[X,\tilde Y]_{\lambda_0,\dots,\lambda_k}(f_0,\dots,f_k)
} \\
\displaystyle{
-\sum_{i\in I}
[X,\tilde Y]_{\lambda_0,\dots,\lambda_{k-1},\lambda_k+\partial}(f_0,\dots,f_{k-1},u_i)_\to
E^i_{\lambda_k}(f_k) = 0\,.
}
\end{array}
\end{equation}
We consider separately the left and right box products $X\Box^L\tilde Y$
and $\tilde Y\Box^R X$, defined by \eqref{101024:eq1}.
We get, after a long but straightforward computation,
\begin{equation}\label{101206:eq1}
\begin{array}{l}
\displaystyle{
(X\Box^L \tilde Y)_{\lambda_0,\dots,\lambda_k}(f_0,\dots,f_k)
=\sum_{\substack{
\alpha_0<\dots <\alpha_{k-h}=k\\
\alpha_{k-h+1}<\dots< \alpha_k}}
\sign(\alpha)
\sum_{i\in I}
} \\
\displaystyle{
\tilde Y_{\lambda_{\alpha_0},\dots,\lambda_{\alpha_{k-h-1}},
\lambda_k+\lambda_{\alpha_{k-h+1}}+\dots+\lambda_{\alpha_k}+\partial}
(f_{\alpha_0},\dots, f_{\alpha_{k-h-1}},u_i)_\to
} \\
\displaystyle{
X_{-\lambda_{\alpha_{k-h+1}}-\dots-\lambda_{\alpha_k}-\partial,
\lambda_{\alpha_{k-h+1}},\dots,\lambda_{\alpha_k}}
\Big(E^i_{\lambda_k}(f_k),f_{\alpha_{k-h+1}},\dots, f_{\alpha_k}\Big)
} \\
\displaystyle{
+ \sum_{i\in I}
(X\Box^L \tilde Y)_{\lambda_0,\dots,\lambda_{k-1},\lambda_k+\partial}(f_0,\dots,f_{k-1},u_i)_\to
E^i_{\lambda_k}(f_k) \,,
}
\end{array}
\end{equation}
where the first sum in the RHS runs over all permutations $\alpha$ of $\{0,\dots,k\}$
satisfying the specified inequalities.
Similarly, we have
\begin{equation}\label{101206:eq2}
\begin{array}{l}
\displaystyle{
(\tilde Y\Box^R X)_{\lambda_0,\dots,\lambda_k}(f_0,\dots,f_k)
=\sum_{\substack{
\alpha_0<\dots <\alpha_{h}=k\\
\alpha_{h+1}<\dots< \alpha_k}}
\sign(\alpha)
\sum_{i\in I}
} \\
\displaystyle{
\tilde Y_{\lambda_{\alpha_0}+\dots+\lambda_{\alpha_{h-1}}+\lambda_{\alpha_k}+\partial
\lambda_{\alpha_{h+1}},\dots,\lambda_{\alpha_k}}
(E^i_{\lambda_k}(f_k),f_{\alpha_{h+1}},\dots, f_{\alpha_k})_\to
} \\
\displaystyle{
X_{\lambda_{\alpha_0},\dots,\lambda_{\alpha_{h-1}},
-\lambda_{\alpha_0}-\dots-\lambda_{\alpha_{h-1}}-\partial
}
\Big(
f_{\alpha_0},\dots, f_{\alpha_{h-1}},u_i\Big)
} \\
\displaystyle{
+ \sum_{i\in I}
(\tilde Y\Box^R \tilde X)_{\lambda_0,\dots,\lambda_{k-1},\lambda_k+\partial}(f_0,\dots,f_{k-1},u_i)_\to
E^i_{\lambda_k}(f_k) \,.
}
\end{array}
\end{equation}
Combining \eqref{101206:eq1} and \eqref{101206:eq2}, we get that
the LHS of \eqref{101206:eq4} is
$$
\begin{array}{l}
\displaystyle{
\sum_{\substack{
\alpha_0<\dots <\alpha_{k-h}=k\\
\alpha_{k-h+1}<\dots< \alpha_k}}
\!\!\!\!\!\!\!\!\!
\sign(\alpha)
\sum_{i\in I}
\tilde Y_{\lambda_{\alpha_0},\dots,\lambda_{\alpha_{k-h-1}},
\lambda_{\alpha_{k-h}}+\dots+\lambda_{\alpha_k}+\partial}
(f_{\alpha_0},\dots, f_{\alpha_{k-h-1}},u_i)_\to
} \\
\displaystyle{
\,\,\,\,\,\,\,\,\,\,\,\,\,\,\,\,\,\,
\vphantom{\Bigg(}
X_{-\lambda_{\alpha_{k-h+1}}-\dots-\lambda_{\alpha_k}-\partial,
\lambda_{\alpha_{k-h+1}},\dots,\lambda_{\alpha_k}}
\Big(E^i_{\lambda_k}(f_k),f_{\alpha_{k-h+1}},\dots, f_{\alpha_k}\Big)
} \\
\displaystyle{
- (-1)^{h(k-h)}
\!\!\!\!\!\!\!\!\!
\sum_{\substack{
\alpha_0<\dots <\alpha_{h}=k\\
\alpha_{h+1}<\dots< \alpha_k}}
\!\!\!\!\!\!\!
\sign(\alpha)
\sum_{i\in I}
\tilde Y_{\lambda_{\alpha_0}+\dots+\lambda_{\alpha_h}+\partial,
\lambda_{\alpha_{h+1}},\dots,\lambda_{\alpha_k}}
(E^i_{\lambda_k}(f_k),f_{\alpha_{h+1}},\dots
} \\
\displaystyle{
\,\,\,\,\,\,\,\,\,\,\,\,\,\,\,\,\,\,
\vphantom{\Bigg(}
\dots, f_{\alpha_k})_\to
X_{\lambda_{\alpha_0},\dots,\lambda_{\alpha_{h-1}},
-\lambda_{\alpha_0}-\dots-\lambda_{\alpha_{h-1}}-\partial
}
\Big(
f_{\alpha_0},\dots, f_{\alpha_{h-1}},u_i\Big)\,.
}
\end{array}
$$
We can then use conditions \eqref{100530:eq2} and \eqref{100430:eq2} on $X$ and $\tilde Y$
to rewrite the above expression as follows
$$
\begin{array}{c}
\displaystyle{
\sum_{\substack{
\alpha_0<\dots <\alpha_{k-h}=k\\
\alpha_{k-h+1}<\dots< \alpha_k}}
\!\!\!\!\!\!\!\!\!
\sign(\alpha)\!
\sum_{i,j\in I}\!
\Bigg(\!
\tilde Y_{\lambda_{\alpha_0},\dots,\lambda_{\alpha_{k-h-1}},
\lambda_{\alpha_{k-h}}+\dots+\lambda_{\alpha_k}+\partial}
(f_{\alpha_0},\dots, f_{\alpha_{k-h-1}},u_j)_\to
\!\!\!\!
} \\
\displaystyle{
\Big(
{}^iE_{\lambda_{\alpha_{k-h+1}}+\dots+\lambda_{\alpha_k}+\partial}\big(E^j_{\lambda_k}(f_k)\big)_\to
-
E^j_{\lambda_{\alpha_{k-h}}+\dots+\lambda_{\alpha_k}+\partial}\big(E^i_{\lambda_k}(f_k)\big)_\to
\Big)
} \\
\displaystyle{
\vphantom{\Bigg(}
X_{-\lambda_{\alpha_{k-h+1}}-\dots-\lambda_{\alpha_k}-\partial,
\lambda_{\alpha_{k-h+1}},\dots,\lambda_{\alpha_k}}
\Big(u_i,f_{\alpha_{k-h+1}},\dots, f_{\alpha_k}\Big)
\Bigg)\,,
}
\end{array}
$$
where ${}^iE_\lambda$ was introduced in Remark \ref{100529:rem}.
To conclude, we observe that the above expression is zero, thanks to the following identity,
$$
{}^iE_\mu\big(E^j_\lambda(f)\big)=E^j_{\lambda+\mu}\big(E^i_\lambda(f)\big)\,,
$$
which can be easily checked: both sides above are equal to
$$
\displaystyle{\sum_{m,n\in\mb Z_+}\mu^m(-\lambda-\mu-\partial)^n
\frac{\partial^2 f}{\partial u_i^{(m)}\partial u_j^{(n)}}}\,.
$$

The first statement in part (d) is obvious from  Proposition \ref{100517:prop}.
We thus only need to prove that the map
$\tint:\,\tilde W^{\var}(\Pi\mc V)\to W^{\var}(\Pi\mc V)$
is surjective, so that the induced map
$\tint:\,\tilde W^{\var}(\Pi\mc V)/\partial\tilde W^{\var}(\Pi\mc V)\to W^{\var}(\Pi\mc V)$
is bijective.
Let $X\in W^{\var}_k$. We can construct a representative $\tilde X\in\tilde W^{\var}_k$
as follows.
Recall that we can identify
$\mb F_-[\lambda_0,\dots,\lambda_k]\otimes_{\mb F[\partial]}\mc V
\simeq\mb F[\lambda_0,\dots,\lambda_{k-1}]\otimes\mc V$,
by letting $\lambda_k=-\lambda_0-\dots-\lambda_{k-1}-\partial$.
For $i_0,\dots,i_k\in I$, we then have
$X_{\lambda_0,\dots,\lambda_{k-1}}(u_{i_0},\dots,u_{i_k})
\in
\mb F[\lambda_0,\dots,\lambda_{k-1}]\otimes\mc V$.
We let
$$
\tilde X_{\lambda_0,\dots,\lambda_k}(u_{i_0},\dots,u_{i_k})
=
\frac{1}{k+1}\sum_{\alpha=0}^k(-1)^{k-\alpha}
X_{\lambda_0,\stackrel{\alpha}{\check{\dots}},\lambda_{k}}
(u_{i_0},\stackrel{\alpha}{\check{\dots}},u_{i_k},u_{i_\alpha})\,,
$$
and we extend it to a map
$\tilde X:\,\mc V^{\otimes(k+1)}\to\mb F[\lambda_0,\dots,\lambda_k]\otimes\mc V$
by the master formula \eqref{100430:eq2}.
By construction $\tilde X$ lies in $\tilde W^{\var}_k$, and we clearly have $\tint\tilde X=X$,
since they agree on generators.
\end{proof}

Recall that $W^{\var}_{-1}=\mc V/\partial\mc V,\,W^{\var}_{0}=\Vect^\partial(\mc V)$
and $W^{\var}_1$ coincides with the space of skewcommutative $\lambda$-brackets on $\mc V$
satisfying the master equation \eqref{100502:eq2}.
We can then compute explicitly, using \eqref{100517:eq2},
the action of $X\in W^{\var}_h$ on $\tilde W^{\var}_{k-h}$, for $h=-1,0,1$.
The formulas for these actions coincide, respectively, with formulas
\eqref{100503:eq1}, \eqref{100503:eq5} and \eqref{100503:eq11},
where all the terms are considered as elements of
$\mb F[\lambda_1,\dots,\lambda_k]\otimes\mc V$
(not of $\mb F[\lambda_1,\dots,\lambda_k]\otimes_{\mb F[\partial]}\mc V$).

\subsection{PVA structures on an algebra of differential functions and basic cohomology complexes}
\label{sec:10.2}

Let $K\in W^{\var}_1$ be such that $[K,K]=0$,
and denote by $\{\cdot\,_\lambda\,\cdot\}_K$
the corresponding Poisson $\lambda$-bracket on $\mc V$,
and by $K(\partial)$ the corresponding Hamiltonian operator given by \eqref{100430:eq1b}.
Then $(\ad K)^2=0$, hence the action of $K$ on $\tilde W^{\var}(\Pi\mc V)$,
given by Proposition \ref{100422x:prop3}(b),
provides a differential on $\tilde W^{\var}(\Pi\mc V)$, which we denote by $d_K$.
We call $(\tilde W^{\var}(\Pi\mc V),d_K)$
the \emph{basic Poisson cohomology complex} of the PVA $\mc V$
with the $\lambda$-bracket $\{\cdot\,_\lambda\,\cdot\}_K$.
Note that, by Proposition \ref{100422x:prop3}(b),
the differential $d_K$ is an odd derivation of both the concatenation product
and the $\lambda$-bracket of $\tilde W^{\var}(\Pi \mc V)$.
Moreover, by Proposition \ref{100422x:prop3}(c),
the linear map $\tint$ is a homomorphism of cohomology complexes
$(\tilde W^{\var}(\Pi \mc V),d_K)\to(W^{\var}(\Pi \mc V),\ad K)$.

Recalling \eqref{100503:eq11} we get, by the observations at the end of Section \ref{sec:10.1},
the following explicit formula for the differential $d_K$ associated to $K\in W^{\var}_1$,
acting on $Y\in\tilde W^{\var}_{k-1}$:
\begin{equation}\label{100503x:eq11}
\begin{array}{c}
\displaystyle{
(d_K Y)_{\lambda_0,\dots,\lambda_k}(f_0,\dots,f_k)
= (-1)^{k+1}\bigg(
\sum_{i=0}^k (-1)^{i}
\big\{{f_i}_{\lambda_i}
 Y_{\lambda_0,\stackrel{i}{\check{\dots}},\lambda_k}(f_0,\stackrel{i}{\check{\dots}},f_k)\big\}_K
} \\
\displaystyle{
+ \sum_{0\leq i<j\leq k}
(-1)^{i+j}
Y_{\lambda_i+\lambda_j,\lambda_0,\stackrel{i}{\check{\dots}}\stackrel{j}{\check{\dots}},\lambda_k}
\big(\{{f_i}_{\lambda_i}f_j\}_K,f_0,\stackrel{i}{\check{\dots}}\,\stackrel{j}{\check{\dots}},f_k\big)
\bigg)\,.
}
\end{array}
\end{equation}
\begin{remark}\label{100530:rem}
The identity map on $\mc V=\tilde \Omega^0(\mc V)=\tilde W^{\var}_{-1}$
extends to a homomorphism of cohomology complexes
$\tilde \Phi_K:\,(\tilde \Omega^\bullet(\mc V),\delta)\to(\tilde W^{\var}(\Pi\mc V),d_K)$
given by formula \eqref{100505:eq1},
where both sides are interpreted as elements in
$\mb F[\lambda_0,\dots,\lambda_k]\otimes\mc V$,
not in $\mb F[\lambda_0,\dots,\lambda_k]\otimes_{\mb F[\partial]}\mc V$
as for $\Phi_K$ from Remark \ref{100505:th}.
We thus have the following commutative diagram of homomorphisms of cohomology complexes:
$$
\UseTips
\xymatrix{
{\big(\tilde\Omega^\bullet(\mc V),\delta\big)} \ar@{>>}[d]
& \ar[r]^{\tilde\Phi_K} & & {\big(\tilde W^{\var}(\Pi\mc V),d_K\big)} \ar@{>>}[d]
\\
{\big(\tilde\Omega^\bullet(\mc V)/\partial \tilde\Omega^\bullet(\mc V),\delta\big)} \ar@{=}[d]
& & & {\big(\tilde W^{\var}(\Pi\mc V)/\partial\tilde W^{\var}(\Pi\mc V),d_K\big)} \ar@{^{(}->}[d]
\\
{\big(\Omega^\bullet(\mc V),\delta\big)}
& \ar[r]^{\Phi_K} & & {\big(W^{\var}(\Pi\mc V),d_K\big)}
}
$$
\end{remark}

\subsection{Identification of $\tilde W^{\var}(\Pi\mc V)$ with $\tilde\Omega^\bullet(\mc V)$}
\label{sec:10.3}

In this section we will construct a (non covariant) explicit identification of the superspaces
$\tilde W^{\var}_k$ and $\tilde \Omega^{k+1}(\mc V)$ for finite $\ell$,
along the lines of the discussion in Section \ref{sec:9.3}.

An element $X\in\tilde W^{\var}_k$ is completely determined,
via the master equation \eqref{100430:eq2}, by the collection of polynomials
$X_{\lambda_0,\dots,\lambda_k}(u_{i_0},\dots,u_{i_k})
\in\mb F[\lambda_0,\dots,\lambda_k]\otimes\mc V$,
with $i_1,\dots,i_k\in I$.
Hence, we construct a linear map
$$
\tilde \Phi:\,\tilde \Omega^\bullet(\mc V)\to\tilde W^{\var}(\Pi\mc V)\,,
$$
sending $P\in\tilde \Omega^{k+1}(\mc V)$ to $\tilde \Phi^kP\in W^{\var}_k$,
such that
\begin{equation}\label{100518x:eq5}
(\tilde \Phi^kP)_{\lambda_0,\dots,\lambda_k}(u_{i_0},\dots,u_{i_k})
=
P_{i_0,\dots,i_k}(\lambda_0,\dots,\lambda_k)\,,
\end{equation}
and it is extended to a map
$\tilde \Phi^kP:\,\mc V^{\otimes(k+1)}\to\mb F[\lambda_0,\dots,\lambda_k]\otimes\mc V$
by the master equation \eqref{100430:eq2}.
Clearly, $\tilde \Phi$ is surjective for finite $\ell$, and it is injective in general.

Let $\ell$ be finite. Let $K\in W^{\var}_1$ be such that $[K,K]=0$,
and consider the Poisson cohomology complex $(\tilde W^{\var}(\Pi\mc V),d_K)$.
Using the bijection $\tilde \Phi:\,\tilde\Omega^\bullet(\mc V)\to \tilde W^{\var}(\Pi\mc V)$,
we get a differential $d_K:\,\tilde \Omega^k(\mc V)\to\tilde \Omega^{k+1}(\mc V)$
induced by the action of $d_K$ on $\tilde W^{\var}(\Pi\mc V)$.
Recalling equation \eqref{100503x:eq11}, we get that the explicit formula for the differential
$d_K$ on $\tilde\Omega^k(\mc V)$ is given by \eqref{100518:eq1},
where both sides are considered as elements of
$\mb F[\lambda_1,\dots,\lambda_k]\otimes\mc V$
(not of $\mb F[\lambda_1,\dots,\lambda_k]\otimes_{\mb F[\partial]}\mc V$):
\begin{equation}\label{100518b:eq1}
\begin{array}{l}
\displaystyle{
(d_KP)_{i_0,\dots,i_k}(\lambda_0,\dots,\lambda_k)
} \\
\displaystyle{
= (-1)^{k+1} \sum_{j\in I,n\in\mb Z_+} \bigg(
\sum_{\alpha=0}^k (-1)^{\alpha}
\frac{\partial
P_{i_0,\stackrel{\alpha}{\check{\dots}},i_k}(\lambda_0,\stackrel{\alpha}{\check{\dots}},\lambda_k)
}{\partial u_j^{(n)}}
(\lambda_\alpha+\partial)^n
K_{j,i_\alpha}(\lambda_\alpha)
} \\
\displaystyle{
+ \sum_{0\leq \alpha<\beta\leq k}
(-1)^{\alpha+\beta}
P_{j,i_0,\stackrel{\alpha}{\check{\dots}}\,\stackrel{\beta}{\check{\dots}},i_k}
(\lambda_\alpha+\lambda_\beta+\partial,
\lambda_0,\stackrel{\alpha}{\check{\dots}}\,\stackrel{\beta}{\check{\dots}},\lambda_k)_\to
} \\
\displaystyle{
\,\,\,\,\,\,\,\,\,\,\,\,\,\,\,\,\,\,\,\,\,\,\,\,\,\,\,\,\,\,\,\,\,\,\,\,\,\,\,\,\,\,
\,\,\,\,\,\,\,\,\,\,\,\,\,\,\,\,\,\,\,\,\,\,\,\,\,\,\,\,\,\,\,\,\,\,\,\,\,\,\,\,\,\,
(-\lambda_\alpha-\lambda_\beta-\partial)^n
\frac{\partial K_{i_\beta,i_\alpha}(\lambda_\alpha)}{\partial u_j^{(n)}}
\bigg)\,.
}
\end{array}
\end{equation}

\subsection{Generalized de Rham complexes}
\label{sec:10.4}

Let $\ell$ be finite, and let $K(\partial)=\big(K_{ij}(\partial)\big)_{i,j\in I}$
be an $\ell\times\ell$ matrix differential operator with quasiconstant coefficients.
We define the map $\delta_K:\,\tilde\Omega^k(\mc V)\to\tilde\Omega^{k+1}(\mc V)$
by the same formula \eqref{100519:eq1}, interpreting both sides as elements of
$\mb F[\lambda_1,\dots,\lambda_k]\otimes\mc V$
(not of $\mb F[\lambda_1,\dots,\lambda_k]\otimes_{\mb F[\partial]}\mc V$):
\begin{equation}\label{100519b:eq1}
\begin{array}{l}
\displaystyle{
(\delta_KP)_{i_0,\dots,i_k}(\lambda_0,\dots,\lambda_k)
} \\
\displaystyle{
=
\sum_{\alpha=0}^k (-1)^{\alpha} \sum_{j\in I,n\in\mb Z_+}
\frac{\partial
P_{i_0,\stackrel{\alpha}{\check{\dots}},i_k}(\lambda_0,\stackrel{\alpha}{\check{\dots}},\lambda_k)
}{\partial u_j^{(n)}}
(\lambda_\alpha+\partial)^n
K_{j,i_\alpha}(\lambda_\alpha)\,.
}
\end{array}
\end{equation}

Note that, when $K=\id$ this coincides with the differential $\delta$ of the de Rham complex
defined in equation \eqref{100518:eq3},
and when $K(\partial)$ is a skewadjoint operator
it coincides with $(-1)^{k+1}d_K$, where $d_K$ is given by equation \eqref{100518b:eq1}.
\begin{proposition}\label{100519b:prop}
\begin{enumerate}[(a)]
\item
If $K(\partial)$ is an $\ell\times\ell$ matrix differential operator with quasiconstant coefficients,
then $(\tilde\Omega^\bullet(\mc V),\delta_K)$ is a cohomology complex, i.e. $\delta_K^2=0$.
\item
The differential $\delta_K$ in \eqref{100519b:eq1} commutes with the action of $\partial$
on $\tilde\Omega^\bullet(\mc V)$ given by \eqref{100518:eq4}.
Moreover, the canonical quotient map $\tilde\Omega^\bullet(\mc V)\to\Omega^\bullet(\mc V)$
gives a homomorphism of complexes
$(\tilde\Omega^\bullet(\mc V),\delta_K)\to(\Omega^\bullet(\mc V),\delta_K)$.
\end{enumerate}
\end{proposition}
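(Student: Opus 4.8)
The plan is to establish both parts by direct computation from the defining formula \eqref{100519b:eq1}, the only structural inputs being the quasiconstancy of $K$ and the commutation rule \eqref{eq:comm_frac}. For part (a) I would compute $\delta_K^2 P$ for $P\in\tilde\Omega^k(\mc V)$ by substituting \eqref{100519b:eq1} into itself. The outer operator $\frac{\partial}{\partial u_j^{(n)}}$ is applied, via the Leibniz rule, to each product of $\frac{\partial P}{\partial u_{j'}^{(n')}}$ with a factor $(\lambda_\alpha+\partial)^{n'}K_{j',i_\alpha}(\lambda_\alpha)$. Here the quasiconstancy hypothesis is essential: since $\frac{\partial}{\partial u_j^{(n)}}(K_{j',i_\alpha})=0$ and $\partial$ preserves quasiconstants ($\partial\mc F\subset\mc F$), the term in which the derivative lands on the $K$-factor vanishes, so only the second-derivative term $\frac{\partial^2 P}{\partial u_j^{(n)}\partial u_{j'}^{(n')}}$ survives. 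Thus $\delta_K^2 P$ is a double sum, over an ordered pair of distinct slots $\alpha,\beta$, of products of $\frac{\partial^2 P}{\partial u_j^{(n)}\partial u_{j'}^{(n')}}$ with the two quasiconstant factors $\bigl((\lambda_\alpha+\partial)^{n'}K_{j',i_\alpha}\bigr)\bigl((\lambda_\beta+\partial)^{n}K_{j,i_\beta}\bigr)$.

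The vanishing then follows from an antisymmetry argument identical in spirit to the proof that the ordinary de Rham differential \eqref{100518:eq3} squares to zero. Swapping the roles of the two inserted slots, $(\alpha,\beta)\leftrightarrow(\beta,\alpha)$, together with the summation labels $(j,n)\leftrightarrow(j',n')$, leaves the summand unchanged, because mixed partial derivatives commute and the two $K$-factors are merely multiplied in the commutative algebra $\mc V$. Meanwhile the accompanying sign is $(-1)^{\beta}$ times the Koszul sign $(-1)^{\tilde\alpha}$ coming from deleting the $\beta$-th index (with $\tilde\alpha=\alpha$ for $\alpha<\beta$ and $\tilde\alpha=\alpha-1$ for $\alpha>\beta$); a short check shows this sign is reversed under the swap, since for $\alpha<\beta$ the outer-$\beta$/inner-$\alpha$ term carries $(-1)^{\alpha+\beta}$ while the outer-$\alpha$/inner-$\beta$ term carries $(-1)^{\alpha+\beta-1}$. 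Hence the contributions cancel in pairs and $\delta_K^2=0$.

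For part (b) I would verify $\delta_K\circ\partial=\partial\circ\delta_K$ termwise for each slot $\alpha$, with $\partial$ acting as in \eqref{100518:eq4}. On the $\delta_K\partial P$ side the relation \eqref{eq:comm_frac}, in the form $\frac{\partial}{\partial u_j^{(n)}}\partial=\partial\frac{\partial}{\partial u_j^{(n)}}+\frac{\partial}{\partial u_j^{(n-1)}}$, produces an extra contribution $\sum_{j,n}\frac{\partial P}{\partial u_j^{(n-1)}}(\lambda_\alpha+\partial)^nK_{j,i_\alpha}$; on the $\partial\,\delta_K P$ side the derivation $\partial$ also hits the $K$-factor, producing $\sum_{j,n}\frac{\partial P}{\partial u_j^{(n)}}(\partial+\lambda_\alpha)(\lambda_\alpha+\partial)^nK_{j,i_\alpha}$. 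Using $(\partial+\lambda_\alpha)(\lambda_\alpha+\partial)^n=(\lambda_\alpha+\partial)^{n+1}$ and reindexing $n\mapsto n-1$ (the $n=0$ term dropping out, since the right-hand side of \eqref{eq:comm_frac} is zero there), these two extra contributions coincide, and all remaining terms match directly. Consequently $\delta_K$ maps $\partial\tilde\Omega^k(\mc V)$ into $\partial\tilde\Omega^{k+1}(\mc V)$, so it descends to $\Omega^\bullet(\mc V)=\tilde\Omega^\bullet(\mc V)/\partial\tilde\Omega^\bullet(\mc V)$; the descended map is given by the same formula read in the quotient, so the canonical projection intertwines the two differentials and is a homomorphism of complexes. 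This is precisely what is needed to deduce Proposition \ref{100519:prop}.

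The main obstacle is the sign bookkeeping in part (a): one must carefully track the Koszul signs generated by deleting indices in the two successive applications of $\delta_K$ and confirm they are opposite under the slot-swap, since everything else—commutativity of mixed partials and of the product in $\mc V$—is automatic. The quasiconstancy hypothesis is the single structural assumption that does real work: it is exactly what eliminates the would-be obstruction terms in which a derivative falls on $K$, terms that are first order in the derivatives of $P$ and would not cancel in general (cf.\ the necessity discussion in Remark \ref{100531:rem}).
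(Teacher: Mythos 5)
Your proposal is correct and follows essentially the same route as the paper's proof: in part (a) quasiconstancy of $K$ (together with $\partial\mc F\subset\mc F$) ensures the outer $\frac{\partial}{\partial u_j^{(n)}}$ never hits a $K$-factor, and the surviving double sum over ordered pairs of slots cancels because the terms with $\alpha<\beta$ carry sign $(-1)^{\alpha+\beta}$ while those with $\beta<\alpha$ carry $(-1)^{\alpha+\beta+1}$, exactly as in the paper; in part (b) your use of \eqref{eq:comm_frac} and the reindexing $n\mapsto n-1$ is just the coefficient-wise form of the paper's appeal to the generating-series identity \eqref{eq:comm_frac_b}, and the descent to $\Omega^\bullet(\mc V)$ is then immediate. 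The only piece of the paper's proof you do not address is the preliminary (routine) verification that $\delta_K P$ is again a skewsymmetric array, i.e.\ that \eqref{100519b:eq1} really defines a map $\tilde\Omega^k(\mc V)\to\tilde\Omega^{k+1}(\mc V)$, which the paper checks before computing $\delta_K^2$.
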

\begin{proof}
Let $P\in\tilde\Omega^k(\mc V)$ and let $\sigma\in S_{k+1}= Perm(0,\dots,k)$.
We have
$$
\begin{array}{l}
\displaystyle{
(\delta_KP)_{i_{\sigma(0)},\dots,i_{\sigma(k)}}(\lambda_{\sigma(0)},\dots,\lambda_{\sigma(k)})
\sum_{\alpha=0}^k (-1)^{\alpha}
} \\
\displaystyle{
=
\sum_{j\in I,n\in\mb Z_+}
\frac{\partial
P_{i_{\sigma(0)},\stackrel{\alpha}{\check{\dots}},i_{\sigma(k)}}(\lambda_{\sigma(0)},\stackrel{\alpha}{\check{\dots}},\lambda_{\sigma(k)})
}{\partial u_j^{(n)}}
(\lambda_{\sigma(\alpha)}+\partial)^n
K_{j,i_{\sigma(\alpha)}}(\lambda_{\sigma(\alpha)})\,.
}
\end{array}
$$
Note that the permutation $(\sigma(0),\stackrel{\alpha}{\check{\dots}},\sigma(k))$
of the set $\{0,\stackrel{\sigma(\alpha)}{\check{\dots}},k\}$ has sign $\sign(\sigma)(-1)^{\alpha+\sigma(\alpha)}$.
Hence, by the skewsymmetry condition on $P$, we can write the RHS above as
$$
\begin{array}{l}
\displaystyle{
\sum_{\alpha=0}^k (-1)^{\alpha} \sum_{j\in I,n\in\mb Z_+}
\sign(\sigma)(-1)^{\alpha+\sigma(\alpha)}
\frac{\partial
P_{i_0,\stackrel{\sigma(\alpha)}{\check{\dots}},i_k}(\lambda_0,\stackrel{\sigma(\alpha)}{\check{\dots}},\lambda_k)
}{\partial u_j^{(n)}}
} \\
\displaystyle{
\times(\lambda_{\sigma(\alpha)}+\partial)^n
K_{j,i_{\sigma(\alpha)}}(\lambda_{\sigma(\alpha)})
=\sign(\sigma)
(\delta_KP)_{i_0,\dots,i_k}(\lambda_0,\dots,\lambda_k)\,.
}
\end{array}
$$
This shows that $\delta_KP$ is a skewsymmetric array, i.e. $\delta_K$
is a well defined map from $\tilde\Omega^k(\mc V)$ to $\tilde\Omega^{k+1}(\mc V)$.

Since, by assumption, $K$ has quasiconstant coefficients, we have
$$
\begin{array}{l}
\displaystyle{
(\delta^2_KP)_{i_0,\dots,i_k}(\lambda_0,\dots,\lambda_k)
} \\
\displaystyle{
=
\sum_{\beta=0}^k (-1)^{\beta} \sum_{j\in I,n\in\mb Z_+}
\frac{\partial}{\partial u_j^{(n)}}
(\delta_K P)_{i_0,\stackrel{\beta}{\check{\dots}},i_k}(\lambda_0,\stackrel{\beta}{\check{\dots}},\lambda_k)
(\lambda_\beta+\partial)^n
K_{j,i_\beta}(\lambda_\beta)
} \\
\displaystyle{
=
\sum_{0\leq\alpha<\beta\leq k} (-1)^{\alpha+\beta} \sum_{i,j\in I,m,n\in\mb Z_+}
\frac{\partial}{\partial u_j^{(n)}} \frac{\partial}{\partial u_i^{(m)}}
P_{i_0,\stackrel{\alpha}{\check{\dots}}\,\stackrel{\beta}{\check{\dots}},i_k}
(\lambda_0,\stackrel{\alpha}{\check{\dots}}\,\stackrel{\beta}{\check{\dots}},\lambda_k)
} \\
\displaystyle{
\,\,\,\,\,\,\,\,\,\,\,\,\,\,\,\,\,\,\,\,\,\,\,\,\,\,\,\,\,\,\,\,\,\,\,\,\,\,\,\,\,\,\,\,\,\,\,\,\,\,\,\,\,\,\,\,\,\,
\times
\big((\lambda_\alpha+\partial)^m K_{i,i_\alpha}(\lambda_\alpha)\big)
\big((\lambda_\beta+\partial)^n K_{j,i_\beta}(\lambda_\beta)\big)
} \\
\displaystyle{
+
\sum_{0\leq\beta<\alpha\leq k} (-1)^{\alpha+\beta+1} \sum_{i,j\in I,m,n\in\mb Z_+}
\frac{\partial}{\partial u_j^{(n)}} \frac{\partial}{\partial u_i^{(m)}}
P_{i_0,\stackrel{\beta}{\check{\dots}}\,\stackrel{\alpha}{\check{\dots}},i_k}
(\lambda_0,\stackrel{\beta}{\check{\dots}}\,\stackrel{\alpha}{\check{\dots}},\lambda_k)
} \\
\displaystyle{
\,\,\,\,\,\,\,\,\,\,\,\,\,\,\,\,\,\,\,\,\,\,\,\,\,\,\,\,\,\,\,\,\,\,\,\,\,\,\,\,\,\,\,\,\,\,\,\,\,\,\,
\times
\big((\lambda_\alpha+\partial)^m K_{i,i_\alpha}(\lambda_\alpha)\big)
\big((\lambda_\beta+\partial)^n K_{j,i_\beta}(\lambda_\beta)\big)
=0\,,
}
\end{array}
$$
which completes the proof of (a).

Next, we prove part (b). We have, by the definition \eqref{100518:eq4} of the $\mb F[\partial]$-module structure on $\tilde\Omega^k(\mc V)$,
$$
\begin{array}{l}
\displaystyle{
(\delta_K\partial P)_{i_0,\dots,i_k}(\lambda_0,\dots,\lambda_k)
=
\sum_{\alpha=0}^k (-1)^{\alpha}
} \\
\displaystyle{
\sum_{j\in I,n\in\mb Z_+}
\frac{\partial}{\partial u_j^{(n)}}
\Big(
(\lambda_0+\stackrel{\alpha}{\check{\dots}}+\lambda_k+\partial)
P_{i_0,\stackrel{\alpha}{\check{\dots}},i_k}(\lambda_0,\stackrel{\alpha}{\check{\dots}},\lambda_k)
\Big)
(\lambda_\alpha+\partial)^n
K_{j,i_\alpha}(\lambda_\alpha)
} \\
\displaystyle{
=
(\lambda_0+\dots+\lambda_k+\partial)
\sum_{\alpha=0}^k (-1)^{\alpha} \sum_{j\in I,n\in\mb Z_+}
} \\
\displaystyle{
\frac{\partial}{\partial u_j^{(n)}}
\Big(\!
P_{i_0,\stackrel{\alpha}{\check{\dots}},i_k}(\lambda_0,\stackrel{\alpha}{\check{\dots}},\lambda_k)
\!\Big)
(\lambda_\alpha+\partial)^n
K_{j,i_\alpha}(\lambda_\alpha)
=(\partial\delta_K P)_{i_0,\dots,i_k}\!(\lambda_0,\dots,\lambda_k).
}
\end{array}
$$
In the second equality we used \eqref{eq:comm_frac_b}.
The last assertion in (b) is obvious.
\end{proof}


\section{Computation of the variational Poisson cohomology}
\label{sec:11}

Throughout this section we assume that $\mc V$ is a normal algebra of differential functions
in finitely many differential variables $u_i,\,i\in I=\{1,\dots,\ell\}$,
i.e. $\frac{\partial}{\partial u_i^{(m)}}\mc V_{m,i}=\mc V_{m,i}$
for every $i\in I$ and $m\in\mb Z_+$,
where $\mc V_{m,i}$ is given by \eqref{eq:july21_1}.
We also assume that the space of quasiconstants $\mc F\subset\mc V$ is a field
(hence, the subalgebra of constants $\mc C\subset\mc F$ is a subfield).

We shall compute the cohomology of the complexes $(\tilde\Omega^\bullet(\mc V),\delta_K)$
and $(\Omega^\bullet(\mc V),\delta_K)$,
for any $\ell\times\ell$ matrix differential operator
$K(\partial)=\big(K_{ij}(\partial)\big)_{i,j\in I}$ of order $N$
with quasiconstant coefficients and invertible leading coefficient.
We use here the same approach as in \cite{BDSK}, where the case $K=\id$ was considered.

\subsection{Formality of the generalized de Rham complex}
\label{sec:11.1}

Fix a non-negative integer $N$.
We extend the filtration \eqref{eq:july21_1} of the algebra of differential functions
$\mc V=\tilde\Omega^0(\mc V)$,
to a filtration, depending on $N$,
of $\tilde\Omega^\bullet(\mc V)$.
For $m\in\mb Z_+$ and $i\in I$, we let
$\tilde\Omega^\bullet_{m,i}(\mc V)=\bigoplus_{k\in\mb Z_+}\tilde\Omega^k_{m,i}(\mc V)$,
where $\tilde\Omega^k_{m,i}(\mc V)$ consists of arrays
$P=\big(P_{i_1,\dots,i_k}(\lambda_1,\dots,\lambda_k)\big)_{i_1,\dots,i_k\in I}\in\tilde\Omega^k(\mc V)$
such that
\begin{equation}\label{eq:july24_1}
\begin{array}{l}
\frac{\partial}{\partial u_j^{(n)}}P_{i_1,\dots,i_k}(\lambda_1,\dots,\lambda_k) = 0
\,,\,\,\text{ if } (n,j)>(m,i)\,, \\
\partial_{\lambda_\alpha}^{n+N}P_{i_1,\dots,i_k}(\lambda_1,\dots,\lambda_k) = 0
\,,\,\,\text{ if } (n,i_\alpha)>(m,i)\,,
\end{array}
\end{equation}
for all $i_1,\dots,i_k\in I$,
where the inequalities are understood in the lexicographic order.
In other words, the coefficients of all the polynomials $P_{i_1,\dots,i_k}(\lambda_1,\dots,\lambda_k)$
lie in $\mc V_{m,i}$,
and, moreover, $P_{i_1,\dots,i_k}(\lambda_1,\dots,\lambda_k)$
has degree at most $m+N$ (resp. $m-1+N$)
in each variable $\lambda_\alpha$ with $i_\alpha\leq i$ (resp. $i_\alpha>i$).
We also let $\tilde\Omega^k_{n,0}(\mc V)=\tilde\Omega^k_{n-1,\ell}(\mc V)$ for $n\geq1$.

Finally, we let
$\tilde\Omega^\bullet_{0,0}=\bigoplus_{k\in\mb Z_+}\tilde\Omega^k_{0,0}$,
where $\tilde\Omega^k_{0,0}$ is a subspace of $\tilde\Omega^k(\mc V)$
consisting of arrays
$P=\big(P_{i_1,\dots,i_k}(\lambda_1,\dots,\lambda_k)\big)_{i_1,\dots,i_k\in I}$
(skewsymmetric with respect to simultaneous permutations of indexes and variables),
whose entries $P_{i_1,\dots,i_k}(\lambda_1,\dots,\lambda_k)$
are polynomials of degree at most $N-1$ in each variable $\lambda_1,\dots,\lambda_k$,
with quasiconstant coefficients.
In particular, $\tilde\Omega^0_{0,0}=\mc F$.
By the skewsymmetry condition,
\begin{equation}\label{100520:eq1}
\tilde\Omega^k_{0,0}=0 \,\,\text{ if }\,\,k>N\ell\,.
\end{equation}

Note that, if $K(\partial)$
is an $\ell\times\ell$ matrix differential operator with quasiconstant coefficients,
then the differential $\delta_K$ defined in \eqref{100519b:eq1}
is zero on $\tilde\Omega^\bullet_{0,0}$,
so that $(\tilde\Omega^\bullet_{0,0},0)$ is a subcomplex
of the generalized de Rham complex $(\tilde\Omega^\bullet(\mc V),\delta_K)$
Note also that $\tilde\Omega^\bullet(\mc V)$
is naturally a vector space over $\mc F$,
and the differential $\delta_K$ is $\mc F$-linear.
Also, all the $\tilde\Omega^k_{m,i}(\mc V)$ are $\mc F$-linear subspaces.
\begin{remark}\label{20120305:rem}
Due to formula \eqref{100430:eq2} (cf. Proposition \ref{100422x:prop1}) 
the restriction of the $\lambda$-bracket
from $\tilde W^{\var}(\Pi\mc V)=\tilde\Omega^\bullet(\mc V)$
to the subspace $\tilde\Omega^\bullet_{0,0}$ is zero.
Hence, $\tilde\Omega^\bullet_{0,0}$ is a subalgebra of the Lie conformal superalgebra
$\tilde\Omega^\bullet(\mc V)$.
\end{remark}

%
%

In this section we prove the following generalization of Theorem \ref{100430:th}(a):
\begin{theorem}\label{100520:th}
Let $\mc V$ be a normal algebra of differential functions
and assume that the subalgebra of quasiconstants $\mc F\subset\mc V$ is a field.
Let $K(\partial)=\big(K_{ij}(\partial)\big)_{i,j\in I}$
be an $\ell\times\ell$ matrix differential operator of order $N$ with quasiconstant coefficients,
and invertible leading coefficient $K_N\in\Mat_{\ell\times\ell}(\mc F)$.
Then:
\begin{enumerate}[(a)]
\item
The inclusion
$(\tilde\Omega^\bullet_{0,0},0)
\subset
(\tilde\Omega^\bullet(\mc V),\delta_K)$,
is a quasiisomorphism of complexes,
i.e. it induces a canonical Lie conformal superalgebra isomorphism of cohomology:
$$
H^k(\tilde\Omega^\bullet(\mc V),\delta_K)\simeq\tilde\Omega^k_{0,0}\,.
$$
\item
For $k\geq0$,
$H^k(\tilde\Omega^\bullet(\mc V),\delta_K)$
is a vector space over $\mc F$ of dimension $\binom{N\ell}{k}$.
\end{enumerate}
\end{theorem}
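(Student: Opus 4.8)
The strategy is to reduce the computation of the cohomology of the full complex $(\tilde\Omega^\bullet(\mc V),\delta_K)$ to that of the ``trivial'' subcomplex $(\tilde\Omega^\bullet_{0,0},0)$ by constructing, for each index $(m,i)$ in the filtration, a \emph{local homotopy operator} contracting the graded pieces of the filtration. This is the same overall architecture as in the proof of Theorem \ref{100430:th}(a) in \cite{BDSK} (the case $K=\id$), but it must be carried out with the $\lambda$-degree bookkeeping built into the filtration \eqref{eq:july24_1}, which couples the order of a differential variable to the polynomial degree in the corresponding $\lambda$.

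First I would establish the filtration machinery: verify that $\delta_K$ preserves the filtration $\{\tilde\Omega^\bullet_{m,i}(\mc V)\}$, i.e. that $\delta_K\big(\tilde\Omega^k_{m,i}\big)\subset\tilde\Omega^{k+1}_{m,i}$. This uses that $K$ has quasiconstant coefficients (so $\frac{\partial}{\partial u_j^{(n)}}K=0$) and order exactly $N$: differentiating $P$ by $\frac{\partial}{\partial u_j^{(n)}}$ lowers the $(n,j)$-order bound, while multiplying by $(\lambda_\alpha+\partial)^nK_{j,i_\alpha}(\lambda_\alpha)$ raises the $\lambda_\alpha$-degree by exactly the $N$ permitted in \eqref{eq:july24_1}; these balance precisely. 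The invertibility of the leading coefficient $K_N\in\Mat_{\ell\times\ell}(\mc F)$ is what makes the top $\lambda$-degree term of $\delta_K$ an isomorphism onto its image, and this is the crucial input for building the contracting homotopy. I would then, on each successive quotient $\tilde\Omega^\bullet_{m,i}/\tilde\Omega^\bullet_{m,i-1}$ (with the convention $\tilde\Omega^k_{n,0}=\tilde\Omega^k_{n-1,\ell}$), exhibit an $\mc F$-linear homotopy $h$ with $h\delta_K+\delta_K h=\id$ away from $\tilde\Omega^\bullet_{0,0}$. Concretely, $h$ should invert the operation ``apply $K$ and raise $\lambda$-degree'' using $K_N^{-1}$ together with the normality hypothesis $\frac{\partial}{\partial u_i^{(m)}}\mc V_{m,i}=\mc V_{m,i}$, which guarantees that the relevant antiderivative exists inside $\mc V$. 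That $\mc F$ is a field is used to divide by the nonzero scalars arising from the leading symbol.

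The homotopy argument gives part (a): the inclusion is a quasiisomorphism, so $H^k(\tilde\Omega^\bullet(\mc V),\delta_K)\cong\tilde\Omega^k_{0,0}$ as $\mc F$-vector spaces, and by Remark \ref{20120305:rem} the $\lambda$-bracket vanishes on $\tilde\Omega^\bullet_{0,0}$, which upgrades the isomorphism to one of Lie conformal superalgebras (the induced bracket on cohomology is trivial). Part (b) is then a pure dimension count of $\tilde\Omega^k_{0,0}$. By definition an element of $\tilde\Omega^k_{0,0}$ is a skewsymmetric array $\big(P_{i_1,\dots,i_k}(\lambda_1,\dots,\lambda_k)\big)$ with quasiconstant coefficients, each entry of degree at most $N-1$ in each variable. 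The $N\ell$ monomials $\lambda_\alpha^r u_{i_\alpha}$ with $0\le r\le N-1$ and $i_\alpha\in I$ play the role of ``basis one-forms,'' and skewsymmetry under simultaneous permutation of the pairs $(\lambda_\alpha,i_\alpha)$ means a basis of $\tilde\Omega^k_{0,0}$ is indexed by $k$-element subsets of this set of size $N\ell$; hence $\dim_{\mc F}\tilde\Omega^k_{0,0}=\binom{N\ell}{k}$, consistent with the vanishing \eqref{100520:eq1} for $k>N\ell$.

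The main obstacle I anticipate is the explicit construction of the local homotopy operator and the verification of the homotopy identity $h\delta_K+\delta_K h=\id$ on each filtration quotient: one must simultaneously manage the $\lambda$-degree raising/lowering, the action of $K_N^{-1}$ on the leading symbol, and the $\frac{\partial}{\partial u_i^{(m)}}$-antiderivative supplied by normality, and check that the ``error terms'' from lower symbols of $K$ and from the cross-terms in $\delta_K$ land in the already-handled part of the filtration so that a downward induction on $(m,i)$ closes. Getting the signs and the precise degree matching right in these manipulations — rather than any conceptual difficulty — is where the real work lies.
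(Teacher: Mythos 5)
Your overall architecture --- the filtration \eqref{eq:july24_1}, local homotopy operators built from the normality hypothesis, and a dimension count of $\tilde\Omega^k_{0,0}$ --- is exactly the paper's, and your count for part (b) is correct: identifying $\tilde\Omega^k_{0,0}$ with the $k$-th exterior power over $\mc F$ of the $N\ell$-dimensional space spanned by the pairs $(\lambda^r,i)$, $0\le r\le N-1$, $i\in I$, is a slicker route to $\binom{N\ell}{k}$ than the paper's generating-function computation \eqref{100521:eq1}. The upgrade to a Lie conformal superalgebra isomorphism via Remark \ref{20120305:rem} is also as in the paper.

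However, your first step contains a genuine gap: it is \emph{not} true that $\delta_K$ preserves the filtration $\tilde\Omega^\bullet_{m,i}(\mc V)$ for an arbitrary $K$ with invertible leading coefficient $K_N$. The filtration is asymmetric in the index $i$: degree at most $m+N$ in $\lambda_\alpha$ when $i_\alpha\le i$, but at most $m+N-1$ when $i_\alpha>i$. In $\delta_K P$ the dangerous contribution comes from $\frac{\partial P}{\partial u_j^{(m)}}$ with $j\le i$ multiplied by the top term of $(\lambda_\alpha+\partial)^m K_{j,i_\alpha}(\lambda_\alpha)$, which produces $(K_N)_{j,i_\alpha}\lambda_\alpha^{m+N}$; when $i_\alpha>i\geq j$ and $(K_N)_{j,i_\alpha}\neq0$ this violates the required bound. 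Concretely, for $\ell=2$, $N=1$, $K_N$ the permutation matrix swapping the two coordinates, and $P=u_1\in\tilde\Omega^0_{0,1}$, one gets $(\delta_K P)_2(\lambda_0)=K_{12}(\lambda_0)$, which has degree $1$ in $\lambda_0$, whereas membership in $\tilde\Omega^1_{0,1}$ requires degree $0$. This is exactly why the paper's Lemma \ref{100520:lem2} assumes the leading coefficient \emph{diagonal}, and why the proof of the theorem begins with the change-of-variables isomorphism $\Phi_{K_N^{-1}}$ of Lemma \ref{100520:lem1}, which identifies $(\tilde\Omega^\bullet(\mc V),\delta_K)$ with $(\tilde\Omega^\bullet(\mc V),\delta_{K\circ K_N^{-1}})$ and restricts to an automorphism of the subcomplex $(\tilde\Omega^\bullet_{0,0},0)$; only after this reduction to leading coefficient $\id$ does the filtration and homotopy argument start. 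Your alternative --- keeping $K$ as is and absorbing $K_N^{-1}$ into the homotopy operator --- cannot repair this: once $\delta_K$ fails to preserve $\tilde\Omega^\bullet_{m,i}$, the quotient complexes $\tilde\Omega^\bullet_{m,i}/\tilde\Omega^\bullet_{m,i-1}$ on which you propose to contract are not even defined as complexes, so the downward induction on $(m,i)$ never gets started. The missing idea is thus the conjugation lemma; it is a cheap fix, but without it your homotopy scheme fails at the first step, not merely in sign bookkeeping.
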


The proof of Theorem \ref{100520:th} consists of several steps.
First, we prove three lemmas which will be used in its proof.

In analogy with \eqref{100505:eq1},
for $S=\big(S_{ij}\big)_{i,j\in I}\in\Mat_{\ell\times\ell}(\mc F)$,
we define the map
$\Phi_S:\,\tilde\Omega^\bullet(\mc V)\to\tilde\Omega^\bullet(\mc V),\,P\mapsto \Phi_S(P)$,
given by the following equation
\begin{equation}\label{100520:eq2}
(\Phi_SP)_{i_1,\dots,i_k}(\lambda_1,\dots,\lambda_k)
=
\!\!\!
\sum_{j_1,\dots,j_k\in I}
\!\!\!
P_{j_1,\dots,j_k}(\lambda_1+\partial_1,\dots,\lambda_k+\partial_k)S_{j_1i_1}\dots S_{j_ki_k}
\,,
\end{equation}
where, as usual, $\partial_\alpha$ denotes $\partial$ acting on $S_{j_\alpha i_\alpha}$.
\begin{lemma}\label{100520:lem1}
\begin{enumerate}[(a)]
\item
For every $S\in\Mat_{\ell\times\ell}(\mc F)$, we have
$\Phi_S(\tilde\Omega^k(\mc V))\subset \tilde\Omega^k(\mc V)$
and $\Phi_S(\tilde\Omega^k_{0,0})\subset \tilde\Omega^k_{0,0}$.
\item
If $K(\partial)$ is an
$\ell\times\ell$ matrix differential operator with quasiconstant coefficients,
then
$$
\Phi_S(\delta_KP)=\delta_{K\circ S}\Phi_S(P)\,,
$$
where $(K\circ S)(\partial)=\big(\sum_{r\in I}K_{ir}(\partial)\circ S_{rj}\big)_{i,j\in I}$.
In other words, $\Phi_S$ is a homomorphism of complexes:
$(\tilde\Omega^\bullet(\mc V),\delta_K)\to(\tilde\Omega^\bullet(\mc V),\delta_{K\circ S})$.
\item
For $S,T\in\Mat_{\ell\times\ell}(\mc F)$, we have
$$
\Phi_S\circ \Phi_T=\Phi_{TS}\,.
$$
\item
If $S\in\Mat_{\ell\times\ell}(\mc F)$ is an invertible matrix,
then
$$
\Phi_S:\,(\tilde\Omega^\bullet(\mc V),\delta_K)
\stackrel{\sim}{\longrightarrow}
(\tilde\Omega^\bullet(\mc V),\delta_{K\circ S})\,,
$$
is an isomorphism of complexes,
which restricts to an automorphism of the subcomplex $(\tilde\Omega^\bullet_{0,0},0)$.
\end{enumerate}
\end{lemma}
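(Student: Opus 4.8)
**The plan is to prove each of the four parts by direct computation with the defining formula \eqref{100520:eq2}, treating (a)-(c) as essentially mechanical verifications and then deducing (d) formally.**

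First I would handle part (a). The claim $\Phi_S(\tilde\Omega^k(\mc V))\subset\tilde\Omega^k(\mc V)$ amounts to checking that $\Phi_S(P)$ is again a skewsymmetric array, which follows because the substitution $\lambda_\alpha\mapsto\lambda_\alpha+\partial_\alpha$ and the contraction with the scalar matrix entries $S_{j_\alpha i_\alpha}$ respect simultaneous permutations of indices and variables; this is the same symmetry bookkeeping as in earlier propositions. For the second inclusion, I would observe that if $P\in\tilde\Omega^k_{0,0}$ then its entries have quasiconstant coefficients and degree at most $N-1$ in each $\lambda_\alpha$. Since $S\in\Mat_{\ell\times\ell}(\mc F)$ is quasiconstant, applying $\partial_\alpha$ to $S_{j_\alpha i_\alpha}$ via the $e^{\partial\partial_\lambda}$-type substitution does not raise $\lambda_\alpha$-degree (the quasiconstant $S_{j_\alpha i_\alpha}$ need not be $\partial$-constant, but its derivatives are still quasiconstant, and the degree in $\lambda_\alpha$ is preserved because $P_{j_1,\dots,j_k}$ already has degree $\le N-1$ and the shift $\lambda_\alpha+\partial_\alpha$ only redistributes), so $\Phi_S(P)$ again lies in $\tilde\Omega^k_{0,0}$.

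Next I would prove (b), which I expect to be the main obstacle. The point is to compare $\Phi_S(\delta_K P)$ with $\delta_{K\circ S}\Phi_S(P)$ term by term using the explicit formula \eqref{100519b:eq1} for $\delta_K$. The key identity to exploit is that $K$ and $S$ both have quasiconstant coefficients, so $\frac{\partial}{\partial u_j^{(n)}}$ annihilates the entries of $K$ and of $S$; consequently the variational derivatives in \eqref{100519b:eq1} act only on the $P$-factors. I would carefully track the shift $\lambda_\alpha\mapsto\lambda_\alpha+\partial_\alpha$ together with the $(\lambda_\alpha+\partial)^n K_{j,i_\alpha}(\lambda_\alpha)$ weights, and verify that the composition $K\circ S$ emerges precisely from the matrix product $\sum_r K_{ir}(\partial)\circ S_{rj}$ when one commutes the substitution past the differential operator. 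The delicate part is the commutation of $(\lambda_\alpha+\partial)^n$ with the substitution operator $e^{\partial\partial_{\lambda_\alpha}}$ acting on $S$, which must be handled using \eqref{eq:comm_frac_b} and \eqref{100624:eq1}; getting the operator orderings right (with $\partial$ moved consistently to the right, as indicated by the $_\to$ notation) is where the bookkeeping is heaviest.

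For part (c) I would again compute directly: applying $\Phi_S$ after $\Phi_T$ produces a double substitution and a double contraction $\sum T_{k_\alpha j_\alpha}S_{j_\alpha i_\alpha}$, and since $T,S$ are quasiconstant the iterated shift collapses to the single shift $\lambda_\alpha\mapsto\lambda_\alpha+\partial_\alpha$ with matrix product $\sum_j T_{\cdot j}S_{j\cdot}=(TS)_{\cdot\cdot}$, giving $\Phi_S\circ\Phi_T=\Phi_{TS}$ (note the reversal of order, which reflects that $\Phi$ is a right action). Finally, part (d) is purely formal: by (b), $\Phi_S$ intertwines $\delta_K$ and $\delta_{K\circ S}$, and by (c) together with $\Phi_{\id}=\mathrm{id}$, the map $\Phi_{S^{-1}}$ is a two-sided inverse of $\Phi_S$; hence $\Phi_S$ is an isomorphism of complexes. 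Its restriction to $\tilde\Omega^\bullet_{0,0}$ is an automorphism by part (a) applied to both $S$ and $S^{-1}$, and since $\delta_K$ and $\delta_{K\circ S}$ both vanish on $\tilde\Omega^\bullet_{0,0}$, this restriction is an automorphism of the zero-differential subcomplex, completing the proof.
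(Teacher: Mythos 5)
Your proposal is correct and follows essentially the same route as the paper, whose proof simply declares parts (a)--(c) to be clear or straightforward computations from the definitions \eqref{100519b:eq1} and \eqref{100520:eq2} and deduces (d) immediately from (a), (b), (c). Your fleshed-out details (quasiconstance of $S$ and $K$ making the $\frac{\partial}{\partial u_j^{(n)}}$ act only on $P$, the Leibniz-rule collapse of iterated shifts giving the anti-homomorphism property $\Phi_S\circ\Phi_T=\Phi_{TS}$, and $\Phi_{S^{-1}}$ as two-sided inverse) are exactly the computations the paper leaves implicit.
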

\begin{proof}
Part (a) is clear.
Part (b) is proved by a straightforward computation using the definitions \eqref{100519b:eq1}
and \eqref{100520:eq2} of the differential $\delta_K$ and the map $\Phi_S$.
Part (c) is again straightforward, using the definition \eqref{100520:eq2} of $\Phi_S$.
Finally, part (d) immediately follows from (a), (b) and (c).
\end{proof}

\begin{lemma}\label{100520:lem2}
Let $K(\partial)=\big(K_{ij}(\partial)\big)_{i,j\in I}$
be an $\ell\times\ell$ matrix differential operator of order $N$ with quasiconstant coefficients,
and assume that its leading coefficient $K_N\in\Mat_{\ell\times\ell}(\mc F)$ is diagonal.
Then,
$$
\delta_K\big(\tilde\Omega^k_{m,i}(\mc V)\big)\,\subset\,\tilde\Omega^{k+1}_{m,i}(\mc V)\,,
$$
for every $k\in\mb Z_+,\,m\in\mb Z_+,\,i\in I$.
\end{lemma}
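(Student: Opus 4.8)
The plan is to work directly from the explicit formula \eqref{100519b:eq1} for $\delta_K$ and to verify that, for $P\in\tilde\Omega^k_{m,i}(\mc V)$, the array $\delta_KP$ satisfies the two defining conditions \eqref{eq:july24_1} of $\tilde\Omega^{k+1}_{m,i}(\mc V)$. That $\delta_KP$ is a well-defined skewsymmetric array in $\tilde\Omega^{k+1}(\mc V)$ is already part of Proposition \ref{100519b:prop}(a), so only the filtration refinement remains. Throughout I will use two elementary facts. First, $\mc V_{m,i}$ is a subalgebra of $\mc V$ closed under every $\frac{\partial}{\partial u_j^{(n)}}$; this is immediate from \eqref{eq:july21_1} and the commutativity of the operators $\frac{\partial}{\partial u_j^{(n)}}$. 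Second, since $K$ has quasiconstant coefficients, each $(\lambda_\alpha+\partial)^nK_{j,i_\alpha}(\lambda_\alpha)$ lies in $\mc F[\lambda_\alpha]$, has $\lambda_\alpha$-degree at most $n+N$, and its coefficient of $\lambda_\alpha^{n+N}$ equals $(K_N)_{j,i_\alpha}$ (because $\partial$ preserves $\mc F$ and lowers the $\lambda_\alpha$-degree).

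The coefficient condition---that every coefficient of $\delta_KP$ lies in $\mc V_{m,i}$---is the easy half and, I expect, will not use the hypothesis on $K_N$ at all. Each summand of \eqref{100519b:eq1} is a product of $\frac{\partial P_{\ldots}}{\partial u_j^{(n)}}$, whose coefficients lie in $\mc V_{m,i}$ since those of $P$ do and $\mc V_{m,i}$ is closed under $\frac{\partial}{\partial u_j^{(n)}}$, with an element of $\mc F\subset\mc V_{m,i}$; as $\mc V_{m,i}$ is a subalgebra, the product stays in $\mc V_{m,i}$.

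The heart of the argument is the degree condition in each variable $\lambda_\beta$, for $\beta=0,\dots,k$, and the plan is to split the sum over $\alpha$ in \eqref{100519b:eq1} according to whether $\alpha\neq\beta$ or $\alpha=\beta$. For $\alpha\neq\beta$ the variable $\lambda_\beta$ enters only through $P_{i_0,\stackrel{\alpha}{\check{\dots}},i_k}$, so the $\lambda_\beta$-degree of such a term equals that of $P$ in $\lambda_\beta$ (applying $\frac{\partial}{\partial u_j^{(n)}}$ does not change the $\lambda$-degree), and the required bound is inherited directly from $P\in\tilde\Omega^k_{m,i}(\mc V)$. For $\alpha=\beta$ the factor $P_{i_0,\stackrel{\beta}{\check{\dots}},i_k}$ is independent of $\lambda_\beta$, so the entire $\lambda_\beta$-dependence sits in $(\lambda_\beta+\partial)^nK_{j,i_\beta}(\lambda_\beta)$, of degree $n+N$. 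Here I will invoke that $P$ has coefficients in $\mc V_{m,i}$: the summand vanishes unless $(n,j)\le(m,i)$, i.e. unless $n\le m-1$ (any $j$) or $n=m$ and $j\le i$. The terms with $n\le m-1$ contribute $\lambda_\beta$-degree $\le m-1+N$, while the terms with $n=m$ contribute degree $\le m+N$ with top $\lambda_\beta^{m+N}$-coefficient $\sum_{j\le i}\frac{\partial P_{i_0,\stackrel{\beta}{\check{\dots}},i_k}}{\partial u_j^{(m)}}(K_N)_{j,i_\beta}$.

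This last coefficient is exactly where the diagonality of $K_N$ becomes decisive, and it is the \emph{main obstacle}: the bound one needs from \eqref{eq:july24_1} tightens from $m+N$ to $m-1+N$ precisely when $i_\beta>i$. When $i_\beta\le i$ the allowed degree is $m+N$, so the $\alpha=\beta$ term is already within bounds and nothing further is needed. When $i_\beta>i$ I must show the coefficient of $\lambda_\beta^{m+N}$ vanishes; but in $\sum_{j\le i}\frac{\partial P}{\partial u_j^{(m)}}(K_N)_{j,i_\beta}$ every index obeys $j\le i<i_\beta$, hence $j\neq i_\beta$, and diagonality of $K_N$ forces $(K_N)_{j,i_\beta}=0$ for each such $j$, so the top term cancels and the $\lambda_\beta$-degree drops to $\le m-1+N$ as required. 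Combining the two cases over all $\beta$ will give $\delta_KP\in\tilde\Omega^{k+1}_{m,i}(\mc V)$. The only delicate bookkeeping is tracking which pairs $(n,j)$ survive in the $\alpha=\beta$ term and isolating the precise $\lambda_\beta^{m+N}$-coefficient; once that is pinned down, the diagonal hypothesis does the rest.
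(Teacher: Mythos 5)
Your proposal is correct and follows essentially the same route as the paper's own proof: the coefficient condition is handled by closure of $\mc V_{m,i}$ under the partial derivatives (with no use of the hypothesis on $K_N$), and the degree condition is verified by splitting the sum in \eqref{100519b:eq1} into the terms $\alpha\neq\beta$, where the bound is inherited from $P$, and the term $\alpha=\beta$, where the top coefficient $\sum_{j\le i}\frac{\partial P}{\partial u_j^{(m)}}(K_N)_{j,i_\beta}$ vanishes for $i_\beta>i$ by diagonality of $K_N$. This is exactly the paper's argument (cf. its formula \eqref{100520:eq3}), so no further comparison is needed.
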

\begin{proof}
Let $P=\big(P_{i_1,\dots,i_k}(\lambda_1,\dots,\lambda_k)\big)_{i_1,\dots,i_k\in I}\in\tilde\Omega^k_{m,i}(\mc V)$.
By the definition \eqref{100519b:eq1} of $\delta_K$, we have,
using the assumption that $K(\partial)$ has quasiconstant coefficients,
$$
\begin{array}{l}
\displaystyle{
\frac{\partial}{\partial u_j^{(n)}}(\delta_KP)_{i_0,\dots,i_k}(\lambda_0,\dots,\lambda_k)
} \\
\displaystyle{
=
\sum_{\alpha=0}^k (-1)^{\alpha} \sum_{r\in I,p\in\mb Z_+}
\frac{\partial^2
P_{i_0,\stackrel{\alpha}{\check{\dots}},i_k}(\lambda_0,\stackrel{\alpha}{\check{\dots}},\lambda_k)
}{\partial u_r^{(p)}\partial u_j^{(n)}}
(\lambda_\alpha+\partial)^p
K_{r,i_\alpha}(\lambda_\alpha)\,,
}
\end{array}
$$
which is zero if $(n,j)>(m,i)$ by the assumption on $P$.
Next, we have
\begin{equation}\label{100520:eq3}
\begin{array}{l}
\displaystyle{
\partial_{\lambda_\alpha}^{n+N}
(\delta_KP)_{i_0,\dots,i_k}(\lambda_0,\dots,\lambda_k)
} \\
\displaystyle{
=
\sum_{\beta\neq\alpha}
(-1)^{\beta} \sum_{r\in I,p\in\mb Z_+}
\frac{\partial
}{\partial u_r^{(p)}}
\Big(
\partial_{\lambda_\alpha}^{n+N}
P_{i_0,\stackrel{\beta}{\check{\dots}},i_k}(\lambda_0,\stackrel{\beta}{\check{\dots}},\lambda_k)
\Big)
(\lambda_\beta+\partial)^p
K_{r,i_\alpha}(\lambda_\alpha)
} \\
\displaystyle{
+
(-1)^{\alpha} \sum_{r\in I}\sum_{p=0}^m
\frac{\partial
P_{i_0,\stackrel{\alpha}{\check{\dots}},i_k}(\lambda_0,\stackrel{\alpha}{\check{\dots}},\lambda_k)
}{\partial u_r^{(p)}}
\partial_{\lambda_\alpha}^{n+N}
(\lambda_\alpha+\partial)^p
K_{r,i_\alpha}(\lambda_\alpha)
\,.
}
\end{array}
\end{equation}
The first sum in the RHS of \eqref{100520:eq3} is zero for $(n,j)>(m,i)$
by the assumption on $P$.
Moreover, the second sum in the RHS of \eqref{100520:eq3} is zero for $n>m$
since, by assumption, $K(\lambda)$ has degree $N$.
Let then $n=m$ in this sum.
We have
$$
\partial_{\lambda_\alpha}^{m+N}(\lambda_\alpha+\partial)^pK_{r,i_\alpha}(\lambda_\alpha)
=(m+N)!\delta_{p,m}(K_N)_{r,i_\alpha}\,.
$$
Since, by assumption, $K_N$ is a diagonal matrix,
and since $\frac{\partial P_{i_1,\dots,i_k}(\lambda_1,\dots,\lambda_k)}{\partial u_r^{(m)}}$
is zero for $r>i$,
we conclude that the second sum in the RHS of \eqref{100520:eq3} is zero for $i_\alpha>i$,
as required.
\end{proof}

Now we are going to use the assumption that that $\mc V$ is normal and $\mc F\subset\mc V$ is a field.
Given $i\in I,m\in\mb Z_+$, choose an $\mc F$-subspace
$\mc U_{m,i}\subset\mc V_{m,i}$ complementary to the kernel of
the map $\frac{\partial}{\partial u_i^{(m)}}:\,\mc V_{m,i}\to\mc V_{m,i}$.
By normality of $\mc V$,
$\frac\partial{\partial u_i^{(m)}}$ restricts to an $\mc F$-linear isomorphism
$\frac{\partial}{\partial u_i^{(m)}}:\,\mc U_{m,i}\stackrel{\sim}{\longrightarrow}\mc V_{m,i}$,
and we denote by
$\tint du_i^{(m)}\cdot:\,\mc V_{m,i}\stackrel{\sim}{\longrightarrow}\mc U_{m,i}\subset\mc V_{m,i}$
the inverse $\mc F$-linear map,
so that
$\frac{\partial}{\partial u_i^{(m)}}\tint du_i^{(m)}f=f$ for every $f\in\mc V_{m,i}$.
Clearly, if we change the choice of the complementary subspace $\mc U_{m,i}$,
the integral $\tint du_i^{(m)} f\in\mc V_{m,i}$ of $f\in\mc V_{m,i}$
changes by adding an element of $\mc V_{m,i-1}$.

We extend the antiderivative $\tint du_i^{(m)}\cdot$ to the space of polynomials
in $\lambda_1,\dots,\lambda_k$ with coefficients in $\mc V_{m,i}$
by applying it to coefficients.
Clearly, the operators $\partial_{\lambda_\alpha}$ and $\tint du_i^{(m)}\cdot$,
acting on $\mb F[\lambda_1,\dots,\lambda_k]\otimes\mc V_{m,i}$, commute.

We define the \emph{local homotopy operators}
$h_{m,i}:\,\tilde\Omega^{k+1}_{m,i}(\mc V)\to\tilde\Omega^{k}_{m,i}(\mc V),\,k\geq0$,
by the following formula
\begin{equation}\label{eq:july24_4}
(h_{m,i}P)_{i_1,\dots,i_k}(\lambda_1,\dots,\lambda_k)
=
\tint du_i^{(m)}
\frac{\partial_{\mu}^{m+N}}{(m\!+\!N)!}
P_{i,i_1,\dots,i_k}(\mu,\lambda_1,\dots,\lambda_k)
\,.
\end{equation}
\begin{lemma}\label{100519:lem}
Let $P=\big(P_{i_0,\dots,i_k}(\lambda_0,\dots,\lambda_k)\big)_{i_0,\dots,i_k\in I}
\in\tilde\Omega^{k+1}_{m,i}(\mc V)$. Then:
\begin{enumerate}[(a)]
\item
$h_{m,i}P\in\tilde\Omega^k_{m,i}(\mc V)$.
\item
If $P\in\tilde\Omega^{k+1}_{m,i-1} \big( \subset\tilde\Omega^{k+1}_{m,i}(\mc V)\big)$, then $h_{m,i}P=0$.
\item
If $K(\partial)$ is an $\ell\times\ell$ matrix differential operator of order $N$
with quasiconstant coefficients and leading coefficient $\id$,
the operator $h_{m,i}$ satisfies the following homotopy condition:
\begin{equation}\label{100519:eq2}
h_{m,i}(\delta_KP)+\delta_K (h_{m,i}P)-P\,\in\,\tilde\Omega^k_{m,i-1}(\mc V)\,.
\end{equation}
\end{enumerate}
\end{lemma}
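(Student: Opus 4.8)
The approach is to verify the three assertions by direct computation from the defining formula \eqref{eq:july24_4} of $h_{m,i}$, using Lemma \ref{100520:lem2} for part (a), the structure of the index filtration for part (b), and the commutation relations between $\frac{\partial}{\partial u_j^{(n)}}$, $\partial_{\lambda_\alpha}$, $\tint du_i^{(m)}\cdot$, and $\delta_K$ for part (c). First I would treat part (a). Given $P\in\tilde\Omega^{k+1}_{m,i}(\mc V)$, I must show $h_{m,i}P$ again satisfies the two filtration conditions in \eqref{eq:july24_1}. The key input is that $\tint du_i^{(m)}\cdot$ maps $\mc V_{m,i}$ into $\mc U_{m,i}\subset\mc V_{m,i}$, so the coefficients of $h_{m,i}P$ stay in $\mc V_{m,i}$, giving the first condition; the operation $\frac{\partial_\mu^{m+N}}{(m+N)!}$ extracts the top $\lambda$-coefficient in the slot labelled by $i$, which lowers the relevant $\lambda$-degrees so that the second condition in \eqref{eq:july24_1} is preserved. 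I would also check skewsymmetry of the resulting array, which follows since $\mu$ and $i$ are placed in the first argument slot and the remaining arguments are permuted consistently.

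Part (b) is the easiest: if $P\in\tilde\Omega^{k+1}_{m,i-1}(\mc V)$, then by \eqref{eq:july24_1} applied with bound $(m,i-1)$, the polynomial $P_{i,i_1,\dots,i_k}(\mu,\lambda_1,\dots,\lambda_k)$ has $\mu$-degree at most $m-1+N$ in the slot with index $i>i-1$, so $\partial_\mu^{m+N}$ annihilates it, forcing $h_{m,i}P=0$.

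The heart of the lemma is part (c), which I expect to be the main obstacle. The strategy is to compute both $h_{m,i}(\delta_K P)$ and $\delta_K(h_{m,i}P)$ explicitly from \eqref{100519b:eq1} and \eqref{eq:july24_4}, and to show that their sum reproduces $P$ modulo $\tilde\Omega^k_{m,i-1}(\mc V)$. In $h_{m,i}(\delta_K P)$ one isolates the term of the sum in $\delta_K$ where the distinguished slot $i$ coincides with the differentiated slot, and uses that the leading coefficient of $K(\partial)$ is $\id$ of order $N$: applying $\frac{\partial_\mu^{m+N}}{(m+N)!}$ together with $(\mu+\partial)^p K_{r,i_\alpha}(\mu)$ and setting $p=m$ produces exactly a Kronecker $\delta_{r,i_\alpha}$, which (after applying $\tint du_i^{(m)}\cdot$ and $\frac{\partial}{\partial u_i^{(m)}}$) recovers $P$ itself; all the cross terms, where the differentiation hits a different slot or a lower-order part of $K$, either cancel against the corresponding terms in $\delta_K(h_{m,i}P)$ or land in $\tilde\Omega^k_{m,i-1}(\mc V)$ because they involve $\frac{\partial}{\partial u_r^{(p)}}$ with $(p,r)<(m,i)$ or ambiguity in the antiderivative (recall $\tint du_i^{(m)}$ is defined only up to $\mc V_{m,i-1}$). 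The bookkeeping of signs from the permutations $\stackrel{\alpha}{\check{\dots}}$ and the careful tracking of which contributions are genuinely in the lower filtration piece is where the real work lies; the guiding principle is that $\frac{\partial}{\partial u_i^{(m)}}$ and $\tint du_i^{(m)}\cdot$ are mutually inverse on $\mc V_{m,i}$, so the leading interaction telescopes to the identity, and everything else is either a homotopy cancellation or negligible modulo $\tilde\Omega^k_{m,i-1}(\mc V)$.
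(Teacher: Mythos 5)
Your parts (a) and (b) coincide with the paper's argument and are fine; your part (c) also starts from the same computational strategy as the paper (expand $\delta_K(h_{m,i}P)$ and $h_{m,i}(\delta_KP)$, exploit $\frac{\partial}{\partial u_i^{(m)}}\circ\tint du_i^{(m)}=\mathrm{id}$ and the leading coefficient $\id$ of $K$), but the mechanism you propose for the ``modulo $\tilde\Omega_{m,i-1}$'' step has a genuine gap. Membership in $\tilde\Omega^\bullet_{m,i-1}(\mc V)$ imposes \emph{two} conditions (see \eqref{eq:july24_1}): the coefficients must be killed by $\frac{\partial}{\partial u_i^{(m)}}$, \emph{and} every entry must have degree at most $m+N-1$ in each variable $\lambda_\alpha$ whose index is $i_\alpha=i$. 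All of your negligibility arguments (lower-order $u$-derivatives, ambiguity of the antiderivative) concern only the first condition, and even for that one they are insufficient: a single term of the form $\frac{\partial}{\partial u_j^{(n)}}\tint du_i^{(m)}(\cdots)$ with $(n,j)<(m,i)$ is \emph{not} annihilated by $\frac{\partial}{\partial u_i^{(m)}}$, because $\frac{\partial}{\partial u_j^{(n)}}$ and $\tint du_i^{(m)}$ do not commute; likewise the cross terms of $\delta_K(h_{m,i}P)$ and $h_{m,i}(\delta_KP)$ do not cancel exactly, their sum being the nonzero commutator $\big[\frac{\partial}{\partial u_j^{(n)}},\tint du_i^{(m)}\big]$ applied to $\big((\lambda_\beta+\partial)^nK_{j i_\beta}(\lambda_\beta)\big)\frac{\partial_{\lambda_\beta}^{m+N}}{(m+N)!}P_{i_0,\dots,\stackrel{\beta}{\check{i}},\dots,i_k}$. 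Finally, the ``leading interaction'' yields $\tint du_i^{(m)}\frac{\partial}{\partial u_i^{(m)}}P$, not $P$: the difference has coefficients in $\mc V_{m,i-1}$ but can still have degree $m+N$ in a variable $\lambda_\alpha$ with $i_\alpha=i$, so it is not automatically in $\tilde\Omega^\bullet_{m,i-1}(\mc V)$ either.

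The missing idea is a systematic treatment of the $\lambda$-degree condition, and this is exactly how the paper closes the argument. Since, by Lemma \ref{100520:lem2} together with parts (a) and (b), all three of $\delta_K(h_{m,i}P)$, $h_{m,i}(\delta_KP)$ and $P$ already lie in $\tilde\Omega^\bullet_{m,i}(\mc V)$, the homotopy identity \eqref{100519:eq2} is \emph{equivalent} to two annihilation statements: applying $\frac{\partial}{\partial u_i^{(m)}}$ to $h_{m,i}(\delta_KP)+\delta_K(h_{m,i}P)$ returns $\frac{\partial}{\partial u_i^{(m)}}P$, and applying $\frac{\partial_{\lambda_\alpha}^{m+N}}{(m+N)!}$, for every $\alpha$ with $i_\alpha=i$, returns $\frac{\partial_{\lambda_\alpha}^{m+N}}{(m+N)!}P$. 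The first follows from $\frac{\partial}{\partial u_i^{(m)}}\circ\tint du_i^{(m)}=\mathrm{id}$, as you anticipate. The second, however, requires — besides the leading-coefficient computation $\frac{\partial_{\lambda_\alpha}^{m+N}}{(m+N)!}\big((\lambda_\alpha+\partial)^nK_{ji}(\lambda_\alpha)\big)=\delta_{n,m}\delta_{j,i}$ for $(n,j)\leq(m,i)$ — the skewsymmetry vanishing $\partial_{\lambda_\alpha}^{m+N}\partial_{\lambda_\beta}^{m+N}P_{i_0,\dots,i_k}(\lambda_0,\dots,\lambda_k)=0$ whenever $i_\alpha=i_\beta=i$ with $\alpha\neq\beta$, which disposes of all contributions in which two slots carry the index $i$. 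This fact is indispensable and appears nowhere in your plan; without it (or the equivalent reduction above) the term-by-term accounting you defer to cannot be made to close.
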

\begin{proof}
Clearly, $(h_{m,i}P)_{i_1,\dots,i_k}(\lambda_1,\dots,\lambda_k)$
is skewsymmetric with respect to simultaneous permutations of the variables $\lambda_1,\dots,\lambda_k$
and of the indexes $i_1,\dots,i_k$.
Moreover, by assumption on $P$ and by the definition of $\tint du_i^{(m)}\cdot$,
the coefficients of all the polynomials $(h_{m,i}P)_{i_1,\dots,i_k}(\lambda_1,\dots,\lambda_k)$
lie in $\mc V_{m,i}$.
Furthermore, if $(n,i_\alpha)>(m,i)$ with $\alpha\in\{1,\dots,k\}$, we have,
$$
\begin{array}{l}
\frac{\partial_{\lambda_\alpha}^{n+N}}{(n+N)!}(h_{m,i}P)_{i_1,\dots,i_k}(\lambda_1,\dots,\lambda_k) \\
\displaystyle{
= \tint du_i^{(m)}
\frac{\partial_{\mu}^{m+N}}{(m\!+\!N)!}
\frac{\partial_{\lambda_\alpha}^{n+N}}{(n+N)!}
P_{i,i_1,\dots,i_k}(\mu,\lambda_1,\dots,\lambda_k)=0
\,,
}
\end{array}
$$
by the assumption on $P$. Hence, $h_{m,i}P\in\tilde\Omega^k_{m,i}$, proving (a).
Part (b) is clear since, by definition, $P\in\tilde\Omega^{k+1}_{m,i-1}$
are such that $P_{i,i_1,\dots,i_k}(\mu,\lambda_1,\dots,\lambda_k)$
is a polynomial of degree at most $m+N-1$ in the variable $\mu$.
We are left to prove part (c).
By \eqref{100519b:eq1} and \eqref{eq:july24_4}, we have, for $P\in\tilde\Omega^{k+1}_{m,i}(\mc V)$,
\begin{equation}\label{100520:eq7}
\begin{array}{l}
\displaystyle{
(\delta_K h_{m,i}P)_{i_0,\dots,i_k}(\lambda_0,\dots,\lambda_k)
=
\sum_{\beta=0}^k
\sum_{j\in I,n\in\mb Z_+}
\frac{\partial}{\partial u_j^{(n)}}
} \\
\displaystyle{
\tint du_i^{(m)}
\big((\lambda_\beta+\partial)^n K_{ji_\beta}(\lambda_\beta)\big)
\frac{\partial_{\lambda_\beta}^{m+N}}{(m\!+\!N)!}
P_{i_0,\dots,\stackrel{\beta}{\check{i}},\dots,i_k}(\lambda_0,\dots,\lambda_k)
}
\end{array}
\end{equation}
(here we used the fact that $\tint du_i^{(m)}$ is $\mc F$-linear
and $K_{ij}(\lambda)$ has coefficients in $\mc F$),
and
\begin{equation}\label{100520:eq8}
\begin{array}{c}
\displaystyle{
(h_{m,i}\delta_K P)_{i_0,\dots,i_k}(\lambda_0,\dots,\lambda_k)
=
- \sum_{\beta=0}^k
\sum_{j\in I,n\in\mb Z_+}
\tint du_i^{(m)}
\frac{\partial}{\partial u_j^{(n)}}
} \\
\displaystyle{
\big((\lambda_\beta+\partial)^n K_{ji_\beta}(\lambda_\beta)\big)
\frac{\partial_{\lambda_\beta}^{m+N}}{(m\!+\!N)!}
P_{i_0,\dots,\stackrel{\beta}{\check{i}},\dots,i_k}(\lambda_0,\dots,\lambda_k)
} \\
\displaystyle{
+ \tint du_i^{(m)}
\frac{\partial}{\partial u_i^{(m)}}\
P_{i_0,\dots,i_k}(\lambda_0,\dots,\lambda_k) \,.
}
\end{array}
\end{equation}
By Lemma \ref{100520:lem2} and parts (a) and (b),
we know that $\delta_K(h_{m,i}P),\,h_{m,i}(\delta_K P)$ and $P$
all lie in $\tilde\Omega^{k+1}_{m,i}$.
Hence, in order to prove equation \eqref{100519:eq2} we only need to prove the following
two identities:
\begin{equation}\label{100520:eq6}
\begin{array}{l}
\displaystyle{
\frac{\partial}{\partial u_i^{(m)}}
\Big(
(h_{m,i}\delta_KP)_{i_0,\dots,i_k}(\lambda_0,\dots,\lambda_k)+
(\delta_K h_{m,i}P)_{i_0,\dots,i_k}(\lambda_0,\dots,\lambda_k)
\Big)
}\\
\displaystyle{
= \frac{\partial}{\partial u_i^{(m)}} P_{i_0,\dots,i_k}(\lambda_0,\dots,\lambda_k)\,;
}\\
\displaystyle{
\frac{\partial_{\lambda_\alpha}^{m+N}}{(m\!+\!N)!}
\Big(
(h_{m,i}\delta_KP)_{i_0,\dots,i_k}(\lambda_0,\dots,\lambda_k)+
(\delta_K h_{m,i}P)_{i_0,\dots,i_k}(\lambda_0,\dots,\lambda_k)
\Big)
}\\
\displaystyle{
= \frac{\partial_{\lambda_\alpha}^{m+N}}{(m\!+\!N)!}
P_{i_0,\dots,i_k}(\lambda_0,\dots,\lambda_k)
\,\,,\,\,\,\,\text{ if } i_\alpha=i \,.
}
\end{array}
\end{equation}
The first identity of \eqref{100520:eq6}
follows immediately from equations \eqref{100520:eq7} and \eqref{100520:eq8},
using that $\frac{\partial}{\partial u_i^{(m)}}\circ\tint du_i^{(m)}f=f$ for every $f\in\mc V_{m,i}$.
The second identity in \eqref{100520:eq6} follows by a straightforward computation
using the following two facts.
Since, by assumption, the leading coefficient of $K(\partial)$ is $\id$,
we have, for $(n,j)\leq(m,i)$,
$$
\frac{\partial_{\lambda_\alpha}^{m+N}}{(m\!+\!N)!}
\big((\lambda_\alpha+\partial)^n K_{ji}(\lambda_\alpha)\big)
=
\delta_{n,m}\delta_{j,i}\,.
$$
Moreover, by the skewsymmetry condition on $P$, we have,
if $i_\alpha=i_\beta=i$ for $\beta\neq\alpha$,
$$
\partial_{\lambda_\alpha}^{m+N}\partial_{\lambda_\beta}^{m+N}
P_{i_0,\dots,,i_k}(\lambda_0,\dots,\lambda_k)=0\,.
$$
\end{proof}

\begin{proof}[Proof of Theorem \ref{100520:th}]
By Lemma \ref{100520:lem1}(d), we have isomorphism of complexes
$\Phi_{K_N^{-1}}:\,(\tilde\Omega^\bullet,\delta_K)\to(\tilde\Omega^\bullet,\delta_{K\circ K_N^{-1}})$,
which induces an automorphism of the subcomplex $(\tilde\Omega^\bullet_{0,0},0)$.
Hence,
replacing $K(\partial)$ by $(K\circ K_N^{-1})(\partial)$,
it suffices to prove (a) for $K(\partial)$ with leading coefficient $\id$.

Let $P\in\tilde\Omega^k(\mc V)$ be such that $\delta_KP=0$.
For some $i\in I,m\in\mb Z_+$ we have $P\in\tilde\Omega^k_{m,i}(\mc V)$ and,
by Lemma \ref{100519:lem}(c), we have
$P=\delta_K(h_{m,i}P)+P_1$,
for some $P_1\in\tilde\Omega^k_{m,i-1}(\mc V)$ such that $\delta_KP_1=0$.
Repeating the same argument finitely many times, we get
that $P=\delta_KQ+R$, for some $Q\in\tilde\Omega^{k-1}_{m,i}(\mc V)$
and $R\in\tilde\Omega^k_{0,0}$.
Hence,
$$
\ker\big(\delta_K:\,\tilde\Omega^k(\mc V)\to\tilde\Omega^{k+1}(\mc V)\big)
=\delta_k\big(\tilde\Omega^{k-1}(\mc V)\big)+\tilde\Omega^k_{0,0}\,.
$$
To prove part (a) it remains to show that
$$
\delta_k\big(\tilde\Omega^{k-1}(\mc V)\big)\cap\tilde\Omega^k_{0,0}=0\,.
$$
Let $P=\delta_KQ\in\tilde\Omega^k_{0,0}$,
for some $Q\in\tilde\Omega^{k-1}_{m,i}$.
By Lemma \ref{100519:lem}(c), we have
$Q=\delta_K(h_{m,i}Q)+h_{m,i}(\delta_KQ)+Q_1$, for some $Q_1\in\tilde\Omega^{k-1}_{m,i-1}$,
and, by Lemma \ref{100519:lem}(b), we have
$h_{m,i}(\delta_KQ)=h_{m,i}P=0$.
Hence, $P=\delta_KQ_1\in\delta_K\tilde\Omega^{k-1}_{m,i-1}(\mc V)$.
Repeating the same argument finitely many times, we then get
$P\in\delta_K\tilde\Omega^{k-1}_{0,0}=0$.

Next, we prove part (b).
An element $P\in\tilde\Omega^k_{0,0}$
is uniquely determined by the collection of polynomials
$$
P_{\underbrace{1,..\,,1}_{n_1},\dots,\underbrace{\ell,..\,,\ell}_{n_\ell}}
(\lambda_1,\dots,\lambda_k)
\,\in\mb F[\lambda_1,\dots,\lambda_k]\otimes\mc F
\,,
$$
where $n_1,\dots,n_\ell\geq0$
are such that $n_1+\dots+n_\ell=k$,
which have degree at most $N-1$ in each variable $\lambda_\alpha,\,\alpha=1,\dots,k$,
and, for every $i=1,\dots,\ell$,
are skewsymmetric in the variables
$\lambda_{n_1+\dots+n_{i-1}+1},\dots,\lambda_{n_1+\dots+n_i}$.
Hence, $\tilde\Omega^k_{0,0}$ is a vector space over $\mc F$
of dimension $R_k$, given by
\begin{equation}\label{100521:eq1}
R_k=\sum_{\substack{n_1,\dots,n_\ell\in\mb Z_+ \\ n_1+\dots+n_\ell=k}}
\prod_{i=1}^\ell C(N,n_i)\,,
\end{equation}
where $C(N,n)$ is the dimension of the space of skewsymmetric polynomials
in $n$ variables of degree at most $N-1$ in each variable, i.e.
$C(N,n)=\binom{N}{n}$.
Taking the generating series of both sides of equation \eqref{100521:eq1},
we then get
$$
\sum_{k=0}^\infty R_kz^k=
\Big(
\sum_{n=0}^\infty \binom{N}{n} z^n\Big)^\ell
=(1+z)^{N\ell}\,,
$$
which implies $R_k=\binom{N\ell}{k}$, as required.
\end{proof}

\subsection{Cohomology of the generalized variational complex}
\label{sec:11.2}

Recall from Proposition \ref{100519b:prop}(b)
that we have a short exact sequence of complexes
$$
0\to(\partial\tilde\Omega^\bullet(\mc V),\delta_K)\stackrel{\alpha}{\to}(\tilde\Omega^\bullet(\mc V),\delta_K)
\stackrel{\beta}{\to}(\Omega^\bullet(\mc V),\delta_K)\to 0\,,
$$
where $\alpha$ is the inclusion map, and $\beta$ is the canonical quotient map
$\tilde\Omega^\bullet(\mc V)
\to\tilde\Omega^\bullet(\mc V)/\partial\tilde\Omega^\bullet(\mc V)=\Omega^\bullet(\mc V)$.
It induces a long exact sequence in cohomology:
\begin{equation}\label{100602:eq1}
\begin{array}{l}
\vphantom{\Bigg(}
0\to H^0(\partial\tilde\Omega^\bullet,\delta_K)
\stackrel{\alpha_0}{\to}
H^0(\tilde\Omega^\bullet,\delta_K)
\stackrel{\beta_0}{\to}
H^0(\Omega^\bullet,\delta_K)
\stackrel{\gamma_0}{\to}
H^{1}(\partial\tilde\Omega^\bullet,\delta_K)
\stackrel{\alpha_{1}}{\to}
\dots \\
\vphantom{\Bigg(}
\dots
\!\!\stackrel{\gamma_{k-1}}{\to}\!
H^k(\partial\tilde\Omega^\bullet,\delta_K)
\!\stackrel{\alpha_k}{\to}\!
H^k(\tilde\Omega^\bullet,\delta_K)
\!\stackrel{\beta_k}{\to}\!
H^k(\Omega^\bullet,\delta_K)
\!\stackrel{\gamma_k}{\to}\!
H^{k+1}(\partial\tilde\Omega^\bullet,\delta_K)
\!\stackrel{\alpha_{k+1}}{\to}\!\!
\dots
\end{array}
\end{equation}
Recall that, by Theorem \ref{100520:th},
for every $k\in\mb Z_+$, we have a canonical identification
$H^k(\tilde\Omega^\bullet(\mc V),\delta_K)=\tilde\Omega^k_{0,0}$.
Next, we want to describe $H^k(\partial\tilde\Omega^\bullet(\mc V),\delta_K)$
and the map
$\alpha_k:\,H^k(\partial\tilde\Omega^\bullet(\mc V),\delta_K)\to H^k(\tilde\Omega^\bullet(\mc V),\delta_K)$.
This is given by the following
\begin{lemma}\label{100602:lem1}
\begin{enumerate}[(a)]
\item
The inclusion
$(\partial\tilde\Omega^\bullet_{0,0},0)
\subset
(\partial\tilde\Omega^\bullet(\mc V),\delta_K)$,
is a quasiisomorphism of complexes,
i.e. it induces canonical isomorphisms:
$$
H^k(\partial\tilde\Omega^\bullet(\mc V),\delta_K)
\simeq \partial\tilde\Omega^k_{0,0}
\simeq
\left\{\begin{array}{ll}
\mc F/\mc C & \text{ for } k=0 \\
\tilde\Omega^k_{0,0} & \text{ for } k\geq1
\end{array}\right.
\,.
$$
\item
Under the identifications
$H^0(\tilde\Omega^\bullet(\mc V),\delta_K)=\mc F$
in Theorem \ref{100520:th} and
$H^0(\partial\tilde\Omega^\bullet(\mc V),\delta_K)=\mc F/\mc C$
in part (a),
the map $\alpha_0$ induces the map $\alpha_0:\,\mc F/\mc C\to\mc F$
given by $a+\mc C\mapsto\partial a$.
\item
For $k\geq1$, identifying
$H^k(\tilde\Omega^\bullet(\mc V),\delta_K)
=\tilde\Omega^k_{0,0}
=H^k(\partial\tilde\Omega^\bullet(\mc V),\delta_K)$
as in Theorem \ref{100520:th} and in part (a),
the map $\alpha_k$ induces the endomorphism $\alpha_k\in\End\big(\tilde\Omega^k_{0,0}\big)$
defined as follows.
For $P\in\tilde\Omega^k_{0,0}$,
there exist $Q\in\tilde\Omega^k(\mc V)$ and (a unique)
$R\in\tilde\Omega^k_{0,0}$
such that $\partial P=\delta_K Q+R$.
Then,
\begin{equation}\label{100602:eq2}
\alpha_k(P)=R\,.
\end{equation}
\item
Assuming that the leading coefficient of $K(\partial)$ is $\id$,
we can write $\alpha_k$ explicitly using the local homotopy operators \eqref{eq:july24_4}:
\begin{equation}\label{100602:eq3}
\alpha_k(P)
=
(1-\delta_K\circ h_{0,1})(1-\delta_K\circ h_{0,2})\dots(1-\delta_K\circ h_{0,\ell})\partial P\,.
\end{equation}
\end{enumerate}
\end{lemma}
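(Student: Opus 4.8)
The plan is to deduce all four parts from Theorem \ref{100520:th} (the quasiisomorphism $(\tilde\Omega^\bullet_{0,0},0)\hookrightarrow(\tilde\Omega^\bullet(\mc V),\delta_K)$), combined with two elementary facts: $\partial$ commutes with $\delta_K$ by Proposition \ref{100519b:prop}(b), and the action of $\partial$ on $\tilde\Omega^k(\mc V)$, which is multiplication by $\partial+\lambda_1+\dots+\lambda_k$ on each entry, is injective for $k\geq1$ (as was observed for $\partial+\lambda_0+\dots+\lambda_k$ in the proof of Proposition \ref{100517:prop}), while on $\tilde\Omega^0(\mc V)=\mc V$ it is just $\partial$, with kernel $\mc C$. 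After replacing $K$ by $K\circ K_N^{-1}$ via the isomorphism $\Phi_{K_N^{-1}}$ of Lemma \ref{100520:lem1}(d), as in the proof of Theorem \ref{100520:th}, I may assume the leading coefficient of $K$ is $\id$ whenever convenient.

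For part (a), first note that $\delta_K$ vanishes on $\tilde\Omega^\bullet_{0,0}$, hence $\delta_K(\partial\omega)=\partial\delta_K\omega=0$ for $\omega\in\tilde\Omega^\bullet_{0,0}$, so $(\partial\tilde\Omega^\bullet_{0,0},0)$ is indeed a subcomplex with zero differential. I would then establish, for every $k$, the direct-sum decomposition
\[
\ker\bigl(\delta_K\colon\partial\tilde\Omega^k\to\partial\tilde\Omega^{k+1}\bigr)
=\delta_K\bigl(\partial\tilde\Omega^{k-1}\bigr)\oplus\partial\tilde\Omega^k_{0,0}\,.
\]
Given a closed $\eta=\partial\omega\in\partial\tilde\Omega^k$, injectivity of $\partial$ on $\tilde\Omega^{k+1}$ forces $\delta_K\omega=0$; Theorem \ref{100520:th} then writes $\omega=\delta_K\sigma+\rho$ with $\rho\in\tilde\Omega^k_{0,0}$, and applying $\partial$ gives $\eta\in\delta_K(\partial\tilde\Omega^{k-1})+\partial\tilde\Omega^k_{0,0}$; the reverse inclusion is immediate. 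Directness follows from $\delta_K\tilde\Omega^{k-1}\cap\tilde\Omega^k_{0,0}=0$ (Theorem \ref{100520:th}) together with injectivity of $\partial$ in positive degree. This yields $H^k(\partial\tilde\Omega^\bullet,\delta_K)\simeq\partial\tilde\Omega^k_{0,0}$, which is isomorphic to $\tilde\Omega^k_{0,0}$ for $k\geq1$ (again by injectivity of $\partial$) and to $\partial\mc F\simeq\mc F/\mc C$ for $k=0$.

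Parts (b) and (c) are then formal, since the map $\alpha$ is the inclusion: $\alpha_k$ carries the class of $\partial P$, computed in $\partial\tilde\Omega^\bullet$, to the same class computed in $\tilde\Omega^\bullet$, and the latter is identified with its unique $\tilde\Omega^k_{0,0}$-representative under Theorem \ref{100520:th}. For (b), the class in $H^0(\partial\tilde\Omega^\bullet)=\mc F/\mc C$ labelled by $a+\mc C$ is represented by $\partial a\in\partial\mc F$, whose image under the inclusion is $\partial a\in\mc F=H^0(\tilde\Omega^\bullet)$, giving $\alpha_0(a+\mc C)=\partial a$. For (c), the form $\partial P$ is closed (as $\delta_K\partial P=\partial\delta_K P=0$), so Theorem \ref{100520:th} gives $\partial P=\delta_K Q+R$ with a unique $R\in\tilde\Omega^k_{0,0}$, uniqueness coming from $\delta_K\tilde\Omega^{k-1}\cap\tilde\Omega^k_{0,0}=0$ and injectivity of $\partial$; by construction $R$ is exactly the representative of $[\partial P]$, so $\alpha_k(P)=R$.

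Finally, for (d) I would make the extraction of $R$ explicit through the local homotopy operators, assuming the leading coefficient is $\id$. The first point is to locate the filtration level of $\partial P$: since $P\in\tilde\Omega^k_{0,0}$ has quasiconstant coefficients (so the coefficients of $\partial P$ lie in $\partial\mc F\subset\mc F=\mc V_{0,0}\subset\mc V_{0,\ell}$) and the multiplier $\partial+\lambda_1+\dots+\lambda_k$ raises the $\lambda$-degrees by at most one, we have $\partial P\in\tilde\Omega^k_{0,\ell}$. Then, because $\partial P$ is closed and $(1-\delta_K\circ h_{0,i})$ preserves closedness and, by the homotopy identity of Lemma \ref{100519:lem}(c) with $m=0$ (where $h_{0,i}\delta_K(\cdot)=0$ on closed forms), lowers the filtration level from $\tilde\Omega^k_{0,i}$ to $\tilde\Omega^k_{0,i-1}$, the successive application
\[
(1-\delta_K\circ h_{0,1})(1-\delta_K\circ h_{0,2})\cdots(1-\delta_K\circ h_{0,\ell})\,\partial P
\]
lands in $\tilde\Omega^k_{0,0}$ and differs from $\partial P$ only by a $\delta_K$-exact term; by the uniqueness in (c) it equals $R=\alpha_k(P)$. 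The only genuinely delicate points are the bookkeeping in parts (a) and (d)---the sharp contrast between $\partial$ being injective in positive degree and having kernel $\mc C$ in degree $0$, which is precisely what produces the $\mc F/\mc C$ term---and the verification that $\partial P$ really sits at filtration level $(0,\ell)$ so that exactly $\ell$ homotopy steps suffice; everything else reduces to the already-established Theorem \ref{100520:th} and Lemma \ref{100519:lem}.
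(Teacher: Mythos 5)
Your proposal is correct and takes essentially the same approach as the paper: commutation of $\partial$ with $\delta_K$ together with injectivity of $\partial$ in positive degree (and kernel $\mc C$ in degree zero) reduces parts (a)--(c) to Theorem \ref{100520:th}, and part (d) is obtained exactly as in the paper by placing $\partial P$ in $\tilde\Omega^k_{0,\ell}$ and iterating the homotopy identity of Lemma \ref{100519:lem}(c). The only slip is cosmetic: the coefficients of $\partial P$ lie in $\mc F$, not in $\partial\mc F$ (multiplication by $\partial+\lambda_1+\dots+\lambda_k$ also reproduces the original quasiconstant coefficients shifted in $\lambda$-degree), but since your argument only uses quasiconstancy and the degree bound $N$, nothing is affected.
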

\begin{proof}
The map $\partial:\,\Omega^k(\mc V)\to\Omega^k(\mc V)$ is injective for $k\geq1$,
while, for $k=0$, we have $\tilde\Omega^0(\mc V)=\mc V$ and $\ker(\partial|_{\mc V})=\mc C$,
the algebra of constants.
Since $\partial$ and $\delta_K$ commute, it follows that
$\ker\big(\delta_K\big|_{\partial\tilde\Omega^k(\mc V)}\big)
=\partial\ker\big(\delta_K\big|_{\tilde\Omega^k(\mc V)}\big)\subset\tilde\Omega^k(\mc V)$
for all $k\geq0$, and
$\delta_K\big(\partial\tilde\Omega^{k-1}(\mc V)\big)
=\partial\delta_K\big(\tilde\Omega^{k-1}(\mc V)\big)\subset\tilde\Omega^k(\mc V)$ for $k\geq1$.
Hence, we get
$$
H^0(\partial\tilde\Omega^\bullet(\mc V),\delta_K)
=\ker\big(\delta_K\big|_{\partial\tilde\Omega^0(\mc V)}\big)
=\partial\ker\big(\delta_K\big|_{\tilde\Omega^0(\mc V)}\big)
=\partial\mc F
\simeq\mc F/\mc C\,.
$$
In the second last equality we used the fact that 
$\ker\big(\delta_K\big|_{\tilde\Omega^0(\mc V)}\big)=\mc F$,
by the definition \eqref{100519b:eq1} of $\delta_K$ and that, by assumption, $K$
has invertible leading coefficient.
Moreover, the last isomorphism above is induced by
the surjective map $\partial:\,\mc F\to\partial\mc F$.
Similarly, for $k\geq1$, we have
$$
\begin{array}{l}
H^k(\partial\tilde\Omega^\bullet(\mc V),\delta_K)
=\ker\big(\delta_K\big|_{\partial\tilde\Omega^k(\mc V)}\big)
\big/ \delta_K\big(\partial\tilde\Omega^{k-1}(\mc V)\big) \\
\,\,\,\,\,\,\,\,\,\,\,\,\,\,\,\,\,\,
=\partial\ker\big(\delta_K\big|_{\tilde\Omega^k(\mc V)}\big)
\big/ \partial\delta_K\big(\tilde\Omega^{k-1}(\mc V)\big) \\
\,\,\,\,\,\,\,\,\,\,\,\,\,\,\,\,\,\,
\simeq
\ker\big(\delta_K\big|_{\tilde\Omega^k(\mc V)}\big)
\big/ \delta_K\big(\tilde\Omega^{k-1}(\mc V)\big)
= H^k(\tilde\Omega^\bullet(\mc V),\delta_K)
\simeq\tilde\Omega^k_{0,0}
\,.
\end{array}
$$
In the third identity we used the injectiveness of $\partial$. This proves part (a).

The map $\alpha_0:\,H^0(\partial\tilde\Omega(\mc V),\delta_K)\to H^0(\tilde\Omega(\mc V),\delta_K)$
is induced by the inclusion map
$\partial\tilde\Omega^0(\mc V)=\partial\mc V\subset\mc V=\tilde\Omega^0(\mc V)$.
Since $H^0(\tilde\Omega(\mc V),\delta_K)=\mc F$
and $H^0(\partial\tilde\Omega(\mc V),\delta_K)=\partial\mc F$,
the map $\alpha_0$ coincides with the inclusion map $\partial\mc F\subset\mc F$.
Part (b) follows from the identification
$\mc F/\mc C\simeq\partial\mc F$ via the map $a+\mc C\mapsto\partial a$.

For part (c) we use a similar argument.
The isomorphism $H^k(\tilde\Omega(\mc V),\delta_K)\simeq\tilde\Omega^k_{0,0}$
given by Theorem \ref{100520:th}(a)
maps $P+\delta_K(\tilde\Omega^{k-1}(\mc V))\in H^k(\tilde\Omega(\mc V),\delta_K)$
to the unique element $R\in\tilde\Omega^k_{0,0}$
such that $P-R\in\delta_K(\tilde\Omega^{k-1}(\mc V))$,
and the inverse map sends $P\in\tilde\Omega^k_{0,0}$
to $P+\delta_K(\tilde\Omega^{k-1}(\mc V))\in H^k(\tilde\Omega(\mc V),\delta_K)$.
Similarly, we have the canonical isomorphism,
$H^k(\partial\tilde\Omega(\mc V),\delta_K)\simeq\tilde\Omega^k_{0,0}$,
which maps
$\partial P+\delta_K(\partial\tilde\Omega^{k-1}(\mc V))\in H^k(\partial\tilde\Omega(\mc V),\delta_K)$
to the unique element $R\in\tilde\Omega^k_{0,0}$
such that $P-R\in\delta_K(\tilde\Omega^{k-1}(\mc V))$,
and the inverse map sends $P\in\tilde\Omega^k_{0,0}$
to $\partial P+\delta_K(\partial\tilde\Omega^{k-1}(\mc V))\in H^k(\partial\tilde\Omega(\mc V),\delta_K)$.
Equation \eqref{100602:eq2} follows
from the fact that
the map $\alpha_k:\,H^k(\partial\tilde\Omega(\mc V),\delta_K)\to H^k(\tilde\Omega(\mc V),\delta_K)$
is induced by the inclusion map
$\partial\tilde\Omega^k(\mc V)\subset\tilde\Omega^k(\mc V)$,
i.e. it sends
$\partial P+\delta_K(\partial\tilde\Omega^{k-1}(\mc V))\in H^k(\partial\tilde\Omega(\mc V),\delta_K)$
to
$\partial P+\delta_K(\tilde\Omega^{k-1}(\mc V))\in H^k(\tilde\Omega(\mc V),\delta_K)$.

We are left to prove part (d).
Given $P=\big(P_{i_1,\dots,i_k}(\lambda_1,\dots,\lambda_k)\big)_{i_1,\dots,i_k\in I}\in\tilde\Omega^k_{0,0}$,
the entries of the array $\partial P\in\tilde\Omega^k(\mc V)$
are the polynomials
$(\partial+\lambda_1+\dots+\lambda_k)P_{i_1,\dots,i_k}(\lambda_1,\dots,\lambda_k)$,
which have quasiconstant coefficients and have degree at most $N$ in each $\lambda_i$.
Hence, $\partial P\in\tilde\Omega^k_{0,\ell}(\mc F)\subset\tilde\Omega^k_{0,\ell}(\mc V)$.
It follows by Lemma \ref{100519:lem}(c) that
$(\id-\delta_K\circ h_{0,1})(\id-\delta_K\circ h_{0,2})\dots(\id-\delta_K\circ h_{0,\ell})\partial P$
lies in $\tilde\Omega^k_{0,0}$.
Since, obviously, this element differs from $\partial P$ by an exact element,
we conclude, by part (c), that it coincides with $\alpha_k(P)$.
\end{proof}

Using Theorem \ref{100520:th} and Lemma \ref{100602:lem1},
the long exact sequence \eqref{100602:eq1} becomes
\begin{equation}\label{100521:eq2}
\begin{array}{l}
\vphantom{\Bigg(}
0\to \mc F/\mc C
\stackrel{\partial}{\to}
\mc F
\stackrel{\beta_0}{\to}
H^0(\Omega^\bullet(\mc V),\delta_K)
\stackrel{\gamma_0}{\to}
\tilde\Omega^1_{0,0}
\stackrel{\alpha_{1}}{\to}
\tilde\Omega^1_{0,0}
\stackrel{\beta_{1}}{\to}
\dots \\
\vphantom{\Bigg(}
\dots
\stackrel{\gamma_{k-1}}{\to}
\tilde\Omega^k_{0,0}
\stackrel{\alpha_k}{\to}
\tilde\Omega^k_{0,0}
\stackrel{\beta_k}{\to}
H^k(\Omega^\bullet(\mc V),\delta_K)
\stackrel{\gamma_k}{\to}
\tilde\Omega^{k+1}_{0,0}
\stackrel{\alpha_{k+1}}{\to}
\tilde\Omega^{k+1}_{0,0}
\stackrel{\beta_{k+1}}{\to}
\dots
\end{array}
\end{equation}

Next, we study the maps $\beta_k,\,k\in\mb Z_+$.
First, it is clear that
$\beta_0:\,\tilde\Omega^0_{0,0}=\mc F\to H^0(\Omega^\bullet(\mc V),\delta_K)\subset\mc V/\partial\mc V$,
is given by $f\mapsto\tint f$.
In particular, $\beta_0=0$ if and only if $\partial\mc F=\mc F$.

For $k\in\mb Z_+$, let us consider the map
$\beta_{k+1}:\,\tilde\Omega^{k+1}_{0,0}\to H^{k+1}(\Omega^\bullet(\mc V),\delta_K)$.
Let $P\in\tilde\Omega^{k+1}_{0,0}$,
i.e. $P=\big(P_{i_0,\dots,i_k}(\lambda_0,\dots,\lambda_k)\big)_{i_0,\dots,i_k\in I}$
is a skewsymmetric array with respect to simultaneous permutations
of the indices $i_0,\dots,i_k$ and the variables $\lambda_0,\dots,\lambda_k$ and,
for each $k$-tuple $(i_0,\dots,i_k)$,
$P_{i_0,\dots,i_k}(\lambda_0,\dots,\lambda_k)\in\mb F[\lambda_0,\dots,\lambda_k]\otimes\mc F$
is a polynomial of degree at most $N-1$ in each variable $\lambda_i$.
Then, by definition, $\beta_{k+1}(P)\in H^k(\Omega^\bullet(\mc V),\delta_K)$ is
\begin{equation}\label{100627:eq1}
\beta_{k+1}(P)=\big(P_{i_0,\dots,i_k}(\lambda_0,\dots,\lambda_k)\big)_{i_0,\dots,i_k\in I}
+\delta_K\big(\Omega^{k}(\mc V)\big)\,,
\end{equation}
where $P_{i_0,\dots,i_k}(\lambda_0,\dots,\lambda_k)$
should now be viewed as an element of the space $\mb F_-[\lambda_0,\dots,\lambda_k]\otimes_{\mb F[\partial]}\mc F$.

Note that the space of exact elements $\delta_K\big(\Omega^{k}(\mc V)\big)$
contains all arrays
\begin{equation}\label{100627:eq2}
\Big(
\sum_{\alpha=0}^k (-1)^{\alpha} \sum_{j\in I}
K^*_{i_\alpha,j}(\lambda_0+\stackrel{\alpha}{\check{\dots}}+\lambda_k+\partial)
Q^j_{i_0,\stackrel{\alpha}{\check{\dots}},i_k}(\lambda_0,\stackrel{\alpha}{\check{\dots}},\lambda_k)
\Big)_{i_0,\dots,i_k\in I}\,,
\end{equation}
where
$Q^j_{i_1,\dots,i_{k}}(\lambda_1,\dots,\lambda_{k})\in\mb F[\lambda_1,\dots,\lambda_k]\otimes\mc F$
are polynomials with quasiconstant coefficients,
and they are skewsymmetric with respect to simultaneous permutations
of $i_1,\dots,i_{k}$ and $\lambda_1,\dots,\lambda_{k}$.
Indeed, recalling the definition \eqref{100519:eq1} of $\delta_K$,
we have that $\delta_K$ applied to the array
\begin{equation}\label{100627:eq3}
\Big(\sum_{j\in I}Q^j_{i_1,\dots,i_{k}}(\lambda_1,\dots,\lambda_{k})u_j\Big)_{i_1,\dots,i_{k}\in I}
\in\Omega^{k}(\mc V)\,,
\end{equation}
gives \eqref{100627:eq2}.

For example, for $k=0$, given $P=\big(P_i(\lambda)\big)_{i\in I}\in\tilde\Omega^1_{0,0}$,
we have
$\beta_1(P)=\big(p_i\big)_{i\in I}+\delta_K\big(\Omega^0(\mc V)\big)$,
where $p_i=P_i^*(0)\in\mc F$.
On the other hand, for $Q^j=f_j\in\mc F$, the array \eqref{100627:eq2} becomes
$\big(\sum_{j\in I}K^*_{i,j}(\partial)f_j\big)_{i\in I}$.
Hence, $\beta_1=0$ provided that the map
$K^*(\partial):\,\mc F^\ell\to\mc F^\ell$ is surjective.

In general, for $k\geq0$, let
$P=\big(P_{i_0,\dots,i_k}(\lambda_0,\dots,\lambda_k)\big)_{i_0,\dots,i_k\in I}
\in \tilde\Omega^{k+1}_{0,0}$,
i.e. $P$ is a skewsymmetric array with respect to simultaneous permutations
of the indices $i_0,\dots,i_k$ and the variables $\lambda_0,\dots,\lambda_k$ and
such that, for each $k$-tuple $(i_0,\dots,i_k)$,
$P_{i_0,\dots,i_k}(\lambda_0,\dots,\lambda_k)\in\mb F[\lambda_0,\dots,\lambda_k]\otimes\mc F$
is a polynomial of degree at most $N-1$ in each variable $\lambda_i$.
By equation \eqref{100627:eq1} and formulas \eqref{100627:eq2} and \eqref{100627:eq3},
we have that $\beta_{k+1}(P)=0$ provided that
there exists a collection of $\ell$ skewsymmetric arrays in $k$ variables
$$
\Big(Q^j_{i_1,\dots,i_{k}}(\lambda_1,\dots,\lambda_{k})\Big)_{i_1,\dots,i_{k}\in I}\,,
$$
indexed by $j=1,\dots,\ell$, where $Q^j_{i_1,\dots,i_{k}}(\lambda_1,\dots,\lambda_{k})$
are polynomials with quasiconstant coefficients,
such that
\begin{equation}\label{110213:eq1}
\begin{array}{l}
\displaystyle{
\sum_{\alpha=0}^k (-1)^{\alpha} \sum_{j\in I}
K^*_{i_\alpha,j}(\lambda_0+\stackrel{\alpha}{\check{\dots}}+\lambda_k+\partial)
Q^j_{i_0,\stackrel{\alpha}{\check{\dots}},i_k}(\lambda_0,\stackrel{\alpha}{\check{\dots}},\lambda_k)
}\\
\displaystyle{
-
P_{i_0,i_1,\dots,i_{k}}(\lambda_0,\lambda_1,\dots,\lambda_{k})
\in(\lambda_0+\dots+\lambda_k+\partial)\mb F[\lambda_0,\dots,\lambda_k]\otimes\mc F\,.
}
\end{array}
\end{equation}

Recall Definition \ref{def:linclosed} from Appendix \ref{app:3}
of a linearly closed differential field.
\begin{theorem}\label{100602:lem2}
Let $\mc V$ be a normal algebra of differential functions,
and assume that the algebra of quasiconstants $\mc F\subset\mc V$
is a linearly closed differential field.
Let $K(\partial)$ be an $\ell\times\ell$ matrix differential operator with quasiconstant coefficients
and invertible leading coefficient.
Then $\beta_k=0$ for every $k\geq0$ and every $\ell\geq1$.
\end{theorem}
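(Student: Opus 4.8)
The plan is to deduce $\beta_k=0$ from a single solvability statement over the differential field $\mc F$, which is exactly what the excerpt has already reduced the problem to. By the discussion immediately preceding Theorem \ref{100602:lem2}, the map $\beta_0\colon\mc F\to H^0$ is $f\mapsto\tint f$, so $\beta_0=0$ as soon as $\partial\mc F=\mc F$; and for every $k\geq0$ one has $\beta_{k+1}(P)=0$ provided one can produce skewsymmetric arrays $(Q^j)_{j\in I}$ with quasiconstant coefficients solving the polynomial differential system \eqref{110213:eq1}. Thus the whole theorem is equivalent to a pure solvability question, and linear closedness of $\mc F$ (Definition \ref{def:linclosed}) is precisely the hypothesis tailored to answer it. The two lowest cases already display the mechanism: $\partial\mc F=\mc F$ is the solvability of the first-order scalar equation $\partial y=b$, whose leading coefficient is a unit, while $\beta_1=0$ amounts to surjectivity of the order-$N$ operator $K^*(\partial)\colon\mc F^\ell\to\mc F^\ell$, which has invertible leading coefficient. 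Both are instances of the differential linear algebra developed in Appendix \ref{app:3}.

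For general $k$ I would first normalize, using Lemma \ref{100520:lem1}(d): the map $\Phi_{K_N^{-1}}$ is an isomorphism of complexes $(\tilde\Omega^\bullet,\delta_K)\to(\tilde\Omega^\bullet,\delta_{K\circ K_N^{-1}})$ restricting to an automorphism of $\tilde\Omega^\bullet_{0,0}$, and one checks it is compatible with $\partial$, hence with the short exact sequence underlying \eqref{100521:eq2}; this identifies $\beta_k$ for $K$ with $\beta_k$ for $K\circ K_N^{-1}$, so one may assume $K_N=\id$ (as in the computation of $\alpha_k$). The key structural observation is that the left-hand side of \eqref{110213:eq1} is a Koszul-type boundary: the factor $\sum_\alpha(-1)^\alpha K^*_{i_\alpha,j}(\dots)$ inserts one new index-and-variable through the matrix operator $K^*$, exactly as a Koszul differential attached to $K^*$. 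Solving \eqref{110213:eq1} for arbitrary $P\in\tilde\Omega^{k+1}_{0,0}$ is therefore the statement that this Koszul complex is exact in positive degree, modulo the relation $\lambda_0+\dots+\lambda_k+\partial=0$ defining $\mb F_-[\lambda_0,\dots,\lambda_k]\otimes_{\mb F[\partial]}\mc F$. I would feed this to the Appendix: its main output is that an $\ell\times\ell$ system of linear differential equations of arbitrary order with invertible leading coefficient over a linearly closed $\mc F$ is solvable, and one applies it, treating the $\lambda$-variables as parameters and organizing the passage from $\ell$ to $\ell-1$ variables by using invertibility of the leading coefficient of $K^*$ to solve for one scalar component in terms of the others.

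The main obstacle will be the bookkeeping that couples three structures simultaneously: the differential action of $\partial$ on the $\mc F$-valued coefficients, the polynomial dependence on $\lambda_0,\dots,\lambda_k$ together with the relation $\lambda_0+\dots+\lambda_k+\partial=0$, and the simultaneous skewsymmetry in indices and variables that both $P$ and the sought arrays $Q^j$ must satisfy. The genuinely hard analytic content — solvability of such differential systems — is isolated in Appendix \ref{app:3}, so the crux here is not any individual differential equation but the matching step: verifying that the shape of \eqref{110213:eq1} fits the hypotheses of the Appendix and that a solution can be chosen respecting both quasiconstancy and the required skewsymmetry. I would arrange the argument so that the Koszul and skewsymmetry combinatorics are handled uniformly in $k$ and $\ell$, while the field-theoretic solvability is invoked only as a black box through the Appendix.
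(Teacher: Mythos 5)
Your reduction is fine, but it is also exactly the part the paper has already done before the theorem: the identification $\beta_0=0$, the criterion $\beta_1=0$ from surjectivity of $K^*(\partial)$, and the statement that $\beta_{k+1}(P)=0$ follows once the system \eqref{110213:eq1} is solvable for quasiconstant skewsymmetric arrays $Q^j$ are all contained in the discussion preceding Theorem \ref{100602:lem2}. The genuine content of the theorem is therefore the solvability of \eqref{110213:eq1}, and here your proposal has a real gap: you invoke ``the Appendix'' as a black box whose main output is solvability of an $\ell\times\ell$ system of linear differential equations with invertible leading coefficient (this is Appendix \ref{app:3}, Corollary \ref{101218:thm2}). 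That is not the result needed. What the paper actually uses is Theorem \ref{110127:conj1} of Appendix \ref{app:5}: after rewriting the left-hand side of \eqref{110213:eq1} as the total skewsymmetrization $\langle K^*\circ Q\rangle^-$ (via \eqref{110120:eq6} and \eqref{110126:eq4}), one needs that for every totally skewsymmetric $k$-differential operator $P$ there exists a skewsymmetric $Q$ with $\langle K^*\circ Q\rangle^-=P$. This is a statement about \emph{poly}differential operators, and it is not a routine consequence of the matrix-ODE solvability: its proof occupies the bulk of Appendix \ref{app:5}, requiring the expansion lemmas for monomials $\lambda_0^{m_0}\cdots\lambda_k^{m_k}$ modulo $\lambda_0+\cdots+\lambda_k+\partial=0$ (Lemmas \ref{110311:lem0}, \ref{110316:lem1}, \ref{110311:lem1}), a degree-reduction argument, the construction of a square system $A(\partial)(M(\partial)X-B)=0$ with $M(\partial)$ a matrix \emph{pseudo}differential operator, the nondegeneracy of its leading matrix via the delicate combinatorial Lemma \ref{110318:cor1}, and finally Corollary \ref{110328:cor} applied to the product $A(\partial)M(\partial)$.

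Moreover, the two strategies you sketch for bridging this gap would not work. Treating the $\lambda$-variables as parameters is impossible precisely because of the relation $\lambda_0+\lambda_1+\cdots+\lambda_k+\partial=0$: substituting $\lambda_0=-\lambda_1-\cdots-\lambda_k-\partial$ makes $\partial$ act inside the polynomial slots, so the unknown coefficients of the $Q^j$ are coupled across all $\lambda$-degrees by genuinely differential relations (this is what forces the pseudodifferential operator $M(\partial)$ and the nontrivial majorant analysis in the paper). And eliminating one scalar component to pass from $\ell$ unknowns to $\ell-1$ destroys the simultaneous skewsymmetry in indices and variables that both $P$ and the sought $Q$ must satisfy; the paper's argument is uniform in $\ell$ and proceeds instead by induction on the polynomial degree data, not on $\ell$. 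Finally, the claim that solvability of \eqref{110213:eq1} is ``exactness of a Koszul complex'' is only a heuristic: the map $Q\mapsto\langle K^*\circ Q\rangle^-$ is not a differential of any complex appearing in the paper, and no exactness statement of that kind is available to quote. In short, your ``matching step'' is not a verification but is the entire proof, and as proposed it does not go through.
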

\begin{proof}
The facts that $\beta_0=0$ and $\beta_1=0$ were pointed out above.
Using notation \eqref{110120:eq6} and \eqref{110126:eq4} in Appendix \ref{app:5},
we have that the array \eqref{100627:eq2} is equal to $\langle K^*\circ Q\rangle^-$.
Hence, condition \eqref{110213:eq1},
after replacing $\lambda_0$ by $-\lambda_1-\dots-\lambda_k-\partial$,
becomes $P=\langle K^*\circ Q\rangle^-$.
Hence, the assertion that $\beta_{k+1}(P)=0$ follows from
Theorem \ref{110127:conj1} in the Appendix.
\end{proof}
\begin{example}
In the case $k=1$ and $\ell=1$, the problem of solving
equation \eqref{110213:eq1} becomes:
for $P(\lambda)\in\mc F[\lambda]$ skewadjoint,
i.e. $P^*(\lambda):=P(-\lambda-\partial)=-P(\lambda)$,
we want to find $Q(\lambda)\in\mc F[\lambda]$,
such that
$P(\lambda)
=
Q^*(\lambda+\partial)K(\lambda)
-K^*(\lambda+\partial)Q(\lambda)$.
Solutions for certain choices of $K(\lambda)$ are the following:
\begin{enumerate}
\item
$K(\lambda)=1$: take $Q(\lambda)=\frac12 P(\lambda)$,
\item
$K(\lambda)=\lambda$: take $Q(\lambda)$ such that $\partial Q(\lambda)=P(\lambda)$,
\item
$K(\lambda)=\lambda^2$: take $Q(\lambda)=Q^*(\lambda)$ such that
$(\partial+2\lambda)\partial Q(\lambda)=P(\lambda)$,
\item
$K(\lambda)=\lambda^3$: take $Q(\lambda)=(\lambda-\partial)\alpha+R(\lambda)$,
with $R(\lambda)=R^*(\lambda)$, such that
$(\partial+2\lambda)\big(-\partial^3\alpha+2(\lambda^2+\lambda\partial+\partial^2)R(\lambda)\big)=P(\lambda)$.
\end{enumerate}
\end{example}

By the exact sequence \eqref{100521:eq2},  $\beta_k=0$ implies that
$\gamma_k:\,H^k(\Omega^\bullet(\mc V),\delta_K)\to\tilde\Omega^{k+1}_{0,0}$
is an embedding,
and its image coincides with the kernel of the endomorphism
$\alpha_{k+1}$ of the $\mc C$-vector space $\tilde\Omega^{k+1}_{0,0}$.
Hence, in order to compute the variational Poisson cohomology,
we need to study the maps $\alpha_{k+1}$ and $\gamma_k$.
In particular,
we will use the results of Appendix \ref{app:5}
to compute the dimension over $\mc C$ of $\ker(\alpha_{k+1})$,
which by the above observations coincides with the dimension
of $H^k(\Omega^\bullet(\mc V),\delta_K)$,
and, for each element $C\in\ker(\alpha_{k+1})$, we will find a representative
of $\gamma^{-1}(C)\in H^k(\Omega^\bullet(\mc V),\delta_K)$ in $\tilde\Omega^k(\mc V)$.
To start with, we need the following:
\begin{lemma}\label{110213:lem}
Suppose that the algebra of differential functions $\mc V$ is an extension of the algebra of differential
polynomials $\mc F\big[u_i^{(n)}\,\big|\,i\in I,n\in\mb Z_+\big]$,
for a differential field $\mc F$.
Then, there exists a direct sum (over $\mc F$) decomposition
\begin{equation}\label{110213:eq2}
\mc V
=
\mc F\oplus\Big(\bigoplus_{i\in I,n\in\mb Z_+}\mc Fu_i^{(n)}\Big)\oplus\mc U\,,
\end{equation}
where $\mc U\subset\mc V$ is an $\mc F$-linear subspace of $\mc V$ such that
\begin{equation}\label{110213:eq4}
\frac{\partial}{\partial u_j^{(m)}} \mc U\subset
\Big(\bigoplus_{i\in I,n\in\mb Z_+}\mc F u_i^{(n)}\Big)\oplus\mc U
\,\,,\,\,\,\,
\text{ for all } j\in I,\,m\in\mb Z_+\,.
\end{equation}
\end{lemma}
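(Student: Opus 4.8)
The plan is to produce the complement $\mc U$ not by choosing it directly (which would force us to control infinitely many derivatives at once), but as the kernel of an explicitly built $\mc F$-linear projection $P$ of $\mc V$ onto the ``degree $\leq 1$'' subspace $\mc V^{\leq1}:=\mc F\oplus\bigl(\bigoplus_{i\in I,n\in\mb Z_+}\mc F u_i^{(n)}\bigr)$. The single nonconstructive ingredient will be one linear functional, a ``constant term'' map $c\colon\mc V\to\mc F$; everything else is then forced.

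First I would record that the displayed sum $\mc V^{\leq1}$ is genuinely direct, i.e.\ that $1$ together with the $u_i^{(n)}$ are $\mc F$-linearly independent in $\mc V$. This is where the hypothesis that $\mc V$ contains the differential polynomial algebra $\mc F[u_i^{(n)}]$ enters: in a polynomial ring over the field $\mc F$ the constant monomial and the degree-one monomials are linearly independent. Then, extending the obvious functional on $\mc V^{\leq1}$ (identity on $\mc F$, zero on each $u_i^{(n)}$) by Zorn's lemma, or equivalently by extending a basis, I obtain an $\mc F$-linear map $c\colon\mc V\to\mc F$ with $c|_{\mc F}=\mathrm{id}$ and $c(u_i^{(n)})=0$ for all $i,n$. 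Note that each $\frac{\partial}{\partial u_j^{(m)}}$ is $\mc F$-linear, since elements of $\mc F$ are quasiconstant, so $c\circ\frac{\partial}{\partial u_j^{(m)}}$ is a well-defined $\mc F$-linear map.

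The key move is to define the ``linear coefficient'' functionals by $\ell_{j,m}:=c\circ\frac{\partial}{\partial u_j^{(m)}}$ and then set $P(f)=c(f)+\sum_{j,m}\ell_{j,m}(f)\,u_j^{(m)}$; the sum is finite because $\frac{\partial f}{\partial u_j^{(m)}}=0$ for all but finitely many $(j,m)$. One checks directly that $P$ is the identity on $\mc V^{\leq1}$ (on $\mc F$ because $\frac{\partial}{\partial u_j^{(m)}}$ kills $\mc F$, and $P(u_i^{(n)})=u_i^{(n)}$ because $\frac{\partial}{\partial u_j^{(m)}}u_i^{(n)}=\delta_{ij}\delta_{mn}$), and that $\mathrm{im}\,P\subset\mc V^{\leq1}$; hence $P^2=P$, and $\mc U:=\ker P$ yields $\mc V=\mc F\oplus W'\oplus\mc U$ with $W'=\bigoplus_{i,n}\mc F u_i^{(n)}$. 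The payoff of routing $\ell_{j,m}$ through the derivation is that the required invariance becomes automatic: since $c$ vanishes on $W'$ and on $\mc U=\ker P$, the projection of $\mc V$ onto its $\mc F$-summand along $W'\oplus\mc U$ is exactly $c$, so $\frac{\partial f}{\partial u_j^{(m)}}\in W'\oplus\mc U$ precisely when $c\bigl(\frac{\partial f}{\partial u_j^{(m)}}\bigr)=0$; and for $f\in\mc U$ one has $c\bigl(\frac{\partial f}{\partial u_j^{(m)}}\bigr)=\ell_{j,m}(f)=0$.

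Thus I expect no genuine obstacle to remain: the only place one must be attentive is the reformulation itself, namely recognizing that selecting $\mc U$ ``by hand'' is the wrong move, and that compatibility with all $\frac{\partial}{\partial u_j^{(m)}}$ simultaneously is exactly encoded by the identity $\ell_{j,m}=c\circ\frac{\partial}{\partial u_j^{(m)}}$. The linear-independence statement and the verification $P^2=P$ are the only routine computations, while the existence of $c$ rests solely on the polynomial-subalgebra hypothesis together with a standard vector-space complement argument.
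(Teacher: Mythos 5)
Your proposal is correct and is essentially the paper's own proof: the paper extends the evaluation map $\mc F[u_i^{(n)}]\twoheadrightarrow\mc F$ at $u_i^{(n)}=0$ to an $\mc F$-linear functional $\varepsilon$ on $\mc V$ and defines $\mc U=\{g \mid \varepsilon(g)=0,\ \varepsilon(\partial g/\partial u_i^{(n)})=0\ \forall i,n\}$, which coincides with your $\ker P$ once one identifies your functional $c$ with $\varepsilon$. Your packaging via the idempotent $P(f)=c(f)+\sum_{j,m}c\bigl(\partial f/\partial u_j^{(m)}\bigr)u_j^{(m)}$ is just a tidier way of organizing the same spanning, directness, and invariance verifications.
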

\begin{proof}
Consider the map: $\mc F\big[u_i^{(n)}\,\big|\,i\in I,n\in\mb Z_+\big]\twoheadrightarrow\mc F$
given by evaluating at $u_i^{(n)}=0,\,\forall i\in I,n\in\mb Z_+$,
and extend it to a linear over $\mc F$ map
$\varepsilon:\,\mc V\twoheadrightarrow\mc F$.
Let then
$$
\mc U:=\Big\{g\in\mc V\,\Big|\, \varepsilon(g)=0,\,
\varepsilon\Big(\frac{\partial g}{\partial u_i^{(n)}}\Big)=0\,\,\forall i\in I,n\in\mb Z_+\Big\}\,.
$$
Clearly, for every $f\in\mc V$ we have
$$
f-\varepsilon(f)-\sum_{i\in I,n\in\mb Z_+}
\varepsilon\Big(\frac{\partial f}{\partial u_i^{(n)}}\Big) u_i^{(n)}
\in\mc U\,,
$$
so that
$\mc V=\mc F+\Big(\bigoplus_{i\in I,n\in\mb Z_+}\mc Fu_i^{(n)}\Big)+\mc U$.
Moreover, if
$$
f=\alpha+\sum_{i\in I,n\in\mb Z_+}\beta_{i,n}u_i^{(n)}+g=0\,,
$$
with $\alpha,\,\beta_{i,n}\in\mc F$ and $g\in\mc U$,
then $\alpha=\varepsilon(f)=0$,
and $\beta_{i,n}=\varepsilon\big(\frac{\partial f}{\partial u_i^{(n)}}\big)=0$,
so that $\mc V$ admits the direct sum decomposition \eqref{110213:eq2}.

Let then $f\in\mc V$ and consider its decomposition given by \eqref{110213:eq2}:
$f=\alpha+\sum_{i\in I,n\in\mb Z_+}\beta_{i,n}u_i^{(n)}+g$,
where $\alpha,\,\beta_{i,n}\in\mc F$ and $g\in\mc U$.
We have $\varepsilon(f)=\alpha$,
proving that
\begin{equation}\label{110213:eq3}
\ker(\varepsilon)=\Big(\bigoplus_{i\in I,n\in\mb Z_+}\mc Fu_i^{(n)}\Big)\oplus\mc U\subset\mc V\,.
\end{equation}
To conclude, we note that, by the definition of $\mc U$,
if $g\in\mc U$ then $\frac{\partial g}{\partial u_i^{(n)}}\in\ker(\varepsilon)$ for every $i\in I,n\in\mb Z_+$,
which, together with \eqref{110213:eq3}, gives \eqref{110213:eq4}.
\end{proof}
Recall from Appendix \ref{app:5.1} that a $k$-differential operator on $\mc F^\ell$
is an array
$P=\big(P_{i_0,i_1,\dots,i_k}(\lambda_1,\dots,\lambda_k)\big)_{i_0,i_1,\dots,i_k\in I}$,
whose entries are polynomials in $\lambda_1,\dots,\lambda_k$ with coefficients in $\mc F$,
and it is said to be skewsymmetric if the entries
$P_{i_0,i_1,\dots,i_k}(\lambda_1,\dots,\lambda_k)$ are skewsymmetric
with respect to simultaneous permutations of the indices $i_1,\dots,i_k$
and the variables $\lambda_1,\dots,\lambda_k$.
Given an $\ell\times\ell$ matrix differential operator $K(\partial)$,
we denote by $\Sigma_k(K)$ the space of skewsymmetric $k$-differential operators on $\mc F^\ell$
whose entries are polynomials of degree at most $N-1$ in each variable $\lambda_1,\dots,\lambda_k$,
solving equation \eqref{110213:eq5}.
For example $\Sigma_0(K)$ consists of elements $P\in\mc F^\ell$ solving
$K(\partial)P=0$.
By Theorem \ref{110127:conj2}, if $\mc F$ is a linearly closed differential field,
then $\Sigma_k(K)$ is a vector space over $\mc C$
of dimension $\binom{N\ell}{k+1}$.
\begin{theorem}\label{110213:thm}
Let $k\in\mb Z_+$.
Let $\mc V$ be a normal algebra of differential functions,
and assume that the algebra of quasiconstants $\mc F\subset\mc V$
is a linearly closed differential field.
Let $K(\partial)$ be an $\ell\times\ell$ matrix differential operator
of order $N$ with quasiconstant coefficients
and invertible leading coefficient $K_N\in\Mat_{\ell\times\ell}(\mc F)$.
\begin{enumerate}[(a)]
\item
There is a canonical isomorphism of $\mc C$-vector spaces $\phi_k:\,\Sigma_k(K^*)\to\ker(\alpha_{k+1})$,
defined as follows: given $P\in\Sigma_k(K^*)$,
we let $\phi_k(P)=C\in\ker(\alpha_{k+1})$, where
\begin{equation}\label{110226:eq1}
\begin{array}{l}
\displaystyle{
\sum_{\alpha=0}^k (-1)^\alpha
\sum_{j\in I}
P_{j,i_0,\stackrel{\alpha}{\check{\dots}},i_k}
(\lambda_0,\stackrel{\alpha}{\check{\dots}},\lambda_k)
K_{j,i_\alpha}(\lambda_\alpha)
} \\
\displaystyle{
=
(\lambda_0+\lambda_1+\dots+\lambda_k+\partial)C_{i_0,i_1,\dots,i_k}(\lambda_0,\lambda_1,\dots,\lambda_k)\,,
}
\end{array}
\end{equation}
for all indices $i_0,\dots,i_k\in I$ (equality in $\mb F[\lambda_0,\dots,\lambda_k]\otimes\mc F$).
\item
There is a canonical isomorphism $\chi_k:\,\Sigma_k(K^*)\simeq H^k(\Omega^\bullet(\mc V),\delta_K)$
defined as follows:
given $P\in\Sigma_k(K^*)$,
we let $\chi_k(P)\in H^k(\Omega^\bullet(\mc V),\delta_K)$ be the cohomology class
with representative
\begin{equation}\label{110226:eq2}
\Big(\sum_{j\in I}P_{j,i_1,\dots,i_k}(\lambda_1,\dots,\lambda_k)u_j\Big)_{i_1,\dots,i_k\in I}\,\in\tilde\Omega^k(\mc V)\,.
\end{equation}
In particular,
\begin{equation}\label{110226:eq4}
\dim_{\mc C}(H^k(\Omega^\bullet(\mc V),\delta_K))=\binom{N\ell}{k+1}\,.
\end{equation}
\item
The maps $\gamma_k:\,H^k(\Omega^\bullet(\mc V),\delta_K)\to\ker(\alpha_{k+1})$
in the exact sequence \eqref{100521:eq2}
and $\phi_k:\,\Sigma_k(K^*)\to\ker(\alpha_{k+1})$ are compatible in the sense that
\begin{equation}\label{110226:eq3}
\phi_k=\gamma_k\circ\chi_k\,.
\end{equation}
For $C\in\ker(\alpha_{k+1})$,
let
$\phi_k^{-1}(C)=\big(P_{i_0,i_1,\dots,i_k}(\lambda_1,\dots,\lambda_k)\big)_{i_0,i_1,\dots,i_k\in I}\in\Sigma_k(K^*)$.
Then, the array \eqref{110226:eq2} in $\tilde\Omega^k(\mc V)$
is a representative of the cohomology class $\gamma_k^{-1}(C)\in H^k(\Omega^\bullet(\mc V),\delta_K)$.
\end{enumerate}
\end{theorem}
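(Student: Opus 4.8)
The plan is to assemble the statement from the long exact sequence \eqref{100521:eq2}, the computation of the generalized de Rham cohomology in Theorem \ref{100520:th}, and the differential linear algebra of the Appendix. The starting point is the observation already made after Theorem \ref{100602:lem2}: the vanishing $\beta_k=0$ forces the connecting map to be an isomorphism $\gamma_k\colon H^k(\Omega^\bullet(\mc V),\delta_K)\stackrel{\sim}{\to}\ker(\alpha_{k+1})$ onto the kernel of the $\mc C$-linear endomorphism $\alpha_{k+1}$ of $\tilde\Omega^{k+1}_{0,0}$. Hence parts (a), (b), (c) all follow once I show that $\phi_k$ is a well-defined isomorphism $\Sigma_k(K^*)\stackrel{\sim}{\to}\ker(\alpha_{k+1})$ satisfying $\phi_k=\gamma_k\circ\chi_k$: then $\chi_k=\gamma_k^{-1}\circ\phi_k$ is an isomorphism (part (b)), the dimension \eqref{110226:eq4} comes from $\dim_{\mc C}\Sigma_k(K^*)=\binom{N\ell}{k+1}$ (Theorem \ref{110127:conj2} applied to $K^*$, whose leading coefficient is again invertible), and the representative statement in (c) is just the identity $\gamma_k^{-1}(C)=\chi_k(\phi_k^{-1}(C))$.

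The computation underlying everything is as follows. For $P\in\Sigma_k(K^*)$ consider the array \eqref{110226:eq2}, an element of $\tilde\Omega^k(\mc V)$ whose entries are linear in the $u_j$ with quasiconstant coefficients. Applying the differential \eqref{100519b:eq1} and using that $\partial u_r/\partial u_j^{(n)}=\delta_{rj}\delta_{n,0}$ together with the quasiconstancy of the $P_{j,\dots}$, one finds immediately that $\delta_K$ of \eqref{110226:eq2} has entries equal to the left-hand side of \eqref{110226:eq1}. The defining relation \eqref{110213:eq5} of $\Sigma_k(K^*)$ is precisely the condition that this left-hand side be divisible by $(\lambda_0+\dots+\lambda_k+\partial)$, so it equals $(\lambda_0+\dots+\lambda_k+\partial)C$, i.e. by \eqref{100518:eq4} it equals $\partial C$ with $C=\phi_k(P)$; here $C$ inherits quasiconstant coefficients and skewsymmetry, and a degree count (the left-hand side of \eqref{110226:eq1} has degree at most $N$ in each $\lambda_\alpha$, since $K$ has order $N$ and $P$ has degree at most $N-1$ in each variable, while multiplication by $\lambda_\alpha$ raises the $\lambda_\alpha$-degree by exactly one) shows $C$ has degree at most $N-1$ in each variable, so $C\in\tilde\Omega^{k+1}_{0,0}$ and $\phi_k$ is well defined. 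Since $\delta_K$ of \eqref{110226:eq2} lies in $\partial\tilde\Omega^{k+1}(\mc V)=\ker\beta$, the image of \eqref{110226:eq2} in $\Omega^k(\mc V)$ is $\delta_K$-closed, so $\chi_k$ is well defined; and running the definition of the connecting homomorphism on this very lift identifies $\gamma_k(\chi_k(P))$ with the class of $\partial C$ in $H^{k+1}(\partial\tilde\Omega^\bullet,\delta_K)\cong\partial\tilde\Omega^{k+1}_{0,0}\cong\tilde\Omega^{k+1}_{0,0}$ (Lemma \ref{100602:lem1}(a)), which is exactly $C$. This proves $\phi_k=\gamma_k\circ\chi_k$, and in particular $\phi_k(\Sigma_k(K^*))\subseteq\ker(\alpha_{k+1})$.

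It remains to prove that $\phi_k$ is bijective. Injectivity is elementary: if $\phi_k(P)=0$ the left-hand side of \eqref{110226:eq1} vanishes identically, and comparing coefficients of $\lambda_0^N$ — into which, among the summands, only the $\alpha=0$ term contributes (through $K_{j,i_0}(\lambda_0)$, of degree $N$ with leading coefficient $(K_N)_{j,i_0}$), every other summand having $\lambda_0$-degree at most $N-1$ — gives $\sum_j P_{j,i_1,\dots,i_k}(K_N)_{j,i_0}=0$ for all $i_0$, so invertibility of $K_N$ forces $P=0$. For surjectivity I would, given $C\in\ker(\alpha_{k+1})$, need to solve \eqref{110226:eq1} for $P\in\Sigma_k(K^*)$; this is a quasiconstant-coefficient differential system of exactly the type treated in the Appendix, whose solvability and solution dimension $\binom{N\ell}{k+1}$ (Theorems \ref{110127:conj1} and \ref{110127:conj2}, using Definition \ref{def:linclosed}) pin down $\phi_k$ as an isomorphism onto $\ker(\alpha_{k+1})$ and yield \eqref{110226:eq4}.

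The main obstacle is precisely this last surjectivity/dimension statement: everything up to it is bookkeeping with the long exact sequence plus the single differential-algebra computation of the second paragraph, but identifying the \emph{full} kernel $\ker(\alpha_{k+1})$ with the solution space $\Sigma_k(K^*)$ requires knowing that the associated differential system over $\mc F$ is solvable with solution space of the expected dimension $\binom{N\ell}{k+1}$. This is the content of the Appendix, and is where the hypothesis that the quasiconstants $\mc F$ form a linearly closed differential field is essential.
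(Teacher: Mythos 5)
Your bookkeeping is sound and matches the paper's strategy: the well-definedness of $\phi_k$ (divisibility of the left-hand side of \eqref{110226:eq1} by $\lambda_0+\dots+\lambda_k+\partial$, plus the degree count giving $C\in\tilde\Omega^{k+1}_{0,0}$), the injectivity of $\phi_k$ via the coefficient of $\lambda_0^N$ and invertibility of $K_N$, the identity $\phi_k=\gamma_k\circ\chi_k$ obtained by running the connecting homomorphism on the lift \eqref{110226:eq2}, and the derivation of parts (b) and (c) from part (a) are all exactly as in the paper. The genuine gap is the surjectivity of $\phi_k$, which you defer to ``the Appendix'' --- but the Appendix does not contain it. Theorem \ref{110127:conj1} solves $\langle K\circ P\rangle^-=S$ for a prescribed totally skewsymmetric $S$; that result is consumed in proving $\beta_k=0$ (Theorem \ref{100602:lem2}), which you have already spent to make $\gamma_k$ an isomorphism onto $\ker(\alpha_{k+1})$. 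Theorem \ref{110127:conj2} computes $\dim_{\mc C}\Sigma_k(K^*)=\binom{N\ell}{k+1}$, which together with injectivity yields only the lower bound $\dim_{\mc C}\ker(\alpha_{k+1})\geq\binom{N\ell}{k+1}$. Neither theorem addresses your actual problem: given $C\in\ker(\alpha_{k+1})$, solve \eqref{110226:eq1} \emph{exactly} (not merely modulo $\lambda_0+\dots+\lambda_k+\partial$) for a quasiconstant skewsymmetric $P$ of degree at most $N-1$ in each variable. This is not a consequence of linear closedness alone: the system is overdetermined, and its solvability genuinely requires the hypothesis $\alpha_{k+1}(C)=0$; conversely, a dimension count cannot close the argument, because an independent upper bound on $\dim_{\mc C}\ker(\alpha_{k+1})$ is equivalent to an upper bound on $\dim_{\mc C}H^k(\Omega^\bullet(\mc V),\delta_K)$, which is precisely what one is trying to compute --- the reasoning would be circular.

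What fills this gap in the paper is a separate, substantial argument, and it is the core of the proof of part (a). Starting from $C\in\ker(\alpha_{k+1})$, Lemma \ref{100602:lem1}(c) produces $Q\in\tilde\Omega^k(\mc V)$ with $\partial C=\delta_KQ$; by Lemma \ref{110213:lem} (the decomposition $\mc V=\mc F\oplus\bigl(\bigoplus_{i,n}\mc Fu_i^{(n)}\bigr)\oplus\mc U$) one may take the coefficients of $Q$ linear in the variables $u_j^{(n)}$, i.e.\ of the form \eqref{110222:eq8}. Then a three-step elimination --- first bounding the $\lambda$-degrees of the coefficients $P^n$ by $M+N$, then removing all terms with $n\geq1$ (reducing $M$ to $0$), and finally lowering the degrees to at most $N-1$ --- rewrites $Q$ in the form \eqref{110226:eq2}, each step comparing top-degree coefficients and exploiting the invertibility of the leading coefficient $K_N$. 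Without this reduction (or some substitute for it), your proposal establishes only that $\phi_k$ is an injection of $\Sigma_k(K^*)$ into $\ker(\alpha_{k+1})$, which is not enough for any of (a), (b), (c).
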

\begin{proof}
First, we prove that the map $\phi_k$ given by \eqref{110226:eq1} is well defined.
Let $P\in\Sigma_k(K^*)$, so that $\langle K^*\circ P\rangle^-=0$.
By equation \eqref{110222:eq4} from the Appendix,
we can rewrite this condition by saying that
\begin{equation}\label{110222:eq1}
\sum_{\alpha=0}^k (-1)^\alpha
\sum_{j\in I}
P_{j,i_0,\stackrel{\alpha}{\check{\dots}},i_k}
(\lambda_0,\stackrel{\alpha}{\check{\dots}},\lambda_k)
K_{j,i_\alpha}(\lambda_\alpha)
\end{equation}
becomes zero if we replace $\lambda_0$ by $-\lambda_1-\dots-\lambda_k-\partial$,
with $\partial$ acting from the left.
In other words, \eqref{110222:eq1}, as an element of $\mb F[\lambda_0,\lambda_1,\dots,\lambda_k]\otimes\mc F$,
is equal to
\begin{equation}\label{110222:eq2}
(\lambda_0+\lambda_1+\dots+\lambda_k+\partial)C_{i_0,i_1,\dots,i_k}(\lambda_0,\lambda_1,\dots,\lambda_k)\,,
\end{equation}
where $C=\big(C_{i_0,i_1,\dots,i_k}(\lambda_0,\lambda_1,\dots,\lambda_k)\big)_{i_0,i_1,\dots,i_k\in I}$
is a skewsymmetric array whose entries are polynomials with quasiconstant coefficients
of degree less than or equal to $N-1$, i.e. $C\in\tilde\Omega^{k+1}_{0,0}$.
Furthermore, we claim that $C$ lies in $\ker(\alpha_{k+1})$.
Indeed, taking $Q\in\tilde\Omega^k(\mc V)$ be the array \eqref{110226:eq2},
we have, by \eqref{100519b:eq1}, that $(\delta_KQ)_{i_0,\dots,i_k}(\lambda_0,\dots,\lambda_k)$
is equal to \eqref{110222:eq1}.
Hence, the equality of \eqref{110222:eq1} and \eqref{110222:eq2} implies that
$\delta_KQ=\partial C$.
Therefore, by Lemma \ref{100602:lem1}(c), we conclude that $\alpha_{k+1}(C)=0$,
proving that $\phi_k$ is well defined.

We next prove that the map $\phi_k:\,\Sigma_k(K^*)\to\ker(\alpha_{k+1})$ is injective.
Since, by assumption, $P_{i_0,\dots,i_k}(\lambda_1,\dots,\lambda_k)$
has degree less than or equal to $N-1$ in each variable,
the coefficient of $\lambda_0^N$ in \eqref{110222:eq1} is
\begin{equation}\label{110222:eq6}
\sum_{j\in I}
P_{j,i_1,\dots,i_k}
(\lambda_1,\dots,\lambda_k)
(K_N)_{j,i_0}\,.
\end{equation}
To say that $\phi_k(P)=0$ is equivalent to say that \eqref{110222:eq1},
viewed as an element of $\mb F[\lambda_0,\dots,\lambda_k]\otimes\mc F$,
is identically zero for all indices $i_0,\dots,i_k\in I$.
In particular \eqref{110222:eq6} is zero.
Since, by assumption, $K_N$ is an invertible matrix, it follows that
$P_{i_0,i_1,\dots,i_k}(\lambda_1,\dots,\lambda_k)=0$, for all indices $i_0,\dots,i_k$.
Hence, $\phi_k$ is injective.

To complete the proof of part (a) we are left with showing that the map $\phi_k$
is surjective.
Let $C=\big(C_{i_0,\dots,i_k}(\lambda_0,\dots,\lambda_k)\big)_{i_0,\dots,i_k\in I}$
be an element of $\ker(\alpha_{k+1})$.
By Lemma \ref{100602:lem1}(c), there exists an element
$Q\in\tilde\Omega^k(\mc V)$
such that
\begin{equation}\label{110222:eq7}
\partial C=\delta_K Q\,.
\end{equation}
By Lemma \ref{110213:lem}, we can assume that the coefficients of $Q$ are linear in the variables $u_j^{(n)}$, i.e.
\begin{equation}\label{110222:eq8}
Q_{i_1,\dots,i_k}(\lambda_1,\dots,\lambda_k)
=
\sum_{n=0}^M
\sum_{j\in I}
P^n_{j,i_1,\dots,i_k}(\lambda_1,\dots,\lambda_k) u_j^{(n)}
\end{equation}
with $M\in\mb Z_+$ and $P^n_{j,i_1,\dots,i_k}(\lambda_1,\dots,\lambda_k)\in\mb F[\lambda_1,\dots,\lambda_k]\otimes\mc F$.
Indeed, the quasiconstant part of $Q$ is killed by the differential $\delta_K$,
while $\delta_K$ applied to the $\mc U$-part of $Q$ has zero quasiconstant part.
Note that, if $Q$ is as in \eqref{110222:eq8}, then equation \eqref{110222:eq7} becomes
\begin{equation}\label{110223:eq1}
\begin{array}{l}
\displaystyle{
\sum_{\alpha=0}^k(-1)^\alpha \sum_{j\in I} \sum_{n=0}^M
P^n_{j,i_0,\stackrel{\alpha}{\check{\dots}},i_k}
(\lambda_0,\stackrel{\alpha}{\check{\dots}},\lambda_k)
(\lambda_\alpha+\partial)^n K_{j,i_\alpha}(\lambda_\alpha)
} \\
\displaystyle{
=(\lambda_0+\dots+\lambda_k+\partial)C_{i_0,\dots,i_k}(\lambda_0,\dots,\lambda_k)
\,\in\mb F[\lambda_0,\dots,\lambda_k]\otimes\mc F\,,
}
\end{array}
\end{equation}
for all choices of indices $i_0,\dots,i_k\in I$.

In order to prove surjectivity of $\phi_k$,
we will show that we can choose $Q$ as in \eqref{110222:eq8} with $M=0$
and $P^0_{j,i_1,\dots,i_k}(\lambda_1,\dots,\lambda_k)$ of degree
at most $N-1$ in each variable $\lambda_1,\dots,\lambda_k$,
such that equation \eqref{110223:eq1} holds with the given $C\in\ker(\alpha_{k+1})$.
In this case, by the definition \eqref{110226:eq1} of the map $\phi_k$,
$$
P=\big(P^0_{i_0,i_1,\dots,i_k}(\lambda_1,\dots,\lambda_k)\big)_{i_0,i_1,\dots i_k\in I}
$$
is an element of $\Sigma_k(K^*)$ such that $\phi_k(P)=C$.

We will achieve the desired form of $Q$ in three steps:
first we reduce to the case when
all polynomials $P^n_{j,i_1,\dots,i_k}(\lambda_1,\dots,\lambda_k)$
have degree less than or equal to $M+N$
in each variable $\lambda_\alpha$;
then we reduce to the case when $M=0$;
finally we reduce to the case when the polynomials
$P^0_{j,i_1,\dots,i_k}(\lambda_1,\dots,\lambda_k)$
have degree at most $N-1$ in each variable.
Note that in the case $k=0$ we only need to do the second step.

Let $k\geq1$ and let
$d$ be the maximal degree in one of the variables $\lambda_1,\dots,\lambda_k$
of all the polynomials $P^n_{j,i_1,\dots,i_k}(\lambda_1,\dots,\lambda_k)$
for $n=0,\dots,M$ and $j,i_1,\dots,i_k\in I$,
and assume that $d>M+N$.
By taking separately all terms in which some of the variables $\lambda_\alpha$ are raised to the power $d$,
we can write
\begin{equation}\label{110223:eq2}
\begin{array}{l}
\displaystyle{
P^n_{j,i_1,\dots,i_k}(\lambda_1,\dots,\lambda_k)
=
R^{n}_{j,i_1,\dots,i_k}\lambda_1^d\dots\lambda_k^d
+ \sum_{1\leq\beta\leq k} R^{n,\beta}_{j,i_1,\dots,i_k}(\lambda_\beta)\lambda_1^d\stackrel{\beta}{\check{\dots}}\lambda_k^d
}\\
\displaystyle{
+ \sum_{1\leq\beta<\gamma\leq k} R^{n,\beta,\gamma}_{j,i_1,\dots,i_k}
(\lambda_\beta,\lambda_\gamma)
\lambda_1^d\stackrel{\beta}{\check{\dots}}\,\stackrel{\gamma}{\check{\dots}}\lambda_k^d
+\dots
+R^{n}_{j,i_1,\dots,i_k}(\lambda_1,\dots,\lambda_k)
}\\
\displaystyle{
=\sum_{q=0}^k \sum_{1\leq\beta_1<\dots<\beta_q\leq k}
R^{n,\beta_1,\dots,\beta_q}_{j,i_1,\dots,i_k}(\lambda_{\beta_1},\dots,\lambda_{\beta_q})
\lambda_1^d\stackrel{\beta_1,\dots,\beta_q}{\check{\dots\dots}}\lambda_k^d
\,,}
\end{array}
\end{equation}
where $R^{n,\beta_1,\dots,\beta_q}_{j,i_1,\dots,i_k}(\lambda_{\beta_1},\dots,\lambda_{\beta_q})$
are polynomials with quasiconstant coefficients of degree strictly less than $d$ in each variable.
Then equation \eqref{110223:eq1} becomes
$$
\begin{array}{l}
\displaystyle{
\sum_{\alpha=0}^k(-1)^\alpha \sum_{j\in I} \sum_{n=0}^M
\sum_{q=0}^k
\sum_{\substack{0\leq\beta_1<\dots<\beta_q\leq k \\ (\beta_h<\alpha<\beta_{h+1})}}
\lambda_0^d\stackrel{\alpha,\beta_1,\dots,\beta_q}{\check{\dots\dots}}\lambda_k^d
} \\
\displaystyle{
\times
R^{n,\beta_1+1,\dots,\beta_h+1,\beta_{h+1},\dots\beta_q}_{j,i_0,\stackrel{\alpha}{\check{\dots}},i_k}(\lambda_{\beta_1},\dots,\lambda_{\beta_q})
(\lambda_\alpha+\partial)^n K_{j,i_\alpha}(\lambda_\alpha)
} \\
\displaystyle{
=(\lambda_0+\dots+\lambda_k+\partial)C_{i_0,\dots,i_k}(\lambda_0,\dots,\lambda_k)
\,.
}
\end{array}
$$

We can rewrite the above equation in the following equivalent form
\begin{equation}\label{110223:eq3}
\begin{array}{l}
\displaystyle{
\sum_{q=0}^k
\sum_{0\leq\beta_0<\dots<\beta_q\leq k }
\sum_{r=0}^q
(-1)^{\beta_r}
\sum_{j\in I} \sum_{n=0}^M
\lambda_0^d\stackrel{\beta_0,\dots,\beta_q}{\check{\dots\dots}}\lambda_k^d
} \\
\displaystyle{
\times R^{n,\beta_0+1,\dots,\beta_{r-1}+1,\beta_{r+1},\dots,\beta_q}_{j,i_0,\stackrel{\beta_r}{\check{\dots}},i_k}
(\lambda_{\beta_0},\stackrel{r}{\check{\dots}},\lambda_{\beta_q})
\times(\lambda_{\beta_r}+\partial)^n K_{j,i_{\beta_r}}(\lambda_{\beta_r})
} \\
\displaystyle{
=(\lambda_0+\dots+\lambda_k+\partial)C_{i_0,\dots,i_k}(\lambda_0,\dots,\lambda_k)
\,.
}
\end{array}
\end{equation}
Note that the RHS above has degree at most $N$ in each variable $\lambda_1,\dots,\lambda_k$.
Hence, by looking at the coefficient of $\lambda_1^d\dots\lambda_k^d$ in both sides of equation \eqref{110223:eq3},
we get
$$
\sum_{j\in I} \sum_{n=0}^M
R^{n}_{j,i_1,\dots,i_k}
(\lambda_0+\partial)^n K_{j,i_0}(\lambda_0)
=0\,.
$$
Since, by assumption, the leading coefficient $K_N$ of the differential operator $K(\partial)$
is an invertible matrix, one easily gets that $R^{n}_{j,i_1,\dots,i_k}=0$ for all $n=0,\dots,M$ and $j,i_1,\dots,i_k\in I$.
Hence, in the LHS of \eqref{110223:eq3} the term with $q=0$ vanishes.
Next, for $0\leq\beta_0<\beta_1\leq k$, by looking at the coefficient of
$\lambda_0^d\stackrel{\beta_0,\beta_1}{\check{\dots\dots}}\lambda_k^d$
in both sides of equation \eqref{110223:eq3}, we get
$$
\begin{array}{l}
\displaystyle{
T_{\beta_0,\beta_1}(\lambda_{\beta_0},\lambda_{\beta_1})
:=
(-1)^{\beta_0}
\sum_{j\in I} \sum_{n=0}^M
R^{n,\beta_1}_{j,i_0,\stackrel{\beta_0}{\check{\dots}},i_k}
(\lambda_{\beta_1})
(\lambda_{\beta_0}+\partial)^n K_{j,i_{\beta_0}}(\lambda_{\beta_0})
} \\
\displaystyle{
+
(-1)^{\beta_1}
\sum_{j\in I} \sum_{n=0}^M
R^{n,\beta_0+1}_{j,i_0,\stackrel{\beta_1}{\check{\dots}},i_k}
(\lambda_{\beta_0})
(\lambda_{\beta_1}+\partial)^n K_{j,i_{\beta_1}}(\lambda_{\beta_1})
=0
\,.
}
\end{array}
$$
On the other hand, the term with $q=1$ in the LHS of \eqref{110223:eq3}
is exactly
$$
\sum_{0\leq\beta_0<\beta_1\leq k}T_{\beta_0,\beta_1}(\lambda_{\beta_0},\lambda_{\beta_1})
\lambda_0^d\stackrel{\beta_0,\beta_1}{\check{\dots\dots}}\lambda_k^d\,,
$$
hence, it vanishes, and the sum over $q$ in the LHS of \eqref{110223:eq3} starts with $q=2$.
Repeating the same argument several times, we conclude that all the terms with $q\leq k-1$
in the LHS of \eqref{110223:eq3} vanish, hence the equation becomes
\begin{equation}\label{110223:eq4}
\begin{array}{c}
\displaystyle{
\sum_{\alpha=0}^k
(-1)^{\alpha}
\sum_{j\in I} \sum_{n=0}^M
R^{n,1,\dots,k}_{j,i_0,\stackrel{\alpha}{\check{\dots}},i_k}
(\lambda_0,\stackrel{\alpha}{\check{\dots}},\lambda_k)
(\lambda_\alpha+\partial)^n K_{j,i_{\alpha}}(\lambda_{\alpha})
} \\
\displaystyle{
=(\lambda_0+\dots+\lambda_k+\partial)C_{i_0,\dots,i_k}(\lambda_0,\dots,\lambda_k)
\,.
}
\end{array}
\end{equation}
Comparing equations \eqref{110223:eq1} and \eqref{110223:eq4},
we can replace the polynomials $P^n_{j,i_1,\dots,i_k}(\lambda_1,\dots,\lambda_k)$
by the polynomials $R^{n,1,\dots,k}_{j,i_1,\dots,i_k}(\lambda_1,\dots,\lambda_k)$,
which have degree strictly less than $d$.
Hence, repeating this argument several times,
we may assume that the degree of all polynomials
$P^n_{j,i_1,\dots,i_k}(\lambda_1,\dots,\lambda_k)$
is less than or equal to $M+N$, concluding the first step.

In the second step we want to reduce to the case when $M=0$.
For this, assuming $M\geq1$, we will reduce to the case when
$M$ is replaced by $M-1$.
We find an expansion of $P$ similar to the one discussed in equation \eqref{110223:eq2}.
Using the fact that $K(\partial)$ has order $N$
and its leading coefficient $K_N$ is an invertible matrix,
we can write
\begin{equation}\label{110224:eq1}
\begin{array}{c}
\displaystyle{
P^n_{j,i_1,\dots,i_k}(\lambda_1,\dots,\lambda_k)
= \sum_{q=0}^k \sum_{j_1,\dots,j_k\in I} \sum_{1\leq\beta_1<\dots<\beta_q\leq k}
Q^{n,\beta_1,\dots,\beta_q}_{j,j_1,\dots,j_k}(\lambda_{\beta_1},\dots,\lambda_{\beta_q})
}\\
\displaystyle{
\times
\delta_{j_{\beta_1},i_{\beta_1}}\dots\delta_{j_{\beta_q},i_{\beta_q}}
\big((\lambda_1+\partial)^MK_{j_1,i_1}(\lambda_1)\big)
\stackrel{\beta_1\dots\beta_q}{\check{\dots\dots}}
\big((\lambda_k+\partial)^MK_{j_k,i_k}(\lambda_k)\big)
\,,}
\end{array}
\end{equation}
where $Q^{n,\beta_1,\dots,\beta_q}_{j,j_1,\dots,j_k}(\lambda_{\beta_1},\dots,\lambda_{\beta_q})$
are polynomials with quasiconstant coefficients of degree strictly less than $M+N$
in each variable $\lambda_{\beta_1},\dots,\lambda_{\beta_q}$.
Then, equation \eqref{110223:eq1} becomes
$$
\begin{array}{l}
\displaystyle{
\sum_{\alpha=0}^k \sum_{j\in I} \sum_{n=0}^M
\sum_{q=0}^k
\!\!
\sum_{j_0,\stackrel{\alpha}{\check{\dots}},j_k\in I}
\!
\sum_{\substack{
0\leq\beta_1<\dots<\beta_q\leq k \\
(\beta_h<\alpha<\beta_{h+1})}}
\!\!\!\!\!\!\!\!\!
(-1)^\alpha
Q^{n,\beta_1+1,\dots,\beta_h+1,\beta_{h+1},\dots,\beta_q}_{j,j_0,\stackrel{\alpha}{\check{\dots}},j_k}
\!(\lambda_{\beta_1},\dots,\lambda_{\beta_q})
} \\
\displaystyle{
\vphantom{\Bigg(}
\times
\delta_{j_{\beta_1},i_{\beta_1}}\dots\delta_{j_{\beta_q},i_{\beta_q}}
\big((\lambda_0+\partial)^MK_{j_0,i_0}(\lambda_0)\big)
\stackrel{\alpha,\beta_1\dots\beta_q}{\check{\dots\dots}}
\big((\lambda_k+\partial)^MK_{j_k,i_k}(\lambda_k)\big)
} \\
\displaystyle{
\times
\big((\lambda_\alpha+\partial)^n K_{j,i_\alpha}(\lambda_\alpha)\big)
=(\lambda_0+\dots+\lambda_k+\partial)C_{i_0,\dots,i_k}(\lambda_0,\dots,\lambda_k)
\,,
}
\end{array}
$$
or, rearranging terms appropriately,
we can rewrite it in the following equivalent form
\begin{equation}\label{110224:eq2}
\begin{array}{l}
\displaystyle{
\sum_{n=0}^M
\sum_{q=0}^k \sum_{r=0}^q
\sum_{0\leq\beta_0<\dots<\beta_q\leq k}
\!\!\!\!
(-1)^{\beta_r}
\!\!\!\!\!\!
\sum_{j_0,\dots,j_k\in I}
Q^{n,\beta_0+1,\dots,\beta_{r-1}+1,\beta_{r+1},\dots,\beta_q}_{j_{\beta_r},j_0,\stackrel{\beta_r}{\check{\dots}},j_k}
(\lambda_{\beta_0},\stackrel{r}{\check{\dots}},\lambda_{\beta_q})
} \\
\displaystyle{
\vphantom{\Bigg(}
\times
\delta_{j_{\beta_0},i_{\beta_0}}\stackrel{r}{\check{\dots}}\delta_{j_{\beta_q},i_{\beta_q}}
\big((\lambda_0+\partial)^MK_{j_0,i_0}(\lambda_0)\big)
\stackrel{\beta_0\dots\beta_q}{\check{\dots\dots}}
\big((\lambda_k+\partial)^MK_{j_k,i_k}(\lambda_k)\big)
} \\
\displaystyle{
\times
\big((\lambda_{\beta_r}+\partial)^n K_{j_{\beta_r},i_{\beta_r}}(\lambda_{\beta_r})\big)
=(\lambda_0+\dots+\lambda_k+\partial)C_{i_0,\dots,i_k}(\lambda_0,\dots,\lambda_k)
\,.
}
\end{array}
\end{equation}
Note that the RHS has degree at most $N$ in each variable $\lambda_1,\dots,\lambda_k$.

By looking at the coefficient of $\lambda_0^{M+N}\dots\lambda_k^{M+N}$
in both sides of equation \eqref{110224:eq2}, we get, since $M\geq1$,
$$
\sum_{\alpha=0}^k
(-1)^{\alpha}
\sum_{j_0,\dots,j_k\in I}
Q^{M}_{j_{\alpha},j_0,\stackrel{\alpha}{\check{\dots}},j_k}
(K_N)_{j_0,i_0}
\dots
(K_N)_{j_k,i_k}
=0
\,.
$$
Since $K_N$ is invertible, we deduce that
$$
T_{i_0,\dots,i_k}
:=
\sum_{\alpha=0}^k
(-1)^{\alpha}
Q^{M}_{i_{\alpha},i_0,\stackrel{\alpha}{\check{\dots}},i_k}
=0
\,.
$$
On the other hand, the term with $q=0$ and $n=M$ in the LHS of \eqref{110224:eq2}
is equal to
$$
\sum_{j_0,\dots,j_k\in I}
T_{j_0,\dots,j_k}
\big((\lambda_0+\partial)^MK_{j_0,i_0}(\lambda_0)\big)
\dots
\big((\lambda_k+\partial)^MK_{j_k,i_k}(\lambda_k)\big)
\,,
$$
hence it vanishes.

Next, for $k\geq1$ fix $\alpha\in\{1,\dots,k\}$
and consider the coefficient of $\lambda_0^{M+N}\stackrel{\alpha}{\check{\dots}}\lambda_k^{M+N}$
in both sides of equation \eqref{110224:eq2}. In the RHS we get $0$ since $M\geq1$,
while in the LHS there are only two contributions, one coming from $q=0$ and $n\leq M-1$,
and the other coming from $q=1$ and $n=M$.
We thus get
$$
\begin{array}{l}
\displaystyle{
\sum_{n=0}^{M-1}
(-1)^{\alpha}
\sum_{j_0,\dots,j_k\in I}
Q^{n}_{j_\alpha,j_0,\stackrel{\alpha}{\check{\dots}},j_k}
(K_N)_{j_0,i_0}\stackrel{\alpha}{\check{\dots}} (K_N)_{j_0,i_0}
\big((\lambda_\alpha+\partial)^n K_{j_\alpha,i_\alpha}(\lambda_\alpha)\big)
} \\
\displaystyle{
+ \sum_{\substack{\beta=0 \\ (\beta<\alpha)}}^k
(-1)^{\beta}
\sum_{j_0,\dots,j_k\in I}
Q^{M,\alpha}_{j_{\beta},j_0,\stackrel{\beta}{\check{\dots}},j_k}(\lambda_{\alpha})
\delta_{j_{\alpha},i_{\alpha}}
(K_N)_{j_0,i_0}\stackrel{\alpha}{\check{\dots}} (K_N)_{j_k,i_k}
} \\
\displaystyle{
+ \sum_{\substack{\beta=0 \\ (\beta>\alpha)}}^k
(-1)^{\beta}
\sum_{j_0,\dots,j_k\in I}
Q^{M,\alpha+1}_{j_{\beta},j_0,\stackrel{\beta}{\check{\dots}},j_k}(\lambda_{\alpha})
\delta_{j_{\alpha},i_{\alpha}}
(K_N)_{j_0,i_0}\stackrel{\alpha}{\check{\dots}} (K_N)_{j_k,i_k}
=
0
\,.
}
\end{array}
$$
Again, since $K_N$ is invertible, we deduce that
$$
\begin{array}{l}
\displaystyle{
T^{\alpha}_{i_0,\dots,i_k}(\lambda_\alpha)
:=
\sum_{\substack{\beta=0 \\ (\beta<\alpha)}}^k
(-1)^{\beta}
Q^{M,\alpha}_{i_{\beta},i_0,\stackrel{\beta}{\check{\dots}},i_k}(\lambda_{\alpha})
+\sum_{\substack{\beta=0 \\ (\beta>\alpha)}}^k
(-1)^{\beta}
Q^{M,\alpha+1}_{i_{\beta},i_0,\stackrel{\beta}{\check{\dots}},i_k}(\lambda_{\alpha})
} \\
\displaystyle{
+
\sum_{n=0}^{M-1}
\sum_{j\in I}
(-1)^{\alpha}
Q^{n}_{j,i_0,\stackrel{\alpha}{\check{\dots}},i_k}
\big((\lambda_\alpha+\partial)^n K_{j,i_\alpha}(\lambda_\alpha)\big)
=
0
\,.
}
\end{array}
$$
On the other hand, the sum of the term with $q=1$ and $n=M$ together with the terms with $q=0$ and $n\leq M-1$
in the LHS of \eqref{110224:eq2} is equal to
$$
\begin{array}{l}
\displaystyle{
\sum_{j_0,\dots,j_k\in I}
\sum_{\alpha=0}^k
T^{\alpha}_{j_0,\dots,j_k}(\lambda_\alpha)
\delta_{j_\alpha,i_\alpha}
} \\
\displaystyle{
\times \big((\lambda_0+\partial)^MK_{j_0,i_0}(\lambda_0)\big)
\stackrel{\alpha}{\check{\dots}}
\big((\lambda_k+\partial)^MK_{j_k,i_k}(\lambda_k)\big)
\,,
}
\end{array}
$$
hence it vanishes.
Repeating the same argument several times,
at each step we prove that the sum of the term
with $q=q_0+1$ and $n=M$ together with the terms with $q=q_0$ and $n\leq M-1$
vanishes.
As a result, in the LHS of equation \eqref{110224:eq2}
only the terms with $q=k$ and $n<M$ survive.
Hence, equation \eqref{110224:eq2} becomes
$$
\begin{array}{c}
\displaystyle{
\sum_{n=0}^{M-1}
\sum_{\alpha=0}^k
(-1)^{\alpha}
\sum_{j\in I}
Q^{n,1,\dots,k}_{j,i_0,\stackrel{\alpha}{\check{\dots}},i_k}
(\lambda_0,\stackrel{\alpha}{\check{\dots}},\lambda_k)
(\lambda_\alpha+\partial)^n K_{j,i_\alpha}(\lambda_\alpha)
} \\
\displaystyle{
\vphantom{\Bigg(}
=(\lambda_0+\dots+\lambda_k+\partial)C_{i_0,\dots,i_k}(\lambda_0,\dots,\lambda_k)
\,.
}
\end{array}
$$
This is the same as \eqref{110223:eq1} with the polynomials
$P^n_{j,i_1,\dots,i_k}(\lambda_1,\dots,\lambda_k)$
replaced by $0$ for $n=M$, and by the polynomials
$Q^{n,1,\dots,k}_{j,i_1,\dots,i_k}(\lambda_1,\dots,\lambda_k)$ for $n<M$.
This completes the second step.

So far, we showed that we can choose $Q$ in \eqref{110222:eq8} of the form
\begin{equation}\label{110225:eq1}
Q_{i_1,\dots,i_k}(\lambda_1,\dots,\lambda_k)
=
\sum_{j\in I}
P_{j,i_1,\dots,i_k}(\lambda_1,\dots,\lambda_k) u_j\,,
\end{equation}
where $P_{j,i_1,\dots,i_k}(\lambda_1,\dots,\lambda_k)$ are polynomials
with quasiconstant coefficients of degree at most $N$ in each variable.
In this case equation \eqref{110223:eq1} reads
\begin{equation}\label{110225:eq2}
\begin{array}{l}
\displaystyle{
\sum_{\alpha=0}^k(-1)^\alpha \sum_{j\in I}
P_{j,i_0,\stackrel{\alpha}{\check{\dots}},i_k}
(\lambda_0,\stackrel{\alpha}{\check{\dots}},\lambda_k)
K_{j,i_\alpha}(\lambda_\alpha)
} \\
\displaystyle{
=(\lambda_0+\dots+\lambda_k+\partial)C_{i_0,\dots,i_k}(\lambda_0,\dots,\lambda_k)
\,\in\mb F[\lambda_0,\dots,\lambda_k]\otimes\mc F\,.
}
\end{array}
\end{equation}
To complete the proof of part (a), we are left with showing that we can choose the polynomials
$P_{j,i_1,\dots,i_k}(\lambda_1,\dots,\lambda_k)$ to be of degree at most $N-1$ in each
variable $\lambda_\alpha$,
such that equation \eqref{110225:eq2} still holds.
As before, we expand the polynomials $P_{j,i_1,\dots,i_k}(\lambda_1,\dots,\lambda_k)$ as in \eqref{110224:eq1}
\begin{equation}\label{110225:eq3}
\begin{array}{c}
\displaystyle{
P_{j,i_1,\dots,i_k}(\lambda_1,\dots,\lambda_k)
= \sum_{q=0}^k \sum_{j_1,\dots,j_k\in I} \sum_{1\leq\beta_1<\dots<\beta_q\leq k}
Q^{\beta_1,\dots,\beta_q}_{j,j_1,\dots,j_k}(\lambda_{\beta_1},\dots,\lambda_{\beta_q})
}\\
\displaystyle{
\times
\delta_{j_{\beta_1},i_{\beta_1}}\dots\delta_{j_{\beta_q},i_{\beta_q}}
K_{j_1,i_1}(\lambda_1)
\stackrel{\beta_1\dots\beta_q}{\check{\dots\dots}}
K_{j_k,i_k}(\lambda_k)
\,,}
\end{array}
\end{equation}
where the polynomials $Q^{\beta_1,\dots,\beta_q}_{j,j_1,\dots,j_k}(\lambda_{\beta_1},\dots,\lambda_{\beta_q})$
have degree strictly less than $N$.
Then, equation \eqref{110225:eq2} reads
\begin{equation}\label{110225:eq4}
\begin{array}{l}
\displaystyle{
\sum_{q=0}^k \sum_{j_0,\dots,j_k\in I}
\sum_{\alpha=0}^k
\sum_{\substack{
0\leq\beta_1<\dots<\beta_q\leq k \\
(\beta_h<\alpha<\beta_{h+1})}}
(-1)^\alpha
Q^{\beta_1+1,\dots,\beta_h+1,\beta_{h+1},\dots,\beta_q}_{j_\alpha,j_0,\stackrel{\alpha}{\check{\dots}},j_k}
(\lambda_{\beta_1},\dots,\lambda_{\beta_q})
} \\
\displaystyle{
\vphantom{\Bigg(}
\times
\delta_{j_{\beta_1},i_{\beta_1}}\dots\delta_{j_{\beta_q},i_{\beta_q}}
K_{j_0,i_0}(\lambda_0)
\stackrel{\beta_1\dots\beta_q}{\check{\dots\dots}}
K_{j_k,i_k}(\lambda_k)
} \\
\displaystyle{
\vphantom{\Bigg(}
=(\lambda_0+\dots+\lambda_k+\partial)C_{i_0,\dots,i_k}(\lambda_0,\dots,\lambda_k)
\,.
}
\end{array}
\end{equation}
We then proceed as in step two.
Note that, since by assumption the polynomials
$C_{i_0,\dots,i_k}(\lambda_0,\dots,\lambda_k)$ have degree at most $N-1$ in each variable,
in the right hand side of \eqref{110225:eq4} in each monomial
at most one variable $\lambda_\alpha$ appears in degree $N$.
Therefore, for $k\geq1$, comparing the coefficient of $\lambda_0^N\dots\lambda_k^N$ in both sides of \eqref{110225:eq4},
and using the fact that $K_N$ is invertible, we get
$$
T_{i_0,\dots,i_k}:=
\sum_{\alpha=0}^k
(-1)^\alpha
Q_{i_\alpha,i_0,\stackrel{\alpha}{\check{\dots}},i_k}=0\,,
$$
for every choice of indices $i_0,\dots,i_k\in I$.
But the term with $q=0$ in the LHS of \eqref{110225:eq4} is
$$
\sum_{j_0,\dots,j_k\in I}
T_{j_0,\dots,j_k} K_{j_0,i_0}(\lambda_0)\dots K_{j_k,i_k}(\lambda_k)\,,
$$
hence it vanishes.
Similarly, for $k\geq2$, given $\beta\in\{0,\dots,k\}$ and comparing the coefficient
of $\lambda_0^N\stackrel{\beta}{\check{\dots}}\lambda_k^N$ in both sides of \eqref{110225:eq4},
we get, again using the fact that $K_N$ is invertible,
$$
T^\beta_{i_0,\dots,i_k}(\lambda_\beta):=
\sum_{\alpha<\beta}
(-1)^\alpha
Q^{\beta}_{i_\alpha,i_0,\stackrel{\alpha}{\check{\dots}},i_k}(\lambda_\beta)
+\sum_{\alpha>\beta}
(-1)^\alpha
Q^{\beta+1}_{i_\alpha,i_0,\stackrel{\alpha}{\check{\dots}},i_k}(\lambda_\beta)
=0\,,
$$
for every choices of indices $i_0,\dots,i_k\in I$.
and the term with $q=1$ in the LHS of \eqref{110225:eq4} is exactly
$$
\sum_{\beta=0}^k \sum_{j_0,\dots,j_k\in I}
T^\beta_{j_0,\dots,j_k} \delta_{j_\beta,i_\beta} K_{j_0,i_0}(\lambda_0)\stackrel{\beta}{\check{\dots}} K_{j_k,i_k}(\lambda_k)\,,
$$
hence it vanishes.
Repeating the same argument several times, we prove that all the terms in the LHS of \eqref{110225:eq4}
with $q\leq k-1$ vanish.
Note that the same argument always works for $q\leq k-1$ since
the monomial $\lambda_0^N\stackrel{\beta_1\dots\beta_q}{\check{\dots\dots}}\lambda_k^N$
contains at least two variables raised to the power $N$.
In conclusion, equation \eqref{110225:eq4} is equivalent to
the same equation where in the LHS we only keep the term with $q=k$:
$$
\begin{array}{l}
\displaystyle{
\sum_{j\in I} \sum_{\alpha=0}^k
(-1)^\alpha
Q^{1,\dots,k}_{j,i_0,\stackrel{\alpha}{\check{\dots}},i_k}
(\lambda_0,\stackrel{\alpha}{\check{\dots}},\lambda_k)
K_{j,i_\alpha}(\lambda_\alpha)
} \\
\displaystyle{
\vphantom{\Bigg(}
=(\lambda_0+\dots+\lambda_k+\partial)C_{i_0,\dots,i_k}(\lambda_0,\dots,\lambda_k)
\,.
}
\end{array}
$$
In other words, we can replace the polynomials $P_{j,i_1,\dots,i_k}(\lambda_1,\dots,\lambda_k)$
defined in \eqref{110225:eq1}
by the polynomials $Q^{1,\dots,k}_{j,i_1,\dots,i_k}(\lambda_1,\dots,\lambda_k)$,
which have degree at most $N-1$ in each variable,
without changing the RHS of equation \eqref{110225:eq2}.
This completes the proof of part (a).

The dimension formula \eqref{110226:eq4}
follows from the first assertion in part (b)
and Theorem \ref{110127:conj2}.

Note that, if $P\in\Sigma_k(K^*)$, then $\delta_K$ applied to the array \eqref{110226:eq2}
is in the image of $\partial$, hence the array \eqref{110226:eq2} defines a cohomology class
in $\Omega^k(\mc V)$.
Therefore, $\chi_k$ is a well-defined map: $\Sigma_k(K^*)\to H^k(\Omega^\bullet(\mc V),\delta_K)$.

In order to complete the proof of both parts (b) and (c), we only need to check
that the map $\chi_k:\,\Sigma_k(K^*)\to H^k(\Omega^\bullet(\mc V),\delta_K)$
given by \eqref{110226:eq2} satisfies equation \eqref{110226:eq3}.
Indeed, the map $\phi_k:\,\Sigma_k(K^*)\to\ker(\alpha_{k+1})$ is an isomorphism by part (a),
and the map $\gamma_k:\,H^k(\Omega^\bullet(\mc V),\delta_K)\to\ker(\alpha_{k+1})$
is an isomorphism by Theorem \ref{100602:lem2} and the long exact sequence \eqref{100521:eq2}.
Hence, equation \eqref{110226:eq3} implies that the map $\chi_k$ must be an isomorphism as well.
Also, the last assertion in part (c) is clear since, by \eqref{110226:eq3},
we have $\gamma_k^{-1}=\chi_k\circ\phi_k^{-1}$.

Before proving equation \eqref{110226:eq3}, let us recall the usual
homological algebra definition of the boundary map $\gamma_k$
in the long exact sequence \eqref{100602:eq1}.
For $[\omega]\in H^k(\Omega^\bullet(\mc V),\delta_K)$,
we have
$$
\gamma_k([\omega])=\big[\alpha^{-1}\delta_K\beta^{-1}(\omega)\big]
\,\in\,H^{k+1}(\partial \tilde\Omega^\bullet(\mc V),\delta_K)\,.
$$
In other words, let $[\omega]\in H^k(\Omega^\bullet(\mc V),\delta_K)$
be the class of a closed element $\omega\in\Omega^k(\mc V)$.
Since $\beta:\,\tilde\Omega^k(\mc V)\to\Omega^k(\mc V)$ is surjective,
there exists $\eta\in\tilde\Omega^k(\mc V)$ such that $\beta(\eta)=\omega$.
Since, by assumption, $\delta_K\omega=0$, we have that $\delta_K(\eta)\in\ker(\beta)=\im(\alpha)$.
Hence, there exists $\zeta\in\partial \tilde\Omega^{k+1}(\mc V)$ such that
$\delta_K(\eta)=\alpha(\zeta)$.
Since $\alpha$ is injective, $\delta_K(\zeta)=0$,
and we let $\gamma_k([\omega])=[\zeta]\in H^{k+1}(\partial\tilde\Omega^\bullet(\mc V),\delta_K)$.
Using the identification of the exact sequence \eqref{100602:eq1}
with \eqref{100521:eq2},
the construction of the map $\gamma_k$ can be described as follows.
Consider a skewsymmetric array
$P=\big(P_{i_1,\dots,i_k}(\lambda_1,\dots,\lambda_k)\big)_{i_1,\dots,i_k\in I}\in\tilde\Omega^k(\mc V)$
such that, when viewed as an element in $\Omega^k(\mc V)$ (i.e. when we view its entries
in $\mb F_-[\lambda_1,\dots,\lambda_k]\otimes_{\mb F[\partial]}\mc V$), it is closed: $\delta_KP=0$
in $\Omega^{k+1}(\mc V)$.
By Theorem \ref{100520:th}(a) there exists a unique skewsymmetric array
$C=\big(C_{i_0,\dots,i_k}(\lambda_0,\dots,\lambda_k)\big)_{i_0,\dots,i_k\in I}\in\tilde\Omega^{k+1}_{0,0}$,
where $C_{i_0,\dots,i_k}(\lambda_0,\dots,\lambda_k)$ are polynomials with quasiconstant coefficients
of degree at most $N-1$ in each variable $\lambda_i$,
such that
$\delta_KP=\partial(C+\delta_KQ)$ for some $Q\in\tilde\Omega^k(\mc V)$.
Then $\gamma_k([P])=C$.

Given $P\in\Sigma_k(K^*)$, the array $\phi_k(P)=C\in\ker(\alpha_{k+1})$
is defined by equation \eqref{110226:eq1}.
On the other hand, a representative of the cohomology class $\chi_k(P)\in H^k(\Omega^\bullet(\mc V),\delta_K)$
is the array $Q\in\tilde\Omega^k(\mc V)$ given by \eqref{110226:eq2},
and, by the above observations, the array $\gamma_k(\chi_k(P))=C_1\in\ker(\alpha_{k+1})$
is defined by the equation $\delta_K Q=\partial C_1$.
Since this equation for $C_1$ coincides with equation \eqref{110226:eq1} for $C$,
we conclude that $C_1=C$, proving formula \eqref{110226:eq3}.
\end{proof}

\subsection{Explicit description of $H^0(\Omega^\bullet(\mc V),\delta_K)$
and $H^1(\Omega^\bullet(\mc V),\delta_K)$.}

Assume that $\mc V$ is a normal algebra of differential functions and $\mc F\subset\mc V$
is a linearly closed differential field, so that Theorem \ref{110213:thm} holds.
It is easy to see from the definition \eqref{100519:eq1} of the action
of $\delta_K$ on $\Omega^0=\mc V/\partial\mc V$ that
$$
H^0(\Omega^\bullet(\mc V),\delta_K)
=
\Big\{\tint f\in\mc V/\partial\mc V\,\Big|\,K^*(\partial)\frac{\delta f}{\delta u}=0\Big\}\,.
$$
The space $\Sigma_0(K^*)$ described before Theorem \ref{110213:thm} is
$$
\Sigma_0(K^*)
=
\Big\{P\in\mc F^\ell\,\Big|\,K^*(\partial)P=0\Big\}\,,
$$
and the isomorphism $\chi_0:\,\Sigma_0(K^*)\to H^0(\Omega^\bullet(\mc V),\delta_K)$,
defined in Theorem \ref{110213:thm}(b), is given by
$$
\chi_0(P)=\int\sum_{j\in I}P_ju_j\,.
$$
It is immediate to check that the variational derivative of $\tint\sum_j P_ju_j$ is $P$,
hence, if $P$ lies in $\Sigma_0(K^*)$, then $\chi_0(P)$ lies in $H^0(\Omega^\bullet(\mc V),\delta_K)$.

Recall from Section \ref{sec:8.2} that $\Omega^1(\mc V)$ is naturally identified with
$\mc V^{\oplus\ell}$.
Under this identification, the space of exact elements in $\Omega^1(\mc V)$ is
$$
B^1(\Omega^\bullet(\mc V),\delta_K)
=
\Big\{K^*(\partial)\frac{\delta f}{\delta u}
\,\Big|\,\tint f\in\mc V/\partial\mc V\Big\}\,.
$$
Moreover, it is not hard to check using the definition \eqref{100519:eq1}
of $\delta_K$, that the space of closed elements in $\Omega^1(\mc V)$ is
$$
Z^1(\Omega^\bullet(\mc V),\delta_K)
=
\Big\{F\in\mc V^\ell
\,\Big|\,
D_F(\partial)\circ K(\partial)=K^*(\partial)\circ D_F^*(\partial)
\Big\}\,,
$$
where $D_F(\partial)$ is the Frechet derivative \eqref{frechet}
and $D^*_F(\partial)$ its adjoint matrix differential operator.
On the other hand, it is easy to see that the space $\Sigma_1(K^*)$
consists of matrix differential operators
$P=\big(P_{ij}(\partial)\big)_{i,j\in I}$
with quasiconstant coefficients and of order at most $N-1$,
solving the equation
\begin{equation}\label{110226:eq5}
K^*(\partial)\circ P(\partial)=P^*(\partial)\circ K(\partial)\,.
\end{equation}
The isomorphism $\chi_1:\,\Sigma_1(K^*)\to H^1(\Omega^\bullet(\mc V),\delta_K)$
defined in Theorem \ref{110213:thm}(b) is given by $\chi_1(P)=F+\delta_K(\Omega^0(\mc V))$, where
\begin{equation}\label{110226:eq6}
F=
\big(\sum_{j\in I}P^*_{ij}(\partial)u_j\big)_{i\in I}\in\mc V^\ell\,.
\end{equation}
It is not hard to check that, if $F$ is as in \eqref{110226:eq6},
then its Frechet derivative is
$$
D_F(\partial)=P^*(\partial)\,,
$$
hence, if $P$ satisfies equation \eqref{110226:eq5}, then $F$ lies in $Z^1(\Omega^1(\mc V),\delta_K)$.

\begin{remark}
Recall that, if $H$ and $K$ are compatible Hamiltonian operators,
the Lenard scheme is the following recurrent relation:
$$
H(\partial)\frac{\delta h_{n}}{\delta u}
=
K(\partial)\frac{\delta h_{n+1}}{\delta u}
\,,
$$
or, equivalently,
$$
[H,\tint h_n]=[K,\tint h_{n+1}]\,,
$$
in the Lie superalgebra $W^{\var}(\Pi\mc V)\simeq\Omega^\bullet(\mc V)$.
The Hamiltonian functions $\tint h_n$ are constructed by induction on $n\in\mb Z_+$.
In fact, as explained in the introduction (see equation \eqref{110320:eq1}),
assuming that we have constructed $\tint h_j,\,j=0,\dots,n-1$
satisfying the Lenard recurrence formula,
then $[H,\tint h_{n-1}]$ is a closed element of $(\Omega^1(\mc V),\delta_K)$.
Hence, by equation \eqref{110226:eq6},
there exist a Hamiltonian function $\tint h_n\in\mc V/\partial\mc V$
and a unique $P\in\Sigma_1(K^*)$, i.e. a matrix differential operator $P=\big(P_{ij}(\partial)\big)_{i,j\in I}$
of order at most $N-1$ with quasiconstant coefficients
solving \eqref{110226:eq5}, such that the following equation holds in $\mc V^\ell$:
\begin{equation}\label{110526:eq1}
[H,\tint h_{n-1}]=[K,\tint h_n]+\big(\sum_{j\in I}P^*_{ij}(\partial)u_j\big)_{i\in I}\,.
\end{equation}
In order to complete the $n$-th step of the Lenard scheme, we have to show that $P=0$.
For this, the following observations may be used.

First note that, since $[H,K]=0$, $\ad H$ induces a well defined linear map
$H^k(\Omega^\bullet(\mc V),\delta_K)\to H^{k+1}(\Omega^\bullet(\mc V),\delta_K)$,
hence, thanks to the isomorphism $\chi_k:\Sigma_k(K^*)\to H^k(\Omega^\bullet(\mc V),\delta_K)$
defined in Theorem \ref{110213:thm}, we get an induced linear map
$\alpha_k^H:\,\Sigma_k(K^*)\to\Sigma_{k+1}(K^*)$.
On the other hand, applying $\ad H$ to both sides of equation \eqref{110526:eq1},
we get that
$(\ad H)\big(\sum_{j\in I}P^*_{ij}(\partial)u_j\big)_{i\in I}$ is an exact element of $(\Omega^2(\mc V),\delta_K)$,
or, equivalently, $\alpha_1^H(P)=0$.
Thus, in order to apply the Lenard scheme at the $n$-th step, it suffices to show that
$\ker(\alpha_1^H)=0$.

Recalling formula \eqref{100503:eq9} for the action of $\ad H$ on $\mc V^\ell=\Omega^1(\mc V)$,
it is not hard to show that the condition that $\alpha_1^H$ is injective translates to the condition
that, if
$$
X_{P^*(\partial)u}(H)-H(\partial)\circ P(\partial)-P^*(\partial)\circ H(\partial)
=
-K(\partial)\circ D_F^*(\partial)-D_F(\partial)\circ K(\partial)\,,
$$
for some $P\in\Sigma_1(K^*)$ and $F\in\mc V^\ell$, then $P=0$.
\end{remark}


\appendix
\numberwithin{equation}{subsection}
\numberwithin{theorem}{subsection}

\section{Systems of linear differential equations and (poly)differential operators}

In this Appendix we prove some facts about matrix differential and polydifferential operators needed
in the computation of the variational Poisson cohomology (cf. Section \ref{sec:11.2}).
In order to establish these facts, we use the theory of systems of linear differential equations
in several unknowns.
This theory has been developed by a number of authors, see \cite{Ler}, \cite{Vol}, \cite{Huf},
\cite{SK}, \cite{Miy}.
Our exposition (which we developed before becoming aware of the above references)
is given in the spirit of differential algebra, as the rest of the paper.

\subsection{Lemmas on differential operators}\label{app:1}

\noindent
Let $\mc M$ be a unital associative (not necessarily commutative) algebra, with a derivation $\partial$.
Consider the algebra of differential operators $\mc M[\partial]$.
Its elements are expressions of the form
\begin{equation}\label{101218:eq1}
P(\partial)=\sum_{n=0}^Na_n\partial^n\quad,\qquad a_n\in\mc M\,,
\end{equation}
which are multiplied according to the rule $\partial\circ a=a\partial+a'$.
If $a_N\neq0$, then we say that $P(\partial)$ has \emph{order} $\ord(P)=N$
and we call $a_N\in\mc M$ its \emph{leading coefficient}.
\begin{lemma}\label{101218:lem2}
If the differential operator
\begin{equation}\label{101218:eq2}
\sum_{m=0}^M \partial^{m+1}\circ a_m\partial^m
+\sum_{n=0}^N \partial^{n}\circ b_n\partial^n \in\mc M[\partial]
\end{equation}
is zero, then all the elements $a_m$ and $b_n$ are zero.
Hence, in a differential operator of the form \eqref{101218:eq2}, the elements $a_m$ and $b_n$
are uniquely determined.
\end{lemma}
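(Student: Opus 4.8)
The plan is to exploit the uniqueness of the standard normal form in the algebra $\mc M[\partial]$: every differential operator is written uniquely as $\sum_n c_n\partial^n$ with $c_n\in\mc M$. This uniqueness is exactly what makes the notions of order and leading coefficient, introduced just above the lemma, well defined, so I may take it for granted. The strategy is then to bring each of the two families of summands in \eqref{101218:eq2} into this normal form, read off its order and leading coefficient, and observe that no cancellation between the summands is possible.

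First I would record the commutation expansion $\partial^k\circ a=\sum_{j=0}^{k}\binom{k}{j}a^{(j)}\partial^{k-j}$, obtained by iterating the rule $\partial\circ a=a\partial+a'$. Applying this, the summand $\partial^{m+1}\circ a_m\partial^m$ equals $a_m\partial^{2m+1}+(\text{terms of order}<2m+1)$, so it has order $2m+1$ and leading coefficient $a_m$ whenever $a_m\neq0$; likewise $\partial^n\circ b_n\partial^n=b_n\partial^{2n}+(\text{lower order})$ has order $2n$ and leading coefficient $b_n$ whenever $b_n\neq0$. In particular each individual summand is zero if and only if its displayed coefficient ($a_m$, respectively $b_n$) is zero.

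The crucial observation, which is really the only nontrivial point, is that the orders arising from the first family, namely $1,3,\dots,2M+1$, are all odd, while those from the second family, $0,2,\dots,2N$, are all even; hence the orders of all the summands are pairwise distinct. Consequently, if some summand were nonzero, then among the nonzero summands the one of maximal order would contribute its nonzero leading coefficient to the coefficient of the corresponding power of $\partial$ in the normal form of the whole operator, and no other summand could cancel it. By uniqueness of the normal form this would force the operator to be nonzero, contradicting the hypothesis. Therefore every summand vanishes, i.e. $a_m=0$ for all $m$ and $b_n=0$ for all $n$, which is the first assertion. The uniqueness statement then follows at once by applying this to the difference of two representations of a given operator in the form \eqref{101218:eq2}. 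I expect no serious obstacle beyond cleanly setting up the parity-of-orders argument; the bookkeeping in the expansion is routine.
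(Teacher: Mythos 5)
Your proposal is correct and follows essentially the same route as the paper: both rest on the observation that the summands $\partial^{m+1}\circ a_m\partial^m$ have odd orders $2m+1$ while the summands $\partial^{n}\circ b_n\partial^n$ have even orders $2n$, so the nonzero summand of maximal order would contribute an uncancellable leading coefficient, contradicting the vanishing of the operator. Your version is in fact slightly cleaner, since by selecting the maximal-order nonzero summand you avoid the paper's implicit reduction to the case where the top coefficients $a_M$ and $b_N$ are themselves nonzero.
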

\begin{proof}
In the contrary case, two things can happen: either
$a_M\neq0$ and $2M+1>2N$, or $b_N\neq0$ and $2N>2M+1$.
In the first case, the operator \eqref{101218:eq2} has order $2M+1$,
and the leading coefficient is $a_M$, a contradiction.
Similarly in the second case.
\end{proof}
\begin{lemma}\label{101218:lem1}
Let $p,q\in\mb Z_+$ and $a\in\mc M$. Then
\begin{enumerate}[(a)]
\item
for $p > q\in\mb Z_+$
$$
\begin{array}{c}
\displaystyle{
\partial^p\circ a\partial^q
=
\sum_{m=q}^{[(p+q-1)/2]} \binom{p-m-1}{m-q} \partial^{m+1}\circ a^{(p+q-2m-1)}\partial^m
} \\
\displaystyle{
+\sum_{m=q+1}^{[(p+q)/2]} \binom{p-m-1}{m-q-1} \partial^{m}\circ a^{(p+q-2m)}\partial^m\,;
}
\end{array}
$$
\item
for $p < q\in\mb Z_+$
$$
\begin{array}{c}
\displaystyle{
\partial^p\circ a\partial^q
=
\sum_{m=p}^{[(p+q-1)/2]}\gamma^{p,q}_m \partial^{m+1}\circ a^{(p+q-2m-1)}\partial^m
} \\
\displaystyle{
+\sum_{m=p}^{[(p+q)/2]}\delta^{p,q}_m \partial^{m}\circ a^{(p+q-2m)}\partial^m\,,
}
\end{array}
$$
where $\gamma^{p,q}_m$ and $\delta^{p,q}_m$ are integers.
\end{enumerate}
\end{lemma}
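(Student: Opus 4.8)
The plan is to prove both formulas by systematically pushing the single ``hole'' in $\partial^p\circ a\partial^q$ --- the coefficient $a$ wedged between two powers of $\partial$ --- toward the symmetric positions $\partial^{m+1}\circ(\cdot)\partial^m$ and $\partial^m\circ(\cdot)\partial^m$, using only the defining relation $\partial\circ c=c\partial+c'$. For part (a), where $p>q$, I would use the left reduction
\begin{equation*}
\partial^r\circ c\partial^s=\partial^{r-1}\circ c\partial^{s+1}+\partial^{r-1}\circ c'\partial^s,
\end{equation*}
obtained by applying $\partial\circ c=c\partial+c'$ once to the innermost factor. This replaces a term indexed by $(r,s)$ with coefficient $c=a^{(d)}$ by a ``move'' term at $(r-1,s+1)$ with the same coefficient and an ``absorb'' term at $(r-1,s)$ with coefficient $a^{(d+1)}$. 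I would repeatedly reduce every term whose left-minus-right exponent is $\geq 2$, stopping exactly at terms with $r-s\in\{0,1\}$, which are precisely the admissible normal-form monomials. Since that difference strictly decreases (by $1$ or $2$) at each step starting from $p-q\geq1$, the process terminates, and by the uniqueness statement of Lemma \ref{101218:lem2} the result is the genuine normal form, independent of the order of reductions.

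The coefficients are then weighted counts of reduction paths. A path from $(p,q)$ to a terminal state $(m+1,m)$ uses $p-m-1$ steps, of which exactly $m-q$ are ``move'' steps and $p+q-2m-1$ are ``absorb'' steps (these are forced by tracking the net change in the left exponent and in the difference); such paths carry coefficient $a^{(p+q-2m-1)}$ with multiplicity $\binom{p-m-1}{m-q}$, giving the first sum. For the diagonal terminal states $(m,m)$ there is a crucial subtlety: a diagonal state has difference $0$, and its only predecessors are $(m+1,m)$ (difference $1$, already terminal, hence never reduced) and $(m+1,m-1)$ (difference $2$); therefore the \emph{last} step into a diagonal state is forced to be a ``move''. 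Counting the paths from $(p,q)$ to $(m+1,m-1)$ and appending this forced move yields the coefficient $a^{(p+q-2m)}$ with multiplicity $\binom{p-m-1}{m-q-1}$ --- not the naive $\binom{p-m}{m-q}$ one gets by ignoring the stopping rule --- which is exactly the second sum. The ranges $q\leq m\leq[(p+q-1)/2]$ and $q+1\leq m\leq[(p+q)/2]$ are precisely the non-negativity constraints on the move- and absorb-counts.

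For part (b), where $p<q$, I would run the mirror argument using the right reduction $\partial^r\circ c\partial^s=\partial^{r+1}\circ c\partial^{s-1}-\partial^r\circ c'\partial^{s-1}$, which follows from $c\partial=\partial\circ c-c'$ and now \emph{increases} the difference toward $\{0,1\}$. The same step-counting gives coefficient orders $a^{(p+q-2m-1)}$, $a^{(p+q-2m)}$ and ranges $m\geq p$; each ``absorb'' step now contributes a sign $-1$, so the multiplicities $\gamma^{p,q}_m,\delta^{p,q}_m$ are signed integers, which is all part (b) asserts. (One could instead try to deduce (b) from (a) via the formal adjoint anti-automorphism, but since the adjoint sends $\partial^{m+1}\circ c\partial^m$ out of normal form, the direct mirror argument is cleaner.)

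The main obstacle is getting the termination analysis right: which terminal states are reachable, and by which final step. As the computation $\partial^3\circ a=\partial^2\circ a\partial+\partial\circ a'\partial+\partial\circ a''$ illustrates, the diagonal coefficient is genuinely $\binom{p-m-1}{m-q-1}$, and the lower limit $m\geq p$ in (b) (rather than $m\geq p-1$) both arise because a pure chain of ``absorb'' steps halts at the diagonal before reaching any off-diagonal terminal state. I would therefore set up the stopping rule carefully first, and then either conclude by the path count above or, equivalently, verify that the two explicit binomial families satisfy the Pascal-type recursion induced by a single reduction step (induction on $p$), a routine check once the ranges and the forced-move phenomenon have been pinned down.
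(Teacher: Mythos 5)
Your proof is correct, but it is not organized the same way as the paper's, so a comparison is worth making. For part (a) the two arguments share the same engine: the one-step identity $\partial^p\circ a\partial^q=\partial^{p-1}\circ a'\partial^q+\partial^{p-1}\circ a\partial^{q+1}$. The paper runs an induction on $p-q$ (with base cases $p-q=1,2$) and merges the two inductive expansions via the Tartaglia--Pascal rule; your path count is precisely the closed-form solution of that same recursion, so the combinatorial content is identical and only the bookkeeping differs. (One small point you should make explicit: for the off-diagonal terminal states $(m+1,m)$ your multiplicity $\binom{p-m-1}{m-q}$ counts \emph{all} interleavings of moves and absorbs, so you must check that none of them stops prematurely; this holds because any proper prefix has at least one step remaining, whence its difference equals $1+2(\text{moves left})+(\text{absorbs left})\geq 2$. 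You did carry out the analogous, genuinely delicate, check for the diagonal states.) For part (b) your route is genuinely different from the paper's: the paper does not mirror the rewriting, but instead commutes $a$ across $\partial^{q-p}$ by the binomial formula $\partial^p\circ a\partial^q=\sum_{h=0}^{q-p}\binom{q-p}{h}(-1)^h\,\partial^{q-h}\circ a^{(h)}\partial^p$, which turns every term into one whose left exponent is at least its right exponent, and then applies part (a) term by term; integrality of $\gamma^{p,q}_m,\delta^{p,q}_m$ and the lower bound $m\geq p$ then come for free as signed sums of products of binomial coefficients. Your mirror rewriting is self-contained (it never invokes part (a)) and has the merit of explaining the bound $m\geq p$ structurally --- the all-absorb chain halts at the diagonal --- whereas the paper's reduction is shorter and recycles the hard combinatorics of (a) instead of redoing a signed, constrained path count.
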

\begin{proof}
(a). By induction on $p-q$. For $p-q=1$, the statement is immediate to check.
For $p-q=2$, we have:
$\partial^{q+2}\circ a\partial^q
=\partial^{q+1}\circ a'\partial^q+\partial^{q+1}\circ a\partial^{q+1}$,
which agrees with our claim.
For $p-q\geq3$, we have, by induction,
$$
\begin{array}{l}
\displaystyle{
\partial^p\circ a\partial^q
=\partial^{p-1}\circ a'\partial^q+\partial^{p-1}\circ a\partial^{q+1}
} \\
\displaystyle{
=\sum_{m=q}^{[(p+q-2)/2]} \binom{p-m-2}{m-q} \partial^{m+1}\circ a^{(p+q-2m-1)}\partial^m
} \\
\displaystyle{
+\sum_{m=q+1}^{[(p+q-1)/2]} \binom{p-m-2}{m-q-1} \partial^{m}\circ a^{(p+q-2m)}\partial^m
} \\
\displaystyle{
+\sum_{m=q+1}^{[(p+q-1)/2]} \binom{p-m-2}{m-q-1} \partial^{m+1}\circ a^{(p+q-2m-1)}\partial^m
} \\
\displaystyle{
+\sum_{m=q+2}^{[(p+q)/2]} \binom{p-m-2}{m-q-2} \partial^{m}\circ a^{(p+q-2m)}\partial^m
} \\
\displaystyle{
=\sum_{m=q}^{[(p+q-1)/2]}
\binom{p-m-1}{m-q}
\partial^{m+1}\circ a^{(p+q-2m-1)}\partial^m
} \\
\displaystyle{
+\sum_{m=q+1}^{[(p+q-1)/2]}
\binom{p-m-1}{m-q-1}
\partial^{m}\circ a^{(p+q-2m)}\partial^m
} \,.
\end{array}
$$
In the last identity we used the Tartaglia-Pascal triangle.

(b). It follows from (a), since, by the binomial formula,
$$
\partial^p\circ a\partial^q=
\sum_{h=0}^{q-p}\binom{q-p}{h}(-1)^h
\partial^{q-h}\circ a^{(h)}\partial^p\,.
$$
\end{proof}
\begin{corollary}\label{101218:cor1}
Any differential operator $P(\partial)\in\mc M[\partial]$ of order less than or equal to $N$
can be written, in a unique way, in any of these three forms:
$$
\begin{array}{l}
\displaystyle{
P(\partial) =
\sum_{n=0}^Na_n\partial^n
= \sum_{n=0}^N \partial^n\circ b_n
}\\
\displaystyle{
= \sum_{m=0}^{[(N-1)/2]} \partial^{m+1}\circ c_n\partial^m
+ \sum_{n=0}^{[N/2]} \partial^n\circ d_n\partial^n\,.
}
\end{array}
$$
\end{corollary}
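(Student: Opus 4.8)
Any differential operator $P(\partial)\in\mc M[\partial]$ of order $\leq N$ admits a unique expression in each of the three forms: $\sum_{n=0}^N a_n\partial^n$, $\sum_{n=0}^N \partial^n\circ b_n$, and $\sum_{m=0}^{[(N-1)/2]} \partial^{m+1}\circ c_m\partial^m + \sum_{n=0}^{[N/2]} \partial^n\circ d_n\partial^n$.

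The plan is to establish the three forms separately, treating existence and uniqueness for each, and exploiting that the first form is essentially the definition of $\mc M[\partial]$ as stated in \eqref{101218:eq1}.

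\begin{proof}
The first form $P(\partial)=\sum_{n=0}^N a_n\partial^n$ is just the definition of an element of $\mc M[\partial]$ of order $\leq N$, and its coefficients are unique: if $\sum_n a_n\partial^n=0$ then comparing leading terms (the operator would have order equal to the largest $n$ with $a_n\neq 0$, contradiction) forces all $a_n=0$. This is the same argument underlying Lemma \ref{101218:lem2}.

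For the second form, I first prove existence and uniqueness simultaneously by a dimension/triangularity argument. Using the multiplication rule $\partial\circ a=a\partial+a'$, one expands $\partial^n\circ b_n=\sum_{k=0}^n \binom{n}{k} b_n^{(n-k)}\partial^k$, so that the top-degree term of $\partial^n\circ b_n$ is $b_n\partial^n$. Hence the linear map sending $(b_0,\dots,b_N)$ to $\sum_n\partial^n\circ b_n$ is, with respect to the filtration by order, upper-triangular with identity on the diagonal: the coefficient of $\partial^N$ in $\sum_n\partial^n\circ b_n$ is exactly $b_N$, the coefficient of $\partial^{N-1}$ is $b_{N-1}+(\text{lower } b\text{'s contribution involving }b_N)$, and so on. This triangular structure shows the map is a bijection from $\mc M^{N+1}$ to operators of order $\leq N$, giving both existence and uniqueness. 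Concretely, I would determine the $b_n$ recursively from the $a_n$ by downward induction on $n$, starting with $b_N=a_N$ and subtracting off the already-known contributions at each lower order.

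For the third, mixed form, existence follows directly from Lemma \ref{101218:lem1}: every monomial $\partial^p\circ a\partial^q$ (and hence, by $\mc M$-linearity, every differential operator) is a $\mb Z$-linear combination of terms of the two shapes $\partial^{m+1}\circ e\,\partial^m$ and $\partial^m\circ e\,\partial^m$, with the running derivative index $m$ bounded by $[(p+q-1)/2]$ and $[(p+q)/2]$ respectively; since $p+q\leq$ (twice the order) one checks the index ranges match those in the statement. Uniqueness is precisely the content of Lemma \ref{101218:lem2}: writing the difference of two such representations in the form \eqref{101218:eq2} and invoking that lemma forces all coefficients $c_m$ and $d_n$ to vanish. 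Thus the expression is unique.
\end{proof}

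The main obstacle I anticipate is purely bookkeeping rather than conceptual: one must verify that applying Lemma \ref{101218:lem1} to an operator of order $\leq N$ produces terms whose indices stay within the stated bounds $m\leq[(N-1)/2]$ and $n\leq[N/2]$, which requires tracking that $p+q$ cannot exceed $N$ plus the derivative order absorbed — a careful but routine check. The conceptual work is entirely delegated to Lemmas \ref{101218:lem1} and \ref{101218:lem2}, so once those index ranges are confirmed the corollary is immediate.
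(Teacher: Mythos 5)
Your proof is correct and follows essentially the same route as the paper: the paper's own proof simply states ``Existence is clear. Uniqueness of the first two forms is clear, and the third one is Lemma \ref{101218:lem2}.'' You have merely filled in the details the paper leaves implicit (the triangularity argument for the second form, and Lemma \ref{101218:lem1} with the index-range check $p+q=n\leq N$ for existence of the third form), while the key step---uniqueness of the mixed form via Lemma \ref{101218:lem2}---is identical.
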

\begin{proof}
Existence is clear. Uniqueness of the first two forms is clear, and the third one
is Lemma \ref{101218:lem2}.
\end{proof}

Suppose next that $\mc M$ has an anti-involution $a\mapsto a^*$, commuting with $\partial$.
\begin{example}
If $\mc M=Mat_{\ell\times\ell}(\mc F)$, where $\mc F$ is a commutative differential algebra,
we let $a^*=a^T$, the transpose matrix.
\end{example}
\noindent
We extend $*$ to an anti-involution of $\mc M[\partial]$ by letting
$$
\big(\sum_{n=0}^Na_n\partial^n\big)^*=
\sum_{n=0}^N(-\partial)^n\circ a_n^*\,.
$$
We say that $P(\partial)$ is selfadjoint (respectively skewadjoint)
if $P^*(\partial)=P(\partial)$ (resp. $P^*(\partial)=-P(\partial)$).
\begin{lemma}\label{101218:lem3}
\begin{enumerate}[(a)]
\item
If $S(\partial)$ is a skewadjoint operator of order less than or equal to $N$,
then it can be written, in a unique way, in the form
\begin{equation}\label{110108:eq4}
S(\partial)=
\sum_{m=0}^{[(N-1)/2]}
\partial^m\circ
\Big(\partial\circ a_m+a_m\partial \Big)\partial^m
+\sum_{m=0}^{[N/2]}
\partial^m\circ b_m\partial^m
\end{equation}
where $a_m=a_m^*$ and $b_m=-b_m^*$.
\item
If $S(\partial)$ is a selfadjoint operator of order less than or equal to $N$,
then it can be written, in a unique way, in the form
$$
S(\partial)=
\sum_{m=0}^{[(N-1)/2]}
\partial^m\circ
\Big(\partial\circ a_m+a_m\partial \Big)\partial^m
+\sum_{m=0}^{[N/2]}
\partial^m\circ b_m\partial^m
$$
where $a_m=-a_m^*$ and $b_m=b_m^*$.
\end{enumerate}
\end{lemma}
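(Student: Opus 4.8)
The plan is to reduce both statements to a single ``symmetric'' normal form for arbitrary operators and then impose (skew)adjointness. First I would record a commutation identity: since $\partial^m$ commutes with $\partial$ and $[\partial,a]=a'$ as operators, one has $[\partial,\partial^m\circ a\partial^m]=\partial^m\circ a'\partial^m$, whence
\[
\partial^{m+1}\circ a\partial^m-\partial^m\circ a\partial^{m+1}=\partial^m\circ a'\partial^m ,
\]
and therefore
\[
\partial^m\circ(\partial\circ a+a\partial)\partial^m=2\,\partial^{m+1}\circ a\partial^m-\partial^m\circ a'\partial^m .
\]
Comparing this with the normal form of Corollary \ref{101218:cor1}, I would show that every operator $P\in\mc M[\partial]$ of order $\le N$ can be written \emph{uniquely} as
\[
P=\sum_{m=0}^{[(N-1)/2]}\partial^m\circ(\partial\circ a_m+a_m\partial)\partial^m+\sum_{m=0}^{[N/2]}\partial^m\circ b_m\partial^m ,\qquad a_m,b_m\in\mc M .
\]
Indeed, if $P=\sum_m\partial^{m+1}\circ c_m\partial^m+\sum_m\partial^m\circ d_m\partial^m$ is the expression of Corollary \ref{101218:cor1}, then the displayed identity forces $c_m=2a_m$ and $d_m=b_m-a_m'$; since $\mc M$ is an algebra over a field of characteristic zero one may solve $a_m=\tfrac12 c_m$, $b_m=d_m+\tfrac12 c_m'$, giving existence, while uniqueness follows from Lemma \ref{101218:lem2}.

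Next I would compute the adjoints of the two building blocks. Using that $*$ is an anti-involution of $\mc M[\partial]$ with $(\partial^n)^*=(-1)^n\partial^n$, a short sign count gives
\[
(\partial^m\circ b\partial^m)^*=\partial^m\circ b^*\partial^m,\qquad
\big(\partial^m\circ(\partial\circ a+a\partial)\partial^m\big)^*=-\,\partial^m\circ(\partial\circ a^*+a^*\partial)\partial^m .
\]
Consequently, if $S$ is written in the symmetric normal form above with parameters $(a_m,b_m)$, then $S^*$ is in the same normal form with parameters $(-a_m^*,\,b_m^*)$.

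Finally I would impose the symmetry and match coefficients using the uniqueness just established. For part (a), skewadjointness $S^*=-S$ means that $S^*$ has parameters $(-a_m,-b_m)$; equating with $(-a_m^*,b_m^*)$ yields $a_m=a_m^*$ and $b_m=-b_m^*$. For part (b), selfadjointness $S^*=S$ means that $S^*$ has parameters $(a_m,b_m)$, and equating with $(-a_m^*,b_m^*)$ yields $a_m=-a_m^*$ and $b_m=b_m^*$. This simultaneously proves existence (write $S$ in symmetric normal form; its symmetry automatically forces the stated relations on the $a_m,b_m$) and uniqueness (from uniqueness of the normal form). The only genuine computation is the sign bookkeeping in the adjoint of the cross term $\partial^m\circ(\partial\circ a+a\partial)\partial^m$; everything else is a formal consequence of Corollary \ref{101218:cor1} and Lemma \ref{101218:lem2}, so I expect that adjoint computation, together with the correct use of the characteristic-zero hypothesis to divide by $2$, to be the main point to get right.
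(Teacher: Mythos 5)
Your proof is correct and follows essentially the same route as the paper: the paper's (one-line) proof is precisely to write $S$ in the third normal form of Corollary \ref{101218:cor1}, compute $S-S^*$ (resp.\ $S+S^*$), and invoke the uniqueness of Lemma \ref{101218:lem2}, which is exactly the content of your argument. Your intermediate step of recasting the third form as a ``symmetric normal form'' with free parameters $(a_m,b_m)$, via $c_m=2a_m$, $d_m=b_m-a_m'$, is just a clean reorganization of that same computation, and your sign bookkeeping for the adjoints of the two building blocks checks out.
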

\begin{proof}
Use  the third form of Corollary \ref{101218:cor1} and compute $S-S^*$ (resp. $S+S^*$).
\end{proof}

\subsection{Linear algebra over a differential field}\label{app:2}

Let $\mc F$ be a differential field, i.e. a field with a derivation $\partial$,
and let $\mc C=\{c\in\mc F\,|\,\partial c=0\}\subset\mc F$ be the subfield of constants.

{\bf Notation:} $a,b,c,\dots\in\mc F$, $\alpha,\beta,\gamma,\dots\in\mc C$, $u,v,w$ variables,
$m,n,p,q\in\mb Z_+$.

A system of $m$ linear differential equations in the variables $u_i,\,i=1,\dots,\ell$,
has the form
\begin{equation}\label{eq:star}
M(\partial)u=b\,,
\end{equation}
where
$$
u=
\left(\begin{array}{l}
u_1 \\ \vdots \\ u_\ell
\end{array}\right)
\,\,,\,\,\,\,
b=
\left(\begin{array}{l}
b_1 \\ \vdots \\ b_\ell
\end{array}\right)
\,\,,\,\,\,\,
M(\partial)=
\left(\begin{array}{lll}
L_{11}(\partial) & \dots & L_{1\ell}(\partial) \\
 & \dots & \\
L_{m1}(\partial) & \dots & L_{m\ell}(\partial)
\end{array}\right)
\,,
$$
with $b_i\in\mc F$ and $L_{ij}(\partial)\in\mc F[\partial]$.


In order to study this system of linear differential equations we will use the following simple result:
\begin{lemma}\label{101221:lem1}
Let $(n_{ij})$ be an $m\times\ell$ matrix with entries in $\mb Z$,
and let
\begin{equation}\label{b}
N_j=\max_i\{n_{ij}\}
\quad,\qquad
h_i=\min_j\{N_j-n_{ij}\}\,.
\end{equation}
Then
\begin{equation}\label{a}
n_{ij}\leq N_j-h_i\,\,,\,\,\,\,\forall i=1,\dots,m,\,j=1,\dots,\ell\,.
\end{equation}
Any other choice $N'_j,\,h'_i$ satisfying \eqref{a} is such that
$N'_j\geq N_j$ for all $j$,
and, if $N'_j=N_j$ for all $j$, then
$h'_i\leq h_i$ for all $i$.
\end{lemma}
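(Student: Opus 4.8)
The plan is to verify the inequality \eqref{a} directly from the definitions \eqref{b}, and then to establish the two minimality assertions by the same elementary extremal arguments, paying attention throughout to the fact that the shifts involved are non-negative.

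First I would prove \eqref{a}. By its definition in \eqref{b}, $h_i=\min_j\{N_j-n_{ij}\}$ is a minimum over $j$, so for every fixed pair $(i,j)$ one has $h_i\le N_j-n_{ij}$, which is precisely $n_{ij}\le N_j-h_i$. I would also record here two observations to be used below: since $N_j=\max_i n_{ij}\ge n_{ij}$, each term $N_j-n_{ij}$ is non-negative, whence $h_i\ge0$; and for each column $j$ there is a row $i$ realizing the maximum, i.e.\ with $n_{ij}=N_j$.

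Next, for the comparison $N'_j\ge N_j$, I would argue column by column. Fix $j$ and choose $i_0$ with $n_{i_0 j}=N_j$ (possible by the previous paragraph). Applying \eqref{a} for the primed data to the entry $(i_0,j)$ gives $N'_j\ge n_{i_0 j}+h'_{i_0}=N_j+h'_{i_0}$, and since the admissible shifts satisfy $h'_{i_0}\ge0$ (the shifts being non-negative, exactly as the $h_i$ are), this yields $N'_j\ge N_j$. For the final assertion, assume $N'_j=N_j$ for all $j$; then \eqref{a} for the primed data reads $h'_i\le N_j-n_{ij}$ for every $j$, so taking the minimum over $j$ and invoking \eqref{b} gives $h'_i\le\min_j\{N_j-n_{ij}\}=h_i$.

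The content of this lemma is entirely combinatorial, so there is no serious obstacle; the only point requiring care is that the assertion $N'_j\ge N_j$ genuinely needs the non-negativity of the shifts $h'_i$: without it one could add a large negative constant to all $N'_j$ and all $h'_i$ simultaneously, preserving \eqref{a} while destroying the inequality. I would therefore make explicit that a \emph{choice} means a choice with non-negative shifts $h'_i$, matching the normalization $h_i\ge0$ that \eqref{b} forces, and note that $\min_i h_i=0$ is attained on any row realizing a column maximum, so that the canonical pair $(N_j,h_i)$ is indeed the extremal one in the stated sense.
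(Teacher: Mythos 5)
Your proof is correct and follows essentially the same route as the paper's: \eqref{a} is immediate from the definition of $h_i$ as a minimum, the bound $N'_j\geq N_j$ comes from applying \eqref{a} for the primed data to a row realizing the column maximum, and $h'_i\leq h_i$ comes from bounding $h'_i$ by $N_j-n_{ij}$ for all $j$ (the paper picks the minimizing $j$ explicitly, you take the minimum of the right-hand side; the two are the same argument). Your remark about non-negativity is also well taken: the paper's proof deduces $N_j\leq N'_j$ from $N_j=n_{ij}\leq N'_j-h'_i$, which silently uses $h'_i\geq 0$, and without some such normalization the assertion indeed fails (replace $N'_j$ by $N_j-c$ and $h'_i$ by $h_i-c$ with $c>0$, which preserves \eqref{a}). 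In the paper's applications the shifts are always non-negative, so the implicit convention is harmless there, but your explicit treatment of this point is the more careful one.
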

\begin{proof}
Clearly, \eqref{a} holds.
Given $j$, there exists $i$ such that $N_j=n_{ij}$.
But, by assumption, $n_{ij}\leq N'_j-h'_i$. Hence $N_j\leq N'_j$.
Suppose now that $N_j=N'_j$ for all $j$.
Given $i$ there exists $j$ such that $h_i=N_j-n_{ij}=N'_j-n_{ij}$.
But $n_{ij}\leq N'_j-h'_i$, hence $h_i\geq h'_i$.
\end{proof}
\begin{definition}\label{101223:def}
The collection of integers $\{N_j,\,j=1,\dots,\ell;\,h_i,\,i=1,\dots\!$ $\dots,m\}$ 
satisfying \eqref{a} is called a \emph{majorant} of the matrix $(n_{ij})$.
\end{definition}
%

Consider the system of equations \eqref{eq:star}.
A \emph{majorant} $\{N_j;h_i\}$
of the $m\times\ell$ matrix differential operator $M(\partial)$
is defined as a majorant of its matrix of orders $(n_{ij})$.
Given an arbitrary majorant $\{N_j;h_i\}$ of the matrix differential operator $M(\partial)$,
we can write the $i,j$ entry of $M(\partial)$ in the form
$$
L_{ij}(\partial)
=
\sum_{n=0}^{N_j-h_i} a_{ij;n}\partial^n
\,\,,\,\,\,\,
a_{ij;n}\in\mc F\,.
$$
We define the corresponding \emph{leading matrix} as
the following $m\times\ell$ matrix whose entries are monomials in an indeterminant $\xi$
with coefficients in $\mc F$:
\begin{equation}\label{101221:eq1b}
\bar M(\xi)
=
\left(\begin{array}{lll}
a_{11;N_1-h_1}\xi^{N_1-h_1} & \dots & a_{1\ell;N_\ell-h_1} \xi^{N_\ell-h_1} \\
 & \dots & \\
a_{m1;N_1-h_m}\xi^{N_1-h_m} & \dots & a_{m\ell;N_\ell-h_m}\xi^{N_\ell-h_m}
\end{array}\right)
\,.
\end{equation}
Clearly, if $m=\ell$, we have
\begin{equation}\label{110404:eq1}
\det(\bar M(\xi))
=
\det(\bar M(1))\,\xi^d\,,
\end{equation}
where
\begin{equation}\label{110404:eq2}
d=\sum_{j=1}^\ell(N_j-h_j)\,.
\end{equation}
Note that this matrix depends on the choice of the majorant $\{N_j;\,h_i\}$,

%

Permuting the equations in the system \eqref{eq:star} if necessary,
we can (and will) assume that $h_1\geq\dots\geq h_\ell$.

As in linear algebra,
the set of solutions of the system \eqref{eq:star} does not change
if we exchange two equations or if we add to the $i$-th equation
the $j$-th equation, with $j\neq i$, to which we apply a differential operator $P(\partial)$.
Since we want to preserve the fact that $\ord(L_{ij})\leq N_j-h_i$,
we give the following:
\begin{definition}\label{101222:def}
An \emph{elementary row operation} of the matrix differential operator $M(\partial)$
is either a permutation of two rows of it,
or the operation $\mc T(i,j;P)$, where $1\leq i\neq j\leq m$
and $P(\partial)$ is a differential operator,
which replaces the $j$-th row by itself minus $i$-th row
multiplied on the left by $P(\partial)$.
Assuming that $h_1\geq\dots\geq h_\ell$,
we say that the elementary row operation $\mc T(i,j;P)$
is \emph{majorant preserving}
if $i<j$ and $P(\partial)$ has order less than or equal to $h_i-h_j$.
\end{definition}

\begin{remark}\label{101222:rem}
After a majorant preserving row operation $\mc T(i,j;P)$,
the leading matrix $\bar M(\xi)$ in \eqref{101221:eq1b}
is unchanged unless $P(\partial)$ has order equal to $h_i-h_j$,
and in this case it changes by an elementary row operation over $\mc F$,
namely we add to the $j$-th row of $\bar M(\xi)$ the $i$-th row multiplied by the
leading coefficient of $P(\partial)$.
\end{remark}

Using the usual Gauss elimination, we can get the (well known) analogues
of standard linear algebra theorems for matrix differential operators.
In particular, we have the following
\begin{lemma}\label{110406:lem}
Any $m\times\ell$ matrix differential operator $M(\partial)$
can be brought by elementary row operations to a row echelon form.
\end{lemma}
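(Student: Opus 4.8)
The plan is to imitate Gaussian elimination, with left division in the skew polynomial ring $\mc F[\partial]$ playing the role of the Euclidean algorithm. The one fact I would record first is that $\mc F[\partial]$ admits division with remainder on the left: for $a,b\in\mc F[\partial]$ with $b\neq0$ there exist $q,r\in\mc F[\partial]$ with $a=q\circ b+r$ and $\ord(r)<\ord(b)$. This follows because $\mc F$ is a field, so the leading coefficient $b_N$ of $b$ is invertible; if $a$ has leading term $a_M\partial^M$ with $M\geq N$, then $a-a_Mb_N^{-1}\partial^{M-N}\circ b$ has order strictly less than $M$ (using that $\partial^{M-N}\circ b$ has order $M$ with leading coefficient $b_N$), and one iterates until the order drops below $N$.

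Next I would explain how a single column is cleared by the $\mc T$-operations. Given two entries $L_{kj}$ and $L_{ij}$ of a common column $j$ with $L_{kj}\neq0$ and $\ord(L_{ij})\geq\ord(L_{kj})$, left division gives $L_{ij}=P\circ L_{kj}+r$; then applying $\mc T(k,i;P)$, which replaces row $i$ by itself minus $P(\partial)$ times row $k$, turns the $(i,j)$ entry into $r$, of order strictly below $\ord(L_{kj})$, while leaving untouched any column to the left of $j$ in which rows $i$ and $k$ already vanish. Alternating this step with a row swap whenever the remainder has order below that of the current pivot is exactly the Euclidean algorithm applied to the pair $(L_{kj},L_{ij})$; since the orders lie in $\mb Z_+$ and strictly decrease, the process terminates with the $(i,j)$ entry equal to $0$. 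Iterating over the rows below a pivot, a whole column can be reduced to a single nonzero entry.

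With these tools I would prove the lemma by induction on the number of rows $m$. If $M(\partial)=0$ there is nothing to do; otherwise let $j_1$ be the leftmost column containing a nonzero entry, so that every entry in columns $<j_1$ vanishes. Using a row swap to bring a nonzero entry of column $j_1$ into the first row, and then the column-clearing procedure above to annihilate the entries of column $j_1$ in rows $2,\dots,m$, I obtain a matrix whose first row has leading column $j_1$ and whose rows $2,\dots,m$ vanish in all columns $\leq j_1$. The elementary row operations touching only rows $2,\dots,m$ constitute an elementary reduction of the $(m-1)\times\ell$ submatrix formed by those rows, to which the inductive hypothesis applies, putting it into echelon form with all leading columns $>j_1$. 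Together with the first row this yields the desired row echelon form of $M(\partial)$.

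The bookkeeping is routine; the only genuine content is the left division algorithm, which rests on $\mc F$ being a field. The one point needing care is to verify that the row swaps performed inside the Euclidean step do not disturb the columns already cleared to the left of the current pivot — and this holds precisely because $j_1$ is the leftmost nonzero column, so every row being permuted agrees (namely, vanishes) in those columns.
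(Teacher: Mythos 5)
Your proof is correct and follows essentially the same route as the paper: Gaussian elimination based on left division with remainder in $\mc F[\partial]$ (valid since $\mc F$ is a field, so leading coefficients are invertible), clearing the first nonzero column and inducting on the submatrix with the first row deleted. The only cosmetic difference is that the paper selects a pivot of minimal possible order among all matrices obtainable by row operations, so that left divisibility of the remaining entries in the column is automatic, whereas you run the Euclidean algorithm explicitly; the underlying content is identical.
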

\begin{proof}
Let $j_1$ be the first non zero column of $M(\partial)$.
Among all matrices obtained from $M(\partial)$ by elementary row operations,
chose one for which the first entry of column $j_1$,
$L_{1j_1}(\partial)$, is non zero of minimal possible order (minimal among the orders
of the $(1,j_1)$ entry in all these matrices).
Clearly, all the other entries in column $j_1$ must be divisible (on the left) by $L_{1j_1}(\partial)$,
and using elementary row operations we can make them zero.
Then, we proceed by induction on the submatrix with first row deleted.
\end{proof}
We next discuss majorant preserving Gauss elimination for a matrix differential operator.
\begin{lemma}\label{101222:lem}
Consider the $m\times\ell$ matrix differential operator
$M(\partial)=\big(L_{ij}(\partial)\big)$ with $m\leq\ell$,
and let $\{N_1,\dots,N_\ell;\,h_1\geq\dots\geq h_m\}$
be a majorant of $M(\partial)$.
Suppose, moreover, that
$\ord(L_{jj})=N_j-h_j$ for $1\leq j\leq m-1$,
and $\ord(L_{ij})<N_j-h_j$ for $1\leq j<i\leq m-1$.
Then, we can perform majorant preserving elementary row operations
on the $m$-th row of $M(\partial)$ so that its new
$m$-th row $\tilde L_{mj}(\partial),\,j=1,\dots,\ell$,
satisfies:
$$
\ord(\tilde L_{mj})<N_j-h_j
\,\, \text{ for } j<m
\,\,,\,\,\,\,
\ord(\tilde L_{mj})\leq N_j-h_m
\,\, \text{ for } j\geq m\,.
$$
\end{lemma}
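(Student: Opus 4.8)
The plan is to reduce the $m$-th row by \emph{left division} against the pivots $L_{jj}(\partial)$, $j=1,\dots,m-1$, iterating the elimination in a carefully chosen order. Since the second required inequality $\ord(\tilde L_{mj})\le N_j-h_m$ for $j\ge m$ is exactly the majorant bound for row $m$, which is preserved by any majorant preserving operation, the real content is to achieve $\ord(\tilde L_{mj})<N_j-h_j$ for $j<m$. For a current $m$-th row write $t_j=N_j-h_j$ and, for $j\le m-1$, define the \emph{excess} $e_j=\ord(L_{mj})-t_j$ (with $e_j=-\infty$ if $L_{mj}=0$); the goal is $e_j<0$ for all $j<m$. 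The elimination of column $j$ is the operation $\mc T(j,m;P_j)$, where, using that $\mc F$ is a field (so that $\mc F[\partial]$ admits left division and $L_{jj}$, of order $t_j$, has invertible leading coefficient), $P_j(\partial)$ is chosen with $\ord(L_{mj}-P_jL_{jj})<t_j$; this forces $\ord(P_j)=e_j$. As long as the majorant bound $\ord(L_{mj})\le N_j-h_m$ holds, we get $e_j\le h_j-h_m$, hence $\ord(P_j)\le h_j-h_m$, so $\mc T(j,m;P_j)$ is majorant preserving and the majorant bound is again satisfied after the operation; thus, by induction on the number of operations, every elimination we perform is legitimate.

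Next I would record the effect of eliminating column $j$ on the other entries of the $m$-th row. The new entry in column $k\neq j$ is $L_{mk}-P_jL_{jk}$, of order at most $e_j+\ord(L_{jk})$. For $k<j$ the entry $L_{jk}$ lies strictly below the diagonal in a row $\le m-1$, so by hypothesis $\ord(L_{jk})<t_k$, giving the new excess of column $k$ at most $\max(e_k,e_j-1)$. For $k>j$ only the majorant bound $\ord(L_{jk})\le N_k-h_j$ is available, giving the new excess at most $\max(e_k,\,e_j+h_k-h_j)$, where $h_k-h_j\le0$ since $k>j$. In words: perturbations propagating to the left strictly lower the transferred excess, while perturbations to the right lower it by $h_j-h_k\ge0$, strictly unless $k$ and $j$ lie in the same block of equal $h$-values.

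The hard part is termination: a single left-to-right sweep fails, because clearing a right-hand column can raise a column already cleared back above its target (this happens precisely when $h_j>h_k$ feeds a strict, but positive, amount of excess leftward). To control this I would use the following monovariant. Let $E^\ast=\max_{j\le m-1}e_j$ be the global excess; if $E^\ast<0$ we are done. Otherwise eliminate the \emph{leftmost} column $j^\ast$ with $e_{j^\ast}=E^\ast$. After this step $e_{j^\ast}<0$; every $k<j^\ast$ had $e_k<E^\ast$ and, by the bound above, still has $e_k\le\max(e_k,E^\ast-1)<E^\ast$; and every $k>j^\ast$ has new excess $\le\max(e_k,\,E^\ast+h_k-h_{j^\ast})\le E^\ast$. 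Hence after the step the leftmost index attaining the value $E^\ast$ has strictly increased, or no index attains $E^\ast$ and $E^\ast$ itself has dropped. Since the leftmost such index lies in $\{1,\dots,m-1\}$ and $E^\ast$ is a nonnegative integer bounded initially by $h_1-h_m$, the pair $(E^\ast,\,m-\text{leftmost index at }E^\ast)$ strictly decreases lexicographically, so after finitely many eliminations $E^\ast<0$. At that point all $\ord(L_{mj})<N_j-h_j$ for $j<m$, while the majorant bound gives $\ord(L_{mj})\le N_j-h_m$ for $j\ge m$, which is the assertion.
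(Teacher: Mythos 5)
Your proof is correct, and it follows the same basic skeleton as the paper's — Gaussian elimination on the $m$-th row using the pivots $L_{jj}(\partial)$, $j\le m-1$, via majorant preserving operations $\mc T(j,m;P)$ — but the two proofs organize the elimination, and in particular establish termination, quite differently. The paper never divides: each step kills only the top coefficient of one entry, using a monomial operator $c\partial^{h_j-h_m}$, then $c\partial^{h_j-h_m-1}$, and so on; the process is arranged in nested sweeps indexed by the blocks of equal $h$-values ($h_r>h_{r+1}=\dots=h_m$, then $h_s>h_{s+1}=\dots=h_r$, etc.), each sweep lowering a uniform order bound on columns $1,\dots,m-1$ by one, and termination is argued by exhausting these finitely many sweeps (presented somewhat informally, ending with ``continuing along these lines''). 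You instead clear the entire excess of a column in one shot by left Euclidean division in $\mc F[\partial]$ (legitimate since $\mc F$ is a field and the pivot $L_{jj}$ has order exactly $N_j-h_j$, hence invertible leading coefficient), and you tame the interference between columns — the genuine difficulty, since clearing one column can re-raise others — with the lexicographic monovariant $\big(E^\ast,\,m-(\text{leftmost index attaining }E^\ast)\big)$, whose strict decrease follows from your two propagation bounds: excess transferred leftward drops by at least $1$ (using the hypothesis $\ord(L_{jk})<N_k-h_k$ for $k<j\le m-1$), and excess transferred rightward drops by $h_j-h_k\ge 0$. What each approach buys: the paper's argument uses only multiplication by monomials and stays entirely within the leading-matrix bookkeeping of the surrounding appendix, which is what it reuses in Proposition \ref{101222:prop}; yours trades that for division but obtains a sharper, fully rigorous termination proof via a well-founded descent, replacing the paper's hand-waved iteration. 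Both correctly note that the bound $\ord(\tilde L_{mj})\le N_j-h_m$ for $j\ge m$ is automatic because majorant preserving operations preserve the majorant.
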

\begin{proof}
By assumption, the $m$-th row of the starting matrix $M(\partial)$ satisfies
$$
\ord(L_{mj})\leq N_j-h_m
\,\text{ for } 1\leq j\leq\ell\,.
$$
Applying to $M(\partial)$ the elementary row operation $\mc T(1,m;c\partial^{h_1-h_m})$,
where $c=a_{m1;N_1-h_m}/a_{11;N_1-h_1}$,
we get a new matrix satisfying
$$
\ord(\tilde L_{m1})\leq N_1-h_m-1
\,\,,\,\,\,\,
\ord(\tilde L_{mj})\leq N_j-h_m
\,\text{ for } 2\leq j\leq\ell\,.
$$
Next,
applying the elementary row operation $\mc T(2,m;c\partial^{h_2-h_m})$,
for suitable $c\in\mc F$,
we get a new matrix satisfying
$$
\begin{array}{l}
\ord(\tilde L_{m1})\leq N_1-h_m-1
\,\,,\,\,\,\,
\ord(\tilde L_{m2})\leq N_2-h_m-1
\,,\\
\ord(\tilde L_{mj})\leq N_j-h_m
\,\text{ for } 3\leq j\leq\ell\,.
\end{array}
$$
Proceeding in the same way, we get after $m-1$ steps,
a new matrix satisfying
$$
\begin{array}{l}
\ord(\tilde L_{mj})\leq N_j-h_m-1
\,\text{ for } 1\leq j\leq m-1
\,,\\
\ord(\tilde L_{mj})\leq N_j-h_m
\,\text{ for } m\leq j\leq\ell\,.
\end{array}
$$
If $h_1=h_2=\dots=h_m$, we are done.
Otherwise, $h_1>h_m$ and
we apply to the last matrix
the elementary row operation $\mc T(1,m;c\partial^{h_1-h_m-1})$,
where $c=\tilde a_{m1;N_1-h_m-1}/a_{11;N_1-h_1}$.
We thus get a new matrix satisfying
$$
\begin{array}{l}
\ord(\tilde L_{m1})\leq N_1-h_m-2
\,\,,\,\,\,\,
\ord(\tilde L_{mj})\leq N_j-h_m-1
\,\text{ for } 2\leq j\leq m-1
\,,\\
\ord(\tilde L_{mj})\leq N_j-h_m
\,\text{ for } m\leq j\leq\ell\,.
\end{array}
$$
Next, if $h_2>h_m$,
we apply the row operation $\mc T(2,m;c\partial^{h_2-h_m-1})$,
for suitable $c\in\mc F$,
and we get a new matrix satisfying
$$
\begin{array}{l}
\ord(\tilde L_{m1})\leq N_1-h_m-2
\,\,,\,\,\,\,
\ord(\tilde L_{m2})\leq N_2-h_m-2
\,,\\
\ord(\tilde L_{mj})\leq N_j-h_m-1
\,\text{ for } 3\leq j\leq m-1
\,,\\
\ord(\tilde L_{mj})\leq N_j-h_m
\,\text{ for } m\leq j\leq\ell\,.
\end{array}
$$
Let $r\leq m-1$ be such that $h_r>h_{r+1}=h_m$.
We proceed in the same way and we get,
after $r$ steps, a new matrix satisfying
$$
\begin{array}{l}
\ord(\tilde L_{mj})\leq N_j-h_m-2
\,\text{ for } 1\leq j\leq r
\,,\\
\ord(\tilde L_{mj})\leq N_j-h_j-1
\,\text{ for } r+1\leq j\leq m-1
\,,\\
\ord(\tilde L_{mj})\leq N_j-h_m
\,\text{ for } m\leq j\leq\ell\,.
\end{array}
$$
If $h_r-h_m\geq 2$,
we again apply consecutively elementary row operations
$\mc T(1,m;c_1\partial^{h_1-h_m-2}),\,
\mc T(2,m;c_2\partial^{h_2-h_m-2}),
\dots,\,
\mc T(r,m;c_r\partial^{h_r-h_m-2})$,
for appropriate $c_1,\dots,c_r\in\mc F$.
As a result we get a new matrix satisfying
$$
\begin{array}{l}
\ord(\tilde L_{mj})\leq N_j-h_m-3
\,\text{ for } 1\leq j\leq r
\,,\\
\ord(\tilde L_{mj})\leq N_j-h_j-1
\,\text{ for } r+1\leq j\leq m-1
\,,\\
\ord(\tilde L_{mj})\leq N_j-h_m
\,\text{ for } m\leq j\leq\ell\,,
\end{array}
$$
and proceeding as before $h_r-h_m$ times,
we get a matrix satisfying
$$
\begin{array}{l}
\ord(\tilde L_{mj})\leq N_j-h_r-1
\,\text{ for } 1\leq j\leq r
\,,\\
\ord(\tilde L_{mj})\leq N_j-h_j-1
\,\text{ for } r+1\leq j\leq m-1
\,,\\
\ord(\tilde L_{mj})\leq N_j-h_m
\,\text{ for } m\leq j\leq\ell\,.
\end{array}
$$
If $h_1=\dots=h_r$, we are done.
Otherwise,
let $s\leq r-1$ be such that $h_s>h_{s+1}=\dots=h_r$.
We proceed in the same way as before to get,
after a finite number of steps, a new matrix satisfying
$$
\begin{array}{l}
\ord(\tilde L_{mj})\leq N_j-h_s-1
\,\text{ for } 1\leq j\leq s
\,,\\
\ord(\tilde L_{mj})\leq N_j-h_j-1
\,\text{ for } s+1\leq j\leq m-1
\,,\\
\ord(\tilde L_{mj})\leq N_j-h_m
\,\text{ for } m\leq j\leq\ell\,.
\end{array}
$$
Continuing along these lines, one gets the desired result.
\end{proof}

\begin{proposition}\label{101222:prop}
Let $M(\partial)=\big(L_{ij}(\partial)\big)$
be an $\ell\times\ell$ matrix differential operator
with majorant $\{N_1,\dots,N_\ell;\,h_1\geq\dots\geq h_\ell\}$.
Assume that the leading matrix $\bar M(\xi)$
associated to this majorant, defined in \eqref{101221:eq1b}, is non degenerate.
Then, after possibly permuting the columns of $M(\partial)$,
and after applying majorant preserving elementary row operations,
we get a matrix of the form
$\tilde M(\partial)=\big(\tilde L_{ij}(\partial)\big)$, where
$$
\begin{array}{l}
\ord(\tilde L_{jj})=N_j-h_j
\,\text{ for } 1\leq j\leq\ell
\,,\\
\ord(\tilde L_{ij})\leq N_j-h_i
\,\text{ for } 1\leq i<j\leq\ell
\,,\\
\ord(\tilde L_{ij})<N_j-h_j
\,\text{ for } 1\leq j<i\leq\ell
\,.
\end{array}
$$
\end{proposition}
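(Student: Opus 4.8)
The plan is to prove the statement by processing the rows one at a time, for $m=1,\dots,\ell$, maintaining as an inductive invariant that the first $m-1$ rows already satisfy the three required order conditions. The engine of the induction is Lemma~\ref{101222:lem}, which clears the $m$-th row against the already-normalized rows above it; what remains for me to supply is the selection of a suitable pivot column and the verification that the nondegeneracy hypothesis propagates through the reduction. Note that the ``above-diagonal'' bound $\ord(\tilde L_{ij})\le N_j-h_i$ for $i<j$ is nothing but the defining majorant inequality, which is preserved by every operation we perform; so the genuine content of the invariant is the diagonal equality $\ord(\tilde L_{jj})=N_j-h_j$ together with the strict below-diagonal bounds.

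First I would set up the invariant. Suppose that after permuting columns and applying majorant preserving row operations we have reached a matrix whose first $m-1$ rows satisfy $\ord(\tilde L_{jj})=N_j-h_j$ for $j\le m-1$ and $\ord(\tilde L_{ij})<N_j-h_j$ for $j<i\le m-1$; I also retain the standing hypothesis $h_1\ge\dots\ge h_\ell$ and the nondegeneracy of the leading matrix $\bar M(\xi)$ of \eqref{101221:eq1b}. These are precisely the hypotheses of Lemma~\ref{101222:lem} applied to the submatrix formed by rows $1,\dots,m$, so I may invoke it to bring the $m$-th row into the shape $\ord(\tilde L_{mj})<N_j-h_j$ for $j<m$ and $\ord(\tilde L_{mj})\le N_j-h_m$ for $j\ge m$, while leaving rows $1,\dots,m-1$ untouched.

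The pivot step is where nondegeneracy is used. Since $j<m$ forces $h_j\ge h_m$, we have $\ord(\tilde L_{mj})<N_j-h_j\le N_j-h_m$, so every entry in columns $1,\dots,m-1$ of the $m$-th row of $\bar M(\xi)$ vanishes. If in addition $\ord(\tilde L_{mj})<N_j-h_m$ held for all $j\ge m$, then the entire $m$-th row of $\bar M(\xi)$ would be zero, forcing $\det\bar M=0$ and contradicting nondegeneracy. Hence there is a column $j_0\ge m$ with $\ord(\tilde L_{mj_0})=N_{j_0}-h_m$; swapping columns $m$ and $j_0$ (a column permutation, which also interchanges $N_m$ and $N_{j_0}$) installs the pivot $\ord(\tilde L_{mm})=N_m-h_m$. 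Because $m,j_0>i$ for every $i\le m-1$, this swap affects none of the diagonal or strict below-diagonal entries recorded for the rows above, and it preserves nondegeneracy of $\bar M$ since it merely permutes its columns. The $m$-th row now meets all three required conditions, completing the inductive step; the base case $m=1$ needs no row operations, as nondegeneracy of $\bar M$ already forces some entry of its first row to be nonzero, yielding $\ord(\tilde L_{11})=N_1-h_1$ after a column swap.

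The one point demanding real care rather than bookkeeping is the propagation of nondegeneracy of $\bar M$ through the reduction. For this I would appeal to Remark~\ref{101222:rem}: a majorant preserving operation $\mc T(i,j;P)$ (necessarily with $i<j$) leaves $\bar M(\xi)$ unchanged unless $\ord(P)=h_i-h_j$, and in that case it acts on $\bar M$ by adding a multiple of an earlier row to a later one, an elementary row operation over $\mc F$ that does not alter $\det\bar M(1)$. Thus each operation produced inside Lemma~\ref{101222:lem}, and hence the whole sequence of reductions, keeps $\bar M$ nondegenerate, so a pivot can indeed be found at every stage. The remainder of the argument is the routine verification, already recorded above, that the column swaps and the untouched rows respect the invariant.
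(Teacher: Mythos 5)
Your proof is correct and follows essentially the same route as the paper's: induction on rows, with column swaps to install the pivot $\ord(\tilde L_{mm})=N_m-h_m$, Lemma \ref{101222:lem} to clear the $m$-th row against the rows above, and Remark \ref{101222:rem} to see that the leading matrix stays nondegenerate so a pivot is always available. Your pivot argument (the $m$-th row of $\bar M(\xi)$ has zeros in columns $1,\dots,m-1$ since $\ord(\tilde L_{mj})<N_j-h_j\le N_j-h_m$, hence by nondegeneracy a nonzero entry in some column $j_0\ge m$) is just a slightly more explicit version of the paper's observation that the already-processed rows of $\bar M(\xi)$ force a nonzero leading coefficient in a column to the right.
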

\begin{proof}
Since the first row of the leading matrix $\bar M(\xi)$ is non zero,
after possibly exchanging the first column of $M(\partial)$
with its $j$-th column, $j>1$,
we can assume that $L_{11}(\partial)$ has order $N_1-h_1$.
Applying Lemma \ref{101222:lem} to the first two rows of the matrix $M(\partial)$,
we get, after elementary row operations on the second row,
a new matrix $\tilde M(\partial)$
with $\tilde L_{21}(\partial)$ of order strictly less than $N_1-h_1$,
and $\tilde L_{2j}(\partial)$ of order less than or equal to $N_j-h_2$ for $j\geq 2$.
By Remark \ref{101222:rem}, the leading matrix $\bar{\tilde M}(\xi)$ of the new matrix $\tilde M(\partial)$
is again non degenerate,
and it has zero in position $(2,1)$.
In particular, the first two rows of $\bar{\tilde M}(\xi)$ are linearly independent,
and, after possibly exchanging the second column
with the $j$-th column with $j>2$,
we can assume that $\tilde a_{22;N_2-h_2}\neq0$,
i.e. $\ord(\tilde L_{22})=N_2-h_2$.
Applying Lemma \ref{101222:lem} to the first three rows of the matrix $\tilde M(\partial)$,
we get, after elementary row operations on the third row,
a new matrix with
$\ord(\tilde L_{31})<N_1-h_1,\,\ord(\tilde L_{32})<N_2-h_2,\,
\ord(\tilde L_{3j})\leq N_j-h_3,\,j\geq 3$.
Repeating the same procedure for each subsequent row, we get the desired result.
\end{proof}
\begin{proposition}\label{101223:prop}
Let $M(\partial)=\big(L_{ij}(\partial)\big)$ be an $\ell\times\ell$
matrix differential operator as in the conclusion of Proposition \ref{101222:prop},
i.e.
$\ord(L_{jj})=N_j-h_j$ for $1\leq j\leq\ell$,
$\ord(L_{ij})\leq N_j-h_i$ for $1\leq i<j\leq\ell$,
and $\ord(L_{ij})<N_j-h_j$ for $1\leq j<i\leq\ell$,
where $N_1,\dots,N_\ell$ and $h_i\geq\dots\geq h_\ell$ are non negative integers.
Then it has the following majorant $\{N'_j;h'_i\}$:
$$
N'_1=N_1-h_1,\dots,N'_\ell=N_\ell-h_\ell
\,\,;\,\,\,\,
h'_1=\dots=h'_\ell=0\,.
$$
The leading matrix $\bar M(\xi)$ associated
to this majorant of $M(\partial)$ is upper triangular with non zero diagonal entries,
and the $(ij)$ entry with $i<j$ is zero unless $h_i=h_j$.
%
\end{proposition}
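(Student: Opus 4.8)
The plan is to verify directly that the proposed collection $\{N'_j; h'_i\}$, with $N'_j = N_j - h_j$ and $h'_1 = \dots = h'_\ell = 0$, satisfies the defining inequality $\eqref{a}$ of a majorant (Definition \ref{101223:def}), and then to read off the three claimed properties of the associated leading matrix $\eqref{101221:eq1b}$ from the hypotheses on the orders of the entries $L_{ij}(\partial)$. Writing $n_{ij} = \ord(L_{ij})$, the majorant condition to check is $n_{ij} \le N'_j - h'_i = N_j - h_j$ for all $i,j$. I would treat the three cases separately. On the diagonal, $n_{jj} = N_j - h_j$ gives equality. Below the diagonal ($j < i$) the hypothesis $n_{ij} < N_j - h_j$ gives the strict inequality, hence certainly $\eqref{a}$. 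Above the diagonal ($i < j$) I would use the monotonicity $h_1 \ge \dots \ge h_\ell$: since $i < j$ forces $h_i \ge h_j$, the hypothesis $n_{ij} \le N_j - h_i$ yields $n_{ij} \le N_j - h_i \le N_j - h_j$, as required.

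Next I would analyze the leading matrix. For the majorant $\{N'_j; h'_i\}$ each entry is written as $L_{ij}(\partial) = \sum_{n=0}^{N_j - h_j} a_{ij;n}\partial^n$, so the $(i,j)$ entry of $\bar M(\xi)$ is $a_{ij;\,N_j - h_j}\,\xi^{N_j - h_j}$. The three properties then follow immediately. The diagonal entries are non-zero because $\ord(L_{jj}) = N_j - h_j$ means $a_{jj;\,N_j - h_j}$ is precisely the (non-zero) leading coefficient of $L_{jj}(\partial)$. The strictly-below-diagonal entries vanish because $\ord(L_{ij}) < N_j - h_j$ forces $a_{ij;\,N_j - h_j} = 0$, so $\bar M(\xi)$ is upper triangular. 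For the strictly-above-diagonal entries, $a_{ij;\,N_j - h_j}$ can be non-zero only if $\ord(L_{ij}) \ge N_j - h_j$; combining this with the hypothesis $\ord(L_{ij}) \le N_j - h_i$ gives $N_j - h_j \le N_j - h_i$, i.e. $h_i \le h_j$, and together with $h_i \ge h_j$ (from $i<j$) this forces $h_i = h_j$. Hence the $(i,j)$ entry with $i<j$ is zero unless $h_i = h_j$.

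This argument is entirely a matter of tracking the order inequalities against the direction of the monotonicity $h_1 \ge \dots \ge h_\ell$, so there is no substantive obstacle; the only point requiring care is the bookkeeping in the above-diagonal case, where one must use both the hypothesis $\ord(L_{ij}) \le N_j - h_i$ and the monotonicity in opposite directions to pin down $h_i = h_j$. I would also remark, for completeness, that formula $\eqref{110404:eq1}$ then applies: $\bar M(\xi)$ being upper triangular with non-zero diagonal makes it non-degenerate, with $\det(\bar M(\xi)) = \det(\bar M(1))\,\xi^{d}$ where $d = \sum_{j=1}^\ell (N'_j - h'_j) = \sum_{j=1}^\ell (N_j - h_j)$, consistent with $\eqref{110404:eq2}$.
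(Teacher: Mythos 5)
Your proof is correct: the case-by-case check of the majorant inequality and the reading-off of the leading matrix entries is exactly the routine verification the statement calls for. The paper's own proof is simply ``Obvious,'' so your write-up fills in precisely the intended bookkeeping, including the one point that genuinely needs the monotonicity $h_1\geq\dots\geq h_\ell$ in both directions (the above-diagonal case forcing $h_i=h_j$).
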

\begin{proof}
Obvious.
\end{proof}


The ring $\mc F[\partial]$ of scalar differential operators over $\mc F$
is naturally embedded in the ring of pseudodifferential operators $\mc F[\partial][[\partial^{-1}]]$,
which is a skewfield.

All the above definitions and statements have an obvious generalization
to pseudodifferential operators.
In particular, we define a majorant $\{N_j;h_i\}$
of an $m\times\ell$ matrix pseudodifferential operator $M(\partial)$
in the same way as before, and the corresponding leading matrix $\bar M(\xi)$
by the same equation \eqref{101221:eq1b}
(except that here we allow negative powers of the variable $\xi$).
We also define elementary row operations and majorant preserving elementary row operations
as in Definition \ref{101222:def},
except that we allow $P(\partial)$ to be a pseudodifferential operator.
Finally, in the case of pseudodifferential operators
Propositions \ref{101222:prop} and \ref{101223:prop} still hold,
and have the following stronger analogue:
\begin{proposition}\label{110404:propa}
Let $M(\partial)=\big(L_{ij}(\partial)\big)$
be an $\ell\times\ell$ matrix pseudodifferential operator
with majorant $\{N_1,\dots,N_\ell;\,h_1\geq\dots\geq h_\ell\}$.
Assume that the leading matrix $\bar M(\xi)$
associated to this majorant is non degenerate.
Then, after possibly permuting the columns of $M(\partial)$,
and after applying majorant preserving elementary row operations,
we get an upper triangular matrix
$\tilde M(\partial)=\big(\tilde L_{ij}(\partial)\big)$, with
$\ord(\tilde L_{jj})=N_j-h_j$ for all $j=1,\dots,\ell$.
The resulting matrix $\tilde M(\partial)$
has the following majorant $\{\tilde N_j;\tilde h_i\}$:
$$
\tilde N_1=N_1-h_1,\dots,\tilde N_\ell=N_\ell-h_\ell
\,\,;\,\,\,\,
\tilde h_1=\dots=\tilde h_\ell=0\,.
$$
\end{proposition}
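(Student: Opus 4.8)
The plan is to proceed in two stages. First I would invoke the pseudodifferential analogues of Propositions \ref{101222:prop} and \ref{101223:prop}, which by the discussion preceding the statement continue to hold: after possibly permuting the columns and applying majorant preserving elementary row operations, one reduces $M(\partial)$ to a matrix $\hat M(\partial)=\big(\hat L_{ij}(\partial)\big)$ with $\ord(\hat L_{jj})=N_j-h_j$, with $\ord(\hat L_{ij})\le N_j-h_i$ for $i<j$, and with $\ord(\hat L_{ij})<N_j-h_j$ for $i>j$; moreover $\hat M(\partial)$ carries the new majorant $\{N'_j:=N_j-h_j;\,h'_i:=0\}$. With respect to this majorant every entry satisfies $\ord(\hat L_{ij})\le N'_j$, the diagonal entries attain equality, and the below-diagonal entries are strict. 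The reason for working with pseudodifferential rather than differential operators is precisely that each diagonal entry $\hat L_{jj}(\partial)$, having nonzero (hence invertible) leading coefficient in the field $\mc F$, is invertible in the skewfield $\mc F[\partial][[\partial^{-1}]]$, so I can now clear the entries below the diagonal, which is impossible in the differential setting.

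In the second stage I would run Gaussian elimination downward, processing the columns $j=1,\dots,\ell$ from left to right. For a fixed $j$ and each $i>j$ I would apply the row operation $\mc T(j,i;P_{ij})$ with $P_{ij}(\partial)=\hat L_{ij}(\partial)\circ \hat L_{jj}(\partial)^{-1}$, which replaces row $i$ by row $i$ minus $P_{ij}$ times row $j$ and thereby kills the $(i,j)$ entry. This is majorant preserving for the majorant $\{N'_j;0\}$: indeed $j<i$, and since $\ord$ is a valuation on pseudodifferential operators, $\ord(P_{ij})=\ord(\hat L_{ij})-N'_j<0=h'_j-h'_i$, using that below-diagonal entries have strictly smaller order. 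I would then check that the process does not disturb the columns already cleared: altering row $i$ changes $\hat L_{ik}$ only for $k>j$, because $\hat L_{jk}=0$ for $k<j$ once columns $1,\dots,j-1$ are upper triangular, and for $k\ge j$ the estimate $\ord(P_{ij}\hat L_{jk})\le \ord(P_{ij})+N'_k<N'_k$ shows that the majorant bound $\ord(\hat L_{ik})\le N'_k$ persists and that the remaining below-diagonal entries retain strictly smaller order.

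Finally I would verify that the diagonal orders are preserved throughout, which yields the asserted conclusion $\ord(\tilde L_{jj})=N_j-h_j$ together with the majorant $\{N_j-h_j;0\}$ for the resulting upper triangular matrix. Whenever an operation $\mc T(k,i;P_{ik})$ with $k<i$ alters the diagonal entry $\hat L_{ii}$, it replaces it by $\hat L_{ii}-P_{ik}\hat L_{ki}$; since $\ord(P_{ik})<0$ while $\ord(\hat L_{ki})\le N'_i$, the correction has order strictly below $N'_i$, so the leading term in $\partial^{N'_i}$ of $\hat L_{ii}$ survives and its order stays exactly $N'_i$. The same valuation estimate guarantees that the diagonal entries remain invertible at every stage, so the elimination can always be carried out. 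The main obstacle, and the only content genuinely new compared with the differential case of Proposition \ref{101222:prop}, is coordinating the strict order drop of the below-diagonal entries, which comes from the non-degeneracy of the leading matrix via Proposition \ref{101223:prop}, with the majorant preservation requirement $\ord(P_{ij})\le h'_j-h'_i=0$; once that strict inequality is secured, the bookkeeping of orders under the valuation $\ord$ is routine.
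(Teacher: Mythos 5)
Your proposal is correct and follows essentially the same route as the paper's proof: first reduce to the form of Proposition \ref{101222:prop} (strict order drop below the diagonal), then clear the below-diagonal entries with row operations $\mc T(j,i;P)$ whose correction operator $P=\hat L_{ij}\circ\hat L_{jj}^{-1}$ has negative order and is therefore majorant preserving. Your version merely spells out the bookkeeping (cleared columns undisturbed, diagonal orders and invertibility preserved) that the paper leaves implicit, and it also fixes a typo in the paper, where the operator is written as $L_{jj}(\partial)L_{ij}(\partial)^{-1}$ instead of $L_{ij}(\partial)\circ L_{jj}(\partial)^{-1}$.
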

\begin{proof}
First, following the proof of Proposition \ref{101222:prop},
we can apply majorant preserving elementary row operations, and possibly permutations of columns,
to reduce $M(\partial)$ to a matrix pseudodifferential operator satisfying the following conditions:
$$
\ord(\tilde L_{ij})<\ord(L_{jj})=N_j-h_j
\,\text{ for all } 1\leq j<i\leq\ell
\,.
$$
Let then $i>j$, and recall that, by assumption, $h_j\geq h_i$,
and, by the above condition,
$P(\partial)=L_{jj}(\partial)L_{ij}(\partial)^{-1}$
has negative order.
Hence, the elementary row operation $\mc T(j,i;P)$
is majorant preserving.
Applying such elementary row operations a finite number of times, we get the
desired upper triangular matrix.
The last statement is obvious.
\end{proof}


Recall that any $\ell\times\ell$ matrix pseudodifferential operator $M(\partial)$
has the Dieudonn\'e determinant
of the form $\det(M(\partial))=c\xi^d$, where $c\in\mc F$, $\xi$ is an indeterminate, and $d\in\mb Z$.
In fact, the Dieudonn\'e determinant is defined for square matrices over an arbitrary
skewfield $\mc K$, and it takes values in $\mc K^\times/(\mc K^\times,\mc K^\times)\cup\{0\}$,
\cite{Die}, \cite{Art}.
By definition,
$\det(M(\partial))$ changes sign if we permute two rows or two columns of $M(\partial)$,
and it is unchanged under any elementary row operation $\mc T(i,j;P)$
in Definition \ref{101222:def}, for arbitrary $i\neq j$ and a pseudodifferential operator $P(\partial)$.
Also, if $M(\partial)$ is upper triangular,
with diagonal entries $L_{ii}(\partial)$ of order $n_i$ and leading coefficient $a_i$,
then $\det(M(\partial))=\big(\prod_i a_i\big) \xi^{\sum_in_i}$.
It is proved in the above references that the Dieudonn\'e determinant is well defined
and $\det(A(\partial)B(\partial))=\det(A(\partial))\det(B(\partial))$
for every $\ell\times\ell$ matrix pseudodifferential operators $A(\partial)$
and $B(\partial)$.
Moreover, we have the following proposition (cf. \cite{Huf}, \cite{SK}, \cite{Miy}):
\begin{proposition}\label{110404:propb}
If $M(\partial)$ is an $\ell\times\ell$ matrix pseudodifferential operator
with non degenerate leading matrix $\bar M(\xi)$ (for a certain majorant $\{N_j;h_i\}$
of $M(\partial)$), then
$$
\det(M(\partial))=\det(\bar M(\xi))\,.
$$
In particular,
$\deg_\xi\det(M(\partial))=\sum_{j=1}^\ell(N_j-h_j)$.
\end{proposition}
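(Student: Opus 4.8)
The plan is to reduce to the upper triangular case via Proposition \ref{110404:propa} and to track simultaneously the two quantities in the statement---the Dieudonn\'e determinant $\det(M(\partial))$ and the ordinary determinant $\det(\bar M(\xi))$ of the leading matrix---under each move of the reduction, checking that they transform in exactly the same way. First I would apply Proposition \ref{110404:propa}: after permuting the columns of $M(\partial)$ and applying a finite sequence of majorant preserving elementary row operations, we obtain an upper triangular matrix $\tilde M(\partial)=\big(\tilde L_{ij}(\partial)\big)$ with $\ord(\tilde L_{jj})=N_j-h_j$. Throughout this process the majorant $\{N_j;h_i\}$ is preserved, and by Remark \ref{101222:rem} the leading matrix stays non degenerate, so all the relevant quantities remain defined.

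Next I would verify the compatibility of the two determinants under the two types of moves. Under a permutation of two columns, the Dieudonn\'e determinant changes sign (by its defining properties), and so does $\det(\bar M(\xi))$, since permuting two columns of $M(\partial)$ permutes the corresponding columns of the leading matrix \eqref{101221:eq1b} (with the entries $N_j$ of the majorant reassigned accordingly). Under a majorant preserving elementary row operation $\mc T(i,j;P)$, the Dieudonn\'e determinant is unchanged, as it is invariant under every elementary row operation; and, by Remark \ref{101222:rem}, the leading matrix $\bar M(\xi)$ is either unchanged (when $\ord(P)<h_i-h_j$) or modified by an elementary row operation over the field $\mc F$ (when $\ord(P)=h_i-h_j$), so in either case $\det(\bar M(\xi))$ is unchanged. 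Hence $\det(M(\partial))$ and $\det(\bar M(\xi))$ pick up exactly the same sign $\varepsilon\in\{\pm1\}$ coming from the column permutations, and are otherwise preserved; in particular $\det(M(\partial))=\varepsilon\,\det(\tilde M(\partial))$ and $\det(\bar M(\xi))=\varepsilon\,\det(\bar{\tilde M}(\xi))$.

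It then remains to treat the upper triangular matrix $\tilde M(\partial)$ directly. Its leading matrix $\bar{\tilde M}(\xi)$ (with respect to the same majorant $\{N_j;h_i\}$) is upper triangular, since the below diagonal entries vanish because $\ord(\tilde L_{ij})<N_j-h_j\le N_j-h_i$ for $j<i$; thus its determinant is the product of the diagonal monomials $\tilde a_{jj}\xi^{N_j-h_j}$, namely $\big(\prod_j\tilde a_{jj}\big)\xi^{\sum_j(N_j-h_j)}$, where $\tilde a_{jj}$ is the leading coefficient of $\tilde L_{jj}$. On the other hand, the Dieudonn\'e determinant of the upper triangular operator $\tilde M(\partial)$ equals, by the stated property of the Dieudonn\'e determinant, $\big(\prod_j\tilde a_{jj}\big)\xi^{\sum_j(N_j-h_j)}$ as well. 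Therefore $\det(\tilde M(\partial))=\det(\bar{\tilde M}(\xi))$, and combining with the previous paragraph gives $\det(M(\partial))=\det(\bar M(\xi))$. The ``in particular'' statement follows from \eqref{110404:eq1}, since non-degeneracy of $\bar M(\xi)$ means $\det(\bar M(1))\neq0$, whence $\deg_\xi\det(M(\partial))=\sum_{j=1}^\ell(N_j-h_j)$.

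I expect the main obstacle to be the careful bookkeeping in the second step: one must ensure that the \emph{ordinary} determinant of the leading matrix really is preserved throughout the reduction, which hinges on the dichotomy in Remark \ref{101222:rem} (the leading matrix changes only by an $\mc F$-row operation, never by one that raises a degree), in contrast with the Dieudonn\'e determinant, which is blind to all elementary row operations. Matching the signs from the column permutations on the two sides, and confirming that every intermediate leading matrix stays non degenerate so that both determinants remain well defined, are the points that need the most attention.
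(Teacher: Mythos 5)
Your proposal is correct and follows essentially the same route as the paper: reduce to an upper triangular operator via Proposition \ref{110404:propa}, observe (via Remark \ref{101222:rem}) that majorant preserving row operations leave $\det(\bar M(\xi))$ unchanged while the Dieudonn\'e determinant is insensitive to all elementary row operations, and match the two determinants (and the permutation signs) in the triangular case. The paper compresses this into two lines, whereas you have carried out the bookkeeping explicitly; the content is the same.
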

\begin{proof}
It follows Proposition \ref{110404:propa} since,
by Remark \ref{101222:rem}, the determinant of the leading matrix $\bar M(\xi)$
is unchanged by majorant preserving elementary row operations.
\end{proof}
\begin{example}
If $\bar M(\xi)$ is degenerate, we can still have $\det(M(\partial))\neq0$.
For example, the matrix differential operator
$$
M(\partial)=\left(\begin{array}{cc} 1 & a \\ \partial & a\partial\end{array}\right)\,,
$$
has the majorant $N_1=N_2=1,\,h_1=1,h_2=0$,
and the corresponding leading matrix
$$
\bar M(\xi)=\left(\begin{array}{cc}1&a \\ \xi & a\xi\end{array}\right)
$$
is degenerate.
However, $M(\partial)$ can be brought, by elementary row operations, to
the matrix
$$
\left(\begin{array}{cc} 1 & a \\ 0 & -a'\end{array}\right)\,,
$$
which shows that $\det(M(\partial))=-a'$.
\end{example}

\subsection{Linearly closed differential fields}\label{app:3}

\begin{definition}\label{def:linclosed}
A differential field $\mc F$ is called \emph{linearly closed}
if any linear differential equation,
$$
a_nu^{(n)}+\dots+a_1u'+a_0u=b\,,
$$
with $n\geq0$, $a_0,\dots,a_n\in\mc F,\, a_n\neq0$,
has a solution in $\mc F$ for every $b\in\mc F$,
and it has a non zero solution for $b=0$,
provided that $n\geq1$.
\end{definition}
\begin{remark}
For a linearly closed differential field $\mc F$
and a non zero differential operator  $L(\partial)\in\mc F[\partial]$,
the map $L(\partial):\,\mc F\to\mc F$ given by $a\mapsto L(\partial)a$
is surjective.
Indeed, by definition, the differential equation $L(\partial)u=b$ has a solution
in $\mc F$ for every $b\in\mc F$.
\end{remark}
\begin{remark}
Any differential field $\mc F$ can be embedded in a linearly closed one.
Note also that a differentially closed field is automatically linearly closed.
\end{remark}
\begin{remark}\label{110406:rem}
Let $\mc F$ be a linearly closed differential field.
Letting $x\in\mc F$ be a solution of $\partial x=1$
we get that $\mc F$ contains the field of rational functions over $\mc C$ in $x$.
In particular, $\mc F$ is infinite dimensional over $\mc C$.
\end{remark}
\begin{theorem}\label{101218:thm1}
Let $\mc F$ be a differential field.
Consider a linear differential equation of order $N$ over $\mc F$ in the variable $u$:
$L(\partial)u=0$, where
$$
L(\partial)=a_N \partial^N+a_{N-1}\partial^{N-1}+\dots+a_1\partial+a_0\,\in\mc F[\partial]
\,\,,\,\,\,\,
a_N\neq0\,.
$$
\begin{enumerate}[(a)]
\item
The space of solutions of this equation is a vector space over $\mc C$ of dimension at most $N$.
\item
If $\mc F$ is linearly closed, then the space of solution has dimension equal to $N$.
\end{enumerate}
\end{theorem}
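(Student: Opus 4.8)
The plan is to handle the two parts with different tools: a Wronskian argument for the upper bound in (a), and reduction of order combined with induction on $N$ for the matching lower bound in (b). First note that the solution set $V = \{u \in \mc F \mid L(\partial)u = 0\}$ is a $\mc C$-subspace of $\mc F$, since $L(\partial)$ is additive and commutes with multiplication by constants: for $u,v\in V$ and $\alpha,\beta\in\mc C$ one has $L(\partial)(\alpha u+\beta v)=\alpha L(\partial)u+\beta L(\partial)v=0$. For part (a) I would introduce the Wronskian $W(u_1,\dots,u_k)=\det\big(u_j^{(i-1)}\big)_{i,j=1}^k$ and reduce the bound to the following criterion: elements $u_1,\dots,u_k\in\mc F$ are linearly dependent over $\mc C$ if and only if $W(u_1,\dots,u_k)=0$. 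Granting this, suppose $\dim_{\mc C}V>N$ and choose $\mc C$-independent solutions $u_0,\dots,u_N\in V$; the bottom row $\big(u_j^{(N)}\big)_j$ of their $(N+1)\times(N+1)$ Wronskian matrix equals $-a_N^{-1}\sum_{m=0}^{N-1}a_m\big(u_j^{(m)}\big)_j$, an $\mc F$-linear combination of the rows above it, so the determinant vanishes, contradicting the criterion. This yields $\dim_{\mc C}V\le N$.

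The main work is the Wronskian criterion, whose nontrivial direction $W=0 \Rightarrow \mc C$-dependence I would prove by induction on $k$. The case $k=1$ is immediate. In the inductive step, if the Wronskian of $u_1,\dots,u_{k-1}$ already vanishes we conclude by induction; otherwise the top-left $(k-1)\times(k-1)$ minor is nonzero, so the columns of the $k\times k$ Wronskian matrix satisfy an $\mc F$-linear relation $\sum_j b_j u_j^{(i)}=0$ for $i=0,\dots,k-1$, which we normalize to $b_k=1$. Differentiating the relations for $i=0,\dots,k-2$ and cancelling the terms $\sum_j b_j u_j^{(i+1)}$ via the relations of one higher order shows that $(b_1',\dots,b_{k-1}')$ solves the nondegenerate $(k-1)\times(k-1)$ system whose coefficient matrix is the Wronskian matrix of $u_1,\dots,u_{k-1}$; hence $b_j'=0$, so each $b_j\in\mc C$, and the $i=0$ relation $\sum_j b_j u_j=0$ is the desired dependence. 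This inductive differentiation step is the part requiring the most care, and I expect it to be the main obstacle.

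For part (b), assuming $\mc F$ linearly closed, I would show $\dim_{\mc C}V\ge N$ by induction on $N$, which with (a) gives equality. For $N=1$ the equation $a_1u'+a_0u=0$ has a nonzero solution directly by Definition \ref{def:linclosed}. For the inductive step, linear closedness provides a nonzero solution $v$, and I would perform reduction of order through the substitution $u=vw$: expanding $L(\partial)(vw)$ by the Leibniz rule, the coefficient of $w$ is exactly $L(\partial)v=0$, so the equation becomes $\tilde L(\partial)(w')=0$ for an operator $\tilde L$ of order $N-1$ with leading coefficient $a_Nv\ne 0$ (nonzero since $\mc F$ is a domain). By the inductive hypothesis $\tilde L(\partial)y=0$ has an $(N-1)$-dimensional solution space with basis $y_1,\dots,y_{N-1}$; linear closedness lets me pick antiderivatives $w_i$ with $w_i'=y_i$, so that $v,vw_1,\dots,vw_{N-1}$ are solutions of $L(\partial)u=0$. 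Their $\mc C$-independence follows because a relation $v(c_0+\sum c_i w_i)=0$ forces $c_0+\sum c_i w_i=0$ (again as $\mc F$ is a field), and differentiating gives $\sum c_i y_i=0$, whence all $c_i=0$ by independence of the $y_i$. Thus $\dim_{\mc C}V\ge N$, completing the argument.
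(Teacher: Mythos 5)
Your proposal is correct, but it follows a genuinely different route from the paper, at least for part (a). The paper proves both parts by a single induction on $N$ via factorization in $\mc F[\partial]$: given a nonzero solution $a$, division with remainder gives $L(\partial)=L_1(\partial)\circ(\partial-a'/a)+R$ with $R=0$ because $L(\partial)a=0$; then the $\mc C$-linear map $b\mapsto b'-(a'/a)b$ sends solutions of $L$ to solutions of $L_1$ with kernel $\mc C a$, giving the bound $\dim\leq 1+(N-1)$, and when $\mc F$ is linearly closed this map is also surjective (one solves the inhomogeneous first-order equation $(\partial-a'/a)b=c$), giving equality. Your part (b) is essentially this same argument in disguise: your reduction-of-order operator is $\tilde L(\partial)=L_1(\partial)\circ v$, and choosing antiderivatives $w_i'=y_i$ is exactly the lifting step $(\partial-v'/v)(vw_i)=vy_i$. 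Your part (a), however, is a different argument: the Wronskian dependence criterion over the field of constants, which avoids operator factorization entirely and is self-contained (your inductive proof of the criterion is the classical one and is valid over any differential field). What each buys: the paper's factorization handles (a) and (b) uniformly and is shorter, at the cost of invoking division with remainder in $\mc F[\partial]$; your route makes (a) independent of linear closedness machinery and yields the Wronskian criterion as a reusable by-product, at the cost of proving that criterion. One small step you should make explicit: in the inductive step of the Wronskian lemma, the normalization $b_k=1$ requires observing that $b_k\neq 0$, which follows because $b_k=0$ would give a nontrivial $\mc F$-relation among the columns of the nonsingular $(k-1)\times(k-1)$ Wronskian matrix of $u_1,\dots,u_{k-1}$; this is immediate, so it is an omission rather than a gap.
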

\begin{proof}
We prove (a) by induction on $N$. For $N=0$, it is clear.
For $N\geq1$, if there are no non zero solutions, we are done
(note that this does not happen if $\mc F$ is linearly closed).
If $a\in\mc F$ is a non zero solution of $L(\partial)u=0$,
we divide $L(\partial)$ by $\partial-a'/a$ with remainder, to get
$L(\partial)=L_1(\partial)(\partial-a'/a)+R$, where $L_1$ has order $N-1$ in $\partial$
and $R\in\mc F$.
Since $L(\partial)a=0$, it follows that $R=0$.
By inductive assumption, the space of solutions of $L_1(\partial)u=0$ has
dimension at most $N-1$ over $\mc C$.
Consider the linear map over $\mc C$, $b\mapsto (\partial-a'/a)b,\,b\in\mc F$.
It is immediate to check that it maps surjectively the space of solutions for $L(\partial)$
onto the space of solutions of $L_1(\partial)$, and its kernel is $\mc Ca$.
The statement (a) follows.
For part (b) we use the same argument.
\end{proof}
\begin{theorem}\label{110404:thm}
\begin{enumerate}[(a)]
\item
Let $M(\partial)$ be an $\ell\times\ell$ matrix differential operator
over a differential field $\mc F$.
\begin{enumerate}[i)]
\item
If $\det(M(\partial))\neq0$, then $\dim_{\mc C}(\ker M(\partial))\leq \deg_\xi\det(M(\partial))$.
\item
If $\im M(\partial)\subset\mc F^\ell$ has finite codimension over $\mc C$,
then $\det(M(\partial))\neq0$, provided that $\mc C\neq\mc F$.
\end{enumerate}
\item
Assuming that the differential field $\mc F$ is linearly closed,
the following statements are equivalent
for an $\ell\times\ell$ matrix differential operator $M(\partial)$:
\begin{enumerate}[i)]
\item
$\det(M(\partial))\neq0$,
\item
$\dim_{\mc C}(\ker M(\partial))<\infty$,
\item
$\det(M(\partial))\neq0$ and $\dim_{\mc C}(\ker M(\partial))=\deg_\xi\det(M(\partial))$,
\item
$\codim_{\mc C}\im M(\partial)<\infty$,
\item
$M(\partial):\,\mc F^\ell\to\mc F^\ell$ is surjective.
\end{enumerate}
\item
Let $M(\partial)$ be an $m\times\ell$ matrix differential operator
over a linearly closed differential field $\mc F$,
such that $\ker(M(\partial))$ has finite dimension over $\mc C$
and $\im(M(\partial))$ has finite codimension over $\mc C$.
Then necessarily $m=\ell$ and $\det(M(\partial))\neq0$.
\end{enumerate}
\end{theorem}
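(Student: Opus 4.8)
The engine of the whole proof is Gaussian elimination for matrix differential operators, Lemma~\ref{110406:lem}: any $M(\partial)$ can be brought to row echelon form by elementary row operations. The point I would exploit throughout is that each operation in Definition~\ref{101222:def} is invertible over $\mc F[\partial]$ (the inverse of $\mc T(i,j;P)$ is $\mc T(i,j;-P)$, and row swaps are involutions), so the full reduction is $\tilde M=EM$ with $E\in\Mat(\mc F[\partial])$ invertible and acting bijectively on $\mc F^{\ell}$. Hence elementary operations preserve $\ker M$ exactly, carry $\im M$ onto $E(\im M)$ (so they preserve $\dim_{\mc C}\ker$ and $\codim_{\mc C}\im$), and by multiplicativity of the Dieudonn\'e determinant they change $\det M$ only by a unit or a sign (so they preserve $\deg_\xi\det$). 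I would also isolate two elementary facts: first, in characteristic zero $\mc F\neq\mc C$ forces $\dim_{\mc C}\mc F=\infty$, since otherwise every $a\in\mc F$ is algebraic over $\mc C$ and differentiating its separable minimal polynomial forces $a'=0$, i.e. $\mc F=\mc C$; second, for nonzero $L(\partial)\in\mc F[\partial]$, Theorem~\ref{101218:thm1} gives $\dim_{\mc C}\ker L\le\ord L$, with equality and with surjectivity of $L:\mc F\to\mc F$ when $\mc F$ is linearly closed. The main care will be in guaranteeing that a vanishing determinant produces a left relation among the rows with \emph{honest} differential (not pseudodifferential) coefficients; this is exactly what invertibility of $E$ over $\mc F[\partial]$ secures.

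For part (a)(i), since $\det M\neq0$ the echelon form $\tilde M$ must be upper triangular with nonzero diagonal $\tilde L_{jj}$ (an $\ell\times\ell$ echelon matrix with a free column would have a zero row and hence a vanishing determinant). Then $\deg_\xi\det M=\deg_\xi\det\tilde M=\sum_j\ord\tilde L_{jj}$, while solving $\tilde M u=0$ by back-substitution and projecting $\ker\tilde M$ onto its last coordinate gives a short exact sequence with image in $\ker\tilde L_{\ell\ell}$ and kernel the solution space of the $(\ell-1)\times(\ell-1)$ upper-left block; induction yields $\dim_{\mc C}\ker M\le\sum_j\ord\tilde L_{jj}=\deg_\xi\det M$. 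For (a)(ii) I argue the contrapositive: if $\det M=0$ the echelon form has a zero row, i.e. a nonzero row $Q=(Q_1,\dots,Q_\ell)$ of the invertible matrix $E$ satisfies $QM=0$, whence $\im M\subseteq\ker(Q:\mc F^{\ell}\to\mc F)$. Choosing $i$ with $Q_i\neq0$, the space $Q_i(\partial)\mc F$ is infinite-dimensional over $\mc C$ (as $\ker Q_i$ is finite-dimensional and $\mc F$ is infinite over $\mc C$, using $\mc C\neq\mc F$), so $\im Q$ is infinite-dimensional and $\im M$ has infinite codimension, the desired contradiction.

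For part (b) I would prove two cycles. The cycle i)$\Rightarrow$iii)$\Rightarrow$ii)$\Rightarrow$i): given $\det M\neq0$, triangularize as above; linear closedness makes each projection in the back-substitution surjective (each $\tilde L_{jj}$ being surjective), turning the inequality of (a)(i) into $\dim_{\mc C}\ker M=\sum_j\ord\tilde L_{jj}=\deg_\xi\det M$, which is iii); iii)$\Rightarrow$ii) is immediate, and ii)$\Rightarrow$i) is the contrapositive of the fact that $\det M=0$ gives a free column in the echelon form whose parameters range over the infinite-dimensional space $\mc F$, forcing $\dim_{\mc C}\ker M=\infty$. The cycle i)$\Rightarrow$v)$\Rightarrow$iv)$\Rightarrow$i): triangularizing and solving the inhomogeneous equations $\tilde L_{jj}u_j=\cdots$ by surjectivity shows $\tilde M$, hence $M=E^{-1}\tilde M$, is surjective; v)$\Rightarrow$iv) is trivial; and iv)$\Rightarrow$i) is exactly (a)(ii), applicable since a linearly closed $\mc F$ satisfies $\mc C\neq\mc F$.

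Finally, for part (c) I reduce the rectangular $M$ to echelon form with $r\le\min(m,\ell)$ pivots. If $m<\ell$ then $r\le m<\ell$, so there is a free column and, as in part (b), $\dim_{\mc C}\ker M=\infty$, contradicting finiteness of the kernel; hence $m\ge\ell$. If $m>\ell$ then $r\le\ell<m$, so the echelon form has a zero row; as in (a)(ii) this forces $\im\tilde M\subseteq\{w:w_k=0\}$, a subspace of infinite codimension, and applying the bijection $E^{-1}$ shows $\im M$ has infinite codimension, contradicting finiteness of the codimension; hence $m\le\ell$. Thus $m=\ell$, and since $\ker M$ is finite-dimensional, the implication ii)$\Rightarrow$i) of part (b) gives $\det M\neq0$. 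I expect the genuinely delicate points to be the exact dimension count via back-substitution in the linearly closed case (establishing surjectivity of the coordinate projections) and the careful translation between zero rows/free columns of the echelon form and the infinitude statements about $\ker M$ and $\im M$.
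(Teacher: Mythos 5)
Your proposal is correct and follows essentially the same route as the paper: reduction to row echelon form via Lemma~\ref{110406:lem}, the scalar Theorem~\ref{101218:thm1} for kernel dimensions and surjectivity of the diagonal entries, the minimal-polynomial argument showing $\mc F\neq\mc C$ implies $\dim_{\mc C}\mc F=\infty$, and the dichotomy ``free column $\Rightarrow$ infinite kernel, zero row $\Rightarrow$ infinite codimension of the image.'' The only differences are expository refinements — you make explicit the invertibility over $\mc F[\partial]$ of the elementary matrix $E$ (which the paper uses implicitly when asserting that row operations preserve $\ker$, $\codim\im$ and $\deg_\xi\det$), you spell out the back-substitution induction, and you organize the equivalences of part (b) as two cycles rather than the paper's single cycle i)$\Rightarrow$iii)$\Rightarrow$ii)$\Rightarrow$v)$\Rightarrow$iv)$\Rightarrow$i).
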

\begin{proof}
Since the dimension (over $\mc C$) of $\ker(M(\partial))$
and the codimension of $\im(M(\partial))$ are unchanged by elementary row operations on $M(\partial)$,
we may assume, by Lemma \ref{110406:lem}, that $M(\partial)$ is in row echelon form.
Assume first that $M(\partial)$ is an $\ell\times\ell$ matrix.
If $\det(M(\partial))\neq0$, it means that its diagonal entries $L_{ii}(\partial)$ are all non zero,
say of order $n_i$.
Hence the corresponding homogeneous system $M(\partial)u=0$
is upper triangular and, by Theorem \ref{101218:thm1},
its space of solutions has dimension less than or equal to $\sum_i n_i=\deg_\xi\det(M(\partial))$
(and equal to it, provided that $\mc F$ is linearly closed).
This proves part (a)(i) and, in (b), (i) implies (iii) (and hence it is equivalent to it).
Similarly, if $\det(M(\partial))=0$,
then the last row of $M(\partial)$ is zero,
so that $\im M(\partial)$ is of infinite codimension over $\mc C$.
Here we are using the fact that $\mc F$ is infinite dimensional over $\mc C$,
since any $f\in\mc F$, such that $f'\neq0$, is not algebraic over $\mc C$.
(Indeed, if $f\in\mc F$ is algebraic and $P(f)=0$ is its monic minimal polynomial over $\mc C$,
then $0=\partial P(f)=P'(f)f'$,
so that $P'(f)=0$ if $f'\neq0$, a contradiction.)
This proves part (a)(ii) and, in (b), that (iv) implies (i).

Since, in statement (b), condition (iii) obviously implies (ii),
and (v) obviously implies (iv),
in order to prove part (b) we only need to prove that (ii) implies (v).
Assume that $\mc F$ is linearly closed (hence infinite dimensional over $\mc C$,
by Remark \ref{110406:rem}),
and that, by condition (ii), $M(\partial)$ has finite dimensional kernel over $\mc C$.
Then its last diagonal entry is non zero,
otherwise, by Theorem \ref{101218:thm1}, there is a solution of the homogeneous system $M(\partial)u=0$
for every choice of $u_\ell$.
Therefore $M(\partial)$ is upper triangular with non zero diagonal entries,
and then, again by Theorem \ref{101218:thm1}, the inhomogeneous system $M(\partial)u=b$
has a solution for every $b\in\mc F^\ell$, i.e. $M(\partial)$ is surjective.
This completes the proof of part (b).

Finally, we prove part (c).
Assume, as before, that $M(\partial)$ is in row echelon form.
The homogeneous system $M(\partial)u=0$
admits a solution
for every choice of a coordinate $u_i$ which does not correspond to a pivot of $M(\partial)$.
Hence, since $\ker(M(\partial))$ is finite dimensional over $\mc C$,
we must have $m\geq\ell$.
If $m>\ell$, then the last $m-\ell$ rows of $M(\partial)$ are zero,
and then the image of $M(\partial)$ has infinite codimension.
\end{proof}
\begin{corollary}\label{101218:thm2}(cf. \cite{Huf,Miy})
Let $M(\partial)$ be an $\ell\times\ell$ matrix differential operator
with coefficients in a differential field $\mc F$.
Suppose that the leading matrix $\bar M(\xi)$,
associated to a majorant $\{N_j;h_i\}$ of $M(\partial)$, is non degenerate.
Then the space of solutions for the homogeneous system $M(\partial)u=0$
has dimension over $\mc C$ less than or equal to
$$
d = \sum_{j=1}^\ell (N_j-h_j)\,\,\, \Big(= \deg_\xi\det(M(\partial))\Big)\,.
$$
Moreover, if $\mc F$ is a linearly closed differential field
then the inhomogeneous system $M(\partial) u=b$ has a solution for every $b\in\mc F^\ell$,
and the space of solutions for the homogeneous  system $M(\partial)u=0$
has dimension equal to $d$.
\end{corollary}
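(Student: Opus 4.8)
The plan is to reduce the whole statement to the two facts already established for $\ell\times\ell$ matrix differential operators over $\mc F$: Proposition \ref{110404:propb}, which computes the Dieudonn\'e determinant in terms of the leading matrix, and Theorem \ref{110404:thm}, which ties the nonvanishing of this determinant to the $\mc C$-dimension of $\ker M(\partial)$ and to surjectivity of $M(\partial)$. The corollary then follows by assembling these pieces; all the genuine content has already been absorbed into them (they in turn rest on the majorant-preserving row reduction of Propositions \ref{101222:prop} and \ref{110404:propa} and on Theorem \ref{101218:thm1} for scalar equations).

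First I would unwind the non-degeneracy hypothesis. By \eqref{110404:eq1} the leading matrix satisfies $\det(\bar M(\xi))=\det(\bar M(1))\,\xi^d$ with $d=\sum_{j=1}^\ell(N_j-h_j)$ as in \eqref{110404:eq2}, so non-degeneracy of $\bar M(\xi)$ means precisely $\det(\bar M(1))\neq 0$, i.e. $\det(\bar M(\xi))$ is a nonzero monomial of degree $d$ in $\xi$. Proposition \ref{110404:propb} then gives $\det(M(\partial))=\det(\bar M(\xi))\neq 0$ together with $\deg_\xi\det(M(\partial))=d$; this also proves the parenthetical identity in the statement of the corollary.

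For the first assertion I would simply invoke Theorem \ref{110404:thm}(a)(i): since $\det(M(\partial))\neq 0$, the kernel of $M(\partial)$ has $\mc C$-dimension at most $\deg_\xi\det(M(\partial))=d$. For the linearly closed case I would invoke the chain of equivalences in Theorem \ref{110404:thm}(b): the already established condition $\det(M(\partial))\neq 0$ is condition (i), hence condition (iii) holds, giving $\dim_{\mc C}(\ker M(\partial))=\deg_\xi\det(M(\partial))=d$, and condition (v) holds, i.e. $M(\partial)\colon\mc F^\ell\to\mc F^\ell$ is surjective, which is exactly the solvability of $M(\partial)u=b$ for every $b\in\mc F^\ell$.

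Since every input has already been proved, I do not expect a substantive obstacle here; the only point needing care is the bookkeeping that the majorant $\{N_j;h_i\}$ used to form $\bar M(\xi)$ is the same one entering the formula for $d$, which is immediate from \eqref{110404:eq1}--\eqref{110404:eq2}. If one preferred a self-contained argument bypassing Theorem \ref{110404:thm}, one could instead apply Proposition \ref{110404:propa} to bring $M(\partial)$, via majorant-preserving elementary row operations (which change neither $\dim_{\mc C}\ker$ nor $\codim_{\mc C}\im$, and alter $\det$ only up to sign), to upper triangular form with diagonal orders $N_j-h_j$; then Theorem \ref{101218:thm1} applied to each diagonal scalar equation yields both the bound $d$ and, over a linearly closed field, the exact dimension $d$ and the solvability of the inhomogeneous system.
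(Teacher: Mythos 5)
Your proposal is correct and follows essentially the same route as the paper's own proof: the paper likewise deduces from Proposition \ref{110404:propb} that $\det(M(\partial))=\det(\bar M(\xi))$ with $\deg_\xi\det(M(\partial))=\sum_j(N_j-h_j)=d$, and then applies Theorem \ref{110404:thm} to obtain the bound $\dim_{\mc C}(\ker M(\partial))\leq d$ in general, and equality together with surjectivity when $\mc F$ is linearly closed. Your spelled-out citations of parts (a)(i) and (b)(i)$\Rightarrow$(iii),(v) just make explicit what the paper invokes in one breath.
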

\begin{proof}
By Proposition \ref{110404:propb}, if $\bar M(\xi)$ is non degenerate,
then $\det(M(\partial))=\det(\bar M(\xi))$,
and $\deg_\xi\det(\bar M(\partial))=\sum_j(N_j-h_j)=d$.
Hence, by Theorem \ref{110404:thm}, $\dim_{\mc C}(\ker M(\partial))\leq d$ and,
if $\mc F$ is linearly closed, $M(\partial)$ is surjective and $\dim_{\mc C}(\ker M(\partial))=d$.

\end{proof}

%

%
%

In Section \ref{app:5} we will need the following slight generalization of Corollary \ref{101218:thm2}.
\begin{corollary}\label{110328:cor}
Let $\mc F$ be a linearly closed differential field with subfield of constants $\mc C$.
Let $A(\partial)$ be an $\ell\times\ell$ matrix differential operator
such that $\det(A(\partial))\neq0$.
Let $M(\partial)=(L_{ij}(\partial))$ be an $\ell\times\ell$ matrix pseudodifferential operator
with non degenerate leading matrix $\bar M(\xi)$ associated to a majorant $\{N_j,\,j=1,\dots\ell;h_i,\,i=1,\dots\ell\}$.
Assume, moreover, that $A(\partial)M(\partial)$ is a matrix differential operator.
Then the inhomogeneous system of differential equations $A(\partial)M(\partial)u=b$
has a solution for every $b\in\mc F^\ell$,
and the space of solutions of the corresponding homogeneous system $A(\partial)M(\partial)u=0$
has dimension over $\mc C$ equal to
$$
d=\dim_{\mc C}(\ker A(\partial))+\sum_{j=1}^\ell (N_j-h_j)\,.
$$
\end{corollary}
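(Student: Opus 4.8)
The plan is to reduce the entire statement to a single application of Theorem \ref{110404:thm} to the matrix differential operator $L(\partial):=A(\partial)M(\partial)$, which is a genuine (not merely pseudo-) differential operator by hypothesis; this is precisely what makes the conclusion about solutions in $\mc F^\ell$ meaningful. The only inputs needed are the multiplicativity of the Dieudonn\'e determinant, the identity $\det(M(\partial))=\det(\bar M(\xi))$ from Proposition \ref{110404:propb}, and the equivalence of the conditions in Theorem \ref{110404:thm}(b), which applies since $\mc F$ is linearly closed.

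First I would check that $\det(L(\partial))\neq0$. By multiplicativity of the Dieudonn\'e determinant, $\det(L(\partial))=\det(A(\partial))\det(M(\partial))$. Now $\det(A(\partial))\neq0$ by assumption, while $\det(M(\partial))=\det(\bar M(\xi))\neq0$ by Proposition \ref{110404:propb}, since the leading matrix $\bar M(\xi)$ is non degenerate. Hence $\det(L(\partial))\neq0$. Since $L(\partial)$ is a matrix differential operator over the linearly closed field $\mc F$ with non zero determinant, conditions (i), (iii) and (v) of Theorem \ref{110404:thm}(b) all hold for it. Condition (v) yields the surjectivity of $L(\partial):\,\mc F^\ell\to\mc F^\ell$, i.e. the inhomogeneous system $A(\partial)M(\partial)u=b$ is solvable for every $b\in\mc F^\ell$; condition (iii) yields $\dim_{\mc C}(\ker L(\partial))=\deg_\xi\det(L(\partial))$.

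It then remains to evaluate $\deg_\xi\det(L(\partial))$. Writing each Dieudonn\'e determinant in the normal form $c\xi^d$ and using that the $\xi$-degree is additive under products (a consequence of multiplicativity), I get $\deg_\xi\det(L(\partial))=\deg_\xi\det(A(\partial))+\deg_\xi\det(M(\partial))$. By Proposition \ref{110404:propb} the second summand equals $\sum_{j=1}^\ell(N_j-h_j)$. For the factor coming from $A(\partial)$ I would invoke Theorem \ref{110404:thm}(b) once more: since $\det(A(\partial))\neq0$, condition (iii) gives $\deg_\xi\det(A(\partial))=\dim_{\mc C}(\ker A(\partial))$. Combining these, $\dim_{\mc C}(\ker L(\partial))=\dim_{\mc C}(\ker A(\partial))+\sum_{j=1}^\ell(N_j-h_j)=d$, which is exactly the asserted dimension.

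There is no serious difficulty beyond this bookkeeping; the one point to handle with care is the legitimacy of applying Theorem \ref{110404:thm} to $L(\partial)$, which rests on the hypothesis that $A(\partial)M(\partial)$ is an honest differential operator, so that it acts on $\mc F^\ell$ and its kernel and image there are the objects the theorem describes, even though the determinant computations themselves take place in the larger skewfield of matrix pseudodifferential operators in which $M(\partial)$ lives. Dropping that hypothesis would make the kernel and surjectivity statements ill-posed, so it is essential that the degree additivity and Proposition \ref{110404:propb} be used only to transport information through the pseudodifferential determinant, never to act with $M(\partial)$ alone on $\mc F^\ell$.
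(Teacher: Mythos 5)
Your proof is correct and follows essentially the same route as the paper's: both arguments combine the multiplicativity of the Dieudonn\'e determinant, Proposition \ref{110404:propb} (giving $\deg_\xi\det(M(\partial))=\sum_j(N_j-h_j)$), and Theorem \ref{110404:thm}(b)(iii) applied to $A(\partial)$ (giving $\deg_\xi\det(A(\partial))=\dim_{\mc C}(\ker A(\partial))$), and then conclude by applying Theorem \ref{110404:thm}(b) to the differential operator $A(\partial)M(\partial)$. Your extra remark about the role of the hypothesis that $A(\partial)M(\partial)$ is an honest differential operator is a correct and worthwhile clarification of a point the paper leaves implicit.
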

\begin{proof}
We have $\det(A(\partial)M(\partial))=\det(A(\partial))\det(M(\partial))\neq0$.
Moreover,
by Proposition \ref{110404:propb}, we have
$\deg_\xi\det(M(\partial))=\sum_j(N_j-h_j)$,
while, by Theorem \ref{110404:thm} (b)(iii), we have
$\deg_\xi\det(A(\partial))=\dim_{\mc C}(\ker A(\partial))$.
Therefore
$$
\deg_\xi\det(A(\partial)M(\partial))
=
\dim_{\mc C}(\ker A(\partial))+\sum_j(N_j-h_j)\,.
$$
The statement follows from Theorem \ref{110404:thm}(b) applied to the matrix differential operator $A(\partial)M(\partial)$.
\end{proof}

\subsection{Main results}\label{app:4}

\subsubsection{The scalar case}

\begin{theorem}\label{110105:thm1}
Let $\mc F$ be a linearly closed differential field,
and let $K(\partial)\in\mc F[\partial]$ be a non zero scalar differential operator.
For every skewadjoint differential operator $S(\partial)$,
there exists a differential operator $P(\partial)$ such that
\begin{equation}\label{110105:eq1}
K(\partial)\circ P(\partial)-P^*(\partial)\circ K^*(\partial)=S(\partial)\,.
\end{equation}
\end{theorem}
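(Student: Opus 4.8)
The starting observation is that the operator $\Psi(P):=K(\partial)\circ P(\partial)-P^*(\partial)\circ K^*(\partial)$ can be rewritten, using that $*$ is an anti-involution, as $\Psi(P)=KP-(KP)^*$; hence $\Psi(P)^*=-\Psi(P)$, so $\Psi$ always lands in the space of skewadjoint operators, and $\Psi(P)$ is nothing but twice the skewadjoint part of $KP$. Thus the theorem is the assertion that $\Psi$ is \emph{surjective} onto skewadjoint operators, i.e. every skewadjoint $S(\partial)$ is the skewadjoint part of $KP$ for a suitable $P$. Throughout, write $N=\ord K$ and let $k\in\mc F$, $k\neq0$, be its leading coefficient; recall that $\mathrm{char}\,\mc F=0$ and that, since $\mc F$ is linearly closed, every nonzero scalar operator in $\mc F[\partial]$ is surjective on $\mc F$ (Theorem \ref{101218:thm1}).

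First I would reduce to bounded order. Right-dividing by $K$ (legitimate since $k$ is invertible), write $\tfrac12 S=K P_0+S_0$ with $\ord S_0<N$. A short computation shows that if $\Psi(P')=S_0-S_0^*$ then $W:=KP'-S_0$ is selfadjoint and $P:=P'+P_0$ satisfies $\Psi(P)=S$; hence it suffices to solve the equation when $\ord S<N$. Alternatively one may induct directly on $\ord S$: for $\ord S=d\geq N-1$ the leading coefficient of $S$ is matched by a single monomial $P=c\,\partial^{d-N}$ when $d\geq N$ (where the top coefficient of $\Psi(P)$ is $2kc$), or by $P=c$ when $d=N-1$ (where the order-$(N-1)$ coefficient of $\Psi(c)$ is a nonzero first-order operator applied to $c$); solving for $c$ and subtracting lowers the order by at least two. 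Either way the genuine content is concentrated in the low orders $\ord S<N$, where the top-coefficient trick is unavailable because $KP$ and $(KP)^*$ cancel at the top.

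For the bounded case I would turn $\Psi(P)=S$ into a system of linear differential equations over $\mc F$. Bounding $\ord P\leq M$ and taking the coefficients $p_0,\dots,p_M\in\mc F$ of $P$ as unknowns, the identity $\Psi(P)=S$ becomes, after expanding $KP-(KP)^*$ and equating coefficients (using the normal form of Lemma \ref{101218:lem3} to parametrize skewadjoint operators without redundancy), a matrix differential system $M(\partial)\mathbf p=\mathbf s$ with $\mathbf p\in\mc F^{M+1}$. The plan is then to arrange this as a square system and to apply the surjectivity machinery of the Appendix: by Corollary \ref{101218:thm2} (equivalently Theorem \ref{110404:thm}(b)), over a linearly closed field an inhomogeneous system whose associated leading matrix is non-degenerate is solvable for every right-hand side. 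Thus it remains to choose a majorant for $M(\partial)$ and to verify that its leading matrix is invertible, which should follow from $k\neq0$.

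The hard part will be precisely this last verification. The obstruction I anticipate is a parity phenomenon: writing $Q=KP$ of order $N+M$, the coefficient of order $N+M-j$ in $Q-Q^*$ contains the ``new'' unknown $p_{M-j}$ with algebraic coefficient $k\bigl(1-(-1)^{N+M-j}\bigr)$, which vanishes for half the values of $j$. Hence $M(\partial)$ is not triangular in the naive coefficient ordering, and the equations coming from the even drops are first-order differential constraints coupling several unknowns rather than algebraic ones. Organizing the unknowns and equations (e.g. via the two selfadjoint/skewadjoint families of Lemma \ref{101218:lem3} and Corollary \ref{101218:cor1}) so that the resulting leading matrix is visibly non-degenerate, and checking that the count of equations matches the count of unknowns so that Theorem \ref{110404:thm} applies, is the crux of the argument; once the leading matrix is shown non-degenerate, linear-closedness of $\mc F$ delivers the solution $P$.
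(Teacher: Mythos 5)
Your overall strategy coincides with the paper's: reduce to low order, recast the equation as a linear differential system over $\mc F$, and invoke the Appendix solvability result (Corollary \ref{101218:thm2}) via a non-degenerate leading matrix. Your reductions are correct, and both of your variants match the paper's first step (the paper uses $P=\frac{a}{2}\partial^{n-N}$ to reduce to $\ord S\leq N-1$, and in fact then treats all $S$ with $\ord S\leq 2N-3$). But the proposal stops exactly where the theorem's content begins: you never produce a system whose leading matrix can be checked to be non-degenerate, and you explicitly defer this as ``the crux''. This is a genuine gap rather than a routine verification left to the reader, because nothing in your setup guarantees a square system or exhibits a workable majorant, and the parity obstruction you anticipate is real if one proceeds as you describe.

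The paper fills the gap by choices that dissolve that obstruction. Instead of equating coefficients of powers of $\partial$ in $KP-(KP)^*=S$ (which, as you note, yields degenerate ``even-drop'' equations), one rewrites \eqref{110105:eq1} as the statement that $K(\partial)\circ P(\partial)-S_+(\partial)$ is \emph{selfadjoint}, where $S=S_+-S_+^*$ with $S_+=\sum_{m=0}^{N-2}\partial^{m+1}\circ s_m\partial^m$ the normal form of Lemma \ref{101218:lem3}. One then normalizes $k_N=1$, seeks $P=\sum_{q=0}^{N-2}u_q\partial^q$ of order $\leq N-2$, and expands $K\circ P$ \emph{modulo selfadjoint operators} in the same normal form $\sum_m\partial^{m+1}\circ(\cdot)\,\partial^m$ using Lemma \ref{101218:lem1}; the selfadjoint part is quotiented out by design, so no parity-degenerate equations ever appear. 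Matching the coefficients $s_m$ gives a square system $M(\partial)u=s$ of $N-1$ equations in the $N-1$ unknowns $u_q$, with entries $L_{mq}(\partial)=\sum_{p}\gamma^{p,q}_m\,\partial^{p+q-2m-1}\circ k_p$, where $\gamma^{p,q}_m=\binom{p-m-1}{m-q}$ for $p>q$. With the majorant $N_q=N+q-1$, $h_m=2m$, the leading matrix has $(m,q)$ entry $\binom{N-m-1}{m-q}$ for $q\leq m$ and $0$ for $q>m$ (the top-order term $p=N$ is excluded there since $\min(N,q)=q>m$), so it is triangular with $1$'s on the diagonal, hence non-degenerate; Corollary \ref{101218:thm2} then yields the solution. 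These three specific choices --- the order bound $N-2$ on $P$ making the system square, the Lemma \ref{101218:lem3} parametrization on both sides, and this particular majorant --- constitute the missing core of your argument.
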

\begin{proof}
Let $K(\partial)$ be of order $N$ with leading coefficient $k_N\neq0$.
Note that, replacing $P(\partial)$ by $k_N P(\partial)$,
we can reduce to the case when $k_N=1$.

If $S(\partial)$ has order $n\geq N$ (clearly $n$ must be odd)
with leading coefficient $a\in\mc F$,
letting $P(\partial)=\frac{a}{2}\partial^{n-N}$, we have that
$S(\partial)-K(\partial)\circ P(\partial)+P^*(\partial)\circ K^*(\partial)$
is a skewadjoint differential operator of order
strictly less than $n$.
Hence, repeating the same argument a finite number of times,
we reduce to the case when $S(\partial)$ has order $n\leq N-1$.
In particular, for $N=1$ there is nothing to prove since $S$ is skewadjoint,
hence zero.

In fact, we will consider the case when $\ord(S)=n\leq 2N-3$,
which, for $N\geq2$, covers all possibilities.
We will prove that in this case we can find $P(\partial)$
solving \eqref{110105:eq1} of order less than or equal to $N-2$.
By Corollary \ref{101218:cor1} and Lemma \ref{101218:lem3},
the operators $K(\partial),\,P(\partial)$ and $S(\partial)$ can be written, uniquely,
in the forms
\begin{equation}\label{110108:eq1}
\begin{array}{l}
\displaystyle{
K(\partial)=\sum_{n=0}^N\partial^n\circ k_n
\,\,,\,\,\,\,
P(\partial)=\sum_{n=0}^{N-2}u_n\partial^n
\,,}\\
\displaystyle{
S(\partial)=S_+(\partial)-S_-(\partial)
\,\,,\,\,\,
S_+(\partial)=
\sum_{m=0}^{N-2}\partial^{m+1}\circ s_m\partial^m=S_-^*(\partial)\,.
}
\end{array}
\end{equation}
Clearly, equation \eqref{110105:eq1}
is equivalent to say that $K(\partial)\circ P(\partial)$ and $S_+(\partial)$ differ by a selfadjoint operator.
By Lemma \ref{101218:lem1} $K(\partial)\circ P(\partial)$ is, up to adding a selfadjoint operator,
equal to
$$
\sum_{p=0}^N\sum_{q=0}^{N-2}
\sum_{m=\min(p,q)}^{\left[\frac{p+q-1}{2}\right]}
\gamma^{p,q}_m\partial^{m+1}\circ (k_pu_q)^{(p+q-2m-1)}\partial^m\,,
$$
where $\gamma^{p,q}_m$ are integers and $\gamma^{p,q}_m=\binom{p-m-1}{m-q}$
for $p>q$.
Exchanging the order of summation,
the above expression can be written in the form
\begin{equation}\label{110105:eq3}
\sum_{m=0}^{N-2}\sum_{p,q\in\mc D_{N,m}}
\gamma^{p,q}_m\partial^{m+1}\circ (k_pu_q)^{(p+q-2m-1)}\partial^m\,,
\end{equation}
where
\begin{equation}\label{110105:eq2}
\mc D_{N,m}
=\Big\{
p,q\in\mb Z_+\,\Big|\,
p\leq N,\,
q\leq N-2,\,
\min(p,q)\leq m,\,
p+q\geq 2m+1\Big\}\,.
\end{equation}
Comparing \eqref{110105:eq3} with the expression \eqref{110108:eq1} for $S_+(\partial)$,
we conclude that equation \eqref{110105:eq1}
is equivalent to the following system of $N-1$ linear differential equations
in the $N-1$ variables $u_i,\,i=0,\dots,N-2$:
\begin{equation}\label{110105:eq4}
\sum_{p,q\in\mc D_{N,m}}
\gamma^{p,q}_m (k_pu_q)^{(p+q-2m-1)}
=s_m\,,
\end{equation}
for $m=0,\dots,N-2$.

The system \eqref{110105:eq4}
is of the form $M(\partial)u=s$,
where
$u=(u_q)_{q=0}^{N-2},\,s=(s_m)_{m=0}^{N-2}$
and $M=(L_{mq}(\partial))$ is the matrix differential operator with entries
$$
L_{mq}(\partial)=\sum_{p\,:\,(p,q)\in\mc D_{N,m}}
\gamma^{p,q}_m\, \partial^{p+q-2m-1}\circ k_p
\,\,,\,\,\,\,
0\leq m,q\leq N-2
\,.
$$
Note that $L_{mq}(\partial)$ has order less than or equal to $N_q-h_m$, where
$N_q=N+q-1$ and $h_m=2m$.
The leading matrix associated to this majorant, defined by \eqref{101221:eq1b}, has
$\bar M(\xi)=\big(\bar M_{mq}\xi^{N+q-2m-1}\big)_{m,q=0}^{N-2}$, where
$$
\bar M_{mq}=
\left\{\begin{array}{ll}
0 & \text{ if } 0\leq m<q\leq N-2 \\
\binom{N-m-1}{m-q}
& \text{ if } 0\leq q\leq m\leq N-2
\end{array}\right.\,.
$$
In particular $\bar M(1)$ is upper triangular with 1's on the diagonal.
Hence, by Corollary \ref{101218:thm2} we conclude that
the system \eqref{110105:eq4} has solutions.
\end{proof}
\begin{theorem}\label{110108:thm}
Let $K(\partial)\in\mc F[\partial]$ be a scalar differential operator
of order $N$ over a differential field $\mc F$.
Then the set of differential operators $P(\partial)$ of order at most $N-1$
such that $K(\partial)\circ P(\partial)$ is selfadjoint
is a vector space over $\mc C$ of dimension
less than or equal to $\binom{N}{2}$
and, if $\mc F$ is linearly closed, it has dimension equal to $\binom{N}{2}$.
\end{theorem}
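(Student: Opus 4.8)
The plan is to realize this space as the kernel of the $\mc C$-linear map $\Psi$ sending a differential operator $P(\partial)$ of order at most $N-1$ to $\Psi(P)=K\circ P-P^*\circ K^*$, which is twice the skewadjoint part of $K\circ P$. Indeed $K\circ P$ is selfadjoint precisely when $\Psi(P)=0$, and $\Psi(P)$ is always skewadjoint of order at most $2N-1$; the set in question is thus $\ker\Psi$, and it is a vector space over $\mc C$ because $\partial$ commutes with constants, so that $K\circ(cP)=c(K\circ P)$ and $(cP)^*=cP^*$ for $c\in\mc C$.

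First I would coordinatize source and target. Writing $P(\partial)=\sum_{q=0}^{N-1}u_q\partial^q$ identifies the source with $\mc F^N$. By Lemma \ref{101218:lem3} (in the scalar case the anti-involution is trivial), every skewadjoint operator of order at most $2N-1$ has a unique normal form $\sum_{m=0}^{N-1}\partial^m\circ(\partial\circ s_m+s_m\partial)\partial^m$, so the target is identified with $\mc F^N$ via $(s_m)_{m=0}^{N-1}$. Then, exactly as in the proof of Theorem \ref{110105:thm1}, using the expansion of Lemma \ref{101218:lem1} and Corollary \ref{101218:cor1} to put $K\circ P$ into the $\partial^{m+1}\circ\,\cdot\,\partial^m$ basis modulo selfadjoint operators, the equation $\Psi(P)=0$ becomes a homogeneous system $M(\partial)u=0$ of $N$ linear differential equations in the $N$ unknowns $u_0,\dots,u_{N-1}$, where $M(\partial)=(L_{mq}(\partial))_{m,q=0}^{N-1}$ with $L_{mq}(\partial)=\sum_{p}\gamma^{p,q}_m\,\partial^{p+q-2m-1}\circ k_p$ and $\gamma^{p,q}_m=\binom{p-m-1}{m-q}$ for $p>q$, the $k_p$ being the coefficients of $K$. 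Thus $\ker\Psi\cong\ker M(\partial)$ as $\mc C$-vector spaces (this is the same system as in Theorem \ref{110105:thm1}, but now with $s_m=0$ and with $m,q$ running all the way to $N-1$).

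Next I would analyze $M(\partial)$ through the majorant $\{N_q=N+q-1;\,h_m=2m\}$, for which $\ord L_{mq}\le N_q-h_m$. The leading coefficient of the top ($p=N$) term gives the leading matrix entry $\bar M_{mq}=\binom{N-m-1}{m-q}k_N$, where $k_N\neq0$ is the leading coefficient of $K$. Since $\binom{N-m-1}{m-q}=0$ for $m<q$ and equals $1$ for $m=q$, the leading matrix $\bar M(1)$ is triangular with all diagonal entries equal to $k_N$, hence non-degenerate with $\det\bar M(1)=k_N^N\neq0$. By \eqref{110404:eq1}--\eqref{110404:eq2} its $\xi$-degree is $d=\sum_{q=0}^{N-1}(N_q-h_q)=\sum_{q=0}^{N-1}(N-1-q)=\binom{N}{2}$, and by Proposition \ref{110404:propb} we get $\det M(\partial)=\det\bar M(\xi)$, so $\deg_\xi\det M(\partial)=\binom{N}{2}$.

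Finally, I would invoke the Appendix: Theorem \ref{110404:thm}(a) gives $\dim_{\mc C}\ker M(\partial)\le\deg_\xi\det M(\partial)=\binom{N}{2}$ over any differential field, while Corollary \ref{101218:thm2} (equivalently Theorem \ref{110404:thm}(b)) gives equality $\dim_{\mc C}\ker M(\partial)=\binom{N}{2}$ when $\mc F$ is linearly closed. Translating back through the isomorphism $\ker\Psi\cong\ker M(\partial)$ yields the claim. The main obstacle is the second step --- correctly assembling $M(\partial)$ and checking that its leading matrix is triangular with nonzero diagonal --- which carries the genuine computational content, parallel to the system analyzed in the proof of Theorem \ref{110105:thm1}. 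A minor point to handle with care is that some off-diagonal values of the bound $N_q-h_m$ are negative; these simply correspond to vanishing entries of $M(\partial)$ and of $\bar M$, so the triangular determinant computation is unaffected.
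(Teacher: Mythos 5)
Your proof is correct and follows essentially the same route as the paper: both reduce selfadjointness of $K(\partial)\circ P(\partial)$ to a homogeneous linear differential system in the coefficients of $P$ via Corollary \ref{101218:cor1} and Lemmas \ref{101218:lem1}, \ref{101218:lem2}, \ref{101218:lem3}, exhibit the majorant $\{N_q=N+q-1;\,h_m=2m\}$ whose leading matrix is triangular with diagonal entries $k_N\neq 0$, and apply Corollary \ref{101218:thm2}. The only (immaterial) deviation is that the paper first observes that a selfadjoint operator must have even order, hence $\ord(P)\le N-2$, and then reuses the $(N-1)\times(N-1)$ system from the proof of Theorem \ref{110105:thm1}, whereas you keep $u_{N-1}$ as an unknown; your extra equation is just $k_N u_{N-1}=0$ and the extra diagonal term contributes $N_{N-1}-h_{N-1}=0$ to the degree count, so both computations yield $\binom{N}{2}$.
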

\begin{proof}
First we note that,
since $K(\partial)\circ P(\partial)$ is selfadjoint of order less than or equal to $2N-1$,
it must have order at most $2N-2$, i.e. $P(\partial)$ has order at most $N-2$.
The condition on $P(\partial)$ means that $P(\partial)$ is a solution of \eqref{110105:eq1}
with $S(\partial)=0$.
Hence, if we expand $K(\partial)$ and $P(\partial)$ as in \eqref{110108:eq1},
the condition on $P(\partial)$ reduces to the system of differential equations
\eqref{110105:eq4} with $s_m=0$.
As observed in the proof of Theorem \ref{110105:thm1},
the matrix $M(\partial)$ of coefficients has majorant
$\{N_j=N+j-1;h_i=2i\}$
and the corresponding leading matrix $\bar M(\xi)$ is non degenerate.
Hence, by Corollary \ref{101218:thm2},
the space of solutions has dimension at most
$$
d=\sum_{i=0}^{N-2}(N_i-h_i)
=\sum_{i=0}^{N-2}(N-1-i)=\binom N2\,,
$$
and equal to $d$ if $\mc F$ is linearly closed.
\end{proof}

\subsubsection{The matrix case}

\begin{theorem}\label{110105:thm1b}
Let $K(\partial)\in\Mat_{\ell\times\ell}(\mc F[\partial])$
be an $\ell\times\ell$ matrix differential operator
of order $N$ with invertible leading coefficient, over a differential field $\mc F$.
\begin{enumerate}[(a)]
\item
If $\mc F$ is linearly closed,
then for every skewadjoint $\ell\times\ell$ matrix differential operator $S(\partial)$,
there exists an $\ell\times\ell$ matrix differential operator $P(\partial)$ such that
\begin{equation}\label{110105:eq1b}
K(\partial)\circ P(\partial)-P^*(\partial)\circ K^*(\partial)=S(\partial)\,.
\end{equation}
\item
The set of differential operators $P(\partial)$ of order at most $N-1$
such that $K(\partial)\circ P(\partial)$ is selfadjoint
is a vector space over $\mc C$ of dimension
less than or equal to $d=\binom{N\ell}{2}$,
and equal to $d$ provided that $\mc F$ is linearly closed.
\end{enumerate}
\end{theorem}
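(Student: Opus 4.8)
The plan is to carry over to matrix coefficients the method used for the scalar statements, Theorems \ref{110105:thm1} and \ref{110108:thm}: rewrite the single operator equation $K(\partial)\circ P(\partial)-P^*(\partial)\circ K^*(\partial)=S(\partial)$ as a square system of matrix linear differential equations for the coefficients of $P(\partial)$, and then extract both the existence in (a) and the dimension in (b) from the leading‑matrix/Dieudonn\'e determinant theory of Sections \ref{app:2}--\ref{app:3}, mainly Corollary \ref{101218:thm2}, Theorem \ref{110404:thm}, and Proposition \ref{110404:propb}. Write $K_N\in\Mat_{\ell\times\ell}(\mc F)$ for the (invertible) leading coefficient.

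For part (a), I would first reduce the order of $S(\partial)$. Since $S$ is skewadjoint, its top coefficient $a$ of order $n$ is symmetric when $n$ is odd and antisymmetric when $n$ is even. In both cases, setting $P(\partial)=\tfrac12 K_N^{-1}a\,\partial^{n-N}$ (legitimate because $K_N$ is invertible) gives $K\circ P-P^*\circ K^*$ with top coefficient $K_N(\tfrac12 K_N^{-1}a)-(-1)^n(\tfrac12 K_N^{-1}a)^*K_N^*=\tfrac12 a-(-1)^n\tfrac12 a^*=a$, so that $S-(K\circ P-P^*\circ K^*)$ is skewadjoint of strictly smaller order. Iterating reduces to the range $\ord(S)\le 2N-3$ treated below (the case $N=1$ being vacuous). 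This uniform‑in‑parity peeling replaces the scalar step $P=\tfrac a2\partial^{n-N}$.

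For the reduced equation I would write $P(\partial)=\sum_q u_q\partial^q$ with $u_q\in\Mat_{\ell\times\ell}(\mc F)$, note that $K\circ P-P^*\circ K^*=K\circ P-(K\circ P)^*$ is twice the skewadjoint part of $K\circ P$, put $K\circ P$ into the central form of Corollary \ref{101218:cor1}, and expand both this skewadjoint part and $S$ in the canonical basis of Lemma \ref{101218:lem3}(a). Matching coefficients converts the equation into a square system $M(\partial)u=s$ exactly of the shape \eqref{110105:eq4}, but now with $\Mat_{\ell\times\ell}(\mc F)$‑valued entries and with the symmetric/antisymmetric constraints of Lemma \ref{101218:lem3}(a) imposed on the appropriate blocks. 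With the scalar majorant $N_q=N+q-1$, $h_m=2m$, the leading matrix is block upper‑triangular with diagonal blocks proportional to $K_N$, hence nondegenerate; by Proposition \ref{110404:propb} its Dieudonn\'e determinant is nonzero. Then Theorem \ref{110404:thm}(b) (equivalently Corollary \ref{101218:thm2}) gives surjectivity over a linearly closed $\mc F$, establishing (a).

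For part (b) I would take $S=0$ in the same system, but allow $P$ of order up to $N-1$: the selfadjoint operator $K\circ P$ may carry an odd top order $2N-1$ as long as its top coefficient $K_N u_{N-1}$ is antisymmetric, so $u_{N-1}$ survives, unlike in the scalar case where it is forced to vanish. The space in question is the kernel of the homogeneous system; by the nondegeneracy of the leading matrix, Theorem \ref{110404:thm}(a)(i) gives the bound $\dim_{\mc C}\ker\le \deg_\xi\det(M(\partial))$ over an arbitrary $\mc F$, with equality when $\mc F$ is linearly closed by Theorem \ref{110404:thm}(b)(iii). The genuine difficulty—and the step I expect to be the main obstacle—is to show that this degree equals $\binom{N\ell}{2}$: the naive block version of the scalar count produces only $\ell\binom{N}{2}$, so the symmetric/antisymmetric splitting of Lemma \ref{101218:lem3}(a) together with the extra top‑order freedom must be accounted for exactly. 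The fact that the target value $\binom{N\ell}{2}$ is precisely the dimension of the space of antisymmetric $N\ell\times N\ell$ matrices over $\mc C$ strongly suggests that the $\binom{N\ell}{2}$ constants of integration assemble into one such constant matrix, equivalently that the system is reducible to a triangular one of size $N\ell$ with diagonal orders $0,1,\dots,N\ell-1$; making this identification rigorous, and thereby pinning down the determinant degree, is where the real work lies.
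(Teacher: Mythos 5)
Your overall strategy coincides with the paper's (order reduction, then conversion of \eqref{110105:eq1b} into a square linear differential system for the coefficients of $P$, then the leading-matrix machinery of Corollary \ref{101218:thm2} and Theorem \ref{110404:thm}), and your order-reduction step for part (a) is correct. But there is a genuine gap exactly where you locate ``the real work'', and it infects part (a) as well: your majorant is wrong. In the matrix case the central form of Lemma \ref{101218:lem3}(a) has a nonzero antisymmetric part ($B_m$ with $B_m^T=-B_m$), which vanishes identically in the scalar case; matching it produces a second family of equations whose entries have order $N+q-2m$ in the unknown $U_q$, one more than the symmetric family's $N+q-2m-1$. Consequently the scalar majorant $\{N_q=N+q-1;\,h_m=2m\}$ you propose is not a majorant of the matrix system at all (the antisymmetric rows violate the bound), while the uniform repair $\{N_q=N+q;\,h_m=2m\}$ makes the leading matrix degenerate, since the symmetric rows then contribute nothing at top order. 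So neither your nondegeneracy claim nor Proposition \ref{110404:propb} can be invoked as stated.

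The missing idea, which is the paper's key move, is to normalize $K_N=\id$ (replace $K$ by $K\circ K_N^{-1}$ and $P$ by $K_NP$), split the unknowns into symmetric and antisymmetric parts $X_q=\frac12(U_q+U_q^T)$, $Y_q=\frac12(U_q-U_q^T)$, and use the parity-refined majorant \eqref{110112:eq1}: $N_{qij\varepsilon}=N+q$, with $h_{mij+}=2m+1$ on the symmetric rows and $h_{mij-}=2m$ on the antisymmetric ones. With this choice the leading matrix is block diagonal with upper triangular blocks having $1$'s on the diagonal, hence nondegenerate, and Corollary \ref{101218:thm2} then yields both the surjectivity in (a) and the dimension in (b) \emph{at once}: the dimension is simply $\sum_j N_j-\sum_i h_i$, which computes directly to
$\frac{\ell(\ell+1)}{2}\binom{N}{2}+\frac{\ell(\ell-1)}{2}\binom{N+1}{2}=\binom{N\ell}{2}$.
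In other words, pinning down the determinant degree requires no reduction to a triangular $N\ell\times N\ell$ system and no identification of the solution space with antisymmetric constant matrices; it falls out of the arithmetic once the refined majorant is in hand. (A minor slip: the naive block count is $\ell^2\binom{N}{2}$, not $\ell\binom{N}{2}$; the deficit $N\binom{\ell}{2}$ is precisely the extra $+1$ per antisymmetric component that the refined majorant accounts for.)
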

\begin{proof}
We follow the same steps as in the proof of Theorems \ref{110105:thm1} and \ref{110108:thm}.
Let $K_N\in\Mat_{\ell\times\ell}(\mc F)$ be the leading coefficient of $K(\partial)$.
Replacing $K(\partial)$ by $K(\partial)\circ K_N^{-1}$
and $P(\partial)$ by $K_N P(\partial)$,
we can reduce to the case when $K(\partial)$ has leading coefficient $K_N=\id$.

Let $S(\partial)$ be of order $n$, with leading coefficient $S_n\in\Mat_{\ell\times\ell}(\mc F)$.
Since, by assumption, $S(\partial)$ is skewadjoint,
we have $S_n^T=(-1)^{n+1}S_n$.
If $n\geq N$,
letting $P(\partial)=\frac{1}{2}S_n\partial^{n-N}+P_1(\partial)$,
the equation for $P(\partial)$ becomes
$$
\begin{array}{l}
\displaystyle{
K(\partial)\circ P_1(\partial)-P_1^*(\partial)\circ K^*(\partial)
} \\
\displaystyle{
= S(\partial)-\frac{1}{2} K(\partial)\circ S_n\partial^{n-N}
+(-1)^{N}\frac{1}{2} \partial^{n-N}\circ S_n K^*(\partial)\,.
}
\end{array}
$$
Note that the RHS of the above equation
is a skewadjoint $\ell\times\ell$ matrix differential operator
of order strictly less than $n$.
Hence, repeating the same argument a finite number of times,
we reduce to the case when $S(\partial)$ has order $n\leq N-1$.

In fact, we will consider the more general case when $\ord(S)=n\leq 2N-1$.
We will prove that, in this case, we can find $P(\partial)$
solving \eqref{110105:eq1b} of order less than or equal to $N-1$.
By Corollary \ref{101218:cor1} and Lemma \ref{101218:lem3},
the operators $K(\partial),\,P(\partial)$ and $S(\partial)$ can be written, uniquely,
in the forms
\begin{equation}\label{110108:eq1b}
\begin{array}{l}
\displaystyle{
K(\partial)=\sum_{n=0}^N\partial^n\circ K_n
\,\,,\,\,\,\,
P(\partial)=\sum_{n=0}^{N-1}U_n\partial^n
\,,}\\
\displaystyle{
S(\partial)=
\sum_{m=0}^{N-1}\partial^m\circ
\big(\partial\circ A_m+A_m\partial+B_m\big)
\partial^m
\,,
}
\end{array}
\end{equation}
where $K_n,\,U_n,\,A_n,\,B_n\,\in\Mat_{\ell\times\ell}(\mc F)$
and $A_m^T=A_m,\,B_m^T=-B_m$.
By Lemma \ref{101218:lem1}  we have
\begin{equation}\label{110108:eq2}
\begin{array}{l}
\displaystyle{
K(\partial)\circ P(\partial)
=
\sum_{p=0}^N\sum_{q=0}^{N-1}
\partial^p\circ(K_pU_q)\partial^q
}\\
\displaystyle{
= \sum_{p=0}^N\sum_{q=0}^{N-1}
\sum_{m=0}^{N-1}
\gamma^{p,q}_m\partial^{m+1}\circ (K_pU_q)^{(p+q-2m-1)}\partial^m
}\\
\displaystyle{
+ \sum_{p=0}^N\sum_{q=0}^{N-1}
\sum_{m=0}^{N-1}
\delta^{p,q}_m\partial^m\circ (K_pU_q)^{(p+q-2m)}\partial^m
\,,
}
\end{array}
\end{equation}
where
$\gamma^{p,q}_m$ and $\delta^{p,q}_m$ are integers and
$$
\begin{array}{ll}
\gamma^{p,q}_m=0
& \text{ unless } 0\leq p\leq N,\,0\leq q\leq N-1\\
& \text{ and } \min(p,q)\leq m\leq \left[\frac{p+q-1}{2}\right]\,, \\
\delta^{p,q}_m=0
& \text{ unless } 0\leq p\leq N,\,0\leq q\leq N-1 \\
& \text{ and } \min(p,q+1)\leq m\leq \left[\frac{p+q}{2}\right]\,, \\
\gamma^{p,q}_m=\binom{p-m-1}{m-q} & \text{ if }
0\leq q<p\leq N,\, q\leq m\leq \left[\frac{p+q-1}{2}\right]\,,  \\
\delta^{p,q}_m=\binom{p-m-1}{m-q-1} & \text{ if }
0\leq q<p\leq N,\, q+1\leq m\leq \left[\frac{p+q}{2}\right]\,.
\end{array}
$$
We thus get from \eqref{110108:eq2}
\begin{equation}\label{110108:eq3}
\begin{array}{l}
\displaystyle{
K(\partial)\circ P(\partial)
-P^*(\partial)\circ K^*(\partial)
= \sum_{p=0}^N\sum_{q=0}^{N-1}
\sum_{m=0}^{N-1}
} \\
\displaystyle{
\partial^m\circ\Big(
\gamma^{p,q}_m\partial\circ (K_pU_q)^{(p+q-2m-1)}
+\gamma^{p,q}_m(U_q^T K_p^T)^{(p+q-2m-1)} \partial
}\\
\displaystyle{
+\delta^{p,q}_m (K_pU_q)^{(p+q-2m)}
-\delta^{p,q}_m (U_q^TK_p^T)^{(p+q-2m)}
\Big)
\partial^m
\,.
}
\end{array}
\end{equation}
Comparing \eqref{110108:eq3} and \eqref{110108:eq1b}
we get, from the uniqueness of the decomposition  \eqref{110108:eq4},
the following system of equations ($m=0,\dots,N-1$):
\begin{equation}\label{110108:eq5}
\begin{array}{l}
\displaystyle{
\frac12 \sum_{p=0}^N\sum_{q=0}^{N-1}
\gamma^{p,q}_m (K_pU_q+U_q^TK_p^T)^{(p+q-2m-1)} = A_m
}\\
\displaystyle{
\frac12\sum_{p=0}^N\sum_{q=0}^{N-1}
\big(\gamma^{p,q}_m+2\delta^{p,q}_m\big) (K_pU_q-U_q^TK_p^T)^{(p+q-2m)} = B_m\,
}
\end{array}
\end{equation}
This can be viewed as a system of linear differential equations
in the $\ell^2N$ entries $u_{qij}$ of the $\ell\times\ell$ matrices $U_q,\,q=0,\dots,N-1$.
The number of independent equations is also $\ell^2N$,
since both sides of the first equation are manifestly symmetric
and both sides of the second equation are manifestly skewsymmetric.

We make a change of variables $X_q=\frac12(U_q+U_q^T)=\big(x_{qij}\big)_{i,j=1}^\ell$
and $Y_q=\frac12(U_q-U_q^T)=\big(y_{qij}\big)_{i,j=1}^\ell$.
In these variables, the system \eqref{110108:eq5}  has the form
$$
\begin{array}{l}
\displaystyle{
\sum_{q,i',j'}L_{mij+;qi'j'+}(\partial)x_{qi'j'}
+\sum_{q,i',j'}L_{mij+;qi'j'-}(\partial)y_{qi'j'}=a_{mij}
}\\
\displaystyle{
\sum_{q,i',j'}L_{mij-;qi'j'+}(\partial)x_{qi'j'}
+\sum_{q,i',j'}L_{mij-;qi'j'-}(\partial)y_{qi'j'}=b_{mij}
}
\end{array}
$$
where $A_m=\big(a_{mij}\big)_{i,j=1}^\ell$ and $B_m=\big(b_{mij}\big)_{i,j=1}^\ell$,
and, for $m,q=0,\dots,N-1$ and $i,j,i',j'=1,\dots\ell$,
$$
\begin{array}{l}
\displaystyle{
L_{mij+,qi'j'+}(\partial)
=\frac12 \sum_{p=0}^N \gamma^{p,q}_m \partial^{p+q-2m-1}
\big(\delta_{j,j'}k_{pii'}+\delta_{i,i'}k_{pjj'}\big)\,,
} \\
\displaystyle{
L_{mij+,qi'j'-}(\partial)
=\frac12 \sum_{p=0}^N \gamma^{p,q}_m \partial^{p+q-2m-1}
\big(\delta_{j,j'}k_{pii'}-\delta_{i,i'}k_{pjj'}\big)\,,
} \\
\displaystyle{
L_{mij-,qi'j'+}(\partial)
=\frac12 \sum_{p=0}^N (\gamma^{p,q}_m+2\delta^{p,q}_m) \partial^{p+q-2m}
\big(\delta_{j,j'}k_{pii'}-\delta_{i,i'}k_{pjj'}\big)\,,
} \\
\displaystyle{
L_{mij-,qi'j'-}(\partial)
=\frac12 \sum_{p=0}^N (\gamma^{p,q}_m+2\delta^{p,q}_m) \partial^{p+q-2m}
\big(\delta_{j,j'}k_{pii'}+\delta_{i,i'}k_{pjj'}\big)\,,
}
\end{array}
$$
where $k_{pij}$ denotes the $(i,j)$ entry of the matrix $K_p$.
Hence,  the system \eqref{110108:eq5} is associated to the $\ell^2N\times\ell^2N$
matrix differential operator
$M(\partial)=\big(L_{mij\varepsilon;qi'j'\varepsilon'}(\partial)\big)$
with rows and columns indexed by quadruples $(m,i,j,\varepsilon)$,
where $m=0,\dots,N-1,\,\varepsilon=\pm,\,i,j=1,\dots,\ell$ with $i\leq j$ if $\varepsilon=+$
and with $i<j$ if $\varepsilon=-$.
In particular, $\ord(L_{mij\varepsilon;qi'j'\varepsilon'}(\partial))\leq N+q-2m-\frac{1+\varepsilon1}{2}
= N_{qi'j'\varepsilon'}-h_{mij\varepsilon}$,
where
\begin{equation}\label{110112:eq1}
N_{qij\varepsilon}=N+q
\,\,,\,\,\,\,
h_{mij\varepsilon}=2m+\frac{1+\varepsilon1}{2}\,.
\end{equation}

In order to apply Corollary \ref{101218:thm2},
we need to check that the leading matrix $\bar M(\xi)$ (defined by \eqref{101221:eq1b})
associated to the majorant \eqref{110112:eq1}
is non degenerate
(or equivalently, by equation \eqref{110404:eq1}, we need to check that $\bar M(1)$ is non degenerate).
Recalling that $K_N=\id$, we have the following formulas
for the entries $l_{mij\varepsilon,qi'j'\varepsilon'}$ of the matrix $\bar M(1)$:
$$
l_{mij\varepsilon,qi'j'\varepsilon'}
=\left\{\begin{array}{ll}
\gamma^{N,q}_m\delta_{i,i'}\delta_{j,j'} & \text{ for } \varepsilon=\varepsilon'=+ \,,\\
\gamma^{N,q}_m+2\delta^{N,q}_m \delta_{i,i'}\delta_{j,j'} & \text{ for } \varepsilon=\varepsilon'=- \,,\\
0 & \text{ for } \varepsilon\neq\varepsilon' \,.
\end{array}\right.
$$
In particular, $\bar M(1)$ is a block diagonal matrix, with upper triangular blocks,
having 1's on the diagonal, hence it is non degenerate.
By Corollary \ref{101218:thm2} we conclude that
the system \eqref{110108:eq5} always has solutions if $\mc F$ is linearly closed,
proving part (a).
Moreover, by Corollary \ref{101218:thm2},
the space of solutions of the corresponding homogeneous system
has dimension over $\mc C$ less than or equal to (and equal if $\mc F$ is linearly closed)
$$
\begin{array}{l}
\displaystyle{
d=
\sum_{m=0}^{N-1}\sum_{1\leq i\leq j\leq\ell}^\ell (N_{mij+}-h_{mij+})
+\sum_{m=0}^{N-1}\sum_{1\leq i< j\leq\ell}^\ell (N_{mij-}-h_{mij-})
}\\
\displaystyle{
=\frac{\ell(\ell+1)}{2}\sum_{m=0}^{N-1}(N-m-1)
+\frac{\ell(\ell-1)}{2}\sum_{m=0}^{N-1}(N-m)
=\binom{N\ell}{2}\,,
}
\end{array}
$$
proving part (b).
\end{proof}

\subsection{Generalization to polydifferential operators}\label{app:5}

\subsubsection{Preliminaries on polydifferential operators on $\mc F^\ell$}\label{app:5.1}

The goal of this technical section is to provide lemmas which will be used in the proof
of Corollary \ref{110314:cor2} and Theorem \ref{110127:conj1} in the next section.

For $k\in\mb Z_+$, a $k$-\emph{differential operator on} $\mc F^\ell$ is, by definition,
an array
\begin{equation}\label{110126:eq0}
\big(P_{i_0,i_1,\dots,i_k}(\lambda_1,\dots,\lambda_k)\big)_{i_0,i_1,\dots,i_k\in\{1,\dots,\ell\}}\,,
\end{equation}
consisting of $k$-differential operators on $\mc F$, i.e.
\begin{equation}\label{110126:eq2}
P_{i_0,i_1,\dots,i_k}(\lambda_1,\dots,\lambda_k)
=\sum_{n_1,\dots,n_k=0}^N p^{n_1,\dots,n_k}_{i_0,i_1,\dots,i_k}
\lambda_1^{n_1}\dots\lambda_k^{n_k}\,.
\end{equation}
The reason for this name is that to an array as in \eqref{110126:eq0}
we can associate a $k$-linear (over $\mc C$) map
$P:\,(\mc F^\ell)^{\otimes k}
=\mc F^\ell\otimes{_\mc C}\dots\otimes_{\mc C}\mc F^\ell
\to\mc F^\ell$
given by
$$
P(F^1\otimes\dots\otimes F^k)_{i_0}
=\sum_{\substack{i_1,\dots,i_k\in I \\ n_1,\dots,n_k\in\mb Z_+}}
p^{n_1,\dots,n_k}_{i_0,i_1,\dots,i_k}
(\partial^{n_1}(F^1)_{i_1})\dots(\partial^{n_k}(F^k)_{i_k})\,.
$$

The symmetric group $S_k=Perm(1,2,\dots,k)$ acts on such arrays by simultaneous
permutations of the indices $i_1,\dots,i_k$ and variables $\lambda_1,\dots,\lambda_k$:
\begin{equation}\label{110126:eq1}
\begin{array}{l}
P^\sigma_{i_0,i_1,\dots,i_k}(\lambda_1,\dots,\lambda_k)
=
P_{i_0,i_{\sigma^{-1}(1)},\dots,i_{\sigma^{-1}(k)}}
(\lambda_{\sigma^{-1}(1)},\dots,\lambda_{\sigma^{-1}(k)}) \\
=
\displaystyle{
\sum_{n_1,\dots,n_k=0}^N
p^{n_{\sigma^{-1}(1)},\dots,n_{\sigma^{-1}(k)}}_{i_0,i_{\sigma^{-1}(1)},\dots,i_{\sigma^{-1}(k)}}
\lambda_1^{n_1}\dots\lambda_k^{n_k}
\,\,,\,\,\,\,
\sigma\in S_k
}\,.
\end{array}
\end{equation}
We extend this to an action of the group $S_{k+1}=Perm(0,1,\dots,k)$,
denoted $P\mapsto P^\sigma,\,\sigma\in S_{k+1}$,
as follows. If $\tau_\alpha$ is the transposition of $0$ and $\alpha\in\{1,\dots,k\}$ and if $P$ is as in
\eqref{110126:eq2}, we let
\begin{equation}\label{110126:eq3}
P^{\tau_\alpha}_{i_0,i_1,\dots,i_k}(\lambda_1,\dots,\lambda_k)
=\!\!\!
\sum_{n_1,\dots,n_k=0}^N \!\!\!
(-\lambda_1-\dots-\lambda_k-\partial)^{n_\alpha}
\lambda_1^{n_1}\stackrel{\alpha}{\check{\dots}}\lambda_k^{n_k}
p^{n_1,\dots,n_k}_{i_\alpha,i_1,\dots ,\stackrel{\alpha}{\check{i_0}},\dots,i_k}\,.
\end{equation}
The reason for this definition is that we have the identity
\begin{equation}\label{110120:eq2}
\tint F^0\cdot P(F^1,\dots,F^k)
=
\tint F^{\sigma(0)}\cdot P^\sigma(F^{\sigma(1)},\dots,F^{\sigma(k)})\,,
\end{equation}
for every $\sigma\in S _{k+1}$,
and every $F^0,\dots,F^k\in\mc F^\ell$,
where as usual $\tint$ denotes the canonical map $\mc F\to\mc F/\partial\mc F$.
The pairing $\tint F\cdot G,\,F,G\in\mc F^\ell$,
may be degenerate,
but if we replace $\mc F$ by the algebra of differential polynomials $\mc F[u,u',\dots]$,
this pairing is always non degenerate.
Extending the map $P:\,(\mc F^\ell)^{\otimes k}\to\mc F^\ell$
to a map $(\mc F[u,u',\dots]^\ell)^{\otimes k}\to\mc F[u,u',\dots]^\ell$ in the obvious way,
we get that formula \eqref{110120:eq2} uniquely determines the action of $S_{k+1}$
on the space of $k$-differential operators on $\mc F^\ell$.

A $k$-differential operator $P$ on $\mc F^\ell$
is called \emph{skewsymmetric} (respectively \emph{totally
skewsymmetric}) if $P^\sigma=\sign(\sigma)P$ for every $\sigma\in S_k$
(resp. $\sigma\in S_{k+1}$).
Given an array $P$ as in \eqref{110126:eq0},
we define its \emph{total skewsymmetrization},
denoted $\langle P\rangle^-$,
by the following formula:
\begin{equation}\label{110120:eq5}
\langle P\rangle^-
=
\frac{1}{(k+1)!}
\sum_{\sigma\in S_{k+1}} \sign(\sigma)P^\sigma\,.
\end{equation}
Clearly, $\langle P\rangle^-=P$ if and only if $P$ is totally skewsymmetric.
Note that if $P$ is already skewsymmetric, then it total skewsymmetrization is
\begin{equation}\label{110120:eq6}
\langle P\rangle^-
=
\frac{1}{k+1}
\Big(P-
\sum_{\alpha=1}^k P^{\tau_\alpha}\Big)\,.
\end{equation}

We define a structure of a $\Mat_{\ell\times\ell}(\mc F[\partial])$-module
on the space of $k$-differential operators on $\mc F^\ell$ as follows.
For an $\ell\times\ell$ matrix differential operator
$K(\partial)=\big(K_{ij}(\partial)\big)_{i,j\in\{1,\dots,\ell\}}$
and an array $P$ as in \eqref{110126:eq0},
we let
\begin{equation}\label{110126:eq4}
(K\circ P)_{i_0,i_1,\dots,i_k}(\lambda_1,\dots,\lambda_k)
=
\sum_{j=1}^\ell
K_{i_0,j}(\lambda_1+\dots+\lambda_k+\partial)P_{j,i_1,\dots,i_k}(\lambda_1,\dots,\lambda_k)\,.
\end{equation}
Clearly, if $P$ is skewsymmetric so is $K\circ P$,
but in general it is not totally skewsymmetric, even if $P$ is totally skewsymmetric.

Note that 0-differential operators on $\mc F^\ell$ are elements of $\mc F^\ell$,
and in this case we have $\langle P\rangle^-=P$ and $K\circ P=K(\partial)P$
for every $P\in\mc F^\ell$.
The space of $1$-differential operators on $\mc F^\ell$
is identified with $\Mat_{\ell\times\ell}(\mc F[\partial])$ by letting $\lambda=\partial$.
In this case the action of $\tau=(0,1)\in S_2$, defined in \eqref{110126:eq3},
coincides with taking the adjoint matrix differential operator: $P^\tau=P^*$,
and the $\Mat_{\ell\times\ell}(\mc F[\partial])$-module structure defined in \eqref{110126:eq4}
coincides with the left multiplication in $\Mat_{\ell\times\ell}(\mc F[\partial])$.

In order to simplify notation, we let
\begin{equation}\label{110313:eq2}
\lambda_0
=
-\lambda_1-\dots-\lambda_k-\partial\,.
\end{equation}
With this notation, the action of the symmetric group $S_{k+1}$ given by \eqref{110126:eq1}
and \eqref{110126:eq3} reads
$$
P^\sigma_{i_0,i_1,\dots,i_k}(\lambda_1,\dots,\lambda_k)
=
P_{i_{\sigma^{-1}(0)},i_{\sigma^{-1}(1)},\dots,i_{\sigma^{-1}(k)}}
(\lambda_{\sigma^{-1}(1)},\dots,\lambda_{\sigma^{-1}(k)})
\,,
$$
where $\lambda_0$, when it appears, acts from the left on the coefficients of $P$.
Moreover, the $\Mat_{\ell\times\ell}(\mb F[\partial])$-module structure \eqref{110126:eq4}
becomes
$$
(K\circ P)_{i_0,i_1,\dots,i_k}(\lambda_1,\dots,\lambda_k)
=
\sum_{j=1}^\ell
P_{j,i_1,\dots,i_k}(\lambda_1,\dots,\lambda_k) K^*_{j,i_0}(\lambda_0) \,,
$$
where again $\lambda_0$ is assumed to be moved to the left.
If $P$ is a skewsymmetric $k$-differential operator on $\mc F^\ell$,
we then have, by \eqref{110120:eq6},
\begin{equation}\label{110222:eq4}
\begin{array}{l}
\langle K\circ P\rangle^-_{i_0,i_1,\dots,i_k}(\lambda_1,\dots,\lambda_k) \\
\displaystyle{
=\frac1{k+1}
\sum_{\alpha=0}^k (-1)^\alpha
\sum_{j=1}^\ell
P_{j,i_0,\stackrel{\alpha}{\check{\dots}},i_k}
(\lambda_0,\stackrel{\alpha}{\check{\dots}},\lambda_k)
K^*_{j,i_\alpha}(\lambda_\alpha)\,.
}
\end{array}
\end{equation}

In this section we will generalize the results of Sections \ref{app:1} and \ref{app:4}
to the case of $k$-differential operators on $\mc F^\ell$ for arbitrary $k\geq0$.
First, we introduce some more notation.
Given a $(k+1)$-tuple in $\mb Z_+^{k+1}$, written with Latin letters,
we use the corresponding Greek letters to denote its non decreasing reordering;
for example,  $\mu_0\geq\mu_1\geq\dots\geq\mu_k$ will denote the reordering of
$(m_0,m_1,\dots,m_k)\in\mb Z_+^{k+1}$,
while $\nu_0\geq\nu_1\geq\dots\geq\nu_k$ will denote
the reordering of $(n_0,n_1,\dots,n_k)\in\mb Z_+^{k+1}$.
\begin{lemma}\label{110311:lem0}
Let $V$ be a vector space over a field $\mc C$ and let $\partial$ be an endomorphism of $V$.
Let $v_{m_0,\dots,m_k}$ be elements of $V$, labeled by the indices
$m_0,\dots,m_k\in\{0,\dots,N\}$ satisfying $\mu_0-\mu_1=1$.
Then they satisfy
the following equation in $\mc C[\lambda_1,\dots,\lambda_k]\otimes V$,
where $\lambda_0$ is as in \eqref{110313:eq2},
\begin{equation}\label{110312:eq1}
\sum_{\substack{m_0,\dots,m_k=0 \\ (\mu_0-\mu_1=1)}}^N
\lambda_0^{m_0}\lambda_1^{m_1}\dots\lambda_k^{m_k}
v_{m_0,\dots,m_k}
=0
\,,
\end{equation}
if and only if the following two conditions hold:
\begin{enumerate}[(i)]
\item
$v_{m_0,\dots,m_k}=v_{m'_0,\dots,m'_k}$
if the $(k+1)$-tuple $(m'_0,\dots,m'_k)$ is obtained from
$(m_0,\dots,m_k)$ by a transposition of entries $m_i=\mu_0$ and $m_j=\mu_1$;
\item
given a $(k+1)$-tuple $(m_0,\dots,m_k)\in\{0,\dots,N\}^{k+1}$ satisfying $\mu_0-\mu_1=1$,
subtracting 1 from the maximal entry $m_i=\mu_0$,
we get a $(k+1)$-tuple $(n_0,\dots,n_k)\in\{0,\dots,N-1\}^{k+1}$ satisfying $\nu_0=\nu_1$,
and we denote $w_{n_0,\dots,n_k}=v_{m_0,\dots,m_k}$
(note that, by condition (i), $w_{n_0,\dots,n_k}$ is well defined);
these elements should satisfy the following system of equations:
\begin{equation}\label{110523:eq1}
\partial w_{n_0,\dots,n_k}
+\sum_{h=0}^k w_{n_0,\dots,n_h-1,\dots,n_k}=0\,,
\end{equation}
where the summand $w_{n_0,\dots,n_h-1,\dots,n_k}$ in the LHS is considered to be zero
unless $n_h\geq1$ and the maximal two of the indices
$n_0,\dots,n_h-1,\dots,n_k$ are equal.
\end{enumerate}
In particular, if $\partial$ is injective, then equation \eqref{110312:eq1}
has only the zero solution.
\end{lemma}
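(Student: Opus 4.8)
The plan is to reformulate \eqref{110312:eq1} as a divisibility statement and then reduce both implications to a single bookkeeping computation. Regard the left-hand side of \eqref{110312:eq1} as the image of the formal polynomial $\tilde P=\sum_{(m):\,\mu_0-\mu_1=1}\lambda_0^{m_0}\cdots\lambda_k^{m_k}v_{m_0,\dots,m_k}\in\mc C[\lambda_0,\dots,\lambda_k]\otimes V$ under the substitution map $\operatorname{ev}\colon\lambda_0\mapsto-\lambda_1-\dots-\lambda_k-\partial$ (with $\partial$ acting on $V$). Let $T=\lambda_0+\lambda_1+\dots+\lambda_k+\partial$, viewed as the endomorphism of $\mc C[\lambda_0,\dots,\lambda_k]\otimes V$ given by multiplication by $\lambda_0+\dots+\lambda_k$ together with the action of $\partial$ on $V$. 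First I would establish two facts about $T$: that $T$ is injective, by comparing top $\lambda$-degree homogeneous components (multiplication by the degree-one element $\lambda_0+\dots+\lambda_k$ is injective on the polynomial ring, and tensoring over the field $\mc C$ with the vector space $V$ preserves injectivity); and that $\ker(\operatorname{ev})=\operatorname{im}(T)$, proved by the division algorithm in $\lambda_0$: since $\lambda_0=T-(\lambda_1+\dots+\lambda_k)-\partial$, multiplication by $\lambda_0$ agrees modulo $\operatorname{im}(T)$ with $\operatorname{ev}$, so every element is congruent mod $\operatorname{im}(T)$ to its (necessarily $\lambda_0$-free) image under $\operatorname{ev}$; hence $\mc C[\lambda_0,\dots,\lambda_k]\otimes V=(\text{$\lambda_0$-free part})\oplus\operatorname{im}(T)$ with $\operatorname{ev}$ the projection. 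Consequently \eqref{110312:eq1} is equivalent to $\tilde P\in\operatorname{im}(T)$.

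The computation underlying everything is the following. Writing $\lambda^n:=\lambda_0^{n_0}\cdots\lambda_k^{n_k}$, for any finitely supported array $Q=\sum_n\lambda^n w_{n}$ the coefficient of $\lambda^{m}$ in $TQ$ equals $\partial w_{m}+\sum_{h:\,m_h\geq1}w_{m-e_h}$, where $e_h$ is the $h$-th unit vector. I would then read this off according to the shape of the tuple $m$, using the elementary observation that incrementing the unique maximum of a tuple with $\nu_0=\nu_1$ produces a tuple with $\mu_0-\mu_1=1$, whereas lowering a non-maximal entry keeps the maximum unique.

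For sufficiency I would use condition (i) to define the array $w_{n}$ (for $\nu_0=\nu_1$) so that $v_m=w_{\mathrm{dec}(m)}$, where $\mathrm{dec}(m)$ decrements the unique maximum of $m$; then $Q=\sum_{(n):\,\nu_0=\nu_1}\lambda^n w_n$ satisfies $TQ=\tilde P$. Indeed, by the coefficient formula the coefficient of $\lambda^m$ in $TQ$ equals $v_m$ when $\mu_0-\mu_1=1$ (only the index $h$ decrementing the maximum contributes), vanishes automatically when $\mu_0-\mu_1\geq2$, and vanishes when $\mu_0=\mu_1$ precisely by the recurrence \eqref{110523:eq1}, i.e. by (ii). Hence $\tilde P\in\operatorname{im}(T)$, so \eqref{110312:eq1} holds.

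For necessity, \eqref{110312:eq1} yields a finitely supported array $\{w_n\}$ with $\partial w_m+\sum_{h:\,m_h\geq1}w_{m-e_h}$ equal to $v_m$ if $\mu_0-\mu_1=1$ and to $0$ otherwise. The key step, which I expect to be the main obstacle, is to show $w_n=0$ for every $n$ with a unique maximum ($\nu_0>\nu_1$); I would prove this by downward induction on the maximal entry $\nu_0$, noting that applying the vanishing equation at the tuple $n+e_i$ (incrementing the unique maximum, which raises $\mu_0-\mu_1$ to at least $2$) expresses $w_n$ through $\partial w_{n+e_i}$ and the $w_{n+e_i-e_h}$ with $h\neq i$, all of which have strictly larger maximal entry and still a unique maximum, hence vanish by the inductive hypothesis. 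Once $\{w_n\}$ is supported on $\{\nu_0=\nu_1\}$, reading the defining equations at the tuples with $\mu_0-\mu_1=1$ gives $v_m=w_{\mathrm{dec}(m)}$, which yields (i) because the transposition described in (i) leaves $\mathrm{dec}(m)$ unchanged; reading them at the tuples with $\mu_0=\mu_1$ gives exactly \eqref{110523:eq1}, i.e. (ii). Finally, for the last assertion, if $\partial$ is injective then (ii) forces $\{w_n\}=0$ by upward induction on the total degree $|n|$ (the recurrence $\partial w_n=-\sum_h w_{n-e_h}$ has only strictly lower-degree terms on the right), and therefore $\{v_m\}=0$.
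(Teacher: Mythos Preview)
Your proof is correct. The reformulation of \eqref{110312:eq1} as the divisibility statement $\tilde P\in\operatorname{im}(T)$ is sound: injectivity of $T$ follows from the top-degree argument you give, and the division algorithm in $\lambda_0$ does yield the decomposition $\mc C[\lambda_0,\dots,\lambda_k]\otimes V=(\text{$\lambda_0$-free part})\oplus\operatorname{im}(T)$, so that $\ker(\operatorname{ev})=\operatorname{im}(T)$. The coefficient identity for $TQ$ and the case analysis on the shape of $m$ are accurate, and the downward induction on $\nu_0$ in the necessity direction goes through because $Q$ is a polynomial (hence finitely supported), which furnishes the base case.

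Your sufficiency argument is essentially the paper's computation repackaged: the paper rewrites the left-hand side of \eqref{110312:eq1} via condition (i) as a sum over tuples with $\nu_0=\nu_1$, applies the relation $\sum_i\lambda_i=-\partial$, and arrives at an expression that vanishes by \eqref{110523:eq1}; this is exactly your verification that $TQ=\tilde P$ read from the other side. The necessity arguments, however, are genuinely different. The paper proceeds by induction on $N$, extracting condition (i) for $\mu_0=N$ from the coefficient of $\lambda_j^{2N-1}$ and condition (ii) from the coefficient of $\lambda_j^{2N-2}$, then reducing to $N-1$. You instead take the unique preimage $Q$ and prune its support to $\{\nu_0=\nu_1\}$ by a downward induction that exploits the vanishing at tuples with $\mu_0-\mu_1\ge 2$. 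Your route is more structural: it isolates the role of the operator $T$ and makes the link between the $v$'s and the $w$'s (via $v_m=w_{\mathrm{dec}(m)}$) transparent from the start, whereas the paper's coefficient-extraction argument is more hands-on but avoids introducing $Q$ at all. Both yield the final assertion on injective $\partial$ by the same upward induction on $|n|$.
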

\begin{proof}
First, we prove that if elements $v_{m_0,\dots,m_k}\!\in V$, labeled by
$m_0,\dots,m_k\!\in$ $\{0,\dots,N\}$ such that $\mu_0-\mu_1=1$,
satisfy conditions (i) and (ii) of the lemma,
then equation \eqref{110312:eq1} holds.
By condition (i), the LHS of \eqref{110312:eq1} is equal to
\begin{equation}\label{110524:eq1}
\sum_{\substack{n_0,\dots,n_k=0 \\ (\nu_0=\nu_1)}}^{N-1}
\sum_{i\,|\,n_i=\nu_0}\lambda_0^{n_0}\dots\lambda_i^{n_i+1}\dots\lambda_k^{n_k}
w_{n_0,\dots,n_k}\,.
\end{equation}
Since $\lambda_0+\dots+\lambda_k+\partial=0$, \eqref{110524:eq1}
can be rewritten as
$$
\begin{array}{c}
\displaystyle{
-\sum_{\substack{n_0,\dots,n_k=0 \\ (\nu_0=\nu_1)}}^{N-1}
\sum_{j\,|\,n_j<\nu_0}\lambda_0^{n_0}\dots\lambda_j^{n_j+1}\dots\lambda_k^{n_k}
w_{n_0,\dots,n_k}
} \\
\displaystyle{
-\sum_{\substack{n_0,\dots,n_k=0 \\ (\nu_0=\nu_1)}}^{N-1}
\lambda_0^{n_0}\dots\lambda_k^{n_k}
\partial w_{n_0,\dots,n_k}\,.
}
\end{array}
$$
Renaming $n_j+1$ by $n_j$, the above expression becomes
\begin{equation}\label{110524:eq2}
-\sum_{\substack{n_0,\dots,n_k=0 \\ (\nu_0=\nu_1)}}^{N-1}
\lambda_0^{n_0}\dots\lambda_k^{n_k}
\Big(\sum_{j=0}^k w_{n_0,\dots,n_j-1,\dots,n_k}+\partial w_{n_0,\dots,n_k}\Big)\,,
\end{equation}
which is zero by equation \eqref{110523:eq1}.

Next, assuming that equation \eqref{110312:eq1} holds, we prove, by induction on $N$,
that conditions (i) and (ii) necessarily hold.
Recalling \eqref{110313:eq2}, the coefficient of $\lambda_1^{2N-1}$
in the LHS of \eqref{110312:eq1} is
$$
(-1)^N\sum_{m_2,\dots,m_k=0}^{N-1}\lambda_2^{m_2}\dots\lambda_k^{m_k}
\big(v_{N,N-1,m_2,\dots,m_k}-v_{N-1,N,m_2,\dots,m_k}\big)\,,
$$
so that, necessarily, $v_{N,N-1,m_2,\dots,m_k}=v_{N-1,N,m_2,\dots,m_k}$
for all $m_2,\dots,m_k\in\{0,\dots,N-1\}$.
More generally, if we replace $\lambda_i$
by $-\lambda_0-\stackrel{i}{\check{\dots}}-\lambda_k-\partial$
and we consider the LHS of \eqref{110312:eq1} as a polynomial in the variables
$\lambda_0,\stackrel{i}{\check{\dots}},\lambda_k$,
we conclude, by looking at the coefficient of $\lambda_j^{2N-1}$ in equation \eqref{110312:eq1},
that
$$
v_{m_0,\dots,\stackrel{i}{\check{N}},\dots,\stackrel{j}{\check{N-1}},\dots,m_k}
=
v_{m_0,\dots,\stackrel{i}{\check{N-1}},\dots,\stackrel{j}{\check{N}},\dots,m_k}\,,
$$
i.e. condition (i) holds for all the elements $v_{m_0,\dots,m_k}$ with $\mu_0=N$,
and we can introduce the notation $w_{n_0,\dots,n_k}$
for indices $n_0,\dots,n_k\in\mb Z_+$ satisfying $\nu_0=\nu_1=N-1$.
Using this fact, by the same computation that we did in the first part of the proof
to derive formula \eqref{110524:eq2},
we get that equation \eqref{110312:eq1} is equivalent to
\begin{equation}\label{110524:eq3}
-\sum_{\substack{n_0,\dots,n_k\in\mb Z_+ \\ (\nu_0=\nu_1=N-1)}}
\lambda_0^{n_0}\dots\lambda_k^{n_k}
x_{n_0,\dots,n_k}
+ \sum_{\substack{m_0,\dots,m_k=0 \\ (\mu_0-\mu_1=1)}}^{N-1}
\lambda_0^{m_0}\lambda_1^{m_1}\dots\lambda_k^{m_k}
v_{m_0,\dots,m_k}
=0 \,,
\end{equation}
where
$$
x_{n_0,\dots,n_k}
=
\sum_{j=0}^k w_{n_0,\dots,n_j-1,\dots,n_k}+\partial w_{n_0,\dots,n_k}\,.
$$
The coefficient of $\lambda_1^{2N-2}\lambda_2^{n_2}\!\dots\lambda_k^{n_k}$
in the LHS of \eqref{110524:eq3} is $x_{N-1,N-1,n_2,\dots,n_k}$,
and, in general,
if we replace $\lambda_i$
by $-\lambda_0-\stackrel{i}{\check{\dots}}-\lambda_k-\partial$,
the coefficient of $\lambda_j^{2N-2}\lambda_0^{n_0}\stackrel{i,j}{\check{\dots}}\lambda_k^{n_k}$
in LHS of \eqref{110524:eq3}
is $x_{n_0,\dots,\stackrel{i}{\check{N-1}},\dots,\stackrel{j}{\check{N-1}},\dots,n_k}$.
Hence, equation \eqref{110524:eq3}
implies that $x_{n_0,\dots,n_k}=0$ for all indices $n_0,\dots,n_k$.
In other words, the elements $v_{m_0,\dots,m_k}$ satisfy condition (ii) when $\mu_0=N$.
Finally, equation \eqref{110524:eq3} allows us to replace $N$ by $N-1$
in equation \eqref{110312:eq1},
so that the claim follows by the inductive assumption.

For the last statement of the lemma,
if $\partial$ is injective equation \eqref{110523:eq1} implies that $w_{n_0,\dots,n_k}=0$,
by induction on $\sum_in_i$.
\end{proof}
\begin{remark}
The system of equations \eqref{110523:eq1} has the form
\begin{equation}\label{110524:eq4}
\partial Y=AY
\end{equation}
where $Y\in V^d$, $\partial\in\End(V)$ and $A\in\Mat_{d\times d}(\mc C)$ is a nilpotent matrix.
In this case, the dimension of the space of solutions of \eqref{110524:eq4}
is equal to
$\sum_i\dim(\ker \partial^{d_i+1})$, where $d_i$ are the sizes
of the Jordan blocks of $A$.
Indeed, if $A$ is a Jordan block of size $d$,
it is immediate to check that the space of solutions
of \eqref{110524:eq4} has dimension equal to $\dim(\ker \partial^{d+1})$,
and the general case reduces to this one.
\end{remark}
\begin{lemma}\label{110316:lem1}
For $n_0,\dots,n_k\in\mb Z_+$, and $m_0,\dots,m_k\in\mb Z_+$ such that $\mu_0-\mu_1=0$ or $1$,
we define the integers $b^{n_0,\dots,n_k}_{m_0,\dots,m_k}$ recursively by
the following formulas:
\begin{enumerate}[(i)]
\item
if $\nu_0-\nu_1=0$ or $1$, we let $b^{n_0,\dots,n_k}_{m_0,\dots,m_k}=\delta_{m_0,n_0}\dots\delta_{m_k,n_k}$,
\item
if $n_\alpha=\nu_0\geq\nu_1+2$, we let
\begin{equation}\label{110316:eq2}
b^{n_0,\dots,n_k}_{m_0,\dots,m_k}
=
-\sum_{\beta\neq\alpha}
b^{n_0,\dots n_\beta+1,\dots,n_\alpha-1,\dots,n_k}_{m_0,\dots,m_k}
-b^{n_0,\dots,n_\alpha-1,\dots,n_k}_{m_0,\dots,m_k}\,.
\end{equation}
\end{enumerate}
Then the following identity holds
in $\mb F[\lambda_1,\dots,\lambda_k,\partial]$, for every $n_0,\dots,n_k\in\mb Z_+$:
\begin{equation}\label{110316:eq1}
\lambda_0^{n_0}\lambda_1^{n_1}\dots\lambda_k^{n_k}
=
\sum_{\substack{m_0,\dots,m_k\in\mb Z_+ \\ (\mu_0-\mu_1=0 \text{ or } 1)}}
b^{n_0,\dots,n_k}_{m_0,\dots,m_k}
\lambda_0^{m_0}\lambda_1^{m_1}\dots\lambda_k^{m_k}
\partial^{\sum_i(n_i-m_i)}\,,
\end{equation}
Furthermore, if $n_\alpha=\nu_0$, then the coefficient $b^{n_0,\dots,n_k}_{m_0,\dots,m_k}$ is zero
unless $m_\alpha=\mu_0\leq\nu_0$,
$m_\beta\geq n_\beta$ for every $\beta\neq\alpha$,
and $\sum_i(n_i-m_i)\geq0$.
In particular, if $\mu_0=\nu_0$, then $b^{n_0,\dots,n_k}_{m_0,\dots,m_k}$ is zero
unless $m_i=n_i$ for every $i=0,\dots,k$.
\end{lemma}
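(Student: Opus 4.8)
The statement defines integers $b^{n_0,\dots,n_k}_{m_0,\dots,m_k}$ recursively and asserts (1) the expansion identity \eqref{110316:eq1} holds in $\mb F[\lambda_1,\dots,\lambda_k,\partial]$ with $\lambda_0=-\lambda_1-\dots-\lambda_k-\partial$, and (2) certain vanishing/support constraints on the coefficients $b$.

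Let me sketch my proof plan.

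**The plan is to prove the expansion identity \eqref{110316:eq1} by induction on $\nu_0-\nu_1$**, the gap between the largest and second-largest of the exponents $n_0,\dots,n_k$. The base case $\nu_0-\nu_1\in\{0,1\}$ is immediate: by definition (i), $b^{n_0,\dots,n_k}_{m_0,\dots,m_k}=\prod_i\delta_{m_i,n_i}$, so the right-hand side of \eqref{110316:eq1} collapses to the single term $\lambda_0^{n_0}\cdots\lambda_k^{n_k}\partial^0$, which equals the left-hand side. For the inductive step, suppose $n_\alpha=\nu_0\geq\nu_1+2$. The key algebraic input is the substitution $\lambda_0=-\lambda_1-\dots-\lambda_k-\partial$, which I would apply in the form
$$
\lambda_\alpha=-\lambda_0-\sum_{\beta\neq\alpha}\lambda_\beta-\partial
$$
(moving $\partial$ appropriately). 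Writing $\lambda_0^{n_0}\cdots\lambda_k^{n_k}=\lambda_\alpha\cdot\big(\lambda_0^{n_0}\cdots\lambda_\alpha^{n_\alpha-1}\cdots\lambda_k^{n_k}\big)$ and replacing one factor of $\lambda_\alpha$ by this sum, I obtain
$$
\lambda_0^{n_0}\cdots\lambda_k^{n_k}
=-\sum_{\beta\neq\alpha}\lambda_0^{n_0}\cdots\lambda_\beta^{n_\beta+1}\cdots\lambda_\alpha^{n_\alpha-1}\cdots\lambda_k^{n_k}
-\lambda_0^{n_0}\cdots\lambda_\alpha^{n_\alpha-1}\cdots\lambda_k^{n_k}\,\partial\,.
$$
Each monomial on the right now has its maximal exponent reduced (the new top exponent is at most $\nu_0-1$, and the gap to the next has strictly decreased in the relevant sense), so the inductive hypothesis applies to expand each of them via \eqref{110316:eq1}. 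Substituting those expansions and collecting terms reproduces exactly the recursion \eqref{110316:eq2} defining $b^{n_0,\dots,n_k}_{m_0,\dots,m_k}$, since the $\partial$ factor raises the power $\sum_i(n_i-m_i)$ by one, matching the bookkeeping. This establishes \eqref{110316:eq1}.

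**For the support constraints**, I would again argue by induction on $\nu_0-\nu_1$, tracking the three conditions simultaneously: (a) $m_\alpha=\mu_0\leq\nu_0$; (b) $m_\beta\geq n_\beta$ for $\beta\neq\alpha$; (c) $\sum_i(n_i-m_i)\geq0$. In the base case these hold trivially because $b$ is a product of Kronecker deltas, forcing $m_i=n_i$. In the inductive step, I inspect the two families of terms in \eqref{110316:eq2}. The term $b^{n_0,\dots,n_\alpha-1,\dots,n_k}_{m_0,\dots,m_k}$ has one unit removed from the top index, so by induction its support satisfies (a)--(c) with $n_\alpha$ replaced by $n_\alpha-1$, and I check these degrade correctly to (a)--(c) for the original indices (in particular $\mu_0\leq\nu_0-1\leq\nu_0$ and the total-degree balance picks up the right offset). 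The terms $b^{\dots,n_\beta+1,\dots,n_\alpha-1,\dots}_{m_0,\dots,m_k}$ shift a unit from position $\alpha$ to position $\beta$; here I must verify that the inequality $m_\beta\geq n_\beta$ survives the shift $n_\beta\mapsto n_\beta+1$ in the intermediate index, which is where the monotonicity of the recursion is genuinely used. The final assertion—that $\mu_0=\nu_0$ forces $m_i=n_i$ for all $i$—then follows by combining (a)--(c): if the top exponent is not lowered, constraints (b) and (c) together with $\sum m_i=\sum n_i$ pin down every $m_i$.

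**The main obstacle** I anticipate is the combinatorial bookkeeping in the inductive step for the support constraints, specifically verifying that condition (b), $m_\beta\geq n_\beta$, is preserved under the index-shifting terms of \eqref{110316:eq2}. The subtlety is that raising $n_\beta$ to $n_\beta+1$ in the recursive call makes the inductive hypothesis \emph{stronger} on that coordinate ($m_\beta\geq n_\beta+1$), which is what I need, but one has to be careful that the reindexing of which position is maximal ($\alpha$ versus the shifted positions) does not break the hypothesis's applicability—i.e. that after the shift the tuple still has a well-defined maximum with the correct gap to trigger the next layer of the recursion. I would organize this by introducing a partial order on tuples (lexicographic on the reordered exponents) and checking the recursion strictly decreases in it, so the induction is well-founded; the identity \eqref{110316:eq1} itself is then a comparatively routine verification once this ordering and the substitution $\lambda_0=-\lambda_1-\dots-\lambda_k-\partial$ are in place.
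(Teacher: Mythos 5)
Your proposal is correct and takes essentially the same route as the paper: both prove \eqref{110316:eq1} by induction on $\nu_0-\nu_1$, substituting $\lambda_\alpha=-\lambda_0-\stackrel{\alpha}{\check{\dots}}-\lambda_k-\partial$ into one factor of $\lambda_\alpha^{n_\alpha}$ to obtain the three-term identity whose inductive expansion reproduces the recursion \eqref{110316:eq2}, and both establish the support constraints by a parallel induction on the same quantity. The well-foundedness issue you flag needs no lexicographic order: the gap $\nu_0-\nu_1$ itself strictly decreases in every term of \eqref{110316:eq2} (the new maximum is $\nu_0-1$, still attained at position $\alpha$, while the new second maximum is at least $\nu_1$), which is exactly what the paper's induction uses.
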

\begin{proof}
If $\nu_0-\nu_1=0$ or $1$, then obviously \eqref{110316:eq1} holds
for $b^{n_0,\dots,n_k}_{m_0,\dots,m_k}=\delta_{m_0,n_0}\delta_{m_1,n_1}\dots\delta_{m_k,n_k}$.
If $n_\alpha=\nu_0\geq\nu_1+2$,
substituting
$\lambda_\alpha=-\lambda_0-\stackrel{\alpha}{\check{\dots}}-\lambda_k-\partial$,
we get,
\begin{equation}\label{110316:eq3}
\lambda_0^{n_0}\lambda_1^{n_1}\dots\lambda_k^{n_k}
=
-\sum_{\beta\neq\alpha}
\lambda_0^{n_0}\!\!\!\dots\!\lambda_\beta^{n_\beta+1}\!\!\!\dots\!\lambda_\alpha^{n_\alpha-1}\!\!\!\dots\!\lambda_k^{n_k}
-\lambda_0^{n_0}\!\!\!\dots\!\lambda_\alpha^{n_\alpha-1}\!\!\!\dots\!\lambda_k^{n_k}\partial \,.
\end{equation}
By induction on $\nu_0-\nu_1$, the RHS of \eqref{110316:eq3} is
$$
\begin{array}{l}
\displaystyle{
\sum_{\substack{m_0,\dots,m_k\in\mb Z_+ \\ (\mu_0-\mu_1=0 \text{ or } 1)}}
\Big(
-\sum_{\beta\neq\alpha}
b^{n_0,\dots,n_\beta+1,\dots,n_\alpha-1,\dots,n_k}_{m_0,\dots,m_k}
-b^{n_0,\dots,n_\alpha-1,\dots,n_k}_{m_0,\dots,m_k}
\Big)
} \\
\displaystyle{
\times \lambda_0^{m_0}\dots\lambda_k^{m_k}
\partial^{\sum_i(n_i-m_i)}
=\!\!\!
\sum_{\substack{m_0,\dots,m_k\in\mb Z_+ \\ (\mu_0-\mu_1=0 \text{ or } 1)}}
\!\!\!
b^{n_0,\dots,n_k}_{m_0,\dots,m_k}
\lambda_0^{m_0}\dots\lambda_k^{m_k}
\partial^{\sum_i(n_i-m_i)}\,,
}
\end{array}
$$
proving \eqref{110316:eq1}.
The last statement of the lemma is obvious for $\nu_0-\nu_1=0$ or $1$,
while, for $n_\alpha=\nu_0\geq\nu_1+2$, it follows
by the recursive formula \eqref{110316:eq2} and an easy induction on $\nu_0-\nu_1$.
\end{proof}
\begin{lemma}\label{110311:lem1}
\begin{enumerate}[(a)]
\item
There exist unique numbers $c^{n_0,\dots,n_k}_{m_0,\dots,m_k}$,
for $n_0,\dots,n_k\in\mb Z_+$, and $m_0,\dots,m_k\in\mb Z_+$ satisfying $\mu_0-\mu_1=1$,
such that the following identity holds
in $\mb F[\lambda_1,\dots,\lambda_k,\partial^{\pm1}]$:
\begin{equation}\label{110311:eq1}
\lambda_0^{n_0}\lambda_1^{n_1}\dots\lambda_k^{n_k}
=
\sum_{\substack{m_0,\dots,m_k\in\mb Z_+ \\ (\mu_0-\mu_1=1)}}
c^{n_0,\dots,n_k}_{m_0,\dots,m_k}
\lambda_0^{m_0}\lambda_1^{m_1}\dots\lambda_k^{m_k}
\partial^{\sum_i(n_i-m_i)}\,,
\end{equation}
for every $n_0,\dots,n_k\in\mb Z_+$,
where $\lambda_0$ is as in \eqref{110313:eq2}.
\item
The coefficients $c^{n_0,\dots,n_k}_{m_0,\dots,m_k}$ in part (a)
are integers satisfying the following properties:
\begin{enumerate}[(i)]
\item
if $\nu_0=\nu_1+1$, then
$c^{n_0,\dots,n_k}_{m_0,\dots,m_k}=\delta_{m_0,n_0}\delta_{m_1,n_1}\dots\delta_{m_k,n_k}$;
\item
$c^{n_{\sigma(0)},\dots,n_{\sigma(k)}}_{m_{\sigma(0)},\dots,m_{\sigma(k)}}
=
c^{n_0,\dots,n_k}_{m_0,\dots,m_k}$
for all $\sigma\in Perm(0,1,\dots,k)$;
\item
for every $n_0,\dots,n_k,m_0,\dots,m_k\in\mb Z_+$
such that $\mu_0-\mu_1=1$, we have the following recurrent formulas
\begin{equation}\label{110313:eq5}
c^{n_0,\dots,n_k}_{m_0,\dots,m_k}
=
-\sum_{\alpha=0}^k
c^{n_0,\dots,n_\alpha+1,\dots,n_k}_{m_0,\dots,m_k}
\,,
\end{equation}
and, for every $\alpha=0,\dots,k$,
\begin{equation}\label{110313:eq6}
c^{n_0,\dots,n_k}_{m_0,\dots,m_k}
=
-\sum_{\beta\neq\alpha}
c^{n_0,\dots,n_\beta+1,\dots,n_\alpha-1,\dots,n_k}_{m_0,\dots,m_k}
-c^{n_0,\dots,n_\alpha-1,\dots,n_k}_{m_0,\dots,m_k}\,.
\end{equation}
\item
$c^{n_0,\dots,n_k}_{m_0,\dots,m_k}$ is zero unless $\mu_0\,\big(=\mu_1+1\big)\,\leq\nu_0+1$;
\item
if $\nu_0>\nu_1$,
then $c^{n_0,\dots,n_k}_{m_0,\dots,m_k}$ is zero unless $\mu_0\leq\nu_0$;
\item
if $n_\alpha=\nu_0$,
then $c^{n_0,\dots,n_k}_{m_0,\dots,m_k}$ is zero unless $m_\alpha\geq \max(\mu_1,\nu_1)$;
\item
if $n_\beta\leq\nu_1$,
then $c^{n_0,\dots,n_k}_{m_0,\dots,m_k}$ is zero unless $m_\beta\geq n_\beta$.
\end{enumerate}
\end{enumerate}
\end{lemma}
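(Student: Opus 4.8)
The plan is to prove part (a) first and then deduce every assertion of part (b) from the existence-and-uniqueness statement together with the two elementary relations coming from $\lambda_0+\lambda_1+\dots+\lambda_k+\partial=0$ (recall the convention $\lambda_0=-\lambda_1-\dots-\lambda_k-\partial$). The organizing principle is a grading: I work in $\mb F[\lambda_1,\dots,\lambda_k,\partial^{\pm1}]$, graded by total degree with $\partial$ and all $\lambda_i$ of degree $1$. After the substitution for $\lambda_0$, the monomial $\lambda_0^{m_0}\dots\lambda_k^{m_k}\partial^{\sum_i(n_i-m_i)}$ is homogeneous of degree $\sum_i n_i$, so the right-hand side of \eqref{110311:eq1} lies entirely in the degree-$(\sum_i n_i)$ component; this is exactly what forces the exponent of $\partial$ in each term and lets me argue one homogeneous degree at a time. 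For uniqueness I would take a vanishing finite combination $\sum_{(\mu_0-\mu_1=1)}c_m\lambda_0^{m_0}\dots\lambda_k^{m_k}\partial^{s_m}=0$, where homogeneity forces $s_m=D-\sum_i m_i$ in degree $D$, set $V=\mb F[\partial,\partial^{-1}]$ with $\partial$ acting by (injective) multiplication, and put $v_m=c_m\partial^{s_m}\in V$. Then $\sum_m\lambda_0^{m_0}\dots\lambda_k^{m_k}v_m=0$ in $\mb F[\lambda_1,\dots,\lambda_k]\otimes V$; bounding the finitely many $m_i$ that occur by some $N$ and applying Lemma \ref{110311:lem0} in the injective-$\partial$ case forces every $v_m=0$, hence every $c_m=0$.

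For existence I would bootstrap from Lemma \ref{110316:lem1}, which already expands any monomial as an integer combination of monomials with $\mu_0-\mu_1\in\{0,1\}$, so that only the ``gap-zero'' monomials (those with $\nu_0=\nu_1$) remain to be rewritten. For such a monomial I would apply $\lambda_0^{n_0}\dots\lambda_k^{n_k}=-\partial^{-1}\sum_{\alpha}\lambda_0^{n_0}\dots\lambda_\alpha^{n_\alpha+1}\dots\lambda_k^{n_k}$; this is precisely where $\partial^{-1}$ is needed, explaining the appearance of $\partial^{\pm1}$. If $\alpha$ is one of the (at least two) positions attaining the maximum $M$, the resulting monomial has a single entry $M+1$ and a second entry still equal to $M$, hence is reduced; if $\alpha$ is a non-maximal position, the maximum stays $M$ while the potential $\Phi=\sum_{\beta:\,n_\beta<M}(M-n_\beta)$ strictly drops. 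Inducting on $\Phi$ (with base case $\Phi=0$, where every term of the relation is already reduced) eliminates all gap-zero monomials; all coefficients produced are integers, since the relation has coefficients $\pm1$, and homogeneity keeps the $\partial$-exponents consistent when composing with Lemma \ref{110316:lem1}.

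Part (b) then follows from (a). Property (i) is immediate, since a monomial with $\nu_0=\nu_1+1$ is itself reduced, so uniqueness gives $c^{n_0,\dots,n_k}_{m_0,\dots,m_k}=\delta_{m_0,n_0}\dots\delta_{m_k,n_k}$. Property (ii) holds because the relation $\sum_i\lambda_i+\partial=0$ and the reducedness condition $\mu_0-\mu_1=1$ are invariant under the simultaneous action of $Perm(0,\dots,k)$ on indices and variables, so uniqueness forces $c$ to be symmetric. For (iii) I would apply the two relations term-by-term to \eqref{110311:eq1}: using $\sum_\alpha\lambda_\alpha=-\partial$ yields a second expansion whose reduced terms carry exactly the $\partial$-exponents of the first (after multiplication by $-\partial^{-1}$), and matching coefficients via uniqueness gives \eqref{110313:eq5}; substituting one $\lambda_\alpha=-\sum_{\beta\neq\alpha}\lambda_\beta-\partial$ similarly gives \eqref{110313:eq6}. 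The support bounds (iv)--(vii) I would establish by induction, using (i) as base case and the recurrences of (iii) to propagate: \eqref{110313:eq6} strictly lowers $\nu_0$ in the unique-maximum case while \eqref{110313:eq5} handles ties, and at each step one tracks how the two largest exponents of $m$ are constrained relative to those of $n$.

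The hard part is twofold. First, one must verify that the existence reduction genuinely terminates: the total $\lambda$-degree increases under the gap-zero relation, so the naive degree induction fails, and the delicate point is to confirm that the potential $\Phi$ (equivalently, a lexicographic induction on the maximum $M$ and then on $\Phi$) really is monotone, i.e. that reducing a gap-zero monomial never continues on a term of larger maximum. Second, pushing the four support estimates (iv)--(vii) through both recurrences is bookkeeping-heavy: each inequality must be shown to be preserved by \eqref{110313:eq5} and \eqref{110313:eq6} simultaneously, and the interplay between the strict case $\nu_0>\nu_1$ and the tie case $\nu_0=\nu_1$ has to be handled with care in the induction on $\nu_0$ and $\nu_0-\nu_1$.
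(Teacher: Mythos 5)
Your proposal is correct and follows essentially the same route as the paper: uniqueness via Lemma \ref{110311:lem0} with $V=\mb F[\partial^{\pm1}]$, existence by combining Lemma \ref{110316:lem1} with the relation $1=-\partial^{-1}\sum_i\lambda_i$ and an induction on the potential $\sum_i(\nu_0-n_i)$, and part (b) deduced from uniqueness, symmetry of the defining relation, and induction through the recurrences \eqref{110313:eq5}--\eqref{110313:eq6} with (i) as base case. The two ``hard parts'' you flag are resolved exactly as you suggest (non-reduced terms in the gap-zero reduction never raise the maximum, so the potential is monotone), matching the paper's argument.
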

\begin{proof}
The uniqueness of the decomposition \eqref{110311:eq1} follows from Lemma \ref{110311:lem0}
in the case $V=\mb F[\partial^{\pm1}]$, with $\partial$ acting by left multiplication,
and we want to prove the existence.
By Lemma \ref{110316:lem1} the monomial
$\lambda_0^{n_0}\dots\lambda_k^{n_k}$ is a linear combination
over $\mb F[\partial]$ of monomials $\lambda_0^{m_0}\dots\lambda_k^{m_k}$
with $\mu_0-\mu_1=0$ or $1$.
Hence, we are left to consider the monomials with $\nu_0=\nu_1$.
In this case,
multiplying the monomial $\lambda_0^{n_0}\lambda_1^{n_1}\dots\lambda_k^{n_k}$ by
$$
1=-\lambda_0\partial^{-1}-\lambda_1\partial^{-1}-\dots-\lambda_k\partial^{-1}\,,
$$
we get
\begin{equation}\label{110313:eq3}
\lambda_0^{n_0}\lambda_1^{n_1}\dots\lambda_k^{n_k}
=
-\sum_{\alpha=0}^k
\lambda_0^{n_0}\dots\lambda_\alpha^{n_\alpha+1}\dots\lambda_k^{n_k} \partial^{-1}
\,.
\end{equation}
All the monomials which appear in the RHS have the difference between maximal and second maximal
upper indices equal to 0 or 1.
Hence, \eqref{110311:eq1} holds by induction on $\sum_i(\nu_0-n_i)$.

We next prove part (b). Property (i) is clear.
Given a permutation $\sigma\in S_{k+1}$ and $n_0,\dots,n_k\in\mb Z_+$,
we have by part (a) (after changing the indices of summation),
$$
\lambda_0^{n_{\sigma(0)}}\dots \lambda_k^{n_{\sigma(k)}}
=
\sum_{\substack{m_0,\dots,m_k\in\mb Z_+ \\ (\mu_0-\mu_1=1)}}
c^{n_{\sigma(0)},\dots,n_{\sigma(k)}}_{m_{\sigma(0)},\dots,m_{\sigma(k)}}
\lambda_0^{m_{\sigma(0)}}\dots\lambda_k^{m_{\sigma(k)}}
\partial^{\sum_i(n_i-m_i)}\,.
$$
On the other hand, by permuting the variables $\lambda_0,\dots,\lambda_k$,
since the condition $\lambda_0+\dots+\lambda_k+\partial=0$
is invariant, part (a) says that we have a unique decomposition
$$
\lambda_{\sigma^{-1}(0)}^{n_0}\dots \lambda_{\sigma^{-1}(k)}^{n_k}
=
\sum_{\substack{m_0,\dots,m_k\in\mb Z_+ \\ (\mu_0-\mu_1=1)}}
c^{n_0,\dots,n_k}_{m_0,\dots,m_k}
\lambda_{\sigma^{-1}(0)}^{m_0}\dots\lambda_{\sigma^{-1}(k)}^{m_k}
\partial^{\sum_i(n_i-m_i)}\,.
$$
Comparing the above two identities we conclude, by the uniqueness of the decomposition \eqref{110311:eq1},
that
$c^{n_{\sigma(0)},\dots,n_{\sigma(k)}}_{m_{\sigma(0)},\dots,m_{\sigma(k)}}=c^{n_0,\dots,n_k}_{m_0,\dots,m_k}$,
proving property (ii).

The two identities \eqref{110313:eq5} and \eqref{110313:eq6} follow immediately by part (a)
and the equations \eqref{110313:eq3} and \eqref{110316:eq3} respectively.

Next, we prove properties (iv)--(vii).
Let $n_0,\dots,n_k\in\mb Z_+$.
If $\nu_0=\nu_1+1$,
then by (i) we have $c^{n_0,\dots,n_k}_{m_0,\dots,m_k}=\delta_{m_0,n_0}\dots\delta_{m_k,n_k}$,
hence this coefficient is zero unless $m_0=n_0,\dots m_k=n_k$.
Properties (iv)--(vii), in this case, trivially hold.

Suppose next that $n_0,\dots,n_k\in\mb Z_+$ are such that $\nu_0=\nu_1$.
We prove, by induction on $\sum_i(\nu_0-n_i)$,
that properties (iv)---(vii) hold, i.e.
$c^{n_0,\dots,n_k}_{m_0,\dots,m_k}$ is zero unless, respectively:
\begin{enumerate}
\item[(iv)] $\mu_0\leq\nu_0+1$,
\item[(vi)] $m_\alpha\geq\mu_1$ and $\nu_1$, for $\alpha$ such that $n_\alpha=\nu_0$,
\item[(vii)] $m_\beta\geq n_\beta$ for all $\beta=0,\dots,k$.
\end{enumerate}
By equation \eqref{110313:eq5} we have
\begin{equation}\label{110313:eq7}
c^{n_0,\dots,n_k}_{m_0,\dots,m_k}
=
-\sum_{\beta\,|\,n_\beta=\nu_0}
\delta_{m_0,n_0}\dots\delta_{m_\beta,n_\beta+1}\dots\delta_{m_k,n_k}
-\!\!\!
\sum_{\gamma\,|\,n_\gamma\leq\nu_0-1}
c^{n_0,\dots,n_\gamma+1,\dots,n_k}_{m_0,\dots,m_k}\,.
\end{equation}
Given $\beta$ such that $n_\beta=\nu_0$, the corresponding term in the first sum of the RHS
is zero unless $m_0=n_0,\dots,m_\beta=n_\beta+1,\dots,m_k=n_k$.
In particular one easily checks that it is zero unless
all conditions (iv), (vi) and (vii) above hold.
Next, let $\gamma$ be such that $n_\gamma\leq\nu_0-1$,
and consider the corresponding term in the second summand of the RHS of \eqref{110313:eq7}.
It has maximal and second maximal upper indices both equal to $\nu_0$.
Hence, we can apply the inductive assumption to deduce that
it is zero unless all conditions (iv), (vi), and (vii) hold.

Finally, suppose that $n_\alpha=\nu_0\geq\nu_1+2$.
We prove by induction on $\nu_0-\nu_1$
that properties (iv)--(vii) hold,
i.e. $c^{n_0,\dots,n_k}_{m_0,\dots,m_k}$ is zero unless
\begin{enumerate}
\item[(v)] $\mu_0\leq\nu_0$,
\item[(vi)] $m_\alpha\geq\mu_1$ and $\nu_1$,
\item[(vii)] $m_\beta\geq n_\beta$ for all $\beta\neq\alpha$,
\end{enumerate}
Consider equation \eqref{110313:eq6}.
In all terms in the RHS the maximal upper index is $n_\alpha-1=\nu_0-1$,
while the second maximal upper index is either $\nu_1$ or $\nu_1+1$.
Hence, we can use the results in the previous case and the inductive assumption
to deduce that all terms in the RHS of \eqref{110313:eq6} are zero unless
all conditions (v), (vi) and (vii) above are met.
\end{proof}
\begin{remark}
In the special case $k=1$, Lemma \ref{110311:lem1} follows from Lemma \ref{101218:lem1}.
In fact, in this case we can use Lemma \ref{101218:lem1} to compute explicitly
the coefficients $c^{p,q}_{m+1,m}$ and $c^{p,q}_{m,m+1}$ defined in Lemma \ref{110311:lem1}.
For $p=q$ we have $c^{p,p}_{m+1,m}=c^{p,p}_{m,m+1}=-\delta_{m,p}$.
For $p>q$ we have
$c^{p,q}_{m+1,m}=(-1)^{m+p+1}\binom{p-m}{m-q}$ when $q\leq m\leq\big[(p+q)/2\big]$ and zero otherwise,
and $c^{p,q}_{m,m+1}=(-1)^{m+p+1}\binom{p-m-1}{m-q-1}$ when $q+1\leq m\leq\big[(p+q)/2\big]$ and zero otherwise.
For $p<q$ they are obtained using the symmetry condition (ii) in Lemma \ref{110311:lem1}.
\end{remark}
\begin{remark}
The polynomials
$(1+x_1+\dots+x_k)^{m_0}x_1^{m_1}\dots x_k^{m_k}$,
with $m_0,m_1,\dots,m_k\in\mb Z_+$ such that $\mu_0-\mu_1=1$,
form a basis (over $\mb Z$) of the ring $\mb Z[x_1,\dots,x_k]$.
Indeed, the linear independence of these elements follows from
Lemma \ref{110311:lem0} in the special case when $V=\mb Q$ and $\partial=1$,
while the fact that they span the whole polynomial ring follows
from Lemma \ref{110311:lem1}.
\end{remark}
The following is a simple combinatorial result that will be used in the proof of the
subsequent Lemma \ref{110318:cor1}.
\begin{lemma}\label{110319:lem}
Let $m_1,\dots,m_k,n_1,\dots,n_k\in\!\mb Z_+$ be such that
$(\mu_1,\dots,\mu_k)<(\nu_1,\dots,\nu_k)$
in the lexicographic order.
Then:
\begin{enumerate}[(a)]
\item
$m_\alpha<n_\alpha$ for some $\alpha\in\{1,\dots,k\}$;
\item
suppose, moreover, that $m_\alpha<n_\alpha$ for exactly one index $\alpha\in\{1,\dots,k\}$,
and $m_\beta\geq n_\beta$ for every other $\beta\neq\alpha$,
and let $n_\alpha=\nu_i$,
then
$\mu_1=\nu_1,\dots,\mu_{i-1}=\nu_{i-1},\mu_i\leq\nu_i$.
\end{enumerate}
\end{lemma}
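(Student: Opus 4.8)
The plan is to reduce the whole lemma to one counting principle for sorted sequences. For a tuple $x=(x_1,\dots,x_k)\in\mb Z_+^k$ and a threshold $t\in\mb Z_+$, set $N_t(x)=\#\{j\mid x_j\geq t\}$; if $\xi_1\geq\dots\geq\xi_k$ denotes the non-increasing reordering of $x$, then for each $j$ one has $\xi_j\geq t$ exactly when $N_t(x)\geq j$, equivalently $\xi_j=\max\{t\mid N_t(x)\geq j\}$. From this I would first record the \emph{monotonicity principle}: if $a_j\geq b_j$ for every $j$, then $N_t(a)\geq N_t(b)$ for every $t$, hence $\xi_j(a)\geq\xi_j(b)$ for all $j$. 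Part (a) then follows by contrapositive: if $m_\beta\geq n_\beta$ for every $\beta$, the monotonicity principle gives $\mu_j\geq\nu_j$ for all $j$, so $(\mu_1,\dots,\mu_k)\geq(\nu_1,\dots,\nu_k)$ in the lexicographic order, contradicting the hypothesis; thus $m_\alpha<n_\alpha$ for at least one $\alpha$.

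For part (b) I would set $t_0=n_\alpha=\nu_i$ and extract two estimates. Above the threshold the exceptional coordinate is invisible: for every $t>t_0$ coordinate $\alpha$ contributes $0$ to both $N_t(m)$ and $N_t(n)$ (since $m_\alpha<n_\alpha<t$), while each $\beta\neq\alpha$ has $m_\beta\geq n_\beta$, so $N_t(m)\geq N_t(n)$; consequently $\mu_j\geq\nu_j$ whenever $\nu_j>t_0$. At the threshold itself I count $N_{t_0}(m)\geq\#\{\beta\neq\alpha\mid n_\beta\geq t_0\}=N_{t_0}(n)-1\geq i-1$, the last step because $\nu_1\geq\dots\geq\nu_i=t_0$ forces $N_{t_0}(n)\geq i$.

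Now I would invoke the standing hypothesis $(\mu)<(\nu)$: let $p$ be the first index with $\mu_p<\nu_p$, so $\mu_j=\nu_j$ for $j<p$. The first estimate rules out $\nu_p>t_0$, hence $\nu_p\leq t_0=\nu_i$; since $\nu$ is non-increasing this leaves only two possibilities, namely $p\geq i$, or $p<i$ together with $\nu_p=t_0$. In the first case $\mu_j=\nu_j$ holds for all $j<p$ and in particular for all $j\leq i-1$, while $\mu_i\leq\nu_i$ holds as well (equality if $p>i$, strict if $p=i$), which is exactly the asserted conclusion. To eliminate the second case I use the two estimates against each other: $\mu_p<\nu_p=t_0$ means fewer than $p$ entries of $m$ are $\geq t_0$, i.e. $N_{t_0}(m)\leq p-1\leq i-2$, contradicting $N_{t_0}(m)\geq i-1$. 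Hence $p\geq i$ and part (b) follows.

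The main obstacle is exactly this tie case, where several of $\nu_1,\dots,\nu_i$ coincide with $n_\alpha$ and the bare pointwise comparison of reordered tuples is too weak; the resolution is to play the lower bound $N_{t_0}(m)\geq i-1$ against the upper bound $N_{t_0}(m)\leq p-1$ forced by an early lexicographic descent. The remaining points are routine boundary checks: the case $i=1$ (where the list of equalities is empty and one only claims $\mu_1\leq\nu_1$) and the degenerate threshold $t_0=0$, both of which fall out of the same inequalities.
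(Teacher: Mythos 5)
Your proof is correct, but it is organized quite differently from the paper's. The paper fixes sorting permutations $\alpha_1,\dots,\alpha_k$ and $\beta_1,\dots,\beta_k$ with $\mu_j=m_{\alpha_j}$, $\nu_j=n_{\beta_j}$, and proves both parts by induction on the position $j$: assuming $\mu_1=\nu_1,\dots,\mu_j=\nu_j$, it locates some $\beta_h\notin\{\alpha_1,\dots,\alpha_j\}$ (splitting into the cases $\{\alpha_1,\dots,\alpha_j\}=\{\beta_1,\dots,\beta_j\}$ or not) to get $\mu_{j+1}\geq m_{\beta_h}\geq n_{\beta_h}\geq\nu_{j+1}$, and then uses the lexicographic hypothesis to upgrade this to equality; part (a) ends in the contradiction $(\mu)=(\nu)$, and part (b) runs the same induction up to $i-1$ with the exceptional index excluded because $\alpha=\beta_i$. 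You instead package the combinatorial core into the counting functions $N_t$ and the order-statistics monotonicity principle, and in part (b) you replace the induction by a global argument: locate the first lexicographic descent $p$, use the above-threshold estimate $N_t(m)\geq N_t(n)$ for $t>t_0$ to force $\nu_p\leq t_0$, and kill the tie case $p<i$, $\nu_p=t_0$ by playing the lower bound $N_{t_0}(m)\geq i-1$ against the upper bound $N_{t_0}(m)\leq p-1$. What your route buys is a cleanly isolated, reusable monotonicity lemma and a transparent treatment of ties among the $\nu_j$ (the only genuinely delicate point); what the paper's route buys is a completely self-contained induction that produces the equalities $\mu_j=\nu_j$ one at a time without introducing any auxiliary counting machinery. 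Both arguments are elementary and of comparable length, and your boundary remarks ($i=1$, $t_0=0$) are indeed vacuous checks — in fact $t_0=0$ cannot occur, since $m_\alpha<n_\alpha$ forces $n_\alpha\geq1$.
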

\begin{proof}
Let $\{\alpha_1,\dots,\alpha_k\}$ and $\{\beta_1,\dots,\beta_k\}$
be permutations of $\{1,\dots,k\}$ such that
$m_{\alpha_1}\geq m_{\alpha_2}\geq\dots\geq m_{\alpha_k}$
and $n_{\beta_1}\geq n_{\beta_2}\geq\dots\geq n_{\beta_k}$.
In particular, $\mu_i=m_{\alpha_i}$ and $\nu_i=n_{\beta_i}$
for every $i$.

For part (a), suppose, by contradiction, that $m_\alpha\geq n_\alpha$ for every $\alpha=1,\dots,k$.
Then, clearly, $\mu_1\geq m_{\beta_1}\geq n_{\beta_1}=\nu_1$.
Since, by assumption, $\mu_1\leq\nu_1$,
we conclude that $\mu_1=\nu_1$.
Suppose, by induction, that $\mu_1=\nu_1,\dots,\mu_{i-1}=\nu_{i-1}$
for $i\geq2$.
If $\{\alpha_1,\dots,\alpha_{i-1}\}=\{\beta_1,\dots,\beta_{i-1}\}$,
we have $\beta_i\not\in\{\alpha_1,\dots,\alpha_{i-1}\}$,
and therefore $\mu_i\geq m_{\beta_i}\geq n_{\beta_i}=\nu_i$.
Similarly, if $\{\alpha_1,\dots,\alpha_{i-1}\}\neq\{\beta_1,\dots,\beta_{i-1}\}$,
we have $\beta_j\not\in\{\alpha_1,\dots,\alpha_{i-1}\}$ for some $j\leq i-1$,
and therefore $\mu_i\geq m_{\beta_j}\geq n_{\beta_j}=\nu_j\geq\nu_i$.
In both cases we have $\mu_i\geq\nu_i$ and,
since, by assumption, $\mu_i\leq\nu_i$,
we conclude that $\mu_i=\nu_i$.
It follows that $(\mu_1,\dots,\mu_k)=(\nu_1,\dots,\nu_k)$,
a contradiction.

For part (b) we use a similar argument.
In our notation, $\alpha=\beta_i$ for some $i\geq 1$.
If $i=1$, there is nothing to prove.
If $i\geq2$,
we have $\mu_1\geq m_{\beta_1}\geq n_{\beta_1}=\nu_1$,
and therefore $\mu_1=\nu_1$.
Hence, the claim is proved for $i=2$.
Suppose next that $i\geq 3$ and assume, by induction,
that $\mu_1=\nu_1,\dots,\mu_{j}=\nu_{j}$, where $j\leq i-2$.
If $\{\alpha_1,\dots,\alpha_j\}=\{\beta_1,\dots,\beta_j\}$,
we have $\beta_{j+1}\not\in\{\alpha_1,\dots,\alpha_j\}$,
and therefore $\mu_{j+1}\geq m_{\beta_{j+1}}\geq n_{\beta_{j+1}}=\nu_{j+1}$.
Similarly, if $\{\alpha_1,\dots,\alpha_j\}\neq\{\beta_1,\dots,\beta_j\}$,
we have $\beta_h\not\in\{\alpha_1,\dots,\alpha_j\}$ for some $h\leq j$,
and therefore $\mu_{j+1}\geq m_{\beta_h}\geq n_{\beta_h}=\nu_h\geq\nu_{j+1}$.
In both cases we have $\mu_{j+1}\geq\nu_{j+1}$ and, therefore, $\mu_{j+1}=\nu_{j+1}$.
In conclusion, $\mu_1=\nu_1,\dots,\mu_{i-1}=\nu_{i-1}$,
proving the claim.
\end{proof}
\begin{lemma}\label{110318:cor1}
Let $m_1,\dots,m_k,n_1,\dots,n_k\in\{0,\dots, N-1\}$ be such that
$(\mu_1,\dots,\mu_k)<(\nu_1,\dots,\nu_k)$
in the lexicographic order.
Then:
\begin{enumerate}[(a)]
\item
$c^{N,n_1,\dots,n_k}_{\mu_1+1,m_1,\dots,m_k}=0$;
\item
$c^{n_\alpha,n_1,\dots,\stackrel{\alpha}{\check{N}},\dots,n_k}_{\mu_1+1,m_1,\dots,m_k}=0$
for every $\alpha=1,\dots,k$;
\item
$c^{N,n_1,\dots,n_k}_{\nu_1+1,n_1,\dots,n_k}=(-1)^{N-\nu_1-1}$;
\item
$c^{n_\alpha,n_1,\dots,\stackrel{\alpha}{\check{N}},\dots,n_k}_{\nu_1+1,n_1,\dots,n_k}=0$
for every $\alpha=1,\dots,k$.
\end{enumerate}
\end{lemma}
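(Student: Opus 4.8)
The common feature of all four coefficients is that their upper index is a permutation of $(N,n_1,\dots,n_k)$; since every $n_i\leq N-1$, the reordering of this $(k+1)$-tuple has $\nu_0=N$ and $\nu_1=\max_i n_i$, while the lower indices $(\mu_1+1,m_1,\dots,m_k)$ and $(\nu_1+1,n_1,\dots,n_k)$ both satisfy $\mu_0-\mu_1=1$ (their position-$0$ entry is the maximum, exceeding the second-largest by exactly one). The whole argument will run on the vanishing properties (vi) and (vii), the symmetry (ii), the base-case formula (i), and the recurrence \eqref{110313:eq6} from Lemma \ref{110311:lem1}, together with the combinatorial Lemma \ref{110319:lem}(a).

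For part (a) I would apply property (vii) to the upper index $(N,n_1,\dots,n_k)$: since $n_\beta\leq\nu_1$ for every $\beta\in\{1,\dots,k\}$, property (vii) forces the coefficient to vanish unless $m_\beta\geq n_\beta$ for all such $\beta$. But Lemma \ref{110319:lem}(a) says that the hypothesis $(\mu_1,\dots,\mu_k)<(\nu_1,\dots,\nu_k)$ produces an index with $m_\alpha<n_\alpha$, a contradiction, so the coefficient is zero. For part (b), where $N$ sits at a position $\alpha\geq1$, I would argue by contradiction: property (vii) applied at the positions $\{1,\dots,k\}\setminus\{\alpha\}$ gives $m_\beta\geq n_\beta$ there, while property (vi) applied at position $\alpha$ (where the upper index attains its maximum $N=\nu_0$) gives $m_\alpha\geq\max(\mu_1,\nu_1)\geq\nu_1\geq n_\alpha$; together these yield $m_\beta\geq n_\beta$ for all $\beta$, again contradicting Lemma \ref{110319:lem}(a).

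Parts (c) and (d) are proved by downward induction on $N$, from a general $N$ to the base value $N=\nu_1+1$, applying \eqref{110313:eq6} at the slot carrying $N$. For (c) I would set $P_N=c^{N,n_1,\dots,n_k}_{\nu_1+1,n_1,\dots,n_k}$; the base case gives $P_{\nu_1+1}=1=(-1)^0$ by property (i) (upper and lower indices coincide), and for $N\geq\nu_1+2$ the recurrence at position $0$ expresses $P_N$ as $-P_{N-1}$ plus a sum of terms $c^{N-1,n_1,\dots,n_\beta+1,\dots,n_k}_{\nu_1+1,n_1,\dots,n_k}$, each of which is killed by property (vii) at position $\beta$ (in both subcases $n_\beta<\nu_1$ and $n_\beta=\nu_1$); hence $P_N=-P_{N-1}$ and $P_N=(-1)^{N-\nu_1-1}$. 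For (d) with $N$ at position $\alpha\geq1$, the subcase $n_\alpha<\nu_1$ is immediate from property (vi), since the maximum sits at position $\alpha$ and there $n_\alpha<\nu_1\leq\max(\mu_1,\nu_1)$. The subcase $n_\alpha=\nu_1$ I would first rewrite, via the symmetry (ii), as $c^{N,n_1,\dots,n_k}_{\nu_1,\,n_1,\dots,\nu_1+1,\dots,n_k}$ with $\nu_1+1$ placed in slot $\alpha$, and then run the same downward induction: the base case vanishes by property (i) (now upper and lower differ at position $0$), and in the inductive step every auxiliary term produced by \eqref{110313:eq6} is annihilated by (vi) or (vii), giving $E_N=-E_{N-1}$ and hence $E_N\equiv0$.

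The main obstacle is precisely the subcase $n_\alpha=\nu_1$ of part (d): here none of the vanishing properties (iv)--(vii) applies directly to the coefficient itself, and the only route I see is the inductive one, which requires somewhat delicate bookkeeping to confirm that each auxiliary term of \eqref{110313:eq6} — split according to whether the raised slot equals $\alpha$ and whether $n_\beta=\nu_1$ — is killed by (vi) or (vii), uniformly in $N\geq\nu_1+2$. This term-by-term vanishing is the technical heart; once it is established, parts (c) and (d) both collapse to the two elementary recursions $P_N=-P_{N-1}$ and $E_N=-E_{N-1}$.
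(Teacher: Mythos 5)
Your proof is correct, and it is a hybrid: parts (a) and (c) coincide with the paper's own argument, while parts (b) and (d) take genuinely different --- and in (b), shorter --- routes, using the same toolkit (properties (i), (ii), (vi), (vii) of Lemma \ref{110311:lem1}(b), the recursion \eqref{110313:eq6}, and Lemma \ref{110319:lem}). For (b), the paper's proof is considerably more involved: it uses (vi) to force $m_\alpha=\mu_1$, then (vii) at the slots $\beta\neq\alpha$, then Lemma \ref{110319:lem}(a) to localize the failing index at $\alpha$, then Lemma \ref{110319:lem}(b) to exclude $n_\alpha=\nu_i$ with $i\geq 2$, then (vii) at slot $0$ to force $\nu_1=\mu_1+1$, and only then concludes by (vi); your observation that (vi) alone already yields $m_\alpha\geq\max(\mu_1,\nu_1)\geq\nu_1\geq n_\alpha$, which together with (vii) at the remaining slots contradicts Lemma \ref{110319:lem}(a) outright, collapses all of this and never needs Lemma \ref{110319:lem}(b). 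For (d), the paper runs a single downward induction applying \eqref{110313:eq6} at the slot $\alpha$ carrying $N$ (the $\beta\neq 0,\alpha$ terms die by (vii), the $\beta=0$ term by (vi)), whereas you dispose of the subcase $n_\alpha<\nu_1$ immediately by (vi) and, for $n_\alpha=\nu_1$, transport the coefficient by the symmetry (ii) so that $N$ sits at slot $0$ and then reuse the position-$0$ recursion of (c). This works: I checked the term you flagged as delicate, namely the $\beta=\alpha$ term of \eqref{110313:eq6}, whose upper and lower entries at slot $\alpha$ are both $\nu_1+1$ so that (vii) is indeed useless there; it vanishes by (vi) applied at slot $0$, since there the lower entry is $\nu_1$ while the second-largest upper index of that term is $\nu_1+1$, and this covers both $N-1>\nu_1+1$ and $N-1=\nu_1+1$ (in the latter case the upper maximum is attained twice, and (vi) still applies at slot $0$). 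The trade-off is clear: the paper's (d) is one uniform recursion with no symmetry step, while yours costs the case split and the appeal to (ii) but makes the structure more transparent and lets (c) and (d) share the same inductive engine.
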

\begin{proof}
By Lemma \ref{110319:lem}(a),
$m_\gamma<n_\gamma$ for some $\gamma\neq0$.
Then, by property (vii) of Lemma \ref{110311:lem1}(b),
we have $c^{N,n_1,\dots,n_k}_{\mu_1+1,m_1,\dots,m_k}=0$,
proving part (a).

For part (b), we first observe that,
by property (vi) of Lemma \ref{110311:lem1}(b),
if $\alpha\neq0$ then
$c^{n_\alpha,n_1,\dots,\stackrel{\alpha}{\check{N}},\dots,n_k}_{\mu_1+1,m_1,\dots,m_k}=0$
unless $m_\alpha=\mu_1$.
Moreover, if $m_\gamma<n_\gamma$ for $\gamma\neq\alpha$,
again by property (vii) of Lemma \ref{110311:lem1}(b),
we have $c^{n_\alpha,n_1,\dots,\stackrel{\alpha}{\check{N}},\dots,n_k}_{\mu_1+1,m_1,\dots,m_k}=0$.
Hence, by Lemma \ref{110319:lem}(a),
we only have to prove (b)
in the case when
$\mu_1=m_\alpha<n_\alpha$,
and $m_\beta\geq n_\beta$ for every $\beta\neq\alpha$.
Suppose, in this case,
that $n_\alpha=\nu_i$, for $i\geq2$.
Then, by Lemma \ref{110319:lem}(b),
we have $\mu_1<m_\alpha<n_\alpha=\nu_i\leq \nu_1=\mu_1$,
which is impossible.
Hence, we are left to consider the case when
$m_\alpha=\mu_1<n_\alpha=\nu_1$ and $m_\beta\geq n_\beta$ for every $\beta\neq\alpha$.
By property (vii) of Lemma \ref{110311:lem1}(b),
we have that $c^{n_\alpha,n_1,\dots,\stackrel{\alpha}{\check{N}},\dots,n_k}_{\mu_1+1,m_1,\dots,m_k}=0$
unless $\mu_1+1\geq n_\alpha$.
Hence, we only need to consider the case when $\mu_1+1=\nu_1$.
But in this case $m_\alpha=\mu_1<\mu_1+1=\nu_1$,
and hence
$c^{n_\alpha,n_1,\dots,\stackrel{\alpha}{\check{N}},\dots,n_k}_{\mu_1+1,m_1,\dots,m_k}=0$
by property (vi) of Lemma \ref{110311:lem1}(b).
This completes the proof of part (b).

For $N=n_1+1$, we have $c^{N,n_1,\dots,n_k}_{\nu_1+1,n_1,\dots,n_k}=1$
by property (i) of Lemma \ref{110311:lem1}(b).
For $N\geq n_1+2$, we have, by the recursive formula \eqref{110313:eq6},
$$
c^{N,n_1,\dots,n_k}_{\nu_1+1,n_1,\dots,n_k}
=
-\sum_{\beta\neq0}
c^{N-1,n_1,\dots n_\beta+1,\dots ,n_k}_{\nu_1+1,n_1,\dots,n_k}
-c^{N-1,n_1,\dots ,n_k}_{\nu_1+1,n_1,\dots,n_k}\,.
$$
Since $N-1$ is the maximal upper index in all terms of the RHS,
the first term in the RHS is zero by property (vii) of Lemma \ref{110311:lem1}(b).
Hence, we get $c^{N,n_1,\dots,n_k}_{\nu_1+1,n_1,\dots,n_k}=-c^{N-1,n_1,\dots ,n_k}_{\nu_1+1,n_1,\dots,n_k}$,
which, by induction, implies part (c).

We are left to prove part (d).
If $N=\nu_1+1$,
by property (i) of Lemma \ref{110311:lem1}(b),
we have that
$c^{n_\alpha,n_1,\dots,\stackrel{\alpha}{\check{N}},\dots,n_k}_{\nu_1+1,n_1,\dots,n_k}=0$
for every $\alpha\neq0$, since $n_\alpha\neq N$.
For $N\geq n_1+2$, we have, by the recursive formula \eqref{110313:eq6},
$$
\begin{array}{c}
\displaystyle{
c^{n_\alpha,n_1,\dots,\stackrel{\alpha}{\check{N}},\dots,n_k}_{\nu_1+1,n_1,\dots,n_k}
=
-\sum_{\beta\neq0,\alpha}
c^{n_\alpha,n_1,\dots, n_\beta+1,\dots,\stackrel{\alpha}{\check{N-1}},\dots,n_k}_{\nu_1+1,n_1,\dots,n_k}
} \\
\displaystyle{
-c^{n_\alpha+1,n_1,\dots,\stackrel{\alpha}{\check{N-1}},\dots,n_k}_{\nu_1+1,n_1,\dots,n_k}
-c^{n_\alpha,n_1,\dots,\stackrel{\alpha}{\check{N-1}},\dots,n_k}_{\nu_1+1,n_1,\dots,n_k}\,.
}
\end{array}
$$
Note that $N-1$ is the maximal upper index in all terms of the RHS.
Hence,
the first term in the RHS is zero by property (vii) of Lemma \ref{110311:lem1}(b),
since $n_\beta<n_\beta+1$.
Moreover, the second term in the RHS is zero by property (vi) of Lemma \ref{110311:lem1}(b)
since $n_\alpha<n_\alpha+1\leq\nu_1$.
Hence, we get
$c^{n_\alpha,n_1,\dots,\stackrel{\alpha}{\check{N}},\dots,n_k}_{\nu_1+1,n_1,\dots,n_k}
=
-c^{n_\alpha,n_1,\dots,\stackrel{\alpha}{\check{N-1}},\dots,n_k}_{\nu_1+1,n_1,\dots,n_k}$,
which, by induction, implies that
$c^{n_\alpha,n_1,\dots,\stackrel{\alpha}{\check{N}},\dots,n_k}_{\nu_1+1,n_1,\dots,n_k}=0$,
proving (d).
\end{proof}
\begin{corollary}\label{110314:cor1}
If $\partial:\mc F\to\mc F$ is surjective,
then any polynomial $S\in\mb F[\lambda_1,\dots,\lambda_k]\otimes\mc F$
admits a decomposition of the following form:
$$
S(\lambda_1,\dots,\lambda_k)
=
\sum_{\substack{m_0,\dots,m_k=0 \\ (\mu_0-\mu_1=1)}}^N
\lambda_0^{m_0}\dots\lambda_k^{m_k}s_{m_0,\dots,m_k}\,,
$$
where $\lambda_0$ is as in \eqref{110313:eq2} (and it acts on the coefficients $s_{m_0,\dots,m_k}$).
\end{corollary}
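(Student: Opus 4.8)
The plan is to deduce the statement from the change-of-basis identity of Lemma~\ref{110311:lem1}(a), using the surjectivity of $\partial$ to turn the negative powers of $\partial$ occurring there into honest elements of $\mc F$. By $\mb F$-linearity it suffices to treat a single monomial $S=\lambda_1^{n_1}\dots\lambda_k^{n_k}\,a$ with $a\in\mc F$; taking $N$ larger than $\deg_{\lambda_i}S$ for each $i$, we may assume $0\le n_i\le N-1$. The goal is then to produce coefficients $s_{m_0,\dots,m_k}\in\mc F$, indexed by $m_0,\dots,m_k$ with $\mu_0-\mu_1=1$ and $0\le m_i\le N$, realizing $S=\sum \lambda_0^{m_0}\dots\lambda_k^{m_k}s_{m_0,\dots,m_k}$, where $\lambda_0$ is as in \eqref{110313:eq2} and acts on the $s$'s through $\partial$.

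First I would apply Lemma~\ref{110311:lem1}(a) with $n_0=0$ to $\lambda_1^{n_1}\dots\lambda_k^{n_k}$, viewed as $\lambda_0^0\lambda_1^{n_1}\dots\lambda_k^{n_k}$, obtaining in $\mb F[\lambda_1,\dots,\lambda_k,\partial^{\pm1}]$
\[
\lambda_1^{n_1}\dots\lambda_k^{n_k}
=\sum_{\substack{m_0,\dots,m_k\in\mb Z_+\\(\mu_0-\mu_1=1)}}
c^{0,n_1,\dots,n_k}_{m_0,\dots,m_k}\,
\lambda_0^{m_0}\lambda_1^{m_1}\dots\lambda_k^{m_k}\,\partial^{e(m)},
\qquad e(m)=\textstyle\sum_i n_i-\sum_i m_i.
\]
The only obstruction to reading off coefficients is that $e(m)$ may be negative. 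To clear this, choose $D\ge0$ with $e(m)+D\ge0$ for all of the finitely many $m$ occurring, and multiply the identity by $\partial^{D}$; since $\lambda_0=-\lambda_1-\dots-\lambda_k-\partial$ contributes only non-negative powers of $\partial$, both sides now lie in the honest polynomial ring $\mb F[\lambda_1,\dots,\lambda_k,\partial]$.

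Next I would evaluate on an element $b\in\mc F$ chosen, by surjectivity of $\partial$, so that $\partial^{D}b=a$. For fixed $b$, the assignment sending $\phi=\sum_{\alpha,j}c_{\alpha,j}\lambda^\alpha\partial^j$ to $\phi\cdot b:=\sum_{\alpha,j}c_{\alpha,j}\lambda^\alpha\,\partial^j(b)$ is a well-defined $\mb F$-linear map $\mb F[\lambda_1,\dots,\lambda_k,\partial]\to\mc F[\lambda_1,\dots,\lambda_k]$, so equal polynomials have equal images; moreover $(\phi\,\partial^j)\cdot b=\phi\cdot(\partial^j b)$. Hence the left-hand side becomes $\lambda_1^{n_1}\dots\lambda_k^{n_k}\,\partial^{D}(b)=\lambda_1^{n_1}\dots\lambda_k^{n_k}\,a=S$, while each summand on the right becomes $\big(\lambda_0^{m_0}\dots\lambda_k^{m_k}\big)\cdot s_{m_0,\dots,m_k}$ with $s_{m_0,\dots,m_k}=c^{0,n_1,\dots,n_k}_{m_0,\dots,m_k}\,\partial^{e(m)+D}(b)\in\mc F$. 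This is exactly the asserted decomposition, the symbol $\lambda_0^{m_0}\dots\lambda_k^{m_k}s_{m_0,\dots,m_k}$ being understood with the $\partial$ inside $\lambda_0$ acting on $s_{m_0,\dots,m_k}$.

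For the stated summation range $0\le m_i\le N$ I would invoke property~(iv) in Lemma~\ref{110311:lem1}(b): with $n_0=0$ the reordering satisfies $\nu_0=\max(n_1,\dots,n_k)\le N-1$, and $c^{0,n_1,\dots,n_k}_{m_0,\dots,m_k}=0$ unless $\mu_0\le\nu_0+1\le N$; since $\mu_0$ is the largest of the $m_i$, all $m_i\le N$. The real work is entirely contained in Lemma~\ref{110311:lem1}, which is already available; the one genuine subtlety — and the sole place where the hypothesis enters — is the passage from the $\mb F[\partial^{\pm1}]$-identity to an $\mc F$-valued one, where surjectivity of $\partial$ is precisely what guarantees the antiderivative $b$ with $\partial^D b=a$. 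Its non-uniqueness (modulo $\mc C$) is harmless, since only existence of the $s_{m_0,\dots,m_k}$ is claimed.
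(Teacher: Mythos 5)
Your proof is correct and takes essentially the same route as the paper's: expand $S$ into monomials, substitute the identity \eqref{110311:eq1} of Lemma \ref{110311:lem1}(a) with $n_0=0$, and use surjectivity of $\partial$ to absorb the resulting powers of $\partial$ into the coefficients. Your careful handling of the negative powers of $\partial$ (multiplying through by $\partial^{D}$ and evaluating on an antiderivative $b$ with $\partial^{D}b=a$) merely spells out the step the paper dispatches with the phrase ``using the assumption that $\partial$ is surjective.''
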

\begin{proof}
Expanding $S$ and substituting each monomial $\lambda_1^{n_1}\dots\lambda_k^{n_k}$
with the RHS of \eqref{110311:eq1} (for $n_0=0$),
we get the desired expansion, using the assumption that $\partial$ is surjective.
\end{proof}

\subsubsection{The main results for polydifferential operators}\label{app:5.2}

\begin{corollary}\label{110314:cor2}
\begin{enumerate}[(a)]
\item
Let $S=\big(S_{i_0,\dots,i_k}(\lambda_1,\dots,\lambda_k)\big)_{i_0,\dots,i_k\in I}$
be a totally skewsymmetric $k$-differential operator on $\mc F^\ell$.
Assuming that $\partial:\,\mc F\to\mc F$ is surjective, $S$ admits a decomposition
\begin{equation}\label{110314:eq1}
S_{i_0,\dots,i_k}(\lambda_1,\dots,\lambda_k)
=
\sum_{\substack{m_0,\dots,m_k=0 \\ (\mu_0-\mu_1=1)}}^M
\lambda_0^{m_0}\dots\lambda_k^{m_k}s^{m_0,\dots,m_k}_{i_0,\dots,i_k}\,,
\end{equation}
with $\lambda_0$ is as in \eqref{110313:eq2},
where the coefficients $s^{m_0,\dots,m_k}_{i_0,\dots,i_k}\!\in\mc F$ are skewsymmetric
with respect to simultaneous permutations of upper and lower indices:
\begin{equation}\label{110316:eq5}
s^{m_{\sigma(0)},\dots,m_{\sigma(k)}}_{i_{\sigma(0)},\dots,i_{\sigma(k)}}
=
\sign(\sigma) s^{m_0,\dots,m_k}_{i_0,\dots,i_k}
\,\,\,\,\forall\sigma\in S_{k+1}\,.
\end{equation}
\item
Assuming that $\mc F$ is linearly closed,
the space of vectors $\{s^{m_0,\dots,m_k}_{i_0,\dots,i_k}\in\mc F\}$,
labeled by $i_0,\dots,i_k\in\{1,\dots,\ell\}$ and $m_0,\dots,m_k\in\{0,\dots,N\}$
such that $\mu_0-\mu_1=1$,
which are skewsymmetric with respect to simultaneous permutations of upper and lower indices,
and which solve
\begin{equation}\label{110406:eq7}
\sum_{\substack{m_0,\dots,m_k=0 \\ (\mu_0-\mu_1=1)}}^N
\lambda_0^{m_0}\dots\lambda_k^{m_k}s^{m_0,\dots,m_k}_{i_0,\dots,i_k}=0\,,
\end{equation}
has dimension over $\mc C$ equal to
\begin{equation}\label{110406:eq8}
D=
\sum_{n=0}^{N-1}\bigg(\binom{(n+1)\ell}{k+1}-\binom{n\ell}{k+1}-\ell\binom{n\ell}{k}\bigg)
\end{equation}
\end{enumerate}
\end{corollary}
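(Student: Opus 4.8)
The plan is to prove (a) first, since it furnishes the surjectivity underlying the dimension count in (b). For part (a), I would apply Corollary \ref{110314:cor1} to each entry $S_{i_0,\dots,i_k}(\lambda_1,\dots,\lambda_k)$ of the totally skewsymmetric operator $S$; this uses only surjectivity of $\partial$ and produces a decomposition of the form \eqref{110314:eq1} with coefficients $\tilde s^{m_0,\dots,m_k}_{i_0,\dots,i_k}\in\mc F$ that are not yet skewsymmetric. The key observation is that the monomials $\lambda_0^{m_0}\dots\lambda_k^{m_k}$ with $\mu_0-\mu_1=1$ (where $\lambda_0=-\lambda_1-\dots-\lambda_k-\partial$, as in \eqref{110313:eq2}) are permuted among themselves by the $S_{k+1}$-action on the arguments, since a permutation of the slots preserves the multiset $\{m_0,\dots,m_k\}$ and hence the value of $\mu_0-\mu_1$. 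I would therefore replace $\tilde s$ by its total skewsymmetrization over $S_{k+1}$; using $S^\sigma=\sign(\sigma)S$ together with this equivariance of the monomial basis, the symmetrized array still decomposes $S$ and now satisfies \eqref{110316:eq5}, proving (a).

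For part (b), I would organize the count around the $\mc C$-linear map $\Psi$ sending a skewsymmetric coefficient array $\{s^{m_0,\dots,m_k}_{i_0,\dots,i_k}\}$ (with $m_i\in\{0,\dots,N\}$, $\mu_0-\mu_1=1$) to the totally skewsymmetric operator given by \eqref{110314:eq1}; its kernel is exactly the solution space of \eqref{110406:eq7}, whose $\mc C$-dimension I must show equals $D$. The engine here is Lemma \ref{110311:lem0}: applied with $V=\mc F$ and $\partial$ the derivation (which is not injective, its kernel being $\mc C$), it describes the solutions of the single-entry equation \eqref{110312:eq1} by conditions (i) and (ii). Condition (i) identifies coefficients differing by a transposition of the two top slots and, combined with the skewsymmetry \eqref{110316:eq5}, cuts the independent unknowns down to a family of reduced coefficients $w$ indexed by tuples with $\nu_0=\nu_1$. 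Condition (ii) then turns \eqref{110406:eq7} into a homogeneous system of the shape $\partial W=AW$ with $A$ a nilpotent matrix over $\mc C$, exactly as in the Remark following Lemma \ref{110311:lem0}.

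At this stage I would invoke the dimension count for such systems: over a linearly closed differential field one has $\dim_{\mc C}\ker\partial^{d+1}=d+1$ by Theorem \ref{101218:thm1}(b), so the solution space of $\partial W=AW$ has $\mc C$-dimension $\sum_i(d_i+1)$, the sum over the Jordan blocks of $A$ of (their size $+1$), i.e. the size of $A$ plus its number of blocks. The remaining task is combinatorial: determine the Jordan structure of $A$ level by level in the top index $\mu_0=n+1$, using Lemma \ref{110311:lem1} and in particular the explicit leading coefficients from Lemma \ref{110318:cor1} to see how the reduced coefficients at level $n$ couple. I expect the contribution of level $n$ to be precisely $\binom{(n+1)\ell}{k+1}-\binom{n\ell}{k+1}-\ell\binom{n\ell}{k}$, equivalently $\sum_{j\geq2}\binom{\ell}{j}\binom{n\ell}{k+1-j}$ by the Vandermonde identity, and summing over $n=0,\dots,N-1$ then yields \eqref{110406:eq8}.

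The main obstacle will be this last combinatorial step: tracking how the skewsymmetry constraint \eqref{110316:eq5} interacts with conditions (i) and (ii) across all $\ell$ components simultaneously, and extracting the exact Jordan-block data of the nilpotent operator $A$. The interplay of the condition $\mu_0-\mu_1=1$ with the reordering of the pairs $(m_\alpha,i_\alpha)$ is delicate, and making the count collapse to the clean binomial expression in \eqref{110406:eq8} is where the real work lies; everything preceding it is a careful but routine application of Lemma \ref{110311:lem0}, Lemma \ref{110311:lem1} and Lemma \ref{110318:cor1}, together with linear algebra over linearly closed differential fields.
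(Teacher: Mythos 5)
Your part (a) and the first half of part (b) coincide with the paper's own proof: entrywise application of Corollary \ref{110314:cor1} followed by total skewsymmetrization (using that the $S_{k+1}$-action permutes the monomials with $\mu_0-\mu_1=1$ among themselves), and then the reduction of \eqref{110406:eq7}, via conditions (i) and (ii) of Lemma \ref{110311:lem0} together with the skewsymmetry \eqref{110316:eq5}, to a system $\partial T+AT=0$ for a reduced array $T$ indexed by tuples with $\nu_0=\nu_1$, where $A$ is a nilpotent matrix with constant (integer) entries. Up to this point your argument is sound.

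The gap is in the dimension count you then invoke. For $A$ nilpotent with constant entries and $\mc F$ linearly closed, a single Jordan block of size $d$ gives the system $\partial y_1=y_2,\,\dots,\,\partial y_{d-1}=y_d,\ \partial y_d=0$, whose solution space is isomorphic to $\ker\partial^{d}$ and hence has dimension $d$ over $\mc C$ by Theorem \ref{101218:thm1}(b) --- not $d+1$. Already for the $1\times1$ block $A=0$ the equation $\partial y=0$ has solution space $\mc C$, of dimension $1$, not $2$. Consequently the solution space of $\partial T+AT=0$ has dimension exactly equal to the number of unknowns, independently of the Jordan structure of $A$; your formula $\sum_i(d_i+1)$ (which repeats an off-by-one slip present in the Remark following Lemma \ref{110311:lem0}, a remark the paper's actual proof does not use) overcounts by the number of Jordan blocks. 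This error also sends you down the wrong road for the last step: since the dimension is just the number of unknowns, no Jordan-block analysis is needed, and Lemmas \ref{110311:lem1} and \ref{110318:cor1} play no role here (they are used only in Theorem \ref{110127:conj1}). What remains is the purely combinatorial count of the independent entries of $T$, i.e.\ of the $S_{k+1}$-orbits with trivial stabilizer in the index set $\{1,\dots,\ell\}^{k+1}\times\big\{(n_0,\dots,n_k)\in\{0,\dots,N-1\}^{k+1}\,\big|\,\nu_0=\nu_1\big\}$; grouping orbits according to the common value $n$ of the maximal upper indices and their multiplicity $s\geq2$ gives $\sum_{n=0}^{N-1}\sum_{s=2}^{k+1}\binom{\ell}{s}\binom{n\ell}{k+1-s}$, which equals \eqref{110406:eq8} by the Vandermonde identity you quote. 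In short: correct the block-of-size-$d$ contribution from $d+1$ to $d$, and the ``hard combinatorial step'' you anticipated collapses to this elementary orbit count.
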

\begin{proof}
By Corollary \eqref{110314:cor1}, each polynomial
$S_{i_0,\dots,i_k}(\lambda_1,\dots,\lambda_k)$
admits a decomposition as in \eqref{110314:eq1},
for some $M\in\mb Z_+$ and some coefficients $s^{m_0,\dots,m_k}_{i_0,\dots,i_k}\in\mc F$.
Applying total skewsymmetrization to both sides of \eqref{110314:eq1},
we can replace the coefficients $s^{m_0,\dots,m_k}_{i_0,\dots,i_k}$
by totally skewsymmetric ones, proving part (a).

All the solutions of equation \eqref{110406:eq7}
are given by elements $s^{m_0,\dots,m_k}_{i_0,\dots,i_k}\in\mc F$
satisfying conditions (i) and (ii) of Lemma \ref{110311:lem0}.
Therefore, the space of arrays $\{s^{m_0,\dots,m_k}_{i_0,\dots,i_k}\in\mc F\}$
as in part (b) is in bijective correspondence with the space of
arrays $T=\{t^{n_0,\dots,n_k}_{i_0,\dots,i_k}\in\mc F\}$,
labeled by $i_0,\dots,i_k\in\{1,\dots,\ell\}$ and $n_0,\dots,n_k\in\{0,\dots,N-1\}$
such that $\nu_0=\nu_1$,
which are skewsymmetric with respect to simultaneous permutations of upper and lower indices,
and which solve the system of equations
$$
\partial t^{n_0,\dots,n_k}_{i_0,\dots,i_k}
+\sum_{h=0}^k t^{n_0,\dots,n_h-1,\dots,n_k}_{i_0,\dots,i_k}=0\,.
$$
This is a system of linear differential equations of the form $\partial T+AT=0$,
hence, since by assumption $\mc F$ is linearly closed,
the space of solutions has dimension equal to the number of unknowns.

The functions $t^{n_0,\dots,n_k}_{i_0,\dots,i_k}$
are labeled by the index set
$$
\tilde{\mc C}
=
\{1,\dots,\ell\}^{k+1}\times
\Big\{
(n_0,\dots,n_k)\in\{0,\dots,N-1\}^{k+1}
\,\Big|\,
\nu_0=\nu_1
\Big\}
\,,
$$
and since they are skewsymmetric with respect to simultaneous permutations
of indices $n_0,\dots,n_k$ and $i_0,\dots,i_k$,
we can say that the entries of the array $T$ are labeled
by the $S_{k+1}$-orbits in $\tilde{\mc C}$ with trivial stabilizer.
Therefore $D=\#(\mc C)$, where
$$
\mc C=\Big\{\omega\in\tilde{\mc C}/S_{k+1}\,\Big|\,\text{Stab}(\omega)=\{1\}\Big\}\,.
$$
We can decompose the index set $\tilde{\mc C}$ as disjoint union of the subsets
$\tilde{\mc C}_{s,n},\,s=2,\dots,k+1,n=0,\dots,N-1$,
defined by
$$
\tilde{\mc C}_{s,n}
\!=
\{1,\dots,\ell\}^{k+1}\times
\Big\{\!
(n_0,\dots,n_k)\in\{0,\dots,N-1\}^{k+1}
\,\Big|\,
n=\nu_0=\nu_{s-1}>\nu_s
\!\Big\}
$$
and the action of the permutation group $S_{k+1}$ preserves each of these subsets.
Hence,
$D=\sum_{n=0}^{N-1}\sum_{s=2}^{k+1}\#(\mc C_{s,n})$, where
$$
\mc C_{s,n}=\Big\{\omega\in\tilde{\mc C}_{s,n}/S_{k+1}\,\Big|\,\text{Stab}(\omega)=\{1\}\Big\}\,.
$$
It is easy to see, by putting all maximal elements in the first positions,
that the set $\mc C_{s,n}$ is in bijection with the cartesian product
of the set of $S_s$-orbits with trivial stabilizer
in $\{1,\dots,\ell\}^{k+1}\times\{0,\dots,N-1\}^{k+1}$
of the form $((i_0,\dots,i_s),(n,\dots,n))$,
and the set of $S_{k+1-s}$-orbits with trivial stabilizer
in $\{1,\dots,\ell\}^{k+1}\times\{0,\dots,N-1\}^{k+1}$
of the form $((i_s,\dots,i_k),(n_s,\dots,n_k))$ with $n_s,\dots,n_k<n$.
Clearly, the cardinality of the first set is $\binom{\ell}{s}$,
and the cardinality of the second set is $\binom{n\ell}{k+1-s}$.
Hence,
$$
D
=\sum_{n=0}^{N-1}\sum_{s=2}^{k+1}\binom{\ell}{s}\binom{n\ell}{k+1-s}\,.
$$
This formula implies equation \eqref{110406:eq8},
since $\binom{(n+1)\ell}{k+1}=\sum_{s=0}^{k+1}\binom{\ell}{s}\binom{n\ell}{k+1-s}$.
\end{proof}

%

\begin{theorem}\label{110127:conj1}
Let $k\in\mb Z_+$, and
let $K(\partial)\in\Mat_{\ell\times\ell}(\mc F[\partial])$
be an $\ell\times\ell$ matrix differential operator
of order $N$ with invertible leading coefficient, over a
linearly closed differential field $\mc F$.
Then for every totally skewsymmetric $k$-differential operator $S$ on $\mc F^\ell$
there exists a skewsymmetric $k$-differential operator $P$ on $\mc F^\ell$
such that
\begin{equation}\label{110316:eq11}
\langle K\circ P\rangle^- = S\,.
\end{equation}
\end{theorem}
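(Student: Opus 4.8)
The plan is to follow the strategy already used in the scalar and matrix cases (Theorems \ref{110105:thm1} and \ref{110105:thm1b}), namely to convert equation \eqref{110316:eq11} into a system of linear differential equations over $\mc F$ and to apply the existence results for such systems (Corollary \ref{101218:thm2} and Theorem \ref{110404:thm}). First I would reduce to the case $K_N=\id$: replacing $K(\partial)$ by $K(\partial)\circ K_N^{-1}$ and $P$ by the operator whose coefficients are multiplied appropriately on the left by $K_N$, the operation $\langle K\circ P\rangle^-$ is affected in a controlled way, so it suffices to treat the case of leading coefficient $\id$. Using the explicit formula \eqref{110222:eq4} for the total skewsymmetrization of $K\circ P$ (valid since $P$ is skewsymmetric), equation \eqref{110316:eq11} becomes, for all indices $i_0,\dots,i_k$,
$$
\frac{1}{k+1}\sum_{\alpha=0}^k(-1)^\alpha\sum_{j\in I}
P_{j,i_0,\stackrel{\alpha}{\check{\dots}},i_k}(\lambda_0,\stackrel{\alpha}{\check{\dots}},\lambda_k)K^*_{j,i_\alpha}(\lambda_\alpha)
=S_{i_0,\dots,i_k}(\lambda_1,\dots,\lambda_k)\,,
$$
where $\lambda_0$ is the shorthand \eqref{110313:eq2} acting from the left.

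**Expanding into a differential system.** Next I would use Corollary \ref{110314:cor2}(a) to write the given totally skewsymmetric $S$ in the canonical form \eqref{110314:eq1}, with totally skewsymmetric coefficients $s^{m_0,\dots,m_k}_{i_0,\dots,i_k}$. I would then seek $P$ with coefficients that are polynomials of bounded degree (degree at most $N-1$ in each variable) with coefficients in $\mc F$, expand the left-hand side, and use Lemma \ref{110311:lem1} to rewrite every monomial $\lambda_0^{n_0}\lambda_1^{n_1}\cdots\lambda_k^{n_k}$ in terms of the distinguished basis of monomials with $\mu_0-\mu_1=1$. Matching coefficients of each basis monomial on both sides — using the uniqueness of that expansion (Lemma \ref{110311:lem0}) — converts \eqref{110316:eq11} into a finite linear differential system $M(\partial)u=s$ in the unknown coefficients of $P$, exactly parallel to the systems \eqref{110105:eq4} and \eqref{110108:eq5} in the scalar and matrix cases.

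**The leading matrix is non-degenerate.** The key point, just as before, is to equip $M(\partial)$ with a majorant $\{N_j;h_i\}$ for which the associated leading matrix $\bar M(\xi)$ is non-degenerate; then Corollary \ref{101218:thm2} (or Theorem \ref{110404:thm}(b), since $\mc F$ is linearly closed) gives surjectivity of $M(\partial)$ and hence the existence of $P$. The combinatorial facts in Lemma \ref{110318:cor1} — specifically parts (c) and (d), which assert that the relevant coefficients $c^{N,n_1,\dots,n_k}_{\nu_1+1,n_1,\dots,n_k}=(-1)^{N-\nu_1-1}$ are nonzero while the "off-diagonal" ones $c^{N,n_1,\dots,n_k}_{\mu_1+1,m_1,\dots,m_k}$ and $c^{n_\alpha,\dots}_{\mu_1+1,m_1,\dots,m_k}$ vanish when $(\mu_1,\dots,\mu_k)<(\nu_1,\dots,\nu_k)$ — are precisely what is needed to show that, with the variables and equations ordered by the lexicographic order on $(\nu_1,\dots,\nu_k)$, the leading matrix is block-triangular with invertible diagonal blocks built from $K_N=\id$. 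The invertibility of the leading coefficient $K_N$ then guarantees the diagonal blocks are non-degenerate.

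**Main obstacle.** I expect the hardest part to be the bookkeeping that establishes the triangular structure of $\bar M(\xi)$: one must choose the majorant exponents $N_j,h_i$ (indexed by the multidegrees $(m_0,\dots,m_k)$ with $\mu_0-\mu_1=1$ together with the matrix indices) so that the orders of the entries of $M(\partial)$ respect the ordering, and then verify via Lemma \ref{110318:cor1} that the top-order contributions coming from the $K_N$ part assemble into an invertible leading matrix while all potential obstructions are subleading. This is the $k$-variable analogue of the explicit verification carried out in the proofs of Theorems \ref{110105:thm1} and \ref{110105:thm1b}, but the combinatorics of the coefficients $c^{\cdots}_{\cdots}$ and the lexicographic ordering of multi-indices make the argument substantially more delicate; the technical lemmas of Section \ref{app:5.1} (Lemmas \ref{110311:lem1}, \ref{110319:lem}, \ref{110318:cor1}) are designed exactly to overcome this obstacle. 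Once non-degeneracy of $\bar M(\xi)$ is in hand, surjectivity and hence the existence of $P$ follows immediately from the differential linear algebra of Section \ref{app:3}.
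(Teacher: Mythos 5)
Your overall strategy (convert \eqref{110316:eq11} into a linear differential system and exploit a non-degenerate leading matrix) is the same as the paper's, but two steps in your plan fail as stated. First, the coefficient matching: you invoke ``the uniqueness of the expansion (Lemma \ref{110311:lem0})'' to convert the equation into a system $M(\partial)u=s$. But Lemma \ref{110311:lem0} gives uniqueness only when $\partial$ is injective, and on a differential field $\partial$ kills the constants $\mc C\neq 0$; indeed Corollary \ref{110314:cor2}(b) computes the (nonzero) dimension over $\mc C$ of the space of relations among the monomials $\lambda_0^{m_0}\cdots\lambda_k^{m_k}$ with skewsymmetric coefficients. So equating coefficients is not legitimate: the resulting equation \eqref{110317:eq1} only says that a certain combination lies in the kernel of the evaluation map, and the paper has to encode this by an auxiliary square matrix differential operator $A(\partial)$ with nonzero Dieudonn\'e determinant, rewriting the problem as $A(\partial)\big(M(\partial)X-B\big)=0$. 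Moreover the matrix $M(\partial)$ produced by the expansion of Lemma \ref{110311:lem1} is genuinely \emph{pseudo}differential (the exponents $\sum_i(n_i-m_i)$ in \eqref{110317:eq8} can be negative), so ``$M(\partial)u=s$'' is not even a differential system, and Corollary \ref{101218:thm2} or Theorem \ref{110404:thm}(b) cannot be applied to it; one needs the observation that the composition $A(\partial)M(\partial)$ is an honest differential operator (the negative powers cancel because the original equation has none) together with the generalization Corollary \ref{110328:cor}, which is tailored exactly to such products.

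Second, you seek $P$ of degree at most $N-1$ in each variable immediately after writing $S$ in the form \eqref{110314:eq1}, but Corollary \ref{110314:cor2}(a) produces a decomposition of $S$ of some degree $M$ which may be much larger than $N$, and then no $P$ of degree at most $N-1$ can satisfy $\langle K\circ P\rangle^-=S$, for degree reasons. The paper devotes the whole first half of its proof to this reduction: an induction on the lexicographically maximal multi-index $(M_0,\dots,M_k)$ occurring in $S$, in which one explicitly constructs a monomial operator $P^0$ whose contribution $\langle K\circ P^0\rangle^-$ cancels the top term of $S$ modulo lexicographically smaller terms, until $S$ has degree at most $N$ in each variable. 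Without this step, and without the $A(\partial)$ device above, your plan does not go through, although your identification of Lemma \ref{110318:cor1} as the source of the triangular structure of the leading matrix $\bar M(\xi)$, and of the invertibility of $K_N$ as what makes the diagonal nonzero, is exactly right.
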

\begin{proof}
For $k=0$ we have $P\in\mc F^\ell$ and $\langle K\circ P\rangle^-=K(\partial)P$.
Hence, the claim follows from Corollary \ref{101218:thm2} by taking the majorant
$N_j=N\,\forall j,\,h_i=0\,\forall i$ of the matrix differential operator $K(\partial)$.

Next let $k\geq1$.
Let $K(\partial)=\sum_{n=0}^N(-\partial)^n\circ K_n$,
where $(-1)^NK_N\neq 0$ is the leading coefficient.
If we let $K_1(\partial)=K(\partial)\circ K_N^{-1}$
and $P_1(\lambda_1,\dots,\lambda_k)=K_NP_1(\lambda_1,\dots,\lambda_k)$,
we have, by \eqref{110126:eq4}, $K\circ P=K_1\circ P_1$.
Hence, we may assume that $K_N=\id$.

Let $S$ be a totally skewsymmetric $k$-differential operator on $\mc F^\ell$.
By Corollary \ref{110314:cor2}, $S$ admits a decomposition as in \eqref{110314:eq1}.
The first part of the proof will consist in reducing to the case when $M=N$.

Let $M_0\geq M_1\geq\dots\geq M_k$ be the maximal, in the lexicographic order,
among all non increasing $(k+1)$-tuples $\mu_0\geq \mu_1\geq\dots\geq \mu_k$
such that $s^{\mu_0,\dots,\mu_k}_{i_0,\dots,i_k}\neq0$ for some $i_0,\dots,i_k\in I$.
Clearly, by the skewsymmetry condition \eqref{110316:eq5},
for $m_0,\dots,m_k\in\mb Z_+$
we have that $s^{m_0,\dots,m_k}_{i_0,\dots,i_k}$ is zero unless $(\mu_0,\dots,\mu_k)\leq(M_0,\dots,M_k)$.
Hence, the decomposition \eqref{110314:eq1} of $S$ can be rewritten as follows
\begin{equation}\label{110316:eq4}
S_{i_0,\dots,i_k}(\lambda_1,\dots,\lambda_k)
=
\sum_{\substack{m_0,\dots,m_k\in\mb Z_+ \\
\mu_0-\mu_1=0 \text{ or } 1 \\
(\mu_0,\dots,\mu_k)\leq(M_0,\dots,M_k)
}}
\lambda_0^{m_0}\dots\lambda_k^{m_k}s^{m_0,\dots,m_k}_{i_0,\dots,i_k}
\,.
\end{equation}
Notice that in \eqref{110316:eq4}, for reasons that will become clear later,
we allow terms with $\mu_0-\mu_1$ equal 0 or 1 (even though, by Corollary \ref{110314:cor2},
we could restrict to the terms with $\mu_0-\mu_1=1$).
If $M_0\geq N$, let $P^0$ be the following $k$-differential operator on $\mc F^\ell$:
$$
P^0_{i_0,\dots,i_k}
=
C
\lambda_0^{M_0-N}\lambda_1^{M_1}\dots\lambda_k^{M_k}
s^{M_0,\dots,M_k}_{i_0,\dots,i_k}
\,.
$$
where $C$ denotes the cardinality of the orbit of $(M_0,\dots,M_k)$
under the action of the symmetric group $S_{k+1}$.
By \eqref{110126:eq4} and the assumption that $K_N=\id$, we have
\begin{equation}\label{110316:eq6}
\begin{array}{l}
\displaystyle{
(K\circ P^0)_{i_0,\dots,i_k}
= C
\lambda_0^{M_0}\lambda_1^{M_1}\dots\lambda_k^{M_k}
s^{M_0,\dots,M_k}_{i_0,\dots,i_k}
} \\
\displaystyle{
+ C
\sum_{j=1}^\ell\!\! \sum_{n=0}^{N-1} \sum_{h=0}^{M_0-N} \!\! \binom{M_0\!-\!N}{h}
\lambda_0^{M_0-N+n-h}
\lambda_1^{M_1}\!\!\dots\lambda_k^{M_k}
(\partial^h K_n)_{i_0,j}s^{M_0,M_1,\dots,M_k}_{j,i_1,\dots,i_k}
\,.
}
\end{array}
\end{equation}
Let $P$ be the skewsymmetrization (over $S_k$) of $P^0$.
Clearly, the skewsymmetrization (over $S_k$) of $K\circ P^0$
is equal to $K\circ P$,
and therefore $\langle K\circ P\rangle^-=\langle K\circ P^0\rangle^-$.
Hence, taking the total skewsymmetrization of both sides of \eqref{110316:eq6}, we get
\begin{equation}\label{110316:eq7}
\begin{array}{l}
\displaystyle{
\langle K\circ P\rangle^-_{i_0,\dots,i_k}(\lambda_1,\dots,\lambda_k)
} \\
\displaystyle{
=
\frac{C}{(k+1)!}
\sum_{\sigma\in S_{k+1}}\sign(\sigma)
\lambda_{\sigma^{-1}(0)}^{M_0}\lambda_{\sigma^{-1}(1)}^{M_1}\dots\lambda_{\sigma^{-1}(k)}^{M_k}
s^{M_0,\dots,M_k}_{i_{\sigma^{-1}(0)},\dots,i_{\sigma^{-1}(k)}}
} \\
\displaystyle{
+
\sum_{j=1}^\ell \sum_{n=0}^{N-1} \sum_{h=0}^{M_0-N} \binom{M_0\!-\!N}{h}
\frac{C}{(k+1)!}
\sum_{\sigma\in S_{k+1}}\sign(\sigma)
} \\
\displaystyle{
\vphantom{\Bigg(}
\times \lambda_{\sigma^{-1}(0)}^{M_0-N+n-h}
\lambda_{\sigma^{-1}(1)}^{M_1}\dots\lambda_{\sigma^{-1}(k)}^{M_k}
(\partial^h K_n)_{i_{\sigma^{-1}(0)},j}s^{M_0,M_1,\dots,M_k}_{j,i_{\sigma^{-1}(1)},\dots,i_{\sigma^{-1}(k)}}
\,.
}
\end{array}
\end{equation}
By the skewsymmetry condition \eqref{110316:eq5} on the coefficients $s^{m_0,\dots,m_k}_{i_0,\dots,i_k}$,
and since $\frac{(k+1)!}C$ is the cardinality of the stabilizer of $(M_0,\dots,M_k)$
under the action of $S_{k+1}$,
the first term in the RHS of \eqref{110316:eq7} is equal to
$$
\sum_{\substack{m_0,\dots,m_k\in\mb Z_+ \\
(\mu_0,\dots,\mu_k)=(M_0,\dots,M_k)
}}
\lambda_0^{m_0}\dots\lambda_k^{m_k}s^{m_0,\dots,m_k}_{i_0,\dots,i_k}
\,.
$$
Moreover, each monomial
$\lambda_{\sigma^{-1}(0)}^{M_0-N+n-h}
\lambda_{\sigma^{-1}(1)}^{M_1}\dots\lambda_{\sigma^{-1}(k)}^{M_k}$
which enters in the second term of the RHS of \eqref{110316:eq7}
can be expanded, using Lemma \ref{110316:lem1}, as
$$
\begin{array}{l}
\displaystyle{
\lambda_{\sigma^{-1}(0)}^{M_0-N+n-h}
\lambda_{\sigma^{-1}(1)}^{M_1}\dots\lambda_{\sigma^{-1}(k)}^{M_k}
} \\
\displaystyle{
=
\!\!\!\!
\sum_{\substack{m_0,\dots,m_k\in\mb Z_+ \\ (\mu_0-\mu_1=0 \text{ or } 1)}}
\!\!\!\!\!
b^{M_{\sigma(0)},\dots,\stackrel{\sigma^{-1}(0)}{\check{M_0-N+n-h}},\dots,M_{\sigma(K)}}_{m_0,\dots,m_k}
\lambda_0^{m_0}\lambda_1^{m_1}\dots\lambda_k^{m_k}
\partial^{\sum_i(M_i-m_i)-N+n-h}
}
\end{array}
$$
and, by the last statement of Lemma \ref{110316:lem1},
$b^{M_{\sigma(0)},\dots,\stackrel{\sigma^{-1}(0)}{\check{M_0-N+n-h}},\dots,M_{\sigma(K)}}_{m_0,\dots,m_k}$
is zero unless $\mu_0\leq N_0$ and,
for $\mu_0=N_0$, it is zero unless $m_\alpha=M_{\sigma(\alpha)}-\delta_{\alpha,\sigma^{-1}(0)}(N-n+h)$,
for every $\alpha=0,\dots,k$.
In particular, since $n\leq N-1$, this coefficient is zero
unless $(\mu_0,\dots,\mu_k)<(M_0,\dots,M_k)$.
Putting together the above observations, we can write
$\langle K\circ P\rangle^-$ as
\begin{equation}\label{110316:eq8}
\begin{array}{l}
\displaystyle{
\langle K\circ P\rangle^-_{i_0,\dots,i_k}(\lambda_1,\dots,\lambda_k)
} \\
\displaystyle{
=
\sum_{\substack{m_0,\dots,m_k\in\mb Z_+ \\
(\mu_0,\dots,\mu_k)=(M_0,\dots,M_k)
}}
\lambda_0^{m_0}\dots\lambda_k^{m_k}s^{m_0,\dots,m_k}_{i_0,\dots,i_k}
} \\
\displaystyle{
+
\sum_{\substack{m_0,\dots,m_k\in\mb Z_+ \\
\mu_0-\mu_1=0 \text{ or } 1 \\
(\mu_0,\dots,\mu_k)<(M_0,\dots,M_k)
}}
\lambda_0^{m_0}\dots\lambda_k^{m_k} t^{m_0,\dots,m_k}_{i_0,\dots,i_k}\,,
}
\end{array}
\end{equation}
for some coefficients $t^{m_0,\dots,m_k}_{i_0,\dots,i_k}\in\mc F$.
It follows that $S-\langle K\circ P\rangle^-$ has a decomposition as in \eqref{110316:eq4}
where only terms with $(\mu_0,\dots,\mu_k)<(M_0,\dots,M_k)$ appear.
Hence, by induction,
we can reduce to the case when $S$ has a decomposition as in \eqref{110316:eq4}
with $M_0\leq N-1$.

To conclude, we further reduce the monomials
$\lambda_0^{m_0}\dots\lambda_k^{m_k}$ in the expansion \eqref{110316:eq4} of $S$
with $\mu_0=\mu_1$ using Lemma \ref{110311:lem1}.
By property (iv) of Lemma \ref{110311:lem1}(b),
the only monomials which enter in the obtained decomposition are such that $\mu_0\leq M_0+1\leq N$.
Therefore, this $S$ admits a decomposition as in \eqref{110314:eq1} with $M=N$,
completing the first step of the proof.

Let then $S$ have the following decomposition:
\begin{equation}\label{110316:eq9}
S_{i_0,\dots,i_k}(\lambda_1,\dots,\lambda_k)
=
\sum_{\substack{m_0,\dots,m_k=0 \\ (\mu_0-\mu_1=1)}}^N
\lambda_0^{m_0}\dots\lambda_k^{m_k}s^{m_0,\dots,m_k}_{i_0,\dots,i_k}\,,
\end{equation}
with coefficients $s^{m_0,\dots,m_k}_{i_0,\dots,i_k}$ skewsymmetric
with respect to simultaneous permutations of upper and lower indices.
We want to prove that there exists a skewsymmetric $k$-differential operator $P$
of degree at most $N-1$ in each variable,
\begin{equation}\label{110316:eq10}
P_{i_0,\dots,i_k}(\lambda_1,\dots,\lambda_k)
=
\sum_{n_1,\dots,n_k=0}^{N-1}
\lambda_1^{n_1}\dots\lambda_k^{n_k}p^{n_1,\dots,n_k}_{i_0,i_1,\dots,i_k}\,,
\end{equation}
with coefficients $p^{m_1,\dots,m_k}_{i_0,\dots,i_k}\in\mc F$ skewsymmetric
with respect to the action of $S_k=Perm(1,\dots,k)$ on upper and lower indices simultaneously,
satisfying equation \eqref{110316:eq11}.
By \eqref{110126:eq4} we have
$$
(K\circ P)_{i_0,\dots,i_k}(\lambda_1,\dots,\lambda_k)
=
\!\!
\sum_{n_0=0}^N
\sum_{n_1,\dots,n_k=0}^{N-1}
\sum_{j=1}^\ell
\!
\lambda_0^{n_0}\lambda_1^{n_1}\dots\lambda_k^{n_k} (K_{n_0})_{i_0,j} p^{n_1,\dots,n_k}_{j,i_1,\dots,i_k}\,,
$$
and, by Lemma \ref{110311:lem1}, we get
\begin{equation}\label{110316:eq12}
\begin{array}{c}
\displaystyle{
(K\circ P)_{i_0,\dots,i_k}(\lambda_1,\dots,\lambda_k)
=
\sum_{\substack{m_0,\dots,m_k=0 \\ (\mu_0-\mu_1=1)}}^N
\sum_{n_0=0}^N\sum_{n_1,\dots,n_k=0}^{N-1}
\sum_{j=1}^\ell
c^{n_0,\dots,n_k}_{m_0,\dots,m_k}
} \\
\displaystyle{
\vphantom{\Bigg(}
\times \lambda_0^{m_0}\lambda_1^{m_1}\dots\lambda_k^{m_k}
\partial^{\sum_i(n_i-m_i)}
(K_{n_0})_{i_0,j} p^{n_1,\dots,n_k}_{j,i_1,\dots,i_k}\,.
}
\end{array}
\end{equation}
In the RHS above we can take the sum over $m_0,\dots,m_k\leq N$
for the following reason.
If $n_0=N \,(\,>n_1,\dots,n_k)$, by property (v) of Lemma \ref{110311:lem1}(b),
$c^{n_0,\dots,n_k}_{m_0,\dots,m_k}$ is zero unless $\mu_0\leq \nu_0=N$,
while, if $n_0\leq N-1$ we have, by property (iv) of Lemma \ref{110311:lem1}(b),
that $c^{n_0,\dots,n_k}_{m_0,\dots,m_k}$ is zero unless $\mu_0\leq \nu_0+1\leq N-1+1=N$.
Taking the skewsymmetrization of both sides of equation \eqref{110316:eq12}, we have,
by \eqref{110120:eq6} and by the symmetry property (ii) of Lemma \ref{110311:lem1}(b),
\begin{equation}\label{110316:eq14}
\begin{array}{c}
\displaystyle{
\langle K\circ P\rangle^-_{i_0,\dots,i_k}(\lambda_1,\dots,\lambda_k)
=
\frac1{k+1}
\sum_{\substack{m_0,\dots,m_k=0 \\ (\mu_0-\mu_1=1)}}^N
\sum_{n_0=0}^N\sum_{n_1,\dots,n_k=0}^{N-1}
\sum_{j=1}^\ell
} \\
\displaystyle{
\lambda_0^{m_0}\dots\lambda_k^{m_k}
\partial^{\sum_i(n_i-m_i)}
\Big(
c^{n_0,\dots,n_k}_{m_0,\dots,m_k}
(K_{n_0})_{i_0,j} p^{n_1,\dots,n_k}_{j,i_1,\dots,i_k}
} \\
\displaystyle{
\vphantom{\Bigg(}
-\sum_{\alpha=1}^k
c^{n_\alpha,n_1,\dots,\stackrel{\alpha}{\check{n_0}},\dots,n_k}_{m_0,\dots,m_k}
(K_{n_0})_{i_\alpha,j} p^{n_1,\dots,n_k}_{j,i_1,\dots,\stackrel{\alpha}{\check{i_0}},\dots,i_k}
\Big)
\,.
}
\end{array}
\end{equation}
Comparing equations \eqref{110316:eq9} and \eqref{110316:eq14}, we get the following equation
in $\mb F[\lambda_1,\dots,\lambda_k]\otimes\mc F$:
\begin{equation}\label{110317:eq1}
\begin{array}{c}
\displaystyle{
\sum_{\substack{m_0,\dots,m_k=0 \\ (\mu_0-\mu_1=1)}}^N
\!\!\!\!\!\!
\lambda_0^{m_0}\!\dots\lambda_k^{m_k}
\Bigg(
\frac1{k+1}
\sum_{n_0=0}^N\sum_{n_1,\dots,n_k=0}^{N-1}
\sum_{j=1}^\ell
\partial^{\sum_i(n_i-m_i)}
\Big(
c^{n_0,\dots,n_k}_{m_0,\dots,m_k}
(K_{n_0})_{i_0,j}
} \\
\displaystyle{
\vphantom{\Bigg(}
\times p^{n_1,\dots,n_k}_{j,i_1,\dots,i_k}
-\sum_{\alpha=1}^k
c^{n_\alpha,n_1,\dots,\stackrel{\alpha}{\check{n_0}},\dots,n_k}_{m_0,\dots,m_k}
(K_{n_0})_{i_\alpha,j} p^{n_1,\dots,n_k}_{j,i_1,\dots,\stackrel{\alpha}{\check{i_0}},\dots,i_k}
\Big)
- s^{m_0,\dots,m_k}_{i_0,\dots,i_k}
\Bigg)=0\,.
}
\end{array}
\end{equation}
The above equation should be read as an equation in
the unknown variables
\begin{equation}\label{110317:eq4}
X=
\Big(
p^{n_1,\dots,n_k}_{i_0,i_1,\dots,i_k}\in\mc F
\Big)_{\substack{1\leq i_0,\dots,i_k\leq\ell \\ 0\leq n_1,\dots,n_k\leq N-1}}\,,
\end{equation}
such that $p^{n_{\sigma(1)},\dots,n_{\sigma(k)}}_{i_0,i_{\sigma(k)},\dots,i_{\sigma(k)}}
=\sign(\sigma)p^{n_1,\dots,n_k}_{i_0,i_1,\dots,i_k}$ for every $\sigma\in S_k$,
and the element
\begin{equation}\label{110317:eq5}
B=
\Big(
s^{m_0,\dots,m_k}_{i_0,\dots,i_k}\in\mc F
\Big)_{\substack{1\leq i_0,\dots,i_k\leq\ell \\ 0\leq m_0,m_1,\dots,m_k\leq N \\ (\mu_0-\mu_1=1) }}\,,
\end{equation}
skewsymmetric with respect to the action of $S_{k+1}$,
is given.
To complete the proof of the theorem, we only need to show that this equation
admits a solution.

Note that the coefficient of $\lambda_0^{m_0}\dots\lambda_k^{m_k}$ in the LHS of \eqref{110317:eq1}
can be rewritten, up to the summand $-s^{m_0,\dots,m_k}_{i_0,\dots,i_k}$, as
\begin{equation}\label{110406:eq6}
\frac1{k+1}
\sum_{n_0,\dots,n_k=0}^{N}
\sum_{j=1}^\ell
\partial^{\sum_i(n_i-m_i)}
\sum_{\alpha=0}^k
(-1)^\alpha
c^{n_0,\dots,n_k}_{m_0,\dots,m_k}
(K_{n_\alpha})_{i_\alpha,j}
p^{n_0,n_1,\stackrel{\alpha}{\check{\dots}},n_k}_{j,i_0,\stackrel{\alpha}{\check{\dots}},i_k}\,,
\end{equation}
and, in this form, it is manifestly skewsymmetric with respect to simultaneous permutations
of $i_0,\dots,i_k$ and $m_0,\dots,m_k$.

The variables $p^{n_1,\dots,n_k}_{j_0,j_1,\dots,j_k}$ are labeled by the index set
$$
\tilde{\mc A}
=
\{1,\dots,\ell\}^{k+1}\times\{0,\dots,N-1\}^{k}\,.
$$
On the other hand, since, by assumption,
they are skewsymmetric with respect to simultaneous permutations of indices $n_1,\dots,n_k$
and $j_1,\dots,j_k$,
we can say that the entries of the variable $X$ are labeled
by the $S_k$-orbits in $\tilde{\mc A}$
with trivial stabilizer,
where $S_k=Perm(1,\dots,k)$ acts on the element $((j_0,j_1,\dots,j_k),(n_1,\dots,n_k))$
by fixing $j_0$ and permuting, simultaneously, the other entries.
We therefore write $X=\big(p_a\big)_{a\in\mc A}$, where
\begin{equation}\label{110317:eq2}
\mc A=\Big\{\omega\in\tilde{\mc A}/S_k\,\Big|\,\text{Stab}(\omega)=\{1\}\Big\}\,,
\end{equation}
and, for
\begin{equation}\label{110406:eq2}
a=S_k\cdot((j_0,j_1,\dots,j_k),(n_1,\dots,n_k))\in\mc A\,,
\end{equation}
we let $p_a=\pm p^{n_1,\dots,n_k}_{j_0,j_1,\dots,j_k}$.
The sign of $p_a$ is fixed by taking + for the unique representative of $a$ with $n_1\geq n_2\geq\dots\geq n_k$
and $j_s>j_{s+1}$ if $n_s=n_{s+1}$.
Similarly,
the functions $s^{m_0,\dots,m_k}_{i_0,\dots,i_k}$
are labeled by the index set
$$
\tilde{\mc B}
=
\{1,\dots,\ell\}^{k+1}\times
\Big\{
(m_0,\dots,m_k)\in\{0,\dots,N\}^{k+1}
\,\Big|\,
\mu_0-\mu_1=1
\Big\}
\,,
$$
and since, by assumption,
they are skewsymmetric with respect to simultaneous permutations of indices $m_0,\dots,m_k$
and $i_0,\dots,i_k$,
we can say that the entries of the given array $B$ are labeled
by the $S_{k+1}$-orbits in $\tilde{\mc B}$
with trivial stabilizer,
where $S_{k+1}=Perm(0,\dots,k)$ acts diagonally on the
element $((i_0,\dots,i_k),(m_0,\dots,m_k))$.
We therefore write $B=\big(s_b\big)_{b\in\mc B}$, where
\begin{equation}\label{110317:eq3}
\mc B=\Big\{\omega\in\tilde{\mc B}/S_{k+1}\,\Big|\,\text{Stab}(\omega)=\{1\}\Big\}\,,
\end{equation}
and, for
\begin{equation}\label{110406:eq3}
b=S_{k+1}\cdot((i_0,\dots,i_k),(m_0,\dots,m_k))\in\mc B\,,
\end{equation}
we let $s_b=\pm s^{m_0,\dots,m_k}_{i_0,\dots,i_k}$.
As before, the sign of $s_b$ is fixed by taking + for the unique representative of $b$ with $m_0=m_1+1>m_1\geq\dots\geq m_k$
and $i_s>i_{s+1}$ if $m_s=m_{s+1}$.

For $a$ as in \eqref{110406:eq2},
we let
\begin{equation}\label{110407:eq4}
\varphi(a)=S_{k+1}\cdot((j_0,j_1,\dots,j_k),(\max(n_1,\dots,n_k)+1,n_1,\dots,n_k))\,.
\end{equation}
It is not hard to check that $\varphi$ is a well-defined bijective map $\mc A\stackrel{\sim}{\rightarrow}\mc B$.
In particular, the vectors $X$ and $B$ have the same number of entries.
In fact,
\begin{equation}\label{110406:eq1}
\#(\mc A)=\#(\mc B)=
\ell\binom{N\ell}{k}\,.
\end{equation}

Equation \eqref{110317:eq1} is equivalent to the following equation
\begin{equation}\label{110406:eq5}
A(\partial)(M(\partial)X-B)=0\,,
\end{equation}
where
$X$ and $B$ are as in \eqref{110317:eq4} and \eqref{110317:eq5} respectively,
and $M(\partial)$ and $A(\partial)$ are defined as follows.
First, $M(\partial)=\big(L_{b,a}(\partial)\big)_{b\in\mc B,a\in\mc A}$,\
given by \eqref{110406:eq6},
is the square matrix pseudodifferential operator with entries
\begin{equation}\label{110317:eq8}
\begin{array}{c}
\displaystyle{
L_{b,a}(\partial)
=
\frac1{k+1}
\sum_{n_0=0}^N
\sum_{j=1}^\ell
\partial^{\sum_i(n_i-m_i)}
\Big(
c^{n_0,\dots,n_k}_{m_0,\dots,m_k}
(K_{n_0})_{i_0,j}
\delta_{j,j_0}\delta_{j_1,i_1}\dots\delta_{j_k,i_k}
} \\
\displaystyle{
-\sum_{\alpha=1}^k
c^{n_\alpha,n_1,\dots,\stackrel{\alpha}{\check{n_0}},\dots,n_k}_{m_0,\dots,m_k}
(K_{n_0})_{i_\alpha,j}
\delta_{j,j_0}\delta_{j_\alpha,i_0}
\delta_{j_1,i_1}\stackrel{\alpha}{\check{\dots}}\delta_{j_k,i_k}
\Big)
\,,
}
\end{array}
\end{equation}
for $a$ and $b$ as in \eqref{110406:eq2} and \eqref{110406:eq3} respectively.
Note that in order to say that the entries of the matrix $M(\partial)$ can be labeled by the set $\mc B\times\mc A$
(and not by the set $\tilde{\mc B}\times\mc A$)
we are using the fact that $M(\partial)X$, given by \eqref{110406:eq6},
is manifestly skewsymmetric with respect to the action of $S_{k+1}$.
To define $A(\partial)$, consider first the map
$\tilde A$ from $\mc F^{\tilde{\mc B}}$ to the space $\mc D$ of $k$-differential operators on $\mc F^\ell$
of degree at most $2N-1$ in each variable, given by
$\tilde B=\{s^{m_0,\dots,m_k}_{i_0,\dots,i_k}\}
\mapsto \tilde A(\tilde B)$, where
$$
\tilde A(\tilde B)_{i_1,\dots,i_k}(\lambda_1,\dots,\lambda_k)
=\sum_{\tilde{\mc B}}(-\lambda_1-\dots-\lambda_k-\partial)^{m_0}\lambda_1^{m_1}\dots\lambda_k^{m_k}s^{m_0,\dots,m_k}_{i_0,\dots,i_k}\,.
$$
Note that $\tilde A$ is a $\mc C$-linear, not an $\mc F$-linear map,
but the space $\mc D$ is in fact a finite dimensional vector space over $\mc F$, say of dimension $d$,
and, for any choice of basis of it, $\tilde A$ becomes a $d\times\#(\tilde{\mc B})$ matrix differential operator.
We next consider $\mc F^{\mc B}$ as the subspace of $\mc F^{\tilde{\mc B}}$
consisting of the elements $\tilde B=\{s^{m_0,\dots,m_k}_{i_0,\dots,i_k}\}$ skewsymmetric with respect
to the action of $S_{k+1}$.
The restriction of $\tilde A$ to the subspace $\mc F^{\mc B}\subset\mc F^{\tilde{\mc B}}$
is then a $d\times\#(\mc B)$ matrix differential operator
from $\mc F^{\mc B}$ to $\mc D$, for any choice of basis of $\mc D$ over $\mc F$
(once we fix representatives, we can consider $\mc B$ as a subset of $\tilde{\mc B}$,
and the matrix of $\tilde A|_{\mc F^{\mc B}}$
consists of the rows of the matrix of $\tilde A$ corresponding to the indices in $\mc B$).
By Lemma \ref{110406:lem}, we can choose a basis of $\mc D$ such that the matrix for
$\tilde A|_{\mc F^{\mc B}}:\,\mc F^{\mc B}\to\mc D$
is in row echelon form, say with $k$ pivots and $d-k$ zero rows.
We then let $A(\partial):\,\mc F^{\mc B}\to\mc F^k$
the $k\times\#(\mc B)$ matrix differential operator given by the first $k$ rows of this matrix.
It is then clear from the above construction that equation \eqref{110317:eq1} is equivalent to equation \eqref{110406:eq5}.
Therefore, to prove the theorem,
we need to show that, for every $B\in\mc F^{\mc B}$
there exists $X\in\mc F^{\mc A}$ solving equation \eqref{110406:eq5}.

Notice that, even though in the expression of the LHS of \eqref{110317:eq1}
there appear negative powers of $\partial$,
we know that all such negative powers cancel out when we compute all the sums in \eqref{110317:eq1},
since this equation is the same as \eqref{110316:eq11}, which does not involve any negative power of $\partial$.
In terms of equation \eqref{110406:eq5} this means that,
even though $M(\partial)$ is a matrix pseudodifferential operator,
the product $A(\partial)M(\partial)$ is a matrix differential operator.

By construction, $A(\partial)$ is surjective, and its kernel consists of
elements $B=\{s^{m_0,\dots,m_k}_{i_0,\dots,i_k}\}$ solving equation \eqref{110406:eq7}.
Hence, by Corollary \ref{110314:cor2}(b), it is of finite dimension over $\mc C$.
Therefore, by Theorem \ref{110404:thm}(c), it is a square matrix differential operator with non zero determinant.

Next, note that $L_{b,a}(\partial)$ has order less than or equal to $N_a-h_b$, where,
for $a$ and $b$ as in \eqref{110406:eq2} and \eqref{110406:eq3} respectively,
\begin{equation}\label{110407:eq1}
N_a=N+\sum_{i=1}^k n_i
\,\,,\,\,\,\,
h_b=\sum_{i=0}^k m_i\,.
\end{equation}
The leading matrix associated to this majorant (see equation \eqref{101221:eq1b}) is
$\bar M(\xi)=\big(m_{b,a}\xi^{N_a-h_b}\big)_{b\in\mc B,a\in\mc A}$, where
$$
\begin{array}{l}
\displaystyle{
m_{b,a}
=
\frac1{k+1}
\Big(
c^{N,n_1,\dots,n_k}_{m_0,\dots,m_k}
\delta_{i_0,j_0}\delta_{j_1,i_1}\dots\delta_{j_k,i_k}
} \\
\displaystyle{
-\sum_{\alpha=1}^k
c^{n_\alpha,n_1,\dots,\stackrel{\alpha}{\check{N}},\dots,n_k}_{m_0,\dots,m_k}
\delta_{j_0,i_\alpha}\delta_{j_\alpha,i_0}
\delta_{j_1,i_1}\stackrel{\alpha}{\check{\dots}}\delta_{j_k,i_k}
\Big)\,.
}
\end{array}
$$
We want to prove that the leading matrix $\bar M(\xi)$ or, equivalently, $\bar M(1)$ (see \eqref{110404:eq1}),
is non degenerate.

In order to prove this, we fix a total ordering of the sets $\mc A\simeq\mc B$
(identified via $\varphi$),
and we prove that, with respect to this ordering, the matrix $\bar M(1)$ is lower triangular
with non zero diagonal entries.

Given elements
$b=S_{k+1}\cdot((i_0,\dots,i_k),(m_0,\dots,m_k))\in\mc B$
and $b'=S_{k+1}\cdot((i'_0,\dots,i'_k),(m'_0,\dots,m'_k))\,\in\mc B$,
we say that $b>b'$ if
$(\mu_0,\dots,\mu_k)>(\mu'_0,\dots,\mu'_k)$ in the lexicographic order,
or $(\mu_0,\dots,\mu_k)=(\mu'_0,\dots,\mu'_k)$ and $b>b'$ in some total ordering
of the remaining indices (which will play no role).
Therefore,
for $a\in\mc A$ and $b\in\mc B$ as in \eqref{110406:eq2} and \eqref{110406:eq3} respectively,
using the map $\varphi$ we have that $b>a$ if
$(\mu_0,\mu_1,\dots,\mu_k)>(\nu_1+1,\nu_1,\dots,\nu_k)$ in the lexicographic order,
or $(\mu_0,\mu_1,\dots,\mu_k)=(\nu_1+1,\nu_1,\dots,\nu_k)$ and $b>\varphi(a)$ in some total ordering
of the remaining indices.

We want to prove that, with respect to this ordering, for $a\in\mc A$ and $b\in\mc B$, we have
\begin{equation}\label{110318:eq1}
m_{b,a}=0 \text{ if } b<a
\,\,,\,\,\,\,
\text{ and }
m_{b,a}\neq 0 \text{ if } b=\varphi(a)\,.
\end{equation}
This follows from Lemma \ref{110318:cor1}.
Indeed, for $b\leq a$, we have in particular that
$(\mu_0,\mu_1,\dots,\mu_k)\leq(\nu_1+1,\nu_1,\dots,\nu_k)$.
Hence, by Lemma \ref{110318:cor1}(a) and (b)
we have $m_{b,a}=0$ unless $(\mu_0,\mu_1,\dots,\mu_k)=(\nu_1+1,\nu_1,\dots,\nu_k)$,
and, in this case,
by Lemma \ref{110318:cor1}(c) and (d),
$m_{b,a}=\frac{(-1)^{N-\nu_1-1}}{k+1}
\delta_{i_0,j_0}\delta_{j_1,i_1}\dots\delta_{j_k,i_k}$.

Summarizing the above results, $M(\partial)$ is a square matrix pseudodifferential operator
of size $\#(\mc A)=\#(\mc B)$, with non degenerate leading matrix associated 
to the majorant \eqref{110407:eq1},
$A(\partial)$ is a square matrix differential operator of the same size, with non zero determinant,
and $A(\partial)M(\partial)$ is a matrix differential operator.
By Corollary \ref{110328:cor}, it follows that equation \eqref{110406:eq5} 
has a solution for every $B$,
completing the proof of the theorem.
\end{proof}
\begin{theorem}\label{110127:conj2}
Let $\mc F$ be a linearly closed differential field with subfield of constants $\mc C\subset\mc F$.
Let $k\in\mb Z_+$, and
let $K(\partial)\in\Mat_{\ell\times\ell}(\mc F[\partial])$
be an $\ell\times\ell$ matrix differential operator
of order $N$ with invertible leading coefficient, over $\mc F$.
Then, the set of skewsymmetric $k$-differential operators $P$ on $\mc F^\ell$
of degree at most $N-1$ in each variable
such that
\begin{equation}\label{110213:eq5}
\langle K\circ P\rangle^-=0\,,
\end{equation}
is a vector space over $\mc C$ of dimension
$$
d=\binom{N\ell}{k+1}\,.
$$
\end{theorem}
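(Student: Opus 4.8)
The plan is to reduce the computation of $\dim_{\mc C}\Sigma_k(K)$ to the "Fredholm index" machinery already assembled in the proof of Theorem~\ref{110127:conj1}, and then to evaluate two combinatorial sums that telescope to $\binom{N\ell}{k+1}$. First I would dispose of $k=0$ directly: a $0$-differential operator is an element $P\in\mc F^\ell$ with $\langle K\circ P\rangle^-=K(\partial)P$, so $\Sigma_0(K)=\ker\big(K(\partial)\big)$; taking the majorant $N_j=N,\,h_i=0$ of $K(\partial)$, whose leading matrix is the invertible $K_N$ and hence non degenerate, Corollary~\ref{101218:thm2} gives $\dim_{\mc C}\Sigma_0(K)=N\ell=\binom{N\ell}{1}$. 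For $k\geq1$ I would invoke the reformulation from the proof of Theorem~\ref{110127:conj1}, where the equation $\langle K\circ P\rangle^-=S$ on a skewsymmetric $P$ of degree at most $N-1$ in each variable is rewritten, after encoding the coefficients of $P$ as $X\in\mc F^{\mc A}$ and those of $S$ as $B\in\mc F^{\mc B}$, in the form \eqref{110406:eq5}, namely $A(\partial)\big(M(\partial)X-B\big)=0$. Setting $S=0$ (so $B=0$) identifies $\Sigma_k(K)$, via the $\mc C$-linear bijection $P\leftrightarrow X$, with $\ker\big(A(\partial)M(\partial)\big)$.

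Recall from that proof that $A(\partial)$ and $M(\partial)$ are square of common size $\#(\mc A)=\#(\mc B)=\ell\binom{N\ell}{k}$, that $\det(A(\partial))\neq0$, that $M(\partial)$ is a matrix pseudodifferential operator with non degenerate leading matrix for the majorant \eqref{110407:eq1}, and that $A(\partial)M(\partial)$ is an honest matrix differential operator. Thus Corollary~\ref{110328:cor} applies and gives
\begin{equation*}
\dim_{\mc C}\Sigma_k(K)
=\dim_{\mc C}\big(\ker A(\partial)\big)+\deg_\xi\det\big(M(\partial)\big)\,.
\end{equation*}
The first term is known: $\ker A(\partial)$ consists exactly of the arrays $B$ solving \eqref{110406:eq7}, so by Corollary~\ref{110314:cor2}(b) it equals the number $D$ of \eqref{110406:eq8}. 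For the second term I would extract the Dieudonn\'e degree from the non degenerate leading matrix by Proposition~\ref{110404:propb}, obtaining $\deg_\xi\det(M(\partial))=\sum_{a\in\mc A}N_a-\sum_{b\in\mc B}h_b=\sum_{a\in\mc A}(N_a-h_{\varphi(a)})$, where $\varphi\colon\mc A\to\mc B$ is the bijection \eqref{110407:eq4}.

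Now I would compute $\deg_\xi\det(M(\partial))$ explicitly. By \eqref{110407:eq1}, if $a$ carries degree multi-index $(n_1,\dots,n_k)$ with maximum $\nu_1$, then $N_a=N+\sum_i n_i$, while $\varphi(a)$ carries $(\nu_1+1,n_1,\dots,n_k)$, so $h_{\varphi(a)}=\nu_1+1+\sum_i n_i$ and $N_a-h_{\varphi(a)}=N-1-\nu_1$. Identifying $\mc A$ with the set of pairs consisting of an index $j_0\in\{1,\dots,\ell\}$ and a $k$-subset $S$ of $\{1,\dots,\ell\}\times\{0,\dots,N-1\}$, where $\nu_1$ is the largest second coordinate occurring in $S$, I would write $N-1-\nu_1$ as the number of $m$ with $\nu_1<m\leq N-1$ and interchange the order of summation; since the $S$ with $\nu_1(S)<m$ are exactly the $k$-subsets of a set of size $m\ell$, this yields $\sum_{a\in\mc A}(N-1-\nu_1)=\ell\sum_{n=0}^{N-1}\binom{n\ell}{k}$. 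This term cancels the summand $-\ell\binom{n\ell}{k}$ inside $D$ in \eqref{110406:eq8}, leaving the telescoping sum
\begin{equation*}
\dim_{\mc C}\Sigma_k(K)
=\sum_{n=0}^{N-1}\Big(\binom{(n+1)\ell}{k+1}-\binom{n\ell}{k+1}\Big)
=\binom{N\ell}{k+1}\,.
\end{equation*}

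The main obstacle is not this final bookkeeping but the precise identification $\Sigma_k(K)\cong\ker\big(A(\partial)M(\partial)\big)$. One must check that $\langle K\circ P\rangle^-=0$ translates not into $M(\partial)X=0$ but into $M(\partial)X\in\ker A(\partial)$: distinct coefficient arrays in $\mc F^{\mc B}$ may reassemble to the same (indeed, to the zero) $k$-differential operator, and it is exactly this reassembly kernel that Corollary~\ref{110328:cor} accounts for through the term $\dim_{\mc C}\ker A(\partial)$. One must also confirm that $P\leftrightarrow X$ is a $\mc C$-linear bijection onto $\mc F^{\mc A}$ — the $S_k$-skewsymmetry forces precisely the coordinates with non trivial stabilizer to vanish — and that the Dieudonn\'e degree of $M(\partial)$ is legitimately read off from the lower triangular leading matrix \eqref{110318:eq1} via Proposition~\ref{110404:propb}, so that the pairing of $N_a$ with $h_{\varphi(a)}$ is justified. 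Once these points are secured, the combinatorial evaluation and the telescoping are routine.
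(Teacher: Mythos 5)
Your proposal is correct and follows essentially the same route as the paper's proof: the $k=0$ case via Corollary~\ref{101218:thm2}, and for $k\geq1$ the reduction of \eqref{110213:eq5} to equation \eqref{110406:eq5} with $B=0$ from the proof of Theorem~\ref{110127:conj1}, followed by Corollary~\ref{110328:cor}, Corollary~\ref{110314:cor2}(b) for $\dim_{\mc C}\ker A(\partial)$, and the telescoping evaluation. Your computation of $\sum_{a\in\mc A}(N_a-h_{\varphi(a)})$ as $\ell\sum_{n=0}^{N-1}\binom{n\ell}{k}$ by direct counting is merely a cleaner but equivalent form of the paper's expression $\ell\sum_{n=0}^{N-1}(N-1-n)\big(\binom{(n+1)\ell}{k}-\binom{n\ell}{k}\big)$ in \eqref{110407:eq3}, and both lead to the same telescoping sum.
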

\begin{proof}
As in the proof of Theorem \ref{110127:conj1},
for $k=0$ we have $P\in\mc F^\ell$ and $\langle K\circ P\rangle^-=K(\partial)P$.
Hence, the statement follows from Corollary \ref{101218:thm2} by taking the majorant
$N_j=N\,\forall j,\,h_i=0\,\forall i$ of the matrix differential operator $K(\partial)$.

Let then $k\geq1$. By the discussion in the proof of Theorem \ref{110127:conj1},
equation \eqref{110213:eq5} is the same as equation \eqref{110406:eq5} with $B=0$,
and, moreover, by Corollary \ref{110328:cor}, the space of solutions has dimension over $\mc C$
equal to
\begin{equation}\label{110407:eq2}
d=\dim_{\mc C}(\ker A(\partial))+\sum_{a\in\mc A}(N_a-h_{\varphi(a)})\,.
\end{equation}
By the construction of the matrix differential operator $A(\partial)$,
the equation $A(\partial)B=0$, for $B=\{s^{m_0,\dots,m_k}_{i_0,\dots,i_k}\}\in\mc F^{\mc B}$,
is equivalent to equation \eqref{110406:eq7}.
Hence, by Corollary \ref{110314:cor2}(b), $\ker(A(\partial))$
has dimension over $\mc C$ equal to \eqref{110406:eq8}.
Moreover, we have, recalling \eqref{110317:eq2}, \eqref{110407:eq4} and \eqref{110407:eq1}
(letting $a\in\mc A$ as in \eqref{110406:eq2}),
\begin{equation}\label{110407:eq3}
\begin{array}{l}
\displaystyle{
\sum_{a\in\mc A}(N_a-h_{\varphi(a)})
}\\
\displaystyle{
=
\sum_{a\in\mc A}\big(N-1-\max(n_1,\dots,n_k)\big)
=
\sum_{n=0}^{N-1}(N-n-1)\#\Big\{a\in\mc A\,\Big|\,\nu_1=n\Big\}
}\\
\displaystyle{
=
\sum_{n=0}^{N-1}(N-n-1)
\Big(
\#\Big\{a\in\mc A\,\Big|\,\nu_1\leq n\Big\}
-\#\Big\{a\in\mc A\,\Big|\,\nu_1\leq n-1\Big\}
\Big)
}\\
\displaystyle{
=
\ell\sum_{n=0}^{N-1}(N-1-n)\Big(\binom{(n+1)\ell}{k}-\binom{n\ell}{k}\Big)\,.
}
\end{array}
\end{equation}
In the last identity we used equation \eqref{110406:eq1}, with $N-1$ replaced by $n$ or $n-1$.
Putting together \eqref{110406:eq8} and \eqref{110407:eq3}, we get
$$
d=
\sum_{n=0}^{N-1}\bigg(\binom{(n+1)\ell}{k+1}-\binom{n\ell}{k+1}
+\ell(N-1-n)\binom{(n+1)\ell}{k}-\ell(N-n)\binom{n\ell}{k}\bigg)
\,,
$$
which is a telescopic sum equal to $\binom{N\ell}{k+1}$,
proving the theorem.
\end{proof}



\end{document}